\definecolor{processblue}{cmyk}{0.96,0,0,0}
\author{
  Alexandr Andoni\thanks{Research supported in part by Simons Foundation (\#491119 to
Alexandr Andoni), NSF (CCF-1617955, CCF-1740833), and Google
Research Award.}\\
  \texttt{andoni@cs.columbia.edu}\\
  Columbia University
  \and
  Clifford Stein\thanks{Research supported in part by NSF grants CCF-1421161 and CCF-1714818.} \\
  \texttt{cliff@cs.columbia.edu}\\
  Columbia University
  \and
  Zhao Song\thanks{Part of the work done while visiting IBM Almaden (hosted by David P. Woodruff) and Harvard University (hosted by Jelani Nelson). UT-Austin.}\\
  \texttt{zhaos@g.harvard.edu}\\
  Harvard University
  \and
  Zhengyu Wang \\
  \texttt{zhengyuwang@g.harvard.edu}\\
  Harvard University
  \and
  Peilin Zhong\thanks{Part of the work done while visiting IBM Almaden (hosted by David P. Woodruff), and supported in part by Simons Foundation, and NSF CCF-1617955.}\\
  \texttt{peilin.zhong@columbia.edu}\\
  Columbia University
}
\date{}
\title{Parallel Graph Connectivity in Log Diameter Rounds}
\newtheorem{theorem}{Theorem}[section]
\newtheorem{lemma}[theorem]{Lemma}
\newtheorem{definition}[theorem]{Definition}
\newtheorem{corollary}[theorem]{Corollary}
\newtheorem{fact}[theorem]{Fact}
\newtheorem{remark}[theorem]{Remark}
\newtheorem{claim}[theorem]{Claim}
\newcommand{\wh}{\widehat}
\newcommand{\wt}{\widetilde}
\newcommand{\eps}{\epsilon}
\renewcommand{\varepsilon}{\epsilon}
\renewcommand{\tilde}{\wt}
\newcommand{\nul}{{\rm{null}}}
\DeclareMathOperator*{\E}{{\bf {E}}}
\DeclareMathOperator*{\Var}{{\bf {Var}}}
\DeclareMathOperator{\poly}{poly}
\DeclareMathOperator{\p}{par}
\DeclareMathOperator{\rank}{rank}
\DeclareMathOperator{\dist}{dist}
\DeclareMathOperator{\dep}{dep}
\DeclareMathOperator{\pos}{pos}
\DeclareMathOperator{\diam}{diam}
\DeclareMathOperator{\col}{col}
\DeclareMathOperator{\lca}{lca}
\DeclareMathOperator{\child}{child}
\DeclareMathOperator{\leaves}{leaves}
\DeclareMathOperator{\st}{st}
\DeclareMathOperator{\ed}{ed}
\DeclareMathOperator{\MPC}{MPC}
\newcommand*{\RN}[1]{\expandafter\@slowromancap\romannumeral #1@}
\newcommand{\define}[4][ignore]{%
  \ifstrequal{#1}{ignore}{}{
  \@namedef{thmtitle@#2}{#1}}%
  \@namedef{thm@#2}{#4}%
  \@namedef{thmtypen@#2}{lemma}%
  \newtheorem{thmtype@#2}[theorem]{#3}%
  \newtheorem*{thmtypealt@#2}{#3~\ref{#2}}%
}
\newcommand{\state}[1]{%
  \@namedef{curthm}{#1}
  \@ifundefined{thmtitle@#1}{
  \begin{thmtype@#1}
    }{
  \begin{thmtype@#1}[\@nameuse{thmtitle@#1}]
  }
    \label{#1}
    \@nameuse{thm@#1}
  \end{thmtype@#1}
  \@ifundefined{thmdone@#1}{
  \@namedef{thmdone@#1}{stated}%
  }{}
}
\newcommand{\restate}[1]{%
  \@namedef{curthm}{#1}
  \@ifundefined{thmtitle@#1}{
    \begin{thmtypealt@#1}
    }{
  \begin{thmtypealt@#1}[\@nameuse{thmtitle@#1}]
  }
    \@nameuse{thm@#1}
  \end{thmtypealt@#1}
  \@ifundefined{thmdone@#1}{
  \@namedef{thmdone@#1}{stated}%
  }{}
}
\newcommand{\thmlabel}[1]{
  \@ifundefined{thmdone@\@nameuse{curthm}}{\label{#1}
    }{\tag*{\eqref{#1}}}
}
\begin{document}

\begin{titlepage}
  \maketitle
  \begin{abstract}
Many modern parallel systems, such as MapReduce, Hadoop and Spark, can be modeled well by the MPC model.
The MPC model captures well coarse-grained computation on large data --- data is distributed to processors,
each of which has a sublinear (in the input data) amount of memory and we alternate between rounds of
computation and rounds of communication, where each machine can communicate an amount of data as large
as the size of its memory.  This model is stronger than the classical PRAM model, and it is an intriguing
question to design algorithms whose running time is smaller than in the PRAM model.

One fundamental graph problem is connectivity.  On an undirected graph with $n$ nodes and $m$ edges,  $O(\log n)$ round connectivity algorithms
have been known for over 35 years.  However, no algorithms with better complexity bounds were known.
In this work, we give  {\bf fully scalable, faster algorithms for the
connectivity problem}, by parameterizing the time complexity as a
function of the {\em diameter} of the graph. Our main result is a
$O(\log D \log\log_{m/n} n)$ time connectivity algorithm for
diameter-$D$ graphs, using $\Theta(m)$ total memory.
If our algorithm can use more memory, it can terminate in fewer rounds, and there is no
lower bound on the memory per processor.

We extend our results to
related graph problems such as spanning forest, finding a DFS
sequence,  exact/approximate minimum spanning forest, and bottleneck spanning forest.
We also show that achieving similar bounds
for reachability in {\em directed graphs} would imply faster boolean
matrix multiplication algorithms.

We introduce several new algorithmic ideas.  We describe a general
technique called {\em double exponential speed problem size reduction}
which roughly means that if we can use total memory $N$ to reduce a
problem from size $n$ to $n/k$, for $k=(N/n)^{\Theta(1)}$ in one
phase, then we can solve the problem in $O(\log\log_{N/n} n)$ phases.
In order to achieve this fast reduction for graph connectivity, we use
a multistep algorithm.  One key step is a carefully constructed
truncated broadcasting scheme where each node broadcasts neighbor sets
to its neighbors in a way that limits the size of the resulting neighbor sets.
Another key step is {\em random leader contraction}, where we choose a
smaller set of leaders than many previous works do.

  \end{abstract}
  \thispagestyle{empty}
\end{titlepage}



\newpage
{\hypersetup{linkcolor=black}
\tableofcontents
}
\newpage








\section{Introduction}

Recently, several parallel systems, including MapReduce
\cite{DG04-mapreduce, DG08-mapreduce}, Hadoop \cite{white2012hadoop},
Dryad \cite{isard2007dryad}, Spark \cite{zaharia2010spark}, and
others, have become successful in practice. This success has sparked a
renewed interest in algorithmic ideas for these parallel systems.

One important theoretical direction has been to develop
good models of these modern systems and to relate them
to classic models
such as PRAM. The work of \cite{FMSSS10-mad, ksv10,
  gsz11, bks13, ANOY14-parallel} have led to the
model of {\em Massive Parallel Computing} (MPC) that balances
accurate modeling  with theoretical elegance. MPC is a variant of the Bulk
Synchronous Parallel (BSP) model \cite{Val90-BSP}.
In particular, MPC allows
$N^\delta$ space per machine (processor), where $\delta\in(0,1)$ and
$N$ is the input size, with alternating rounds of unlimited local computation, and communication
of up to $N^\delta$ data per processor.
An MPC
algorithm can equivalently be seen as a small circuit, with
arbitrary, $N^\delta$-fan-in gates; the depth of the circuit is the
parallel time. Any PRAM algorithm can be simulated on MPC in the same
parallel time \cite{ksv10, gsz11}. However, MPC is
in fact more powerful than the PRAM: even computing the XOR of $N$
bits requires near-logarithmic parallel-time on the most powerful
CRCW PRAMs \cite{BH89-CRCWlb}, whereas it takes constant,
$O(1/\delta)$, parallel time on the MPC model.

The main algorithmic question of this area is then: for which problems
can we design MPC algorithms that are {\em faster} than the best PRAM
algorithms?  Indeed, this question has been the focus of several
recent papers, see, e.g., \cite{ksv10, LMSV11-filtering,
  EIM11-clustering, ANOY14-parallel, ahn2018access, assadi17-matching,
  Im17-parallel, czumaj18-parallel}.
Graph problems have been particularly well studied and one fundamental problem is {\em connectivity in
  a graph}. While this problem has a standard logarithmic time PRAM
algorithm \cite{SV82-connect}, we do not know whether we can solve it
faster in the MPC model.

While we would like {\em fully scalable} algorithms---which work for
any value of $\delta>0$---there have been graph algorithms that use
space close to the number of vertices $n$ of the graph. In particular,
the result of \cite{LMSV11-filtering} showed a faster algorithm for
the setting when the space per machine is polynomially larger than the
number of vertices, i.e., $s\ge n^{1+\Omega(1)}$, and hence the number
of edges is necessarily $m\ge n^{1+\Omega(1)}$.  In fact, similar
space restrictions are pervasive for all known sub-logarithmic time
graph algorithms, which require $s=\Omega(\tfrac{n}{\log^{O(1)}n})$
\cite{LMSV11-filtering, ahn2018access, assadi17-matching,
  czumaj18-parallel} (the only exception is \cite{ANOY14-parallel} who
consider geometric graphs). We highlight the work of
\cite{czumaj18-parallel}, who manage to obtain {\em slightly
  sublinear} space of $n/\log^{\Omega(1)} n$ in $\log^{O(1)}\log n$
parallel time, for the approximate matching problem and
\cite{AssadiBBMSarxiv2017} who obtain {\em slightly sublinear} space
of $n/\log^{\Omega(1)} n$ in $O(\log\log n)$ parallel time.
We note that the space of $\sim n$ also coincides with the space
barrier of the semi-streaming model: essentially no graph problems are
solvable in less than $n$ space in the streaming model, unless we have
many more passes; see e.g. the survey \cite{McG-survey}.


It remains a major open question whether there exist fully scalable
connectivity MPC algorithms with sub-logarithmic time (e.g., for
sparse graphs).  There are strong indications that such algorithms do
{\em not} exist: \cite{bks13} show logarithmic lower
bounds for restricted algorithms. Alas, showing an unconditional lower
bound may be hard to prove, as that would imply circuit lower bounds
\cite{DBLP:conf/spaa/RoughgardenVW16}.

In this work, we show {\bf faster, fully scalable algorithms for the
connectivity problem}, by parameterizing the time complexity as a
function of the {\em diameter} of the graph. The diameter of the graph is the largest diameter of its connected components. Our main result is an
$O(\log D \log\log_{m/n} n)$ time connectivity algorithm for
diameter-$D$ graphs with $m$ edges. Parameterizing as a
function of $D$ is standard, say, in the distributed computing
literature \cite{prs16,hhw18}. In fact, some previous MPC
algorithms for connectivity in the applied communities have been {\em
  conjectured} to obtain $O(\log D)$ time \cite{rmcs13};
alas, we show in Section \ref{sec:rastogi} the algorithm of \cite{rmcs13} has
a lower bound of $\Omega(\log n)$ time.

Our algorithms exhibit a tradeoff between the total amount of memory
available and the number of rounds of computation needed.  For
example, if the total space is $\Omega(n^{1+\gamma'})$ for some
constant $\gamma' > 0$, then our algorithms run in $O(\log D)$ rounds
only.

\subsection{The MPC model}

Before stating our full results, we briefly recall the $\MPC$ model
\cite{bks13}. A detailed discussion appears in
Section~\ref{sec:MPCmodel}, along with some core primitives
implementable in the $\MPC$ model.

\begin{definition}[$(\gamma,\delta)-\MPC$ model]\label{def:MPCmodel}
Fix parameters $\gamma,\delta>0$, and suppose $N\ge 1$ is the input
size. There are $p\ge 1$ machines (processors) each with local memory
size $s=\Theta(N^{\delta})$, such that $p\cdot s=O(N^{1+\gamma})$. The
space size is measured by words, each of $\Theta(\log(s\cdot p))$
bits.  The input is distributed on the local memory of $\Theta(N/s)$
input machines.  The computation proceeds in rounds. In each round,
each machine performs computation on the data in its local memory, and
sends messages to other machines at the end of the round.  The total
size of messages sent or received by a machine in a round is bounded
by $s$.  In the next round, each machine only holds the received
messages in its local memory.  At the end of the computation, the
output is distributed on the output machines.  Input/output machines
and other machines are identical except that input/output machine can
hold a part of the input/output.  The parallel time of an algorithm is
the number of rounds needed to finish the computation.
\end{definition}

In this model,  the space per machine is sublinear in $N$, and the total
space is only an $O(N^{\gamma})$ factor more than the input size $N$.  In
this paper, we consider the case when $\delta$ is an arbitrary
constant in $(0,1).$ Our results are for both the most restrictive
case of $\gamma=0$ (total space is linear in the input size), as well
as $\gamma>0$ (for which our algorithms are a bit faster).  The model from
Definition~\ref{def:MPCmodel} matches the model $\MPC(\varepsilon)$
from~\cite{bks13} with $\eps=\gamma/(1+\gamma-\delta)$ and the number
of machines $p=O(N^{1+\gamma-\delta})$.

\subsection{Our Results}



While our main result is a $\sim \log D$ time connectivity MPC
algorithm, our techniques extend to
related graph problems, such as spanning forest, finding a DFS
sequence, and exact/approximate minimum spanning forest.
We also prove a lower bound showing that, achieving similar bounds
for reachability in {\em directed graphs} would imply faster boolean
matrix multiplication algorithms.

We now state our results formally. For all results below, consider an
input graph $G=(V,E)$, with $n=|V|$, $N=|V|+|E|$, and $D$ being the
upper bound on the diameter of any connected component of $G$.

\newcommand{\mypara}[1]{\vspace{0.5ex}\noindent{\bf #1}}

\mypara{Connectivity:} In the connectivity problem,  the goal is to output the connected
components of an input graph $G$, i.e. at the end of the computation,
$\forall v\in V,$ there is a unique tuple $(x,y)$ with $x=v$ stored on
an output machine, where $y$ is called the color of $v$. Any two
vertices $u,v$ have the same color if and only if they are in the same
connected component.
\begin{theorem}[Connectivity in MPC, restatement of
    Theorem~\ref{thm:formal_statement_of_connectivity}]
\label{thm:introConn}
For any $\gamma\in[0,2]$ and any constant $\delta\in(0,1),$ there is a
randomized $(\gamma,\delta)-\MPC$ algorithm (see
Algorithm~\ref{alg:batch_algorithm2}) which outputs the connected
components of the graph $G$ in
$O(\min(\log D\cdot
\log \tfrac{\log n}{\log (N^{1+\gamma}/n)},\log n))$
parallel time.
The success probability is at least
$0.98.$ In addition, if the algorithm fails, then it returns {\rm
  FAIL}.
\end{theorem}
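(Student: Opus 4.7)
The plan is to combine a double-exponential problem-size reduction with a per-phase graph contraction that shrinks the vertex count by a large polynomial factor while taking only $O(\log D)$ rounds. Writing $\N=N^{1+\gamma}$ for the total memory budget, I maintain a sequence of contracted multigraphs $G_0=G,G_1,G_2,\ldots$ on $n_i$ super-vertices whose connected components are in bijection with those of $G$. Each phase turns $G_i$ into $G_{i+1}$ with $n_{i+1}\le n_i/k_i$, where $k_i=(\N/n_i)^{\Theta(1)}$. Setting $x_i=\log(\N/n_i)$, this guarantee gives $x_{i+1}\ge 2x_i$, so after $T=O(\log(\log n/x_0))=O(\log\log_{\N/n}n)$ phases the graph has constant size and connectivity is trivial. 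When $\N/n=\Theta(1)$ this scheme stalls, and I would fall back to a standard $O(\log n)$-round PRAM-style connectivity algorithm simulated on MPC, which accounts for the $\min$ in the stated bound.

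Each phase uses two ingredients, both implemented in $O(\log D)$ rounds. The first is truncated broadcasting: every vertex $v$ grows a neighbor set $S_v\subseteq V(G_i)$ by repeatedly replacing $S_v$ with $\bigcup_{u\in S_v}S_u$ and truncating back to a budget $b_i=\Theta(\N/n_i)$ elements. After $O(\log D)$ doublings, either $S_v$ contains $v$'s entire component in $G_i$ (in which case the component is finished and can be output directly) or $|S_v|=b_i$. The second ingredient is random leader contraction: mark each vertex as a leader independently with probability $p_i=\Theta((\log n)/b_i)$, so by a Chernoff bound every buffer of size $b_i$ contains a leader with high probability; each non-leader then contracts to the smallest-id leader in its $S_v$. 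Since all vertices sharing an $S_v$ lie in the same component of $G$, correctness is preserved, and the vertex count drops by a factor of $\Theta(b_i/\log n)=\Theta(k_i)$ as required. Between phases I use sort-based duplicate elimination on the induced multigraph of super-vertices to keep total memory within $\N$.

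Putting the pieces together, the total round count is $T$ phases of $O(\log D)$ rounds, i.e.\ $O(\log D\cdot\log\log_{\N/n}n)$. A union bound over the $T\le O(\log\log n)$ phases---each failing only when some buffer of size $b_i$ has no leader, an event of probability $n^{-\Omega(1)}$---yields overall success probability at least $0.98$; on any such failure the algorithm aborts and returns FAIL.

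The main obstacle is realizing one doubling step of truncated broadcasting in $O(1)$ MPC rounds. Naively sending $S_v$ to every neighbor produces messages of total size $b_i\cdot\deg(v)$, which can vastly exceed the per-machine budget $N^{\delta}$ for high-degree super-vertices. The remedy is to route the exchange through a sort-and-aggregate subroutine that spreads each outgoing buffer across many machines, computes the union of incoming buffers, and truncates to $b_i$ on the fly. The choice $b_i=\Theta(\N/n_i)$ is the precise balance point where the broadcasting communication, the contraction ratio, and the per-machine memory all align; making this implementation fit in $O(1)$ rounds per doubling while simultaneously respecting the $\gamma=0$ case (linear total memory) is the delicate part of the argument.
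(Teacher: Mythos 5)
Your high-level plan is the same as the paper's: iteratively grow per-vertex neighbor sets by a truncated broadcast in $O(\log D)$ rounds, sparsify via random leader contraction, and argue double-exponential shrinkage of the vertex count so that only $O(\log\log_{N^{1+\gamma}/n} n)$ phases are needed. However, there are two concrete quantitative gaps in the way you have set it up.

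First, the budget $b_i=\Theta(\N/n_i)$ is too large to respect the memory bound. If each live set $S_v$ is allowed to have size up to $b_i$ during the doubling, then a single doubling step has each of the up-to-$b_i$ elements $u\in S_v$ contribute $S_u$ of size up to $b_i$, so vertex $v$ receives $\Theta(b_i^2)$ words; summing over the $n_i$ vertices gives $n_i b_i^2=\N^2/n_i\gg\N$ total communication, which no amount of sort-and-aggregate can reduce since that data must actually be shipped before it can be truncated. The paper avoids this by keeping the \emph{active} growth budget at $d_i=\Theta((\N/n_i)^{1/2})$, so that the union $\bigcup_{u\in S_v}S_u$ of sub-threshold sets has size at most $d_i^2=\Theta(\N/n_i)$ and the total traffic per doubling is $\Theta(\N)$; the leader probability is then set relative to $d_i$, not $d_i^2$. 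With this corrected budget the per-phase progress is $x_{i+1}\ge (3/2)x_i$ rather than your claimed $x_{i+1}\ge 2x_i$; the conclusion $T=O(\log\log_{\N/n}n)$ still follows, but you should be aware your stated constant requires the unaffordable budget.

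Second, you do not handle the regime where $d_i$ is small (say $d_i=O(\log n)$, i.e.\ $\N/n_i=\mathrm{polylog}(n)$), in which the leader probability $p_i=\Theta((\log n)/d_i)$ would exceed a constant. Capping $p_i$ at $1/2$ only guarantees a constant-factor shrinkage per phase, and your fallback to an $O(\log n)$-round algorithm when $\N/n=\Theta(1)$ does not match the claimed bound: when $\N/n=\Theta(1)$ the theorem promises $O(\min(\log D\cdot\log\log n,\ \log n))$, which is much smaller than $\log n$ when $D$ is small. The paper's proof (Theorem~\ref{thm:alg2_time_prob}) handles this by running $O(\log\log n)$ ``slow'' phases at $p_i=1/2$, showing (via Lemma~\ref{lem:random_leader_low_prob} and Markov) that this reduces $n_i$ to $n/\mathrm{polylog}(n)$ with good probability, after which the double-exponential phase kicks in; the extra $(\log\log n)^2$ additive term this produces is still within $O(\log D\cdot\log\log n)$. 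You need a similar two-regime analysis, not just a single-phase Chernoff bound on the buffers.
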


Notice that in the most restrictive case of $\gamma=0$ and $m=n$, we
obtain $O(\min(\log D\cdot \log\log n ,\log n))$ time. When the total
space is slightly larger, or the graph is slightly
denser---i.e. $\gamma>c$ or $\log_n m>c$, where $c>0$ is an
arbitrarily small constant---then we obtain $O(\log D)$ time.

\begin{remark}
We note the concurrent and independent work
of~\cite{asw18}, who also give a connectivity algorithm in the $\MPC$
model but with different guarantees. In particular, their
runtime is parameterized as a function of $\lambda$, which is a lower bound on
the spectral gap\footnote{The spectral gap of a graph $G$ is the second smallest eigenvalue of the normalized Laplacian of $G$.} of the
connected components of $G$.  For a graph $G$ with $n$ vertices and
$m=\wt{O}(n)$ edges, their algorithm runs in $O(\log\log
n+\log(1/\lambda))$ parallel time and uses $\wt{O}(n/\lambda^2)$ total
space. In contrast, our algorithm has a runtime of $O(\log
D\cdot\log\log_{N/n}n),$ where $D$ is the largest diameter of
a connected component of $G$, and $N=\Omega(m)$ is the total space
available. To compare the two runtimes, we note that: 1) $D\le
O(\frac{\log n}{\lambda})$ for any undirected graph $G$; and 2) there
exist sparse graphs $G$\footnote{We can construct $G$ as the following: a bridge connects two $3$-regular expanders where each expander has $n/2$ vertices.} with $n$ vertices and $O(n)$ edges such that
$\frac{1}{\lambda}\geq D\cdot n^{\Omega(1)}$ and $D\leq O(\log n).$
Thus, our results subsume \cite{asw18} in the case when total space is
$N=n^{1+\Omega(1)}$, but are incomparable otherwise.
\end{remark}

\mypara{Spanning forest problem:} In the spanning forest problem, the
goal is to output a subset of edges of an input graph $G$ such that
the output edges together with the vertices of $G$ form a spanning
forest of the graph $G$.  In the rooted spanning forest problem, in
addition to the edges of the spanning forest, we are also required to
orient the edge from child to parent, so that the parent-child pairs
form a rooted spanning forest of the input graph $G$.

\begin{theorem}[Spanning Forest, restatement of Theorem~\ref{thm:formal_statement_of_spanning_tree}]\label{thm:intro_spanning_forest}
For any $\gamma\in[0,2]$ and any constant $\delta\in(0,1),$ there is a
randomized $(\gamma,\delta)-\MPC$ algorithm (see
Algorithm~\ref{alg:batch_algorithm3} and
Algorithm~\ref{alg:batch_algorithm4}) which outputs the rooted
spanning forest of the graph $G$ in 
$O(\min(\log D\cdot
\log \tfrac{\log n}{\log (N^{1+\gamma}/n)},\log n))$
parallel time.
The success probability is at least
$0.98.$ In addition, if the algorithm fails, then it returns {\rm
  FAIL}.
\end{theorem}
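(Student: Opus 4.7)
The plan is to obtain Theorem~\ref{thm:intro_spanning_forest} by augmenting the connectivity algorithm of Theorem~\ref{thm:introConn} so that, alongside the colors it produces, it also emits a witness edge from $E$ for every merging event it performs; collecting these witness edges yields a spanning forest, and orienting them in the direction of the merges yields a \emph{rooted} spanning forest. Because the bookkeeping will be purely local (labels attached to the existing messages), we expect the round complexity and memory usage to match those of Algorithm~\ref{alg:batch_algorithm2} up to constants.

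More concretely, the connectivity algorithm proceeds in phases, each operating on a contracted multigraph whose super-nodes represent partial components discovered so far. Every super-edge in the contracted graph will carry a pointer to a single representative edge of the original graph $G$; this pointer is maintained inductively, by copying it from a parent super-edge whenever two super-edges collapse. Inside a phase, the two primitives that cause merging are (i) \emph{random leader contraction}, where a non-leader super-node $v$ is attached to a leader $\ell$ via some super-edge $e$, and (ii) \emph{truncated neighbor-set broadcasting}, where a super-node $u$ learns about a distant super-node $w$ through a short path $u=u_0,u_1,\dots,u_t=w$ in the current contracted graph. In case~(i), we emit the original edge pointed to by $e$ and orient it from $v$ toward $\ell$. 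In case~(ii), we emit the original edges pointed to by all super-edges along the path and orient each of them toward $w$ (which itself is about to be absorbed into a leader in the same phase, so orientations compose consistently).

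To see that the emitted edges form a \emph{rooted} spanning forest, I would argue inductively on the phase: before phase $i$, the already-emitted edges form a spanning rooted forest of the sub-vertices represented by the surviving super-nodes, with each super-node's root being the super-node itself. During phase $i$, each absorbed super-node acquires exactly one outgoing path in the contracted graph (either a direct super-edge to a leader, or a path to a broadcast target that is itself absorbed), so after expanding pointers the newly emitted edges glue the old sub-trees into larger rooted sub-trees, without creating a cycle. After $O(\log D \cdot \log\log_{m/n} n)$ phases the contracted graph has one super-node per connected component, and the accumulated edges form a rooted spanning forest of $G$.

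The main obstacle is doing the path-expansion for the truncated broadcasting step within the claimed memory and round bounds: a super-node's broadcast radius can be as large as $\Theta(\log_{m/n} n)$, and naively storing a full path per super-edge would blow up the memory by a factor of that size. To handle this I would store with each super-edge only a pointer to its immediate parent super-edge (the one it was created from in the previous phase), and reconstruct the original path on demand by a single auxiliary pointer-chasing routine that runs in $O(\log D \cdot \log\log_{m/n} n)$ additional rounds using the standard $\MPC$ list-ranking/pointer-jumping primitives discussed in Section~\ref{sec:MPCmodel}. This pointer-chasing is invoked once, at the very end of the algorithm, so it does not multiply the per-phase round count. The rooting orientation is produced in the same pass by propagating each leader's identity down its absorbed subtree. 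The failure probability and the FAIL behavior are inherited verbatim from Theorem~\ref{thm:introConn}, since the augmentation is deterministic given the connectivity algorithm's random choices.
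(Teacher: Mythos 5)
Your plan is built on the assumption that the truncated-broadcasting step of the connectivity algorithm already carries a canonical ``path'' from each vertex $u$ to each vertex $w \in S_u$ it learns about, and that by emitting the original edges along this path you get spanning-forest edges. This is not the case, and the paper flags exactly this point as the reason a na\"{i}ve augmentation of Algorithm~\ref{alg:batch_algorithm2} does not work: the neighbor sets $S_v$ used by \textsc{NeighborIncrement} are unstructured sets, not trees or paths, and the parent pointer $\p_i(v)$ chosen in line~\ref{sta:alg2_assign_pointer2} of Algorithm~\ref{alg:batch_algorithm2} is a neighbor of $v$ in the \emph{augmented} graph $G'_i$, which may not correspond to any single edge of $G$. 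There is no well-defined path attached to a pair $(u,w)$ to expand, and even if you chose one arbitrarily, the union over all contracted vertices of all such path edges would not in general be a forest --- overlapping paths emitted by different vertices can close cycles, and nothing in your inductive argument prevents this. The paper's actual fix is to replace the unstructured $S_v$ with a \emph{local (complete) shortest path tree} $\wt T(v)$ (Definitions~\ref{def:local_short_tree}, \ref{def:local_complete_tree}; Algorithms~\ref{alg:merge_short_tree}, \ref{alg:doubling_expansion}, \ref{alg:maximal_short_tree}), which records, for every vertex $u \in V_{\wt T(v)}$, a parent $\p_{\wt T(v)}(u)$ that is a genuine edge of the current graph. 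With this structure in hand, Algorithm~\ref{alg:batch_algorithm3} does not emit a whole path: each contracted vertex $v$ emits exactly \emph{one} edge $(v,\p_i(v))$, where $\p_i(v)=w_i(v)$ is the first step from $v$ toward the nearest sampled leader along $\p_{\wt T_i(v)}$ (line~\ref{sta:alg_span_find_path}), and the edge mapping $g_i$ carries each contracted edge back to a specific edge of $G$. One-edge-per-contracted-vertex is what guarantees the output is a forest (Lemma~\ref{lem:recursive_spanning_tree}); your all-edges-on-the-path rule does not.

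Your treatment of the orientation is also underspecified. ``Orient each edge toward $w$'' is not consistent because different vertices on the same path are generally absorbed into different leaders, and the paper obtains the rooted version only by a separate top-down pass (Algorithm~\ref{alg:batch_algorithm4}, using \textsc{ForestExpansion} and \textsc{RootChange}/\textsc{FindPath}) that re-roots each contraction tree after the fact via pointer-reversal; this is not a ``propagate leader identity down the subtree'' sweep. Finally, the memory concern you raise is real, but the offending parameter is the broadcast radius (up to $D$ after $O(\log D)$ doubling steps), not $\Theta(\log_{m/n} n)$, and the paper controls it by keeping each $\wt T(v)$ of size at most $\lfloor(m/n)^{1/2}\rfloor$ (Lemma~\ref{lem:maximal_large_short_tree}, property~\ref{itm:large_tree_pro4}) rather than by deferred pointer-chasing at the end. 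The round budget, by contrast, you estimate correctly: the LSPT construction and the root-changing pass each fit within $O(\log D)$ rounds per phase (Lemma~\ref{lem:num_it_multiplelargetree}, Lemma~\ref{lem:find_path}), matching Theorem~\ref{thm:success_prob_batch_alg3}.
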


Our spanning forest algorithm can also output an approximation to the
diameter, as follows.

\begin{theorem}[Diameter Estimator, restatement of Theorem~\ref{thm:formal_statement_of_diameter}]
For any $\gamma\in[0,2]$ and any constant $\delta\in(0,1),$ there is a
randomized $(\gamma,\delta)-\MPC$ algorithm which outputs a
diameter estimator $D'$ of the input graph $G$ in 
$O(\min(\log D\cdot
\log \tfrac{\log n}{\log (N^{1+\gamma}/n)},\log n))$
parallel time such that $D\leq D'\leq D^{O(\log
  ( 1/\gamma' ))},$ where $\gamma'=\tfrac{\log (N^{1+\gamma}/n)}{\log n}.$ The success probability is at least
$0.98.$ In addition, if the algorithm fails, then it returns {\rm
  FAIL}.
\end{theorem}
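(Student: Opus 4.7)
The plan is to run the rooted spanning forest algorithm of Theorem~\ref{thm:intro_spanning_forest}, compute the depth of every vertex in the resulting spanning tree, and output $D'=2\cdot\max_{v\in V}\dep(v)$.

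The lower bound $D\le D'$ is a graph-theoretic fact. A spanning tree $T$ of a connected component $C$ is a subgraph of $G$, so $\diam(T)\ge\diam(C)$. Since the diameter of any tree is at most twice its depth from any fixed root, $\diam(C)\le 2\cdot\dep(T)$; taking the maximum over components gives $D\le D'$.

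The upper bound is the more delicate part, and I would prove it by analyzing the spanning forest algorithm phase by phase. Let $p=O(\log(1/\gamma'))$ be the number of contraction super-phases, and let $G_i$ denote the graph at the start of phase $i$, with $G_1=G$. In each phase the algorithm partitions the current graph into clusters rooted at leaders and contracts each cluster to one super-node. Write $r_i$ for the maximum cluster radius in $G_i$. Contracting connected subgraphs can only shrink pairwise distances, so $\diam(G_i)\le D$ and therefore $r_i\le D$ throughout. Letting $d_i$ denote the maximum depth, in the original graph, of a sub-tree rooted at any super-node of $G_{i+1}$, the hierarchical construction gives a recurrence $d_{i+1}\le O(r_{i+1})\cdot(d_i+1)$: any vertex in the new sub-tree is reached by traversing at most $r_{i+1}$ contracted edges of $G_{i+1}$ inside the phase-$(i+1)$ cluster, and each such hop costs one $G_1$-edge plus an intra-super-node traversal of at most $O(d_i)$ edges. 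Unrolling yields $d_p\le (O(D))^{p}\le D^{O(\log(1/\gamma'))}$.

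Given the parent pointers, vertex depths can be computed by $\lceil\log_2 d_p\rceil$ rounds of pointer-jumping on the spanning forest, which is $O(\log D\cdot\log(1/\gamma'))$ additional rounds---within the round complexity of the spanning forest algorithm---and the final maximum reduction takes $O(1/\delta)$ more rounds via standard MPC primitives. Success and FAIL behavior are inherited from Theorem~\ref{thm:intro_spanning_forest}. The main technical obstacle is establishing the recurrence $d_{i+1}\le O(r_{i+1})(d_i+1)$: one must describe exactly how sub-trees are glued together when clusters are contracted, and show that traversing a single contracted edge of $G_{i+1}$ inflates the original-graph depth by only an additive $O(d_i)$ term rather than by something much larger (for example, multiplicatively $r_{i+1}\cdot d_i$ on every hop).
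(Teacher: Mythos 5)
Your proposal matches the paper's approach: run the rooted spanning forest algorithm, bound the depth of the resulting forest by $D^{O(\log(1/\gamma'))}$ via a phase-wise multiplicative recurrence, and then read off the depth by pointer jumping (and your observation that one should report $2\cdot\max_v\dep(v)$ to guarantee $D\le D'$ is a correct tightening of the paper's terse ``output the depth''). The recurrence you flag as the main technical obstacle is exactly the paper's Lemma~\ref{lem:expandrootedspanningforest}, which gives $\dep(\wh{\p})\le(2\dep(\wt{\p})+1)(\dep(\p)+1)$ via the root-changing step (Lemma~\ref{lem:root_change} shows changing the root of a contraction tree at most doubles its depth), and Theorem~\ref{thm:batch_algorithm4} unrolls it over the $r=O(\log(1/\gamma'))$ phases to obtain $\dep(\p)\le O(\diam(G))^r$.
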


\mypara{Depth-First-Search sequence:} If the input graph $G$ is a
tree, then we are able to output a Depth-First-Search sequence of that
tree in $O(\log D)+T$ parallel time, where $T$ is parallel time to
compute a rooted tree (see Theorem~\ref{thm:intro_spanning_forest} for
our upper bound of $T$) for $G$. (See Section~\ref{sec:data_organize}
for a discussion how to represent a sequence in the $\MPC$ model.)

\begin{theorem}[DFS Sequence of a Tree, restatement of
    Theorem~\ref{thm:formal_statement_of_DFS}]
Suppose the graph $G$ is a tree.
For any $\gamma\in[\beta,2]$ and any constant $\delta\in(0,1),$ there
is a randomized $(\gamma,\delta)-\MPC$ algorithm
(Algorithm~\ref{alg:dfs_sequence}) that outputs a Depth-First-Search
sequence for the input graph $G$ in $O(\min(\log D\cdot \log
(1/\gamma),\log n))$ parallel time, where
$\beta=\Theta(\log\log n/\log n).$ 
The success probability is at least $0.98$. In addition, if the algorithm fails, then it returns {\rm FAIL}.
\end{theorem}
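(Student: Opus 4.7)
The plan has two phases: (i) produce a rooted version of $G$, and (ii) convert the rooted tree into a DFS sequence via the Euler tour technique in $O(\log D)$ additional MPC rounds. For (i), I would invoke Theorem~\ref{thm:intro_spanning_forest}; since $G$ is already a tree, its rooted spanning forest subroutine returns $G$ together with parent pointers and an arbitrary root per connected component in $T=O(\min(\log D\cdot\log(1/\gamma),\log n))$ MPC rounds with success probability at least $0.98$. The failure flag of the DFS algorithm is inherited from this call.

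Given the rooted tree, I would construct the Euler tour of $G$ as a cyclic linked list of length $2(n-1)$. For each undirected edge $\{u,v\}$, create two directed copies $(u,v)$ and $(v,u)$. At each vertex $v$, arrange the incident directed out-edges in a cyclic order (for instance, first the edge back to $\mathrm{parent}(v)$, then the edges to the children of $v$ in an arbitrary fixed order). The successor of $(u,v)$ in the tour is the directed out-edge of $v$ that immediately follows $(v,u)$ in $v$'s cyclic order, wrapping around at the end. This construction can be implemented in $O(1)$ MPC rounds via standard sort-and-gather primitives, splitting the adjacency list of any high-degree vertex across multiple machines if needed. Breaking the cycle at the chosen root produces a linked list in which the rank of each directed edge encodes, for every vertex, its DFS entry time and subtree size; a final sort by entry time in $O(1)$ rounds then outputs the DFS sequence.

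The remaining and main technical step is to perform list ranking on this Euler tour in $O(\log D)$ rounds. Generic list ranking on a length-$2(n-1)$ list requires $\Theta(\log n)$ rounds of pointer jumping, so I must exploit that the list arises from an Euler tour of a tree of diameter at most $D$. The matched (entry, exit) pairs of each vertex form a nested parenthesis structure whose nesting depth is bounded by the depth of the rooted tree, which is $O(D)$. I would run an iterated contract-and-expand scheme on these nested pairs: using doubling along parent pointers (so that after $k$ rounds each directed edge has a pointer to its $2^k$-th successor together with the accumulated rank increment) and using the extra MPC memory when $\gamma>0$ to contract deeper levels per round. Because the nesting depth is $O(D)$, the procedure converges in $O(\log D)$ rounds, with a fallback to the standard $O(\log n)$-round list-ranking algorithm for degenerate trees where $D=\Theta(n)$; this fallback accounts for the $\min$ with $\log n$ in the stated bound.

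Combining $T$ rounds for rooting with $O(\log D)$ additional rounds for the Euler tour and list ranking yields the total parallel time $O(\log D)+T=O(\min(\log D\cdot\log(1/\gamma),\log n))$, matching the theorem. The main obstacle is designing the tree-structured list ranking in $O(\log D)$ rounds rather than the naive $\Theta(\log n)$; the other steps (rooting via Theorem~\ref{thm:intro_spanning_forest}, Euler tour construction, and the final sort) are routine MPC reductions.
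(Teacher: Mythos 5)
Your high-level plan — root the tree, build an Euler tour, and reduce to list ranking — is a textbook PRAM strategy, and you correctly pinpoint that the bottleneck is doing the list ranking in $O(\log D)$ rounds rather than the generic $\Theta(\log n)$. However, the argument you give for that crucial step does not hold up. The Euler tour is a linked list of length $2(n-1)$ regardless of $D$; after $k$ rounds of pointer doubling with accumulated rank increments, an element knows its offset only to its $2^k$-th list successor, so $O(\log D)$ rounds reach distance $\mathrm{poly}(D)$ in the list, which can be far short of $\Theta(n)$. The fact that the parenthesization has nesting depth $O(D)$ does not by itself bound the number of pointer-jumping rounds: consider a star ($D=2$), where one endpoint of the tour must learn a rank of $\Theta(n)$, or a caterpillar, where the tour contains $\Theta(n)$-length flat stretches at a single depth level. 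To actually know the DFS entry time of a vertex you must aggregate subtree sizes of all siblings preceding it along the root path, and tree contraction / subtree-size computation is exactly as hard as the list ranking you were trying to sidestep; ``doubling along parent pointers'' gives you $2^k$-ancestors in $O(\log D)$ rounds, but does not give you the subtree-size aggregation. The claim ``because the nesting depth is $O(D)$, the procedure converges in $O(\log D)$ rounds'' is the gap, and it is not a small one — it is the entire content of the theorem.

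The paper's proof takes a genuinely different route that avoids the Euler tour and list ranking altogether. After rooting the tree via the spanning-forest algorithm, it samples roughly $n^{\delta/2}$ leaves, orders them in DFS order by answering all pairwise LCA queries (this uses $\textsc{FindAncestors}$, a doubling over tree parent pointers, in $O(\log D)$ rounds), stitches together the root-to-leaf and leaf-to-LCA-to-leaf paths via $\textsc{MultiPath}$ to obtain the DFS sequence of the sampled skeleton (Algorithms~\ref{alg:sample_leaves} and~\ref{alg:subdfs}), and then recurses on the subtrees hanging off the skeleton. Because each sampled leaf ``covers'' a $1/n^{\delta/3}$ fraction of leaves with high probability, the recursion has depth $O(1/\delta)=O(1)$, and each level costs $O(\log D)$ rounds; the duplicate-element counting in $\textsc{SubDFS}$ then gives the correct multiplicities. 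If you want to rescue your Euler-tour plan, you would need a sampling-based, tree-aware list-ranking subroutine, and working that out would essentially reconstruct the leaf-sampling recursion the paper uses — so I would recommend going directly through the LCA/path-stitching route rather than through an Euler tour.
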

Applications of DFS sequence of a tree include lowest common ancestor,
tree distance oracle, the size of every subtree, and others.  See
Section~\ref{sec:dfs_app} for a more detailed discussion of the  DFS sequence of a
tree.


\mypara{Minimum Spanning Forest:} In the minimum spanning forest problem, the goal is to compute the minimum
spanning forest of a weighted graph $G$.

\begin{theorem}[Minimum Spanning Forest, restatement of Theorem~\ref{thm:fomal_statement_of_exact_MST}]
Consider a weighted graph $G$ with weights
$w:E\rightarrow \mathbb{Z}$ such that $\forall e\in E,|w(e)|\leq
\poly(n)$.
For any $\gamma\in[0,2]$ and any constant $\delta\in(0,1),$ there is a
randomized $(\gamma,\delta)-\MPC$ algorithm which outputs a minimum
spanning forest of $G$ in $O(\min(\log D_{MSF}\cdot \log (\tfrac{\log
  n}{1+\gamma\log n}),\log n)\cdot\tfrac{\log n}{1+\gamma\log n})$
parallel time, where $D_{MSF}$ is the diameter (with respect to the
number of edges/hops) of a minimum spanning
forest of $G$.
The success probability is at least $0.98.$ In addition, if the
algorithm fails, then it returns {\rm FAIL}.
\end{theorem}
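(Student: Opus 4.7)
The plan is to adapt Bor\r uvka's algorithm, using the extra total memory $N^{1+\gamma}$ to batch contractions so that each phase shrinks the number of super-components by a large factor $k \approx N^{1+\gamma}/n$ rather than just by a factor of $2$. This brings the number of phases down from the classical $O(\log n)$ to $O(\log_k n) = O(\log n / (1+\gamma \log n))$, which matches the second factor in the claimed bound.

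Each phase proceeds as follows: maintain a partial MSF $F$ (initially empty) together with the current set of super-components. For every super-component, identify its $k$ lightest outgoing edges to distinct neighbor super-components (breaking ties lexicographically on endpoint IDs to enforce uniqueness), using standard MPC sorting and broadcasting primitives. The collected subgraph $H$ has $O(nk) = O(N^{1+\gamma})$ edges and thus fits in total memory; compute its MSF directly (sort its edges by weight and perform a union-find-style sweep in $O(1)$ rounds), add those edges to $F$, and then invoke the rooted spanning forest algorithm of Theorem~\ref{thm:intro_spanning_forest} to propagate the new super-component labels. Repeat until no super-component has an outgoing edge.

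Correctness follows from the cut property of MSFs: every edge added to $F$ is a minimum-weight edge across the cut separating the two super-components it currently joins, hence it must lie in the MSF of $G$. The $k$-fold shrinkage per phase follows because every new super-component either absorbs at least $k$ of the old ones (whenever it had at least $k$ neighboring super-components) or else absorbs its entire neighborhood (so subsequent phases still geometrically decrease the total count). Since the edges placed in $F$ are always MSF edges, each intermediate contracted graph on which spanning forest is invoked is a subgraph of the MSF after contractions, and therefore has diameter at most $D_{MSF}$. This makes the per-phase cost $O(\min(\log D_{MSF}\cdot\log(\log n/\log(N^{1+\gamma}/n)),\log n))$, which under $N=\Theta(m)$ matches the first factor in the stated bound.

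The main obstacle I anticipate is making the $k$-fold shrinkage argument rigorous in conjunction with the $D_{MSF}$ diameter bound: both rely on ensuring that every contraction stays inside the true MSF, which in turn requires careful lexicographic tie-breaking so that the MSF of $H$ computed in step~(c) is a subforest of the global MSF. A secondary issue is handling the randomized failure probabilities: each of the $O(\log n/(1+\gamma\log n))$ spanning forest calls is randomized, so I would amplify the per-call success probability slightly and apply a union bound to obtain the overall $0.98$ success probability, and otherwise return \textsc{FAIL}.
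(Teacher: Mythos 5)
Your approach — Borůvka-style batch contraction via the $k$ lightest outgoing edges of each super-component — is not what the paper does: the paper sorts edges by weight, partitions them into $\Theta(m^{\gamma/2})$ consecutive weight ranges, forms for each range a copy of the graph with all lighter edges contracted, and recurses; correctness comes for free from the Kruskal observation (Lemma~\ref{lem:mst_divide_and_conquer}) and the diameter bound from Lemma~\ref{lem:mst_depth_dimater}, which apply to \emph{every} prefix-contracted graph.

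More importantly, your approach has a genuine correctness gap: $\mathrm{MSF}(H)$ need not be a subforest of $\mathrm{MSF}(G)$. Your cut-property argument says each edge of $\mathrm{MSF}(H)$ is lightest across some cut, but that is only true within $H$ — lighter crossing edges may have been pruned when keeping only the $k$ lightest per super-component. Concretely, take $k=2$ and vertices $\{A,B,C,D,E_1,E_2,F_1,F_2\}$ with edges $(A,B)$ of weight $10$, $(A,C)$ of weight $5$, $(B,D)$ and $(D,C)$ of weight $1$, and $(D,E_1),(D,E_2),(C,F_1),(C,F_2)$ of weight $0.1$. Then $D$'s two lightest outgoing edges go to $E_1,E_2$ and $C$'s go to $F_1,F_2$, so $(D,C)$ is dropped; neither endpoint nominates it. The remaining seven edges form a spanning tree, so $\mathrm{MSF}(H)=H$ and contains $(A,B)$, yet $(A,B)$ is the heaviest edge on the cycle $A$--$B$--$D$--$C$--$A$ in $G$ and is not an MSF edge. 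Hence $F$ is not a subgraph of $\mathrm{MSF}(G)$, the algorithm's output is wrong, and your diameter bound of $D_{MSF}$ for the intermediate graphs (which relied on $F\subseteq\mathrm{MSF}(G)$) also collapses. A secondary, independent issue: computing $\mathrm{MSF}(H)$ ``by a union-find-style sweep in $O(1)$ rounds'' is not an available primitive when $H$ has $\Theta(N^{1+\gamma})$ edges spread across machines — a Kruskal sweep is inherently sequential — so even the per-phase cost you assume is unjustified. The paper never needs to solve an MSF subproblem of this size; its recursive calls only require connectivity/spanning-forest of prefix-contracted graphs, each of which already has diameter at most $D_{MSF}$.
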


We note that we require the bounded weights condition merely to ensure
that each weight is described by one word.

\begin{theorem}[Approximate Minimum Spanning Forest, restatement of Theorem~\ref{thm:formal_statement_of_approximate_MST}]
Consider a weighted graph $G$ with weights
$w:E\rightarrow \mathbb{Z}_{\geq 0}$ such that
$\forall e\in E,|w(e)|\leq \poly(n).$
For any $\varepsilon\in(0,1),$ $\gamma\in[\beta,2]$ and any constant $\delta\in(0,1),$ there is a randomized $(\gamma,\delta)-\MPC$ algorithm which can output a $(1+\varepsilon)$ approximate minimum spanning forest for $G$ in 
$O(\min(\log D_{MSF}\cdot \log (\tfrac{\log n}{\log(N^{1+\gamma}/(\varepsilon^{-1}n\log n))}),\log n))$
parallel time, where 
$\beta=\Theta(\log( \varepsilon^{-1} \log n )/\log n),$ and $D_{MSF}$ is the diameter (with respect to the
number of edges/hops) of a minimum spanning
forest of $G.$ 
The success probability is at least $0.98.$ In addition, if the algorithm fails, then it returns {\rm FAIL}.
\end{theorem}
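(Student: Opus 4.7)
The plan is to reduce approximate MSF to $k = O(\varepsilon^{-1}\log n)$ parallel instances of the spanning forest subroutine from Theorem~\ref{thm:intro_spanning_forest}, run on weight-thresholded subgraphs.

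First, round every edge weight $w(e)$ up to the nearest power of $(1+\varepsilon)$. Since $w(e)\le\poly(n)$, this produces $k = O(\varepsilon^{-1}\log n)$ distinct weight classes, and a standard argument (analogous to Kruskal's correctness on rounded weights) shows that any minimum spanning forest of the rounded-weight graph is a $(1+\varepsilon)$-approximate MSF of the original. For each $i\in\{0,1,\ldots,k-1\}$, let $G^{(i)}=(V,E^{(i)})$ where $E^{(i)}$ collects the edges whose rounded weight falls into the lowest $i+1$ buckets.

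The key structural observation is that every connected component $C$ of $G^{(i)}$ has diameter at most $D_{MSF}$ inside $G^{(i)}$. Indeed, the cut property of MST guarantees that $F^{*}\cap E^{(i)}$ restricted to $C$ is a spanning tree of $C$, and being a subtree of the global MSF $F^{*}$, its diameter is at most $D_{MSF}$; adding back the remaining $G^{(i)}$-edges can only shorten distances. With this diameter bound I would invoke Theorem~\ref{thm:intro_spanning_forest} in parallel for all $k$ indices $i$, assigning a $1/k$ share of the total memory to each instance. The condition $\gamma\ge\beta=\Theta(\log(\varepsilon^{-1}\log n)/\log n)$ is exactly what is needed to make the per-instance memory $N^{1+\gamma}/k$ at least $\Omega(n)$ so that the spanning-forest subroutine's bound remains meaningful; substituting this effective memory into Theorem~\ref{thm:intro_spanning_forest} yields the claimed $\log D_{MSF}\cdot\log\bigl(\log n/\log(N^{1+\gamma}/(\varepsilon^{-1}n\log n))\bigr)$ parallel time, and the $O(\log n)$ clause in the min comes from falling back to a classical PRAM MSF simulation when $\gamma$ is too small.

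An edge $e=(u,v)$ in bucket $i$ is placed in the approximate MSF exactly when $u$ and $v$ lie in different components of $G^{(i-1)}$ and $e$ is selected by a spanning-forest computation over the bucket-$i$ edges after contracting all $G^{(i-1)}$-components; the union over buckets is the output. The main obstacle will be the intra-bucket tie-breaking: multiple edges in bucket $i$ may bridge the same pair of $G^{(i-1)}$-components, so a second level of $k$ parallel spanning-forest calls on the contracted quotient graphs is required. One must check that these quotients are small enough to fit simultaneously in the shared memory without violating the $(\gamma,\delta)$-MPC constraints, and that the contraction does not inflate the diameter parameter fed into Theorem~\ref{thm:intro_spanning_forest} beyond $D_{MSF}$. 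Finally, the failure probability is controlled by a union bound across the $O(k)$ parallel subroutines, combined if needed with a constant number of independent repetitions to boost the $0.98$ success probability.
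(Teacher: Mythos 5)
Your proposal is substantively the same as the paper's: round the weights to powers of $(1+\varepsilon)$ to get $k=O(\varepsilon^{-1}\log n)$ classes, observe via Kruskal that the cumulative threshold graph $G^{(i)}$ always has diameter at most $D_{MSF}$ (this is exactly the paper's Lemma~\ref{lem:mst_depth_dimater}), contract the components of $G^{(i-1)}$ and run spanning forest on the quotient restricted to bucket-$i$ edges, and split the $N^{1+\gamma}$ memory over the $k$ instances, which is where the constraint $\gamma\ge\beta$ comes from. Your ``second level'' is not an obstacle you stumbled upon but the algorithm itself; the paper presents it in one pass (one connectivity call per bucket to build the quotients via Theorem~\ref{thm:formal_statement_of_connectivity}, then one spanning-forest call per quotient via Theorem~\ref{thm:formal_statement_of_spanning_tree}), and your instinct to verify that contraction cannot increase diameter is covered by Corollary~\ref{cor:tree_contract_conn}.

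The one genuine loose end is the failure-probability accounting. Each invocation of the connectivity or spanning-forest subroutine succeeds only with probability $0.98$, so a naive union bound over the $\Theta(k)=\Theta(\varepsilon^{-1}\log n)$ parallel calls gives a failure bound of $\Theta(\varepsilon^{-1}\log n)$, which is vacuous, and a \emph{constant} number of independent repetitions cannot repair that: for all $k$ calls to succeed simultaneously you would need to drive the per-call failure to $O(1/k)$, costing a $\Theta(\log k)=\Theta(\log\log n+\log(1/\varepsilon))$ multiplicative overhead that the theorem does not allow. The paper avoids this by viewing all $k$ instances at a given stage as a \emph{single} composed graph (a disjoint union) and making one invocation of Theorem~\ref{thm:formal_statement_of_connectivity} or Theorem~\ref{thm:formal_statement_of_spanning_tree} on it; the diameter of the composed graph is still at most $D_{MSF}$, the parallel time bound holds deterministically whenever the call does not return FAIL, and one then argues that the expected number of retries is $O(1)$ and applies Markov's inequality to recover the $0.98$ success guarantee without inflating the round count. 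You should replace the union-bound-plus-repetition sketch with this composition argument.
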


\begin{theorem}[Bottleneck Spanning Forest, restatement of Theorem~\ref{thm:formal_statement_of_bottleneck_spanning_tree}]
Consider a weighted graph $G$ with weights
$w:E\rightarrow \mathbb{Z}$ such that $\forall e\in E,|w(e)|\leq
\poly(n)$.
For any $\gamma\in[0,2]$ and any constant $\delta\in(0,1),$ there is a randomized $(\gamma,\delta)-\MPC$ algorithm which can output a bottleneck spanning forest for $G$ in $O(\min(\log D_{MSF}\cdot \log (\tfrac{\log n}{1+\gamma\log n}),\log n)\cdot \log (\tfrac{\log n}{1+\gamma\log n}))$ parallel time, where 
$D_{MSF}$ is the diameter (with respect to the
number of edges/hops) of a minimum spanning
forest of $G.$ 
 The success probability is at least $0.98.$ In addition, if the algorithm fails, then it returns {\rm FAIL}.
\end{theorem}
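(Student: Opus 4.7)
The plan is to reduce the bottleneck spanning forest (BSF) problem to a handful of spanning forest (SF) computations. Let $w^*$ denote the smallest edge weight such that the subgraph $G_{\le w^*}=(V,\{e\in E:w(e)\le w^*\})$ has the same number of connected components as $G$; then any spanning forest of $G_{\le w^*}$ is a BSF, so once $w^*$ is identified a single application of Theorem~\ref{thm:intro_spanning_forest} to $G_{\le w^*}$ returns a BSF. The whole problem therefore reduces to locating $w^*$ among the $\poly(n)$ distinct edge weights.

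I would locate $w^*$ by a phase-wise parallel search. Phase $t$ maintains a candidate interval $[a_t,b_t]\ni w^*$ together with a contracted graph $H_t$ obtained from $G$ by collapsing each component already known to be internally connected using only edges of weight strictly below $a_t$. In each phase I pick a batch of $k_t$ candidate thresholds inside $[a_t,b_t]$ and, in parallel, invoke the SF/connectivity primitive from Theorem~\ref{thm:intro_spanning_forest} on each of the $k_t$ thresholded graphs $H_t^{(w)}$. The vector of ``does $H_t^{(w)}$ have the same number of components as $H_t$?'' answers pins $w^*$ into a sub-interval of length roughly $(b_t-a_t)/k_t$, after which I contract the components guaranteed to lie below the new left endpoint and proceed to the next phase.

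I would choose $k_t=\Theta(N^{1+\gamma}/|H_t|)$ so that the $k_t$ parallel SF calls fit into the total memory budget $N^{1+\gamma}$. Because each SF subroutine contracts along discovered components exactly in the style of Theorem~\ref{thm:intro_spanning_forest}, $|H_t|$ shrinks doubly-exponentially across phases, $k_t$ grows doubly-exponentially, and hence the weight-search range $b_t-a_t$ decays doubly-exponentially as well. The ``double exponential speed problem size reduction'' argument then yields $O(\log(\log n/(1+\gamma\log n)))$ phases to pinpoint $w^*$ out of the initial $\poly(n)$ candidates. Each phase costs one SF invocation, namely $\min(\log D_{MSF}\cdot \log(\log n/(1+\gamma\log n)),\log n)$ rounds; the diameter in the SF bound is $D_{MSF}$ rather than $D$ because on every thresholded subgraph $H_t^{(w)}$ of interest, the relevant spanning-tree diameter is bounded by the diameter of a minimum spanning forest of $G$.

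The main obstacle is the memory and time accounting that shows the joint (graph-size, weight-range) potential decays doubly-exponentially; this is where the paper's reduction template must be carefully instantiated, in particular verifying that contracting below $a_t$ does not increase the effective diameter beyond $D_{MSF}$. A secondary obstacle is failure-probability bookkeeping: each SF call fails with small constant probability and I make $O(\log(\log n/(1+\gamma\log n)))\cdot T_{SF}$ of them overall, so I would boost each subroutine's success probability by a logarithmic factor (absorbed into the constants) and then take a union bound to recover the $0.98$ overall success target; if any subroutine fails, the algorithm returns \textrm{FAIL}.
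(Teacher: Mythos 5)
Your proposal takes essentially the same route as the paper: sort edges by weight, split the candidate range into $\Theta(m^{\gamma/2})$ blocks, contract all edges below each block boundary, run the connectivity primitive on the contracted graphs in parallel to find the block containing the bottleneck edge, and recurse, with Lemma~\ref{lem:mst_depth_dimater} supplying exactly the diameter bound you flagged and Remark~\ref{rem:double_exponential_progress} giving the $O(\log(1/\gamma'))$ phase count. The only cosmetic differences are that the paper partitions by edge rank after a global sort (so blocks automatically balance in size) and treats each phase's parallel connectivity calls as one invocation on the disjoint union of the contracted graphs, so the failure-probability union bound is over only $O(\log(1/\gamma'))$ calls and no per-subroutine boosting is needed.
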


\mypara{Conditional hardness for directed reachability.} We also
consider the reachability question in the directed graphs, for which
we show similar to the above results are unlikely.  In particular, we
show that if there is a fully scalable multi-query directed
reachability $(0,\delta)-\MPC$ algorithm with $n^{o(1)}$ parallel time
and polynomial local running time, then we can compute the Boolean
Matrix Multiplication in $n^{2+\varepsilon+o(1)}$ time for arbitrarily
small constant $\varepsilon>0$. We note that the equivalent problem
for {\em undirected graphs} can be solved in $O(\log D\log\log n)$
parallel time via Theorem~\ref{thm:introConn}.

\begin{theorem}[Directed Reachability vs. Boolean Matrix
    Multiplication, restatement of
    Theorem~\ref{thm:formal_statement_of_digraph_reachability}]
  Consider a directed graph $G=(V,E)$.  If there is a polynomial local
  running time, fully scalable $(\gamma,\delta)-\MPC$ algorithm that
  can answer $|V|+|E|$ pairs of reachability queries simultaneously
  for $G$ in $O(|V|^{\alpha})$ parallel time, then there is a
  sequential algorithm which can compute the multiplication of two
  $n\times n$ boolean matrices in $O(n^2\cdot
  n^{2\gamma+\alpha+\varepsilon})$ time, where $\varepsilon>0$ is a
  constant which can be arbitrarily small.
\end{theorem}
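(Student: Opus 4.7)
The plan is the standard textbook reduction: encode Boolean matrix multiplication as directed reachability on a three-layer graph, run the hypothetical $\MPC$ algorithm on it, and then simulate the $\MPC$ computation sequentially, carefully choosing the scalability parameter $\delta$ so that per-machine local work contributes only an $n^{\varepsilon}$ overhead.

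First I would set up the reduction. Given $n\times n$ Boolean matrices $A,B$, construct a directed graph $G=(V,E)$ with three layers $L_1,L_2,L_3$ of $n$ vertices each. Put an arc $(u_i,v_j)$ whenever $A_{ij}=1$ and an arc $(v_j,w_k)$ whenever $B_{jk}=1$. Then $(AB)_{ik}=1$ iff there is a directed path from $u_i\in L_1$ to $w_k\in L_3$, and every such path has length $2$, so this is precisely a reachability question. The graph has $|V|=3n$ vertices and $|E|\le 2n^2$ arcs, so $N:=|V|+|E|=\Theta(n^2)$. The $n^2$ query pairs $(u_i,w_k)$ number at most $|V|+|E|$, so they fit the hypothesized batch-query interface, and the total input (graph plus query list) still has size $\Theta(N)=\Theta(n^2)$.

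Next I would invoke the assumed $(\gamma,\delta)-\MPC$ reachability algorithm on this instance, and then simulate it sequentially. An MPC computation with $p$ machines of local memory $s=\Theta(N^\delta)$, total space $p\cdot s=O(N^{1+\gamma})$, polynomial local computation, and $T=O(|V|^{\alpha})=O(n^{\alpha})$ rounds can be simulated on a single RAM in time
\[
O\bigl(T\cdot p\cdot s^{c}\bigr)\;=\;O\bigl(T\cdot N^{1+\gamma-\delta}\cdot N^{c\delta}\bigr)\;=\;O\bigl(T\cdot N^{1+\gamma+(c-1)\delta}\bigr),
\]
where $c$ is the polynomial exponent of the local work per machine. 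Since the algorithm is fully scalable, I am free to pick $\delta$ as small as I like; choosing $\delta$ so that $2(c-1)\delta\le \varepsilon$ and substituting $N=\Theta(n^2)$ and $T=O(n^{\alpha})$ yields a sequential running time
\[
O\bigl(n^{\alpha}\cdot n^{2+2\gamma+\varepsilon}\bigr)\;=\;O\bigl(n^{2}\cdot n^{2\gamma+\alpha+\varepsilon}\bigr),
\]
which is exactly the bound claimed for Boolean matrix multiplication.

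The main obstacle, and the only place where care is needed, is the bookkeeping of the simulation cost: we must make sure that (i) the factor $N^{c\delta}$ coming from per-round local computation can be absorbed into $n^{\varepsilon}$ by shrinking $\delta$ (which is why the assumption of \emph{fully scalable} and \emph{polynomial local running time} are both essential), and (ii) the batch of $n^2$ reachability queries is genuinely answered by a single invocation of the algorithm, rather than by $n^2$ separate runs, which would cost an unaffordable extra $n^2$ factor. Both are addressed respectively by the flexibility of $\delta$ and by the hypothesis that the algorithm answers $|V|+|E|$ queries simultaneously. Everything else is an accounting exercise.
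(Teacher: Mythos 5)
Your reduction and simulation argument are the same as the paper's: build the three-layer graph $L_1\to L_2\to L_3$ encoding $A$ and $B$, run the hypothesized $\MPC$ reachability algorithm on it, and simulate sequentially, shrinking $\delta$ so the polynomial per-machine local work contributes only $n^{\varepsilon}$. So the approach matches.

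There is, however, one step in your write-up that is wrong as stated. From $|E|\le 2n^2$ you conclude $N=|V|+|E|=\Theta(n^2)$, which is only an upper bound; if $A$ and $B$ are sparse then $|E|$, and hence the number of queries $|V|+|E|$ that a single invocation can answer, may be much smaller than $n^2$. In that case the claim ``the $n^2$ query pairs number at most $|V|+|E|$, so they fit the batch-query interface'' fails, and a single invocation does not cover all of the pairs $(u_i,w_k)$. The paper's proof avoids this by parameterizing over the actual input size $m=\Theta(|V|+|E|)$ and issuing $n^2/m$ separate batch invocations of size $m$ each: one invocation costs $O(m\cdot n^{2\gamma+\varepsilon+\alpha})$ after simulation, so the total is $O\left((n^2/m)\cdot m\cdot n^{2\gamma+\varepsilon+\alpha}\right)=O(n^2\cdot n^{2\gamma+\alpha+\varepsilon})$, which recovers the same final bound with no density assumption. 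You could also fix your version by padding the instance with $\Theta(n^2)$ dummy edges on extra vertices (which changes no reachability answer and keeps $|V|=\Theta(n)$), forcing $N=\Theta(n^2)$ so that one invocation suffices; either repair is fine, but as written the single-invocation step does not go through.
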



Finally, in Section \ref{sec:rastogi} we show hard instances for the algorithm~\cite{rmcs13}.




\subsection{Our Techniques}

In this section, we give an overview of the various techniques that we
use in our algorithms.  More details, as well as some of the low level
details of the implementation in the MPC model, are defered to later
sections.

Before getting into our techniques, we mention two standard tools to
help us build our $\MPC$ subroutines.  The first one is sorting: while
in the PRAM model it takes $\sim\log N$ parallel time, sorting takes
only constant parallel time in the MPC model~\cite{g99,gsz11}.  The
second tool is indexing/predecessor search~\cite{gsz11}, which also
has a constant parallel time in $\MPC$.  Furthermore, these two tools are
fully scalable, and hence all the subroutines built on these two tools
are also fully scalable.  See Section~\ref{sec:MPCmodel} for how to
use these two tools to implement the $\MPC$ operations needed for our
algorithms.


\paragraph{Graph Connectivity:}
A natural approach to the graph connectivity problem is via the
classic primitive of contracting to leaders: select a number of leader
verteces, and contract every vertex (or most vertices) to a leader
from its connected component (this is usually implemented by labeling
the vertex by the corresponding leader). Indeed, many previous works
(see e.g.~\cite{ksv10,rmcs13,klmrv14}) are based on this
approach. There are two general questions to address in this approach:
1) how to choose leader vertices, and 2) how to label each vertex by
its leader.  For example, the algorithm in~\cite{ksv10} randomly
chooses half of the vertices as leaders, and then contracts each
non-leader vertex to one of its neighbor leader vertex.  Thus, in each
round of their algorithm, the number of vertices drops by a constant
fraction.  At the same time, half of the vertices are leaders, and
hence their algorithm still needs at least $\Omega(\log n)$ rounds to
contract all the vertices to one leader.  Note that a constant
fraction of leaders is needed to ensure that there is a constant
fraction of non-leader vertices who are adjacent to at least one
leader vertex and hence are contracted.  This leader selection method
appears optimal for some graphs, e.g. path graphs.

To improve the runtime to $\ll \log n$, one would have to choose a
much smaller fraction of the vertices to be leaders.  Indeed, for a
graph where every vertex has a large degree, say at least $d \gg \log
n$, we can choose fewer leaders: namely, we can choose each vertex to
be a leader with probability $p=\Theta((\log n)/d)$. Then the number
of leaders will be about $\wt{O}(n/d)$, while each non-leader vertex
has at least one leader neighbor with high probability.  After
contracting non-leader vertices to leader vertices, the number of
remaining vertices is only a $1/d$ fraction of original number of
vertices.

By the above discussion, the goal would now be to modify our input
graph $G$ so that every vertex has a uniformly large degree, without
affecting the connectivity of the graph. An obvious such modification
is to add edges between pairs of vertices that are already in the same
connected component.  In particular, if a vertex $v$ learns of a large
number of vertices which are in the same connected component as $v$,
then we can add edges between $v$ and those vertices to increase the
degree of $v$.  A na\"ive way to implement the latter is via
broadcasting: each vertex $v$ first initializes a set $S_v$ which
contains all the neighbors of $v$, and then, in each {\em round},
every vertex $v$ updates the set $S_v$ by adding the union of the sets
$S_u$ over all neighbors $u$ of $v$ (old and new).  This approach takes log-diameter
number of rounds, and each vertex learns all vertices which are in the
same connected component at the end of the procedure.  However, in a
single round, the total communication needed may be as huge as
$\Omega(n^3)$ since each of $n$ vertices may have $\Omega(n)$ neighbors,
each with a set of size $\Omega(n)$.

Since our goal of each vertex $v$ is to learn only $d$ vertices in the
same component (not necessarily the entire component), we can
therefore implement a ``truncated'' version of the above broadcasting
procedure:
\begin{enumerate}
\item If $S_v$ already had size $d$, then we do not need any further operation for $S_v$.
\item If $u$ is in $S_v$, and $S_u$ already has $d$ vertices, then we
  can just put all the elements from $S_u$ into $S_v$ and thus $S_v$
  becomes of size $d$.
\item If $|S_v|<d$, and for every $u\in S_v$, the set $S_u$ is also
  smaller than $d$, then we can implement one step of the broadcasting
  --- add the union of $S_u$'s, for all neighbors $u\in S_v$, to
  $S_v$.
\end{enumerate}
In the above procedure, if the number of vertices in $S_v$ is smaller
than $d$ after the $i^{\text{th}}$ round, then we expect $S_v$ to contain
all the vertices whose distance to $v$ is at most $2^i$.  Thus,
the above procedure also takes at most log-diameter rounds.
Furthermore, the total communication needed is at most $O(n\cdot
d^2).$

Our full graph connectivity algorithm implements the above ``truncated
broadcasting'' procedure iteratively, for values $d$ that follow a
certain ``schedule'', depending on the available space. At the
beginning of the algorithm, we have an $n$ vertex graph $G$ with
diameter $D$, and a total of $\Omega(m)$ space.  The algorithm
proceeds in phases, where each phase takes $O(\log D)$ rounds of
communication.  In the first phase, the starting number of vertices is
$n_1=n.$ We implement a truncated broadcasting procedure where the
target degree $d$ is $d_1=(m/n_1)^{1/2}$, using $O(\log D)$ rounds and
$O(m)$ total space. Then we can randomly select $\wt{O}(n_1/d_1)$
leaders, and contract all the non-leader vertices to leader vertices.
At the end of the first phase, the total number of remaining vertices
is at most $n_2=\wt{O}(n_1/d_1)=\wt{O}(n_1^{1.5}/m^{0.5}).$ In
general, suppose, at the beginning of the $i^{\text{th}}$ phase, the
number of remaining vertices is $n_i.$ Then we use the truncated
broadcasting procedure for value $d$ set to $d_i=(m/n_i)^{1/2},$ 
thus making each vertex have degree at least
$d_i=(m/n_i)^{1/2}$ in $O(\log D)$ number of communication rounds and
$O(m)$ total space.  Then we choose $\wt{O}(n_i/d_i)$ leaders, and,
after contracting non-leaders, the number $n_{i+1}$ of remaining
vertices is at most $\wt{O}(n_i^{1.5}/m^{0.5}).$ Let us look at the
progress of the value $d_i$. We have that
$d_{i+1}=\wt{\Omega}((m/n_{i+1})^{1/2})=\wt{\Omega}((m^{1.5}/n_{i}^{1.5})^{1/2})=\wt{\Omega}(d_i^{1.5}).$
Thus, we are making double exponential progress on $d_i$, which
implies that the total number of phases needed is at most
$O(\log\log_{m/n} n)$, and the total parallel time is thus $O(\log
D\cdot \log\log_{m/n} n).$

This technique of double-exponential progress is more general and
extends to other problems beyond connectivity.  In particular, for a
problem, suppose its size is characterized by a parameter $n$ (not
necessarily the input size---e.g. in connectivity problem, $n$ is the
number of vertices).  When $n$ is a constant, the problem can be
solved in $O(1)$ parallel time.  If there is a procedure that uses
total space $\Theta(m)$ to reduce the problem size to at most $n/k$
for $k=(m/n)^{c},c=\Omega(1)$, then we can repeat the procedure
$O(\log\log_{m/n} n)$ times to solve the overall problem. In
particular, after repeating the procedure $i$ times, the problem size
is $n_i\leq n_{i-1}/(m/n_{i-1})^c\leq n\cdot
(n/m)^{(1+c)^i-1}.$ We call this technique  {\em double-exponential
  speed problem size reduction}.

\begin{remark}\label{rem:double_exponential_progress}
For any problem characterized by a size parameter $n$, if we can use
parallel time $T$ and total space $\Theta(m)$ to reduce the problem
size such that the reduced problem size is $n/k$ for
$k=(m/n)^{\Omega(1)},$ then we can solve the problem in $O(m)$ total
space and $O(T\cdot \log\log_{m/n} n)$ parallel time.
\end{remark}

\paragraph{Spanning Forest and Diameter Estimator:} 
Extending a connectivity algorithm to a spanning forest
algorithm is usually straightforward.  For example, in ~\cite{ksv10}, they
only contract a non-leader vertex to an adjacent leader vertex, thus
their algorithm can also give a spanning forest, using the contracted
edges.  Here however, extending our connectivity algorithm to a
spanning forest algorithm requires several new ideas.  In our
connectivity algorithm, because of the added edges, we only ensure
that when a vertex $u$ is contracted to a vertex $v$, $u$ and $v$ must
be in the same connected component; but $u$ and $v$ may not be
adjacent in the original graph.  Thus, we need to record more
information to help us build a spanning forest.

We can represent a forest as a collection of parent pointers $\p(v)$, one for each vertex $v \in V$.
If $v$ is a root in the forest, then we let $\p(v)=v.$
We use $\dep_{\p}(v)$ to denote the depth of $v$ in the forest, i.e. $\dep_{\p}(v)$ is the distance from $v$ to its root.
Let $\dist_G(u,v)$ denote the distance between two vertices $u$ and $v$ in a graph $G$.

Our connectivity algorithm uses the ``neighbor increment'' procedure
described above.  We observed that if the set $S_v$ has fewer than $d$
vertices after the $i^{\text{th}}$ round, then $S_v$ should contain
all the vertices with distance at most $2^i$ to $v$.  This motivates
us to maintain a shortest path tree for $S_v,$ with root $v$.  In the
$i^{\text{th}}$ round, if we need to update $S_v$ to be $\bigcup_{u\in
  S_v} S_u,$ then we can update the shortest path tree of $S_v$ in the
following way:
\begin{enumerate}
\item For each $x\in S_u$ for some $u\in S_v,$ we can create a tuple
  $(x,u).$
\item Then, for each $x\in \left(\bigcup_{u\in S_v} S_u\right)\setminus S_v,$ we can sort all the tuples $(x,u_1),(x,u_2),\cdots,(x,u_k)$ such that $u_1$ minimizes $\min_{u\in S_v}\dist_G(v,u)+\dist_G(u,x).$
    Since $u$ is in $S_v,$ $x$ is in $S_u,$ it is easy to get the value of $\dist_G(v,u),\dist_G(u,x)$ by the information of shortest path tree for $S_v$ and $S_u.$
    Then we set the new parent of $x$ in the shortest path tree for $S_v$ to be the parent of $x$ in the shortest path tree for $S_{u_1}.$
\end{enumerate}
Since $S_v$ before the update contains all the vertices which have
distance to $v$ at most $2^{i-1},$ the union of the shortest path from
$x$ to $u_1$ and the shortest path from $u_1$ to $v$ must be the
shortest path from $x$ to $v$.  Then by induction, we can show that
the parent of $x$ in the shortest path tree for $S_{u_1}$ is also the
parent of $x$ in the shortest path tree for updated $S_v.$ Thus, this
modified ``neighbor increment'' procedure can find $n$ local shortest
path trees where there is a tree with root $v$ for each vertex $v$.
Furthermore, the procedure still takes $O(\log D)$ rounds.  And we can
still use $O(nd^2)$ total space to make each shortest path tree have
size at least $d$.  Next, we show how to use these $n$ local shortest
path trees to construct a forest with the roots in
the forest being the leaders.

As discussed in the connectivity algorithm, if every local shortest
path tree has size at least $d$, we can choose each vertex as a leader
with probability $p=\Theta((\log n)/d)$ and then every tree will
contain at least one leader with high probability.  Let $L$ be the set
of sampled leaders, and let $\dist_G(v,L)$ be defined as $\min_{u\in
  L}\dist_G(v,u).$ Let $v$ be a non-leader vertex, i.e. $v\in
V\setminus L.$ According to the shortest path tree for $S_v$
\footnote{ The construction of $S_v$ for spanning forest algorithm is
  slightly different from that described in the connectivity
  algorithm.  $S_v$ in spanning forest algorithm has a stronger
  property: $\forall u\in V\setminus S_v,$ $\dist_G(u,v)$ must be at
  least $\dist_G(u',v)$ for any $u'\in S_v.$ }, since $L\cap
S_v\not=\emptyset,$ we can find a child $u$ of the root $v$ such that
$\dist_G(v,L)>\dist_G(u,L)$; in this case we set $\p(v)=u.$ For vertex
$v\in L,$ we can set $\p(v)=v.$ We can see now that $\p$ denotes a
rooted forest where the roots are sampled leaders.  Furthermore, since
$\forall v\not\in L,$ $(v,\p(v))$ is from the shortest path tree for
$S_v,$ we know that $v$ and $\p(v)$ are adjacent in the original graph
$G$.  After doing the above for all nodes $v$, the forest denoted by
the resulting vector $\p$ must be a subgraph of the spanning forest
of $G$.  We then apply the standard doubling algorithm to contract all
the vertices to their leaders (roots), in $O(\log D)$ rounds.
Therefore, the problem is reduced to finding a spanning forest in the
contracted graph.  The number of vertices remaining in the contracted
graph is at most $\wt{O}(n/d),$ where $d=(m/n)^{\Theta(1)}.$ By
Remark~\ref{rem:double_exponential_progress}, we can output a spanning
forest in $O(\log D\cdot \log \log_{m/n} n)$ parallel time.

Although the above algorithm can output the edges of a spanning
forest, it cannot output a rooted spanning forest.  To output a rooted
spanning forest, we follow a top-down construction.  Suppose now we
have a rooted spanning forest of the contracted graph.  Since we have
all the information of how vertices were contracted, we know the
contraction trees in the original graph.  To merge these contraction
trees into the rooted spanning forest of the contracted graph, we only
need to change the root of each contraction tree to a proper vertex in
that tree.  This changing root operation can be implemented by the
doubling algorithm via a divide-and-conquer approach.

 Since the spanning forest algorithm needs $O(\log\log_{m/n} n)$ phases to contract all vertices to a single vertex, the total parallel time to compute a rooted spanning forest is $O(\log D\cdot \log\log_{m/n} n).$
 Furthermore, the depth of the rooted spanning forest will be at most $O(D^{O(\log\log_{m/n} n)}).$
 Thus, we can use the doubling algorithm to calculate the depth of the tree, and output this depth as an estimator of the diameter of the input graph.

 \paragraph{Depth-First-Search Sequence:}
Here, when the input graph $G$ is a tree, our goal is to output a DFS
sequence for this tree.  Once we have this sequence, it is easy to
output a rooted tree.  Thus, computing a DFS sequence is at least as
hard as computing a rooted tree, and all the previous algorithms need
$\Omega(\log n)$ parallel time to do so.

First of all, we use our spanning forest algorithm to compute a rooted
tree, reducing the problem to computing a DFS sequence for a rooted tree.
The idea is motivated by TeraSort~\cite{o08}.
If the size of the tree is small enough such that it can be handled by a single machine, then we can just use a single machine to generate its DFS sequence.
Otherwise, our algorithm can be roughly described as follows. (Recall that $\delta$ is the parameter such that each machine has $\Theta(n^\delta)$ local memory.)
\begin{enumerate}
\small
\item Sample $n^{\delta/2}$ leaves $l_1,l_2,\cdots,l_s.$
\item Determine the order of sampled leaves in the DFS sequence.
\item Compute the DFS sequence $\wt{A}$ of the tree which only
  consists of sampled leaves and their ancestors.
\item Compute the DFS sequence $A_v$ of every root-$v$ subtree which does not contain any sampled leaf.
\item Merge $\wt{A}$ and all the $A_v.$
\end{enumerate}

The first and second steps go as follows. Since we only sample
$n^{\delta/2}$ leaves, we can send them to a single machine.  We
generate queries for every pair of sampled leaves where each query
$(l_i,l_j)$ queries the lowest common ancestor of $(l_i,l_j).$ We have
$n^{\delta}$ such queries in total.  Since the input tree is rooted,
we can use a doubling algorithm to preprocess a data structure in
$O(\log D)$ parallel time and answer all the queries simultaneously in
$O(\log D)$ parallel time.  Thus, we know the lowest common ancestor
of any pair of sampled leaves, and we can store this all on a single
machine.  Based on the information of lowest common ancestors of each
pair of sampled leaves, we are able to determine the order of the
leaves.

For the third step,
suppose the sampled leaves have order $l_1,l_2,\cdots,l_s.$
Let $v$ be the root of the tree.
Then the DFS sequence $\wt{A}$ should be: the path from $v$ to $l_1,$ the path from $l_1$ to the lowest common ancestor of $(l_1,l_2),$ the path from the lowest common ancestor of $(l_1,l_2)$ to $l_2,$ the path from $l_2$ to the lowest common ancestor of $(l_2,l_3),$ ..., the path from $l_s$ to $v$.
We can find these paths simultaneously by a doubling algorithm together with a divide-and-conquer algorithm in $O(\log D)$ parallel time.

In the fourth step, we apply the procedure recursively.  Suppose the
total number of leaves in the tree is $q\leq n.$ Since we randomly
sampled $n^{\delta/2}$ number of leaves, with high probability, each
subtree which does not contain a sampled leaf will have at most
$O(q/n^{\delta/2})$ number of leaves.  Thus, the depth of the
recursion will be at most a constant, $O(1/\delta)$.

\paragraph{Minimum Spanning Forest and Bottleneck Spanning Forest.}
Recall that the input is a graph $G=(V,E=(e_1,e_2,\cdots,e_m))$ together with a weight function $w$ on $E$.
Without loss of generality, we only consider the case when all the weights of edges are different, i.e. $w(e_1)<w(e_2)<\cdots<w(e_m).$
Since the weights of edges are different, the minimum spanning forest of the graph is unique.
By Kruskal's algorithm, the diameter of the graph induced by the first $i$ edges for any $i\in[m]$ is at most the depth of the minimum spanning forest.
Now, let us use $D$ to denote the depth of the minimum spanning forest.

We first discuss the minimum spanning forest algorithm. A crucial
observation of Kruskal's algorithm is: if we want to determine which
edges in $e_i,e_{i+1},\cdots,e_j$ are in the minimum spanning forest,
we can always contract the first $i-1$ edges to obtain a graph $G'$,  run a minimum
spanning forest algorithm on the contracted graph $G'$, and observe whether an edge is included in the spanning forest of $G'$.
Thus, if the total space is
$\Theta(m^{1+\gamma}),$ we can have $m^{\gamma}$ copies of the graph, where
the $i^{\text{th}}$ copy contracts the first $(i-1)\cdot m^{1-\gamma}$
edges.  Thus, we are able to divide the edges into $m^{\gamma}$ groups
where each group has $m^{1-\gamma}$ number of edges.  We only need to
solve the minimum spanning forest problem for each group.  Then in the
second phase, we can divide the edges into $m^{2\gamma}$ groups where
each group has $m^{1-2\gamma}$ number of edges.  Thus, the total
number of phases needed is at most $O(1/\gamma).$ In each phase, we
just need to run our connectivity algorithm to contract the graph.

For the approximate minimum spanning forest algorithm, we use a similar idea.
If we want a $(1+\varepsilon)$ approximation, then we round each
weight to the closest value $(1+\varepsilon)^i$ for some integer $i$.
After rounding, there are only $O(1/\varepsilon\cdot \log n)$ edge groups.
Since our total space is at least $\Omega(m\log(n)/\varepsilon),$ we can make $O(1/\varepsilon\cdot \log n)$ copies of the graph.
The $i^{\text{th}}$ copy of the graph contracts all the edges in group $1,2,\cdots,i-1.$
Then, we only need to run our spanning forest algorithm on each copy to determine which edges should be chosen in each group.

Another application of our {\em double exponential speed problem size reduction} technique is bottleneck spanning forest.
For the bottleneck spanning forest, 
suppose we have $\Theta(km)$ total space.
We can have $k$ copies of the graph where the $i^{\text{th}}$ copy contracts the first $(i-1)\cdot m/k$ number of edges.
 We can determine the group of $O(m/k)$ edges which contains the bottleneck edge.
Thus, we reduce the problem to $O(m/k).$
According to Remark~\ref{rem:double_exponential_progress}, the number of phases is at most $O(\log\log_k m),$ and each phase needs $T$ parallel time, where $T$ is the parallel time for spanning forest.

\paragraph{Directed Reachability vs. Boolean Matrix Multiplication}
If there is a fully scalable multi-query directed reachability $\MPC$ algorithm with almost linear total space, we can simulate the algorithm in sequential model.
Thus, it will imply a good sequential multi-query directed reachability algorithm which implies a good sequential Boolean Matrix Multiplication algorithm.

\subsection{Roadmap}

The rest of the paper contains the technical details of our
algorithms.
In Section~\ref{sec:simple_connectivity}, we described a simplified connectivity algorithm.
In Section~\ref{sec:preli}, we describeed the notations.
In Sections \ref{sec:gc}, \ref{sec:sf}, and \ref{sec:dfs}, we give the
details of our main algorithms for connectivity, spanning forest and
depth first search sequence.  In these sections, we focus on the design of
the algorithms and the analysis of the number of rounds.  In Section
\ref{sec:MPCmodel}, we describe the MPC model in detail and discuss some known primitives in that model.  In
Section \ref{sec:implement}, we discuss how to implement the details of our algorithms in the MPC model to achieve the
bounds claimed in the previous sections.
In Section \ref{sec:mst_bst}, we show how to apply our connectivity and spanning forest algorithm in minimum spanning forest and bottleneck spanning forest problems.
In Section \ref{sec:rastogi}, we show hard instances for the algorithm~\cite{rmcs13}.
In Section \ref{sec:alternative_leader}, we show an alternative approach for random leader selection.


\vspace{-2mm}
\section{A Simplified Batch Algorithm for Connectivity}\label{sec:simple_connectivity}
In this section, we show a simplified version of our connectivity algorithm.

Firstly, let us describe the simplified version of truncated broadcasting procedure in the following.
Since $G'$ is obtained by adding edges between the vertices in the same component of $G$.
 $G'$ will preserve the connectivity of $G$.
The parallel time needed is at most $O(\log D)$ where $D$ is the diameter of $G$.
The procedure takes at most $O(nd^2+m)$ total space.
\ \\
\vspace{-2mm}

\noindent\fbox{
\begin{minipage}{\linewidth}
Truncated Broadcasting for Neighbor Increment:
\begin{itemize}
\item \textbf{Input}:
    \begin{itemize}

    \item A graph $G=(V,E)$ with $n=|V|$ vertices and $m=|E|$ number of edges.

    \item A parameter $d.$

    \end{itemize}

\item \textbf{Output}:
    \begin{itemize}

    \item A graph $G'=(V,E')$ such that $\forall v\in V,$ $|\Gamma(v)|\geq d.$ \Comment{$\Gamma(v)$ denotes the neighbors of $v$.}

    \end{itemize}

\item \textbf{While} $\exists x\in V$ such that $|\Gamma(x)|<d:$

    \begin{itemize}

    \item For each $v\in V$ with $|\Gamma(v)|< d:$

        \begin{itemize}

        \item If $\exists u\in \Gamma(v)$ which has $|\Gamma(u)|\geq d,$ then $\Gamma(v)\leftarrow \Gamma(v)\cup\Gamma(u).$

        \item Otherwise, $\Gamma(v)\leftarrow \Gamma(v)\cup\bigcup_{u\in \Gamma(v)} \Gamma(u). $

        \end{itemize}

    \end{itemize}

\end{itemize}
\end{minipage}

}
\ \\

 We can apply the above procedure to make each vertex have a large degree.
Next, let us briefly describe how to choose the leaders and implement the contraction operation for the graph where each vertex has a large degree.
The following procedure just needs $O(n+m)$ total space and $O(1)$ parallel time.
 If every vertex has degree at least $d$, then in the following procedure we can reduce the number of vertices to $\wt{O}(n/d)$ by contracting all the vertices to $\wt{O}(n/d)$ number of leaders.
\ \\

\noindent\fbox{
\begin{minipage}{\linewidth}
Random Leader Contraction:
\begin{itemize}
\item \textbf{Input}:
    \begin{itemize}

    \item A graph $G=(V,E)$ with $n=|V|$ vertices where each vertex has degree at least $d$.

    \end{itemize}

\item \textbf{Output}:
    \begin{itemize}

    \item A graph $G'=(V',E')$ with $\tilde{O}(n/d)$ vertices.

    \item A mapping $\p:V\rightarrow V',$ such that $\p(v)$ is the vertex that $v$ contracts to.

    \end{itemize}

\item \textbf{Leader Selection:}
    \begin{itemize}

    \item Let $L$ denote the set of leaders.

    \item For each $v\in V,$ with probability at least $\tilde{\Omega}(1/d),$ choose $v$ as a leader, i.e. $L\leftarrow L\cup\{v\}.$

    \end{itemize}

\item \textbf{Contraction:}
    \begin{itemize}

    \item For each $v\in L,$ let $\p(v)=v,$ and put $v$ into $V'.$

    \item For each $v\in V\setminus L,$ choose $u\in \Gamma(v)\cap L,$ and set $\p(u)=v.$

    \item For each $(u,v)\in E,$ if $\p(u)\not=\p(v),$ put the edge $(\p(u),\p(v))$ into $E'.$
    \end{itemize}

\end{itemize}
\end{minipage}

}
\ \\

Finally, we describe the simplified version of our connectivity algorithm in the following.
\ \\

\noindent\fbox{
\begin{minipage}{\linewidth}
Connectivity:
\begin{itemize}
\item \textbf{Input}:
    \begin{itemize}

    \item A graph $G=(V,E)$ with $n=|V|$ vertices and $m=|E|$ edges.

    \item Total space $N$ which is $\Theta(m)$.

    \end{itemize}

\item \textbf{Output}:
    \begin{itemize}

    \item A mapping $\col:V\rightarrow V$ satisfies $\forall u,v\in V,\col(u)=\col(v)$ if and only if $u$ and $v$ are connected.

    \end{itemize}

\item \textbf{Initialization:}
    \begin{itemize}

    \item Let $G_0\leftarrow G,n_0\leftarrow n.$

    \end{itemize}

\item \textbf{In phase $i$:}
    \begin{itemize}

    \item Compute $G'_{i-1}:$ Increase the degree of every vertex in $G_{i-1}$ to at least $d_i=\Theta\left((N/n_{i-1})^{1/2}\right).$

    \item Compute $G_i:$ Select $n_i=\wt{O}(n_{i-1}/d_i)$ leaders in $G'_{i-1}$ and contract all the vertices to the leaders.

    \item If $v$ is contracted to $u$, record $\p(v)=u.$

    \item If $G_i$ does not have any edges, then for every vertex $v$ in $G_i,$ set $\p(v)=v,$ and exit the loop.
    \end{itemize}

\item \textbf{Finding the root leader:}
    \begin{itemize}
        \item For each $v\in V,$ find the root of $v$ in $\p,$ i.e. find $u=\p(\p(\cdots\p(v)))$ such that $\p(u)=u.$
        \item Set $\col(v)=u.$
    \end{itemize}

\end{itemize}
\end{minipage}

}
\ \\

After phase $i$, the number of vertices survived is at most $\wt{O}(n_{i-1}/(N/n_{i-1})^{1/2}).$
By Remark~\ref{rem:double_exponential_progress}, there will be at most $O(\log\log_{N/n} n)$ phases.
For phase $i$, we need $O(\log D)$ parallel time to increase the neighbors of every vertex in $G_{i-1}.$
The total parallel time is thus $O(\log D\cdot \log\log_{N/n} n).$
The total space used in phase $i$ is at most $O(m+(N/n_{i-1})^{1/2}\cdot n_{i-1})=O(N).$

\newpage

\appendix

\section{Notations}\label{sec:preli}
$[n]$ denotes the set $\{1,2,\cdots,n\}.$ 
Let $G$ be an undirected graph with vertex set $V$ and edge set $E$.
For $v\in V,$ $\Gamma_G(v)$ denotes the set of neighbors of $v$ in $G$, i.e. $\Gamma_G(v)=\{u\in V\mid (v,u)\in E\}.$
For any $u,v\in V,$ $\dist_G(u,v)$ denotes the distance between $u,v$ in graph $G$.
If $u,v$ are not in the same connected component, then $\dist_G(u,v)=\infty.$
If $u,v$ are in the same connected component, then $\dist_G(u,v)<\infty.$
For $v\in V,$  $\{u\in V\mid\dist_G(u,v)<\infty\}$ is the set of all the vertices in the same connected component as $v$.
The diameter $\diam(G)$ of $G$ is the largest diameter of its components, i.e. $\diam(G)=\max_{u,v\in V:\dist_G(u,v)<\infty}\dist_G(u,v).$

\section{Graph Connectivity}
\label{sec:gc}

    \subsection{Neighbor Increment Operation}\label{sec:neighbor_incr}
In this section, we describe a procedure which can increase the number of neighbors of every vertex and preserve the connectivity at the same time. The input of the procedure is an undirected graph $G=(V,E)$ and a parameter $m$ which is larger than $|V|.$ The output is a graph $G'=(V,E')$ such that for each vertex $v$, either the connected component which contains $v$ is a clique or $v$ has at least $\lceil\left(m/|V|\right)^{1/2}\rceil-1$ neighbors. Furthermore, $|E'|\leq |E|+m.$ 
We use $\Gamma_G(v)$ to denote the neighbors of $v$ in graph $G$, i.e. $\Gamma_G(v)=\{u\in V\mid (u,v)\in E\}.$ Similarly, we let $\Gamma_{G'}(v)$ be the neighbors of $v$ in $G'$, i.e. $\Gamma_{G'}(v)=\{u\in V\mid (u,v)\in E'\}.$

\begin{lemma}\label{lem:properties_of_S}
Let $G=(V,E)$ be an undirected graph, $m\in\mathbb{Z}_{\geq 0}$ which has $m\geq 4|V|.$ Let $n=|V|.$ Let $r$ be the value at the end of the procedure $\textsc{NeighborIncrement}(m,G)$ (Algorithm~\ref{alg:neighbor_increment}.) Then $\forall i\in\{0,1,\cdots,r\},v\in V,$ $S_v^{(i)}$ satisfies the following properties:
\begin{enumerate}
\item $v\in S_v^{(i)}.$ \label{itm:neighbor_S_pro1}
\item $\forall u\in S_v^{(i)},$ $\dist_G(u,v)<\infty$. \label{itm:neighbor_S_pro2}
\item $|S_v^{(i)}|< \lceil(m/n)^{1/2}\rceil\Rightarrow S_v^{(i)}=\{u\in V\mid \dist_G(u,v)\leq 2^i\}.$ \label{itm:neighbor_S_pro4}
\item $|S_v^{(i)}|\leq m/n.$ \label{itm:neighbor_S_pro5}
\end{enumerate}
\end{lemma}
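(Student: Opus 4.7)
The plan is to prove all four properties by simultaneous induction on $i\in\{0,1,\ldots,r\}$, exploiting the branching structure of the update rule in $\textsc{NeighborIncrement}$ (Algorithm~\ref{alg:neighbor_increment}). Write $d:=\lceil(m/n)^{1/2}\rceil$ for the size threshold; the hypothesis $m\ge 4n$ gives $d-1\le\sqrt{m/n}$ and hence $(d-1)^2\le m/n$, which will be the slack used to close property~\ref{itm:neighbor_S_pro5}. For the base case $i=0$, I would inspect the initialization and verify that $S_v^{(0)}$ equals the closed neighborhood $\{v\}\cup\Gamma_G(v)=\{u:\dist_G(u,v)\le 2^0\}$ (possibly after a truncation that forces $|S_v^{(0)}|\ge d$, making property~\ref{itm:neighbor_S_pro4} vacuous). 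Properties~\ref{itm:neighbor_S_pro1} and~\ref{itm:neighbor_S_pro2} are then immediate.

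For the inductive step, assume all four properties at index $i-1$. The update splits into cases according to whether $|S_v^{(i-1)}|\ge d$, whether some $u\in S_v^{(i-1)}$ has $|S_u^{(i-1)}|\ge d$, or neither (``doubling''). In every branch, every newly added element $w$ lies in some $S_u^{(i-1)}$ for $u\in S_v^{(i-1)}$, so the inductive versions of~\ref{itm:neighbor_S_pro1},\ref{itm:neighbor_S_pro2} together with transitivity of reachability in $G$ yield~\ref{itm:neighbor_S_pro1} and~\ref{itm:neighbor_S_pro2} at step $i$. The heart of the argument lies in properties~\ref{itm:neighbor_S_pro4} and~\ref{itm:neighbor_S_pro5}. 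For~\ref{itm:neighbor_S_pro4}, assume $|S_v^{(i)}|<d$; then $|S_v^{(i-1)}|\le|S_v^{(i)}|<d$, and the ``large neighbor'' branch cannot have fired, because it would have produced $S_v^{(i)}\supseteq S_u^{(i-1)}$ of size $\ge d$. So the algorithm executed the doubling branch with every $|S_u^{(i-1)}|<d$, and the inductive~\ref{itm:neighbor_S_pro4} gives $S_u^{(i-1)}=\{w:\dist_G(u,w)\le 2^{i-1}\}$ for each $u\in S_v^{(i-1)}=\{u:\dist_G(v,u)\le 2^{i-1}\}$. The BFS identity
\[
\{w:\dist_G(v,w)\le 2^i\}=\bigcup_{u:\dist_G(v,u)\le 2^{i-1}}\{w:\dist_G(u,w)\le 2^{i-1}\}
\]
then establishes $S_v^{(i)}=\{w:\dist_G(v,w)\le 2^i\}$.

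For property~\ref{itm:neighbor_S_pro5}, in the doubling branch the tight accounting $|S_v^{(i-1)}|\le d-1$ and $|S_u^{(i-1)}|\le d-1$ for all participating $u$ gives $|S_v^{(i)}|\le(d-1)^2\le m/n$; in the ``no update'' branch the bound is inherited directly. The main obstacle, which I would attack last, is the ``large-neighbor'' branch of property~\ref{itm:neighbor_S_pro5}: naively $|S_v^{(i)}|\le|S_v^{(i-1)}|+|S_u^{(i-1)}|\le(d-1)+m/n$ can exceed $m/n$. I expect the algorithm either to replace $S_v^{(i)}$ by $S_u^{(i-1)}$ outright (so the bound is inherited from the inductive hypothesis on $S_u$), or to be designed so that the invariant $S_v^{(i-1)}\subseteq S_u^{(i-1)}$ holds automatically in this case (e.g., because $u\in S_v^{(i-1)}$ together with the structural property that the large set $S_u^{(i-1)}$ was previously absorbed into $S_v$). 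Pinning down which, by a careful reading of Algorithm~\ref{alg:neighbor_increment}, is the only step that demands more than bookkeeping, and once settled closes all four properties simultaneously.
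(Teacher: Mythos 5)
Your proposal is correct and follows essentially the same approach as the paper: induction on $i$ with a case split on which update branch fired, and the observation that $m\geq 4n$ closes the size bound in the doubling branch. The one step you flagged as undetermined resolves to your first conjectured alternative: line~\ref{sta:alreadygood2} sets $S_v^{(i)} \leftarrow S_u^{(i-1)}\cup\{v\}$ and, whenever this strictly enlarges the set, deletes $u$, so that $|S_v^{(i)}| = |S_u^{(i-1)}|\leq m/n$ is inherited directly from the inductive hypothesis on $S_u$ (and $v\in S_v^{(i)}$ survives the deletion, closing property~\ref{itm:neighbor_S_pro1} in that branch as well).
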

\begin{proof}
For property~\ref{itm:neighbor_S_pro1}, we can prove it by induction. When $i=0,$ due to line~\ref{sta:init}, we know $v\in S_v^{(0)}.$ Suppose property~\ref{itm:neighbor_S_pro1} holds for $S_v^{(i-1)}$ for all $v\in V.$ If $S_v^{(i)}$ is updated by line~\ref{sta:alreadygood2}, there are two cases: 1. if $u=v,$ then $v\in S_u^{(i-1)},$ and the condition of line~\ref{sta:alreadygood2} does not hold, thus $v\in S_v^{(i)};$ 2. if $u\not=v,$ then after implementing line~\ref{sta:alreadygood2}, $v$ will not be removed, thus $v\in S_v^{(i)}.$ If $S_v^{(i)}$ is updated by line~\ref{sta:stillbad}, then since $v\in S_v^{(i-1)},$ $v$ is also in the set $S_v^{(i)}.$ Thus, property~\ref{itm:neighbor_S_pro1} holds for every $S_v^{(i)}.$

For property~\ref{itm:neighbor_S_pro2}, we can prove it by induction. When $i=0,$ it is easy to see $S_v^{(0)}\subseteq \Gamma_G(v)\cup\{v\},$ thus property~\ref{itm:neighbor_S_pro2} holds for it. Suppose property~\ref{itm:neighbor_S_pro2} holds for $S_v^{(i-1)}$ for all $v\in V.$ If $S_v^{(i)}$ is updated by  line~\ref{sta:alreadygood2}, then since $u\in S_v^{(i-1)}$ and $S_v^{(i)}\subseteq S_u^{(i-1)}\cup\{v\},$ all the vertices from $S_v^{(i)}$ are in the same connected component as $u$ and $u$ is in the same connected component as $v$. Thus, property~\ref{itm:neighbor_S_pro2} holds in this case. If $S_v^{(i)}$ is updated by the line~\ref{sta:stillbad}, then $\forall p\in S_v^{(i)},$ there exists $u\in S_v^{(i-1)}$ such that $p\in S_u^{(i-1)}.$ We have $p$ is in the same connected component as $u$, and $u$ is in the same connected component as $v$. Thus, property~\ref{itm:neighbor_S_pro2} also holds in this case.


For property~\ref{itm:neighbor_S_pro4}, we can prove it by induction. When $i=0,$ due to 
line~\ref{sta:init_S2},
we have $|S_v^{(0)}|<\lceil(m/n)^{1/2}\rceil\rightarrow S^{(0)}_v=\Gamma_G(v)\cup\{v\}=\{u\in V\mid \dist_G(u,v)\leq 1\}.$ 
 Suppose property~\ref{itm:neighbor_S_pro4} holds for $S_v^{(i-1)}$ for all $v\in V.$ Since if 
 $|S_v^{(i)}|<\lceil(m/n)^{1/2}\rceil,$
 then $S_v^{(i)}$ can only be updated by line~\ref{sta:stillbad}, and $\forall u\in S_v^{(i-1)},$ it has $|S_u^{(i-1)}|<\lceil(m/n)^{1/2}\rceil.$ Thus, $S_v^{(i)}=\bigcup_{u\in S_v^{(i-1)}} S_u^{(i-1)}=\bigcup_{u\in V,\dist_G(u,v)\leq 2^{i-1}} \{p \in V\mid \dist_G(p,u)\leq 2^{i-1}\}=\{u\in V\mid\dist_G(u,v)\leq 2^i\}.$ Thus, property~\ref{itm:neighbor_S_pro4} holds.

For property~\ref{itm:neighbor_S_pro5}, we can prove it by induction. When $i=0,$ due to line~\ref{sta:init_S2}, $\forall v\in V,$ we have $|S_v^{(0)}|\leq \lceil(m/n)^{1/2}\rceil\leq m/n,$ where the last inequality follows by $m/n\geq 4.$
Now suppose property~\ref{itm:neighbor_S_pro5} holds for $S_v^{(i-1)}$ for all $v\in V.$ If $S_v^{(i)}$ is updated by line~\ref{sta:alreadygood2}, then $|S_v^{(i)}|=|S_u^{(i-1)}|\leq m/n.$ If $S_v^{(i)}$ is updated by line~\ref{sta:stillbad}, we know 
$\forall u\in S_v^{(i-1)},|S_u^{(i-1)}|< \lceil(m/n)^{1/2}\rceil.$ Notice that by property~\ref{itm:neighbor_S_pro1}, $v\in S_v^{(i-1)},$ so $|S_v^{(i-1)}|< \lceil(m/n)^{1/2}\rceil.$ Thus, $|S_v^{(i)}|=|\bigcup_{u\in S_v^{(i-1)}} S_u^{(i-1)}|\leq (m/n)^{1/2}\cdot(m/n)^{1/2}\leq m/n.$
\end{proof}

\begin{algorithm}[!t]
\caption{Neighbor Increment Operation}\label{alg:neighbor_increment}
\begin{algorithmic}[1]
\Procedure{\textsc{NeighborIncrement}}{$m,G=(V,E)$} \Comment{Lemma~\ref{lem:properties_of_S}, Lemma~\ref{lem:properties_of_neighbor_increment}}
\State \Comment{Output: $G'=(V,E')$}
\State Initially, $n=|V|,E'=\emptyset$ and let $S_v^{(0)}=\{v\}$ for all $v\in V$. \label{sta:init}
\For{$v\in V$}\Comment{Initially, let $S_v^{(0)}$ be the set (or subset) of direct neighbors}
    \For {$u\in\Gamma_G(v)$}
        \If {$|S_v^{(0)}|<\lceil(m/n)^{1/2}\rceil$}
            \State  $S_v^{(0)}\leftarrow S_v^{(0)}\cup\{u\}$. \label{sta:init_S2}
        \EndIf
    \EndFor
\EndFor
\State $r \leftarrow 1$.
\For {$\mathbf{true}$}
    \For{$v\in V$}
        \If{$\exists u\in S_v^{(r-1)},|S_u^{(r-1)}|\geq \lceil(m/n)^{1/2}\rceil$}\label{sta:large_condition}\Comment{neighbor $u$ has many neighbors}
                \State $S_v^{(r)}=S_u^{(r-1)}\cup \{v\}$.
                \If {$|S_v^{(r)}|>|S_u^{(r-1)}|$}
                    \State $S_v^{(r)}\leftarrow S_v^{(r)}\setminus\{u\}$. \label{sta:alreadygood2}
                \EndIf
        \Else
                \State $S_v^{(r)}=\bigcup_{u\in S_v^{(r-1)}} S_u^{(r-1)}$. \label{sta:stillbad}\Comment{ neighbors of $v$'s neighbors are $v$'s new neighbors.}
        \EndIf
    \EndFor
    \State \Comment{$S_v^{(r)}$ is large or is a component}
    \If{$\forall v\in V,$ either $|S_v^{(r)}|\geq \lceil(m/n)^{1/2}\rceil$ or $|S_v^{(r)}|=|S_v^{(r-1)}|$ } \label{sta:condition_break}
        \State Let $E'=E\cup\bigcup_{v\in V}\{(v,u)\mid v\in S_u^{(r)}\text{~or~}u\in S_v^{(r)},u\not=v\}$. \label{sta:add_new_edges}
        \State \Return $G'=(V,E')$
    \Else
        \State $r\leftarrow r+1$.
    \EndIf
\EndFor
\EndProcedure
\end{algorithmic}
\end{algorithm}

The following definition defines the number of iterations of Algorithm~\ref{alg:neighbor_increment}.
\begin{definition}\label{def:neighbor_incr_num_iter}
Given an undirected graph $G=(V,E)$ and a parameter $m\in\mathbb{Z}_{\geq 0},m\geq 4|V|,$ the number of iterations of $\textsc{NeighborIncrement}(m,G)$ (Algorithm~\ref{alg:neighbor_increment}) is the value of $r$ at the end of the procedure.
\end{definition}

In the following lemma, we characterize the properties of Algorithm~\ref{alg:neighbor_increment}.
\begin{lemma}\label{lem:properties_of_neighbor_increment}
Let $G=(V,E)$ be an undirected graph, $m\in\mathbb{Z}_{\geq 0}$ which has $m\geq 4|V|.$ 
Let $G'=(V,E')$ be the output of $\textsc{NeighborIncrement}(m,G).$ We have:
\begin{enumerate}
\item The number of iterations (Definition~\ref{def:neighbor_incr_num_iter}), 
    $r\leq \min(\lceil\log(\diam(G))\rceil,\lceil\log(m/n)\rceil)+1.$ \label{itm:neighbor_incr_pro1}
\item For all $u,v\in V,$ $\dist_G(u,v)<\infty\Leftrightarrow \dist_{G'}(u,v)<\infty.$ \label{itm:neighbor_incr_pro2}
\item $\forall v\in V,$ if $|\Gamma_{G'}(v)|< \lceil(m/n)^{1/2}\rceil-1,$ then the connected component in $G'$ which contains $v$ is a clique. It also implies that $\forall u,v\in V,$ if $|\Gamma_{G'}(v)|< \lceil(m/n)^{1/2}\rceil-1$ and $|\Gamma_{G'}(u)|\geq \lceil(m/n)^{1/2}\rceil-1,$ then $\dist_{G'}(u,v)=\infty.$ \label{itm:neighbor_incr_pro3}
\item $E\subseteq E',|E'|\leq |E|+m.$ \label{itm:neighbor_incr_pro4}
\end{enumerate}
\end{lemma}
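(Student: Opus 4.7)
The four items split into a trio of essentially direct consequences of Lemma~\ref{lem:properties_of_S} (items~\ref{itm:neighbor_incr_pro2}, \ref{itm:neighbor_incr_pro3}, \ref{itm:neighbor_incr_pro4}) and one iteration-count bound (item~\ref{itm:neighbor_incr_pro1}) that requires reasoning about the termination condition on line~\ref{sta:condition_break}. I will handle the easy items first and then concentrate on items~\ref{itm:neighbor_incr_pro1} and~\ref{itm:neighbor_incr_pro3}, where the only real content lies.

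For item~\ref{itm:neighbor_incr_pro4}, the inclusion $E\subseteq E'$ is immediate from line~\ref{sta:add_new_edges}. To bound $|E'\setminus E|$, I note that every newly added edge $(v,u)$ arises from some $u\in S_v^{(r)}\setminus\{v\}$ (the symmetric case $v\in S_u^{(r)}$ just contributes the same unordered edge from the other side), so $|E'\setminus E|\le \sum_{v\in V}(|S_v^{(r)}|-1)\le n\cdot(m/n)=m$ by property~\ref{itm:neighbor_S_pro5}. For item~\ref{itm:neighbor_incr_pro2}, one direction is trivial since $E\subseteq E'$; for the other direction, by property~\ref{itm:neighbor_S_pro2} every new edge lies between two vertices already in the same connected component of $G$, so no new component-level connectivity is introduced.

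For item~\ref{itm:neighbor_incr_pro1}, the plan is to show that the termination test on line~\ref{sta:condition_break} must fire at every $v\in V$ once the iteration index is large enough. For the bound $\lceil\log(\diam(G))\rceil+1$, suppose $2^{r-1}\ge\diam(G)$. Then for any $v$: either $|S_v^{(r-1)}|\ge\lceil(m/n)^{1/2}\rceil$, which is already the ``large'' branch of the termination condition; or $|S_v^{(r-1)}|<\lceil(m/n)^{1/2}\rceil$, in which case property~\ref{itm:neighbor_S_pro4} tells us $S_v^{(r-1)}$ already equals the entire connected component of $v$, so $S_v^{(r)}\subseteq$ that component forces $|S_v^{(r)}|=|S_v^{(r-1)}|$. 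Either branch of the test on line~\ref{sta:condition_break} holds at $v$, so termination fires. For the bound $\lceil\log(m/n)\rceil+1$, I will argue similarly but invoking property~\ref{itm:neighbor_S_pro5}: if the algorithm has not terminated at $v$ by iteration $r$ then $|S_v^{(r)}|$ is strictly increasing while remaining below $\lceil(m/n)^{1/2}\rceil$; combining the radius-doubling implicit in property~\ref{itm:neighbor_S_pro4} with the hard ceiling $|S_v^{(r)}|\le m/n$ from property~\ref{itm:neighbor_S_pro5} yields the desired logarithmic bound. This $\log(m/n)$ estimate is the step I expect to be the most delicate to write cleanly.

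For item~\ref{itm:neighbor_incr_pro3}, the key structural step is: if $|\Gamma_{G'}(v)|<\lceil(m/n)^{1/2}\rceil-1$, then $S_v^{(r)}$ equals the connected component $C$ of $v$ in $G$. Because line~\ref{sta:add_new_edges} makes every element of $S_v^{(r)}\setminus\{v\}$ a neighbor of $v$ in $G'$, we have $|\Gamma_{G'}(v)|\ge|S_v^{(r)}|-1$, so the hypothesis forces $|S_v^{(r)}|<\lceil(m/n)^{1/2}\rceil$; the termination condition then gives $|S_v^{(r)}|=|S_v^{(r-1)}|$, and property~\ref{itm:neighbor_S_pro4} together with connectedness of $C$ forces $S_v^{(r)}=C$. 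The main obstacle is upgrading ``$v$ is adjacent to all of $C$ in $G'$'' to ``$C$ is a clique in $G'$''. For this I apply the same argument to every $u'\in C$: by property~\ref{itm:neighbor_S_pro2}, $S_{u'}^{(r)}\subseteq C$, so $|S_{u'}^{(r)}|\le|C|=|S_v^{(r)}|<\lceil(m/n)^{1/2}\rceil$, hence by the same chain $S_{u'}^{(r)}=C$. Line~\ref{sta:add_new_edges} then inserts $(u',u'')$ for every pair in $C$, making $C$ a clique of $G'$. Finally, the ``also implies'' consequence is immediate, since once the component of $v$ is a clique of size $<\lceil(m/n)^{1/2}\rceil$ in $G'$, no vertex of it can be adjacent to a vertex of degree $\ge\lceil(m/n)^{1/2}\rceil-1$ in $G'$.
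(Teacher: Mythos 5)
Your treatment of items~\ref{itm:neighbor_incr_pro2}, \ref{itm:neighbor_incr_pro3}, \ref{itm:neighbor_incr_pro4} and the $\lceil\log(\diam(G))\rceil+1$ part of item~\ref{itm:neighbor_incr_pro1} follows essentially the paper's route. Two small points in the $\diam$-bound argument need tightening: the termination test on line~\ref{sta:condition_break} is phrased in terms of $S_v^{(r)}$, not $S_v^{(r-1)}$, so you should note that $|S_v^{(r-1)}|\ge\lceil(m/n)^{1/2}\rceil$ propagates forward (line~\ref{sta:large_condition} is then taken, giving $|S_v^{(r)}|=|S_u^{(r-1)}|\ge\lceil(m/n)^{1/2}\rceil$); and ``$S_v^{(r)}\subseteq$ that component forces $|S_v^{(r)}|=|S_v^{(r-1)}|$'' is not immediate from containment alone --- you also need $S_v^{(r)}\supseteq S_v^{(r-1)}$, which holds here because when $S_v^{(r-1)}$ equals the whole small component $C$, every $S_u^{(r-1)}$ with $u\in C$ also lies in $C$ and has size $<\lceil(m/n)^{1/2}\rceil$, so line~\ref{sta:stillbad} is taken (or, equivalently, invoke property~\ref{itm:neighbor_S_pro4} once more on $S_v^{(r)}$ and use the nesting of balls).

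The genuine gap is in the $\lceil\log(m/n)\rceil+1$ half of item~\ref{itm:neighbor_incr_pro1}. Your sketch leans on ``$|S_v^{(r)}|$ is strictly increasing'' plus the ceiling from property~\ref{itm:neighbor_S_pro5}, but strict monotone growth bounds the number of iterations only \emph{linearly} in $m/n$, not logarithmically; and ``radius-doubling implicit in property~\ref{itm:neighbor_S_pro4}'' controls the radius, not the cardinality. The missing step, which the paper's proof relies on, is a path-counting inequality that transfers geometric growth from the radius to the set size: if the termination test fails at iteration $i$ with $|S_v^{(i)}|<\lceil(m/n)^{1/2}\rceil$, then property~\ref{itm:neighbor_S_pro4} gives $S_v^{(i)}=B_{G,2^i}(v)\ne B_{G,2^{i-1}}(v)=S_v^{(i-1)}$, hence some $u\in S_v^{(i)}$ has $\dist_G(u,v)>2^{i-1}$; every vertex on a shortest $v$--$u$ path lies in $B_{G,2^i}(v)=S_v^{(i)}$, so $|S_v^{(i)}|>\dist_G(u,v)>2^{i-1}$. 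Combining $2^{i-1}<|S_v^{(i)}|$ with either $|S_v^{(i)}|\le m/n$ (property~\ref{itm:neighbor_S_pro5}) or the non-termination bound $|S_v^{(i)}|<\lceil(m/n)^{1/2}\rceil\le m/n$ forces $2^{i-1}<m/n$, i.e.\ $i\le\lceil\log(m/n)\rceil+1$. Without this lower bound on the ball's cardinality the plan as written does not reach the stated bound.
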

\begin{proof}
For property~\ref{itm:neighbor_incr_pro1}, if $r>\lceil\log(\diam(G))\rceil+1,$ then let $i=\lceil\log(\diam(G))\rceil+1.$ Let $v\in V.$ By property~\ref{itm:neighbor_S_pro4} of Lemma~\ref{lem:properties_of_S}, if $|S_v^{(i)}|<\lceil(m/n)^{1/2}\rceil,$ then $S_v^{(i)}=\{u\in V\mid \dist_G(u,v)\leq 2^i\}=\{u\in V\mid\dist_G(u,v)\leq 2\cdot 2^{\lceil\log(\diam(G))\rceil}\}=\{u\in V\mid \dist_G(u,v)<\infty\}.$ Furthermore, if $|S_v^{(i)}|<\lceil(m/n)^{1/2}\rceil,$ then $|S_v^{(i-1)}|<\lceil(m/n)^{1/2}\rceil,$ which means that $S_v^{(i-1)}=\{u\in V\mid \dist_G(u,v)\leq 2^{i-1}\}=\{u\in V\mid\dist_G(u,v)\leq  2^{\lceil\log(\diam(G))\rceil}\}=\{u\in V\mid \dist_G(u,v)<\infty\}=S_v^{(i)}.$ Then due to the condition in line~\ref{sta:condition_break}, it will end the procedure in this round, which contradicts to $r>\lceil\log(\diam(G))\rceil+1=i.$ Similarly, if $r>\lceil\log(m/n)\rceil+1,$ then let $i=\lceil\log(m/n)\rceil+1.$ Furthermore, if $|S_v^{(i)}|<\lceil(m/n)^{1/2}\rceil,$ then $|S_v^{(i-1)}|<\lceil(m/n)^{1/2}\rceil,$ which means that $S_v^{(i-1)}=\{u\in V\mid \dist_G(u,v)\leq 2^{i-1}\}\not=\{u\in V\mid\dist_G(u,v)\leq  2^{i}\}=S_v^{(i)}.$ Thus, there exists $u\in S_v^{(i)}$ such that $|S_v^{(i)}|>\dist_G(u,v)>2^{i-1}>m/n\geq \lceil (m/n)^{1/2}\rceil$ which leads to a contradiction. 

For property~\ref{itm:neighbor_incr_pro2}, if $u,v$ are in the same connected component in $G,$ then since $E\subseteq E',$ $u,v$ are in the same connected component in $G'$. If $u,v$ are in the same connected component in $G',$ then there should be a path $u=u_1\rightarrow u_2\rightarrow \cdots \rightarrow u_p=v$ in $G',$ i.e. $\forall j\in[p-1],(u_{j},u_{j+1})\in E'.$ $(u_{j},u_{j+1})\in E'$ implies that either $(u_{j},u_{j+1})\in E$ or $u_{j}\in S_{u_{j+1}}^{(r)}$ or $u_{j+1}\in S_{u_j}^{(r)}.$ By property~\ref{itm:neighbor_S_pro2} of Lemma~\ref{lem:properties_of_S}, we know that $\forall j\in[p-1],$ $u_j$ and $u_{j+1}$ are in the same connected component in $G.$ Thus, $u$ and $v$ are in the same connected component in $G$.

For property~\ref{itm:neighbor_incr_pro3}, due to line~\ref{sta:add_new_edges}, if $|\Gamma_{G'}(v)|< \lceil(m/n)^{1/2}\rceil-1,$ then we have $|S_v^{(r)}|<\lceil(m/n)^{1/2}\rceil.$ 
By property~\ref{itm:neighbor_S_pro4} of Lemma~\ref{lem:properties_of_S}, and the condition in line~\ref{sta:condition_break}, we know $\{u\in V\mid \dist_G(u,v)\leq 2^{r}\}=S_v^{(r)}=S_v^{(r-1)}=\{u\in V\mid \dist_G(u,v)\leq 2^{r-1}\}.$ Thus, $S_v^{(r)}=\{u\in V\mid \dist_G(u,v)<\infty\}.$ Due to property~\ref{itm:neighbor_incr_pro2}, we have $\Gamma_{G'}(v)\cup \{v\}\subseteq \{u\in V\mid \dist_G(u,v)<\infty\}.$ Notice that $S_v^{(r)}\subseteq\Gamma_{G'}(v)\cup \{v\},$ thus, we have $\Gamma_{G'}(v)\cup \{v\}=\{u\in V\mid \dist_G(u,v)<\infty\}.$ Let $v'\in \{u\in V\mid \dist_G(u,v)<\infty\},$ then due to property~\ref{itm:neighbor_S_pro2}, $\Gamma_{G'}(v')\cup \{v'\}\subseteq \{u\in V\mid \dist_G(u,v')<\infty\}=\{u\in V\mid \dist_G(u,v)<\infty\},$ then we have $|\Gamma_{G'}(v')\cup \{v'\}|<\lceil(m/n)^{1/2}\rceil.$ Thus, $|S_{v'}^{(r)}|<\lceil(m/n)^{1/2}\rceil.$ 
By property~\ref{itm:neighbor_S_pro4} of Lemma~\ref{lem:properties_of_S}, and the condition in line~\ref{sta:condition_break}, we know $\{u\in V\mid \dist_G(u,v')\leq 2^{r}\}=S_{v'}^{(r)}=S_{v'}^{(r-1)}=\{u\in V\mid \dist_G(u,v')\leq 2^{r-1}\}.$ Thus, $S_{v'}^{(r)}=\{u\in V\mid \dist_G(u,v')<\infty\}.$ Thus, $\Gamma_{G'}(v')\cup \{v'\}=\{u\in V\mid \dist_G(u,v')<\infty\}=\{u\in V\mid \dist_G(u,v)<\infty\}.$ Thus, $\forall p,q\in \{u\in V\mid \dist_G(u,v)<\infty\},$ we have $(p,q)\in E',$ which means that $\{u\in V\mid \dist_G(u,v)<\infty\}$ is a clique in $G'.$

Now consider two vertices $u,v\in V.$ Suppose $|\Gamma_{G'}(v)|< \lceil(m/n)^{1/2}\rceil-1,$ then we have that $\{p\in V\mid \dist_G(p,v)<\infty\}$ is a clique in $G'.$ Thus, $\forall q\in \{p\in V\mid \dist_G(p,v)<\infty\},$ we have $|\Gamma_{G'}(q)|=|\Gamma_{G'}(v)|< \lceil(m/n)^{1/2}\rceil-1.$ If $|\Gamma_{G'}(u)|\geq \lceil(m/n)^{1/2}\rceil-1,$ then $\dist_{G'}(u,v)=\infty.$

For property~\ref{itm:neighbor_incr_pro4}, by line~\ref{sta:add_new_edges}, we have $E\subseteq E'$ and $|E'|\leq |E|+\sum_{v\in V} |S_v^{(r)}|\leq |E|+n\cdot m/n=|E|+m$ where the last inequality follows by the property~\ref{itm:neighbor_S_pro5} of Lemma~\ref{lem:properties_of_S}.
\end{proof}
    \subsection{Random Leader Selection}\label{sec:leader}
Given an undirected graph $G=(V,E),$ to design a connected component algorithm, a natural way is constantly contracting the vertices in the same component. One way to do the contraction is that we randomly choose some vertices as leaders, then contract non-leader vertices to the neighbor leader vertices.

In this section, we show that if $\forall v\in V,$ the number of neighbors of $v$ is large enough, then we can just sample a small number of leaders such that for each non-leader vertex $v\in V,$ there is at least one neighbor of $v$ which is chosen as a leader. A more generalized statement is stated in the following lemma.

\begin{lemma}\label{lem:random_leader_props}
Let $V$ be a vertex set with $n$ vertices. Let $0<\gamma\leq n,\delta\in(0,1).$ For each $v\in V,$ let $S_v$ be a subset of $V\setminus\{v\}$ with size at least $\gamma-1.$ Let $l:V\rightarrow \{0,1\}$ be a random hash function such that $\forall v\in V,l(v)$ are i.i.d. Bernoulli random variables, i.e.
\begin{align*}
l(v)= 
\begin{cases}
1 & \mathrm{~with~probability~} p; \\
0 & \mathrm{~otherwise}.
\end{cases}
\end{align*}
If $p\geq\min((10\log( 2n/\delta)) / \gamma,1),$ then, with probability at least $1-\delta,$
\begin{enumerate}
\item $\sum_{v\in V}l(v)\leq \frac{3}{2}pn$;
\item $\forall v\in V,\exists u\in S_v\cup\{v\}$ such that $l(u)=1.$
\end{enumerate}
\end{lemma}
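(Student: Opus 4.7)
The plan is to combine a Chernoff bound (for the upper bound on the number of leaders) with a union bound argument (for the covering property), and then take a final union bound over the two events. Both events fail with probability at most $\delta/2$ each.

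First, handle the degenerate case $p=1$ separately: then every vertex is a leader, so $\sum_{v}l(v)=n\le \tfrac{3}{2}n = \tfrac{3}{2}pn$ and trivially $v\in S_v\cup\{v\}$ has $l(v)=1$. So assume $p=10\log(2n/\delta)/\gamma < 1$, which in particular gives $p\gamma\ge 10\log(2n/\delta)$ and $pn\ge 10\log(2n/\delta)\cdot n/\gamma \ge 10\log(2n/\delta)$, since $\gamma\le n$.

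For the first claim, let $X=\sum_{v\in V}l(v)$, a sum of $n$ i.i.d.\ Bernoulli($p$) variables with $\E[X]=pn=:\mu$. By the standard multiplicative Chernoff bound,
\[
\Pr\!\left[X\ge \tfrac{3}{2}\mu\right] \le \exp\!\left(-\mu/12\right).
\]
Since $\mu\ge 10\log(2n/\delta)$, the right-hand side is at most $(\delta/(2n))^{10/12}$, which for $n\ge 1$ is bounded by $\delta/2$ (possibly after slightly massaging the constant $10$; the argument only uses that the constant is large enough).

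For the second claim, fix any $v\in V$. The event that no element of $S_v\cup\{v\}$ is selected as a leader has probability
\[
(1-p)^{|S_v|+1} \le (1-p)^{\gamma} \le e^{-p\gamma} \le e^{-10\log(2n/\delta)} = \left(\frac{\delta}{2n}\right)^{10} \le \frac{\delta}{2n},
\]
using $|S_v|\ge \gamma-1$ and $1-p\le e^{-p}$. Taking a union bound over the $n$ choices of $v$ yields that the second event fails with probability at most $\delta/2$. A final union bound over the two failure events gives overall success probability at least $1-\delta$.

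The only subtle point is verifying that the constant $10$ in the definition of $p$ is large enough to make both tails cleanly bounded by $\delta/2$; this is routine but the constant for the Chernoff tail needs $\mu$ to be on the order of $\log(1/\delta)$, which is guaranteed here because $pn\ge p\gamma\ge 10\log(2n/\delta)$. There is no real obstacle — the whole lemma is a standard ``sample each vertex independently, get both a size bound and a covering bound'' argument.
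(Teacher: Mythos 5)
Your overall structure is sound, and for the covering property you take a genuinely cleaner route than the paper: you simply compute $\Pr[\text{no leader in }S_v\cup\{v\}] = (1-p)^{|S_v\cup\{v\}|} \le e^{-p\gamma} \le (\delta/(2n))^{10} \le \delta/(2n)$ and union-bound, whereas the paper applies Bernstein's inequality to argue that $\sum_{u\in S_v\cup\{v\}} l(u)$ is concentrated near its mean (and then separately notes the mean is $\ge 2$). Your direct computation is both more elementary and tighter, and it is worth preferring.

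However, there is a real slack in your size bound. You invoke the form $\Pr[X\ge \tfrac{3}{2}\mu]\le \exp(-\mu/12)$, coming from $\exp(-\varepsilon^2\mu/3)$ with $\varepsilon=\tfrac12$. Combined with $\mu = pn \ge 10\log(2n/\delta)$, this gives $(\delta/(2n))^{10/12}$. Since the exponent $\tfrac{10}{12}<1$, this quantity need \emph{not} be $\le \delta/2$: one checks it is equivalent to $\delta \ge 2/n^5$, so the bound fails for sufficiently small $\delta$ (and such cases do occur with $p<1$ once $n$ is large enough that $n>10\log(2n/\delta)$). Your parenthetical guess that ``the constant $10$'' could be massaged is also misdirected --- the constant $10$ is fixed by the lemma statement, and the paper chose it precisely so that Bernstein's exponent works out: Bernstein yields $\exp(-\tfrac{3}{28}\mu)$, and $\tfrac{3}{28}\cdot 10 = \tfrac{30}{28} > 1$, so the tail is $(\delta/(2n))^{30/28}\le\delta/(2n)\le\delta/2$. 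To repair your argument, replace the loose Chernoff form $\exp(-\varepsilon^2\mu/3)$ with either Bernstein's inequality or the sharper multiplicative Chernoff bound $\Pr[X\ge(1+\varepsilon)\mu]\le\exp\bigl(-\tfrac{\varepsilon^2\mu}{2+\varepsilon}\bigr)$, which at $\varepsilon=\tfrac12$ gives $\exp(-\mu/10)\le(\delta/(2n))^{1}\le\delta/2$. With that substitution the proof is correct.
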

\begin{proof}
For a fixed vertex $v\in V,$  we have
{\small
\begin{align*}
&\Pr\left(\sum_{u\in S_v\cup\{v\}}( \E(l(u))-l(u) )>\frac{1}{2}\sum_{u\in S_v\cup\{v\}} \E(l(u))\right)\\
\leq~&\exp\left(-\frac{\frac{1}{2} \left(\frac{1}{2}\sum_{u\in S_v\cup\{v\}} \E(l(u))\right)^2}{\sum_{u\in S_v\cup\{v\}}\Var(l(u))+\frac{1}{3}\cdot1\cdot\frac{1}{2}\sum_{u\in S_v\cup\{v\}} \E(l(u))}\right)\\
\leq~&\exp\left(-\frac{\frac{1}{2} \left(\frac{1}{2}\sum_{u\in S_v\cup\{v\}} \E(l(u))\right)^2}{\sum_{u\in S_v\cup\{v\}}\E(l(u))+\frac{1}{3}\cdot1\cdot\frac{1}{2}\sum_{u\in S_v\cup\{v\}} \E(l(u))}\right)\\
=~&\exp\left(-\frac{3}{28}\cdot \sum_{u\in S_v\cup\{v\}} \E(l(u))\right)
=\exp\left(-\frac{3}{28}\cdot p\cdot| S_v\cup\{v\}| \right)
\leq \frac{\delta}{2n},
\end{align*}}
where the first inequality follows by Bernstein inequality and $|l(u)-E(l(u))|\leq 1$, the second inequality follows by $\Var(l(u))\leq \E(l^2(u))=\E(l(u)).$ The last inequality follows by $| S_v\cup\{v\}|\geq \gamma,$ and $p\geq  \min((10\log( 2n/\delta)) / \gamma,1).$ Since $\frac{1}{2}\sum_{u\in S_v\cup\{v\}} \E(l(u))\geq 1,$ with probability at least $1-\delta/(2n),$ $\sum_{u\in S_v\cup\{v\}}l(v)\geq 1.$ By taking union bound over all $S_v,$ with probability at least $1-\delta/2,$ $\forall v\in V,\exists u\in S_v\cup\{v\},l(u)=1.$

Similarly, we have
{\small
\begin{align*}
&\Pr\left(\sum_{u\in V}( l(u)-\E(l(u)) )>\frac{1}{2}\sum_{u\in V} \E(l(u))\right)\\
\leq~&\exp\left(-\frac{\frac{1}{2} \left(\frac{1}{2}\sum_{u\in V} \E(l(u))\right)^2}{\sum_{u\in V}\Var(l(u))+\frac{1}{3}\cdot1\cdot\frac{1}{2}\sum_{u\in V} \E(l(u))}\right)\\
\leq~&\exp\left(-\frac{\frac{1}{2} \left(\frac{1}{2}\sum_{u\in V} \E(l(u))\right)^2}{\sum_{u\in V}\E(l(u))+\frac{1}{3}\cdot1\cdot\frac{1}{2}\sum_{u\in V} \E(l(u))}\right)\\
=~&\exp\left(-\frac{3}{28}\cdot \sum_{u\in V} \E(l(u))\right)\\
=~&\exp\left(-\frac{3}{28}\cdot p\cdot| V| \right)
\leq \frac{\delta}{2n}
\leq \frac{\delta}{2}.
\end{align*}
}
Since $\sum_{u\in V}\E(l(u))=p\cdot n,$ with probability at least $1-\delta/2,$ $\sum_{u\in V}l(u)\leq 1.5pn.$

By taking union bound, with probability at least $1-\delta,\sum_{u\in V}l(u)\leq 1.5pn$ and $\forall v\in V, \exists u\in S_v\cup\{v\},l(u)=1.$
\end{proof}

If the number of neighbors of each vertex is not large, then we can still have a constant fraction of vertices which can contract to a leader.

\begin{lemma}\label{lem:random_leader_low_prob}
Let $V$ be a vertex set with $n$ vertices. Let $S_v$ be a subset of $V\setminus\{v\}$ with size at least $1.$ Let $l:V\rightarrow \{0,1\}$ be a random hash function such that $\forall v\in V,l(v)$ are i.i.d. Bernoulli random variables, i.e.
\begin{align*}
l(v)=
\begin{cases}
1 & \mathrm{~with~probability~} \frac{1}{2} ; \\
0 & \mathrm{~otherwise}.
\end{cases}
\end{align*}
Let $L=\{v\in V\mid l(v)=1\}\cup\{v\in V\mid \forall u\in S_v\cup\{v\},l(u)=0\}.$ $\E(L)\leq 0.75n.$
\end{lemma}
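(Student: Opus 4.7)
The plan is to compute $\Pr(v \in L)$ for a fixed vertex $v$ and then apply linearity of expectation. First, I would observe that the two sets whose union defines $L$ are disjoint: the first set requires $l(v) = 1$, whereas the second set, since it stipulates $l(u) = 0$ for all $u \in S_v \cup \{v\}$ and $v$ itself lies in $S_v \cup \{v\}$, forces $l(v) = 0$. Therefore I can write
\[
\Pr(v \in L) = \Pr(l(v) = 1) + \Pr\bigl(\forall u \in S_v \cup \{v\},\, l(u) = 0\bigr).
\]

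Next, I would bound each term. The first is exactly $1/2$ by the definition of $l$. For the second, since the $l(u)$ are i.i.d.\ Bernoulli($1/2$), independence gives
\[
\Pr\bigl(\forall u \in S_v \cup \{v\},\, l(u) = 0\bigr) = \left(\tfrac{1}{2}\right)^{|S_v \cup \{v\}|} = \left(\tfrac{1}{2}\right)^{|S_v|+1} \leq \left(\tfrac{1}{2}\right)^{2} = \tfrac{1}{4},
\]
where the last inequality uses the assumption $|S_v| \geq 1$. Combining these, $\Pr(v \in L) \leq 3/4$.

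Finally, I would apply linearity of expectation over all $v \in V$:
\[
\E[|L|] = \sum_{v \in V} \Pr(v \in L) \leq \tfrac{3}{4}\, n,
\]
which is the claim. There is essentially no obstacle here; the only subtle point is noticing the disjointness of the two events so that the union bound is an equality rather than a loose inequality, and correctly accounting for the $+1$ in the exponent coming from $v$ itself being an element of $S_v \cup \{v\}$, which is what makes the threshold $3/4$ rather than something larger.
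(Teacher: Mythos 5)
Your proof is correct and follows essentially the same strategy as the paper: bound $\Pr(v \in L)$ by $1/2 + 1/4 = 3/4$ and apply linearity of expectation. The only cosmetic difference is that you compute the second term exactly as $(1/2)^{|S_v|+1}$ and observe the disjointness, whereas the paper simply picks one $u \in S_v$ and bounds $\Pr(\forall x \in S_v \cup \{v\},\, l(x)=0) \le \Pr(l(v)=0, l(u)=0) = 1/4$; both yield the same bound.
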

\begin{proof}
For $v\in V,$ $\Pr(l(v)=1)=\frac{1}{2}.$ Let $u\in S_v.$ Then $\Pr(\forall x\in S_v\cup\{v\},l(x)=0)\leq \Pr(l(v)=0,l(u)=0)=0.25.$ $\E(|L|)=\sum_{v\in V}\Pr(v\in L)\leq 0.75n.$
\end{proof}
    \subsection{Tree Contraction Operation}
In this section, we introduce the contraction operation.
 Firstly, let us introduce the concept of the parent pointers which can define a rooted forest.
\begin{definition}\label{def:parent_pointers}
Given a set of vertices $V,$ let $\p:V\rightarrow V$ satisfy that $\forall v\in V,\exists i>0$ such that $\p^{(i)}(v)=\p^{(i+1)}(v),$ where $\forall v\in V,j>0,\p^{(j)}(v)$ is defined as $\p(\p^{(j-1)}(v)),$ and $\p^{(0)}(v)=v.$ Then, we call such $\p$ a set of parent pointers on $V$. For $v\in V,$ if $\p(v)=v,$ then we say $v$ is a root of $\p.$ $\p$ can have more than one root. The depth of $v\in V,$ $\dep_{\p}(v)$ is the smallest $i\in \mathbb{Z}_{\geq 0}$ such that $\p^{(i)}(v)=\p^{(i+1)}(v).$ The root of $v\in V,$ $\p^{(\infty)}(v)$ is defined as $\p^{(\dep_{\p}(v))}(v).$ The depth of $\p,$ $\dep(\p)$ is defined as $\max_{v\in V}\dep_{\p}(v).$
\end{definition}
It is easy to see that a set of parent pointers $\p$ on $V$ formed a rooted forest on $V$. For a vertex $v\in V,$ if $\p(v)=v,$ then $v$ is a root in the forest. Otherwise $\p(v)$ is the parent of $v$ in the forest.

In the following, we define the union operation of several sets of parent pointers.
\begin{definition}\label{def:union_of_parent_pointers}
Let $\p_1:V_1\rightarrow V_1,\p_2:V_2\rightarrow V_2,\cdots,\p_k:V_k\rightarrow V_k$ be $k$ sets of parent pointers on vertex sets $V_1,V_2,\cdots,V_k$ respectively, where $\forall i\not=j\in[k],V_i\cap V_j=\emptyset$. Then $\p=\p_1\cup \p_2\cup\cdots \cup \p_k$ is a set of parent pointers on the vertex set $V_1\cup V_2\cup \cdots\cup V_k$ such that $\forall i\in [k],v\in V_k,\p(v)=\p_i(v).$
\end{definition}

Now we focus on the parent pointers which can preserve the connectivity of the graph.
\begin{definition}\label{def:compatible_parent_pointers}
Given a graph $G=(V,E)$ and a set of parent pointers $\p$ on $V,$ if $\forall v\in V,$ we have $\dist_G(v,\p(v))<\infty,$ then $\p$ is \textit{compatible} with $G$.
\end{definition}
It is easy to show the following fact:
\begin{fact}\label{fac:same_root}
Given a graph $G=(V,E)$ and a set of parent pointers $\p$ which is compatible with $G,$ then $\forall u,v\in V$ with $\p^{(\infty)}(u)=\p^{(\infty)}(v),$ we have $\dist_G(u,v)<\infty.$
\end{fact}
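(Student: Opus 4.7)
The plan is to verify the fact by a short induction along the parent chain of each vertex, using the compatibility condition to conclude finite distance at each step, and then combining the two chains through their common root.

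First I would observe that for any $v \in V$ and any $i \geq 0$, we have $\dist_G(v, \p^{(i)}(v)) < \infty$. This is the main technical step. I would prove it by induction on $i$. The base case $i = 0$ is immediate since $\p^{(0)}(v) = v$. For the inductive step, assuming $\dist_G(v, \p^{(i)}(v)) < \infty$, compatibility of $\p$ with $G$ (Definition~\ref{def:compatible_parent_pointers}) applied to the vertex $\p^{(i)}(v)$ gives $\dist_G(\p^{(i)}(v), \p^{(i+1)}(v)) < \infty$, and finiteness of distance is transitive (being in the same connected component is an equivalence relation), so $\dist_G(v, \p^{(i+1)}(v)) < \infty$. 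Instantiating at $i = \dep_{\p}(v)$ yields $\dist_G(v, \p^{(\infty)}(v)) < \infty$.

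Then I would conclude: given $u, v \in V$ with $\p^{(\infty)}(u) = \p^{(\infty)}(v) =: r$, the previous step gives $\dist_G(u, r) < \infty$ and $\dist_G(v, r) < \infty$, and again by transitivity of the connected-component relation we obtain $\dist_G(u, v) < \infty$.

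There is no real obstacle here; the only thing to be careful about is using the transitivity of lying in the same connected component (equivalently, that finite-distance composes to finite-distance) rather than attempting any quantitative bound on $\dist_G(u,v)$, since the fact only asserts finiteness.
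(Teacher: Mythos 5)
Your proof is correct and mirrors the paper's argument: both induct on the number of applications of $\p$ to show $\dist_G(v,\p^{(i)}(v))<\infty$ using compatibility at each step, then combine $u$ and $v$ through the common root via transitivity of the same-component relation (which the paper phrases as a triangle inequality on distances).
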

\begin{proof}
By the definition of compatible, $\forall v\in V,\dist_G(v,\p(v))<\infty.$ By induction, $\forall l\in\mathbb{Z}_{>0},v\in V,$ we have $\dist_G(v,\p^{(l)}(v))\leq \dist_G(v,\p^{(l-1)}(v))+\dist_G(\p^{(l-1)}(v),\p^{(l)}(v))<\infty.$ Thus, for any pair of vertices $u,v\in V,$ if $\p^{(\infty)}(u)=\p^{(\infty)}(v),$ then $\dist_G(u,v)\leq\dist_G(u,\p^{(\infty)}(u))+\dist_G(\p^{(\infty)}(v),v)<\infty.$
\end{proof}

 In this section, we describe a procedure which can be used to reduce the number of vertices. The input of the procedure is an undirected graph $G=(V,E)$ and a set of parent pointers $\p:V\rightarrow V$, where $\p$ is compatible with $G$. The output of the procedure will be the root of each vertex in $V$ and an undirected graph $G'=(V',E')$ which satisfies $V'=\{v\in V\mid \p(v)=v\},E'=\{(u,v)\in V'\times V'\mid u\not=v,\exists (p,q)\in E,\p^{(\infty)}(p)=u,\p^{(\infty)}(q)=v\}.$ Notice that $V'$ only contains all the roots in the forest induced by $\p,$ and $|E'|\leq |E|.$

\begin{algorithm}[t]
\caption{Tree Contraction Operation}\label{alg:tree_contraction}
\begin{algorithmic}[1]
\Procedure{\textsc{TreeContraction}}{$G=(V,E),\p: V\rightarrow V$}\Comment{Lemma~\ref{lem:contraction_properties}, Corollary~\ref{cor:tree_contract_conn}}
\State \Comment{ Output: $G'=(V',E'),\p^{(\infty)}(v)$ for all $v\in V$}
\State Initially, for each $v\in V$ let $g^{(0)}(v)\leftarrow\p(v).$ Let $V'=\emptyset,E'=\emptyset.$
\State $l \leftarrow 0.$
\For{$\exists v\in V,\p(g^{(l)}(v))\not=g^{(l)}(v)$} \label{sta:tree_contraction_condition}
    \State $l\leftarrow l+1.$
    \State For each $v\in V,$ compute $g^{(l)}(v)=g^{(l-1)}(g^{(l-1)}(v)).$\Comment{$g^{(l)}$ is $\p^{(2^l)}$}
\EndFor
\State $r \leftarrow l.$ \label{sta:r_is_final_l}\Comment{$r$ is the number of iterations, and is used in the analysis.}
\State For $v\in V,$ if $\p(v)=v,$ let $V'\leftarrow V'\cup\{v\}.$\label{sta:Vprime}
\State For $(u,v)\in E,$ if $g^{(r)}(u)\not=g^{(r)}(v),$ let $E'\leftarrow E'\cup\{(g^{(r)}(u),g^{(r)}(v))\}.$ \label{sta:Eprime}

\Comment{$\forall v\in V,$ contract $v$ to $\p^{(\infty)}(v)$}
\State \Return $g^{(r)}(v)$ as $\p^{(\infty)}(v)$ for all $v\in V,$ and $G'=(V',E')$
\EndProcedure
\end{algorithmic}
\end{algorithm}

\begin{lemma}\label{lem:contraction_properties}
Let $G=(V,E)$ be an undirected graph, $\p:V\rightarrow V$ be a set of parent pointers (See Definition~\ref{def:parent_pointers}). 
Then $\textsc{TreeContraction}(G,\p)$ (See Algorithm~\ref{alg:tree_contraction}) will output output $(G',g^{(r)})$ with $r\leq \lceil\log \dep(\p)\rceil$ satisfies the following properties:
\begin{enumerate}
\item $\forall v\in V,$ $g^{(r)}(v)=\p^{(\infty)}(v).$\label{itm:tree_return_pro1}
\item $V'=\{v\in V\mid\p(v)=v\}.$\label{itm:tree_return_pro2}
\item $E'=\{(u,v)\in V'\times V'\mid u\not=v,\exists (p,q)\in E,\p^{(\infty)}(p)=u,\p^{(\infty)}(q)=v\}.$\label{itm:tree_return_pro3}
\end{enumerate}
\end{lemma}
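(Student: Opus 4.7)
The plan is to reduce everything to a single structural claim about the doubled pointer: for every $l \geq 0$ and every $v \in V$,
\[
  g^{(l)}(v) \;=\; \p^{(\min(2^l,\,\dep_{\p}(v)))}(v),
\]
where by Definition~\ref{def:parent_pointers} the sequence $\p^{(j)}(v)$ is constant equal to $\p^{(\infty)}(v)$ once $j \geq \dep_{\p}(v)$. I would prove this by induction on $l$. The base case $l=0$ is immediate, since $g^{(0)}(v) = \p(v) = \p^{(1)}(v)$, and either $\dep_{\p}(v) \geq 1$ (giving $\p^{(1)}(v)$) or $\dep_{\p}(v)=0$ (in which case $v = \p(v) = \p^{(\infty)}(v)$). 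For the inductive step, let $w = g^{(l-1)}(v) = \p^{(2^{l-1})}(v)$ (in the stabilized sense). Applying the inductive hypothesis a second time, $g^{(l-1)}(w) = \p^{(2^{l-1})}(w) = \p^{(2^{l-1})}\bigl(\p^{(2^{l-1})}(v)\bigr) = \p^{(2^l)}(v)$, again interpreted via stabilization. So the claim holds for all $l$.

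Next I would use this claim to bound $r$. The loop exits (line~\ref{sta:tree_contraction_condition}) precisely when $\p(g^{(l)}(v)) = g^{(l)}(v)$ for every $v$, i.e., when $g^{(l)}(v)$ is a fixed point of $\p$ for every $v$. By the structural claim and the definition of $\dep(\p) = \max_v \dep_{\p}(v)$, this is guaranteed as soon as $2^l \geq \dep(\p)$, which happens no later than $l = \lceil \log \dep(\p) \rceil$. Hence $r \leq \lceil \log \dep(\p) \rceil$, and at termination $g^{(r)}(v) = \p^{(\infty)}(v)$ for every $v \in V$, giving property~\ref{itm:tree_return_pro1}.

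Property~\ref{itm:tree_return_pro2} is then just a read-off of line~\ref{sta:Vprime}, which inserts exactly those $v$ with $\p(v)=v$ into $V'$; these are precisely the roots of $\p$. For property~\ref{itm:tree_return_pro3}, line~\ref{sta:Eprime} iterates over all $(u,v) \in E$ and, whenever $g^{(r)}(u) \neq g^{(r)}(v)$, inserts $(g^{(r)}(u), g^{(r)}(v))$ into $E'$; by property~\ref{itm:tree_return_pro1} this is exactly $(\p^{(\infty)}(u), \p^{(\infty)}(v))$. Since these roots lie in $V'$, the inclusion of the algorithm's output in the set described in property~\ref{itm:tree_return_pro3} is immediate, and conversely every pair $(u',v') \in V' \times V'$ with $u' \neq v'$ realized by some $(p,q) \in E$ with $\p^{(\infty)}(p) = u'$, $\p^{(\infty)}(q) = v'$ is picked up when the loop processes $(p,q)$.

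The only subtle point, and therefore the main obstacle, is getting the inductive claim for $g^{(l)}$ to interact correctly with stabilization at the root: one needs the fact that $\p^{(j)}(v)$ is eventually constant (guaranteed by Definition~\ref{def:parent_pointers}) so that the identity $g^{(l-1)}\circ g^{(l-1)} = \p^{(2^l)}$ remains valid even when some vertices have already reached their root while others have not. Once this bookkeeping is stated cleanly, the three properties and the $\lceil\log\dep(\p)\rceil$ bound all drop out with no further work.
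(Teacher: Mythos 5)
Your proposal is correct and follows essentially the same route as the paper: you prove by induction that the doubled pointer satisfies $g^{(l)}(v)=\p^{(2^l)}(v)$ (your $\min(2^l,\dep_{\p}(v))$ wrapper is equivalent since $\p^{(j)}(v)$ stabilizes once $j\geq\dep_{\p}(v)$, as Definition~\ref{def:parent_pointers} defines $\p^{(j)}$ for all $j$), use it to bound the number of iterations by $\lceil\log\dep(\p)\rceil$, and then read properties~\ref{itm:tree_return_pro1}--\ref{itm:tree_return_pro3} directly off lines~\ref{sta:Vprime} and~\ref{sta:Eprime}. This matches the paper's Claim~\ref{cla:exponential_boost_g} and its subsequent argument.
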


\begin{proof}
One crucial observation is the following claim.
\begin{claim}\label{cla:exponential_boost_g}
$\forall l\in\{0,1,\cdots,r\},v\in V,$ we have $g^{(l)}(v)=\p^{(2^l)}(v).$
\end{claim}
\begin{proof}
The proof is by induction. When $l=0,$ $\forall v\in V, g^{(0)}(v)=\p(v)=\p^{(1)}(v),$ the claim is true. Suppose for $l-1,$ we have $\forall v\in V,g^{(l-1)}(v)=\p^{(2^{l-1})}(v),$ then $\forall v\in V,g^{(l)}(v)=g^{(l-1)}(g^{(l-1)}(v))=\p^{(2^{l-1})}(\p^{(2^{l-1})}(v))=\p^{(2^l)}(v).$ So the claim is true.
\end{proof}

If $r>\lceil \log \dep(\p)\rceil,$ then $r-1\geq \lceil\log \dep(\p)\rceil.$ Due to claim~\ref{cla:exponential_boost_g}, we have $\forall v\in V,g^{(r-1)}(v)=\p^{(2^{r-1})}(v)=\p^{(\infty)}(v).$ Due to the condition in line~\ref{sta:tree_contraction_condition}, the loop will stop when $l\leq r-1$ which leads to a contradiction to line~\ref{sta:r_is_final_l}. Thus, at the end of the algorithm, $r$ should be at most $\lceil \log \dep(\p)\rceil.$

Since we have $\forall v\in V,\p(g^{(r)}(v))=g^{(r)}(v)$ at the end of the Algorithm~\ref{alg:tree_contraction}, $\forall v\in V,g^{(r)}(v)$ must be $\p^{(\infty)}(v)$.
Then due to line~\ref{sta:Vprime} and line~\ref{sta:Eprime}, we have $V'=\{v\in V\mid \p(v)=v\},E'=\{(u,v)\in V'\times V'\mid u\not=v,\exists (p,q)\in E,\p^{(\infty)}(p)=u,\p^{(\infty)}(q)=v\}.$
\end{proof}

\begin{definition}\label{def:num_it_tree_contract}
Let $G=(V,E)$ be an undirected graph, $\p:V\rightarrow V$ be a set of parent pointers (See Definition~\ref{def:parent_pointers}). 
Then the number of iteration of $\textsc{TreeContraction}(G,\p)$ is defined as the value of $r$ at the end of the procedure.
\end{definition}

\begin{corollary}[Preserved connectivity and diameter]\label{cor:tree_contract_conn}
Let $G=(V,E)$ be an undirected graph, $\p:V\rightarrow V$ be a set of parent pointers (See Definition~\ref{def:parent_pointers}) which is compatible (See Definition~\ref{def:compatible_parent_pointers}) with $G$. Then at the end of the Algorithm~\ref{alg:tree_contraction}, $r\leq \lceil\log \dep(\p)\rceil$ and the output $(G',g^{(r)})$ will satisfy the following properties:
\begin{enumerate}
\item $\diam(G')\leq \diam(G).$ \label{itm:tree_contracted_pro1}
\item $\forall u,v\in V,\dist_G(u,v)<\infty\Rightarrow\dist_{G'}(\p^{(\infty)}(u),\p^{(\infty)}(v))<\infty.$ \label{itm:tree_contracted_pro2}
\item $\forall u,v\in V,\dist_G(u,v)<\infty\Leftarrow\dist_{G'}(\p^{(\infty)}(u),\p^{(\infty)}(v))<\infty.$ \label{itm:tree_contracted_pro3}
\end{enumerate}
\end{corollary}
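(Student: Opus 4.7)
The plan is to use Lemma~\ref{lem:contraction_properties} to describe $V'$, $E'$, and $g^{(r)}$ explicitly, and then lift/project paths between $G$ and $G'$. Throughout, compatibility of $\p$ with $G$ together with Fact~\ref{fac:same_root} will let us move freely between a vertex and its root in $G$. The bound on $r$ is immediate from Lemma~\ref{lem:contraction_properties}.

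First I would prove property~\ref{itm:tree_contracted_pro2} (projection). Given $u,v\in V$ with $\dist_G(u,v)<\infty$, take a path $u=z_0,z_1,\ldots,z_k=v$ in $G$ and consider the sequence of roots $\p^{(\infty)}(z_0),\ldots,\p^{(\infty)}(z_k)$. For each consecutive pair, either the roots coincide, or $(z_i,z_{i+1})\in E$ together with property~\ref{itm:tree_return_pro3} of Lemma~\ref{lem:contraction_properties} gives an edge $(\p^{(\infty)}(z_i),\p^{(\infty)}(z_{i+1}))\in E'$. After deleting repetitions, we obtain a walk in $G'$ from $\p^{(\infty)}(u)$ to $\p^{(\infty)}(v)$, so $\dist_{G'}(\p^{(\infty)}(u),\p^{(\infty)}(v))<\infty$.

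Next I would prove property~\ref{itm:tree_contracted_pro3} (lifting). Suppose $\dist_{G'}(\p^{(\infty)}(u),\p^{(\infty)}(v))<\infty$ with a $G'$-path $\p^{(\infty)}(u)=x_0,x_1,\ldots,x_k=\p^{(\infty)}(v)$. For each edge $(x_i,x_{i+1})\in E'$, property~\ref{itm:tree_return_pro3} of Lemma~\ref{lem:contraction_properties} gives an edge $(p_i,q_i)\in E$ with $\p^{(\infty)}(p_i)=x_i$ and $\p^{(\infty)}(q_i)=x_{i+1}$. Since $\p^{(\infty)}(u)=x_0=\p^{(\infty)}(p_0)$ and consecutive endpoints $q_{i-1},p_i$ share the root $x_i$, Fact~\ref{fac:same_root} yields $\dist_G(u,p_0)<\infty$, $\dist_G(q_{i-1},p_i)<\infty$, and $\dist_G(q_{k-1},v)<\infty$. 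Concatenating these finite-length $G$-walks with the edges $(p_i,q_i)$ produces a finite $G$-walk from $u$ to $v$, so $\dist_G(u,v)<\infty$.

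Finally, for property~\ref{itm:tree_contracted_pro1}, it suffices to show $\dist_{G'}(u',v')\le \dist_G(u',v')$ for every pair $u',v'\in V'$ with $\dist_G(u',v')<\infty$. Take a shortest $G$-path $u'=z_0,z_1,\ldots,z_k=v'$ of length $k=\dist_G(u',v')$. By property~\ref{itm:tree_return_pro2} of Lemma~\ref{lem:contraction_properties}, $\p(u')=u'$ and $\p(v')=v'$, so $\p^{(\infty)}(u')=u'$ and $\p^{(\infty)}(v')=v'$. Running the same projection argument as for property~\ref{itm:tree_contracted_pro2}, each step either collapses or becomes an edge of $G'$, so we get a $G'$-walk from $u'$ to $v'$ of length at most $k$, hence $\dist_{G'}(u',v')\le k$. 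Taking the maximum over all such pairs gives $\diam(G')\le \diam(G)$.

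The only subtle point, and the one I would be most careful about, is the diameter bound: one must make sure the projected walk in $G'$ actually realizes length at most $k$ rather than simply being finite, which uses exactly that collapsed edges contribute zero to the length while the remaining edges are genuine $E'$-edges provided by property~\ref{itm:tree_return_pro3}. Everything else is a routine use of Lemma~\ref{lem:contraction_properties} and Fact~\ref{fac:same_root}.
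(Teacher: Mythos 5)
Your proposal is correct and follows essentially the same route as the paper: both rely on Lemma~\ref{lem:contraction_properties} to characterize $V'$, $E'$, and $g^{(r)}$, then project paths through $\p^{(\infty)}$ for properties~\ref{itm:tree_contracted_pro1} and~\ref{itm:tree_contracted_pro2}, and lift $G'$-edges back to $E$-edges glued by Fact~\ref{fac:same_root} for property~\ref{itm:tree_contracted_pro3} (the paper phrases the lifting as a contradiction, yours is direct, but the underlying argument is identical). One small remark: your "it suffices" reduction for the diameter bound implicitly uses property~\ref{itm:tree_contracted_pro3} to guarantee that a $G'$-connected pair in $V'$ is also $G$-connected; this is fine since you prove property~\ref{itm:tree_contracted_pro3} first, but worth stating explicitly.
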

\begin{proof}
By Lemma~\ref{lem:contraction_properties}, we have $r\leq \lceil\log \dep(\p)\rceil,$ $V'=\{v\in V\mid \p(v)=v\}$ and $E'=\{(u,v)\in V'\times V'\mid u\not=v,\exists (p,q)\in E,\p^{(\infty)}(p)=u,\p^{(\infty)}(q)=v\}.$

For any two vertices $u,v\in V$ which are in the same connected component in $G,$ then there should be a path $u=u_1\rightarrow u_2\rightarrow\cdots\rightarrow u_p=v$ in graph $G$. So $\forall i\in [p-1],(u_i,u_{i+1})\in E$ which means that either $\p^{(\infty)}(u_i)=\p^{(\infty)}(u_{i+1})$ or $(\p^{(\infty)}(u_i),\p^{(\infty)}(u_{i+1}))\in E'.$ Thus, $\p^{(\infty)}(u_1)\rightarrow \p^{(\infty)}(u_2)\rightarrow\cdots\rightarrow \p^{(\infty)}(u_p)$ is a valid path in $G'$, and the length of this path in $G'$ is at most $p.$ Thus, the properties~\ref{itm:tree_contracted_pro1} and~\ref{itm:tree_contracted_pro2} are true.

For any two vertices $u,v\in V$ which are not in the same connected component in $G,$ but there is a path $\p^{(\infty)}(u)=u'_1\rightarrow u'_2\rightarrow \cdots\rightarrow u'_p=\p^{(\infty)}(v)$ in $G',$ then it means that there exists vertices $u_{1,1},u_{1,2},u_{2,1},u_{2,2},\cdots,u_{p,1},u_{p,2}\in V$ which satisfies
\begin{enumerate}
\item[(a)] $\forall i\in[p-1],(u_{i,2},u_{i+1,1})\in E,\p^{(\infty)}(u_{i,2})=u'_i,\p^{(\infty)}(u_{i+1,1})=u'_{i+1}.$
\item[(b)] $u_{1,1}=u,u_{p,2}=v.$
\item[(c)] $\forall i\in[p],\p^{(\infty)}(u_{i,1})=\p^{(\infty)}(u_{i,2}).$ By Fact~\ref{fac:same_root}, we have $\dist_{G}(u_{i,1},u_{i,2})<\infty.$
\end{enumerate}
Thus, there exists a path from $u$ to $v$. This contradicts to that $u,v$ are not in the same connected component. Therefore, property~\ref{itm:tree_contracted_pro3} is also true.
\end{proof}
    \subsection{Connectivity Algorithm}\label{sec:batch_leader}
In this section, we described a batch algorithm for graph connectivity/connected components problem. The input is an undirected graph $G=(V,E),$ a space/rounds trade-off parameter $m$, and the rounds parameter $r\leq |V|.$ The output is a function $\col:V\rightarrow V$ such that $\forall u,v\in V,\dist_G(u,v)<\infty\Leftrightarrow \col(u)=\col(v).$

The algorithm is described in Algorithm~\ref{alg:batch_algorithm2}. The following theorem shows the correctness of Algorithm~\ref{alg:batch_algorithm2}.

\begin{algorithm}[h]
\caption{Graph Connectivity}\label{alg:batch_algorithm2}
\begin{algorithmic}[1]
\small
\Procedure{\textsc{Connectivity}}{$G=(V,E),m,r$} \Comment{Theorem~\ref{thm:correct_rand_leader_alg}, Theorem~\ref{thm:alg2_time_prob}}
\State Output: FAIL or $\col:V\rightarrow V.$
\State $n \leftarrow |V|$
\State $\forall v\in V,$ $h_0(v)\leftarrow  \nul .$
\State $G_0=(V_0,E_0)=G,$ i.e. $V_0=V,E_0=E.$
\State $n_0=n.$
\For{$i=1\rightarrow r$}
\State $\forall v\in V,$ $h_i(v)\leftarrow  \nul .$\label{sta:alg2_init_null} \Comment{$h_i(v)$ is the vertex that $v$ contracts to}
\State $G'_i=(V'_i,E'_i)=\textsc{NeighborIncrement}(m,G_{i-1}).$  \label{sta:alg2_neighbor_incr} \Comment{Algorithm~\ref{alg:neighbor_increment}}
\State Compute $V''_i=\{v\in V'_i\mid |\Gamma_{G'_i}(v)|\geq \lceil(m/n_{i-1})^{1/2}\rceil-1\}.$\label{sta:alg2_Vdoubleprime_create}
\State Compute $E''_i=\{(u,v)\in E_{i-1} \mid u\in V''_i,v\in V''_i\}.$ \label{sta:alg2_Edoubleprime_create}
\State $G''_i=(V''_i,E''_i).$ \Comment{$G''_i$ is obtained by removing all the small components of $G_i$}
\State Let $\gamma_i=\lceil(m/n_{i-1})^{1/2}\rceil,p_i=\min((30\log(n)+100 )/ \gamma_i,1/2).$\label{sta:alg2_start_to_select_leader}
\State Let $l_i:V''_i\rightarrow \{0,1\}$ be a random hash function such that $\forall v\in V''_i,l_i(v)$ are i.i.d. Bernoulli random variables, and $\Pr(l_i(v)=1)=p_i$.\label{sta:alg2_sampling}
\State Let $L_i=\{v\in V''_i\mid l_i(v)=1\}\cup\{v\in V''_i\mid \forall u\in \Gamma_{G'_i}(v)\cup \{v\},l_i(u)=0\}.$\label{sta:alg2_leader_set}\Comment{$L_i$ are leaders}
\State $\forall v\in V''_i$ with $v\in L_i,$ let $\p_i(v)=v.$\label{sta:alg2_assign_pointer1}
\State $\forall v\in V''_i$ with $v\not\in L_i,$ let $\p_i(v)=\min_{u\in L_i\cap (\Gamma_{G'_i}(v)\cup \{v\})} u.$\label{sta:alg2_assign_pointer2}\Comment{Non-leader finds a leader.}
\State Let $((V_i,E_i),g_i^{(r'_i)})=\textsc{TreeContraction}(G''_i,\p_i).$ \label{sta:alg2_tree_contract} \Comment{Algorithm~\ref{alg:tree_contraction}}
\State $G_i=(V_i,E_i).$
\State $n_i=|V_i|.$
\State For each $v\in V'_i\setminus V''_i,$ let $h_i(v)\leftarrow \min_{u\in \Gamma_{G'_i}(v)\cup\{v\}} u.$ \label{sta:VspminusVdp}\Comment{Contract small component to one vertex}
\State For each $v\in V''_i\setminus V_i,$ let $h_i(v)\leftarrow g_i^{(r'_i)}(v).$ \label{sta:VdpminusV}\Comment{Contract non-leader to leader}
\State For each $v\in V,$ if $h_{i-1}(v)\not= \nul ,$ then let $h_i(v)=h_{i-1}(v).$ \label{sta:hlasttime}
\EndFor
\State If $n_r\not=0,$ return FAIL.
\State $((\wh{V},\wh{E}),\col)=\textsc{TreeContraction}(G,h_r).$  \label{sta:final_output_alg2} \Comment{Algorithm~\ref{alg:tree_contraction}}
\State \Return $\col.$
\EndProcedure
\end{algorithmic}
\end{algorithm}

\begin{theorem}[Correctness of Algorithm~\ref{alg:batch_algorithm2}]\label{thm:correct_rand_leader_alg}
Let $G=(V,E)$ be an undirected graph, $m\geq 4|V|$, and $r\leq |V|$ be the rounds parameter. If $\textsc{Connectivity}(G,m,r)$ (Algorithm~\ref{alg:batch_algorithm2}) does not output FAIL, then $\forall u,v\in V,$ we have $\dist_G(u,v)<\infty \Leftrightarrow \col(u)=\col(v).$
\end{theorem}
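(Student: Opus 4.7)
The plan is to prove the theorem by maintaining a loop invariant on the functions $h_i$ and the contracted graphs $G_i$, then reading off correctness after iteration $r$. To do this I introduce an auxiliary parent pointer map $\wh h_i : V \to V$ defined by $\wh h_i(v) = h_i(v)$ when $h_i(v) \neq \nul$, and $\wh h_i(v) = v$ otherwise, so that every survivor in $V_i$ is its own root and chains of $\wh h_i$ take each $v$ to its current representative $\pi_i(v) := \wh h_i^{(\infty)}(v)$. The invariant I would maintain for each $i \in \{0, 1, \ldots, r\}$ is: (I1) $\wh h_i$ is a valid set of parent pointers (Definition~\ref{def:parent_pointers}); (I2) $\wh h_i$ is compatible with $G$ (Definition~\ref{def:compatible_parent_pointers}), so by Fact~\ref{fac:same_root}, $\pi_i(u) = \pi_i(v)$ forces $\dist_G(u, v) < \infty$; (I3) for all $u, v \in V$, $\dist_G(u, v) < \infty$ iff either $\pi_i(u) = \pi_i(v)$, or $\pi_i(u), \pi_i(v) \in V_i$ and $\dist_{G_i}(\pi_i(u), \pi_i(v)) < \infty$. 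The base case $i = 0$ is immediate since $\wh h_0(v) = v$, $\pi_0(v) = v$, and $G_0 = G$.

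For the inductive step I would analyze iteration $i$ in four blocks. First, line~\ref{sta:alg2_neighbor_incr} replaces $G_{i-1}$ with $G'_i$; by Lemma~\ref{lem:properties_of_neighbor_increment}(\ref{itm:neighbor_incr_pro2})--(\ref{itm:neighbor_incr_pro3}), $G'_i$ preserves the connectivity of $G_{i-1}$, and the low-degree set $V'_i \setminus V''_i$ consists of entire clique components that are disjoint in $G'_i$ from $V''_i$. Second, line~\ref{sta:VspminusVdp} contracts each such clique to its minimum vertex $v^*$; all clique vertices receive $h_i$ value $v^*$, and $h_i(v^*) = v^*$ by the clique property, so $v^*$ becomes a new root of $\wh h_i$ outside $V_i$. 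Third, lines~\ref{sta:alg2_start_to_select_leader}--\ref{sta:alg2_assign_pointer2} build the parent map $\p_i$ on $V''_i$; the leader set $L_i$ defined in line~\ref{sta:alg2_leader_set} guarantees that every $v \in V''_i$ has at least one leader in $\Gamma_{G'_i}(v) \cup \{v\}$, so line~\ref{sta:alg2_assign_pointer2} is well-defined, and $\p_i$ is compatible with $G'_i$. Fourth, line~\ref{sta:alg2_tree_contract} invokes Algorithm~\ref{alg:tree_contraction}; by Lemma~\ref{lem:contraction_properties} and Corollary~\ref{cor:tree_contract_conn}, we obtain $V_i = V''_i \cap L_i$, $g_i^{(r'_i)}(v) = \p_i^{(\infty)}(v) \in V_i$, and $G_i$ has exactly the component structure of $G''_i$ and hence of the high-degree part of $G'_i$.

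Combining these four blocks, every $h_i$ value newly set in iteration $i$ points into the same $G$-component (via Lemma~\ref{lem:properties_of_neighbor_increment}(\ref{itm:neighbor_incr_pro2}) and the inductive hypothesis), giving (I2). For (I1) I would note that chains of $\wh h_i$ strictly increase the ``first-contraction iteration'' $\tau(v) = \min\{j : h_j(v) \neq \nul\}$ along leader hops and terminate in at most one extra step inside a clique root, so they stop in at most $r+1$ steps. For (I3), combining the clique-collapse analysis with Corollary~\ref{cor:tree_contract_conn}(\ref{itm:tree_contracted_pro2})--(\ref{itm:tree_contracted_pro3}) and the inductive hypothesis gives the required component correspondence: a ``finished'' clique at iteration $i$ corresponds to an entire $G$-component by Lemma~\ref{lem:properties_of_neighbor_increment}(\ref{itm:neighbor_incr_pro3}), and pairs whose representatives both land in $V_i$ preserve the relevant distance conditions under tree contraction. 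Finally, if the algorithm does not output FAIL, then $n_r = |V_r| = 0$, so $h_r(v) \neq \nul$ everywhere and $\wh h_r = h_r$; clause (I3) then collapses to $\dist_G(u,v) < \infty \iff \pi_r(u) = \pi_r(v)$, and line~\ref{sta:final_output_alg2} applies Algorithm~\ref{alg:tree_contraction} to $(G, h_r)$, so by Lemma~\ref{lem:contraction_properties}(\ref{itm:tree_return_pro1}) we obtain $\col(v) = h_r^{(\infty)}(v) = \pi_r(v)$, completing the proof.

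The main obstacle will be the inductive step for (I3), specifically the non-interference between the clique contraction of small components (which finishes whole $G$-components in a single iteration) and the leader contraction on $V''_i$ (which only partially contracts). One has to be careful that no edge of $G$ crosses between a small-component clique and a surviving leader, which is exactly what Lemma~\ref{lem:properties_of_neighbor_increment}(\ref{itm:neighbor_incr_pro3}) guarantees; overlooking this would leave a gap between the $G_i$-distance condition and the $G$-distance condition in (I3).
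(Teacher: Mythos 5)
Your proposal is correct and uses essentially the same ingredients and structure as the paper's proof: Lemma~\ref{lem:properties_of_neighbor_increment} (parts~\ref{itm:neighbor_incr_pro2}--\ref{itm:neighbor_incr_pro3}) for connectivity preservation and for the disjointness of low-degree cliques from $V''_i$, Lemma~\ref{lem:contraction_properties}/Corollary~\ref{cor:tree_contract_conn} for the tree-contraction step, and the termination/chain-structure argument for $h_r$. The only real difference is presentational: you bundle everything into a single all-of-$V$ inductive invariant (I3), whereas the paper factors it into Claims~\ref{cla:alg2_always_connect}, \ref{cla:whether_in_v_i}, and \ref{cla:root_equiv} (connectivity preservation restricted to the survivors $V_i$, the stabilization and rootedness of $h_r$, and a final root-equivalence claim) that are combined at the end. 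One small imprecision worth fixing: you say $\p_i$ is compatible with $G'_i$, but what is actually needed for line~\ref{sta:alg2_tree_contract} (and what Claim~\ref{cla:alg2_tree_valid} proves) is compatibility with $G''_i$; this follows because, by Lemma~\ref{lem:properties_of_neighbor_increment} part~\ref{itm:neighbor_incr_pro3}, the $G'_i$- (equivalently $G_{i-1}$-) component of any $v\in V''_i$ lies entirely inside $V''_i$, so adjacency in $G'_i$ between two $V''_i$ vertices does imply finite $G''_i$-distance.
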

\begin{proof}
Firstly, we show that the input of line~\ref{sta:alg2_tree_contract} is valid.
\begin{claim}\label{cla:alg2_tree_valid}
$\forall i\in[r],$ $\p_i$ is a set of parent pointers on $V''_i$, (See Definition~\ref{def:parent_pointers}) and is compatible (See Definition~\ref{def:compatible_parent_pointers}) with $G''_i.$
\end{claim}
\begin{proof}
$\forall v\in V''_i,$ if $v\in L_i,$ then $\p_i(v)=v.$ For $v\in V''_i\setminus L_i,$  due to property~\ref{itm:neighbor_incr_pro3} of Lemma~\ref{lem:properties_of_neighbor_increment}, we have $\p_i(v)\in V''_i.$ Since $\p_i(v)\in L_i,$ we have $\p_i(\p_i(v))=\p_i(v).$ Thus, $\p_i:V''_i\rightarrow V''_i$ is a set of parent pointers on $V''_i.$ Due to property~\ref{itm:neighbor_incr_pro2} of Lemma~\ref{lem:properties_of_neighbor_increment} and $\dist_{G'_{i}}(\p_i(v),v)<\infty$, we know that $\dist_{G_{i-1}}(\p_i(v),v)<\infty.$ Thus, $\dist_{G''_i}(\p_i(v),v)<\infty.$ It implies that $\p_i$ is compatible with $G''_i.$
\end{proof}
The following claim shows that the number of the remaining vertices cannot increase after each round.
\begin{claim}\label{cla:V_always_subset}
If $\textsc{Connectivity}(G,m,r)$ does not output FAIL, then $\forall i\in[r],V_i\subseteq V''_i\subseteq V'_i=V_{i-1}.$
\end{claim}
\begin{proof}
Let $i\in[r].$ 
Due to Claim~\ref{cla:alg2_tree_valid}, the input of line~\ref{sta:alg2_tree_contract} is valid. Then, we can apply property~\ref{itm:tree_return_pro2} of Lemma~\ref{lem:contraction_properties} to get $V_i\subseteq V''_i.$ By the construction of $V''_i$ we have $V''_i\subseteq V'_i.$ Since the procedure $\textsc{NeighborIncrement}(m,G_{i-1})$ (Algorithm~\ref{alg:neighbor_increment}) does not change the vertex set, we have $V'_i=V_{i-1}.$
\end{proof}
Now, we show that $\forall u,v\in V_i,\dist_{G_i}(u,v)<\infty \Leftrightarrow \dist_{G}(u,v)<\infty$.
\begin{claim}\label{cla:alg2_always_connect}
If $\textsc{Connectivity}(G,m,r)$ does not output {\rm FAIL}, then $\forall i\in [r],\forall u,v\in V_i,$ we have $\dist_{G_i}(u,v)<\infty \Leftrightarrow \dist_{G}(u,v)<\infty.$
\end{claim}
\begin{proof}
The proof is by induction. Suppose $\forall u,v\in V_{i-1},\dist_{G_{i-1}}(u,v)<\infty \Leftrightarrow \dist_{G}(u,v)<\infty.$ $\forall w,z\in V_i,$ according to Claim~\ref{cla:V_always_subset}, $w,z\in V''_{i}$. By property~\ref{itm:tree_contracted_pro2},\ref{itm:tree_contracted_pro3} of Lemma~\ref{cor:tree_contract_conn}, and property~\ref{itm:tree_return_pro2} of Lemma~\ref{lem:contraction_properties}, $\dist_{G_i}(w,z)<\infty\Leftrightarrow \dist_{G''_i}(w,z)<\infty.$
Due to property~\ref{itm:neighbor_incr_pro2},\ref{itm:neighbor_incr_pro3} of Lemma~\ref{lem:properties_of_neighbor_increment}, there is no edge in $E_{i-1}$ between $V''_i$ and $V'_i\setminus V''_i.$ According to Claim~\ref{cla:V_always_subset}, $w,z\in V_{i-1}$. Thus, $\dist_{G''_i}(w,z)<\infty\Leftrightarrow \dist_{G_{i-1}}(w,z)<\infty.$  By induction hypothesis, we have $\forall w,z\in V_i, \dist_{G_i}(w,z)<\infty\Leftrightarrow \dist_G(w,z).$
\end{proof}
The following claim states that once a vertex $v\in V$ is contracted to an another vertex, it will never be operated.
\begin{claim}\label{cla:whether_in_v_i}
Suppose $\textsc{Connectivity}(G,m,r)$ does not output {\rm FAIL}. $\forall i\in \{0,1,\cdots,r\},$ $v\in V,$ we have $h_i(v)= \nul \Leftrightarrow v\in V_i.$ Furthermore, $\forall v\in V,\exists j\in[r]$ such that $h_0(v)=h_1(v)=\cdots=h_{j-1}(v)= \nul $ and $h_j(v)=h_{j+1}(v)=\cdots=h_r(v)\not= \nul ,{\dist}_G(v,h_r(v))<\infty.$
\end{claim}
\begin{proof}
When $i=0,$ $\forall v\in V,$ $h_0(v)= \nul ,v\in V_0=V.$ Suppose it is true that $\forall v\in V,h_{i-1}(v)= \nul \Leftrightarrow v\in V_{i-1}.$ If $v\not\in V_i,$ according to Claim~\ref{cla:V_always_subset}, there are three cases: $v\in V''_i\setminus V_i,v\in V'_i\setminus V''_i,v\not\in V_{i-1}.$ In the first case, due to line~\ref{sta:VdpminusV}, $h_i(v)\not= \nul .$ In the second case, due to line~\ref{sta:VspminusVdp}, $h_i(v)\not= \nul ,$ In the third case, due to line~\ref{sta:hlasttime}, $h_i(v)\not= \nul .$ If $h_i(v)= \nul ,$ then $h_i(v)$ cannot be updated by line~\ref{sta:VspminusVdp}, line~\ref{sta:VdpminusV} or line~\ref{sta:hlasttime} which implies that $v\in V_{i-1},v\not\in V'_i\setminus V''_i,v\not\in V''_i\setminus V_i.$ Thus, $v\in V_i.$

Since the procedure does not FAIL, we have $n_r=0$ which means that $\forall v\in V,h_r(v)\not= \nul .$
Notice that by line~\ref{sta:hlasttime}, if $h_{i-1}(v)\not= \nul ,$ then $h_{i}(v)=h_{i-1}(v).$ Thus, $\forall v\in V,\exists j\in[r]$ such that $h_0(v)=h_1(v)=\cdots=h_{j-1}(v)= \nul $ and $h_j(v)=h_{j+1}(v)=\cdots=h_r(v)\not= \nul .$

For $v\in V,$ if $h_j(v)\not= \nul $ and $h_{j-1}(v)= \nul ,$ then $h_j(v)$ can only be updated by ~\ref{sta:VspminusVdp} or line~\ref{sta:VdpminusV}. In both cases, $\dist_{G_{j-1}}(v,h_j(v))<\infty.$ By Claim~\ref{cla:alg2_always_connect}, we have that $\dist_{G}(v,h_j(v))<\infty.$
\end{proof}

In the following, we show that $h_r$ is a rooted tree such that $\dist_G(u,v)<\infty\Leftrightarrow u,v$ have the same root. Due to Claim~\ref{cla:whether_in_v_i}, if $\textsc{Connectivity}(G,m,r)$ does not output FAIL, then $n_r=0$ which implies that $\forall v\in V,h_r(v)\not= \nul .$ Thus, we can define $h_r^{(k)}(v)$ for $k\in\mathbb{Z}_{>0}$ as applying $h_r$ on $v$ $k$ times. $\forall v\in V,$ by Claim~\ref{cla:whether_in_v_i}, let $j\in[r]$ satisfy that $h_j(v)\not= \nul $ and $h_{j-1}(v)= \nul .$ If $h_j(v)$ is updated by~line~\ref{sta:VdpminusV}, then $h_j(h_j(v))= \nul .$ If $h_j(v)$ is updated by~line~\ref{sta:VspminusVdp}, then $h_j(h_j(v))=h_j(v).$ In both cases, $h_j$ cannot create a cycle. Thus, we can define $h_r^{(\infty)}(v)=h_r^{(k)}(v)$ for some $k$ which satisfies $h_r(h_r^{(k)}(v))=h_r^{(k)}(v).$

\begin{claim}\label{cla:root_equiv}
Suppose $\textsc{Connectivity}(G,m,r)$ does not output {\rm FAIL}. Then $\forall u,v\in V,$ we have $\dist_G(u,v)<\infty\Leftrightarrow h_r^{(\infty)}(u)=h_r^{(\infty)}(v).$
\end{claim}
\begin{proof}
Let $u,v\in V.$ By Claim~\ref{cla:whether_in_v_i}, if $h_r^{\infty}(u)=h_r^{\infty}(v)$ we have $\dist_G(u,v)<\infty.$

If $\dist_G(u,v)<\infty,$ then let $u'=h_r^{(\infty)}(u),v'=h_r^{(\infty)}(v).$ By Claim~\ref{cla:whether_in_v_i}, $\dist_G(u',v')\leq \dist_G(u,u')+\dist_G(u,v)+\dist_G(v,v')<\infty,$ and we can find $j\in[r]$ such that $h_j(u')\not= \nul ,h_{j-1}(u')= \nul .$ Without loss of generality, we can assume $h_{j-1}(v')= \nul $ (otherwise we can swap $u'$ and $v'$). Due to Claim~\ref{cla:whether_in_v_i}, $u',v'\in V_{i-1}.$ Since $h_j(u')=h_r(u')=u',$ $h_j(u')$ can be only updated by line~\ref{sta:VspminusVdp}, and $u'\in V'_j\setminus V''_j$. Then due to property~\ref{itm:neighbor_incr_pro3} of Lemma~\ref{lem:contraction_properties}, $v'$ should be in $\Gamma_{G'_i}(u)\cup\{u\}.$ Since $h_j(v')=h_r(v')=v',$ we can conclude that $u'=v'.$
\end{proof}

If $\textsc{Connectivity}(G,m,r)$ does not output FAIL, then in line~\ref{sta:final_output_alg2}, $\col$ is exactly $h_r^{(\infty)}.$ By Claim~\ref{cla:root_equiv}, we have $\forall u,v\in V,\dist_G(u,v)<\infty\Leftrightarrow \col(u)=\col(v).$

\end{proof}

Now let us consider the number of iterations of Algorithm~\ref{alg:batch_algorithm2} and the success probability.

\begin{definition}[Total iterations]\label{def:total_iter_connectivity}
Let $G=(V,E)$ be an undirected graph, $\poly(n)\geq m> 4n,$ and $r\leq n$ be the rounds parameter where $n$ is the number of vertices in $G$.
The total number of iterations of $\textsc{Connectivity}(G,m,r)$ (Algorithm~\ref{alg:batch_algorithm2}) is defined as $\sum_{i=1}^r (k_i+r'_i),$
where $k_i$ denotes the number of iterations (See Definition~\ref{def:neighbor_incr_num_iter}) of $\textsc{NeighborIncrement}(m,G_{i-1})$ (see line~\ref{sta:alg2_neighbor_incr}), and $r_i'$ denotes the number of iterations (See Definition~\ref{def:num_it_tree_contract}) of $\textsc{TreeContraction}(G''_i,\p_i)$ (see line~\ref{sta:alg2_tree_contract}).
\end{definition}

\begin{theorem}[Success probability and total iterations]\label{thm:alg2_time_prob}
Let $G=(V,E)$ be an undirected graph, $\poly(n)\geq m> 4n,$ and $r\leq n$ be the rounds parameter where $n=|V|$. Let $c>0$ be a sufficiently large constant. If $r\geq c\log \log_{m/n} (n)$, then with probability at least $0.98$, $\textsc{Connectivity}(G,m,r)$ (Algorithm~\ref{alg:batch_algorithm2}) will not return {\rm FAIL}.
If $\textsc{Connectivity}(G,m,r)$ succeeds,
 let $k_i$ denote the number of iterations (See Definition~\ref{def:neighbor_incr_num_iter}) of $\textsc{NeighborIncrement}(m,G_{i-1})$ (see line~\ref{sta:alg2_neighbor_incr}), and let $r'_i$ denote the number of iterations of (See Definition~\ref{def:num_it_tree_contract}) of $\textsc{TreeContraction}(G''_i,\p_i)$ (see line~\ref{sta:alg2_tree_contract}),
then
\begin{enumerate}
\item $\forall i\in[r],$ $r_i'= 0$. \label{itm:alg2_suc_pro1}
\item $\forall i\in[r],$ $k_i$ is at most $\lceil\log(\diam(G))\rceil+1.$ \label{itm:alg2_suc_pro2}
\item The number of iterations of line~\ref{sta:final_output_alg2} is at most $\lceil\log r\rceil.$ \label{itm:alg2_suc_pro3}
\item $\sum_{i=1}^{r} k_i\leq O(r\log(\diam(G))).$ \label{itm:alg2_suc_pro4}
\end{enumerate}
Let $c_1>0$ be a sufficiently large constant. If $m\geq c_1 n\log^4 n,$ then with probability at least $0.99,$ $\sum_{i=1}^{r} k_i\leq O(\log(\diam(G))\log\log_{\diam(G)}(n)).$ If $m< c_1 n\log^4 n$, then with probability at least $0.98,$ $\sum_{i=1}^{r} k_i\leq O(\log(\diam(G))\log\log_{\diam(G)}(n)+(\log\log(n))^2).$
\end{theorem}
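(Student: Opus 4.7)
First, I would separate the conclusions into the deterministic/structural claims and the probabilistic ones, handling them in that order. The identity $r'_i=0$ is immediate: by construction in lines~\ref{sta:alg2_assign_pointer1}--\ref{sta:alg2_assign_pointer2}, every leader points to itself and every non-leader points directly to a leader, so $\dep(\p_i)\le 1$; Lemma~\ref{lem:contraction_properties} then guarantees at most $\lceil\log 1\rceil=0$ iterations inside the tree contraction of line~\ref{sta:alg2_tree_contract}. For the per-phase bound $k_i\le\lceil\log\diam(G)\rceil+1$, I would show inductively that $\diam(G_{i-1})\le\diam(G)$: the contraction step of Corollary~\ref{cor:tree_contract_conn} shrinks the diameter, and the neighbor-increment step of Lemma~\ref{lem:properties_of_neighbor_increment} adds only edges whose endpoints are already in a common component of $G_{i-1}$. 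Applying property~\ref{itm:neighbor_incr_pro1} of Lemma~\ref{lem:properties_of_neighbor_increment} then yields the bound; the bound on the number of iterations of the final tree contraction in line~\ref{sta:final_output_alg2} follows by arguing that $\dep(h_r)\le r$ (a vertex whose $h$-value is first assigned in phase $j$ is either a small-component minimum and a fixed point of $h_r$, or climbs at most one step per subsequent phase toward a surviving root) and applying Lemma~\ref{lem:contraction_properties} once more.

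For the success probability, I would invoke Lemma~\ref{lem:random_leader_props} phase by phase with $\gamma=\gamma_i$ and failure parameter $\delta=n^{-\Theta(1)}$, using the choice $p_i\ge (30\log n+100)/\gamma_i$ from line~\ref{sta:alg2_start_to_select_leader} to meet the hypothesis. On the success event the lemma gives both (i) every $v\in V''_i$ has a sampled leader in its closed $G'_i$-neighborhood (so line~\ref{sta:alg2_assign_pointer2} is well defined), and (ii) the next vertex count satisfies $n_i\le\tfrac{3}{2}p_i n_{i-1}=\wt O(n_{i-1}/\gamma_i)=\wt O(n_{i-1}^{3/2}/m^{1/2})$. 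This is the double-exponential recursion of Remark~\ref{rem:double_exponential_progress}, which drives $n_i$ to $0$ within $O(\log\log_{m/n}n)$ phases and explains the constant $c$ in the hypothesis $r\ge c\log\log_{m/n}n$. When $p_i$ saturates at $1/2$ (i.e.\ $\gamma_i\lesssim\log n$) I would fall back on Lemma~\ref{lem:random_leader_low_prob} to obtain constant-factor expected shrinkage, boosting via a standard concentration argument. A union bound over the $r\le n$ phases keeps the total failure probability below $0.02$.

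To control $\sum_i k_i$, I would combine the two estimates furnished by property~\ref{itm:neighbor_incr_pro1} of Lemma~\ref{lem:properties_of_neighbor_increment}, namely $k_i\le\min(\lceil\log\diam(G_{i-1})\rceil,\lceil\log(m/n_{i-1})\rceil)+1$, with $\diam(G_{i-1})\le\diam(G)$ as above. The coarse bound $\sum_i k_i=O(r\log\diam(G))$ is then immediate. For the sharper estimates, in the dense regime $m\ge c_1 n\log^4 n$ the sampling probability $p_i$ never saturates, so the double-exponential recursion runs at full speed, only $O(\log\log_{\diam(G)}n)$ phases are needed to reach $n_i=0$, and each contributes at most $O(\log\diam(G))$, yielding the bound $O(\log\diam(G)\log\log_{\diam(G)}n)$. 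The main obstacle, and the step I expect to require the most care, is the sparse regime $m<c_1 n\log^4 n$: here up to $O(\log\log n)$ slow tail phases may occur with $\gamma_i\lesssim\log n$, in which one relies on Lemma~\ref{lem:random_leader_low_prob} and the additional bound $\log(m/n_{i-1})=O(\log\log n)$ from the second half of property~\ref{itm:neighbor_incr_pro1}. These slow phases contribute the additive $(\log\log n)^2$ term, and a careful two-tier union bound explains the split between the $0.99$ and $0.98$ confidence levels.
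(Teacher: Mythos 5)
Your decomposition and the bulk of the plan track the paper's proof closely: $r'_i=0$ from $\dep(\p_i)\le 1$ and Lemma~\ref{lem:contraction_properties}, the per-phase bound $k_i\le\lceil\log\diam(G)\rceil+1$ from diameter monotonicity along $G_{i-1}\mapsto G'_i\mapsto G''_i\mapsto G_i$ plus property~\ref{itm:neighbor_incr_pro1} of Lemma~\ref{lem:properties_of_neighbor_increment}, the final-contraction bound from $\dep(h_r)\le r$, and the probability analysis via Lemma~\ref{lem:random_leader_props} (with Lemma~\ref{lem:random_leader_low_prob} as the saturated fallback and a union bound over phases). All of this matches the paper.

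The genuine gap is in your derivation of the refined bound on $\sum_i k_i$. You assert that in the dense regime ``only $O(\log\log_{\diam(G)}n)$ phases are needed to reach $n_i=0$'' and multiply by a per-phase cost of $O(\log\diam(G))$. That count is wrong: the number of phases to annihilate $n_r$ is governed by the double-exponential growth of $m/n_i$, which starts at $m/n$ and has nothing to do with $\diam(G)$; it is $\Theta(\log\log_{m/n}n)$. For a concrete witness take $m=n\log^4 n$ and $\diam(G)=\sqrt n$: here $\log\log_{\diam(G)}n=O(1)$ but the number of phases is $\Theta(\log\log n)$. The way the paper salvages the claimed bound is by actually \emph{using} the full strength of the min you quote, $k_i\le\min\bigl(\lceil\log\diam(G)\rceil,\lceil\log(m/n_{i-1})\rceil\bigr)+1$, and splitting $\sum_i k_i$ into the phases with $m/n_{i-1}\le\diam(G)$ and those with $m/n_{i-1}>\diam(G)$. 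In the first group $k_i=O(\log(m/n_{i-1}))$, and since $m/n_{i}\ge(m/n_{i-1})^{1.25}$ these quantities grow geometrically, so the sum is dominated by its last term and is $O(\log\diam(G))$; in the second group each $k_i=O(\log\diam(G))$ and only $O(\log\log_{\diam(G)}n)$ phases remain once $m/n_{i-1}$ has passed $\diam(G)$. Without this split you have no handle on the contribution of the (possibly many) early phases with $m/n_{i-1}<\diam(G)$, and the same decomposition is still needed to bound the post-$i^*$ phases in the sparse regime after you have accounted for the $O((\log\log n)^2)$ cost of the slow phases.
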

\begin{proof}
Suppose $\textsc{Connectivity}(G,m,r)$ succeeds. 
Property~\ref{itm:alg2_suc_pro1} follows by $\forall v\in V''_i,\p_i(\p_i(v))=\p_i(v)$ and Lemma~\ref{lem:contraction_properties}. Property~\ref{itm:alg2_suc_pro2} follows by $\diam(G_r)\leq \diam(G''_r)\leq \diam(G'_r)\leq \diam(G_{r-1})\leq \diam(G''_{r-1})\leq \diam(G'_{r-1})\leq \cdots \leq \diam(G_0)=\diam(G)$ and property~\ref{itm:neighbor_incr_pro1} of Lemma~\ref{lem:properties_of_neighbor_increment}. Property~\ref{itm:alg2_suc_pro3} follows by the depth of $h_r$ is at most $r$ and Lemma~\ref{lem:contraction_properties}.
Property~\ref{itm:alg2_suc_pro4} follows by property~\ref{itm:alg2_suc_pro2}.

Now let us prove the success probability. Let $i\in[r].$
If $p_i<0.5,$ then we can apply Lemma~\ref{lem:random_leader_props} on vertex set $V''_i,$ parameter $\gamma_i,$ and hash function $l_i.$ Notice that the set $S_v$ in the statement of Lemma~\ref{lem:random_leader_props} is $\Gamma_{G'_i}(v)$ in the algorithm. Notice that $|V''_i|\leq n.$ Then in the $i^{\text{th}}$ round, if $p_i<0.5,$ then with probability at most $1/(100n^2),$ $L_i$ will be $\{v\in V''_i\mid l_i(v)=1\},$ and $n_i=|L_i|\leq 1.5p_in_{i-1}.$ By taking union bound over all $i\in[r],$ we have that with probability at least $0.99,$ event $\mathcal{E}$ happens: for all $i\in[r],$ if $p_i<0.5,$ then  $n_i\leq 1.5p_i n_{i-1}\leq 0.75n_{i-1}.$ Suppose $\mathcal{E}$ happens. For $i\in[r],p_i=0.5,$ if we apply Lemma~\ref{lem:random_leader_low_prob}, then condition on $n_{i-1},$ we have $\E(n_i)\leq 0.75 n_{i-1}.$ Thus, we know $\forall i\in[r],\E(n_i)\leq 0.75 \E(n_{i-1})\leq 0.75^i n.$

Next, we discuss the case for $p_0=0.5$ and the case for $p_0<0.5$ separately.

If $p_0=0.5,$ then $m\leq n\cdot (600\log n)^4.$
By Markov's inequality, when $i^*\geq 4\log_{4/3}(6000\log n),$ with probability at least $0.99,$ $n_{i^*}\leq n/(600\log n)^4$ and thus $p_{i^*}<0.5.$ Condition on this event and $\mathcal{E}$, we have
{\small
\begin{align*}
n_r&\leq \frac{\left(\frac{\left(\frac{n_{i^*}^{1.5}}{m^{0.5}}(45\log n+150)\right)^{1.5}}{m^{0.5}}(45\log n+150)\right)^{\cdots}}{\cdots}&~\text{(Apply $r'=r-i^*$ times)}\\
&= \frac{n_{i^*}^{1.5^{r'}}}{m^{1.5^{r'}-1}}(45\log n+150)^{2\cdot(1.5^{r'}-1)}\\
&=n_{i^*}/(m/n_{i^*})^{1.5^{r'}-1}\cdot (45\log n+150)^{2\cdot(1.5^{r'}-1)}\\
&\leq n/\left(m/\left(n_{i^*}(45\log n+150)^2\right)\right)^{1.5^{r'}-1}\\
&\leq n/\left(m/\left(n_{i^*}(45\log n+150)^2\right)\right)^{1.5^{r'/2}}\\
&\leq n/\left(m/n\right)^{1.5^{r'/2}}
\leq \frac{1}{2},
\end{align*}
}
where the second inequality follows by $n_{i^*}\leq n,$ the third inequality follows by $r'\geq 5,$ the forth inequality follows by $n_{i^*}\leq n/(600\log n)^4,$ and the last inequality follows by $r'\geq \frac{2}{\log 1.5}\log\log_{m/n}(2n).$
Since $4n\leq m\leq n\cdot (600\log n)^4,\log\log_{m/n} n=\Theta(\log\log n).$ Let $c>0$ be a sufficiently large constant.
Thus, when $r\geq c\log\log_{m/n} n\geq  i^*+r'=4\log(6000\log n)/\log(4/3)+\frac{2}{\log 1.5}\log\log_{m/n}(2n),$ with probability at least $0.98,$ $\textsc{Connectivity}(G,m,r)$ will not fail. 

Since property~\ref{itm:neighbor_incr_pro1} of Lemma~\ref{lem:properties_of_neighbor_increment}, we have $k_i\leq O(\log (\min(m/n_{i-1},\diam(G)))).$ Thus,
{\small
\begin{align*}
 & ~ \sum_{i=1}^r k_i 
 = \sum_{i=1}^{i^*} k_i +  \sum_{i=i^*+1}^{r} k_i
\leq  O\left((\log \log n)^2\right)+ \sum_{i=i^*+1}^{r} k_i\\
\leq & ~ O\left((\log \log n)^2\right)+ \sum_{i:i\geq i^*+1,m/n_{i-1}\leq \diam(G)} k_i+\sum_{i:i\leq r,m/n_{i-1}> \diam(G)} k_i\\
\leq & ~ O\left((\log \log n)^2\right)+ O\left(\sum_{i=0}^{\lceil\log_{1.25}\log_2 (\diam(G))\rceil}\log( 2^{1.25^i})\right)+ O\left(\sum_{i=0}^{\lceil\log_{1.25}\log_{\diam(G)} (m)\rceil}\log( \diam(G))\right)\\
\leq & ~ O\left((\log \log n)^2\right)+O(\log(\diam(G)))+O(\log(\diam(G))\log\log_{\diam(G)} (n))\\
\leq & ~ O(\log(\diam(G))\log\log_{\diam(G)}(n)+(\log\log(n))^2),
\end{align*}
}
where the first inequality follows by $i^*=O(\log \log n)$ and $\forall i\leq[i^*], m/n_{i-1}\leq \poly(\log n),$ the third inequality follows by $m/n_{i+1}\geq (m/n_i)^{1.5}/(45\log n+150)\geq (m/n_i)^{1.25}.$

If $m> n\cdot (600\log n)^4,$ then $\forall i\in\{0\}\cup[r-1],$ we have $p_i<0.5.$ Since $\mathcal{E}$ happens. We have:
{\small
\begin{align*}
n_r&\leq \frac{\left(\frac{\left(\frac{n^{1.5}}{m^{0.5}}(45\log n+150)\right)^{1.5}}{m^{0.5}}(45\log n+150)\right)^{\cdots}}{\cdots}&~\text{(Apply $r$ times)}\\
&= \frac{n^{1.5^r}}{m^{1.5^{r}-1}}(45\log n+150)^{2\cdot(1.5^{r}-1)}\\
&=n/(m/n)^{1.5^{r}-1}\cdot (45\log n+150)^{2\cdot(1.5^{r}-1)}\\
&= n/\left(m/\left(n(45\log n+150)^2\right)\right)^{1.5^{r}-1}\\
&\leq n/\left(m/\left(n(45\log n+150)^2\right)\right)^{1.5^{r/2}}\\
&\leq n/\left(m/\left(  n(200\log n)^2\right)\right)^{1.5^{r/2}}\\
&\leq \frac{1}{2},
\end{align*}}
where the second inequality follows by $r\geq 5,$ the third inequality follows by $45\log n+150\leq 200\log n,$ and the last inequality follows by
$$ r\geq c\log\log_{m/n}n\geq 2\log_{1.5} \log_{(m/n)^{1/2}} 2n \geq 2\log_{1.5} \log_{m/(n(200\log n)^2)} 2n$$ for a sufficiently large constant $c>0.$ 

By property~\ref{itm:neighbor_incr_pro1} of Lemma~\ref{lem:properties_of_neighbor_increment}, we have $k_i\leq O(\log (\min(m/n_{i-1},\diam(G)))).$ Thus,
{\small
\begin{align*}
\sum_{i=1}^r k_i&\leq \sum_{m/n_{i-1}\leq \diam(G)} k_i+\sum_{m/n_{i-1}> \diam(G)} k_i\\
&\leq O\left(\sum_{i=0}^{\lceil\log_{1.25}\log_2 (\diam(G))\rceil}\log( 2^{1.25^i})\right)+O\left(\sum_{i=0}^{\lceil\log_{1.25}\log_{\diam(G)} (m)\rceil}\log( \diam(G))\right)\\
&\leq O(\log(\diam(G)))+O(\log(\diam(G))\log\log_{\diam(G)} (n)),
\end{align*}
}
where the first inequality follows by $m/n_{i+1}\geq (m/n_i)^{1.5}/(45\log n+150)\geq (m/n_i)^{1.25}.$

Since $n_r$ is an integer, $n_r$ must be $0$ when $n_r\leq 1/2.$ Let $c>0$ be a sufficiently large constant. For all $m\geq 4n,$ if $r\geq c\log\log_{m/n} n$ then $\textsc{Connectivity}(G,m,r)$ will succeed with probability at least $0.98.$

\end{proof}

\section{Spanning Forest}
\label{sec:sf}

    \subsection{Local Shortest Path Tree}\label{sec:local_short_tree}

In this section, we introduce an important procedure which will be used in the spanning tree algorithm. Roughly speaking, our procedure can merge several local shortest path trees into a larger local shortest path tree. Before we describe the details of the procedure, let us look at some concepts.

\begin{definition}[Local shortest path tree (LSPT)]\label{def:local_short_tree}
Let $V'$ be a set of vertices, $v$ be a vertex in $V',$ and $\p:V'\rightarrow V'$ be a set of parent pointers (See Definition~\ref{def:parent_pointers}) on $V'$ which satisfies that $v$ is the only root of $\p.$ Let $T=(V',\p).$ Given an undirected graph $G=(V,E),$ if $V'\subseteq V$ and $\forall u\in V'\setminus\{v\},(u,\p(u))\in E,\dep_{\p}(u)=\dist_G(u,v),$ then we say $T$ is a local shortest path tree (LSPT) in $G$, and $T$ has root $v$. The vertex set ($V'$ in the above) in $T$ is denoted as $V_T$. The set of parent pointers ($\p$ in the above) in $T$ is denoted as $\p_T$. For short, $\dep_{\p_{T}}$ is denoted as $\dep_{T}, $ and $\dep(\p(T))$ is denoted as $\dep(T).$
\end{definition}

\begin{definition}\label{def:centered_ball}
Given an undirected graph $G=(V,E),$ a vertex $v\in V,$ and $s\in\mathbb{Z}_{\geq 0},$ we define the ball centered at $v$ with radius $s$ as the set $B_{G,s}(v)=\{u\in V\mid \dist_G(u,v)\leq s\}.$
\end{definition}
If in the context graph $G$ is clear, then we use $B_s(v)$ to denote $B_{G,s}(v).$

\begin{definition}[Local complete shortest path tree (LCSPT)]\label{def:local_complete_tree}
Given an undirected graph $G=(V,E),$ $s\in \mathbb{Z}_{\geq 0}$ and a local shortest path tree $T=(V_T,\p_T)$ in $G$ where $T$ has root $v\in V$. If $V_T=B_{G,s}(v),$ then we call $T$ a local complete shortest path tree (LCSPT) in $G$. The root of $T$ is $v$. The radius of $T$ is $s$.
\end{definition}

Let $\wt{T}=(V_{\wt{T}},\p_{\wt{T}})$ with radius $s_1\in \mathbb{Z}_{\geq 0}$ and root $v$ be a local complete shortest path tree in some graph $G=(V,E)$. For $s_2\in\mathbb{Z}_{\geq 0},$ if for every $u\in V_{\wt{T}},$ we have a local complete shortest path tree $T(u)=(V_{T(u)},\p_{T(u)})$ with root $u$ and radius $s_2,$ then we can compute a larger local complete shortest path tree $\wh{T}$ with root $v$ and radius $s_1+s_2.$ The procedure is described in Algorithm~\ref{alg:merge_short_tree}.

\begin{algorithm}[h]
\caption{Local Complete Shortest Path Tree Expansion}\label{alg:merge_short_tree}
\begin{algorithmic}[1]
\small
\Procedure{\textsc{TreeExpansion}}{$\wt{T},\dep_{{\wt{T}}},\{T(u)\mid u\in V_{\wt{T}}\},\{\dep_{{T(u)}}\mid u\in V_{\wt{T}}\}$} \Comment{Lemma~\ref{lem:merge_short_tree}}

\Comment{$\wt{T}=(V_{\wt{T}},\p_{\wt{T}})$ with root $v$ and radius $s_1$ is a LCSPT in graph $G=(V,E).$}

\Comment{$\dep_{{\wt{T}}}:V_{\wt{T}}\rightarrow\mathbb{Z}_{\geq 0}$ records the depth of every vertex in $\wt{T}$.}

\Comment{$\forall u\in V_{\wt{T}}, T(u)=(V_{T(u)},\p_{T(u)})$ with root $u$ and radius $s_2$ is a LCSPT in $G$.}

\Comment{$\forall u\in V_{\wt{T}},\dep_{T(u)}:V_{T(u)}\rightarrow\mathbb{Z}_{\geq 0}$ records the depth of every vertex in $T(u).$ }
\State Output: $\wh{T}=(V_{\wh{T}},\p_{\wh{T}}), \dep_{{\wh{T}}}.$


\State Let $V_{\wh{T}}=\bigcup_{u\in \wt{T}} V_{T(u)}.$\label{sta:lcspt_merge_vertex}
\State $\forall x\in V_{\wt{T}},\p_{\wh{T}}(x)\leftarrow \p_{\wt{T}}(x).$\label{sta:lcspt_original_pointer}
\State $\forall x\in V_{\wt{T}},h(x)\leftarrow\dep_{{\wt{T}}}(x).$\label{sta:lcspt_original_depth}
\State $\forall x\in V_{\wh{T}}\setminus V_{\wt{T}},u_x\leftarrow \underset{u:u\in V_{\wt{T}},x\in V_{T(u)}}{\arg\min}\dep_{{\wt{T}}}(u)+\dep_{{T(u)}}(x),\p_{\wh{T}}(x)\leftarrow \p_{T(u_x)}(x).$\label{sta:lcspt_find_ux}

\Comment{$u_x$ is on the shortest path from $x$ to $v$.}
\State $\forall x\in V_{\wh{T}}\setminus V_{\wt{T}},h(x)\leftarrow \dep_{{\wt{T}}}(u_x)+\dep_{{T(u_x)}}(x).$\label{sta:lcspt_compute_depth}
\State \Return $\wh{T}=(V_{\wh{T}},\p_{\wh{T}}),$ and return $h:V_{\wh{T}}:\rightarrow \mathbb{Z}_{\geq 0}$ as $\dep_{{\wh{T}}}.$
\EndProcedure
\end{algorithmic}
\end{algorithm}

\begin{lemma}\label{lem:merge_short_tree}
Let $G=(V,E)$ be an undirected graph, $s_1,s_2\in\mathbb{Z}_{\geq 0},$ and $v\in V.$ Let $\wt{T}=(V_{\wt{T}},\p_{\wt{T}})$ with root $v$ and radius $s_1$ be a local complete shortest path tree in $G,$ and $\dep_{{\wt{T}}}:V_{\wt{T}}\rightarrow\mathbb{Z}_{\geq 0}$ be the depth of every vertex in $\wt{T}$. $\forall u\in V_{\wt{T}},$ let $T(u)$ with root $u$ and radius $s_2$ be a local complete shortest path tree in $G,$ and $\dep_{{T(u)}}:V_{T(u)}\rightarrow \mathbb{Z}_{\geq 0}$ be the depth of every vertex in $T(u)$. Let $(\wh{T}=(V_{\wh{T}},\p_{\wh{T}}),\dep_{{\wh{T}}})=\textsc{TreeExpansion}(\wt{T},\dep_{{\wt{T}}},\{T(u)\mid u\in V_{\wt{T}}\},\{\dep_{{T(u)}}\mid u\in V_{\wt{T}}\})$ (Algorithm~\ref{alg:merge_short_tree}), then $\wh{T}$ is a local complete shortest path tree with root $v$ and radius $s_1+s_2$ in $G$. In addition, $\dep_{{\wh{T}}}$ records the depth of every vertex in $\wh{T}$.
\end{lemma}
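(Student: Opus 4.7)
The plan is to verify in turn the three properties required by Definition~\ref{def:local_complete_tree}: that $V_{\wh{T}}$ coincides with the ball $B_{G,s_1+s_2}(v)$; that $\p_{\wh{T}}$ is a valid set of parent pointers whose unique root is $v$ and for which each edge $(x,\p_{\wh{T}}(x))$ lies in $E$ with $\dep_{\p_{\wh{T}}}(x)=\dist_G(x,v)$; and that the returned $h$ equals $\dep_{\wh{T}}$.

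First I would handle the vertex set. By line~\ref{sta:lcspt_merge_vertex}, $V_{\wh{T}}=\bigcup_{u\in V_{\wt{T}}}V_{T(u)}$. Any $x\in V_{T(u)}$ satisfies $\dist_G(x,u)\le s_2$ and $\dist_G(u,v)\le s_1$, so $x\in B_{G,s_1+s_2}(v)$. Conversely, for $x$ with $\dist_G(x,v)\le s_1+s_2$, pick any shortest $v$-to-$x$ path and let $u^\star$ be the vertex on that path at distance exactly $\min(s_1,\dist_G(x,v))$ from $v$. Then $u^\star\in B_{G,s_1}(v)=V_{\wt{T}}$ and $\dist_G(u^\star,x)=\max(0,\dist_G(x,v)-s_1)\le s_2$, so $x\in V_{T(u^\star)}\subseteq V_{\wh{T}}$.

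The quantitative core of the argument is to show that for every $x\in V_{\wh{T}}\setminus V_{\wt{T}}$, the minimizer $u_x$ from line~\ref{sta:lcspt_find_ux} satisfies $\dep_{\wt{T}}(u_x)+\dep_{T(u_x)}(x)=\dist_G(x,v)$. The triangle inequality gives
\[
\dep_{\wt{T}}(u)+\dep_{T(u)}(x)=\dist_G(u,v)+\dist_G(u,x)\ge \dist_G(x,v)
\]
for every admissible $u$; conversely the specific $u^\star$ constructed above achieves equality by its definition on a shortest path. Hence $h(x)=\dist_G(x,v)$ by line~\ref{sta:lcspt_compute_depth}; for $x\in V_{\wt{T}}$ the equality $h(x)=\dep_{\wt{T}}(x)=\dist_G(x,v)$ is inherited from $\wt{T}$ being an LCSPT by lines~\ref{sta:lcspt_original_pointer}--\ref{sta:lcspt_original_depth}.

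Finally I would verify the parent-pointer structure by induction on $k=\dist_G(x,v)$. The base case $k=0$ forces $x=v$ and $\p_{\wh{T}}(v)=\p_{\wt{T}}(v)=v$, so $v$ is a root of depth $0$. For $k\ge 1$, if $x\in V_{\wt{T}}$ then $\p_{\wh{T}}(x)=\p_{\wt{T}}(x)$ is a neighbor of $x$ in $G$ at $G$-distance $k-1$ from $v$, by the LSPT property of $\wt{T}$. Otherwise $\p_{\wh{T}}(x)=\p_{T(u_x)}(x)$ is a neighbor of $x$ in $G$ (LSPT property of $T(u_x)$), and combining $\dist_G(\p_{T(u_x)}(x),v)\ge k-1$ (neighbor bound) with $\dist_G(\p_{T(u_x)}(x),v)\le \dep_{\wt{T}}(u_x)+(\dep_{T(u_x)}(x)-1)=k-1$ (triangle inequality routed through $u_x$, using the quantitative claim) gives equality. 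Applying the induction hypothesis to $\p_{\wh{T}}(x)$ yields $\dep_{\p_{\wh{T}}}(x)=k=h(x)=\dist_G(x,v)$; since every non-$v$ vertex's pointer strictly decreases the $G$-distance to $v$, it also follows that $v$ is the unique root and $\p_{\wh{T}}$ is a valid set of parent pointers on $V_{\wh{T}}$. The main obstacle is the quantitative claim, because the $\arg\min$ has access only to purely local information from the pre-existing trees; the point is that the LCSPT hypothesis on $\wt{T}$ guarantees that every prefix of length $\le s_1$ of a true shortest path out of $v$ actually lies inside $V_{\wt{T}}$, so the ideal ``breakpoint'' $u^\star$ is a feasible candidate for the minimization.
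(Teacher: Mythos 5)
Your proof is correct and follows essentially the same route as the paper's: characterize $V_{\wh{T}}=B_{G,s_1+s_2}(v)$ via unions of balls, establish $h(x)=\min_u\bigl(\dist_G(v,u)+\dist_G(u,x)\bigr)=\dist_G(v,x)$, and then induct on $\dist_G(x,v)$ to propagate the parent-pointer and depth invariants. Your version is slightly more careful than the paper in explicitly constructing the shortest-path breakpoint $u^\star$ as a feasible candidate for the $\arg\min$ (the paper only asserts the equality), but the underlying argument is identical.
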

\begin{proof}
 If $x\in B_{s_1+s_2}(v),$ then there must exist $u\in V$ such that $\dist_G(v,u)\leq s_1$ and $\dist_G(u,x)\leq s_2.$ Thus, $V_{\wh{T}}=\bigcup_{u\in \wt{T}}V_{T(u)}=\bigcup_{u\in B_{s_1}(v)}B_{s_2}(u)=B_{s_1+s_2}(v).$

 Now we want to prove that $\p_{\wh{T}}:V_{\wh{T}}\rightarrow V_{\wh{T}}$ also satisfies the condition that $\wh{T}$ is a local shortest path tree. We can prove it by induction. If $\dist_G(u,v)=0,$ then it means $u=v.$ In this case, $\p_{\wh{T}}(u)=\p_{\wt{T}}(u)=v,$ and $h(u)=\dep_{{\wt{T}}}(u)=0.$ Let $s\in[s_1+s_2].$ Suppose $\forall x\in B_{s-1}(v),$ we have $h(x)=\dep_{{\wh{T}}}(x)=\dist_G(x,v).$ If $B_s(v)=B_{s-1}(v),$ then we are already done. Otherwise, let $x$ be the vertex which has $\dist_G(x,v)=s.$ If $x\in B_{s_1}(v),$ then $h(x)=\dep_{{\wt{T}}}(x)=\dist_G(x,v).$ Additionally, we have $\p_{\wh{T}}(x)=\p_{\wt{T}}(x).$ Therefore, $\dep_{{\wh{T}}}(x)=\dep_{{\wh{T}}}(\p_{\wt{T}}(x))+1=\dist_G(v,\p_{\wt{T}}(x))+1=\dist_G(v,x).$ If $x\in B_{s_2}(v)\setminus B_{s_1}(v),$ then $h(x)=\min_{u:\dist_G(v,u)\leq s_1,\dist_G(u,x)\leq s_2}\dep_{{\wt{T}}}(u)+\dep_{{T(u)}}(x)=\min_{u:\dist_G(v,u)\leq s_1,\dist_G(u,x)\leq s_2} \dist_G(v,u)+\dist_G(u,x)=\dist_G(v,x)=s.$ And we have $\dist_G(v,x)=\dist_G(v,u_x)+\dist_G(u_x,x).$ Notice that $\dist_G(v,\p_{T(u_x)}(x))=\dist_G(v,u_x)+\dist_G(u_x,\p_{T(u_x)}(x))=\dist_G(v,x)-1=s-1.$ Thus,
 \begin{align*}
 \dep_{{\wh{T}}}(x)=\dep_{{\wh{T}}}(\p_{\wh{T}}(x))+1=\dep_{{\wh{T}}}(\p_{T(u_x)}(x))+1=s.
 \end{align*}

 To conclude, $\wh{T}$ is a local complete shortest path tree with root $v$ and radius $s_1+s_2$ in $G$. In addition, $\dep_{{\wh{T}}}$ records the depth of every vertex in $\wh{T}$.
\end{proof}

    \subsection{Multiple Local Shortest Path Trees}\label{sec:multi_local_short_tree}

\begin{algorithm}[t]
\small
\caption{Doubling Algorithm for Local Complete Shortest Path Trees}\label{alg:doubling_expansion}
\begin{algorithmic}[1]
\small
\Procedure{\textsc{MultiRadiusLCSPT}}{$G=(V,E),m$} \Comment{Lemma~\ref{lem:multi_radius_trees}}
\State Output: $r,\{T_i(v)\mid i\in\{0\}\cup[r],v\in V\},\{\dep_{T_i(v)}\mid i\in\{0\}\cup[r],v\in V,T_i(v)\not= \nul \}$

\State \textbf{Initialization:}
\State  $\forall v\in V,$ if $|\{v\}\cup\Gamma_G(v)|<\lceil(m/n)^{1/4}\rceil,$ then let $T_0(v)\leftarrow(\{v\}\cup\Gamma_G(v),\p_{T_0(v)}),$ \label{sta:alg_exp_init1}
\State where $\p_{T_0(v)}:\{v\}\cup\Gamma_G(v)\rightarrow\{v\}\cup\Gamma_G(v),$ and $\forall u\in \{v\}\cup\Gamma_G(v),\p_{T_0(v)}(u)=v.$ \label{sta:alg_exp_init2}
\State $\forall v\in V,$ if $|\{v\}\cup\Gamma_G(v)|\geq \lceil(m/n)^{1/4}\rceil,$ then let $T_0(v)\leftarrow  \nul .$\label{sta:alg_exp_init2point5}
\State $\forall v\in V,$ if $T_0(v)\not= \nul ,$ let $\dep_{T_0(v)}:V_{T_0(v)}\rightarrow \mathbb{Z}_{\geq 0}$ s.t. $\dep_{T_0(v)}(v)=0,$ $\forall u\in \Gamma_G(v),\dep_{T_0(v)}(u)=1.$\label{sta:alg_exp_init3}
\State $r=1.$
\State \textbf{Main Loop:}
\For{\textbf{true}}
\For{$v\in V$} \Comment{If $T_r(v)\not= \nul ,$ $T_r(v)$ is a local complete shortest path tree with radius $2^r.$}
\If{$T_{r-1}(v)$ is $ \nul $} $T_{r}(v)\leftarrow \nul .$\label{sta:alg_exp_set_null1}
\ElsIf{$\exists u\in V_{T_{r-1}(v)},$ $T_{r-1}(u)$ is $ \nul $} $T_{r}(v)\leftarrow \nul .$\label{sta:alg_exp_set_null2}
\Else
\State
{\tiny $\left(T_{r}(v),\dep_{{T_r(v)}}\right)=\textsc{TreeExpansion}\left(T_{r-1}(v),\dep_{{T_{r-1}(v)}},\underset{u\in V_{T_{r-1}(v)}}{\bigcup}\left\{T_{r-1}(u)\right\},\underset{u\in V_{T_{r-1}(v)}}{\bigcup}\left\{\dep_{{T_{r-1}(u)}}\right\}\right).$} \label{sta:alg_exp_double}

\Comment{Algorithm~\ref{alg:merge_short_tree}}
\State {\small If $|V_{T_r(v)}|\geq \lceil(m/n)^{1/4}\rceil,$ let $T_r(v)\leftarrow \nul .$ }\label{sta:alg_exp_set_null}
\EndIf
\EndFor
\If {$\forall v\in V$ either $T_r(v)= \nul $ or $|V_{T_r(v)}|=|V_{T_{r-1}(v)}|$}\label{sta:alg_exp_condition}
\State  \Return $r,\{T_i(v)\mid i\in\{0\}\cup[r],v\in V\},\{\dep_{T_i(v)}\mid i\in\{0\}\cup[r],v\in V,T_i(v)\not= \nul \}$
\EndIf
\State $r\leftarrow r+1.$
\EndFor
\EndProcedure
\end{algorithmic}
\end{algorithm}

In this section, we show a procedure which is a generalization of neighbor increment procedure shown in Section~\ref{sec:neighbor_incr}. The input of the procedure is an undirected graph $G=(V,E)$ and a parameter $m$ which is larger than $|V|=n.$ The output will be $n$ local shortest path trees (See Definition~\ref{def:local_short_tree}) such that $\forall v\in V,$ there is a shortest path tree with root $v$. Furthermore, the size of each shortest path tree is at least $\left\lceil\left( m / |V| \right)^{1/4}\right\rceil$ and at most $\left\lceil\left( m / |V| \right)^{1/2}\right\rceil$. The algorithm is described in Algorithm~\ref{alg:maximal_short_tree}. The high level idea is that we firstly use doubling technique and the algorithm described in Section~\ref{sec:local_short_tree} to get local complete shortest path trees rooted at every vertex with multiple radius, and then use these LCSPTs to find large enough local shortest path trees rooted at every vertex. The doubling algorithm is described in Algorithm~\ref{alg:doubling_expansion}.

\begin{definition}\label{def:num_iter_multi_radius_lcspt}
Given a graph $G=(V,E)$ and a parameter $m\in\mathbb{Z}_{\geq 0},m\geq |V|,$ the number of iterations of $\textsc{MultiRadiusLCSPT}(G,m)$ (Algorithm~\ref{alg:doubling_expansion}) is the value of $r$ at the end of the procedure.
\end{definition}

\begin{lemma}\label{lem:multi_radius_trees}
Let $G=(V,E)$ be an undirected graph, and $m$ be a parameter which is at least $|V|.$ Let $(r,\{T_i(v)\mid i\in\{0\}\cup[r],v\in V\},\{\dep_{T_i(v)}\mid i\in\{0\}\cup[r],v\in V,T_i(v)\not= \nul \})=\textsc{MultiRadiusLCSPT}(G,m)$ (Algorithm~\ref{alg:doubling_expansion}).
We have following properties.
\begin{enumerate}
\item $\forall i\in\{0\}\cup[r],v\in V,$ if $T_i(v)\not= \nul ,$ then $T_i(v)$ is a LCSPT (See Definition~\ref{def:local_complete_tree}) with root $v$ and radius $2^i$ in $G$. Furthermore, $\dep_{T_i(v)}$ records the depth of every vertex in $T_i(v).$ \label{itm:multi_pro1}
\item $\forall i\in\{0\}\cup[r],v\in V,$ $|B_{G,2^i}(v)|\geq \lceil(m/n)^{1/4}\rceil\Leftrightarrow T_i(v)= \nul .$\label{itm:multi_pro2}
\item For $v\in V,$ if $T_r(v)\not= \nul ,$ then $V_{T_r(v)}=\{u\in V\mid \dist_G(u,v)<\infty\}.$ \label{itm:multi_pro3}
\item The number of iterations (see Definition~\ref{def:num_iter_multi_radius_lcspt}) $r\leq \min(\lceil\log(\diam(G))\rceil,\lceil\log(m/n)\rceil)+1.$ \label{itm:multi_pro4}
\end{enumerate}
\end{lemma}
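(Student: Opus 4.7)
The plan is to establish all four properties together by induction on the iteration index $i$, using Lemma~\ref{lem:merge_short_tree} (the LCSPT expansion correctness) as the inductive engine for the tree-merging step at line~\ref{sta:alg_exp_double}. I would maintain Properties~\ref{itm:multi_pro1} and~\ref{itm:multi_pro2} as a \emph{joint} invariant at each level, since the null/non-null dichotomy and the LCSPT correctness are interlocked: applying Lemma~\ref{lem:merge_short_tree} requires all the $T_{r-1}(u)$ ingredients to be non-null and correct, and in turn justifies that the merged tree is a LCSPT of radius $2^r$. The base case $i=0$ is immediate from lines~\ref{sta:alg_exp_init1}--\ref{sta:alg_exp_init3} and~\ref{sta:alg_exp_init2point5}: $T_0(v)$ is the depth-$1$ star rooted at $v$ exactly when $|\{v\}\cup\Gamma_G(v)| = |B_{G,1}(v)| < \lceil(m/n)^{1/4}\rceil$, establishing both clauses at $i=0$.

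For the inductive step, if $T_r(v)$ was produced at line~\ref{sta:alg_exp_double}, the induction hypothesis says that $T_{r-1}(v)$ and every $T_{r-1}(u)$ with $u\in V_{T_{r-1}(v)}=B_{G,2^{r-1}}(v)$ are LCSPTs of radius $2^{r-1}$ with correct depth labels, so Lemma~\ref{lem:merge_short_tree} gives a LCSPT of radius $2^r$ with correct depths; line~\ref{sta:alg_exp_set_null} may then nullify it if too large, which is precisely consistent with the forward direction of Property~\ref{itm:multi_pro2}. Conversely, if $T_r(v)=\nul$ I case-split on which of lines~\ref{sta:alg_exp_set_null1}, \ref{sta:alg_exp_set_null2}, \ref{sta:alg_exp_set_null} produced the assignment. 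In case~(a), the induction hypothesis gives $|B_{G,2^{r-1}}(v)|\geq\lceil(m/n)^{1/4}\rceil$ and $B_{G,2^{r-1}}(v)\subseteq B_{G,2^r}(v)$. In case~(b), there is $u\in B_{G,2^{r-1}}(v)$ with $T_{r-1}(u)=\nul$; by induction $|B_{G,2^{r-1}}(u)|\geq \lceil(m/n)^{1/4}\rceil$, and the triangle inequality yields $B_{G,2^{r-1}}(u)\subseteq B_{G,2^r}(v)$. In case~(c), Property~\ref{itm:multi_pro1} on the pre-nullified tree already gives $|V_{T_r(v)}|=|B_{G,2^r}(v)|\geq\lceil(m/n)^{1/4}\rceil$.

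For Property~\ref{itm:multi_pro3}, if $T_r(v)\neq\nul$ at termination, then neither line~\ref{sta:alg_exp_set_null1} nor line~\ref{sta:alg_exp_set_null} fired for $v$, so $T_{r-1}(v)\neq\nul$ as well, and the exit condition (line~\ref{sta:alg_exp_condition}) forces $|V_{T_r(v)}|=|V_{T_{r-1}(v)}|$. Property~\ref{itm:multi_pro1} then gives $B_{G,2^r}(v)=B_{G,2^{r-1}}(v)$. A standard BFS-frontier argument completes the claim: any vertex $w$ in the component of $v$ with $\dist_G(v,w)>2^{r-1}$ would have a predecessor $u\in B_{G,2^{r-1}}(v)$ on its shortest path from $v$, yielding $\dist_G(v,w')\leq 2^{r-1}+1\leq 2^r$ for the next vertex $w'$ and hence $w'\in B_{G,2^r}(v)=B_{G,2^{r-1}}(v)$, a contradiction.

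For Property~\ref{itm:multi_pro4} I treat the two bounds separately. For $r\leq\lceil\log\diam(G)\rceil+1$: once $2^{r-1}\geq\diam(G)$, each $B_{G,2^{r-1}}(v)$ equals the component of $v$, so $B_{G,2^r}(v)=B_{G,2^{r-1}}(v)$ and the exit condition holds no matter which branch of lines~\ref{sta:alg_exp_set_null1}--\ref{sta:alg_exp_set_null} fired. For $r\leq\lceil\log(m/n)\rceil+1$: if iteration $r$ was entered, the condition was violated at iteration $r-1$, so some $v$ has $T_{r-1}(v)\neq\nul$ with $|V_{T_{r-1}(v)}|>|V_{T_{r-2}(v)}|$, meaning some $u\in B_{G,2^{r-1}}(v)$ satisfies $\dist_G(v,u)>2^{r-2}$; a shortest $v$-to-$u$ path produces more than $2^{r-2}$ distinct vertices inside $V_{T_{r-1}(v)}$, which however has size strictly less than $\lceil(m/n)^{1/4}\rceil\leq m/n$, forcing $2^{r-2}<m/n$ and hence the stated bound. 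The main obstacle is the joint induction for Properties~\ref{itm:multi_pro1} and~\ref{itm:multi_pro2}: Lemma~\ref{lem:merge_short_tree} only delivers a correct LCSPT when \emph{all} prerequisite child trees are non-null and correct, so the induction hypothesis must simultaneously ensure structural correctness and the exact match between ``null'' and ``ball too large'', and one must verify carefully in the inductive step that each of the three null-assigning lines of Algorithm~\ref{alg:doubling_expansion} preserves this match.
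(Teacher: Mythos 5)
Your proposal is correct and follows essentially the same structure as the paper's proof: induction for Properties~\ref{itm:multi_pro1}--\ref{itm:multi_pro2} driven by Lemma~\ref{lem:merge_short_tree} with the three-way case split on the null-assigning lines, the exit-condition argument for Property~\ref{itm:multi_pro3}, and bounding $r$ via the exit condition for Property~\ref{itm:multi_pro4}. The minor stylistic differences (joint vs.\ sequential induction for the first two properties, direct bound vs.\ proof by contradiction for the fourth, and the more explicit path-counting for the third) do not change the underlying argument.
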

\begin{proof}
For property~\ref{itm:multi_pro1}, we can prove it by induction. If $i=0,$ the property holds by line~\ref{sta:alg_exp_init1}, line~\ref{sta:alg_exp_init2} and line~\ref{sta:alg_exp_init3}. Now suppose $\forall v\in V,$ if $T_{i-1}(v)$ is not $ \nul ,$ then $T_{i-1}(v)$ is a LCSPT with root $v$ and radius $2^{i-1}$ in $G,$ and $\dep_{T_{i}(v)}$ records the depth of every vertex in $T_i(v).$ For $v\in V,$ notice that the only place that will make $T_i(v)$ not $ \nul $ is line~\ref{sta:alg_exp_double}, and if the procedure run line~\ref{sta:alg_exp_double}, any of $T_{i-1}(v)$ and $T_{i-1}(u)$ with $u\in V_{T_{i-1}(v)}$ cannot be $ \nul .$ By Lemma~\ref{lem:merge_short_tree}, since the radius of $T_{i-1}(v)$ is $2^{i-1},$ and $\forall u\in V_{T_{i-1}(v)}, T_{i-1}(u)$ has radius $2^{i-1},$ $T_i(v)$ is a LCSPT with root $v$ and radius $2^i.$ Furthermore $\dep_{T_i(v)}$ records the depth of every vertex in $T_i(v).$

For property~\ref{itm:multi_pro2}, if $i=0,$ then this property holds by line~\ref{sta:alg_exp_init1} to line~\ref{sta:alg_exp_init3}. For $i\in[r],$ our proof is by induction. Suppose the property holds for $i-1$. Now consider $T_i(v)$ for $v\in V$. The only way to make $T_i(v)$ not $ \nul $ is line~\ref{sta:alg_exp_double}. If the procedure invokes line~\ref{sta:alg_exp_double}, then any of $T_{i-1}(v)$ and $T_{i-1}(u)$ with $u\in V_{T_{i-1}(v)}$ cannot be $ \nul .$ By property~\ref{itm:multi_pro1} and Lemma~\ref{lem:merge_short_tree}, $T_i(v)$ will be a LCSPT with root $v$ and radius $2^i$ in line~\ref{sta:alg_exp_double}. If $|B_{G,2^i}(v)|\geq \lceil(m/n)^{1/4}\rceil,$ then $T_i(v)$ is set to be $ \nul $ in line~\ref{sta:alg_exp_set_null}. Thus, we already got $|B_{G,2^i}(v)|\geq \lceil(m/n)^{1/4}\rceil\Rightarrow T_i(v)= \nul .$ Now we want to show $|B_{G,2^i}(v)|\geq \lceil(m/n)^{1/4}\rceil\Leftarrow T_i(v)= \nul .$ If $T_i(v)= \nul ,$ then there are three cases. The first case is that $T_i(v)$ is set at line~\ref{sta:alg_exp_set_null1}. In this case, $T_{i-1}(v)= \nul $ implies $|B_{G,2^i}(v)|\geq|B_{G,2^{i-1}}(v)|\geq \lceil(m/n)^{1/4}\rceil.$ The second case is that $T_i(v)$ is set at line~\ref{sta:alg_exp_set_null2}. In this case, $\exists u\in V_{T_{i-1}(v)}=B_{G,2^{i-1}}(v)$ such that $|B_{G,2^{i-1}}(u)|\geq \lceil(m/n)^{1/4}\rceil$ which implies $|B_{G,2^i}(v)|\geq \lceil(m/n)^{1/4}\rceil.$ In the final case, $T_i(v)$ is set at line~\ref{sta:alg_exp_set_null}, and thus, $|B_{G,2^i}(v)|\geq \lceil(m/n)^{1/4}\rceil.$

For property~\ref{itm:multi_pro3}, if $T_r(v)\not= \nul ,$ then by property~\ref{itm:multi_pro1}, we know $V_{T_r(v)}=B_{G,2^r}(v).$ By the condition in line~\ref{sta:alg_exp_condition}, we know $V_{T_r(v)}=V_{T_{r-1}(v)}$ which implies $B_{G,2^r}(v)=B_{G,2^{r-1}}(v).$ Thus, $V_{T_r(v)}=\{u\in V\mid \dist_G(u,v)<\infty\}.$

For property~\ref{itm:multi_pro4}, we can prove it by contradiction. If $r>\lceil\log(\diam(G))\rceil+1,$ then let $i=\lceil\log(\diam(G))\rceil+1.$ By the condition in line~\ref{sta:alg_exp_condition}, we know there is a vertex $v\in V$ such that $T_i(v)\not= \nul $ and $V_{T_i(v)}\not=V_{T_{i-1}(v)}.$ It means that $B_{G,2^i}(v)\not=B_{G,2^{i-1}}(v),$ i.e. $\exists u\in V,\dist_G(v,u)>2^{i-1}.$ But this contradicts to $i=\lceil\log(\diam(G))\rceil+1.$ Similarly, if $r>\lceil\log(m/n)\rceil+1,$ then let $i=\lceil\log(m/n)\rceil+1.$  By the condition in line~\ref{sta:alg_exp_condition}, we know there is a vertex $v\in V$ such that $T_i(v)\not= \nul $ and $V_{T_i(v)}\not=V_{T_{i-1}(v)}.$ If $2^{i-1}\leq \diam(G),$ then we have $V_{T_i(v)}\not=V_{T_{i-1}(v)}$ which leads to a contradiction. If $2^{i-1}\geq \diam(G),$ then $|B_{G,2^{i-1}}(v)|\geq 2^{i-1}\geq m/n\geq \lceil(m/n)^{1/4}\rceil$ which contradicts to property~\ref{itm:multi_pro2}.
\end{proof}

Next, we show how to use Algorithm~\ref{alg:doubling_expansion} to design an algorithm which can output $|V|$ number of local shortest path trees rooted at every vertex in $V$. The details of the algorithm is described in Algorithm~\ref{alg:maximal_short_tree}, and the guarantees of the algorithm is stated in the following lemma.

\begin{algorithm}[h!]
\small
\caption{Large Local Shortest Path Trees}\label{alg:maximal_short_tree}
\begin{algorithmic}[1]
\Procedure{\textsc{MultipleLargeTrees}}{$G=(V,E),m$} \Comment{Lemma~\ref{lem:maximal_large_short_tree}, Lemma~\ref{lem:num_it_multiplelargetree}}
\State Output: $\{\wt{T}(v)\mid v\in V\},\{\dep_{\wt{T}(v)}\mid v\in V\}.$ 
\State {\tiny $\left(r,\{T_i(v)\mid i\in\{0\}\cup [r],v\in V\},\{\dep_{T_i(v)}\mid i\in\{0\}\cup[r],v\in V,T_i(v)\not= \nul \}\right)=\textsc{MultiRadiusLCSPT}(G,m).$}\label{sta:alg_maximal_invoke_multi} \Comment{Algorithm~\ref{alg:doubling_expansion}}
\State $\forall v\in V$ with $T_r(v)\not= \nul $ let $\wt{T}(v)=T_r(v)$ and $\dep_{\wt{T}(v)}\leftarrow \dep_{T_r(v)}.$\label{sta:alg_maximal_init0}
\State $\forall v\in V$ with $T_r(v)= \nul ,$ let $\wt{T}_{0}(v)=(\{v\},\p_{\wt{T}_0(v)}),s_0(v)= 0,$ and $\dep_{{\wt{T}_0(v)}}:\{v\}\rightarrow \mathbb{Z}_{\geq 0},$ \label{sta:alg_maximal_init1}
\State where $\p_{\wt{T}_0(v)}:\{v\}\rightarrow\{v\}$ satisfies $\p_{\wt{T}_0(v)}(v)=v,$ and $\dep_{\wt{T}_0(v)}(v)=0.$ \label{sta:alg_maximal_init2}
\For{$i=1\rightarrow r$}\label{sta:alg_maximal_loop_start}
\For{$v\in \{u\in V\mid T_r(u)= \nul \}$}
\If{$\forall u\in V_{\wt{T}_{i-1}(v)},$ $T_{r-i}(u)\not= \nul $} \label{sta:alg_maximal_not_null_condition}
\State  {\tiny
$\left(\wt{T}_{i}(v),\dep_{{\wt{T}_{i}(v)}}\right)=\textsc{TreeExpansion}\left(\wt{T}_{i-1}(v),\dep_{{\wt{T}_{i-1}(v)}},\underset{u\in V_{\wt{T}_{i-1}(v)}}{\bigcup}\left\{T_{r-i}(u)\right\},\underset{u\in V_{\wt{T}_{i-1}(v)}}{\bigcup}\left\{\dep_{{T_{r-i}(u)}}\right\}\right).$}\label{sta:alg_maximal_tree_expansion1}

\Comment{Algorithm~\ref{alg:merge_short_tree}}
\State If $|V_{\wt{T}_{i}(v)}|<\lceil(m/n)^{1/4}\rceil,$ then let $s_i(v)= s_{i-1}(v)+2^{r-i}.$ \label{sta:alg_maximal_update_s}
\State Otherwise, let $s_i(v)=s_{i-1}(v),\wt{T}_i(v)\leftarrow\wt{T}_{i-1}(v),\dep_{{\wt{T}_{i}(v)}}\leftarrow \dep_{{\wt{T}_{i-1}(v)}}.$
\Else
\State Let $s_i(v)=s_{i-1}(v),\wt{T}_i(v)=\wt{T}_{i-1}(v),\dep_{{\wt{T}_{i}(v)}}\leftarrow \dep_{{\wt{T}_{i-1}(v)}}.$
\EndIf
\EndFor
\EndFor\label{sta:alg_maximal_loop_end} \Comment{ $\wt{T}_r(v)$ is a LCSPT with root $v$ and the largest radius s.t. $|V_{\wt{T}_r(v)}|<\lceil(m/n)^{1/4}\rceil.$}
\State $\forall v\in V,$ if $|\Gamma_G(v)\cup\{v\}|\leq \lceil(m/n)^{1/4}\rceil,$ then let $N(v)=\Gamma_G(v)\cup\{v\}.$\label{sta:alg_maximal_create_N1}
\State Otherwise arbitrarily choose $N(v)\subseteq \Gamma_G(v)\cup\{v\}$ with $|N(v)|=\lceil(m/n)^{1/4}\rceil.$\label{sta:alg_maximal_create_N2}
\For{$v\in \{u\in V\mid T_r(u)= \nul \}$}\Comment{Expand $\wt{T}_r(v)$ a little bit to get large enough $\wt{T}$.}
\If{$\forall u\in V_{{\wt{T}_r(v)}},T_0(u)\not= \nul $}
\State
{\tiny
$\left(\wt{T}(v),\dep_{{\wt{T}(v)}}\right)=\textsc{TreeExpansion}\left(\wt{T}_r(v),\dep_{{\wt{T}_r(v)}},\underset{u\in V_{\wt{T}_r(v)}}{\bigcup}\left\{T_{0}(u)\right\},\underset{u\in V_{\wt{T}_r(v)}}{\bigcup}\left\{\dep_{{T_{0}(u)}}\right\}\right).$}\label{sta:alg_maximal_tree_expansion2}
\Comment{Algorithm~\ref{alg:merge_short_tree}}
\Else
\State Select an arbitrary $u_v\in V_{{\wt{T}_r(v)}}$ with $T_0(u_v)= \nul .$\label{sta:alg_maximal_wT_assignment}
\State Let $V_{\wt{T}(v)}=N(u_v)\cup V_{\wt{T}_r(v)}.$
\State $\forall x\in V_{\wt{T}_r(v)},$ let $\p_{\wt{T}(v)}(x)=\p_{\wt{T}_r(v)}(x),\dep_{\wt{T}(v)}(x)=\dep_{\wt{T}_r(v)}(x).$
\State $\forall x\in N(u_v),x\not\in V_{\wt{T}_r(v)}, $ let $\p_{\wt{T}(v)}(x)=u_v,\dep_{\wt{T}(v)}(x)=\dep_{\wt{T}_r(v)}(u_v)+1.$
\State Let $\wt{T}(v)=(V_{\wt{T}(v)},\p_{\wt{T}(v)}).$\label{sta:alg_wT_assignment2}
\EndIf
\EndFor
\State \Return $\{\wt{T}(v)\mid v\in V\},\{\dep_{\wt{T}(v)}\mid v\in V\}.$
\EndProcedure
\end{algorithmic}
\end{algorithm}

\begin{lemma}\label{lem:maximal_large_short_tree}
Let $G=(V,E)$ be an undirected graph, and $m$ be a parameter which is at least $16|V|.$ Let $\left(\{\wt{T}(v)\mid v\in V\},\{\dep_{\wt{T}(v)}\mid v\in V\}\right)=\textsc{MultipleLargeTrees}(G,m).$ (Algorithm~\ref{alg:maximal_short_tree}) Then, the output satisfies the following properties.
\begin{enumerate}
\item $\forall v\in V,$ $\wt{T}(v)$ is a LSPT (See Definition~\ref{def:local_short_tree}) with root $v$, and $\dep_{\wt{T}(v)}$ records the depth of every vertex in $\wt{T}(v).$\label{itm:large_tree_pro1}
\item $\forall v\in V,u\in V_{\wt{T}(v)},w\in V\setminus V_{\wt{T}(v)},$ it satisfies $\dist_G(v,u)\leq \dist_G(v,w).$\label{itm:large_tree_pro2}
\item $\forall v\in V,$ either $|V_{\wt{T}(v)}|\geq \lceil(m/n)^{1/4}\rceil$ or $V_{\wt{T}(v)}=\{u\in V\mid \dist_G(u,v)<\infty\}.$\label{itm:large_tree_pro3}
\item $\forall v\in V,$ $|V_{\wt{T}(v)}|\leq \lfloor(m/n)^{1/2}\rfloor.$\label{itm:large_tree_pro4}
\end{enumerate}
\end{lemma}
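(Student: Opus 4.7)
The plan is to split on whether $T_r(v) = \nul$, matching how \textsc{MultipleLargeTrees} branches.

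If $T_r(v) \neq \nul$, then line~\ref{sta:alg_maximal_init0} sets $\wt{T}(v) = T_r(v)$ and $\dep_{\wt{T}(v)} = \dep_{T_r(v)}$, and all four properties follow from Lemma~\ref{lem:multi_radius_trees}. Property~\ref{itm:large_tree_pro1} is its item~\ref{itm:multi_pro1} (an LCSPT is in particular an LSPT with $\dep$ recording shortest-path distances); property~\ref{itm:large_tree_pro3} is its item~\ref{itm:multi_pro3} ($V_{T_r(v)}$ is the entire connected component of $v$); property~\ref{itm:large_tree_pro2} is then trivial because every $w \notin V_{\wt{T}(v)}$ has $\dist_G(v,w) = \infty$; property~\ref{itm:large_tree_pro4} combines its item~\ref{itm:multi_pro2} ($|V_{T_r(v)}| < \lceil(m/n)^{1/4}\rceil$) with $m \ge 16n$ to get $\lceil(m/n)^{1/4}\rceil \le \lfloor(m/n)^{1/2}\rfloor$.

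If $T_r(v) = \nul$, I proceed by induction on $i$ to prove the loop invariant: $\wt{T}_i(v)$ is an LCSPT rooted at $v$ with radius $s_i(v)$, $\dep_{\wt{T}_i(v)}$ records depths, and $|V_{\wt{T}_i(v)}| < \lceil(m/n)^{1/4}\rceil$. The base case is immediate from lines~\ref{sta:alg_maximal_init1}--\ref{sta:alg_maximal_init2}; the inductive step uses Lemma~\ref{lem:multi_radius_trees} to supply LCSPTs of radius $2^{r-i}$ and Lemma~\ref{lem:merge_short_tree} to expand, with the size test in line~\ref{sta:alg_maximal_update_s} either accepting or reverting. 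The central claim is that $s_r(v) = s^* := \max\{s \in \mathbb{Z}_{\geq 0} : |B_{G,s}(v)| < \lceil(m/n)^{1/4}\rceil\}$ (well-defined and $< 2^r$, because $T_r(v) = \nul$ combined with Lemma~\ref{lem:multi_radius_trees} forces $|B_{G,2^r}(v)| \ge \lceil(m/n)^{1/4}\rceil$), so that $V_{\wt{T}_r(v)} = B_{G,s^*}(v)$. This is a binary-search argument along the decreasing powers $2^{r-1}, 2^{r-2}, \ldots, 1$: at step $i$, acceptance witnesses $s_{i-1}(v) + 2^{r-i} \le s^*$, while rejection (the size check triggers, or some $T_{r-i}(u) = \nul$ for $u \in V_{\wt{T}_{i-1}(v)}$---in which case Lemma~\ref{lem:multi_radius_trees} together with $B_{G,2^{r-i}}(u) \subseteq B_{G,s_{i-1}(v)+2^{r-i}}(v)$ already forces $|B_{G,s_{i-1}(v)+2^{r-i}}(v)| \ge \lceil(m/n)^{1/4}\rceil$) witnesses $s_{i-1}(v) + 2^{r-i} > s^*$.

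For the final expansion in the $T_r(v) = \nul$ case: if all $T_0(u) \neq \nul$ for $u \in V_{\wt{T}_r(v)}$, Lemma~\ref{lem:merge_short_tree} produces an LCSPT $\wt{T}(v)$ of radius $s^* + 1$, so property~\ref{itm:large_tree_pro3} follows from $|B_{G,s^*+1}(v)| \ge \lceil(m/n)^{1/4}\rceil$ by the definition of $s^*$, property~\ref{itm:large_tree_pro2} is automatic for a ball, and property~\ref{itm:large_tree_pro4} follows from the crude bound $|V_{\wt{T}(v)}| \le \sum_{u \in V_{\wt{T}_r(v)}} |V_{T_0(u)}| \le (\lceil(m/n)^{1/4}\rceil - 1)^2 \le \lfloor(m/n)^{1/2}\rfloor$. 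Otherwise one picks $u_v$ with $T_0(u_v) = \nul$; by Lemma~\ref{lem:multi_radius_trees}, $|\Gamma_G(u_v) \cup \{u_v\}| \ge \lceil(m/n)^{1/4}\rceil$, so $|N(u_v)| = \lceil(m/n)^{1/4}\rceil$, which gives property~\ref{itm:large_tree_pro3} ($|V_{\wt{T}(v)}| \ge |N(u_v)|$) and property~\ref{itm:large_tree_pro4} ($|V_{\wt{T}(v)}| \le |V_{\wt{T}_r(v)}| + |N(u_v)| - 1 \le 2\lceil(m/n)^{1/4}\rceil - 2 \le \lfloor(m/n)^{1/2}\rfloor$ for $m/n \ge 16$). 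For properties~\ref{itm:large_tree_pro1} and~\ref{itm:large_tree_pro2} here, a cardinality argument shows $\dep_{\wt{T}_r(v)}(u_v) = s^*$: were it smaller, then $\Gamma_G(u_v) \cup \{u_v\} \subseteq B_{G,s^*}(v) = V_{\wt{T}_r(v)}$, forcing $|V_{\wt{T}_r(v)}| \ge \lceil(m/n)^{1/4}\rceil$ and contradicting the invariant. Given this, each added $x \in N(u_v) \setminus V_{\wt{T}_r(v)}$ satisfies $\dist_G(v,x) = s^* + 1 = \dep_{\wt{T}_r(v)}(u_v) + 1$ by the triangle inequality and $x \notin B_{G,s^*}(v)$, justifying the parent and depth assignments for property~\ref{itm:large_tree_pro1}; property~\ref{itm:large_tree_pro2} follows because every outside vertex is at distance $\ge s^* + 1$ while every inside vertex is at distance $\le s^* + 1$.

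The main obstacle I foresee is the verification of properties~\ref{itm:large_tree_pro1} and~\ref{itm:large_tree_pro2} in the ``else'' subcase of the final expansion, specifically the cardinality argument pinning $u_v$ at depth exactly $s^*$, which is what makes the attached neighbors consistent with shortest-path depths; everything else is essentially bookkeeping on top of Lemma~\ref{lem:multi_radius_trees} and Lemma~\ref{lem:merge_short_tree}.
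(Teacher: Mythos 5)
Your proposal is correct and follows essentially the same structure as the paper's proof: the same split on whether $T_r(v)=\nul$, the same loop invariant (the paper's Claim~\ref{cla:maximal_shrink_range}, that $\wt{T}_i(v)$ is an LCSPT of radius $s_i(v)$ with $|B_{G,s_i(v)}(v)|<\lceil(m/n)^{1/4}\rceil \leq |B_{G,s_i(v)+2^{r-i}}(v)|$), and the same case split on the final expansion at lines~\ref{sta:alg_maximal_tree_expansion2} vs.~\ref{sta:alg_wT_assignment2}. The one substantive variation is in pinning $\dist_G(v,u_v)=s_r(v)$ in the ``else'' branch: you argue by cardinality (if $\dist_G(v,u_v)<s^*$ then $\Gamma_G(u_v)\cup\{u_v\}\subseteq B_{G,s^*}(v)$, contradicting $|V_{\wt{T}_r(v)}|<\lceil(m/n)^{1/4}\rceil$), whereas the paper extracts it implicitly from tightness of the chain $s_r(v)+1 \leq \dist_G(v,x)\leq \dist_G(v,u_v)+1\leq s_r(v)+1$ for some $x\in N(u_v)\setminus V_{\wt{T}_r(v)}$; both arguments rest on the same fact that $|N(u_v)|=\lceil(m/n)^{1/4}\rceil > |V_{\wt{T}_r(v)}|$. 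Your counting for property~\ref{itm:large_tree_pro4} ($2\lceil(m/n)^{1/4}\rceil-2$, accounting for the double count of $u_v$) is actually a touch sharper than the paper's $<2\lceil(m/n)^{1/4}\rceil$ and closes cleanly under the threshold $m\geq 16|V|$, where the paper's looser form is borderline.
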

\begin{proof}
Before we prove above properties, we first show some crucial observations.
\begin{claim}\label{cla:maximal_Tiv_is_good}
$\forall v\in V,i\in\{0\}\cup[r],$ if $T_i(v)\not= \nul ,$ then $T_i(v)$ is a LCSPT with root $v$ and radius $2^i$ in graph $G$. Furthermore, $\dep_{T_i(v)}:V_{T_i(v)}\rightarrow\mathbb{Z}_{\geq 0}$ records the depth of every vertex in $T_i(v).$ If $T_i(v)= \nul ,$ then $|B_{G,2^i}(v)|\geq \lceil(m/n)^{1/4}\rceil.$
\end{claim}
\begin{proof}
Follows by property~\ref{itm:multi_pro1} and property~\ref{itm:multi_pro2} of Lemma~\ref{lem:multi_radius_trees} directly.
\end{proof}

\begin{claim}\label{cla:maximal_shrink_range}
Let $v\in V$ be a vertex with $T_r(v)= \nul .$ Then, $\forall i\in\{0\}\cup[r],$ $\wt{T}_i(v)$ is a {\rm LCSPT} (See Definition~\ref{def:local_complete_tree}) with root $v$ and radius $s_i(v)$ in $G,$ and $\dep_{\wt{T}_i(v)}$ records the depth of every vertex in $\wt{T}_i(v)$. Furthermore, we have $|B_{G,s_i(v)}(v)|<\lceil(m/n)^{1/4}\rceil,|B_{G,s_i(v)+2^{r-i}}(v)|\geq \lceil(m/n)^{1/4}\rceil.$
\end{claim}
\begin{proof}
 Let $v\in V$ be a vertex with $T_r(v)= \nul .$ When $i=0,$ then due to line~\ref{sta:alg_maximal_init1} and line~\ref{sta:alg_maximal_init2}, $\wt{T}_0(v)$ is a {\rm LCSPT} (See Definition~\ref{def:local_complete_tree}) with root $v$ and radius $0=s_0(v)$ in $G$. According to property~\ref{itm:multi_pro2} of Lemma~\ref{lem:multi_radius_trees}, since $T_r(v)= \nul ,$ we know $|B_{G,0+2^r}(v)|\geq \lceil(m/n)^{1/4}\rceil.$

For $i\in[r],$ we prove it by induction. Suppose the claim is true for $i-1.$ By Claim~\ref{cla:maximal_Tiv_is_good}, Lemma~\ref{lem:merge_short_tree} and the condition in line~\ref{sta:alg_maximal_not_null_condition}, if the procedure executes line~\ref{sta:alg_maximal_tree_expansion1}, then we know $\wt{T}_i(v)$ is a LCSPT with root $v$ and radius $s_{i-1}(v)+2^{r-i}$ in $G$ at the end of the execution of line~\ref{sta:alg_maximal_tree_expansion1}, and $\dep_{\wt{T}_i(v)}$ records the depth of every vertex in $\wt{T}_i(v)$. If $|V_{\wt{T}_i(v)}|<\lceil(m/n)^{1/4}\rceil,$ then $|B_{G,s_{i-1}(v)+2^{r-i}}(v)|<\lceil(m/n)^{1/4}\rceil.$ The procedure will execute line~\ref{sta:alg_maximal_update_s}, and thus $s_i(v)$ is the radius of $\wt{T}_i(v).$ In addition, since $s_i(v)=s_{i-1}(v)+2^{r-i}$ and $|B_{G,s_{i-1}(v)+2^{r-i+1}}(v)|\geq \lceil(m/n)^{1/4}\rceil,$ we have $|B_{G,s_{i}(v)+2^{r-i}}(v)|\geq \lceil(m/n)^{1/4}\rceil.$ If at the end of line~\ref{sta:alg_maximal_tree_expansion1}, $|V_{\wt{T}_i(v)}|\geq\lceil(m/n)^{1/4}\rceil,$ then we know $|B_{G,s_{i-1}(v)+2^{r-i}}(v)|\geq \lceil (m/n)^{1/4}\rceil.$ In this case, $\wt{T}_i(v),s_i(v)$ and $\dep_{\wt{T}_i(v)}$ will be set to be $\wt{T}_{i-1}(v),s_{i-1}(v)$ and $\dep_{\wt{T}_i(v)}$ respectively, and thus $|B_{G,s_i(v)}(v)|<\lceil(m/n)^{1/4}\rceil.$ If the condition in line~\ref{sta:alg_maximal_not_null_condition} does not hold, then we know $|B_{G,s_{i-1}(v)+2^{r-i}}(v)|\geq \lceil (m/n)^{1/4}\rceil$ by claim~\ref{cla:maximal_Tiv_is_good}. In this case, $\wt{T}_i(v),s_i(v)$ and $\dep_{\wt{T}_i(v)}$ will also be set to be $\wt{T}_{i-1}(v),s_{i-1}(v)$ and $\dep_{\wt{T}_i(v)}$ respectively, and thus $|B_{G,s_i(v)}(v)|<\lceil(m/n)^{1/4}\rceil.$
\end{proof}

Claim~\ref{cla:maximal_shrink_range} shows that for each vertex $v\in V$, we know $\wt{T}_r(v)$ is a LCSPT with root $v$ and radius $s_r(v)$ in $G$ such that $|B_{G,s_r(v)}(v)|<\lceil(m/n)^{1/4}\rceil$ and $|B_{G,s_r(v)+1}(v)|\geq\lceil(m/n)^{1/4}\rceil.$

Now, let us prove property~\ref{itm:large_tree_pro1} and property~\ref{itm:large_tree_pro2}. For $v\in V,$ if $T_r(v)\not= \nul ,$ then $\wt{T}(v),\dep_{\wt{T}(v)}$ will be set to be $T_r(v),\dep_{T_r(v)}$ respectively. By Claim~\ref{cla:maximal_Tiv_is_good}, the properties holds. Let $v$ be a vertex in $V$ with $T_r(v)= \nul .$ If $\wt{T}(v)$ is assigned at line~\ref{sta:alg_maximal_tree_expansion2}, then by Lemma~\ref{lem:merge_short_tree}, we know $\wt{T}(v)$ is a LCSPT with root $v$, and $\dep_{\wt{T}(v)}$ records the depth of every vertex in $\wt{T}(v).$ Thus, both properties hold. If $\wt{T}(v)$ is assigned at line~\ref{sta:alg_wT_assignment2}, then there are two cases for the vertices in $V_{\wt{T}(v)}:$
\begin{enumerate}
\item If $x$ is in $V_{\wt{T}_r(v)},$ then since Claim~\ref{cla:maximal_shrink_range} shows $\wt{T}_r(v)$ is a LCSPT with root $v,$ it is easy to show $\dep_{\wt{T}(v)}(x)=\dep_{\wt{T}_r(v)}(x)=\dist_G(v,x),$ and $\p_{\wt{T}(v)}(x)=\p_{\wt{T}_r(v)}(x)\in E.$
\item if $x$ is in $N(u_v)$ but not in $V_{\wt{T}_r(v)},$ then since $\wt{T}_r(v)$ is a LCSPT with root $v$ and radius $s_r(v)$, $\dist_G(v,x)\geq s_r(v)+1.$ Also notice that $\dist_G(v,x)\leq \dist_G(v,u_v)+\dist_G(u_v,x)\leq s_r(v)+1.$ Therefore, $\dist_G(v,x)=s_r(v)+1,\dep_{\wt{T}(v)}(x)=\dep_{\wt{T}(v)}(u_v)+1=\dist_G(v,x)=s_r(v)+1.$ Since $x\in N(u_v),$ $(\p_{\wt{T}(v)}(x),x)=(u_v,x)\in E.$
\end{enumerate}
Thus, $\wt{T}$ is a LSPT with root $v$, and it proves property~\ref{itm:large_tree_pro1}. Due to above both cases, we know $B_{G,s_r(v)}(v)\subseteq V_{\wt{T}},$ and $\forall x\in V_{\wt{T}(v)},$ $\dist_G(v,x)\leq s_r(v)+1.$ Thus, property~\ref{itm:large_tree_pro2} holds.

For property~\ref{itm:large_tree_pro3} and property~\ref{itm:large_tree_pro4}, we have two cases. The first case is when $v$ satisfies $T_r(v)\not= \nul .$ In this case $\wt{T}(v)=T_r(v),$ due to property~\ref{itm:multi_pro3},~\ref{itm:multi_pro4} of Lemma~\ref{lem:multi_radius_trees}, we have $V_{\wt{T}(v)}=\{u\in V\mid \dist_G(v,u)<\infty\},$ and $|V_{\wt{T}(v)}|<\lceil(m/n)^{1/4}\rceil\leq \lfloor (m/n)^{1/2}\rfloor.$ The second case is $T_r(v)= \nul .$ In this case, if $\wt{T}(v)$ is assigned at line~\ref{sta:alg_maximal_tree_expansion2}, then $V_{\wt{T}(v)}=B_{G,s_r(v)+1}(v).$ Then by Claim~\ref{cla:maximal_shrink_range}, we can get $|V_{\wt{T}(v)}|\geq \lceil(m/n)^{1/4}\rceil.$ Because $|V_{\wt{T}_r(v)}|<\lceil(m/n)^{1/4}\rceil$ and $\forall u\in V_{\wt{T}_r(v)},|V_{T_0(u)}|<\lceil(m/n)^{1/4}\rceil,$ we know $|V_{\wt{T}(v)}|\leq \lfloor(m/n)^{1/2}\rfloor.$ If $\wt{T}(v)$ is assigned at line~\ref{sta:alg_wT_assignment2}, then $|V_{\wt{T}(v)}|\geq |N(u_v)|\geq \lceil (m/n)^{1/4}\rceil,$ and $|V_{\wt{T}(v)}|\leq |V_{\wt{T}_r(v)}|+|N(u_v)|<2\cdot \lceil(m/n)^{1/4}\rceil\leq \lfloor(m/n)^{1/2}\rfloor.$
\end{proof}

\begin{definition}\label{def:num_it_multiplelargetree}
Let graph $G=(V,E)$, and let $m$ be a parameter which is at least $16|V|.$ The number of iterations of $\left(\{\wt{T}(v)\mid v\in V\},\{\dep_{\wt{T}(v)}\mid v\in V\}\right)=\textsc{MultipleLargeTrees}(G,m)$ (Algorithm~\ref{alg:maximal_short_tree}) is defined as the value of $r$ in the procedure.
\end{definition}

\begin{lemma}[Number of iterations of Algorithm~\ref{alg:maximal_short_tree}]\label{lem:num_it_multiplelargetree}
Let $G=(V,E)$ be an undirected graph, and let $m$ be a parameter which is at least $16|V|.$  The number of iterations (see Definition~\ref{def:num_it_multiplelargetree}) of $\left(\{\wt{T}(v)\mid v\in V\},\{\dep_{\wt{T}(v)}\mid v\in V\}\right)=\textsc{MultipleLargeTrees}(G,m)$ (Algorithm~\ref{alg:maximal_short_tree}) is at most  $\min(\lceil\log(\diam(G))\rceil,\lceil\log(m/n)\rceil)+1.$
\end{lemma}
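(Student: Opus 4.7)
The plan is to observe that the quantity $r$ in Algorithm~\ref{alg:maximal_short_tree} is not determined by the subsequent main loop on lines \ref{sta:alg_maximal_loop_start}--\ref{sta:alg_maximal_loop_end}; rather, it is defined once and for all on line~\ref{sta:alg_maximal_invoke_multi}, where we set $r$ to be the value returned by the call $\textsc{MultiRadiusLCSPT}(G,m)$ (Algorithm~\ref{alg:doubling_expansion}). By Definition~\ref{def:num_it_multiplelargetree}, the number of iterations of $\textsc{MultipleLargeTrees}(G,m)$ is exactly this $r$, so the problem reduces to bounding the number of iterations of $\textsc{MultiRadiusLCSPT}(G,m)$ in the sense of Definition~\ref{def:num_iter_multi_radius_lcspt}.

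At that point, the bound is immediate from what we have already proved. Specifically, I would invoke property~\ref{itm:multi_pro4} of Lemma~\ref{lem:multi_radius_trees}, which says that the number of iterations of $\textsc{MultiRadiusLCSPT}(G,m)$ is at most $\min(\lceil\log(\diam(G))\rceil,\lceil\log(m/n)\rceil)+1$. The hypothesis $m\geq 16|V|$ in the statement is stronger than the hypothesis $m\geq |V|$ required to apply Lemma~\ref{lem:multi_radius_trees}, so the lemma applies directly without any extra work. Combining this with the identification of the two iteration counts yields the claim.

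There is essentially no obstacle here: the only thing to check is that the $r$ appearing in the outer $\textsc{MultipleLargeTrees}$ procedure really coincides with the iteration count of the inner $\textsc{MultiRadiusLCSPT}$ call, which is visible from line~\ref{sta:alg_maximal_invoke_multi}. No induction, no case analysis, no further estimates are needed; the lemma is one line of reasoning on top of Lemma~\ref{lem:multi_radius_trees}.
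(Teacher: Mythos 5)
Your proposal is correct and takes essentially the same approach as the paper: the paper's proof is a one-liner that cites property~\ref{itm:multi_pro4} of Lemma~\ref{lem:multi_radius_trees}, and your argument simply spells out the (correct) observation that the $r$ returned on line~\ref{sta:alg_maximal_invoke_multi} is the quantity Definition~\ref{def:num_it_multiplelargetree} refers to, so that lemma applies verbatim.
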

\begin{proof}
It follows by property~\ref{itm:multi_pro4} of Lemma~\ref{lem:multi_radius_trees} directly.
\end{proof}

    \subsection{Path Generation and Root Changing}
In this section, we show a procedure which can output a path from a certain vertex to the root in a rooted tree. Then we show how to use the procedure to change the root of a rooted tree to a certain vertex in the tree. To output the vertex-root path, we have two stages. The first stage is using doubling method to compute the depth and the $2^i$th (for all $i\in\{0,1,\cdots,\log(\dep)\}$) ancestor of each vertex. The second stage is using divide-and-conquer technique to split the path into segments, and recursively find the path for each segment. Once we have the procedure to find the vertex-root path, then we can use it to implement root-changing. The idea is very simple, if we want to change the root to a certain vertex, we just need to find the path from that vertex to the root, and reverse the parent pointers of every vertex on the path. The path finding procedure is described in Algorithm~\ref{alg:path_find}. The root changing procedure is described in Algorithm~\ref{alg:root_change}.

\begin{algorithm}[t]
\caption{Depth and Ancestors of Every Vertex}\label{alg:ancestors}
\begin{algorithmic}[1]
\Procedure{ \textsc{FindAncestors} }{$\p:V\rightarrow V$} \Comment{Lemma~\ref{lem:depthandancestor}}
\State For $v\in V$ let $g_0(v)=\p(v).$ If $\p(v)=v,$ let $h_0(v)= 0.$ Otherwise, let $h_0(v)=  \nul .$
\State Let $l= 0.$
\For{$\exists v\in V,h_l(v)= \nul $}
    \State $l\leftarrow l+1.$
    \For{$v\in V$}
        \State Let $g_l(v)=g_{l-1}(g_{l-1}(v)).$ \Comment{$g_l$ is $\p^{(2^l)}.$}
        \If{$h_{l-1}(v)\not= \nul $} $h_l(v)=h_{l-1}(v).$
        \ElsIf{$h_{l-1}(g_{l-1}(v))\not= \nul $} $h_l(v)=h_{l-1}(g_{l-1}(v))+2^{l-1}.$
        \Else ~$h_l(v)= \nul .$
        \EndIf
    \EndFor
\EndFor
\State Let $r=l,\dep_{\p}\leftarrow h_r.$
\State \Return $r, \dep_{\p}, \{g_i:V\rightarrow V\mid i\in\{0\}\cup[r]\}.$ \Comment{$\dep_{\p}:V\rightarrow \mathbb{Z}_{\geq 0}$}
\EndProcedure
\end{algorithmic}
\end{algorithm}

\begin{definition}\label{def:num_iter_ancestors}
Let $\p:V\rightarrow V$ be a set of parent pointers (See Definition~\ref{def:parent_pointers}) on a vertex set $V.$ The number of iterations of $\textsc{FindAncestors}(\p)$  is defined as the value of $r$ at the end of the procedure.
\end{definition}

\begin{lemma}\label{lem:depthandancestor}
Let $\p:V\rightarrow V$ be a set of parent pointers (See Definition~\ref{def:parent_pointers}) on a vertex set $V.$ Let $(r,\dep_{\p},\{g_i\mid i\in\{0\}\cup[r]\})= \textsc{FindAncestors}(\p)$ (Algorithm~\ref{alg:ancestors}).  Then the number of iterations (see Definition~\ref{def:num_iter_ancestors}) $r$ should be at most $\lceil\log(\dep(\p)+1)\rceil,$ $\dep_{\p}:V\rightarrow \mathbb{Z}_{\geq 0}$ records the depth of every vertex in $V,$ and $\forall i\in \{0\}\cup[r],v\in V$ $g_i(v)=\p^{(2^i)}(v).$
\end{lemma}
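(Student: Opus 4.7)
The plan is to establish the three conclusions by induction on the loop index $l$, first for the ancestor pointers $g_l$ and then for the depth markers $h_l$. For the ancestors, I would prove by induction that $g_l(v) = \p^{(2^l)}(v)$ for every $l \in \{0\} \cup [r]$ and every $v \in V$. The base case $l=0$ is immediate since $g_0(v) = \p(v) = \p^{(2^0)}(v)$. For the inductive step, the update rule $g_l(v) = g_{l-1}(g_{l-1}(v))$ together with the semigroup identity $\p^{(a)}\circ \p^{(b)} = \p^{(a+b)}$ gives $g_l(v) = \p^{(2^{l-1})}(\p^{(2^{l-1})}(v)) = \p^{(2^l)}(v)$. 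This step is clean because roots are fixed points of $\p$, so iterating $\p$ beyond a vertex's depth simply stays at the root.

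Next I would prove the key invariant: after the $l$-th iteration, $h_l(v) \neq \nul$ if and only if $\dep_{\p}(v) \le 2^l - 1$, and in that case $h_l(v) = \dep_{\p}(v)$. At $l=0$ this holds since $h_0(v) = 0$ exactly when $v$ is a root, matching $\dep_{\p}(v) = 0$. For the inductive step there are three branches of the update to analyze. If $h_{l-1}(v) \neq \nul$, then by the induction hypothesis $h_l(v) = h_{l-1}(v) = \dep_{\p}(v)$ with $\dep_{\p}(v) \le 2^{l-1}-1$. If $h_{l-1}(v) = \nul$ but $h_{l-1}(g_{l-1}(v)) \neq \nul$, then by the $g$-invariant and induction, $\dep_{\p}(\p^{(2^{l-1})}(v)) \le 2^{l-1}-1$; combined with $\dep_{\p}(v) \ge 2^{l-1}$ (from $h_{l-1}(v)=\nul$) this yields $\dep_{\p}(\p^{(2^{l-1})}(v)) = \dep_{\p}(v) - 2^{l-1}$, and hence $h_l(v) = h_{l-1}(g_{l-1}(v)) + 2^{l-1} = \dep_{\p}(v) \le 2^l-1$. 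In the remaining case both are $\nul$, so $h_l(v) = \nul$; induction applied to $\p^{(2^{l-1})}(v)$ gives $\dep_{\p}(\p^{(2^{l-1})}(v)) > 2^{l-1}-1$, and since $\p^{(2^{l-1})}(v)$ cannot be a root in this branch, $\dep_{\p}(v) = 2^{l-1} + \dep_{\p}(\p^{(2^{l-1})}(v)) > 2^l - 1$.

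The three conclusions of the lemma then follow immediately. The while loop terminates precisely at the smallest $l$ for which no $h_l(v) = \nul$ remains, which by the invariant is the smallest $l$ with $2^l - 1 \ge \dep(\p)$, giving $r = \lceil \log(\dep(\p)+1) \rceil$. At termination $\dep_{\p} := h_r$ records the true depth of every vertex, and the ancestor identity $g_i(v) = \p^{(2^i)}(v)$ holds over the full range $i \in \{0\} \cup [r]$ by the first induction. The main obstacle is the potential ``root already reached'' edge case in the third branch of the $h_l$ induction: if $\p^{(2^{l-1})}(v)$ were already a root, one would have $\dep_{\p}(\p^{(2^{l-1})}(v)) = 0$ rather than $\dep_{\p}(v) - 2^{l-1}$, and the arithmetic step would collapse. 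The resolution is to observe that in that situation $h_{l-1}(g_{l-1}(v)) = 0 \neq \nul$, so the algorithm is forced into the second branch, not the third; the depth arithmetic only needs to be verified where it is actually invoked, so the potential singularity never arises in the relevant case.
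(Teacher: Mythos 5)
Your proof is correct and follows essentially the same route as the paper's: a single joint induction on the loop index establishing $g_l(v) = \p^{(2^l)}(v)$ together with the invariant that $h_l(v) \neq \nul$ iff $\dep_\p(v) \leq 2^l - 1$ (with $h_l(v) = \dep_\p(v)$ in that case), from which termination at $r = \lceil\log(\dep(\p)+1)\rceil$ is immediate. (As a minor aside, your worry about the root-already-reached edge case in the third branch is unfounded: the third-branch hypothesis forces $\dep_\p(\p^{(2^{l-1})}(v)) \geq 2^{l-1} > 0$, so $\p^{(2^{l-1})}(v)$ is never a root there, and in the second branch the identity $\dep_\p(\p^{(k)}(v)) = \dep_\p(v) - k$ holds whenever $\dep_\p(v) \geq k$, including the boundary case where $\p^{(k)}(v)$ is the root.)
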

\begin{proof}
 $h_l$ and $g_l$ will satisfies the properties in the following claim.

\begin{claim}\label{cla:find_path_rounds}
$\forall i\in \{0\}\cup[r],v\in V$ $g_i(v)=\p^{(2^i)}(v),$ and if $\dep_{\p}(v)\leq 2^i-1$ then $h_i(v)=\dep_{\p}(v).$ Otherwise $\dep_{\p}(v)= \nul .$
\end{claim}
\begin{proof}
The proof is by induction. The claim is obviously true when $i=0.$ Suppose the claim is true for $i-1.$ We have $g_i(v)=g_{i-1}(g_{i-1}(v))=\p^{(2^{i-1})}(\p^{(2^{i-1})}(v))=\p^{(2^i)}(v).$ If $h_i(v)\not= \nul ,$ then there are two cases. In the first case, we have $h_i(v)=h_{i-1}(v).$ By induction we know $h_i(v)=\dep_{\p}(v).$ In the second case, we have $h_i(v)=h_{i-1}(g_{i-1}(v))+2^{i-1}=\dep_{\p}(\p^{(2^{i-1})}(v))+2^{i-1}.$ Notice that in this case $h_{i-1}(v)= \nul ,$ thus by the induction, $\dep_{\p}(v)\geq 2^{i-1}.$ Therefore, $\dep_{\p}(v)=\dep_{\p}(\p^{(2^{i-1})}(v))+2^{i-1}=h_i(v).$ If $h_i(v)= \nul ,$ then it means that $h_{i-1}(\p^{(2^{i-1})}(v))= \nul $ which implies that $\dep_{\p}(v)\geq 2^i.$
\end{proof}
Due to the above claim, we know that if $i\geq \lceil\log(\dep_{\p}(v)+1)\rceil$ then $h_i(v)\not= \nul .$ Thus, we have $r\leq \lceil\log(\dep(\p)+1)\rceil.$ Since the procedure returns $h_r$ as $\dep_{\p},$ the returned $\dep_{\p}$ is correct.
\end{proof}

\begin{algorithm}[t]
\caption{Path in a Tree}\label{alg:path_find}
\begin{algorithmic}[1]
\Procedure{ \textsc{FindPath} }{$\p:V\rightarrow V,q\in V$} \Comment{Lemma~\ref{lem:find_path}}
\State Output: $\dep_{\p}:V\rightarrow\mathbb{Z}_{\geq 0},P\subseteq V,w\in V\cup\{ \nul \}.$
\State $(r,\dep_{\p},\{g_i\mid i\in\{0\}\cup[r]\}) = \textsc{FindAncestors}(\p)$ \Comment{Algorithm~\ref{alg:ancestors}}
\State Let $S_0=\{(q,g_r(q))\},k=\lceil\log(\dep_{\p}(q))\rceil.$\Comment{$S_0$ contains $(q,\text{the~root~of~}q)$} \label{sta:alg_find_init}
\For{$i=1\rightarrow k$} \Comment{$S_i$ is a set of segments partitioned the path from $q$ to the root of $q$}
    \State Let $S_i\leftarrow \emptyset.$
    \For{$(x,y)\in S_{i-1}$}
        \If{$\dep_{\p}(x)-\dep_{\p}(y)> 2^{k-i}$} $S_i\leftarrow S_i\cup\{(x,g_{k-i}(x)),(g_{k-i}(x),y)\}.$\label{sta:alg_find_split}
        \Else~ $S_i\leftarrow S_i\cup\{(x,y)\}.$\label{sta:alg_find_not_split}
        \EndIf
    \EndFor
\EndFor \Comment{$S_k$ only contains segments with length at most $1$}
\State Let $P\leftarrow\{q\}$
\For{$(x,y)\in S_k$}
\State Let $P\leftarrow P\cup\{y\}$
\EndFor
\State Find $w\in P$ with $\dep_{\p}(w)=1.$ If $w$ does not exist, let $w\leftarrow  \nul .$
\State \Return $(\dep_{\p},P,w)$
\EndProcedure
\end{algorithmic}
\end{algorithm}

\begin{lemma}\label{lem:find_path}
Let $\p:V\rightarrow V$ be a set of parent pointers (See Definition~\ref{def:parent_pointers}) on a vertex set $V.$ Let $q$ be a vertex in $V$. Let $(\dep_{\p},P,w) = \textsc{FindPath}(\p,q)$ (Algorithm~\ref{alg:path_find}). Then $\dep_{\p}:V\rightarrow \mathbb{Z}_{\geq 0}$ records the depth of every vertex in $V$ and $P\subseteq V$ is the set of all vertices on the path from $q$ to the root of $q$, i.e. $P=\{v\in V\mid \exists k\geq 0,v=\p^{(k)}(q)\}.$ If $\dep_{\p}(q)\geq 1,$ then $w=\p^{(\dep_{\p}(q)-1)}(q).$ Furthermore, $k$ should be at most $\lceil\log(\dep(\p))\rceil$. 
\end{lemma}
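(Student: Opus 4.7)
The plan is to verify the three output guarantees separately, leaning on Lemma~\ref{lem:depthandancestor} to handle everything about depths and ancestor pointers. First, since $(\dep_{\p}, \{g_i\})$ comes directly from $\textsc{FindAncestors}(\p)$, Lemma~\ref{lem:depthandancestor} already gives that $\dep_{\p}$ records the depth of every vertex and that $g_i(v) = \p^{(2^i)}(v)$ for all $i \in \{0\}\cup[r]$ and $v \in V$. The bound on $k$ in the statement (the number of main-loop iterations) follows immediately from $k = \lceil \log \dep_{\p}(q) \rceil \leq \lceil \log \dep(\p) \rceil$.

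For the correctness of $P$, I would prove by induction on $i$ the following loop invariant: after iteration $i$, the set $S_i$ is a collection of pairs $(x,y)$ such that (a) $y = \p^{(j)}(x)$ for some $j \geq 0$ with $j = \dep_{\p}(x) - \dep_{\p}(y)$; (b) $j \leq 2^{k-i}$; and (c) concatenating the segments in order yields the full path from $q$ to the root $\p^{(\infty)}(q)$, i.e.\ the second coordinate of one segment equals the first coordinate of the next, starting at $q$ and ending at the root. The base case $i=0$ follows from line~\ref{sta:alg_find_init}: $S_0 = \{(q, g_r(q))\} = \{(q, \p^{(2^r)}(q))\}$, and since $r \geq \lceil \log(\dep_{\p}(q)+1)\rceil \geq k$, we have $\p^{(2^r)}(q) = \p^{(\infty)}(q)$; also $\dep_{\p}(q) \leq 2^k$, so the segment length bound holds.

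For the inductive step, consider a segment $(x,y) \in S_{i-1}$ with $\dep_{\p}(x) - \dep_{\p}(y) \leq 2^{k-i+1}$. If the length exceeds $2^{k-i}$, line~\ref{sta:alg_find_split} splits it into $(x, g_{k-i}(x)) = (x, \p^{(2^{k-i})}(x))$ and $(\p^{(2^{k-i})}(x), y)$; the first segment has length exactly $2^{k-i}$, and the second has length at most $2^{k-i+1} - 2^{k-i} = 2^{k-i}$. Concatenation is preserved because $\p^{(2^{k-i})}(x)$ lies on the path from $x$ to $y$. Otherwise line~\ref{sta:alg_find_not_split} leaves the segment intact, which already satisfies the length bound. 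After $k$ iterations, every segment in $S_k$ has length at most $2^0 = 1$, so each segment is either a single vertex or a direct edge $(x, \p(x))$. Collecting $q$ together with the second coordinate of each segment in $S_k$ thus yields exactly the vertex set of the path from $q$ to the root, proving the $P$ claim.

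Finally, $w$ is selected as the unique vertex in $P$ with $\dep_{\p}(w) = 1$, and since $P$ is the whole root-to-$q$ path, such a vertex exists precisely when $\dep_{\p}(q) \geq 1$ and it must equal $\p^{(\dep_{\p}(q)-1)}(q)$. The main obstacle in fleshing this out is the careful bookkeeping in the inductive step to check that the segment-length halving and the concatenation invariant really are maintained simultaneously, but no new ideas are needed beyond the doubling identity $g_i = \p^{(2^i)}$ already established in Lemma~\ref{lem:depthandancestor}.
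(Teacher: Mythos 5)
Your proof is correct and follows essentially the same route as the paper: both arguments establish, by induction on the loop index, a segment invariant (each pair in $S_i$ is a vertex together with one of its ancestors, successive segments chain end-to-start to cover the whole root-to-$q$ path, and segment lengths halve from $2^{k-i+1}$ to $2^{k-i}$), and then read off $P$ and $w$ from the final set $S_k$ of unit-length segments. The one small place to be slightly careful is your base-case assertion that $r \geq \lceil\log(\dep_\p(q)+1)\rceil$: Lemma~\ref{lem:depthandancestor} only states an upper bound on $r$, so you should instead observe directly that $\textsc{FindAncestors}$ only terminates once $h_r(v)\neq\nul$ for every $v$, which by Claim~\ref{cla:find_path_rounds} forces $\dep_\p(q) \leq 2^r-1$ and hence $g_r(q)=\p^{(\infty)}(q)$ — but this is a cosmetic patch, not a gap in the argument.
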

\begin{proof} %
By Lemma~\ref{lem:depthandancestor}, since $(r,\dep_{\p},\{g_i\mid i\in\{0\}\cup[r]\})=\textsc{FindAncestors}(\p),$ we know $r$ should be at most $\lceil\log(\dep(\p)+1)\rceil,$ $\dep_{\p}:V\rightarrow \mathbb{Z}_{\geq 0}$ records the depth of every vertex in $V,$ and $\forall i\in \{0\}\cup[r],v\in V$ $g_i(v)=\p^{(2^i)}(v).$ Thus $k=\lceil\log(\dep_{\p}(q))\rceil\leq \lceil\log(\dep(\p)+1)\rceil$

Now let us prove that $P$ is the vertex set of all the vertices on the path from $q$ to the root of $q$. We use divide-and-conquer to get $P$. The following claim shows that $S_i$ is a set of segments which is a partition of the path, and each segment has length at most $2^{k-i}.$
\begin{claim}\label{cla:property_path}
$\forall i\in\{0\}\cup[k],$ $S_i$ satisfies the following properties:
\begin{enumerate}
\item $\exists (x,y)\in S_i$ such that $x=q.$\label{itm:path_pro1}
\item $\exists (x,y)\in S_i$ such that $y=g_r(q).$\label{itm:path_pro2}
\item $\forall (x,y)\in S_i,$ $\dep_{\p}(y)-\dep_{\p}(x)\leq 2^{k-i}.$\label{itm:path_pro3}
\item $\forall (x,y)\in S_i,$ if $y\not=g_r(q),$ then $\exists (x',y')\in S_i,x'=y.$\label{itm:path_pro4}
\item $\forall (x,y)\in S_i,$ $\exists j\in \mathbb{Z}_{\geq 0}, \p^{(j)}(x)=y.$\label{itm:path_pro5}
\end{enumerate}
\end{claim}
\begin{proof}
Our proof is by induction. According to line~\ref{sta:alg_find_init}, all the properties hold when $i=0.$ Suppose all the properties hold for $i-1.$ For property~\ref{itm:path_pro1}, by induction we know there exists $(x,y)\in S_{i-1}$ such that $x=q.$ Then by line~\ref{sta:alg_find_split} and line~\ref{sta:alg_find_not_split}, there must be an $(x,y')$ in $S_i.$ For property~\ref{itm:path_pro2}, by induction we know there exists $(x,y)\in S_{i-1}$ such that $y=g_r(q).$ Thus, there must be an $(x',y)$ in $S_i.$ For property~\ref{itm:path_pro3}, if $(x,y)$ is added into $S_i$ by line~\ref{sta:alg_find_not_split}, then $\dep_{\p}(x)-\dep_{\p}(y)\leq 2^{k-i}.$ Otherwise, in line~\ref{sta:alg_find_split}, we have $\dep_{\p}(x)-\dep_{\p}(g_{k-i}(x))\leq 2^{k-i},\dep_{\p}(g_{k-i}(x))-\dep_{\p}(y)\leq 2^{k-i+1}-2^{k-i}=2^{k-i}.$
For property~\ref{itm:path_pro4}, if $(x,y)$ is added into $S_i$ by line~\ref{sta:alg_find_not_split}, then by induction there is $(y,y')\in S_{i-1},$ and thus by line~\ref{sta:alg_find_not_split} and line~\ref{sta:alg_find_split}, there must be $(y,y'')\in S_i.$ Otherwise, in line~\ref{sta:alg_find_split} will generate two pairs $(x,g_{k-i}(x)),(g_{k-i}(x),y).$ For $(x,g_{k-i}(x)),$ the property holds. For $(g_{k-i}(x),y),$ there must be $(y,y')\in S_{i-1}$ and thus there should be $(y,y'')\in S_i.$ For property~\ref{itm:path_pro5}, since $g_{k-i}(x)=\p^{(k-i)}(x),$ for all pairs generated by line~\ref{sta:alg_find_split} and line~\ref{sta:alg_find_not_split}, the property holds.
\end{proof}

By Claim~\ref{cla:property_path}, we know
\begin{align*}
S_k=\{(q,\p(q)),(\p(q),\p^{(2)}(q)),(\p^{(2)}(q),\p^{(3)}(q)),\cdots,(\p^{(\dep_{\p}(q)-1)}(q),\p^{(\dep_{\p}(q))}(q))\}.
\end{align*}
Thus, $P$ is the set of all the vertices on the path from $q$ to the root of $q$. And $w=\p^{(\dep_{\p}(q)-1)}(q)$ when $\dep_{\p}(q)\geq 1.$
\end{proof}

\begin{algorithm}[t]
\caption{Root Changing}\label{alg:root_change}
\begin{algorithmic}[1]
\Procedure{RootChange}{$\p:V\rightarrow V,q\in V$} \Comment{Lemma~\ref{lem:root_change}}
\State Output: $\wh{\p}:V\rightarrow V.$
\State $(\dep_{\p},P,w) = \textsc{FindPath}(\p,q).$ \Comment{Algorithm~\ref{alg:path_find}}
\State $\forall v\in V\setminus P,$ let $\wh{\p}(v)=\p(v).$
\State Let $\wh{\p}(q)=q.$
\State Let $h:\{0\}\cup[\dep_{\p}(q)]\rightarrow P$ such that $\forall i\in\{0\}\cup[\dep_{\p}(q)], h(i)=x$  where $\dep_{\p}(x)=i.$
\For{$v\in P\setminus\{q\}$} \Comment{Reverse $\p$ of all the vertices on the path from $q$ to the root of $q$.}
\State Let $\wh{\p}(v)=h(\dep_{\p}(v)+1).$
\EndFor
\State \Return $\wh{\p}.$
\EndProcedure
\end{algorithmic}
\end{algorithm}

\begin{lemma}\label{lem:root_change}
Let $\p:V\rightarrow V$ be a set of parent pointers (See Definition~\ref{def:parent_pointers}) on a vertex set $V.$ Let $q$ be a vertex in $V$. Let $\wh{\p}=\textsc{RootChange}(\p,q)$ (Algorithm~\ref{alg:root_change}). Then $\wh{\p}:V\rightarrow V$ is still a set of parent pointers (See Definition~\ref{def:parent_pointers}) on $V.$ $\forall v\in V,$ if $\p^{(\infty)}(v)=\p^{(\infty)}(q)$ then $\wh{\p}^{(\infty)}(v)=q.$ Otherwise $\wh{\p}^{(\infty)}(v)=\p^{(\infty)}(v).$ $\forall u\not=v\in V,\p(v)=u\Leftrightarrow$ either $\wh{\p}(v)=u$ or $\wh{\p}(u)=v.$ Furthermore, $\dep(\wh{\p})\leq 2\dep(\p).$
\end{lemma}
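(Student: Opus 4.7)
My plan is to analyze the structure of $\wh{\p}$ directly from the algorithm, separating vertices into three classes: (i) those outside the tree of $q$ in $\p$, (ii) those in the tree of $q$ but not on the path $P=\{q,\p(q),\p^{(2)}(q),\ldots,\p^{(\dep_{\p}(q))}(q)\}$ returned by \textsc{FindPath}, and (iii) those on $P$. By Lemma~\ref{lem:find_path}, $P$ is exactly the set of ancestors of $q$ in $\p$, and $\dep_{\p}$ correctly records depths. A preliminary step is to verify the key identity $h(\dep_{\p}(q)-k)=\p^{(k)}(q)$, which implies that for $v=\p^{(k)}(q)\in P\setminus\{q\}$ with $k\ge 1$, the update $\wh{\p}(v)=h(\dep_{\p}(v)+1)=h(\dep_{\p}(q)-k+1)=\p^{(k-1)}(q)$ indeed reverses the parent pointers along $P$, making $q$ the new root.

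Next I would verify property 1 (that $\wh{\p}$ is a valid set of parent pointers) by tracing chains $\wh{\p}^{(i)}(v)$ and showing they stabilize. For $v$ in class (i), the chain coincides with its $\p$-chain and stabilizes at $\p^{(\infty)}(v)$. For $v$ in class (ii), the $\wh{\p}$-chain equals the $\p$-chain until it hits $P$ (which must happen because the $\p$-chain of $v$ terminates at $\p^{(\infty)}(q)\in P$); from that vertex on $P$, the chain descends along the reversed pointers to $q$. For $v=\p^{(k)}(q)\in P$, the chain reaches $q$ in exactly $k$ steps. Property 2 is an immediate corollary: $\wh{\p}^{(\infty)}(v)=q$ for $v$ in classes (ii) and (iii), and $\wh{\p}^{(\infty)}(v)=\p^{(\infty)}(v)$ otherwise.

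For property 3 I would argue that the multiset of unordered pairs $\{v,\p(v)\}$ with $v\neq \p(v)$ equals the corresponding multiset for $\wh{\p}$. Off $P$ the pointers are unchanged, and on $P$ every directed edge $(\p^{(k)}(q),\p^{(k+1)}(q))$ of $\p$ for $0\le k<\dep_{\p}(q)$ becomes the reversed directed edge $(\p^{(k+1)}(q),\p^{(k)}(q))$ in $\wh{\p}$; setting $\wh{\p}(q)=q$ replaces only the self-loop (or, if $\dep_{\p}(q)=0$, leaves $\wh{\p}=\p$ trivially). Hence the unordered edge set is preserved, which is exactly the stated equivalence. For property 4, any $v$ in class (i) satisfies $\dep_{\wh{\p}}(v)=\dep_{\p}(v)\le \dep(\p)$; any $v$ in class (ii) or (iii) reaches $q$ under $\wh{\p}$ by first walking up in $\p$ to its first ancestor on $P$ (at most $\dep_{\p}(v)\le \dep(\p)$ steps) and then walking down the reversed path to $q$ (at most $\dep_{\p}(q)\le \dep(\p)$ steps), giving $\dep_{\wh{\p}}(v)\le 2\dep(\p)$.

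The main obstacle will be carefully tracking the correspondence between the depth index used in the algorithm (via $h$) and the ancestor indexing $\p^{(k)}(q)$ on $P$, since the algorithm reverses pointers by a depth-based lookup rather than by walking along $P$. Once the identity $h(\dep_{\p}(q)-k)=\p^{(k)}(q)$ is pinned down using Lemma~\ref{lem:depthandancestor} and Lemma~\ref{lem:find_path}, the remaining arguments reduce to straightforward case analyses on whether a given vertex lies on $P$, in the subtree of $q$, or outside that subtree.
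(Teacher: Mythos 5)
Your proof is correct and follows essentially the same route as the paper's: classify vertices by whether their $\p$-ancestor chain meets $P$, observe that $\wh{\p}$ agrees with $\p$ off $P$ and reverses the pointers along $P$ (via the identity $h(\dep_{\p}(q)-k)=\p^{(k)}(q)$), and bound the depth by a walk-up-then-walk-down argument. Your write-up is somewhat more explicit than the paper's, which proves only the root-relabeling claim and the depth bound and leaves the validity of $\wh{\p}$ and the edge-preservation statement implicit; your reading of the latter as an equality of undirected edge sets is the intended one.
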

\begin{proof}
For a vertex $v\in V,$ if $\{u\mid i\in\mathbb{Z}_{\geq 0},u=\p^{(i)}(v)\}\cap P=\emptyset,$ then we have $\forall i\in\mathbb{Z}_{\geq 0},\p^{(i)}(v)=\wh{\p}^{(i)}(v).$ According to Lemma~\ref{lem:find_path}, $P=\{u\in V\mid i\in \mathbb{Z}_{\geq 0},\p^{(i)}(q)=u\}.$ Then for all $v\in P\setminus\{q\},$ if $\p(u)=v$ then $\wh{\p}(v)=u.$ Thus, $\forall u\in P,$ $\wh{\p}^{(\infty)}(u)=q.$ Let $i^*$ be the smallest number such that $\p^{(i^*)}(v)\in P.$ Then $\wh{\p}^{(i^*)}(v)\in P.$ Thus, $\wh{\p}^{(\infty)}(v)=\wh{\p}^{(\infty)}(\wh{\p}^{(i^*)}(v))=q.$ Furthermore, we have $\forall v\in V,\dep(\wh{\p})\leq \dep(\p)+\dep_{\p}(q)\leq 2\dep(\p).$
\end{proof}

    \subsection{Spanning Forest Expansion}
In this section, we give the definition of spanning forest. If we are given a spanning forest of a contracted graph and spanning trees of each contracted component, then we show a procedure which can merge them to get a spanning forest of the original graph. Before go to the details, let us formally define the spanning forest.

\begin{definition}[Rooted Spanning Forest]\label{def:spanning_forest}
Let $G=(V,E)$ be an undirected graph. Let $\p:V\rightarrow V$ be a set of parent pointers which is compatible (Definition~\ref{def:compatible_parent_pointers}) with $G$. If $\forall u,v\in V,\dist_G(u,v)<\infty\Rightarrow \p^{(\infty)}(u)=\p^{(\infty)}(v),$ and $\forall v\in V,\p(v)\not=v\Rightarrow (v,\p(v))\in E,$ then we call $\p$ a rooted spanning forest of $G$.
\end{definition}

The Algorithm~\ref{alg:spanning_forest_expansion} shows how to combine the spanning forest in the contracted graph with local spanning trees to get a spanning forest in the graph before contraction. Figure~\ref{fig:spanforestexp} shows an example.

\begin{figure}[!t]
  \centering
  \includegraphics[width=0.48\textwidth]{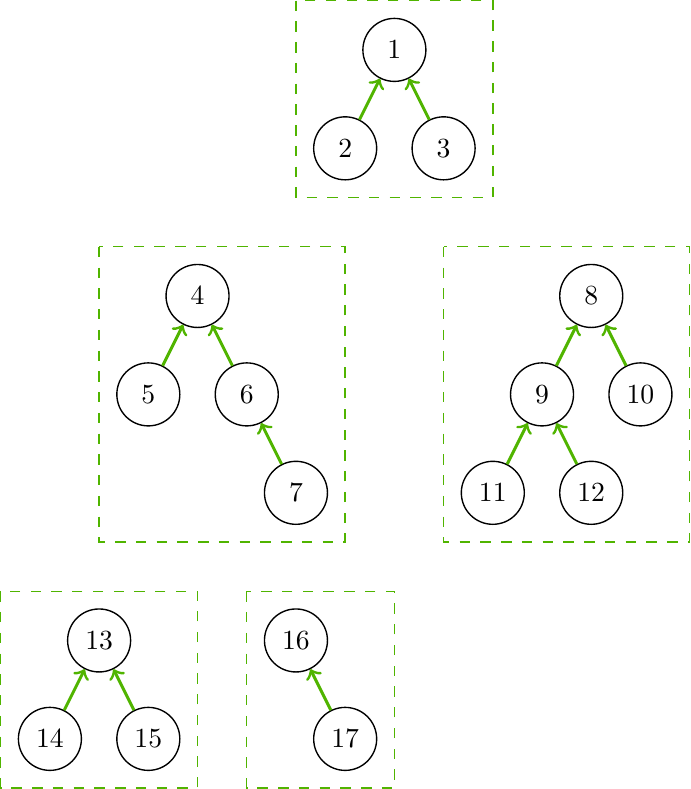}
  \hspace{1mm}
  \includegraphics[width=0.48\textwidth]{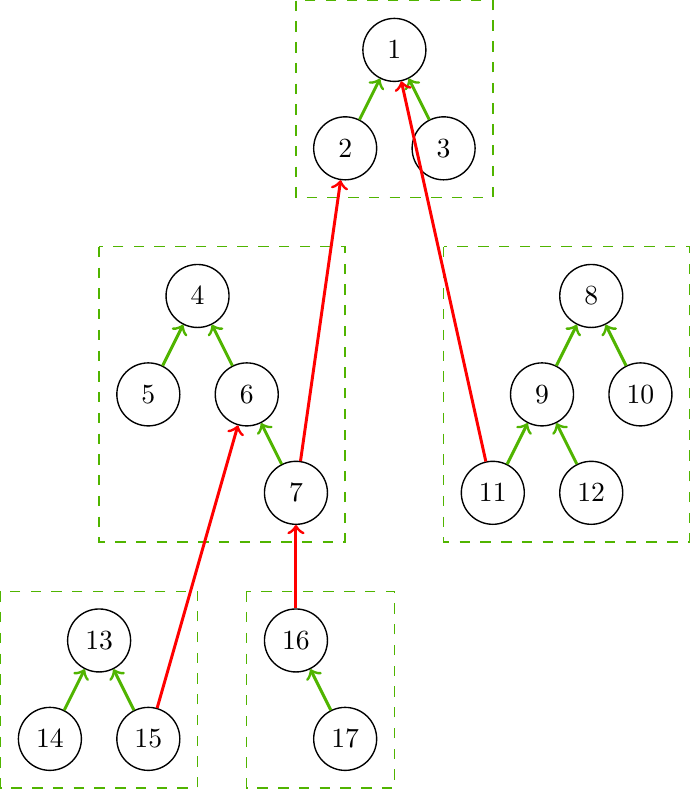}\\
  \vspace{2mm}
  \includegraphics[width=0.48\textwidth]{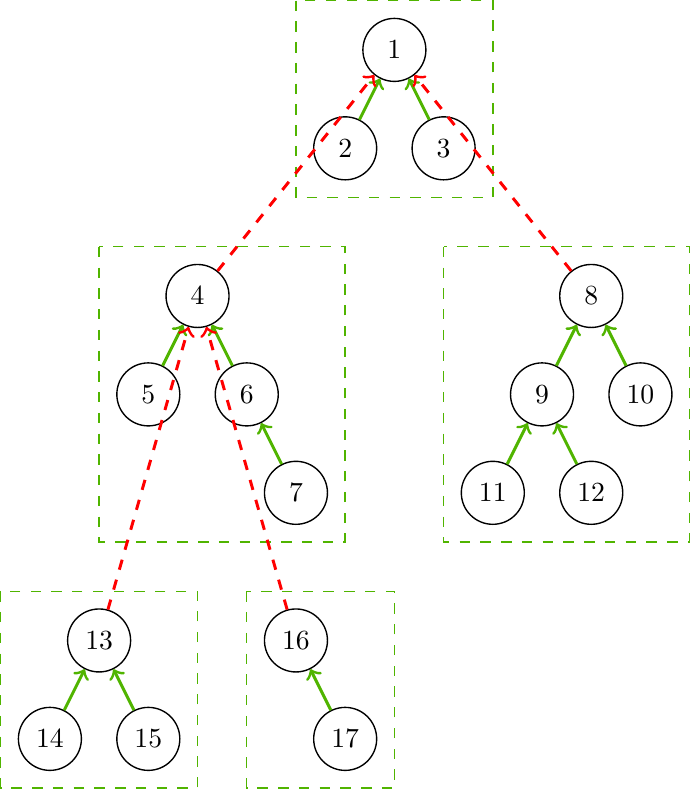}
  \hspace{1mm}
  \includegraphics[width=0.48\textwidth]{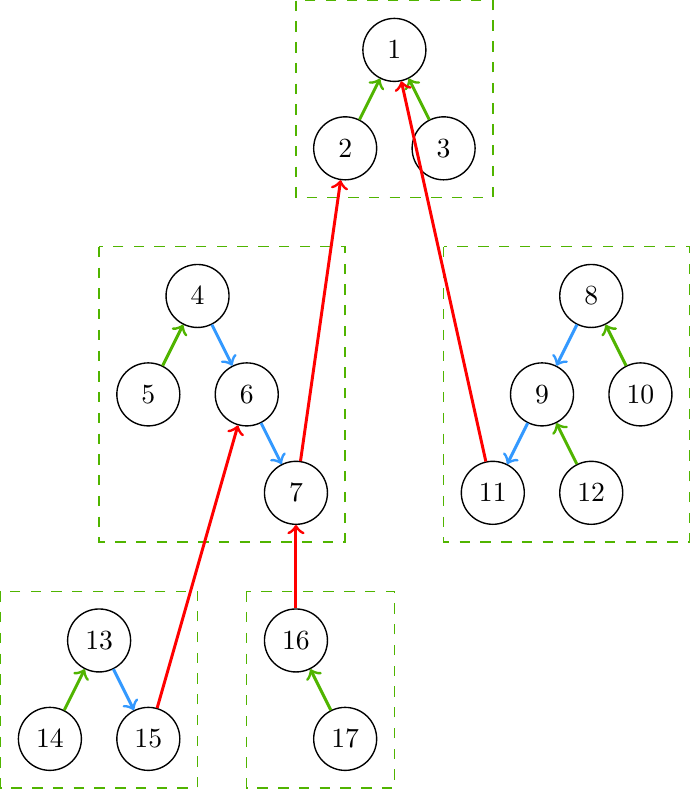}
  \caption{Each tree with green edges on the top-left is a rooted tree of each contracted component. For example, there are five components $\{1,2,3\}, \{4,5,6,7\}, \{ 8,9,10,11,12\}, \{ 13,14,15\}, \{16,17\}$. The dashed edges in the bottom-left figure is a root spanning tree of five components. The red edges in the top-right figure correspond to the dashed edges in the bottom-left figure before contraction. In bottom-right figure, by changing (see blue edges) the root of each contracted tree, we get a rooted spanning tree in the original graph}\label{fig:spanforestexp}
\end{figure}

\begin{lemma}\label{lem:expandrootedspanningforest}
Let $G_2=(V_2,E_2)$ be an undirected graph. Let $\wt{\p}:V_2\rightarrow V_2$ be a set of parent pointers (See Definition~\ref{def:parent_pointers}) which satisfies that $\forall v\in V_2$ with $\wt{\p}(v)\not=v$, $(v,\wt{\p}(v))$ must be in $E_2$. Let $G_1=(V_1,E_1)$ be an undirected graph satisfies $V_1=\{v\in V_2\mid \wt{\p}(v)=v\},E_1=\{(u,v)\in V_1\times V_1\mid u\not=v,\exists (x,y)\in E_2,\wt{\p}^{(\infty)}(x)=u,\wt{\p}^{(\infty)}(y)=v\}.$ Let $\p:V_1\rightarrow V_1$ be a rooted spanning forest (See Definition~\ref{def:spanning_forest}) of $G_1.$ Let $f:V_1\times V_1\rightarrow\{ \nul \}\cup \left(V_2\times V_2\right)$ satisfy the following property:
for $u\not=v\in V_1,$ if $\p(u)=v,$ then $f(u,v)\in\{(x,y)\in E_2\mid \wt{\p}^{(\infty)}(x)=u,\wt{\p}^{(\infty)}(y)=v\},$ and $f(v,u)\in \{(x,y)\in E_2\mid \wt{\p}^{(\infty)}(x)=v,\wt{\p}^{(\infty)}(y)=u\}.$
Let $\wh{\p} = \textsc{ForestExpansion}(\p,\wt{\p},f).$ Then $\wh{\p}:V_2\rightarrow V_2$ is a rooted spanning forest of $G_2.$ In addition, $\dep(\wh{\p})\leq (2\cdot\dep(\wt{\p})+1)(\dep(\p)+1).$
\end{lemma}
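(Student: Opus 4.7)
The plan is to read $\textsc{ForestExpansion}$ as performing, on each $\wt{\p}$-component, a $\textsc{RootChange}$ (Algorithm~\ref{alg:root_change}) so that the new root matches the endpoint specified by $f$, and then gluing the resulting rerooted trees together according to $\p$. Concretely, for each $u\in V_1$ with $\p(u)=v\neq u$, let $(x,y)=f(u,v)$; by hypothesis $\wt{\p}^{(\infty)}(x)=u$, $\wt{\p}^{(\infty)}(y)=v$, and $(x,y)\in E_2$. Apply $\textsc{RootChange}(\wt{\p},x)$ restricted to the component of $u$ to reorient that tree with $x$ as its root, and then set $\wh{\p}(x)\leftarrow y$; for $u\in V_1$ with $\p(u)=u$, leave that component's pointers as $\wt{\p}$. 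By Lemma~\ref{lem:root_change}, within each $\wt{\p}$-component the restriction of $\wh{\p}$ is a set of parent pointers of depth at most $2\dep(\wt{\p})$ whose unique local root is $x$ (or $u$ in the root case), and every parent edge is either an original $\wt{\p}$-edge or a reversal of one, hence in $E_2$ since $G_2$ is undirected.

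Next I would verify the requirements of a rooted spanning forest (Definition~\ref{def:spanning_forest}) for $\wh{\p}$. The edge condition ``$(v,\wh{\p}(v))\in E_2$ whenever $\wh{\p}(v)\neq v$'' follows from the previous paragraph together with $(x,y)\in E_2$ coming directly from $f$. To show $\wh{\p}$ is a set of parent pointers on $V_2$, I would trace an arbitrary vertex $w$: iterating $\wh{\p}$ first moves $w$ up inside its $\wt{\p}$-component in at most $2\dep(\wt{\p})$ steps to the local root $x$, then jumps to $y$ in the parent component, then moves up that component in at most $2\dep(\wt{\p})$ steps, and so on; since $\p$ has depth at most $\dep(\p)$, this chain reaches a component rooted at some $u^*$ with $\p(u^*)=u^*$ after at most $\dep(\p)+1$ components, and terminates at the $\wt{\p}$-root of that component, which is a fixed point of $\wh{\p}$. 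Summing, $\dep(\wh{\p})\leq (\dep(\p)+1)\cdot 2\dep(\wt{\p}) + \dep(\p) \leq (2\dep(\wt{\p})+1)(\dep(\p)+1)$, which is the claimed bound.

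For the global connectivity condition, suppose $p,q\in V_2$ with $\dist_{G_2}(p,q)<\infty$. By Corollary~\ref{cor:tree_contract_conn} applied to $(G_2,\wt{\p})$, their $\wt{\p}$-roots $u_p=\wt{\p}^{(\infty)}(p)$ and $u_q=\wt{\p}^{(\infty)}(q)$ lie in the same connected component of $G_1$, and since $\p$ is a rooted spanning forest of $G_1$ we have $\p^{(\infty)}(u_p)=\p^{(\infty)}(u_q)=:u^*$. From the trace above, the $\wh{\p}$-root of any vertex whose $\wt{\p}$-component root is $u$ equals the $\wt{\p}$-root of the component indexed by $\p^{(\infty)}(u)$, namely $u^*$ itself; hence $\wh{\p}^{(\infty)}(p)=\wh{\p}^{(\infty)}(q)=u^*$. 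The main subtlety, and the part that will require the most careful bookkeeping, is checking that the per-component $\textsc{RootChange}$s do not collide at boundary vertices: the edge $(x,y)$ from $f(u,\p(u))$ is an ``outgoing'' reassignment attached to component $u$, while any incoming edges from $\p$-children of $u$ only overwrite pointers inside those child components, so the definitions of $\wh{\p}$ from different components are disjoint and $\wh{\p}$ is well-defined.
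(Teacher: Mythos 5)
Your proof is correct and follows essentially the same route as the paper: identify that $\textsc{ForestExpansion}$ reroots each $\wt{\p}$-component via $\textsc{RootChange}$ at the vertex given by $f$, trace $\wh{\p}$-iterates through components along $\p$ to establish both the fixed-point termination and the depth bound, and use the relation between $\dist_{G_2}$ and $\dist_{G_1}$ (you via Corollary~\ref{cor:tree_contract_conn}, the paper directly from the definition of $E_1$) to get the spanning property. The paper casts the "trace" as a formal induction on $\dep_{\p}(\wt{\p}^{(\infty)}(\cdot))$, but the content and the resulting bound are the same.
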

\begin{proof}
Let $x,y\in V_2,u=\wt{\p}^{(\infty)}(x),v=\wt{\p}^{(\infty)}(y)\in V_1$ if $\dist_{G_2}(x,y)<\infty,$ then since $E_1=\{(u',v')\in V_1\times V_1\mid u'\not=v',\exists (x',y')\in E_2,\wt{\p}^{(\infty)}(x')=u',\wt{\p}^{(\infty)}(y')=v'\},$ it must be true that $\dist_{G_1}(u,v)<\infty.$ Since $\p$ is a spanning forest of $G_1,$ we have $\p^{(\infty)}(u)=\p^{(\infty)}(v).$ It suffices to say $\forall x\in V_2,\wh{\p}^{(\infty)}(x)=\p^{(\infty)}(\wt{\p}^{(\infty)}(x)).$ We can prove it by induction on $\dep_{\p} (\wt{\p}^{(\infty)}(x)).$ Let $u=\wt{\p}^{(\infty)}(x).$ If $\dep_{\p}(u)=0,$ then $\p^{(\infty)}(u)=u.$ In this case, we have $\wh{\p}^{(\infty)}(x)=\wt{\p}^{(\infty)}(x)=u=\p^{(\infty)}(u)=\p^{(\infty)}(\wt{\p}^{(\infty)}(x)),$ and also we have $\dep_{\wh{\p}}(x)=\dep_{\wt{\p}}(x).$ Now suppose for all $x\in V_2$ with $\dep_{\p}(\wt{\p}^{(\infty)}(x))\leq i-1,$ it has $\wh{\p}^{(\infty)}(x)=\p^{(\infty)}(\wt{\p}^{(\infty)}(x))$ and $\dep_{\wh{\p}}(x)\leq i\cdot (2\dep(\wt{\p})+1).$ Let $y\in V_2$ satisfy $\dep_{\p}(\wt{\p}^{(\infty)}(y))=i.$ Let $v=\wt{\p}^{(\infty)}(y).$ By line~\ref{sta:alg_spanexp_f} and the properties of $f$, we know $\wt{\p}^{(\infty)}(x_v)=v,$ and $\wt{\p}^{(\infty)}(y_v)=\p(v).$ By line~\ref{sta:alg_spanexp_change}, line~\ref{sta:alg_spanexp_root_up} and Lemma~\ref{lem:root_change}, we have $\wh{\p}_v^{(\infty)}(y)=x_v,\wh{\p}(x_v)=y_v.$ Thus, there must be $k\leq 2\dep_{\wt{\p}}(y)$ such that $\wh{\p}^{(k)}(y)=x_v.$ Since $\wh{\p}^{(\infty)}(y_v)=\p^{(\infty)}(v)$ and $\dep_{\wh{\p}}(y_v)\leq i\cdot(2\dep(\wt{\p})+1),$ we have $\wh{\p}^{(\infty)}(y)=\p^{(\infty)}(v)=\p^{(\infty)}(\wt{\p}^{(\infty)}(y))$ and $\dep_{\wh{\p}}(y)\leq (i+1)\cdot(2\dep(\wt{\p})+1).$

In addition, by the properties of $f$ and Lemma~\ref{lem:root_change}, $\forall v\in V_2$ with $\wh{\p}(v)\not =v,$ we have $(v,\wh{\p}(v))\in E_2.$ To conclude, $\wh{\p}: V_2\rightarrow V_2$ is a spanning forest of $G_2,$ and $\dep(\wh{\p})\leq (\dep(\p)+1)(2\dep(\wt{\p})+1).$
\end{proof}

\begin{algorithm}[t]
\caption{Spanning Forest Expansion}\label{alg:spanning_forest_expansion}
\begin{algorithmic}[1]
\small
\Procedure{\textsc{ForestExpansion}}{$\p:V_1\rightarrow V_1,\wt{\p}:V_2\rightarrow V_2,f:V_1\times V_1\rightarrow \{ \nul \}\cup \left(V_2\times V_2\right)$}

\Comment{Lemma~\ref{lem:expandrootedspanningforest}}
\State Output: $\wh{\p}:V_2\rightarrow V_2.$
\State $((V'_2,\emptyset),\wt{\p}^{(\infty)}) = \textsc{TreeContraction}((V_2,\emptyset),\wt{\p}).$\label{sta:alg_span_exp_tree_contract} \Comment{Algorithm~\ref{alg:tree_contraction}}
\For{$v\in V_1$}
\State Let $V_2(v)=\{u\in V_2\mid \wt{\p}^{(\infty)}(u)=v\}.$
\State Let $\wt{\p}_v:V_2(v)\rightarrow V_2(v)$ such that $\forall u\in V_2(v),\wt{\p}_v(u)=\wt{\p}(u).$
\If{$\p(v)\not=v$}
\State Let $(x_v,y_v)=f(v,\p(v)).$\label{sta:alg_spanexp_f}
\State $\wh{\p}_v = \textsc{RootChange}(\wt{\p}_v,x_v).$\label{sta:alg_spanexp_change} \Comment{Algorithm~\ref{alg:root_change}}
\State Let $\wh{\p}(x_v)=y_v,$ and $\forall u\in V_2(v)\setminus\{x_v\},$ $\wh{\p}(u)=\wh{\p}_v(u).$\label{sta:alg_spanexp_root_up}
\Else~$\forall u\in V_2(v)$ let $\wh{\p}(u)=\wt{\p}_v(u).$
\EndIf
\EndFor
\State \Return $\wh{\p}.$
\EndProcedure
\end{algorithmic}
\end{algorithm}
    \subsection{Spanning Forest Algorithm}
In this section, we show how to apply the ideas shown in connectivity algorithm to get an spanning forest algorithm. Algorithm~\ref{alg:batch_algorithm3} can output a spanning forest of a graph $G$, but the edges are not orientated. Then in the Algorithm~\ref{alg:batch_algorithm4}, we assign each forest edge an direction thus it is a rooted spanning forest.

Before we prove the correctness of the algorithms, let us briefly introduce the meaning of each variables appeared in the algorithms.

In Algorithm~\ref{alg:batch_algorithm3}, $G_0$ is the original input graph, for $i\in\{0\}\cup[r-1],$ $G'_i$ is obtained by deleting all the small size connected components in $G_i,$ and $G_{i+1}$ is obtained by contracting some vertices of $G'_i.$ For a vertex $v$ in graph $G_i,$ if $h_i(v)= \nul ,$ then it means that the connected component which contains $v$ is deleted when obtaining $G'_i.$ If $h_i(v)\not= \nul ,$ it means that the vertex $v$ is contracted to the vertex $h_i(v)$ when obtaining $G_{i+1}.$ $\p_i$ is a rooted forest (may not be spanning) in graph $G_i,$ if a tree from the forest is spanning in $G_i,$ then all the vertex in that tree will be deleted when obtaining $G'_i.$ Otherwise all the vertices in that tree will be contracted to the root, and the root will be one of the vertex in $G_{i+1}.$ Since each connected component in $G_{i+1}$ is obtained by contraction of some vertices in a connected component in $G_i,$ each edge in $G_{i+1}$ must correspond to an edge in $G_{i}$ where the end vertices of the edge are contracted to different vertices. Thus, each edge in $G_i$ should correspond to an edge in $G,$ and $g_i:E_i\rightarrow E$ records the such correspondence. $D_i$ records the edges added to the spanning forest $F$ in the $i^{\text{th}}$ round. For each vertex $v$ in graph $G_i,$ $\wt{T}_i(v)$ is a local shortest path tree (See definition~\ref{def:local_short_tree}) which is either with a large size or is a spanning tree in the component of $v$. $L_i$ is a set of random leaders in $G'_i$ such that in each local shortest path tree $\wt{T}_i(v),$ there is at least one leader shown in the tree. The following lemmas formally state the properties of the algorithm.

\begin{algorithm}[t]
\caption{Undirected Graph Spanning Forest}\label{alg:batch_algorithm3}
\begin{algorithmic}[1]
\small
\Procedure{SpanningForest}{$G=(V,E),m,r$} 
\Comment{Corollary~\ref{cor:correctness_batch_alg3}, Theorem~\ref{thm:success_prob_batch_alg3}}
\State Output: FAIL or $\{V_i\subseteq V\mid i\in\{0\}\cup [r]\},\{\p_i:V_i\rightarrow V_i\mid i\in\{0\}\cup [r-1]\},\{h_i:V_{i}\rightarrow V_{i+1}\cup\{ \nul \}\mid i\in\{0\}\cup[r-1]\},F\subseteq E.$
\State $n_0=n=|V|,G_0=(V_0,E_0)=(V,E).$
\State Let $g_0:E_0\rightarrow E$ be an identity map.
\State Let $n'_0=n_0.$
\For{$i=0\rightarrow r-1$}
\State $D_i\leftarrow \emptyset.$
\State $\left(\{\wt{T}_i(v)\mid v\in V_i\},\{\dep_{\wt{T}_i(v)}\mid v\in V_i\}\right)=\textsc{MultipleLargeTrees}(G_i,m).$\label{sta:alg_span_multilargetree}
\Comment{Algorithm~\ref{alg:maximal_short_tree}}
\State Let $V'_i=\{v\in V_i\mid |V_{\wt{T}_i(v)}|\geq \lceil(m/n_i)^{1/4}\rceil\},E'_i=\{(u,v)\in E_i \mid u,v\in V'_i\},G'_i=(V'_i,E'_i).$\label{sta:alg_span_create_veprime}
\State $\forall v\in V_i\setminus V'_i,$ let $h_i(v)= \nul ,u_v=\min_{u\in V_{\wt{T}_i(v)}} u.$ Let $ \p_i(v)=\p_{\wt{T}_i(u_v)}(v).$ \label{sta:alg_span_spanningtree}
\State $\forall v\in V_i\setminus V'_i,$ if $\p_i(v)\not=v,$ then $D_i\leftarrow D_i\cup\{g_i(\p_i(v),v),g_i(v,\p_i(v))\}.$\label{sta:alg_span_in_vinotprime}
\State Let $\gamma_i=\lceil(m/n_{i})^{1/4}\rceil,p_i=\min((30\log(n)+100 )/ \gamma_i,1/2).$\label{sta:alg_span_samp_prob}
\State Let $l_i:V'_i\rightarrow \{0,1\}$ be chosen randomly s.t. $\forall v\in V'_i,l_i(v)$ are i.i.d. Bernoulli random variables with $\Pr(l_i(v)=1)=p_i$.
\State Let $L_i=\{v\in V'_i\mid l_i(v)=1\}\cup\{v\in V'_i\mid \forall u\in V_{\wt{T}_i(v)},l_i(u)=0\}.$\label{sta:alg_span_leaders}
\State For $v\in V'_i,$ let $z_i(v)=\arg\min_{u\in L_i\cap V_{\wt{T}_i(v)}}\dep_{\wt{T}_i(v)}(u).$ If $z_i(v)=v,$ let $\p_i(v)=v.$\label{sta:alg_span_assign_to_leader}
\State Otherwise, $(\dep_{\wt{T}_i(v)},P_i(v),w_i(v)) = \textsc{FindPath}(\p_{\wt{T}_i(v)},z_i(v)),$ and let $\p_i(v)=w_i(v).$\label{sta:alg_span_find_path}

\Comment{Algorithm~\ref{alg:path_find}}
\State Let $((V_{i+1},E_{i+1}),\p_i^{(\infty)})=\textsc{TreeContraction}(G'_i,\p_i:V'_i\rightarrow V'_i).$ \label{sta:alg_span_tree_contraction} \Comment{Algorithm~\ref{alg:tree_contraction}}
\State $G_{i+1}=(V_{i+1},E_{i+1}),n_{i+1}=|V_{i+1}|.$
\State $\forall v\in V'_i,$ $h_i(v)=\p_i^{(\infty)}(v).$ If $\p_i(v)\not= v,$  then $D_i\leftarrow D_i\cup\{g_i(\p_i(v),v),g_i(v,\p_i(v))\}.$\label{sta:alg_span_hi_notnull}
\State Let $g_{i+1}:E_{i+1}\rightarrow E$ satisfy $g_{i+1}(u,v)=\min_{(x,y)\in E_i,h_i(x)=u,h_i(y)=v} g_i(x,y).$\label{sta:alg_span_choose_one_edge}
\State Let $n'_{i+1}=n'_i+n_{i+1}.$ If $n'_{i+1}>40n,$ then return FAIL. \label{sta:alg_span_tree_fail_on_total_n}
\EndFor
\State If $n_r\not=0,$ return FAIL.
\State Let $F=\bigcup_{i\in\{0\}\cup[r-1]} D_i.$
\State \Return $\{V_i\mid i\in\{0\}\cup [r]\},\{\p_i\mid i\in\{0\}\cup [r-1]\},\{h_i\mid i\in\{0\}\cup[r-1]\},F.$
\EndProcedure
\end{algorithmic}
\end{algorithm}

\begin{lemma}\label{lem:span_diameter}
Let $G=(V,E)$ be an undirected graph, $m$ be a parameter which is at least $16|V|,$ and $r$ be a rounds parameter. If $\textsc{SpanningForest}(G,m,r)$ (Algorithm~\ref{alg:batch_algorithm3}) does not return FAIL, then $\diam(G)=\diam(G_0)\geq \diam(G'_0)\geq \diam(G_1)\geq \diam(G'_1)\geq \cdots \geq \diam(G_r).$
\end{lemma}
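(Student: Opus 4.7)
The plan is to establish the chain inequality by analyzing the two distinct ``shrinking'' operations separately: the pruning step $G_i \to G'_i$, and the contraction step $G'_i \to G_{i+1}$.

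First I would show that $\diam(G'_i) \leq \diam(G_i)$ by arguing that $V'_i$ consists of whole connected components of $G_i$ (so $G'_i$ is just $G_i$ with some of its connected components deleted, and no edges inside the surviving components are removed). By line~\ref{sta:alg_span_multilargetree} and property~\ref{itm:large_tree_pro3} of Lemma~\ref{lem:maximal_large_short_tree}, for every $v\in V_i$ either $|V_{\wt{T}_i(v)}|\ge \lceil(m/n_i)^{1/4}\rceil$, or $V_{\wt{T}_i(v)}$ equals the entire connected component of $v$ in $G_i$. In the latter case every vertex $u$ in that component shares the same component, so $\wt{T}_i(u)$ is also a spanning tree of that component, giving the same size bound $|V_{\wt{T}_i(u)}|=|V_{\wt{T}_i(v)}|<\lceil(m/n_i)^{1/4}\rceil$. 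Hence membership in $V'_i$ is constant across each connected component of $G_i$. Combined with the definition $E'_i=\{(u,v)\in E_i\mid u,v\in V'_i\}$ from line~\ref{sta:alg_span_create_veprime}, every surviving component of $G'_i$ is identical (as an induced subgraph) to a component of $G_i$, so $\diam(G'_i)\le \diam(G_i)$.

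Next I would show $\diam(G_{i+1}) \leq \diam(G'_i)$ by appealing to Corollary~\ref{cor:tree_contract_conn}. Line~\ref{sta:alg_span_tree_contraction} sets $((V_{i+1},E_{i+1}),\p_i^{(\infty)})=\textsc{TreeContraction}(G'_i,\p_i)$, so it suffices to verify the hypotheses: that $\p_i$ (restricted to $V'_i$) is a set of parent pointers on $V'_i$ compatible with $G'_i$. For $v\in V'_i$, lines~\ref{sta:alg_span_assign_to_leader}--\ref{sta:alg_span_find_path} set $\p_i(v)$ to be either $v$ itself (if $z_i(v)=v$) or the vertex $w_i(v)$ returned by $\textsc{FindPath}(\p_{\wt{T}_i(v)},z_i(v))$, which by Lemma~\ref{lem:find_path} equals the parent of $v$ along the path in $\wt{T}_i(v)$ toward the sampled leader $z_i(v)\in V'_i$. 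Because $\wt{T}_i(v)$ is an LSPT in $G_i$ (Lemma~\ref{lem:maximal_large_short_tree}, property~\ref{itm:large_tree_pro1}), this parent is a neighbor of $v$ in $G_i$; since it lies inside $V_{\wt{T}_i(v)}\subseteq V'_i$ (all vertices of $\wt{T}_i(v)$ are in the same connected component as $v$, which is contained entirely in $V'_i$ by the first paragraph), it is actually an edge in $E'_i$. Iterating $\p_i$ along the LSPT reaches the leader in finitely many steps, giving the parent-pointer property, and each step is an edge of $G'_i$, giving compatibility. Applying property~\ref{itm:tree_contracted_pro1} of Corollary~\ref{cor:tree_contract_conn} then yields $\diam(G_{i+1})\le \diam(G'_i)$.

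Chaining the two inequalities for $i=0,1,\ldots,r-1$ gives the desired chain $\diam(G)=\diam(G_0)\ge \diam(G'_0)\ge \diam(G_1)\ge\cdots\ge \diam(G_r)$. The main obstacle is the first inequality: one must notice that the pruning criterion in line~\ref{sta:alg_span_create_veprime} is component-uniform, which relies crucially on property~\ref{itm:large_tree_pro3} of Lemma~\ref{lem:maximal_large_short_tree} (LSPTs are either ``large enough'' or contain the entire component). Once this structural fact is isolated, both inequalities are short applications of earlier lemmas.
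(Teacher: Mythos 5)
Your proof is correct and follows essentially the same route as the paper: both arguments hinge on property~\ref{itm:large_tree_pro3} of Lemma~\ref{lem:maximal_large_short_tree} (pruning removes entire components, so $\diam(G'_i)\le\diam(G_i)$) and property~\ref{itm:tree_contracted_pro1} of Corollary~\ref{cor:tree_contract_conn} (contraction cannot increase the diameter). The paper states these two facts and immediately concludes, whereas you additionally spell out why the hypotheses of the corollary hold; one small imprecision in that extra step is the phrase ``iterating $\p_i$ along the LSPT'' --- successive applications of $\p_i$ actually move between \emph{different} local trees, and the termination argument really rests on the observation (as in Lemma~\ref{lem:dep_p_i}) that each step strictly decreases $\dist_{G_i}(\cdot,L_i)$.
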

\begin{proof}
By property~\ref{itm:large_tree_pro3} of Lemma~\ref{lem:maximal_large_short_tree}, $\forall [i]\in \{0\}\cup[r-1], $ there is no edge between $V_i\setminus V'_i$ and $V'_i.$ Thus, $\diam(G'_i)\leq \diam(G_i).$ Then due to property~\ref{itm:tree_contracted_pro1} of Corollary~\ref{cor:tree_contract_conn}, we have $\diam(G_{i+1})\leq \diam(G'_i).$
\end{proof}

\begin{lemma}\label{lem:dep_p_i}
Let $G=(V,E)$ be an undirected graph, $m$ be a parameter which is at least $16|V|,$ and $r$ be a rounds parameter. If $\textsc{SpanningForest}(G,m,r)$ (Algorithm~\ref{alg:batch_algorithm3}) does not return FAIL, then $\forall i\in \{0\}\cup[r-1],$ $\dep(\p_i)\leq \min(\diam(G),\lfloor(m/n_i)^{1/2}\rfloor).$
\end{lemma}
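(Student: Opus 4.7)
The plan is to bound $\dep_{\p_i}(v)$ separately for the two types of vertices produced by the algorithm: the ``small-component'' vertices $v\in V_i\setminus V'_i$ (handled in line~\ref{sta:alg_span_spanningtree}) and the ``large-component'' vertices $v\in V'_i$ (handled in lines~\ref{sta:alg_span_assign_to_leader}--\ref{sta:alg_span_find_path}). The $\diam(G)$ bound will come for free from Lemma~\ref{lem:span_diameter}, so the substantive work is to establish the $\lfloor(m/n_i)^{1/2}\rfloor$ bound in both cases.

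For $v\in V_i\setminus V'_i$, the algorithm sets $\p_i(v)=\p_{\wt{T}_i(u_v)}(v)$ where $u_v$ is the minimum-indexed vertex of $v$'s $G_i$-component. Since $v\notin V'_i$, property~\ref{itm:large_tree_pro3} of Lemma~\ref{lem:maximal_large_short_tree} says $\wt{T}_i(v)$ (and hence $\wt{T}_i(u_v)$, as $u_v$ lies in the same component) is a local complete shortest path tree on the entire component of $v$. Consequently $\p_i$ restricted to that component is exactly the parent function of an LSPT rooted at $u_v$, so $\dep_{\p_i}(v)=\dist_{G_i}(u_v,v)$, which is bounded both by $\diam(G_i)$ and by $|V_{\wt{T}_i(u_v)}|-1\le\lfloor(m/n_i)^{1/2}\rfloor$ via property~\ref{itm:large_tree_pro4} of Lemma~\ref{lem:maximal_large_short_tree}.

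For $v\in V'_i$, the key step is to show that following $\p_i$ strictly decreases the distance to the nearest leader. Define $d(v):=\dist_{G_i}(v,L_i)$. I claim first that $z_i(v)$ achieves this minimum, i.e.\ $\dist_{G_i}(v,z_i(v))=d(v)$. Indeed, by the definition of $L_i$ in line~\ref{sta:alg_span_leaders}, if $v\notin L_i$ then some $u\in V_{\wt{T}_i(v)}$ has $l_i(u)=1$, and by the argument that all vertices in $V_{\wt{T}_i(v)}$ share the same component as $v$ (of size $\ge\lceil(m/n_i)^{1/4}\rceil$, hence $\subseteq V'_i$) such $u$ lies in $L_i$; so $L_i\cap V_{\wt{T}_i(v)}\neq\emptyset$ and $z_i(v)$ is well-defined. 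Now property~\ref{itm:large_tree_pro2} of Lemma~\ref{lem:maximal_large_short_tree} guarantees that every leader outside $V_{\wt{T}_i(v)}$ is at least as far from $v$ as any vertex inside, and since $\wt{T}_i(v)$ is an LSPT rooted at $v$ we have $\dep_{\wt{T}_i(v)}(z_i(v))=\dist_{G_i}(v,z_i(v))$; combining these yields the claim. By construction (line~\ref{sta:alg_span_find_path} and Lemma~\ref{lem:find_path}), $\p_i(v)=w_i(v)$ is the unique depth-$1$ vertex of $\wt{T}_i(v)$ on the path to $z_i(v)$, so $w_i(v)$ is a $G_i$-neighbor of $v$ and $z_i(v)$ is a leader at $G_i$-distance $d(v)-1$ from $w_i(v)$. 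Therefore $d(\p_i(v))\le d(v)-1$. An easy induction on $d(v)$ (base case $d(v)=0$, where $z_i(v)=v$ forces $\p_i(v)=v$) then gives $\dep_{\p_i}(v)\le d(v)$.

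It remains to upper-bound $d(v)$. Since $z_i(v)\in V_{\wt{T}_i(v)}$, we have $d(v)=\dep_{\wt{T}_i(v)}(z_i(v))\le\dep(\wt{T}_i(v))\le|V_{\wt{T}_i(v)}|-1\le\lfloor(m/n_i)^{1/2}\rfloor$ using property~\ref{itm:large_tree_pro4} of Lemma~\ref{lem:maximal_large_short_tree}, and simultaneously $\dep(\wt{T}_i(v))\le\diam(G_i)\le\diam(G)$ by Lemma~\ref{lem:span_diameter}. Combining both cases yields $\dep(\p_i)\le\min(\diam(G),\lfloor(m/n_i)^{1/2}\rfloor)$, which is exactly the desired bound. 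The only delicate point is the ``globally closest leader'' claim above; everything else is routine depth tracking in the LSPTs produced by Algorithm~\ref{alg:maximal_short_tree}.
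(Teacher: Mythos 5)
Your proof is correct and follows essentially the same structure as the paper's: split into the two cases $v\in V_i\setminus V'_i$ and $v\in V'_i$, argue in the first case that $\p_i$ restricted to the component is the parent function of a full LSPT whose depth is controlled by properties~\ref{itm:large_tree_pro3} and~\ref{itm:large_tree_pro4} of Lemma~\ref{lem:maximal_large_short_tree}, and in the second case establish that $z_i(v)$ is a globally closest leader via property~\ref{itm:large_tree_pro2} and then induct on $\dist_{G_i}(v,L_i)$. You spell out a few details the paper leaves implicit (well-definedness of $z_i(v)$, the fact that following $\p_i$ decreases $\dist_{G_i}(\cdot,L_i)$ by exactly one), but the argument is the same.
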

\begin{proof}
Let $v\in V_i.$
If $v\in V_i\setminus V'_i,$ then due to property~\ref{itm:large_tree_pro3} of Lemma~\ref{lem:maximal_large_short_tree}, we have $V_{\wt{T}_i(v)}=V_{\wt{T}_i(u_v)}.$
Due to Lemma~\ref{lem:span_diameter} and Lemma~\ref{lem:maximal_large_short_tree}, we have $\dep_{\p_i}(v)\leq\dep(\wt{T}_i(u_v))\leq \min(\diam(G),\lfloor(m/n_i)^{1/2}\rfloor).$
For $v\in V_i,$ we define $\dist_{G_i}(v,L_i)=\min_{u\in L_i} \dist_{G_i}(v,u).$
By Lemma~\ref{lem:maximal_large_short_tree}, we know $\dist_{G_i}(v,L_i)=\dist_{G_i}(v,z_i(v)).$
Since $\wt{T}_i(v)$ is a LSPT (See Definition~\ref{def:local_short_tree}), by applying Lemma~\ref{lem:find_path}, we know $\dist_{G_i}(v,L_i)=\dist_{G_i}(w_i(v),L_i)+1,$ and $(v,w_i(v))\in E_i.$
Thus, by induction on $\dist_{G_i}(v,L_i),$ we can get $\dep_{\p_i}(v)\leq \dist_{G_i}(v,L_i).$
By Lemma~\ref{lem:span_diameter} and Lemma~\ref{lem:maximal_large_short_tree}, we can conclude $\dep(\p_i)(v)\leq \min(\diam(G),\lfloor(m/n_i)^{1/2}\rfloor).$
\end{proof}

\begin{lemma}\label{lem:spanning_tree}
Let $G=(V,E)$ be an undirected graph, $m$ be a parameter which is at least $16|V|,$ and $r$ be a rounds parameter. If $\textsc{SpanningForest}(G,m,r)$ (Algorithm~\ref{alg:batch_algorithm3}) does not return FAIL, then $\forall i\in\{-1,0\}\cup[r-1],$ we can define
\begin{align*}
\forall v\in V,h^{(i)}(v)=
\begin{cases}
 v & i=-1  ;\\
 h_i(h^{(i-1)}(v)) & h^{(i-1)}(v)\not= \nul ;\\
  \nul  & \mathrm{otherwise~}.
 \end{cases}
\end{align*}
Then we have following properties:
\begin{enumerate}
\item If $h^{(i)}(v)\not= \nul ,$ then $h^{(i)}(v)\in V_{i+1}.$\label{itm:span_pro1}
\item $\forall u,v\in V,h^{(i)}(u)\not=h^{(i)}(v),(u,v)\in E,$ we have $(h^{(i)}(u),h^{(i)}(v))\in E_{i+1}.$ \label{itm:span_pro2}
\item $\forall (x,y)\in E_{i+1},(u,v)=g_{i+1}(x,y),$ we have $(u,v)\in E,h^{(i)}(u)=x,h^{(i)}(v)=y.$\label{itm:span_pro3}
\end{enumerate}
\end{lemma}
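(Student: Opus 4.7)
I will prove all three properties simultaneously by induction on $i \in \{-1,0\}\cup[r-1]$.

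For the base case $i=-1$, by definition $h^{(-1)}(v)=v$, so property~1 follows from $V_0=V$, and $g_0$ is the identity on $E_0=E$, which immediately gives properties~2 and~3.

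For the inductive step, assume all three properties hold for $i-1$. For property~1, suppose $h^{(i)}(v)=h_i(h^{(i-1)}(v))\neq\nul$. Then $h^{(i-1)}(v)\neq\nul$, and by the inductive hypothesis $h^{(i-1)}(v)\in V_i$. Since $h_i$ is set to $\nul$ on $V_i\setminus V'_i$ (line~\ref{sta:alg_span_spanningtree}) and equals $\p_i^{(\infty)}$ on $V'_i$ (line~\ref{sta:alg_span_hi_notnull}), we have $h^{(i-1)}(v)\in V'_i$ and $h^{(i)}(v)=\p_i^{(\infty)}(h^{(i-1)}(v))\in V_{i+1}$ by property~\ref{itm:tree_return_pro2} of Lemma~\ref{lem:contraction_properties} applied to the contraction in line~\ref{sta:alg_span_tree_contraction}.

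For property~2, let $(u,v)\in E$ with $h^{(i)}(u)\neq h^{(i)}(v)$; in particular both are non-$\nul$. If $h^{(i-1)}(u)=h^{(i-1)}(v)$ then applying $h_i$ would give equal values, contradicting the assumption, so $h^{(i-1)}(u)\neq h^{(i-1)}(v)$. By the inductive hypothesis (property~2), $(h^{(i-1)}(u),h^{(i-1)}(v))\in E_i$, and by the above argument both endpoints lie in $V'_i$, so this edge lies in $E'_i$. By the characterization of the contracted edge set from Lemma~\ref{lem:contraction_properties} (property~\ref{itm:tree_return_pro3}), since $\p_i^{(\infty)}(h^{(i-1)}(u))=h^{(i)}(u)$ and $\p_i^{(\infty)}(h^{(i-1)}(v))=h^{(i)}(v)$ are distinct, the pair $(h^{(i)}(u),h^{(i)}(v))$ is an edge in $E_{i+1}$.

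For property~3, let $(x,y)\in E_{i+1}$ and $(u,v)=g_{i+1}(x,y)$. By the definition in line~\ref{sta:alg_span_choose_one_edge}, the minimum is taken over a nonempty set, so there exist $(x',y')\in E_i$ with $h_i(x')=x$ and $h_i(y')=y$ such that $(u,v)=g_i(x',y')$. Since $h_i(x'),h_i(y')\neq\nul$, both $x',y'\in V'_i\subseteq V_i$, and $(x',y')\in E_i$, so the inductive hypothesis (property~3) yields $(u,v)\in E$, $h^{(i-1)}(u)=x'$, $h^{(i-1)}(v)=y'$. Composing, $h^{(i)}(u)=h_i(x')=x$ and $h^{(i)}(v)=h_i(y')=y$, completing the induction.

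The only delicate point, and the main thing to double-check, is that each step correctly tracks which vertices are ``deleted'' (small components mapped to $\nul$) versus ``contracted,'' and that the edge set $E_{i+1}$ produced by \textsc{TreeContraction} on $G'_i$ is exactly the image under $\p_i^{(\infty)}$ of edges in $E'_i$ with distinct endpoints; this is where the inductive argument for properties~2 and~3 hinges, and it is supplied by Lemma~\ref{lem:contraction_properties}.
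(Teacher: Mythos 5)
Your proofs of properties~1 and~3 track the paper's argument closely and are correct. The issue is in property~2: you assert ``in particular both are non-$\nul$'' as an immediate consequence of $h^{(i)}(u)\neq h^{(i)}(v)$, but this is not automatic --- a $\nul$ value can certainly differ from a non-$\nul$ value. You then use this assertion to place both $h^{(i-1)}(u),h^{(i-1)}(v)$ in $V'_i$ and conclude the edge lies in $E'_i$, which makes the argument circular: the non-$\nul$ claim is what you are ultimately trying to establish, not a hypothesis you are handed.

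The correct order is the opposite. First observe $h^{(i-1)}(u)\neq h^{(i-1)}(v)$ (by your contradiction argument), then apply the inductive hypothesis to obtain $(h^{(i-1)}(u),h^{(i-1)}(v))\in E_i$; this in particular shows both are actual vertices of $V_i$. Now you need the structural fact that no edge of $E_i$ crosses between $V'_i$ and $V_i\setminus V'_i$. This is \emph{not} implied by what you have derived so far --- it is a consequence of property~\ref{itm:large_tree_pro3} of Lemma~\ref{lem:maximal_large_short_tree}, which says that whenever $|V_{\wt T_i(v)}|<\lceil(m/n_i)^{1/4}\rceil$, the tree $\wt T_i(v)$ spans the entire component, so $V_i\setminus V'_i$ is a union of complete connected components of $G_i$. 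With that fact in hand you conclude the two endpoints are either both in $V'_i$ or both in $V_i\setminus V'_i$; the latter would give $h^{(i)}(u)=h^{(i)}(v)=\nul$, contradicting the hypothesis, so both are in $V'_i$, both $h^{(i)}$-values are non-$\nul$, the edge is in $E'_i$, and Lemma~\ref{lem:contraction_properties} finishes. The paper makes this invocation of Lemma~\ref{lem:maximal_large_short_tree} explicit; your write-up omits it, and without it the claim that the two images can't land on opposite sides of the $V'_i$ partition is unsupported.
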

\begin{proof}
For property~\ref{itm:span_pro1}, we can prove it by induction. It is true for $i=-1.$
If $h_0(v)\not= \nul ,$ we know $h_0(v)$ must be assigned at line~\ref{sta:alg_span_hi_notnull}.
Due to property~\ref{itm:tree_return_pro2} of Lemma~\ref{lem:contraction_properties}, $h_0(v)\in V_1.$
Suppose $\forall v\in V,h^{(i-1)}(v)\not= \nul ,$ we have $h^{(i-1)}(v)\in V_i.$
For a vertex $v$ with $h^{(i)}(v)\not= \nul ,$ according to the definition of $h^{(i)}(v),$ we know $h^{(i-1)}(v)\not= \nul .$
Let $u=h^{(i-1)}(v).$
$u$ must be a vertex in $G_i$ by the induction hypothesis.
Since $h^{(i)}(v)\not= \nul ,$ we know $h_i(u)\not= \nul .$
Thus, $h_i(u)$ must be assigned at line~\ref{sta:alg_span_hi_notnull}.
Due to property~\ref{itm:tree_return_pro2} of Lemma~\ref{lem:contraction_properties}, $h_i(u)$ must be in $G_{i+1},$ which implies $h^{(i)}(v)\in V_{i+1}.$

For property~\ref{itm:span_pro2}, we can also prove it by induction. It is true for $i=-1.$
If $(u,v)\in E,$ then due to property~\ref{itm:large_tree_pro3} of Lemma~\ref{lem:maximal_large_short_tree}, either both $u,v$ are in $V'_0$ or both $u,v$ are in $V_0\setminus V'_0.$
If both $u,v$ are in $V_0\setminus V'_0,$ then $h_0(u)=h_0(v)= \nul .$
Otherwise, if $h_0(u)\not=h_0(v),$ then due to property~\ref{itm:tree_return_pro3} of Lemma~\ref{lem:contraction_properties}, $(h_0(u),h_0(v))\in E_1.$
Now suppose we have $\forall u,v\in V,$ if $h^{(i-1)}(u)\not=h^{(i-1)}(v),(u,v)\in E,$ then $(h^{(i-1)}(u),h^{(i-1)}(v))\in E_i.$
Let $(u,v)\in E,h^{(i)}(u)\not=h^{(i)}(v).$
Let $x=h^{(i-1)}(u),y= h^{(i-1)}(v).$
Due to property~\ref{itm:large_tree_pro3} of Lemma~\ref{lem:maximal_large_short_tree}, either both $x,y$ are in $V'_i$ or both are in $V_i\setminus V'_i.$
If $x,y\in V_i\setminus V'_i,$ then $h_i(x)=h_i(y)= \nul $ which contradicts to $h^{(i)}(u)\not=h^{(i)}(v).$
Thus, both of $x,y\in V'_i.$
Then due to property~\ref{itm:tree_return_pro3} of Lemma~\ref{lem:contraction_properties}, $(h_i(x),h_i(v))\in E_{i+1}.$
Thus, $(h^{(i)}(u),h^{(i)}(v))\in E_{i+1}.$

For property~\ref{itm:span_pro3}, we can prove it by induction.
It is true for $i=-1.$
Let us consider the case when $i=0.$
Due to property~\ref{itm:tree_return_pro3} of Lemma~\ref{lem:contraction_properties} and the definition of $g_0,g_1,$ we have $\forall (x,y)\in E_1,$ $(u,v)=g_1(x,y),$ $h_0(u)=x,h_0(v)=y,(u,v)\in E.$
Now suppose the property holds for $i-1.$
Let $(x,y)\in E_{i+1}.$
Then $g_{i+1}(x,y)=g_i(x',y')$ for some $(x',y')\in E_i,h_i(x')=x,h_i(y')=y.$
Let $(u,v)=g_i(x',y').$
By the induction hypothesis $(u,v)\in E,h^{(i-1)}(u)=x',h^{(i-1)}(v)=y'.$
Thus, $h^{(i)}(u)=x,h^{(i)}(v)=y.$
\end{proof}

\begin{lemma}\label{lem:recursive_spanning_tree}
Let $G=(V,E)$ be an undirected graph, $m$ be a parameter which is at least $16|V|,$ and $r$ be a rounds parameter. If $\textsc{SpanningForest}(G,m,r)$ (Algorithm~\ref{alg:batch_algorithm3}) does not return FAIL, then $\forall i\in\{-1,0\}\cup[r-1],$ we can define
\begin{align*}
\forall v\in V,h^{(i)}(v)=
\begin{cases}
 v & i=-1\\
 h_i(h^{(i-1)}(v)) & h^{(i-1)}(v)\not= \nul \\
 \nul  & \mathrm{otherwise}.
\end{cases}
\end{align*}
Let $\forall i\in\{0\}\cup[r],$ $\wh{G}_i=(V_i,\wh{E}_i=\{(x,y)\mid (u,v)\in \bigcup_{j=i}^{r-1} D_j,h^{(j-1)}(u)=x,h^{(j-1)}(v)=y\}).$ Then $\wh{G}_i$ is a spanning forest of $G_i.$
\end{lemma}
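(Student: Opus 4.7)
The plan is to prove this by downward induction on $i$, from $i=r$ down to $i=0$. The base case at $i=r$ is immediate: since the algorithm did not return FAIL, we have $n_r = 0$, so $V_r = \emptyset$ and $\wh{E}_r = \emptyset$, which is vacuously a spanning forest of the empty graph $G_r$.

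For the inductive step, assume $\wh{G}_{i+1}$ is a spanning forest of $G_{i+1}$. I would first prove a decomposition $\wh{E}_i = D_i^{*} \cup \wh{E}_{i+1}$, where $D_i^{*}$ denotes the image of $D_i$ under $h^{(i-1)}$, restricted to $V_i \times V_i$. The key identification is that any vertex $x \in V_j$ for $j \geq i+1$ lies in $V_i$ and is fixed by $h^{(i')}$ for all $i' < j$, since vertices in $V_{i'+1}$ are exactly the $\p_{i'}$-fixed points by line~\ref{sta:alg_span_tree_contraction} and the definition of \textsc{TreeContraction}. This makes the contributions from rounds $j > i$ in $\wh{E}_i$ coincide exactly with $\wh{E}_{i+1}$.

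Next I would analyze $D_i^{*}$ directly. Using property~\ref{itm:span_pro3} of Lemma~\ref{lem:spanning_tree} together with lines~\ref{sta:alg_span_in_vinotprime} and~\ref{sta:alg_span_hi_notnull}, $D_i^{*}$ is precisely the undirected edge set of $\p_i: V_i \to V_i$. Hence $(V_i, D_i^{*})$ is a forest, and its trees are the $\p_i^{(\infty)}$-classes. By property~\ref{itm:large_tree_pro3} of Lemma~\ref{lem:maximal_large_short_tree}, every $\p_i^{(\infty)}$-class contained in $V_i \setminus V_i'$ coincides with a full connected component of $G_i$, spanned by the LCSPT $\wt{T}_i(u_v)$, and (via property~\ref{itm:large_tree_pro3}) is disconnected in $G_i$ from $V_i'$. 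Each $\p_i^{(\infty)}$-class inside $V_i'$ has its root in $L_i = V_{i+1}$.

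To finish, I would combine $D_i^{*}$ with the inductive forest $\wh{E}_{i+1} \subseteq V_{i+1} \times V_{i+1}$. Acyclicity follows because $D_i^{*}$ is a forest, $\wh{E}_{i+1}$ is a forest on $V_{i+1}$, and $V_{i+1}$ is exactly the set of $D_i^{*}$-roots inside $V_i'$, so contracting each $D_i^{*}$-tree yields $V_{i+1}$ with edge set $\wh{E}_{i+1}$. For the spanning property, any $u,v \in V_i$ with $\dist_{G_i}(u,v) < \infty$ fall into two cases: if they lie in $V_i \setminus V_i'$, the LCSPT handles them; otherwise both lie in $V_i'$, and by property~\ref{itm:tree_contracted_pro2} of Corollary~\ref{cor:tree_contract_conn} we get $\dist_{G_{i+1}}(\p_i^{(\infty)}(u), \p_i^{(\infty)}(v)) < \infty$, so the induction hypothesis connects their images through $\wh{E}_{i+1}$, and the $D_i^{*}$-paths lift this connection back to $u$ and $v$.

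The main obstacle will be the bookkeeping in the decomposition step: verifying that $h^{(j-1)}$ restricted to $V_j$ is the identity (so that the "old" contribution to $\wh{E}_i$ matches $\wh{E}_{i+1}$ on the nose) and that $D_i^{*}$ faithfully reproduces $\p_i$ as unordered edges. Once this is pinned down cleanly by invoking Lemma~\ref{lem:spanning_tree} and the construction of $V_{i+1}$, the forest and spanning arguments reduce to routine verification using the already-established properties of $\wt{T}_i$, $\p_i$, and the contraction.
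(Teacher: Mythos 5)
Your proposal is correct and follows essentially the same downward induction as the paper: both treat $\wh{E}_i$ as the union of the fresh $D_i$ contraction edges with the inherited forest $\wh{E}_{i+1}$, note that the fresh edges reconstitute the $\p_i$-forest on $V_i$, and lift connectivity from $\wh{G}_{i+1}$ through the $\p_i$-trees. You make the decomposition $\wh{E}_i = D_i^* \cup \wh{E}_{i+1}$ and the contraction-based acyclicity argument explicit, whereas the paper organizes the same content as a three-way case analysis on $(h_i(u), h_i(v))$; this is a presentational difference, not a different route.
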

\begin{proof}
The proof is by induction.
When $i=r,$ since $V_r=\emptyset,$ $\wh{G}_r=(\emptyset,\emptyset)$ is a spanning forest of $G_r.$
Now suppose $\wh{G}_{i+1}$ is a spanning forest of $G_{i+1}.$
Let $u,v\in V_i.$
By property~\ref{itm:span_pro2},~\ref{itm:span_pro3} of Lemma~\ref{lem:spanning_tree}, we have $\wh{E}_i\subseteq E_i.$
Thus, if $\dist_{G_i}(u,v)=\infty,$ then $\dist_{\wh{G}_i}(u,v)=\infty.$
If $\dist_{G_i}(u,v)<\infty,$ there are several cases:
\begin{enumerate}
\item If $h_i(u)=h_i(v)= \nul ,$ then due to line~\ref{sta:alg_span_spanningtree}, we know $u_u=u_v,$ and $\wt{T}_{i}(u_v)$ is a spanning tree of the component which contains $u,v.$ Thus, $\wh{G}_i$ has a spanning tree of the component which contains $u,v$.
\item If $h_i(u)=h_i(v)\not= \nul ,$ then $\p_i:\{x\in V_i\mid h_i(x)=h_i(v)\}\rightarrow \{x\in V_i\mid h_i(x)=h_i(v)\}$ is a tree, and $\forall y\in\{x\in V_i\mid h_i(x)=h_i(v)\},$ if $\p_i(y)\not=y,$ then $(y,\p_i(y))\in \wh{E}_i.$
    Since $\wh{G}_{i+1}$ does not have any cycle, there is a unique path from $u$ to $v$ in $\wh{G}_i$.
\item If $h_i(u)\not=h_i(v),$ then neither of them can be $ \nul .$
    Since $\wh{G}_{i+1}$ is a spanning forest on $G_{i+1},$ there must be a unique path from $h_i(u)$ to $h_i(v)$ in $\wh{G}_{i+1}.$
    Suppose the path in $\wh{G}_{i+1}$ is $h_i(u)=p_1-p_2-\cdots-p_t=h_i(v).$ Then there must be a sequence of vertices in $G_i,$ $u=p_{1,1},p_{1,2},p_{2,1},p_{2,2},\cdots,p_{t,1},p_{t,2}=v$ such that $h_i(p_{j,1})=h_i(p_{j,2})=p_j$ and $(p_{j-1,2},p_{j,1})\in \wh{E}_i.$
    Thus, there is a unique path from $u$ to $v$.
\end{enumerate}
Thus, $\wh{G}_i$ is a spanning forest of $G_i.$
\end{proof}

\begin{corollary}[Correctness of Algorithm~\ref{alg:batch_algorithm3}]\label{cor:correctness_batch_alg3}
Let $G=(V,E)$ be an undirected graph, $m$ be a parameter which is at least $16|V|,$ and $r$ be a rounds parameter. If $\textsc{SpanningForest}(G,m,r)$ (Algorithm~\ref{alg:batch_algorithm3}) does not return FAIL, then $\wh{G}_0=(V,F)$ is a spanning forest of $G.$
\end{corollary}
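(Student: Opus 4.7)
The plan is to bridge Lemma~\ref{lem:recursive_spanning_tree} (which showed the abstract graph $\wh{G}_0 = (V_0, \wh{E}_0)$ has the connectivity of $G$ and is acyclic) with the actual subgraph $(V, F) \subseteq G$, where $F = \bigcup_{i\in\{0\}\cup[r-1]} D_i \subseteq E$. The two are not literally the same object: the set $\wh{E}_0$ in the lemma consists of ``projected'' pairs $(h^{(j-1)}(u), h^{(j-1)}(v))$ for $(u,v)\in D_j$, which are edges in $E_j$ but generally not in $E$, whereas the corollary asserts that the set of \emph{original} edges collected in $F$ already forms a spanning forest of $G$.

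The first step is to apply Lemma~\ref{lem:recursive_spanning_tree} at $i=0$, so that $(V, \wh{E}_0)$ is a spanning forest of $G_0 = G$; in particular, any $u, v \in V$ with $\dist_G(u,v) < \infty$ are connected in $(V, \wh{E}_0)$. The key technical ingredient is then a \emph{lifting claim}, proved by induction on $j$: for every $v\in V$ with $h^{(j-1)}(v)\neq \nul$, there is a path in $(V, F)$ from $v$ to $h^{(j-1)}(v)$ using only edges in $D_0 \cup \cdots \cup D_{j-1}$; moreover, if $v$'s trajectory first dies at round $j$ (so $h^{(j-1)}(v) \in V_j \setminus V'_j$), then $v$ is $F$-connected to every $u \in V$ whose $h^{(j-1)}$-image lies in the local spanning tree $\wt{T}_j(u_{h^{(j-1)}(v)})$. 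The inductive step exploits the fact that the contraction tree $\p_{j-1}$ (line~\ref{sta:alg_span_hi_notnull}, for $v \in V'_{j-1}$) and the LSPT edges (line~\ref{sta:alg_span_in_vinotprime}, for $v \in V_{j-1} \setminus V'_{j-1}$) are explicitly recorded in $D_{j-1}$ through $g_{j-1}$; combining property~\ref{itm:span_pro3} of Lemma~\ref{lem:spanning_tree} with the induction hypothesis, each edge along these trees can be lifted to a concrete edge of $F$ whose endpoints are inductively $F$-connected to their $h^{(j-2)}$-images.

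Combining the two pieces then yields the connectivity equivalence: for $u,v\in V$, there is a path in $(V,F)$ iff $\dist_G(u,v)<\infty$. For acyclicity, the plan is an edge-count argument. In round $i$, the number of (undirected) edges added to $D_i$ equals the number of non-root vertices in the forest $\p_i$ on $V_i$; telescoping over $i=0,\ldots,r-1$, together with $V_r = \emptyset$ and the fact that each deleted small component accounts for exactly one root of its LSPT in $\p_i$, gives $|F| = |V| - \#\{\text{connected components of } G\}$. A subgraph of $G$ on $n$ vertices with $n-c$ edges and exactly $c$ connected components must be acyclic, so $(V,F)$ is a spanning forest of $G$.

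I expect the hard part to be the ``trajectory dies'' case of the lifting claim: one has to carefully relate the LSPT $\wt{T}_j(u_{h^{(j-1)}(v)})$ — whose root $u_{h^{(j-1)}(v)}$ is chosen as the minimum-index vertex of the small component rather than as a sampled leader — to the specific edges $g_j(\p_j(w), w)$ recorded in $D_j$, and verify that these edges indeed span the entire connected component in the lifted graph after accounting for all prior rounds' contractions.
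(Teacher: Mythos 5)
The paper proves this corollary in one line: apply Lemma~\ref{lem:recursive_spanning_tree} at $i=0$. What makes this immediate is that at $i=0$ the edge set $\wh{E}_0$ \emph{is} $F$, so there is nothing to bridge.

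Your observation that $\wh{E}_0 \neq F$ under a literal reading of the lemma is in fact pointing at what appears to be a typo in the lemma's definition of $\wh{E}_i$: the projection should be through $h^{(i-1)}$, fixed by the outer index $i$, not through $h^{(j-1)}$ for the dummy index of the union. There is internal evidence for this. First, the lemma's own proof opens by asserting $\wh{E}_i \subseteq E_i$, citing properties~\ref{itm:span_pro2} and~\ref{itm:span_pro3} of Lemma~\ref{lem:spanning_tree}; with $h^{(j-1)}$, the projected pair $(h^{(j-1)}(u),h^{(j-1)}(v))$ lands in $E_j$, which is not a subset of $E_i$ when $j>i$, so that step would fail, whereas with $h^{(i-1)}$ property~\ref{itm:span_pro2} gives the inclusion directly. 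Second, the corollary's one-line reduction requires $\wh{E}_0 = F$, which holds precisely because $h^{(-1)}$ is the identity. Under the intended reading the corollary is genuinely a restatement of the lemma at $i=0$, and your detour is unnecessary.

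That said, your lifting-plus-edge-count route is not wrong, and it would work even if one insisted on the literal $h^{(j-1)}$ reading; but it essentially re-derives most of Lemma~\ref{lem:recursive_spanning_tree} from scratch. If you pursue it, a few points deserve care: you do not actually need to invoke the lemma at all --- the lifting claim together with the observation that two vertices in the same component of $G$ must die at the same round inside the same small component yields the ``$\dist_G(u,v)<\infty \Rightarrow$ $F$-connected'' direction directly, and $F\subseteq E$ gives the converse. The edge count for acyclicity needs both that $g_i$ is injective on $E_i$ (true by induction, since $g_{i+1}$ picks a distinct representative from a distinct preimage set for each $(u,v)\in E_{i+1}$) and that the two tuples $g_i(\p_i(v),v)$ and $g_i(v,\p_i(v))$ added to $D_i$ are the two orientations of a single undirected edge, which is an implementation fact about the $\min$ in line~\ref{sta:alg_span_choose_one_edge} the paper never states explicitly; you would want to treat $D_i$ and $F$ as sets of unordered edges throughout. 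Given the corrected reading of the lemma, though, none of this machinery is needed.
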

\begin{proof}
Just apply Lemma~\ref{lem:recursive_spanning_tree} for $i=0$ case.
\end{proof}

\begin{algorithm}[t]
\caption{Rooted Spanning Forest}\label{alg:batch_algorithm4}
\begin{algorithmic}[1]
\Procedure{Orientate}{$\{V_i\mid i\in\{0\}\cup [r]\},\{\p_i\mid i\in\{0\}\cup [r-1]\},\{h_i\mid i\in\{0\}\cup[r-1]\},F$}

\Comment Takes the output of Algorithm~\ref{alg:batch_algorithm3} as input.

\Comment{Theorem~\ref{thm:batch_algorithm4}}
\State Output:  $\p:V_0\rightarrow V_0.$
\State Let $F_0=F.$
\For{$i=0\rightarrow r-1$}\label{sta:batch_alg4_first_loop_st}
\State Initialize $F_{i+1}\leftarrow \emptyset, f_{i+1}:V_{i+1}\times V_{i+1}\rightarrow \{ \nul \}.$
\State $\forall (u,v)\in F_i,h_i(u)\not=h_i(v),$ let $F_{i+1}\leftarrow F_{i+1}\cup\{(h_i(u),h_i(v))\},f_{i+1}(h_i(u),h_i(v))\leftarrow(u,v).$
\EndFor\label{sta:batch_alg4_first_loop_ed}
\State $\wh{\p}_r:\emptyset\rightarrow \emptyset.$
\For{$i=r\rightarrow 1$}\Comment{$\wh{\p}_i$ is the spanning forest of $G_i.$}
\State Let $\wt{V}_i=V_i\cup \{v\in V_{i-1}\mid h_{i-1}(v)= \nul ,\p_{i-1}(v)=v\}.$\label{batch_alg4_second_loop1}
\State Let $\wt{\p}_i:\wt{V}_i\rightarrow \wt{V}_i$ satisfy $\forall v\in V_i,$ $\wt{\p}_i(v)=\wh{\p}_i(v),$ and $\forall v\in \wt{V}_i\setminus V_i,\wt{\p}_i(v)=v.$\label{batch_alg4_second_loop2}
\State Let $\wh{\p}_{i-1}=\textsc{ForestExpansion}(\wt{\p}_i,\p_{i-1},f_i).$ \label{batch_alg4_second_loop3} \Comment{Algorithm~\ref{alg:spanning_forest_expansion}}
\EndFor
\State Return $\wh{\p}_0$ as $\p.$
\EndProcedure
\end{algorithmic}
\end{algorithm}

\begin{lemma}\label{lem:F_corresponds_to_p}
Let $G=(V,E)$ be an undirected graph, $m$ be a parameter which is at least $16|V|,$ and $r$ be a rounds parameter. If $\textsc{SpanningForest}(G,m,r)$ (Algorithm~\ref{alg:batch_algorithm3}) does not return FAIL, then $\forall i\in\{-1,0\}\cup[r-1],$ we can define
\begin{align*}
\forall v\in V,h^{(i)}(v)=
\begin{cases}
 v & i=-1\\ 
 h_i(h^{(i-1)}(v)) & h^{(i-1)}(v)\not= \nul \\ 
 \nul  & \mathrm{otherwise.}
\end{cases}
\end{align*}
$\forall i\in\{0\}\cup [r-1],v\in V_i$ with $\p_i(v)\not=v,$ there exists $(x,y)\in F$ such that $h^{(i-1)}(x)=v,h^{(i-1)}(y)=\p_i(v).$
\end{lemma}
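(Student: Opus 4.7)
The plan is to trace exactly where edges are inserted into $F = \bigcup_{i} D_i$ and then pull them back through the map $g_i$ using the structural guarantees established in Lemma~\ref{lem:spanning_tree}. Inspection of Algorithm~\ref{alg:batch_algorithm3} shows that $D_i$ is populated in exactly two places: line~\ref{sta:alg_span_in_vinotprime} handles vertices $v \in V_i \setminus V'_i$ with $\p_i(v) \neq v$, and line~\ref{sta:alg_span_hi_notnull} handles vertices $v \in V'_i$ with $\p_i(v) \neq v$. In both cases the algorithm inserts $g_i(v, \p_i(v))$ and $g_i(\p_i(v), v)$ into $D_i$, which then lies in $F$. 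So the proof reduces to: (i) verify that $(v, \p_i(v)) \in E_i$ whenever $\p_i(v) \neq v$, so that $g_i(v, \p_i(v))$ is well defined; (ii) apply Lemma~\ref{lem:spanning_tree}(\ref{itm:span_pro3}) to pull the preimage back through $h^{(i-1)}$.

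For step (i), there are two cases matching the two insertion sites. If $v \in V_i \setminus V'_i$, then line~\ref{sta:alg_span_spanningtree} sets $\p_i(v) = \p_{\wt{T}_i(u_v)}(v)$, and Lemma~\ref{lem:maximal_large_short_tree}(\ref{itm:large_tree_pro1}) together with the definition of a LSPT (Definition~\ref{def:local_short_tree}) guarantees that $(v, \p_{\wt{T}_i(u_v)}(v)) \in E_i$ whenever the pointer is not a self-loop. If $v \in V'_i$, then line~\ref{sta:alg_span_find_path} sets $\p_i(v) = w_i(v)$, which by Lemma~\ref{lem:find_path} is the predecessor of $v$ along the shortest path from $v$ to the leader $z_i(v)$ in the LSPT $\wt{T}_i(v)$; again the LSPT property yields $(v, w_i(v)) \in E_i$. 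So in both branches the pair $(v, \p_i(v))$ is a genuine edge of $G_i$.

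For step (ii), I apply Lemma~\ref{lem:spanning_tree}(\ref{itm:span_pro3}) at index $i-1$ (so the statement is about $E_i$ and $g_i$, which is legal because $r_i \geq 1$ places us in the relevant iteration, and the $i = 0$ base case is handled by $g_0$ being the identity together with $h^{(-1)}$ being the identity). Letting $(x, y) = g_i(v, \p_i(v))$, the lemma gives $(x, y) \in E$ with $h^{(i-1)}(x) = v$ and $h^{(i-1)}(y) = \p_i(v)$. Since this pair was inserted into $D_i \subseteq F$ by the analysis of step (i), the conclusion of the lemma follows.

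The only mild subtlety—really the only thing that needs care rather than routine bookkeeping—is making sure the case analysis in step (i) is exhaustive: one must rule out that $\p_i(v) \neq v$ could arise from any path other than lines~\ref{sta:alg_span_spanningtree} and~\ref{sta:alg_span_find_path}/\ref{sta:alg_span_assign_to_leader}. This is direct from inspecting the code (line~\ref{sta:alg_span_assign_to_leader} only sets $\p_i(v) = v$, so a non-trivial pointer in the $V'_i$ branch must come from line~\ref{sta:alg_span_find_path}), but it is the step I would write out most carefully since everything else is a mechanical invocation of previously established lemmas.
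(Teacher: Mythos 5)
Your proposal is correct and follows essentially the same route as the paper: identify that both insertion sites (lines~\ref{sta:alg_span_in_vinotprime} and~\ref{sta:alg_span_hi_notnull}) place $g_i(v,\p_i(v))$ into $D_i\subseteq F$, observe that $(v,\p_i(v))\in E_i$, and then invoke property~\ref{itm:span_pro3} of Lemma~\ref{lem:spanning_tree} to pull the preimage back through $h^{(i-1)}$. You just spell out the LSPT-based justification that $(v,\p_i(v))\in E_i$ (which the paper states without elaboration), but this is added detail rather than a different argument.
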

\begin{proof}
By line~\ref{sta:alg_span_in_vinotprime}, line~\ref{sta:alg_span_hi_notnull}, $\forall i\in\{0\}\cup[r-1], v\in V_i,\p_i(v)\not=v,$ we have $g_i(v,\p_i(v)),g_i(\p_i(v),v)\in D_i\subseteq F.$
Since $(\p_i(v),v)\in E_i,$ by property~\ref{itm:span_pro3} of Lemma~\ref{lem:spanning_tree}, $(x,y)=g_i(v,\p_i(v))$ satisfies $h^{(i-1)}(x)=v,h^{(i-1)}(y)=\p_i(v).$
\end{proof}

\begin{theorem}[Correctness of Algorithm~\ref{alg:batch_algorithm4}]\label{thm:batch_algorithm4}
Let $G=(V,E)$ be an undirected graph, $m$ be a parameter which is at least $16|V|,$ and $r$ be a rounds parameter. If $\textsc{SpanningForest}(G,m,r)$ (Algorithm~\ref{alg:batch_algorithm3}) does not return FAIL, then let the output be the input of $\textsc{Orientate}(\cdot),$ (Algorithm~\ref{alg:batch_algorithm4}) and the output $\p:V\rightarrow V$ of $\textsc{Orientate}(\cdot)$ will be a rooted spanning forest (See Definition~\ref{def:spanning_forest}) of $G$. Furthermore, $\dep(\p)\leq O(\diam(G))^r.$
\end{theorem}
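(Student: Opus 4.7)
The plan is a two-phase induction that mirrors the two loops of Algorithm~\ref{alg:batch_algorithm4}. First I will show that after the contraction loop (lines~\ref{sta:batch_alg4_first_loop_st}--\ref{sta:batch_alg4_first_loop_ed}), each $F_i$, viewed as an edge set on $V_i$, is exactly a spanning forest of $G_i$, and the bookkeeping map $f_i$ records a concrete witness edge for every contracted pair. Then I will run a backward induction on $i=r,r-1,\dots,0$ to argue that $\wh{\p}_i$ is a rooted spanning forest of $G_i$, invoking Lemma~\ref{lem:expandrootedspanningforest} at each step.

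For the first phase, using the notation $h^{(i)}$ from Lemma~\ref{lem:spanning_tree}, a straightforward induction shows that $F_{i}=\{(h^{(i-1)}(u),h^{(i-1)}(v))\mid (u,v)\in F,\ h^{(i-1)}(u)\neq h^{(i-1)}(v)\}$, which by Corollary~\ref{cor:correctness_batch_alg3} and Lemma~\ref{lem:recursive_spanning_tree} is exactly the edge set $\wh{E}_i$ of a spanning forest of $G_i$. Moreover, by construction $f_i(h_{i-1}(u),h_{i-1}(v))=(u,v)$ for some $(u,v)\in F_{i-1}\subseteq E_{i-1}$ satisfying $h_{i-1}(u)\neq h_{i-1}(v)$; this is precisely the compatibility hypothesis that Lemma~\ref{lem:expandrootedspanningforest} requires of the witness function.

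For the second phase, the base case $\wh{\p}_r$ is a trivial rooted spanning forest of $G_r=(\emptyset,\emptyset)$. Assume $\wh{\p}_i$ is a rooted spanning forest of $G_i$. The augmented map $\wt{\p}_i:\wt{V}_i\rightarrow\wt{V}_i$ built on lines~\ref{batch_alg4_second_loop1}--\ref{batch_alg4_second_loop2} extends $\wh{\p}_i$ by a self-loop at every root of a component killed in round $i-1$ (those vertices with $h_{i-1}(v)=\nul$ and $\p_{i-1}(v)=v$); these vertices form the set of roots of $\p_{i-1}$ in $V'_{i-1}\cup(V_{i-1}\setminus V'_{i-1})$, so $\wt{V}_i=\{v\in V_{i-1}\mid \p_{i-1}(v)=v\}$, which matches the role of ``$V_1$'' in the statement of Lemma~\ref{lem:expandrootedspanningforest} when applied with $G_1$ taken to be the contracted graph of $G_{i-1}$ under $\p_{i-1}$ (note that $\p_{i-1}$ maps $V_{i-1}\to V_{i-1}$ with $(v,\p_{i-1}(v))\in E_{i-1}$, by lines~\ref{sta:alg_span_spanningtree},~\ref{sta:alg_span_assign_to_leader},~\ref{sta:alg_span_find_path}). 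By Lemma~\ref{lem:F_corresponds_to_p} together with the first-phase observation about $f_i$, the map $f_i$ satisfies the witness-edge hypothesis. Hence Lemma~\ref{lem:expandrootedspanningforest} yields that $\wh{\p}_{i-1}=\textsc{ForestExpansion}(\wt{\p}_i,\p_{i-1},f_i)$ is a rooted spanning forest of $G_{i-1}$. The induction terminates at $i=0$, giving the first claim.

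The main obstacle is a bookkeeping one: one must check that ForestExpansion's precondition is literally satisfied at every level---in particular that the ``roots'' on which $\wt{\p}_i$ is defined are exactly the root set of $\p_{i-1}$ (including both the contracted-to-a-leader roots in $V_i$ and the isolated-component roots in $\wt{V}_i\setminus V_i$), and that the corresponding edge in $E_{i-1}$ named by $f_i$ indeed lies across the right pair of contracted components; the rest is routine. For the depth bound, Lemma~\ref{lem:expandrootedspanningforest} gives $\dep(\wh{\p}_{i-1})\leq (2\dep(\p_{i-1})+1)(\dep(\wt{\p}_i)+1)=(2\dep(\p_{i-1})+1)(\dep(\wh{\p}_i)+1)$, and Lemma~\ref{lem:dep_p_i} bounds $\dep(\p_{i-1})\leq \diam(G)$. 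Iterating the recurrence $r$ times from the trivial base case yields $\dep(\p)=\dep(\wh{\p}_0)\leq O(\diam(G))^r$, as claimed.
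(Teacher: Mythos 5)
Your proof follows the same route the paper takes: a backward induction on $i$ showing $\wh{\p}_i$ is a rooted spanning forest of $G_i$, realized by applying Lemma~\ref{lem:expandrootedspanningforest} to the call $\textsc{ForestExpansion}(\wt{\p}_i,\p_{i-1},f_i)$ (so $\wt{\p}_i$ plays the role of $\p$, $\p_{i-1}$ the role of $\wt{\p}$, and $f_i$ the witness map), invoking Lemma~\ref{lem:F_corresponds_to_p} for the witness condition and Lemma~\ref{lem:dep_p_i} for the per-level depth bound. Your preliminary ``phase one'' is a reasonable elaboration that the paper leaves implicit, though the claimed equality $F_i=\wh{E}_i$ does not literally match the paper's Lemma~\ref{lem:recursive_spanning_tree}, whose $\wh{E}_i$ projects each edge of $D_j$ via $h^{(j-1)}$ (to level $V_j$) rather than via $h^{(i-1)}$; this slip is harmless here since all you actually use is that $F_i$ is a spanning forest of $G_i$ and $f_i$ records pre-images in $F_{i-1}\subseteq E_{i-1}$, which do hold.
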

\begin{proof}
The proof is by induction.
We want to show $\wh{\p}_i$ is a rooted spanning forest of $G_i.$
When $i=r,$ since $V_r=\emptyset,$ the claim is true.
Now suppose we have $\wh{\p}_{i+1}$ is a spanning forest of $G_{i+1}.$
Let $\wt{G}_{i+1}=(\wt{V}_{i+1},E_{i+1}).$
It is easy to see $\wt{\p}_{i+1}:\wt{V}_{i+1}\rightarrow\wt{V}_{i+1}$ is a spanning forest of $\wt{G}_{i+1}.$
An observation is $\wt{V}_{i+1}=\{v\in V_i\mid \p_i(v)=v\}.$
Thus, $\wt{\p}_{i+1},\p_{i}$ satisfies the condition in Lemma~\ref{lem:expandrootedspanningforest} when invoking
$\textsc{ForestExpansion}(\wt{\p}_{i+1},\p_{i},f_{i+1}).$
By Lemma~\ref{lem:F_corresponds_to_p}, we know $f_{i+1}$ also satisfies the condition in Lemma~\ref{lem:expandrootedspanningforest} when we invoke $\textsc{ForestExpansion}(\wt{\p}_{i+1},\p_{i},f_{i+1}).$
Thus, $\wh{\p}_i$ is a rooted spanning forest of $G_i$ due to Lemma~\ref{lem:expandrootedspanningforest}.

By Lemma~\ref{lem:expandrootedspanningforest}, we have $\dep(\wh{\p}_i)\leq 16\dep(\wh{\p}_{i+1})\diam(G).$ By induction, we have $\dep(\p)\leq O(\diam(G))^r.$
\end{proof}

\begin{lemma}\label{lem:sum_of_ni_is_not_large}
Let $G=(V,E)$ be an undirected graph, $m$ be a parameter which is at least $16|V|,$ and $r\leq n$ be a round parameter. If $\textsc{SpanningForest}(G,m,r)$ (Algorithm~\ref{alg:batch_algorithm3}) does not return FAIL, then with probability at least $0.89,$ $\sum_{i=0}^r n_i\leq 40n.$
\end{lemma}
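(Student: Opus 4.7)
The plan is to establish geometric decay $\E[n_i\mid \mathcal{E}] \leq (3/4)^i n$ on a high-probability good event $\mathcal{E}$, then sum and apply Markov's inequality. Since the summands form a geometric series with ratio bounded away from $1$, the expected sum is $O(n)$, and the bound $40n$ gives enough slack for Markov.

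\textbf{Setup.} For each phase $i$, let $\gamma_i = \lceil (m/n_i)^{1/4}\rceil$ be as in line~\ref{sta:alg_span_samp_prob}. By the construction of $V'_i$ in line~\ref{sta:alg_span_create_veprime}, every $v \in V'_i$ satisfies $|V_{\wt{T}_i(v)}| \geq \gamma_i$; moreover, by property~\ref{itm:large_tree_pro3} of Lemma~\ref{lem:maximal_large_short_tree}, the tree $\wt{T}_i(v)$ is entirely contained in $V'_i$, so the set $S_v := V_{\wt{T}_i(v)} \setminus \{v\}$ is a subset of $V'_i \setminus \{v\}$ of cardinality at least $\gamma_i - 1$. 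This is precisely the hypothesis needed to invoke Lemma~\ref{lem:random_leader_props} and Lemma~\ref{lem:random_leader_low_prob} on $V'_i$ with parameter $\gamma_i$.

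\textbf{Step 1 (good event for the low-$p_i$ phases).} Apply Lemma~\ref{lem:random_leader_props} to phase $i$ with $\delta_i = 1/(100 n^2)$. The required sampling probability $10\log(2|V'_i|/\delta_i)/\gamma_i \leq (30\log n + O(1))/\gamma_i$ is covered by the choice $p_i = \min((30\log n + 100)/\gamma_i,\, 1/2)$ whenever $p_i < 1/2$. Let
\begin{equation*}
\mathcal{E} := \{ \text{for every } i \in \{0\}\cup[r-1] \text{ with } p_i < 1/2,\; |L_i| \leq \tfrac{3}{2} p_i |V'_i| \}.
\end{equation*}
Union-bounding over $r \leq n$ phases gives $\Pr[\mathcal{E}] \geq 1 - n/(100n^2) \geq 0.99$.

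\textbf{Step 2 (per-phase geometric decrease).} Under $\mathcal{E}$, I claim $\E[n_{i+1}\mid n_i,\,\mathcal{E}] \leq \tfrac{3}{4} n_i$ for every $i$. Since $n_{i+1} = |V_{i+1}| = |L_i| \leq |V'_i| \leq n_i$, there are two cases. If $p_i < 1/2$, then $\mathcal{E}$ directly yields $n_{i+1} \leq \tfrac{3}{2} p_i n_i \leq \tfrac{3}{4} n_i$ deterministically. If $p_i = 1/2$, Lemma~\ref{lem:random_leader_low_prob} applied to $V'_i$ (with the same $S_v$ as above, which has $|S_v| \geq 1$) gives $\E[|L_i| \mid n_i] \leq \tfrac{3}{4} n_i$; because $\mathcal{E}$ only constrains the hash functions $l_j$ of phases with $p_j < 1/2$, which are independent of the randomness $l_i$ used here, the same bound survives conditioning on $\mathcal{E}$. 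Iterating, $\E[n_i \mid \mathcal{E}] \leq (3/4)^i n$.

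\textbf{Step 3 (sum and Markov).} Summing the geometric series,
\begin{equation*}
\E\!\left[\sum_{i=0}^{r} n_i \;\middle|\; \mathcal{E}\right] \leq n \sum_{i=0}^{\infty} (3/4)^i = 4n.
\end{equation*}
Markov's inequality gives $\Pr\!\big[\sum_i n_i > 40n \,\big|\, \mathcal{E}\big] \leq 1/10$, and combining with $\Pr[\mathcal{E}] \geq 0.99$ yields
\begin{equation*}
\Pr\!\left[\sum_{i=0}^{r} n_i \leq 40n\right] \geq \Pr[\mathcal{E}] \cdot \Pr\!\left[\sum_i n_i \leq 40n \;\middle|\; \mathcal{E}\right] \geq 0.99 \cdot 0.9 > 0.89.
\end{equation*}

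\textbf{Main subtlety.} The one non-routine point is the $p_i = 1/2$ case in Step 2: $\mathcal{E}$ is a global event defined across all phases, and one has to verify that conditioning on it does not distort the one-step contraction bound coming from Lemma~\ref{lem:random_leader_low_prob}. The clean way to handle this is to observe that $\mathcal{E}$ is measurable with respect to $\{l_j : p_j < 1/2\}$, which is independent of $l_i$ when $p_i = 1/2$; alternatively, one can introduce per-phase events $\mathcal{E}_i$ and condition progressively along a filtration. Either way the per-phase factor $3/4$ is preserved, and the rest is bookkeeping.
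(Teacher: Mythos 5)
Your proof follows the same route as the paper's: build the good event $\mathcal{E}$ for the low-$p_i$ phases via Lemma~\ref{lem:random_leader_props} and a union bound (probability $\geq 0.99$), get the per-phase factor $3/4$ from $\mathcal{E}$ when $p_i<1/2$ and from Lemma~\ref{lem:random_leader_low_prob} when $p_i=1/2$, sum the geometric series to $\leq 4n$, and finish with Markov, giving $0.99\cdot 0.9 > 0.89$. The one place to be careful is the remark in Step~2 that $\mathcal{E}$ is "measurable w.r.t.\ $\{l_j : p_j < 1/2\}$": the set of indices $j$ with $p_j<1/2$ is itself random (it depends on the evolving $n_j$'s, hence on all earlier $l_{j'}$), so that independence claim is not literally correct as stated; the filtration version you offer as the alternative is the clean fix, and it is in fact what makes the paper's equally terse argument rigorous. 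So you have not taken a different approach — you have just made explicit a conditioning subtlety that the paper's own proof silently elides.
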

\begin{proof}
Since $V_r\subseteq V_{r-1}\subseteq \cdots \subseteq V_0=V,$ we have $n_r\leq n_{r-1}\leq n_{r-2}\leq \cdots \leq n.$
Due to line~\ref{sta:alg_span_leaders}, line~\ref{sta:alg_span_assign_to_leader} and line~\ref{sta:alg_span_tree_contraction}, we know $\forall i\in\{0\}\cup[r-1],$ $V_{i+1}=L_i.$
If $p_i<1/2,$ we know $p_i=(30\log(n)+100 )/ \gamma_i.$
Since $|V_{\wt{T}_i}(v)|\geq \gamma_i,$ we can apply Lemma~\ref{lem:random_leader_props} to get $\Pr(|L_i|\leq 1.5p_i n_i)\geq \Pr(|L_i|\leq 0.75n_i)\geq 1-1/(100n)$.
By taking union bound over all $i\in \{0\}\cup[r-1],$ with probability at least $0.99,$ if $p_i<0.5,$ then $n_{i+1}\leq 0.75n_i.$
By applying Lemma~\ref{lem:random_leader_low_prob}, condition on $n_i$ and $p_i=\frac{1}{2},$ we have $\E(n_{i+1})\leq 0.75 n_i.$
 By Markov's inequality, with probability at $0.89,$ we have $\sum_{i=0}^r n_i\leq 40 n.$
\end{proof}

Now let us define the total iterations of Algorithm~\ref{alg:batch_algorithm3} as the following:
\begin{definition}[Total iterations]\label{def:num_total_it_spanning_tree}
Let graph $G=(V,E)$, $m\leq \poly(|V|)$ be a parameter which is at least $16|V|,$ and $r$ be a rounds parameter. The total number of iterations of $\textsc{SpanningForest}(G,m,r)$ (Algorithm~\ref{alg:batch_algorithm3}) is defined as $\sum_{i=0}^{r-1} (k_i+k'_i),$ where $\forall i\in\{0\}\cup[r-1],k_i$ denotes the number of iterations (See Definition~\ref{def:num_it_multiplelargetree}) of $\textsc{MultipleLargeTrees}(G_i,m)$ (see line~\ref{sta:alg_span_multilargetree}), and $k'_i$ denotes the number of iterations (See Definition~\ref{def:num_it_tree_contract}) of $\textsc{TreeContraction}(G'_i,\p_i)$ (see line~\ref{sta:alg_span_tree_contraction}).
\end{definition}

\begin{theorem}[Success probability of Algorithm~\ref{alg:batch_algorithm3}]\label{thm:success_prob_batch_alg3}
Let $G=(V,E)$ be an undirected graph. Let $m\leq \poly(n)$ and $m\geq 16|V|.$ Let $r$ be a rounds parameter. Let $c>0$ be a sufficiently large constant. If $r\geq c\log\log_{m/n}n,$ then with probability at least $0.79,$ $\textsc{SpanningForest}(G,m,r)$ (Algorithm~\ref{alg:batch_algorithm3}) does not return {\rm FAIL}. Furthermore, let $\forall i\in\{0\}\cup[r-1],k_i$ be the number of iterations (See Definition~\ref{def:num_it_multiplelargetree}) of $\textsc{MultipleLargeTrees}(G_i,m)$ and $k'_i$ be the number of iterations (See Definition~\ref{def:num_it_tree_contract}) of $\textsc{TreeContraction}(G'_i,\p_i:V'_i\rightarrow V'_i).$  Let $c_1>0$ be a sufficiently large constant. If $m\geq c_1 n\log^8 n,$ then with probability at least $0.99,$ $\sum_{i=0}^{r-1} k_i+k'_i\leq O(\min(\log(\diam(G))\log\log_{\diam(G)}(n),$ $r\log(\diam(G)))).$ If $m< c_1 n\log^8 n$, then with probability at least $0.98,$ $\sum_{i=0}^{r-1} k'_i+k_i\leq O(\min(\log(\diam(G))$ $\log\log_{\diam(G)}(n)+(\log\log n)^2,r\log(\diam(G)))).$
\end{theorem}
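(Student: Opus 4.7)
The plan is to follow closely the strategy of Theorem~\ref{thm:alg2_time_prob}, but account for two differences: (i) the ``degree'' ensured by Lemma~\ref{lem:maximal_large_short_tree} is $(m/n_i)^{1/4}$ instead of $(m/n_i)^{1/2}$, so the double-exponential recursion has exponent $5/4$ rather than $3/2$; and (ii) there is a new failure mode at line~\ref{sta:alg_span_tree_fail_on_total_n} that requires bounding $\sum_i n_i$ rather than only $n_r$. I would begin by controlling both failure modes simultaneously. For the sum-based failure, Lemma~\ref{lem:sum_of_ni_is_not_large} already gives $\sum_{i=0}^r n_i \leq 40n$ with probability at least $0.89$. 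For the $n_r \neq 0$ failure, I would apply Lemma~\ref{lem:random_leader_props} to every phase with $p_i < 1/2$ (where $\gamma_i = \lceil(m/n_i)^{1/4}\rceil$), take a union bound over at most $r\leq n$ phases to get that, simultaneously for all such $i$, $n_{i+1} \leq 1.5 p_i n_i$; and apply Lemma~\ref{lem:random_leader_low_prob} in expectation for the phases with $p_i = 1/2$.

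Next I would unroll the recursion. In phases with $p_i < 1/2$, substituting $p_i = (30\log n + 100)/\gamma_i$ yields
\[
n_{i+1} \leq O(\log n)\cdot n_i^{5/4}/m^{1/4}, \qquad\text{equivalently}\qquad m/n_{i+1} \geq (m/n_i)^{5/4}/O(\log n).
\]
This is double-exponential growth with base $5/4>1$, so after $O(\log\log_{m/n} n)$ such phases the ratio $m/n_i$ exceeds $n$ and hence $n_i < 1$, giving $n_i = 0$. When $m < c_1 n\log^8 n$, the algorithm first spends $i^\ast = O(\log\log n)$ phases with $p_i = 1/2$; by Markov and a union bound, with the required probability $n_{i^\ast} \leq n/(600\log n)^4$, after which the preceding double-exponential argument kicks in. Combining with the $0.89$ bound from Lemma~\ref{lem:sum_of_ni_is_not_large}, choosing $c$ large enough gives overall success probability at least $0.79$ (with some slack for the two failure events being $0.89$ and $\geq 0.9$).

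For the iteration-count bound I would use Lemma~\ref{lem:num_it_multiplelargetree} to get $k_i \leq \lceil\log\min(\diam(G), m/n_i)\rceil + 1$, and combine Lemma~\ref{lem:dep_p_i} with Lemma~\ref{lem:contraction_properties} to get $k'_i \leq \lceil \log\dep(\p_i)\rceil \leq \lceil\log\min(\diam(G), (m/n_i)^{1/2})\rceil$, so $k_i + k'_i \leq O(\log\min(\diam(G), m/n_i))$. I would then split the sum into two regimes. In the regime $m/n_i \leq \diam(G)$, the sequence $\log(m/n_i)$ grows by a factor of at least $5/4$ per phase, so the geometric sum telescopes to $O(\log\diam(G))$. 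In the regime $m/n_i > \diam(G)$, each phase contributes $O(\log\diam(G))$, and the number of such phases is $O(\log\log_{\diam(G)} n)$ by the same double-exponential argument applied to $m/n_i$ starting from $\diam(G)$. The total is $O(\log\diam(G)\cdot\log\log_{\diam(G)} n)$. When $m < c_1 n\log^8 n$, the $O(\log\log n)$ initial phases with $p_i = 1/2$ each contribute $O(\log(m/n_i)) = O(\log\log n)$, adding an extra $O((\log\log n)^2)$ term. The trivial bound $k_i+k'_i \leq O(\log\diam(G))$ summed over all $r$ phases gives the $O(r\log\diam(G))$ alternative.

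The main obstacle I expect is verifying that the exponent $5/4$ from the $(m/n_i)^{1/4}$ degree guarantee is indeed enough for the double-exponential recursion to terminate within $O(\log\log_{m/n} n)$ phases despite the $O(\log n)$ multiplicative slack, which requires carefully choosing the constant $c$ so that $1.25^{r/2}$ dominates the $\log$-factor losses. A secondary subtlety is arguing that the $0.89$ bound from Lemma~\ref{lem:sum_of_ni_is_not_large} and the $\geq 0.99$ bound on $n_r = 0$ can be union-bounded since they depend on the same random choices $l_i$; this is fine because both are proved using the high-probability event $n_{i+1} \leq 1.5 p_i n_i$ (in the $p_i<1/2$ case) together with separate Markov arguments on the $p_i=1/2$ case.
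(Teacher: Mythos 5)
Your proposal is correct and follows essentially the same route as the paper's own proof: Lemma~\ref{lem:sum_of_ni_is_not_large} for the $\sum_i n_i$ failure mode, Lemma~\ref{lem:random_leader_props} with union bound for $p_i<1/2$ together with Lemma~\ref{lem:random_leader_low_prob} and Markov for $p_i=1/2$, the double-exponential recursion $m/n_{i+1}\geq (m/n_i)^{5/4}/O(\log n)\geq (m/n_i)^{1.125}$, the split of $\sum_i(k_i+k'_i)$ into the regimes $m/n_i\leq\diam(G)$ and $m/n_i>\diam(G)$ via Lemmas~\ref{lem:num_it_multiplelargetree}, \ref{lem:dep_p_i}, and Corollary~\ref{cor:tree_contract_conn}, and the extra $(\log\log n)^2$ term from the initial $p_i=1/2$ phases. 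The obstacle you flag (verifying that the $\log n$ slack does not defeat the $5/4$ exponent) is exactly what the paper handles by downgrading the effective exponent to $1.125$, so your plan is sound as written.
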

\begin{proof}
Due to Lemma~\ref{lem:sum_of_ni_is_not_large}, with probability at last $0.89,$ we have $\forall i\in[r],n'_i\leq 40n.$
Thus, we can condition on that $\textsc{SpanningForest}(G,m,r)$ will not fail on line~\ref{sta:alg_span_tree_fail_on_total_n}.

Due to Lemma~\ref{lem:num_it_multiplelargetree}, $k_i\leq O(\log (\diam(G_i)))\leq O(\log(\diam(G))).$ Due to Corollary~\ref{cor:tree_contract_conn} and Lemma~\ref{lem:dep_p_i}, $k'_i\leq O(\log(\diam(G))).$ Thus,
$\sum_{i\in\{0\}\cup[r-1]} k'_i+k_i\leq O(r\log(\diam(G))).$

Since $V_r\subseteq V_{r-1}\subseteq V_{r-2}\subseteq \cdots \subseteq V_0=V,$ we have $n_r\leq n_{r-1}\leq n_{r-2}\leq \cdots \leq n.$
Due to line~\ref{sta:alg_span_leaders}, line~\ref{sta:alg_span_assign_to_leader} and line~\ref{sta:alg_span_tree_contraction}, we know $\forall i\in\{0\}\cup[r-1],$ $V_{i+1}=L_i.$
If $p_i<1/2,$ we know $p_i=(30\log(n)+100 )/ \gamma_i.$
Since $|V_{\wt{T}_i}(v)|\geq \gamma_i,$ we can apply Lemma~\ref{lem:random_leader_props} to get $\Pr(|L_i|\leq 1.5p_i n_i)\geq 1-1/(100n)$.
By taking union bound over all $i\in \{0\}\cup[r-1],$ with probability at least $0.99,$ if $p_i<0.5,$ then $n_{i+1}\leq 1.5p_in_i\leq 0.75n_i.$
Let $\mathcal{E}$ be the event that $\forall i\in \{0\}\cup[r-1],$ if $p_i<0.5,$ then $n_{i+1}\leq 1.5p_in_i.$
Now, we suppose $\mathcal{E}$ happens.

If $p_0=0.5,$ then $m\leq n\cdot (600\log n)^8.$
By applying Lemma~\ref{lem:random_leader_low_prob}, $\E(n_{i+1})=\E(|L_i|)\leq 0.75\E(n_i)\leq \cdots \leq 0.75^{i+1} n.$
By Markov's inequality, when $i^*\geq 8\log(6000\log n)/\log(4/3),$ with probability at least $0.99,$ $n_{i^*}\leq n/(600\log n)^8$ and thus $p_{i^*}<0.5.$ Condition on this event and $\mathcal{E}$, we have
{
\small
\begin{align*}
n_r&\leq \frac{\left(\frac{\left(\frac{n_{i^*}^{1.25}}{m^{0.25}}(45\log n+150)\right)^{1.25}}{m^{0.25}}(45\log n+150)\right)^{\cdots}}{\cdots}&~\text{(Apply $r'=r-i^*$ times)}\\
&=n_{i^*}/(m/n_{i^*})^{1.25^{r'}-1}\cdot (45\log n+150)^{4\cdot(1.25^{r'}-1)}\\
&\leq n/\left(m/\left(n_{i^*}(45\log n+150)^4\right)\right)^{1.25^{r'}-1}\\
&\leq n/\left(m/\left(n_{i^*}(45\log n+150)^4\right)\right)^{1.25^{r'/2}}
\leq n/\left(m/n\right)^{1.25^{r'/2}}\leq 1/2,
\end{align*}
}
where the second inequality follows by $n_{i^*}\leq n,$ the third inequality follows by $r'\geq 5,$ the forth inequality follows by $n_{i^*}\leq n/(600\log n)^8,$ and the last inequality follows by $r'\geq \frac{2}{\log 1.25}\log\log_{m/n}(2n).$
Since $16n\leq m\leq n\cdot (600\log n)^8,\log\log_{m/n} n=\Theta(\log\log n).$ Let $c>0$ be a sufficiently large constant.
Thus, when $r\geq c\log\log_{m/n} n\geq  i^*+r'=8\log(6000\log n)/\log(4/3)+\frac{2}{\log 1.25}\log\log_{m/n}(2n),$ with probability at least $0.98,$ $n_r=0$ implies that $\textsc{SpanningForest}(G,m,r)$ will not fail.
Due to Lemma~\ref{lem:num_it_multiplelargetree}, we have $k_i\leq O(\min(\log (m/n_{i}),\log(\diam(G)))).$ Thus,
\begin{align*}
& ~ \sum_{i=0}^{r-1} k_i \\
= & ~ \sum_{i=0}^{i^*} k_i +  \sum_{i=i^*+1}^{r-1} k_i\\
\leq & ~ O\left((\log \log n)^2\right)+ \sum_{i=i^*+1}^{r-1} k_i\\
\leq & ~ O\left((\log \log n)^2\right)+ \sum_{i:i\geq i^*+1,m/n_{i}\leq \diam(G)} k_i+\sum_{i:i\leq r,m/n_{i}> \diam(G)} k_i\\
\leq & ~ O\left((\log \log n)^2\right)+ O\left(\sum_{i=0}^{\lceil\log_{1.25}\log_2 (\diam(G))\rceil}\log( 2^{1.25^i})\right)+ O\left(\sum_{i=0}^{\lceil\log_{1.25}\log_{\diam(G)} (m)\rceil}\log( \diam(G))\right)\\
\leq & ~ O\left((\log \log n)^2\right)+O(\log(\diam(G)))+O(\log(\diam(G))\log\log_{\diam(G)} (n))\\
\leq & ~ O(\log(\diam(G))\log\log_{\diam(G)}(n)+(\log\log(n))^2),
\end{align*}
where the first inequality follows by $i^*=O(\log \log n)$ and $\forall i\leq[i^*], m/n_{i}\leq \poly(\log n),$ the third inequality follows by $m/n_{i+1}\geq (m/n_i)^{1.25}/(45\log n+100)\geq (m/n_i)^{1.125}.$
Due to Corollary~\ref{cor:tree_contract_conn} and Lemma~\ref{lem:dep_p_i}, we also have $k'_i\leq O(\min(\log (m/n_{i}),\log(\diam(G)))).$ Then, by the same argument, we have $\sum_{i=0}^{r-1}k'_i=O(\log(\diam(G))\log\log_{\diam(G)}(n)+(\log\log(n))^2).$

If $m> n\cdot (600\log n)^8,$ then $\forall i\in\{0\}\cup[r-1],$ we have $p_i<0.5.$ Since $\mathcal{E}$ happens. We have:
\begin{align*}
n_r&\leq \frac{\left(\frac{\left(\frac{n^{1.25}}{m^{0.25}}(45\log n+150)\right)^{1.25}}{m^{0.25}}(45\log n+150)\right)^{\cdots}}{\cdots}&~\text{(Apply $r$ times)}\\
&= \frac{n^{1.25^r}}{m^{1.25^{r}-1}}(45\log n+150)^{4\cdot(1.25^{r}-1)}\\
&=n/(m/n)^{1.25^{r}-1}\cdot (45\log n+150)^{4\cdot(1.25^{r}-1)}\\
&= n/\left(m/\left(n(45\log n+150)^4\right)\right)^{1.25^{r}-1}\\
&\leq n/\left(m/\left(n(45\log n+150)^4\right)\right)^{1.25^{r/2}}\\
&\leq n/\left(m/\left(  n(200\log n)^4\right)\right)^{1.25^{r/2}}\\
&\leq \frac{1}{2},
\end{align*}
where the second inequality follows by $r\geq 5,$ the third inequality follows by $45\log n+150\leq 200\log n,$ and the last inequality follows by
$$ r\geq c\log\log_{m/n}n\geq 2\log_{1.25} \log_{(m/n)^{1/2}} 2n \geq 2\log_{1.25} \log_{m/(n(200\log n)^4)} 2n,$$ for a sufficiently large constant $c>0.$
Since $n_r$ is an integer, $n_r$ must be $0$ when $n_r\leq 1/2.$
$\textsc{SpanningForest}(G,m,r)$ will succeed with probability at least $0.99.$
Due to Lemma~\ref{lem:num_it_multiplelargetree}, we have $k_i\leq O(\min(\log (m/n_{i}),\log(\diam(G)))).$ Thus,
\begin{align*}
\sum_{i=0}^{r-1} k_i&\leq \sum_{m/n_{i}\leq \diam(G)} k_i+\sum_{m/n_{i}> \diam(G)} k_i\\
&\leq O\left(\sum_{i=0}^{\lceil\log_{1.25}\log_2 (\diam(G))\rceil}\log( 2^{1.25^i})\right)+O\left(\sum_{i=0}^{\lceil\log_{1.25}\log_{\diam(G)} (m)\rceil}\log( \diam(G))\right)\\
&\leq O(\log(\diam(G)))+O(\log(\diam(G))\log\log_{\diam(G)} (n)),
\end{align*}
where the second inequality follows by $m/n_{i+1}\geq (m/n_i)^{1.25}/(45\log n+100)\geq (m/n_i)^{1.125}.$
Due to Corollary~\ref{cor:tree_contract_conn} and Lemma~\ref{lem:dep_p_i}, we also have $k'_i\leq O(\min(\log (m/n_{i}),\log(\diam(G)))).$ Then, by the same argument, we have $\sum_{i=0}^{r-1}k'_i=O(\log(\diam(G)))+O(\log(\diam(G))\log\log_{\diam(G)} (n)).$


\end{proof}
\section{Depth-First-Search Sequence for Tree and Applications}
\label{sec:dfs}

    \subsection{Lowest Common Ancestor and Multi-Paths Generation}\label{sec:lca_multipath}
Given a rooted forest induced by $\p:V\rightarrow V$ which is a set of parent pointers (See Definition~\ref{def:parent_pointers}) on $V,$ and a set of $q$ queries $Q=\{(u_1,v_1),(u_2,v_2),\cdots,(u_q,v_q)\mid u_i,v_i\in V\},$ we show an algorithm which can return a mapping $\lca:Q\rightarrow (V\cup\{ \nul \})\times (V\cup\{ \nul \})\times (V\cup\{ \nul \})$ such that $\forall (u_i,v_i)\in Q,(p,p_{u_i},p_{v_i})=\lca(u_i,v_i)$ satisfies the following properties:
\begin{enumerate}
\item If $\p^{(\infty)}(u_i)=\p^{(\infty)}(v_i),$ then $p$ is the lowest ancestor of $u_i$ and $v_i$. Otherwise $p=p_{u_i}=p_{v_i}= \nul .$
\item Suppose $p\not= \nul .$
    If $p\not=u_i,$ then $p_{u_i}$ is an ancestor of $u_i$ and $\p(p_{u_i})=p.$ Otherwise, $p_{u_i}= \nul .$
\item Suppose $p\not= \nul .$
    If $p\not=v_i,$ then $p_{v_i}$ is an ancestor of $v_i$ and $\p(p_{v_i})=p.$ Otherwise, $p_{v_i}= \nul .$
\end{enumerate}
Before we describe the algorithms, let us formally define \textit{ancestor} and the \textit{lowest common ancestor}.
\begin{definition}[Ancestor]\label{def:ancestor}
Let $\p:V\rightarrow V$ be a set of parent pointers (See Definition~\ref{def:parent_pointers}) on a vertex set $V$. For $u,v\in V,$ if $\exists k\in \mathbb{Z}_{\geq 0}$ such that $u=\p^{(k)}(v),$ then $u$ is an ancestor of $v$.
\end{definition}

\begin{definition}[Common ancestor and the lowest common ancestor]\label{def:lca}
$\p:V\rightarrow V$ be a set of parent pointers (See Definition~\ref{def:parent_pointers}) on a vertex set $V$. For $u,v\in V,$ if $w$ is an ancestor of $u$ and is also an ancestor of $v,$ then $w$ is a common ancestor of $(u,v).$ If a common ancestor $w$ of $(u,v)$ satisfies $\dep_{\p}(w)\geq \dep_{\p}(x)$ for any common ancestor $x$ of $(u,v),$ then $w$ is the lowest common ancestor (LCA) of $(u,v).$
\end{definition}

\begin{definition}[Path between two vertices]\label{def:path_between_two_vertices}
$\p:V\rightarrow V$ be a set of parent pointers (See Definition~\ref{def:parent_pointers}) on a vertex set $V$. For $u,v\in V,$ if $\p^{(\infty)}(u)=\p^{(\infty)}(v),$ then the path from $u$ to $v$ is a sequence $(x_1,x_2,\cdots,x_j,x_{j+1},\cdots x_k)$ such that $\forall i\not=i'\in[k],x_i\not=x_{i'},x_1=u,x_k=v,x_j$ is the lowest common ancestor of $(u,v),$ $\forall i\in[j-1],\p(x_i)=x_{i+1},$ and $\forall i\in\{j+1,j+2,\cdots,k\},\p(x_i)=x_{i-1}.$
\end{definition}

The algorithm which can compute the lowest common ancestor is described in Algorithm~\ref{alg:lca}.

\begin{algorithm}[t]
\caption{Lowest Common Ancestor}\label{alg:lca}
\begin{algorithmic}[1]
\small
\Procedure{\textsc{LCA}}{$\p: V\rightarrow V,Q=\{(u_1,v_1),(u_2,v_2),\cdots,(u_q,v_q)\}$} \Comment{Lemma~\ref{lem:lca}}
\State Output: $\lca:Q\rightarrow (V\cup\{ \nul \})\times (V\cup\{ \nul \})\times (V\cup\{ \nul \})$
\State $(r,\dep_{\p},\{g_i\mid i\in\{0\}\cup[r]\})=\textsc{FindAncestors}(\p).$\label{sta:alg_lca_find_ancestor} \Comment{Algorithm~\ref{alg:ancestors}}
\State $\forall (u,v)\in Q,$ if $u=v$ then let $\lca(u,v)=(u, \nul , \nul ),Q\leftarrow Q\setminus\{(u,v)\}.$
\State $\forall (u,v)\in Q,g_r(u)\not=g_r(v),$ let $\lca(u,v)=( \nul , \nul , \nul ).$ \label{sta:not_in_the_same_tree}
\State Let $Q'=\emptyset.$
\State $\forall (u,v)\in Q,g_r(u)=g_r(v)$, if $\dep_{\p}(u)\geq \dep_{\p}(v),$ then let $Q'\leftarrow Q'\cup\{(u,v)\};$ Otherwise let $Q'\leftarrow Q'\cup \{(v,u)\}.$
\State Let $h_r:Q'\rightarrow Q'$ be an identity mapping.
\For{$i=r-1\rightarrow 0$} \Comment{Move $u$ to the almost same depth as $v.$}
\State For each $(u,v)\in Q',$ let $(x,v)=h_{i+1}(u,v).$ If $\dep_{\p}(x)-2^i>\dep_{\p}(y),$ then let $h_i(u,v)=(g_i(x),v);$ Otherwise let $h_i(u,v)=(x,v).$
\EndFor
\State For each $(u,v)\in Q'$ with $\p(h_0(u))=v,$ if $(u,v)\in Q,$ then let $\lca(u,v)=(v,h_0(u), \nul );$ Otherwise $\lca(v,u)=(v, \nul ,h_0(u)).$
\State Let $Q''=\emptyset.$
\State For each $(u,v)\in Q'$ with $\p(h_0(u))\not=v,\dep_{\p}(h_0(u))>\dep_{\p}(v)$ let $Q''\leftarrow Q''\cup\{(u,v)\},$ $h'_r(u,v)\leftarrow(\p(h_0(u)),v).$
\State For each $(u,v)\in Q'$ with $\dep_{\p}(u)=\dep_{\p}(v)$ let $Q''\leftarrow Q''\cup\{(u,v)\},$ $h'_r(u,v)\leftarrow(u,v).$
\For{$i=r-1\rightarrow 0$}\Comment{Move $u,v$ to the lowest common ancestor.}
\State For each $(u,v)\in Q'',$ let $(x,y)=h'_{i+1}(u,v).$ If $g_i(x)\not=g_i(y),$ then let $h'_i(u,v)=(g_i(x),g_i(y));$ Otherwise let $h'_i(u,v)=(x,y).$
\EndFor
\State For each $(u,v)\in Q'',$ if $(u,v)\in Q,$ then let $\lca(u,v)=(\p(h'_0(u)),h'_0(u),h'_0(v));$ Otherwise $\lca(v,u)=(\p(h'_0(v)),h'_0(v),h'_0(u)).$
\State \Return $\lca.$
\EndProcedure
\end{algorithmic}
\end{algorithm}

\begin{lemma}\label{lem:lca}
Let $\p:V\rightarrow V$ be a set of parent pointers (See Definition~\ref{def:parent_pointers}) on a vertex set $V$. Let $Q=\{(u_1,v_1),(u_2,v_2),\cdots,(u_q,v_q)\}$ be a set of $q$ pairs of vertices, and $\forall i\in [q],u_i\not=v_i$. Let $\lca=\textsc{LCA}(\p,Q)$ (Algorithm~\ref{alg:lca}). Then for any $(u,v)\in Q,$ $(p,p_u,p_v)=\lca(u,v)$ satisfies the following properties:
\begin{enumerate}
\item If $\p^{(\infty)}(u)\not=\p^{(\infty)}(v),$ then $p=p_u=p_v= \nul .$\label{itm:lca_pro1}
\item If $u$ (or $v$) is the lowest common ancestor of $(u,v),$ then $p=u,p_u= \nul ,p_v\not=u$ is an ancestor of $v$ such that $\p(p_v)=u$ (or $p=v,p_v= \nul ,p_u\not=v$ is an ancestor of $u$ such that $\p(p_u)=v.$)\label{itm:lca_pro2}
\item If neither $u$ nor $v$ is the lowest common ancestor of $(u,v)$ and $\p^{(\infty)}(u)=\p^{(\infty)}(v),$ then $p$ is the lowest common ancestor of $(u,v),$ $p_u\not=p$ is an ancestor of $u$, $p_v\not=p$ is an ancestor of $v,$ and $\p(p_u)=\p(p_v)=p.$\label{itm:lca_pro3}
\end{enumerate}
\end{lemma}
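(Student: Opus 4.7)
The plan is to reduce the lemma to the correctness of the standard binary-lifting LCA scheme, broken into the three cases corresponding to properties~\ref{itm:lca_pro1}--\ref{itm:lca_pro3}, and then to verify the two doubling loops by maintaining depth-based invariants.

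First, I would invoke Lemma~\ref{lem:depthandancestor} on line~\ref{sta:alg_lca_find_ancestor} to conclude that $\dep_{\p}$ correctly records every vertex's depth, that $2^r\ge\dep(\p)$, and that for every $i\in\{0\}\cup[r]$ and $v\in V$ we have $g_i(v)=\p^{(2^i)}(v)$. In particular, since $2^r\ge\dep(v)$ for every $v$, iterating $\p$ from $v$ for $2^r$ steps reaches the root, i.e.\ $g_r(v)=\p^{(\infty)}(v)$. Property~\ref{itm:lca_pro1} is then immediate from line~\ref{sta:not_in_the_same_tree}: $\p^{(\infty)}(u)\ne\p^{(\infty)}(v)$ iff $g_r(u)\ne g_r(v)$, in which case $\lca(u,v)=( \nul , \nul , \nul )$ is assigned. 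Pairs with $u=v$ are also handled explicitly and do not arise by hypothesis.

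For the remaining case $g_r(u)=g_r(v)$ (so $u,v$ share a root), the algorithm orders the pair so the first component is at least as deep. The first doubling loop produces $h_0(u)$ as follows: I will show by downward induction on $i$ that the first coordinate of $h_i(u,v)$ is an ancestor of $u$ and its depth is $\geq\dep_{\p}(v)$. The guard $\dep_{\p}(x)-2^i>\dep_{\p}(v)$ never lets the depth drop to or below $\dep_{\p}(v)$, so the invariant is preserved; meanwhile, at $i=0$ the same guard forces $\dep_{\p}(h_0(u))\le\dep_{\p}(v)+1$. Consequently $\dep_{\p}(h_0(u))=\dep_{\p}(v)$ exactly when the input already satisfied $\dep_{\p}(u)=\dep_{\p}(v)$ (in which case $h_0(u)=u$), and otherwise $\dep_{\p}(h_0(u))=\dep_{\p}(v)+1$. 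This is the standard ``lift the deeper vertex'' step of binary-lifting LCA; I would phrase the key lemma as saying the loop computes the unique ancestor of $u$ at depth $\max(\dep_{\p}(v),\dep_{\p}(v)+1-[\dep_{\p}(u)=\dep_{\p}(v)])$. From this, the branch $\p(h_0(u))=v$ directly yields property~\ref{itm:lca_pro2}: $v$ is an ancestor of $u$ and thus the LCA, with $p_u=h_0(u)$ being an ancestor of $u$ whose parent is $v$; the swap bookkeeping handles the symmetric case when the original $v$ was deeper.

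In the surviving case (neither endpoint is the LCA), the algorithm enters $Q''$ with both coordinates at identical depth~-- either $(u,v)$ itself when $\dep_{\p}(u)=\dep_{\p}(v)$, or $(\p(h_0(u)),v)$ otherwise, which lies at depth $\dep_{\p}(v)$ since $\dep_{\p}(h_0(u))=\dep_{\p}(v)+1$. The second doubling loop is the classical ``lift both simultaneously while their ancestors differ'' step. I would maintain as invariants that $h'_i(u,v)=(x,y)$ consists of an ancestor $x$ of $u$ and an ancestor $y$ of $v$ of equal depth with $x\neq y$ (equivalently, strictly below the LCA). Preservation is clear: we lift iff $g_i(x)\ne g_i(y)$, i.e.\ iff the $2^i$-th ancestors are still distinct, so after lifting the two coordinates remain distinct and equidepth. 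At termination ($i=0$) we have $g_0(h'_0(u))=g_0(h'_0(v))$, meaning $\p(h'_0(u))=\p(h'_0(v))$, while $h'_0(u)\neq h'_0(v)$; this common parent is therefore the LCA and the two coordinates are ancestors of $u,v$ respectively whose parent is that LCA. This matches the assignment of $(p,p_u,p_v)$ in the last line, giving property~\ref{itm:lca_pro3}.

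The main obstacle is the bookkeeping for the first loop: verifying that the greedy ``move only when strict inequality holds'' rule produces exactly the ancestor at depth $\dep_{\p}(v)+1$ and never overshoots when $\dep_{\p}(u)=\dep_{\p}(v)+1$ exactly, while still terminating with depth $\dep_{\p}(v)$ in the already-equidepth input. The cleanest formulation is an invariant on the bits of $\dep_{\p}(u)-\dep_{\p}(h_i(u)_1)$ relative to the target offset $\dep_{\p}(u)-(\dep_{\p}(v)+1)$, which I would present as a short induction. Once that invariant and the analogous one for the simultaneous lift are in place, properties~\ref{itm:lca_pro1}--\ref{itm:lca_pro3} follow directly from reading off the three assignment sites of $\lca$.
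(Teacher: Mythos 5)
Your proposal follows essentially the same route as the paper's proof: invoke Lemma~\ref{lem:depthandancestor} for the outputs of \textsc{FindAncestors}, read property~\ref{itm:lca_pro1} directly off the root test on line~\ref{sta:not_in_the_same_tree}, and then establish the two doubling loops by downward induction on depth-based invariants. The one substantive difference is the form of the first-loop invariant. The paper states that $\p^{(2^i)}(x)$ is ``an ancestor of $v$,'' which is not literally correct: $x$ is an ancestor of $u$, and an ancestor of $u$ at depth at most $\dep_{\p}(v)$ need not be an ancestor of $v$ when it lies strictly between $\dep_{\p}(v)$ and the depth of the LCA. You instead track the two-sided bound $\dep_{\p}(v) < \dep_{\p}(x_i) \leq \dep_{\p}(v) + 2^i$, which is exactly what the argument requires and which both branches of the guard preserve; at $i = 0$ this pins $\dep_{\p}(x_0)$ to $\dep_{\p}(v)+1$, so your formulation is the cleaner one. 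For the second loop, be explicit that the invariant to carry is $\p^{(2^i)}(h'_i(u)) = \p^{(2^i)}(h'_i(v))$ in addition to distinctness and equal depth; your phrase ``lift iff the $2^i$-th ancestors are still distinct'' gestures at it, but the claim $\p(h'_0(u)) = \p(h'_0(v))$ at termination only follows once you state that equality at $i=r$ (both sides equal the common root) and check that both branches of the update preserve it. With those two points made explicit, your sketch is complete and matches the paper's argument.
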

\begin{proof}
According to Lemma~\ref{lem:depthandancestor}, $r$ should be at most $\lceil\log(\dep(\p)+1)\rceil,$ $\dep_{\p}:V\rightarrow \mathbb{Z}_{\geq 0}$ records the depth of every vertex in $V,$ and $\forall i\in \{0\}\cup[r],v\in V$ $g_i(v)=\p^{(2^i)}(v).$
Then property~\ref{itm:lca_pro1} follows by line~\ref{sta:not_in_the_same_tree} directly.

Then for all $(u,v)\in Q$ with $\p^{(\infty)}(u)=\p^{(\infty)}(v),$ either $(u,v)\in Q'$ or $(v,u)\in Q'.$
For each $(u,v)\in Q',$ we have $\dep_{\p}(u)\geq \dep_{\p}(v).$
For all $(u,v)\in Q',$ with $\dep_{\p}(u)>\dep_{\p}(v),$ by induction we can prove that $\forall i\in\{0\}\cup[r-1],(x,y)=h_i(u,v)$ satisfies that $x$ is an ancestor of $u,$ $y=v,$ $\dep_{\p}(x)>\dep_{\p} (v)$ and $\p^{(2^i)}(x)$ is an ancestor of $v$.
Thus, for $(p,p_u,p_v)=\lca(u,v),$ if $v$ is the lowest common ancestor of $u,$ then we have $p=v,p_u=h_0(u),p_v= \nul .$
In this case, $h_0(u)$ is an ancestor of $u,$ and $\dep_{\p}(u)=\dep_{\p}(v)+1,\p(h_0(u))=v.$
Thus, property~\ref{itm:lca_pro2} holds.

For all $(u,v)\in Q$ with $\p^{(\infty)}(u)=\p^{(\infty)}(v),$ if neither $u$ nor $v$ is the lowest common ancestor of $(u,v),$ then we know either $(u,v)$ or $(v,u)$ is in $Q''.$
Now let $(u,v)\in Q''.$
 We have $\dep_{\p}(h'_r(u))=\dep_{\p}(h'_r(v)),h'_r(u)\not=h'_r(v),$ and $h'_r(u),h'_r(v)$ are ancestors of $u,v$ respectively.
We can prove by induction to get $\forall i\in\{0\}\cup[r],h'_i(u)\not=h'_i(v)$ and $\p^{(2^i)}(h'_i(u))=\p^{(2^i)}(h'_i(v))$ is a common ancestor of $(u,v).$
Thus, $p=\p(h'_0(u))=\p(h'_0(v))$ is the lowest common ancestor of $(u,v),$ and $\dep_{\p}(h'_0(u))=\dep_{\p}(h'_0(v))=\dep_{\p}(p)+1.$
Since $p_u=h'_0(u),p_v=h'_0(v),$ property~\ref{itm:lca_pro3} holds.
\end{proof}

In Algorithm~\ref{alg:multi_path}, we show a generalization of Algorithm~\ref{alg:path_find} such that we can find multiple vertex-to-ancestor paths simultaneously.

\begin{algorithm}[t]
\caption{Multiple Paths}\label{alg:multi_path}
\begin{algorithmic}[1]
\Procedure{MultiPath}{$\p: V\rightarrow V,Q=\{(u_1,v_1),(u_2,v_2),\cdots,(u_q,v_q)\}$} \Comment{ Lemma~\ref{lem:multi_path} }
\State Output: $\dep_{\p}:V\rightarrow \mathbb{Z}_{\geq 0},\{P_i\subseteq V\mid i\in[q]\}.$
\State $(r,\dep_{\p},\{g_i\mid i\in\{0\}\cup[r]\}) = \textsc{FindAncestors}(\p).$\label{sta:alg_multipath_findancestor} \Comment{Algorithm~\ref{alg:ancestors}}
\State $\forall j\in [q],$ let $S^{(0)}_j=\{(u_j,v_j)\mid (u_j,v_j)\in Q\}.$ \label{sta:alg_multipath_init}
\For{$i=1\rightarrow r$}
    \For{$j=1\rightarrow q$} \Comment{$S_j^{(i)}$ is a set of segments partitioned the path from $u_j$ to $v_j.$}
        \State Let $S^{(i)}_j\leftarrow \emptyset.$
        \For{$(x,y)\in S_j^{(i-1)}$}
            \If{$\dep_{\p}(x)-\dep_{\p}(y)> 2^{r-i}$} $S^{(i)}_j\leftarrow S^{(i)}_j\cup\{(x,g_{r-i}(x)),(g_{r-i}(x),y)\}.$ \label{sta:alg_multipath_split}
            \Else~$S_j^{(i)}\leftarrow S_j^{(i)}\cup\{(x,y)\}.$ \label{sta:alg_multipath_not_split}
            \EndIf
        \EndFor
    \EndFor
\EndFor\Comment{$S_j^{(r)}$ only contains segments with length $1$}
\State Let $\forall j\in[q],P_j\leftarrow\{u_j\}.$
\For{$j=1\rightarrow q$}
\For{$(x,y)\in S^{(r)}_j$}
\State Let $P_j\leftarrow P_j\cup\{y\}.$
\EndFor
\EndFor
\EndProcedure
\end{algorithmic}
\end{algorithm}

The following lemma claims the properties of the outputs of Algorithm~\ref{alg:multi_path}. And the proof is similar to the proof of Lemma~\ref{lem:find_path}.

\begin{lemma}\label{lem:multi_path}
Let $\p:V\rightarrow V$ be a set of parent pointers (See Definition~\ref{def:parent_pointers}) on a vertex set $V.$ Let $Q=\{(u_1,v_1),(u_2,v_2),\cdots,(u_q,v_q)\}\subseteq V\times V$ satisfy $\forall j\in[q],$ $v_j$ is an ancestor (See Definition~\ref{def:ancestor}) of $u_j$ in $\p$. Let $(\dep_{\p},\{P_j\mid j\in [q]\})=\textsc{MultiPath}(\p,Q)$ (Algorithm~\ref{alg:multi_path}). Then $\dep_{\p}:V\rightarrow \mathbb{Z}_{\geq 0}$ records the depth of every vertex in $V$ and $\forall j\in[q],P_j\subseteq V$ is the set of all vertices on the path from $u_j$ to $v_j$, i.e. $P_j=\{v\in V\mid \exists k_1,k_2\in\mathbb{Z}_{\geq 0},v=\p^{(k_1)}(u_j),v_j=\p^{(k_2)}(v)\}.$
Furthermore, $r$ should be at most $\lceil\log(\dep(\p)+1)\rceil.$
\end{lemma}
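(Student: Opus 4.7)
The plan is to mirror the proof of Lemma~\ref{lem:find_path} essentially verbatim, using the fact that Algorithm~\ref{alg:multi_path} simply runs the divide-and-conquer step of Algorithm~\ref{alg:path_find} in parallel across the $q$ queries, while sharing a single call to \textsc{FindAncestors}. First I would invoke Lemma~\ref{lem:depthandancestor} on the call at line~\ref{sta:alg_multipath_findancestor}. This immediately gives the two ``easy'' parts of the statement: the bound $r\le \lceil\log(\dep(\p)+1)\rceil$ and the correctness of $\dep_{\p}$. It also yields $g_i(v)=\p^{(2^i)}(v)$ for every $i\in\{0\}\cup[r]$ and $v\in V$, which is the only algebraic fact about $g_i$ used in the subsequent analysis.

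Next I would fix an index $j\in[q]$ and prove by induction on $i\in\{0\}\cup[r]$ the natural analogue of Claim~\ref{cla:property_path} for $S_j^{(i)}$: (1) some pair in $S_j^{(i)}$ starts with $u_j$; (2) some pair ends with $v_j$; (3) every $(x,y)\in S_j^{(i)}$ satisfies $\dep_{\p}(x)-\dep_{\p}(y)\le 2^{r-i}$; (4) if $(x,y)\in S_j^{(i)}$ and $y\ne v_j$, then some $(x',y')\in S_j^{(i)}$ has $x'=y$; (5) for every $(x,y)\in S_j^{(i)}$, $y$ is an ancestor of $x$, i.e.\ $y=\p^{(k)}(x)$ for some $k\ge 0$. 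The base case $i=0$ follows from line~\ref{sta:alg_multipath_init} together with the hypothesis that $v_j$ is an ancestor of $u_j$ in $\p$ (this is precisely where the lemma's assumption enters). The inductive step is copied from Claim~\ref{cla:property_path}: the branch at line~\ref{sta:alg_multipath_not_split} preserves all five invariants trivially, while the split at line~\ref{sta:alg_multipath_split} replaces $(x,y)$ by $(x,g_{r-i}(x))$ and $(g_{r-i}(x),y)$, whose endpoints are correct because $g_{r-i}(x)=\p^{(2^{r-i})}(x)$, and because invariant (5) for the old pair forces $g_{r-i}(x)$ to lie on the parent chain from $x$ to $y$ (since once inside one rooted tree the ancestors of any vertex form a linearly-depth-ordered chain).

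Finally, at $i=r$ invariant (3) forces every segment in $S_j^{(r)}$ to have length at most $1$, and combining this with (1), (2), (4), (5) shows that $S_j^{(r)}$ is exactly the list of consecutive parent edges
\begin{equation*}
\{(u_j,\p(u_j)),\,(\p(u_j),\p^{(2)}(u_j)),\,\ldots,\,(\p^{(k-1)}(u_j),v_j)\},\qquad k=\dep_{\p}(u_j)-\dep_{\p}(v_j).
\end{equation*}
Assembling $P_j$ from $u_j$ together with the second coordinate of each element of $S_j^{(r)}$ therefore yields exactly $\{\p^{(0)}(u_j),\p^{(1)}(u_j),\ldots,\p^{(k)}(u_j)\}$, which coincides with the set characterisation in the lemma.

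As for the main obstacle: there is essentially none of substance. The only point that is slightly different from Lemma~\ref{lem:find_path} is that here the ``root endpoint'' of the path is the arbitrary ancestor $v_j$ rather than the actual root $\p^{(\infty)}(q)$; but this difference shows up only in the base-case verification of invariants (2) and (5), where we use the hypothesis $v_j=\p^{(\text{something})}(u_j)$ directly instead of appealing to $g_r$. Because the rest of the argument is orthogonal across the $q$ queries, handling multiple queries simultaneously costs nothing in the analysis, and no new probabilistic or combinatorial ideas are needed.
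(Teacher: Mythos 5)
Your proposal is correct and follows essentially the same route as the paper's own proof: you invoke Lemma~\ref{lem:depthandancestor} for the $r$-bound, $\dep_{\p}$, and $g_i(v)=\p^{(2^i)}(v)$; you then state and prove by induction exactly the five-part segment invariant of Claim~\ref{cla:property_multi_path}; and you read off $S_j^{(r)}$ as the consecutive parent-edge list, exactly as the paper does. (As a minor aside, you state invariant (3) with the sign $\dep_{\p}(x)-\dep_{\p}(y)\le 2^{r-i}$, which is the correct orientation consistent with the algorithm's test at line~\ref{sta:alg_multipath_split} and with the paper's own inductive step, even though the paper's written statement of Claim~\ref{cla:property_multi_path} has the two variables swapped; your version is the right one.)
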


\begin{proof}
By Lemma~\ref{lem:depthandancestor}, since $(r,\dep_{\p},\{g_i\mid i\in\{0\}\cup[r]\})=\textsc{FindAncestors}(\p),$ we know $r$ should be at most $\lceil\log(\dep(\p)+1)\rceil,$ $\dep_{\p}:V\rightarrow \mathbb{Z}_{\geq 0}$ records the depth of every vertex in $V,$ and $\forall i\in \{0\}\cup[r],v\in V$ $g_i(v)=\p^{(2^i)}(v).$

For $j\in[q],$ let us prove that $P_j$ is the vertex set of all the vertices on the path from $u_j$ to its ancestor $v_j$. We use divide-and-conquer to get $P_j$. The following claim shows that $S_j^{(i)}$ is a set of segments which is a partition of the path from $u_j$ to $v_j$, and each segment has length at most $2^{r-i}.$
\begin{claim}\label{cla:property_multi_path}
$\forall i\in\{0\}\cup[r],j\in [q]$ $S^{(i)}_j$ satisfies the following properties:
\begin{enumerate}
\item $\exists (x,y)\in S^{(i)}_j$ such that $x=u_j.$\label{itm:multipath_pro1}
\item $\exists (x,y)\in S^{(i)}_j$ such that $y=v_j.$\label{itm:multipath_pro2}
\item $\forall (x,y)\in S^{(i)}_j,$ $\dep_{\p}(y)-\dep_{\p}(x)\leq 2^{r-i}.$\label{itm:multipath_pro3}
\item $\forall (x,y)\in S^{(i)}_j,$ if $y\not=v_j,$ then $\exists (x',y')\in S^{(i)}_j,x'=y.$\label{itm:multipath_pro4}
\item $\forall (x,y)\in S^{(i)}_j,$ $\exists k\in \mathbb{Z}_{\geq 0}, \p^{(k)}(x)=y.$\label{itm:multipath_pro5}
\end{enumerate}
\end{claim}
\begin{proof}
We fix a $j\in[q].$
Our proof is by induction. According to line~\ref{sta:alg_multipath_init}, all the properties hold when $i=0.$ Suppose all the properties hold for $i-1.$ For property~\ref{itm:multipath_pro1}, by induction we know there exists $(x,y)\in S^{(i-1)}_j$ such that $x=u_j.$ Then by line~\ref{sta:alg_multipath_split} and line~\ref{sta:alg_multipath_not_split}, there must be an $(x,y')$ in $S^{(i)}_j.$ For property~\ref{itm:multipath_pro2}, by induction we know there exists $(x,y)\in S^{(i-1)}_j$ such that $y=v_j.$ Thus, there must be an $(x',y)$ in $S^{(i)}_j.$ For property~\ref{itm:multipath_pro3}, if $(x,y)$ is added into $S^{(i)}_j$ by line~\ref{sta:alg_multipath_not_split}, then $\dep_{\p}(x)-\dep_{\p}(y)\leq 2^{r-i}.$ Otherwise, in line~\ref{sta:alg_multipath_split}, we have $\dep_{\p}(x)-\dep_{\p}(g_{r-i}(x))\leq 2^{r-i},\dep_{\p}(g_{r-i}(x))-\dep_{\p}(y)\leq 2^{r-i+1}-2^{r-i}=2^{r-i}.$
For property~\ref{itm:multipath_pro4}, if $(x,y)$ is added into $S^{(i)}_j$ by line~\ref{sta:alg_multipath_not_split}, then by induction there is $(y,y')\in S^{(i-1)}_j,$ and thus by line~\ref{sta:alg_multipath_not_split} and line~\ref{sta:alg_multipath_split}, there must be $(y,y'')\in S^{(i)}_j.$ Otherwise, in line~\ref{sta:alg_multipath_split} will generate two pairs $(x,g_{r-i}(x)),(g_{r-i}(x),y).$ For $(x,g_{r-i}(x)),$ the property holds. For $(g_{r-i}(x),y),$ there must be $(y,y')\in S_{i-1}$ and thus there should be $(y,y'')\in S^{(i)}.$ For property~\ref{itm:multipath_pro5}, since $g_{r-i}(x)=\p^{(r-i)}(x),$ for all pairs generated by line~\ref{sta:alg_multipath_split} and line~\ref{sta:alg_multipath_not_split}, the property holds.
\end{proof}
By Claim~\ref{cla:property_multi_path}, we know
\begin{align*}
S^{(r)}_j=\{ & ~\\
 & ~ (u_j,\p(u_j)),\\
 & ~ \left( \p(u_j),\p^{(2)}(u_j) \right), \\
 & ~ \left( \p^{(2)}(u_j),\p^{(3)}(u_j) \right), \\
 & ~ \cdots, \\
 & ~ \left( \p^{(\dep_{\p}(u_j)-\dep_{\p}(v_j)-1)}(u_j),\p^{(\dep_{\p}(u_j)-\dep_{\p}(v_j))}(u_j) \right) \\
 \} & ~ .
\end{align*}
Thus, $P_j$ is the set of all the vertices on the path from $u_j$ to an ancestor $v_j$.
\end{proof}

    \subsection{Depth-First-Search Sequence for a Tree}\label{sec:DFS_tree}
Since we can use our spanning tree algorithm to get a rooted tree, in this section, we only consider how to get a Depth-First-Search (DFS) sequence for a rooted tree. Before we go to the details, let us firstly give formal definitions of some useful concepts.

\begin{definition}[Children in the forest]\label{def:child}
Given a set of parent pointers (See Definition~\ref{def:parent_pointers}) $\p:V\rightarrow V$ on a vertex set $V$. $\forall u,v\in V,u\not=v$ if $\p(u)=v,$ then we say $u$ is a child of $v.$ $\forall v\in V,$ we can define $\child_{\p}(v)$ as the set of all children of $v,$ i.e. $\child_{\p}(v)=\{u\in V\mid u\not=v,\p(u)=v\}.$ Furthermore, if $u$ is the $k^{\text{th}}$ smallest vertex in the children set $\child_{\p}(v),$ then we say $\rank_{\p}(u)=k,$ or $u$ is the $k^{\text{th}}$ child of $v$. If $\p(v)=v,$ then $\rank_{\p}(v)=1.$ We use $\child_{\p}(v,k)$ to denote the $k^{\text{th}}$ child of $v$.
\end{definition}

For simplicity of the notation, if $\p:V\rightarrow V$ is clear in the context, we just use $\child(v),\rank(v)$ and $\child(v,k)$ to denote $\child_{\p}(v),\rank_{\p}(v)$ and $\child_{\p}(v,k)$ respectively.

\begin{definition}[Leaves in the forest]\label{def:leaf}
Given a set of parent pointers (See Definition~\ref{def:parent_pointers}) $\p:V\rightarrow V$ on a vertex set $V$. If $\child_{\p}(v)=\emptyset,$ then $v$ is called a leaf. The set of all the leaves of $\p$ is defined as $\leaves(\p)=\{v\mid \child_{\p}(v)=\emptyset\}.$
\end{definition}

\begin{definition}[Subtree]\label{def:subtree}
Let $\p:V\rightarrow V$ be a set of parent pointers (See Definition~\ref{def:parent_pointers}) on a vertex set $V.$ Let $v\in V,V'=\{u\in V\mid v\text{~is~an~ancestor~(Definition~\ref{def:ancestor})~of~}u\}.$ Let $\p':V'\rightarrow V'$ be a set of parent pointers on $V'$. If $\forall u\in V'\setminus\{v\},\p'(u)=\p(u),$ and $\p'(v)=v,$ then we say $\p'$ is the subtree of $v$ in $\p$. For $u\in V',$ we say $u$ is in the subtree of $v$.
\end{definition}

\begin{definition}[Depth-First-Search (DFS) sequence]\label{def:DFS_sequence}
Let $\p:V\rightarrow V$ be a set of parent pointers (See Definition~\ref{def:parent_pointers}) on a vertex set $V.$ Let $v$ be a vertex in $V$. If $v$ is a leaf (See Definition~\ref{def:leaf}) in $\p,$ then the DFS sequence of the subtree (See Definition~\ref{def:subtree}) of $v$ is $(v).$ Otherwise the DFS sequence of the subtree of $v$ in $\p$ is recursively defined as
\begin{align*}
(v,a_{1,1},a_{1,2},\cdots,a_{1,n_1},v,a_{2,1},a_{2,2},\cdots,a_{2,n_2},v,\cdots,a_{k,1},a_{k,2},\cdots,a_{k,n_k},v),
\end{align*}
where $k=|\child(v)|$ is the number of children (See Definition~\ref{def:child}) of $v,$ and $\forall i\in[k], (a_{i,1},\cdots,a_{i,n_i})$ is the DFS sequence of the subtree of $\child(v,i),$ i.e. the $i^{\text{th}}$ child of $v$.

If $\forall u\in V,\p^{(\infty)}(u)=v,$ then the subtree of $v$ is exactly $\p,$ and thus the DFS sequence of the subtree of $v$ is also called the DFS sequence of $\p.$
\end{definition}

Here are some useful facts of the above defined DFS sequence.
\begin{fact}\label{fac:dfs_sequence}
Let $\p:V\rightarrow V$ be a set of parent pointers (See Definition~\ref{def:parent_pointers}) on a vertex set $V$, and $\p$ has a unique root. Let $A=(a_1,a_2,\cdots,a_m)$ be the DFS sequence (See Definition~\ref{def:DFS_sequence}) of $\p.$ Then, $A$ satisfies the following properties:
\begin{enumerate}
\item $\forall v\in V,$ $v$ appears exactly $|\child(v)|+1$ times in $A$.\label{itm:dfs_fac1}
\item If $a_i$ is the $k^{\text{th}}$ time that $v$ appears, and $a_j$ is the $(k+1)^{\text{th}}$ time that $v$ appears. Then $(a_{i+1},a_{i+2},\cdots,a_{j-1})$ is the DFS sequence of the subtree of $\child(v,k)$ (See Definition~\ref{def:child}), the $k^{\text{th}}$ child of $v$. Furthermore, $a_{i+1}$ is the first time that $\child(v,k)$ appears, and $a_{j-1}$ is the last time of $\child(v,k)$ appears.\label{itm:dfs_fac2}
\item If $a_i$ is the first time that $v$ appears, and $a_j$ is the last time that $v$ appears. Then $(a_i,a_{i+1},\cdots,a_j)$ is the DFS sequence of the subtree of $v$.\label{itm:dfs_fac3}
\item $m=2|V|-1.$\label{itm:dfs_fac4}
\end{enumerate}
\end{fact}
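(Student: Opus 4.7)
The plan is to prove all four properties simultaneously by structural induction on the subtree being considered, using the recursive definition of the DFS sequence in Definition~\ref{def:DFS_sequence}. Concretely, for each $v \in V$, I will prove the four properties hold for the DFS sequence of the subtree rooted at $v$; the statement of the fact then follows by applying this to the unique root of $\p$. The induction is well-founded because the subtrees of the children $\child(v,i)$ are proper and vertex-disjoint from each other and from $\{v\}$ (here I would cite Definition~\ref{def:subtree}: the subtree of $v$ contains all descendants of $v$, so distinct children partition the descendants of $v$ and exclude $v$ itself).

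The base case is when $v$ is a leaf, so $|\child(v)|=0$ and the sequence is $(v)$. Then $v$ appears $1 = |\child(v)|+1$ times, properties~\ref{itm:dfs_fac2} and~\ref{itm:dfs_fac3} hold vacuously (there is only one occurrence of $v$, which is both first and last), and the length equals $1 = 2\cdot 1 - 1$. For the inductive step, let $k = |\child(v)| \geq 1$ and write the sequence as $(v, A_1, v, A_2, v, \ldots, v, A_k, v)$ where each $A_i$ is the DFS sequence of the subtree rooted at $\child(v,i)$. For property~\ref{itm:dfs_fac1}, $v$ itself appears exactly $k+1$ times in the displayed positions and cannot appear inside any $A_i$ because $v$ is not a descendant of $\child(v,i)$; for any other $u$, it lies in exactly one $A_j$ (the one whose child is $u$'s ancestor) and the induction hypothesis gives $|\child(u)|+1$ occurrences there.

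For property~\ref{itm:dfs_fac2}, the segment between the $k^{\text{th}}$ and $(k+1)^{\text{th}}$ occurrence of $v$ is literally $A_k$ by construction, which is the DFS sequence of the subtree of $\child(v,k)$; the inductive hypothesis applied to $\child(v,k)$ (property~\ref{itm:dfs_fac1} on that subtree together with the fact that the first and last tokens of its DFS sequence are $\child(v,k)$ itself by Definition~\ref{def:DFS_sequence}) yields the claim about first and last appearances of $\child(v,k)$. For property~\ref{itm:dfs_fac3}, the full displayed sequence starts and ends with $v$, so the first-to-last occurrence of $v$ covers the entire DFS sequence of $v$'s subtree. Finally, for property~\ref{itm:dfs_fac4}, let $V_i$ be the vertex set of the subtree of $\child(v,i)$; these sets are disjoint and their union with $\{v\}$ equals the vertex set of $v$'s subtree. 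The length of the full sequence is
\[
(k+1) + \sum_{i=1}^{k} |A_i| \;=\; (k+1) + \sum_{i=1}^{k}(2|V_i|-1) \;=\; 1 + 2\sum_{i=1}^{k}|V_i| \;=\; 2|V|-1,
\]
using the inductive hypothesis $|A_i| = 2|V_i|-1$.

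There is no substantive obstacle here: the only care needed is to verify that the first and last tokens of a DFS sequence of any subtree equal its root (which is immediate from Definition~\ref{def:DFS_sequence}, being true in the leaf base case and preserved by the recursion), since properties~\ref{itm:dfs_fac2} and~\ref{itm:dfs_fac3} speak about first and last appearances rather than just occurrence count. Once that small observation is in hand, the induction closes routinely.
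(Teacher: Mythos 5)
Your proof is correct, but it takes a genuinely different route for property~4 than the paper does. You establish all four properties by structural induction on the subtree, and in particular obtain the length $2|V|-1$ by summing $|A_i| = 2|V_i|-1$ over the disjoint child subtrees. The paper instead derives property~4 as a quick corollary of property~1 by a \emph{global counting argument}: since each $v$ appears $|\child(v)|+1$ times, the total length is $\sum_{v\in V}\bigl(|\child(v)|+1\bigr)$, and $\sum_{v\in V}|\child(v)|=|V|-1$ because every non-root vertex $u$ is counted exactly once, as a child of $\p(u)$. The paper's argument for property~4 is shorter and avoids the need to set up the inductive bookkeeping about disjointness and the vertex-count identity $|V|=1+\sum_i|V_i|$; your approach has the advantage of handling all four claims uniformly in a single well-founded induction, and of spelling out the "DFS sequence begins and ends with the root" observation, which the paper asserts is immediate from Definition~\ref{def:DFS_sequence} without elaboration. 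Both are sound, and the difference is essentially one of packaging.
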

\begin{proof}
The property~\ref{itm:dfs_fac1},~\ref{itm:dfs_fac2},~\ref{itm:dfs_fac3} directly follows by Definition~\ref{def:DFS_sequence}.

For property~\ref{itm:dfs_fac4}, notice that $\forall u\in V,\p(u)\not=u,$ $u$ can only be a child of $\p(u).$ Thus, $\sum_{v\in V}(|\child(v)|+1)=|V|-1+|V|=2|V|-1.$
\end{proof}

Due to the above fact, if $v$ is a leaf in $\p,$ then it will only once in the DFS sequence. Thus, we are able to determine the order of all the leaves.

\begin{definition}[The order of the leaves]\label{def:order_leaves}
Let $\p:V\rightarrow V$ be a set of parent pointers (See Definition~\ref{def:parent_pointers}) on a vertex set $V$, and $\p$ has a unique root. Let $A=(a_1,a_2,\cdots,a_{m})$ be the DFS sequence (See Definition~\ref{def:DFS_sequence}) of $\p.$ Let $u,v$ be two leaves (See definition~\ref{def:leaf}) of $\p.$ If $u$ appears before $v$ in $A$, then we say $u<_{\p} v.$
\end{definition}

\subsubsection{Leaf Sampling}
Given a set of rooted trees, our goal is to sample a set of leaves for each tree, and to give an order of those sampled leaves. The algorithm is shown in Algorithm~\ref{alg:sample_leaves}.

\begin{algorithm}[h]
\caption{Leaf Sampling}\label{alg:sample_leaves}
\begin{algorithmic}[1]
\Procedure{\textsc{LeafSampling}}{$\p:V\rightarrow V,m,\delta$} \Comment{Lemma~\ref{lem:properties_of_sampled_leaves}}
\State Output: $A=(a_1,a_2,\cdots,a_s).$
\State Let $t=\lceil m^{1/3}\rceil.$
\State Compute $L=\leaves(\p).$ \label{sta:alg_sample_leaf_compute_leaf}
\State Compute $\rank:V\rightarrow \mathbb{Z}_{\geq 0}$ such that $\forall v\in V,\rank(v)=\rank_{\p}(v).$\label{sta:alg_sample_leaf_compute_rank} \Comment{Definition~\ref{def:child}}
\State If $|V|\leq m,$ let $\{a_{1},a_{2},\cdots,a_{s}\}=L,$ and return $A=(a_1,a_2,\cdots,a_s)$ which satisfies $a_{1}<_{\p}a_{2}<_{\p}\cdots<_{\p}a_{s}.$\label{sta:when_V_is_small}\Comment{$<_{\p}$ follows Definition~\ref{def:order_leaves}}
\State If $|L|\leq 8t,$ let $S=L.$\label{sta:when_L_is_small}
\State Let $p=\min(1,640(1+\log(m)/\delta)t/|L|).$
\State If $|L|>t,$ sample each $v\in L$ with probability $p$ independently. let $S$ be the set of samples.\label{sta:when_L_needs_sample}
\State Compute $\p':V\rightarrow V$ such that $\forall v\in V,$ if $\child_{\p}(v)\not=\emptyset,$ then $\p'(v)=\child_{\p}(v,1);$ Otherwise let $\p'(v)=v.$\label{sta:point_to_first_leaf}\Comment{ $\p'(v)$ points to $v$'s first child in $\p$.}
\State $(r',\dep_{\p'}:V\rightarrow\mathbb{Z}_{\geq 0},\{g'_i:V\rightarrow V\mid i\in\{0\}\cup[r']\}) = \textsc{FindAncestors}(\p').$\label{sta:mid_find_the_first_leaf}
\State Find $w\in V$ with $\p(w)=w.$ \Comment{Find the root.}
\State Let $a_1=g'_{r'}(w),S\leftarrow S\cup \{a_1\}.$\label{sta:find_the_first_leaf} \Comment{Find the first leaf.}
\State Let $Q=\{(u,v)\mid (u,v)\in S\times S,u\not= v\}.$
\State Let $\lca= \textsc{LCA}(\p,Q).$\label{sta:alg_sample_leaf_lca} \Comment{Algorithm~\ref{alg:lca}}
\State Let $s=|S|.$
\State $(r,\dep_{\p}:V\rightarrow\mathbb{Z}_{\geq 0},\{g_i:V\rightarrow V\mid i\in\{0\}\cup[r]\}) = \textsc{FindAncestors}(\p).$\label{sta:alg_smaple_leaf_find_ancestor}
\For{$i=2\rightarrow s$}\label{sta:alg_sample_leaf_sequential_st} \Comment{Determine the order of sampled leaves.}
\State For all $x,y\in S\setminus\{a_1,a_2,\cdots,a_{i-1}\},$ let $(p_{x,y},p_{xy,x},p_{xy,y})=\lca(x,y).$\label{sta:using_lca}
\State Find $x^*\in S\setminus\{a_1,a_2,\cdots,a_{i-1}\}$ s.t. $\forall y\in S\setminus\{a_1,a_2,\cdots,a_{i-1},x^*\},$
$\rank(p_{x^*y,x^*})<\rank(p_{x^*y,y}).$
\State Let $a_i=x^*.$
\EndFor\label{sta:alg_sample_leaf_sequential_ed}
\State \Return $A=(a_1,a_2,\cdots,a_s).$
\EndProcedure
\end{algorithmic}
\end{algorithm}

\begin{lemma}\label{lem:properties_of_sampled_leaves}
Let $\p:V\rightarrow V$ be a set of parent pointers (See Definition~\ref{def:parent_pointers}) on a vertex set $V$, and $\p$ has a unique root. Let $m>0,\delta\in(0,1)$ be parameters, and let $|V|\leq m^{1/\delta}.$ Let $(a_1,a_2,\cdots,a_s) = \textsc{LeafSampling}(\p,m,\delta)$ (Algorithm~\ref{alg:sample_leaves}). Then it has following properties:
 \begin{enumerate}
 \item $a_1<_{\p}a_2<_{\p}\cdots<_{\p} a_s.$ \label{itm:sampled_leaves_pro1}
 \item If $|V|\leq m$ or $|\leaves(\p)|\leq 8\lceil m^{1/3}\rceil,$ then $\{a_1,a_2,\cdots,a_s\}=\leaves(\p).$ Otherwise, with probability at least $1-1/(100m^{5/\delta}),$ $\forall v\in \leaves(\p)\setminus\{a_1\},$ there is a vertex $w\in\{a_1,a_2,\cdots,a_s\}$ such that $w<_{\p} v$ and the number of leaves between $w$ and $v$ is at most $|\leaves(\p)|/\lceil m^{1/3}\rceil$, i.e. $|\{u\in\leaves(\p)\mid w<_{\p}u<_{\p}v\}|\leq |\leaves(\p)|/\lceil m^{1/3}\rceil.$ \label{itm:sampled_leaves_pro2}
 \item If $|V|> m$ and $|\leaves(\p)|> 8\lceil m^{1/3}\rceil,$ then with probability at least $1-1/(100m^{5/\delta}),$ $s=|S|=|\{a_1,a_2,\cdots,a_s\}|\leq 960\lceil m^{1/3}\rceil(1+\log(m)/\delta).$\label{itm:sampled_leaves_pro3}
 \end{enumerate}
\end{lemma}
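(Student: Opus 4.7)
The plan is to verify the three properties in turn. For property~\ref{itm:sampled_leaves_pro1}, I would first prove a structural characterization of the DFS order on leaves: for two distinct leaves $x,y$ in the same tree of $\p$, neither is an ancestor of the other, so by Lemma~\ref{lem:lca} (item~\ref{itm:lca_pro3}) the LCA $p_{xy}$ is a strict ancestor of both, and the accompanying outputs $p_{xy,x},p_{xy,y}$ are distinct children of $p_{xy}$ whose subtrees contain $x$ and $y$ respectively. Unrolling Definition~\ref{def:DFS_sequence} at $p_{xy}$ then gives $x<_{\p} y$ if and only if $\rank(p_{xy,x})<\rank(p_{xy,y})$. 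With this characterization in hand, the iterative selection rule in lines~\ref{sta:alg_sample_leaf_sequential_st}--\ref{sta:alg_sample_leaf_sequential_ed} is easily shown by induction on $i$ to pick the unique $<_{\p}$-minimum remaining sample as $a_i$. The base element $a_1$ is the leftmost leaf of the tree, obtained in lines~\ref{sta:point_to_first_leaf}--\ref{sta:find_the_first_leaf} by following first-child pointers from the root and invoking Lemma~\ref{lem:depthandancestor} on the auxiliary parent function $\p'$.

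For property~\ref{itm:sampled_leaves_pro2}, the two deterministic cases in lines~\ref{sta:when_V_is_small} and~\ref{sta:when_L_is_small} return every leaf of $\p$, so the gap condition is vacuous. In the genuinely random case ($|V|>m$ and $|L|>8t$ with $t=\lceil m^{1/3}\rceil$), I would enumerate the leaves in DFS order as $\ell_1=a_1,\ell_2,\ldots,\ell_{|L|}$ and set $k=\lfloor |L|/t\rfloor\geq 8$. The required gap statement for a given $v=\ell_{j'}\neq a_1$ reduces to producing some sample among the $k+1$ preceding leaves of $v$ in DFS order: for $j'\leq k+2$ this is covered by $a_1\in S$, and for $j'\geq k+3$ by the window $\{\ell_{j'-k-1},\ldots,\ell_{j'-1}\}$. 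A single window misses $S$ with probability at most $(1-p)^{k+1}\leq e^{-p(k+1)}$; since $p\geq 640(1+\log m/\delta)t/|L|$ and $k\geq |L|/(2t)$, the product $p(k+1)$ is a large constant multiple of $1+\log m/\delta$. A union bound over the at most $|L|\leq m^{1/\delta}$ windows then comfortably yields failure probability at most $1/(100m^{5/\delta})$.

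Property~\ref{itm:sampled_leaves_pro3} follows from a direct multiplicative Chernoff bound: $|S|$ equals $1$ (for $a_1$) plus a sum of independent Bernoulli$(p)$ random variables whose expectation is at most $p|L|\leq 640(1+\log m/\delta)t$, and a deviation of factor $3/2$ above this expectation satisfies
\[
\Pr\!\left[|S|>960(1+\log m/\delta)t\right]\leq \exp\!\left(-\Omega\!\left((1+\log m/\delta)t\right)\right)\leq \frac{1}{100\, m^{5/\delta}},
\]
using $t\geq 1$ and $|V|\leq m^{1/\delta}$.

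The main obstacle is the structural characterization underlying property~\ref{itm:sampled_leaves_pro1}: one must carefully unpack Lemma~\ref{lem:lca} to confirm that $p_{xy,x}$ and $p_{xy,y}$ are exactly the children of the LCA that flag the branching direction, and then reconcile $\rank$ with DFS order through the recursive DFS definition. Once this is in place, properties~\ref{itm:sampled_leaves_pro2} and~\ref{itm:sampled_leaves_pro3} reduce to standard window union-bounds and Chernoff arguments, with the constant $640$ in the sampling probability explicitly chosen so that the exponential tail absorbs the $m^{1/\delta}$ factor from the union bound.
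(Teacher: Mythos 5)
Your proof is correct and, for properties~\ref{itm:sampled_leaves_pro1} and~\ref{itm:sampled_leaves_pro3}, it matches the paper's argument essentially step for step: the DFS-order characterization via Lemma~\ref{lem:lca} and $\rank$ comparison, the identification of $a_1$ as the leftmost leaf via $\p'$ and Lemma~\ref{lem:depthandancestor}, and the multiplicative Chernoff bound on $|S|$ with the constants $640$ and $960$.

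For property~\ref{itm:sampled_leaves_pro2} your concentration argument takes a slightly different form from the paper's: you union-bound over the roughly $|L|$ sliding windows $\{\ell_{j'-k-1},\ldots,\ell_{j'-1}\}$ of width $k+1=\lfloor |L|/t\rfloor+1$, one per leaf, showing each window misses $S$ with probability $(1-p)^{k+1}$, whereas the paper partitions $\leaves(\p)$ into $4t$ consecutive blocks of size between $q/(8t)$ and $q/(2t)$, shows each block contains a sample via Chernoff, and union-bounds over the $4t$ blocks. Both routes exploit the same slack in the sampling probability $p$: you pay with a larger union bound ($m^{1/\delta}$ windows instead of $4t$ blocks) but gain a cleaner, per-leaf argument that directly produces the required predecessor without having to reason across a block boundary (the paper's block version needs the fallback to the previous block, and implicitly to $a_1\in S$ when $v$ is in the first block). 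The numerics still close comfortably in both cases because the $640(1+\log m/\delta)$ factor in $p$ produces an exponential tail of order $m^{-\Omega(1/\delta)}$, which absorbs either union bound. This is a legitimate, marginally more streamlined variant rather than a substantively new approach.
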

\begin{proof}
Firstly, let us focus on property~\ref{itm:sampled_leaves_pro1}.
According to line~\ref{sta:mid_find_the_first_leaf} to line~\ref{sta:find_the_first_leaf} and Lemma~\ref{lem:depthandancestor}, we know $\forall k\in\mathbb{Z}_{\geq 0}\rank_{\p}(\p^{(k)}(a_1))=1,$ and $\p'(a_1)=a_1$ which implies that $a_1$ is a leaf.
Due to the definition of~\ref{def:DFS_sequence}, we know that $a_1$ must be the first leaf appeared in the DFS sequence of $\p.$
We can prove the property by induction. Suppose we already have $a_1<_{\p}a_2<_{\p}\cdots<_{\p}a_{i-1}.$
According to line~\ref{sta:using_lca} and Lemma~\ref{lem:lca},
$p_{a_{i-1},a_{i}}$ is the LCA of $(a_{i-1},a_i).$
$p_{a_{i-1}a_i,a_{i-1}}$ is a child of $p_{a_{i-1},a_{i}}$ and is an ancestor of $a_{i-1}.$
$p_{a_{i-1}a_i,a_i}$ is a child of $p_{a_{i-1},a_{i}}$ and is an ancestor of $a_{i}.$
By Fact~\ref{fac:dfs_sequence}, since $\rank(p_{a_{i-1}a_i,a_{i-1}})<\rank(p_{a_{i-1}a_i,a_{i}}),$ we have $a_{i-1}<_{\p} a_i.$
To conclude, we have $a_1<_{\p}a_2<_{\p}\cdots<_{\p} a_s.$

For property~\ref{itm:sampled_leaves_pro2}, if $|V|\leq m$ or $|\leaves(\p)|\leq 8\lceil m^{1/3}\rceil,$ then by line~\ref{sta:when_V_is_small} and line~\ref{sta:when_L_is_small}, we know $\{a_1,a_2,\cdots,a_s\}=\leaves(\p).$

Now consider the case when $|V|>m$ and $|\leaves(\p)|>8\lceil m^{1/3}\rceil.$
Let $t=\lceil m^{1/3}\rceil.$
Let $\leaves(\p)=\{u_1,u_2,\cdots,u_q\},$ and let $u_1<_{\p}u_2<_{\p}\cdots<_{\p} u_q.$
Let us partition $u_1,\cdots,u_q$ into $4\cdot t$ groups $G_1=\{u_1,u_2,\cdots,u_{\lfloor q/(4t)\rfloor}\},G_2=\{u_{\lfloor q/(4t)\rfloor+1},u_{\lfloor q/(4t)\rfloor+2},\cdots,u_{2\cdot \lfloor q/(4t)\rfloor}\},\cdots,G_{4t}=\{u_{(4t-1)\lfloor q/(4t)\rfloor+1},u_{(4t-1)\lfloor q/(4t)\rfloor+2},\cdots,u_q\}.$
Then each group has size at least $q/(8t)$ and at most $q/(2t).$
For a certain $G_i,$ by Chernoff bound, we have
\begin{align*}
&\Pr\left(|G_i\cap S|\leq \frac{1}{2}\cdot \frac{q}{8t}\cdot p\right)\\
\leq~&\exp\left(-\frac{1}{8}\cdot \frac{q}{8t}\cdot p\right) \\
\leq~&1/(100m^{10/\delta})
\end{align*}
where the last inequality follows by $p=\min(1,(10+10\log(m)/\delta)\cdot64t/q).$
Notice that $q\leq |V|\leq m^{1/\delta}.$
We can take union bound over all $G_i$.
Then with probability at least $1-1/(100m^{5/\delta}),$ $\forall i\in[4t],G_i\cap S\not=\emptyset.$
Thus, $\forall v\in \leaves(\p),$ there is a vertex $w\in\{a_1,a_2,\cdots,a_s\}$ such that $w<_{\p} v$ and the number of leaves between $w$ and $v$ is at most $|\leaves(\p)|/\lceil m^{1/3}\rceil$, i.e. $|\{u\in\leaves(\p)\mid w<_{\p}u<_{\p}v\}|\leq |\leaves(\p)|/\lceil m^{1/3}\rceil.$

For property~\ref{itm:sampled_leaves_pro3}, by applying Chernoff bound, we have
\begin{align*}
&\Pr\left(|S\cap \leaves(\p)|\geq \frac{3}{2} |\leaves(\p)|\cdot p\right)\\
\leq~&\exp\left(-\frac{1}{12}\cdot |\leaves(\p)|\cdot p\right)\\
\leq~&1/(100m^{10/\delta})
\end{align*}
where the last inequality follows by $p=\min(1,(10+10\log(m)/\delta)\cdot64t/|\leaves(\p)|).$

Since $\frac{3}{2} |\leaves(\p)|\cdot p\leq 960\lceil m^{1/3}\rceil(1+\log(m)/\delta),$ we complete the proof.
\end{proof}

\subsubsection{DFS Subsequence}
Let $\p:V\rightarrow V$ be a set of parent pointers on a vertex set $V$, and $\p$ has a unique root $v$. Let $\{u_1,u_2,\cdots,u_q\}=\leaves(\p),$ and $u_1<_{\p}u_2<_{\p}\cdots<_{\p}u_q.$ One observation is that the DFS sequence of $\p$ can be generated in the following way:
\begin{enumerate}
\item The first part of the DFS sequence is the path from $v$ to $u_1.$
\item Then it follows by the path from $\p(u_1)$ to the LCA of $(u_1,u_2),$ the path from one of the child of the LCA of $(u_1,u_2)$ to $u_2,$ the path from $\p(u_2)$ to the LCA of $(u_2,u_3),$ the path from one of the child of the LCA of $(u_2,u_3)$ to $u_3,\cdots,$ the path from one of the child of the LCA of $(u_{q-1},u_q)$ to $u_q$.
\item The last part of the DFS sequence is a path from $\p(u_q)$ to $v$.
\end{enumerate}

\begin{algorithm}[h!]
\caption{DFS subsequence}\label{alg:subdfs}
\begin{algorithmic}[1]
\Procedure{SubDFS}{$\p:V\rightarrow V,m,\delta$}\Comment{Lemma~\ref{lem:remove_trees}, Lemma~\ref{lem:A_is_subsequence}}
\State Output: $V'\subseteq V,A=(a_1,a_2,\cdots,a_s).$
\State If $V=\{v\},$ return $V'=V,A=(v).$
\State Let $v$ be the root in $\p,$ i.e. $\p(v)=v.$
\State $L=(l_1,l_2,\cdots,l_t) = \textsc{LeafSampling}(\p,m,\delta).$\label{sta:alg_subsequence_sample_leaf}\Comment{Algorithm~\ref{alg:sample_leaves}}
\State $Q=\{(l_i,l_{i+1})\mid i\in[t-1]\}.$
\State $\lca = \textsc{LCA}(\p,Q).$\label{sta:alg_subsequence_lca}\Comment{Algorithm~\ref{alg:lca}}
\State $\forall i\in[t-1],(p_{l_i,l_{i+1}},p_{i,l_i},p_{i,l_{i+1}})=\lca(l_i,l_{i+1}).$
\State $Q'=\{(l_1,v),(\p(l_1),p_{l_1,l_2}),(l_2,p_{1,l_2}),(\p(l_2),p_{l_2,l_3}),(l_3,p_{2,l_3}),\cdots,(l_t,p_{t-1,l_t}),(\p(l_t),v)\}.$\label{sta:alg_subsequence_Qprime}
\State $(\dep_{\p},\{P_i\mid i\in[2t]\}) = \textsc{MultiPath}(\p,Q').$\label{sta:alg_subsequence_multi_path}\Comment{Algorithm~\ref{alg:multi_path}}
\State $V'=\bigcup_{i=1}^{2t} P_i.$
\State Let $\p':V'\rightarrow V'$ satisfy $\forall v\in V',\p'(v)=\p(v).$
\For{$i\in\{1,3,5,\cdots,2t-1\}$}\label{sta:alg_subsequence_loop1}
\State Compute $A'_i=(u_1,u_2,\cdots,u_{|P_i|})$ such that $\{u_1,u_2,\cdots,u_{|P_i|}\}=P_i$ and $\dep_{\p}(u_1)<\dep_{\p}(u_2)<\cdots<\dep_{\p}(u_{|P_i|})$
\EndFor
\For{$i\in\{2,4,6,\cdots,2t\}$}\label{sta:alg_subsequence_loop2}
\State Compute $A'_i=(u_1,u_2,\cdots,u_{|P_i|})$ such that $\{u_1,u_2,\cdots,u_{|P_i|}\}=P_i$ and $\dep_{\p}(u_1)>\dep_{\p}(u_2)>\cdots>\dep_{\p}(u_{|P_i|})$
\EndFor
\State Let $A'=A'_1A'_2\cdots A'_{2t}.$\label{sta:concatentation_get_Aprime} \Comment{$A'$ is the concatenation of $A'_1,A'_2,\cdots,A'_{2t}.$}
\State $\forall u\in V',$ compute $\rank_{\p}(u)$ and $\rank_{\p'}(u).$\label{sta:alg_subsquence_compute_rank}
\State $\forall u\in V',i\in[|\child_{\p'}|+1]$ compute $\pos(u,i)=j$ such that the $j^{\text{th}}$ element in $A'$ is the $i^{\text{th}}$ time that $u$ appears. \label{sta:alg_subsequence_pos}
\State Let $b$ be the length of $A'.$
\State Initialize $c:[b]\rightarrow \mathbb{Z}_{\geq 0}.$\Comment{$c$ determine the number of copies needed for each element in $A'$}
\For{$u\in V'\setminus\{v\}$}\label{sta:alg_subsequence_final_loop}
\State If $u\in\leaves(\p'),$ let $c(\pos(u,1))=1.$\label{sta:just_copy_once}\Comment{A leaf should only have one copy.}
\State If $\rank_{\p'}(u)=1,$ let $c(\pos(\p'(u),1))=\rank_{\p}(u).$\label{sta:the_first_duplicate}
\State If $\rank_{\p'}(u)=|\child_{\p'}(\p'(u))|,$ let $c(\pos(\p'(u),\rank_{\p'}(u)+1))=|\child_{\p}(\p(u))|+1-\rank_{\p}(u).$\label{sta:final_duplicate}
\State If $1\leq \rank_{\p'}(u)<|\child_{\p'}(\p'(u))|,$ let $c(\pos(\p'(u),\rank_{\p'}(u)+1))=\rank_{\p}(\child_{\p'}(\p'(u),\rank_{\p'}(u)+1))-\rank_{\p}(u).$\label{sta:mid_duplicate}
\EndFor
\State For each $j\in[b],$ duplicate the $j^{\text{th}}$ element of $A'$ $c(j)$ times. Let $A$ be the obtained sequence.\label{sta:duplicate_elements}
\State \Return $V',A$.
\EndProcedure
\end{algorithmic}
\end{algorithm}

\begin{fact}\label{fac:dfs_generate_leaf_path}
Let $\p:V\rightarrow V$ be a set of parent pointers (See Definition~\ref{def:parent_pointers}) on a vertex set $V$, and $\p$ has a unique root $v$. Let $\{u_1,u_2,\cdots,u_q\}=\leaves(\p)$ (See Definition~\ref{def:leaf}), and $u_1<_{\p}u_2<_{\p}\cdots<_{\p}u_q.$ Let $A=(a_1,a_2,\cdots,a_m)$ be the DFS sequence (See Definition~\ref{def:DFS_sequence}) of $\p$. Then,
\begin{enumerate}
\item If $u_1$ appears at $a_i,$ then $(a_1,a_2,\cdots,a_i)$ is the path from $v$ to $u_1$.\label{itm:dfs_path_pro1}
\item $\forall i\in[q-1],$ if $u_i$ appears at $a_j,$ and $u_{i+1}$ appears at $a_k,$ then $\exists j<t<k$ such that $a_t$ is the LCA of $(u_i,u_{i+1}).$ In addition, $(a_j,a_{j+1},\cdots,a_t)$ is the path from $a_j$ to $a_t,$ and $(a_t,a_{t+1},\cdots,a_k)$ is the path from $a_t$ to $a_k.$\label{itm:dfs_path_pro2}
\item If $u_q$ appears at $a_i,$ then $(a_i,a_{i+1},\cdots,a_m)$ is the path from $u_q$ to $v$.\label{itm:dfs_path_pro3}
\end{enumerate}
\end{fact}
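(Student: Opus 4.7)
The plan is to prove all three properties simultaneously by induction on $|V|$, using the recursive definition of the DFS sequence (Definition~\ref{def:DFS_sequence}) together with the structural facts already established in Fact~\ref{fac:dfs_sequence}. The base case $|V|=1$ is immediate: $v=u_1=u_q$, $A=(v)$, and there is nothing between consecutive leaves, so properties~\ref{itm:dfs_path_pro1}--\ref{itm:dfs_path_pro3} hold vacuously.

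For the inductive step, write $\child(v)=(c_1,c_2,\ldots,c_r)$ in the order fixed by $\rank_\p$, and let $A^{(j)}$ denote the DFS sequence of the subtree of $c_j$. By Definition~\ref{def:DFS_sequence}, $A=(v,A^{(1)},v,A^{(2)},v,\ldots,v,A^{(r)},v)$. The first leaf $u_1$ lies in the subtree of $c_1$, and in fact $u_1$ is the first leaf of that subtree in the order $<_\p$ restricted to it; applying the inductive hypothesis (property~\ref{itm:dfs_path_pro1}) to $A^{(1)}$ yields that its prefix up to $u_1$ is the path from $c_1$ to $u_1$. Prepending $v$ gives the path from $v$ to $u_1$, establishing property~\ref{itm:dfs_path_pro1}. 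Property~\ref{itm:dfs_path_pro3} is symmetric: $u_q$ is the last leaf, hence lies in the subtree of $c_r$, and the suffix $(\ldots,u_q,\ldots,v)$ is the path from $u_q$ to $c_r$ (by inductive property~\ref{itm:dfs_path_pro3} on $A^{(r)}$) followed by $v$.

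Property~\ref{itm:dfs_path_pro2} is where the main work lies. Fix consecutive leaves $u_i,u_{i+1}$ and let $w$ be their LCA. There are two cases. If $w$ is not the root of the current tree, then $w$ lies in the subtree of some $c_j$; both $u_i,u_{i+1}$ lie in that subtree and are still consecutive leaves of it, so the inductive hypothesis applied to $A^{(j)}$ finishes the case. Otherwise $w=v$, and $u_i$ lies in the subtree of some $c_k$ while $u_{i+1}$ lies in the subtree of some $c_{k'}$ with $k<k'$. The key observation is $k'=k+1$: every nonempty subtree contains at least one leaf, so if $k'>k+1$ then the subtree of $c_{k+1}$ would contribute a leaf strictly between $u_i$ and $u_{i+1}$ in the order $<_\p$, contradicting consecutiveness. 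Hence the portion of $A$ strictly between the appearance of $u_i$ in $A^{(k)}$ and of $u_{i+1}$ in $A^{(k+1)}$ is: the suffix of $A^{(k)}$ starting at $u_i$, followed by a single occurrence of $v$, followed by the prefix of $A^{(k+1)}$ ending at $u_{i+1}$. By the inductive property~\ref{itm:dfs_path_pro3} applied to $A^{(k)}$ (where $u_i$ is the last leaf of that subtree), the first piece is the path from $u_i$ to $c_k$; by inductive property~\ref{itm:dfs_path_pro1} applied to $A^{(k+1)}$ (where $u_{i+1}$ is the first leaf of that subtree), the third piece is the path from $c_{k+1}$ to $u_{i+1}$. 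Since $\p(c_k)=\p(c_{k+1})=v=w$, concatenation yields exactly the path from $u_i$ to $w$ followed by the path from $w$ to $u_{i+1}$, with $w$ serving as the claimed $a_t$.

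The only subtle point is verifying that $u_i$ is really the last leaf of the subtree of $c_k$ and $u_{i+1}$ is the first leaf of the subtree of $c_{k+1}$, which follows because any leaf in the subtree of $c_k$ appears before any leaf of $c_{k+1}$ in $<_\p$ (subtree DFS sequences are contiguous blocks by Fact~\ref{fac:dfs_sequence}, property~\ref{itm:dfs_fac3}), so having $u_i,u_{i+1}$ consecutive forces them to be the last and first leaves of their respective sibling subtrees. I expect this bookkeeping---correctly invoking the inductive hypothesis on the right subtree and identifying $u_i,u_{i+1}$ as extremal leaves of sibling subtrees---to be the main obstacle, but it is routine once the $k'=k+1$ observation is in hand.
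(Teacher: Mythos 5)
Your proof is correct, but it takes a genuinely different route from the paper's. You argue by structural induction on the tree, decomposing $A$ into $(v,A^{(1)},v,\ldots,v,A^{(r)},v)$ and casing on whether the LCA of $u_i,u_{i+1}$ is the current root or sits inside some child subtree; in the root case you prove $k'=k+1$ and splice the suffix of $A^{(k)}$, a single $v$, and the prefix of $A^{(k+1)}$. The paper instead proves property~\ref{itm:dfs_path_pro2} by a direct, non-recursive walk argument: it first observes that any two consecutive entries of a DFS sequence are related by $\p$, so $(a_j,\ldots,a_k)$ is a walk from $u_i$ to $u_{i+1}$ that necessarily contains the tree path between them as a subsequence; it then shows the walk has no off-path vertices by contradiction---if $a_{l+1}$ is an off-path child of $a_l$ its subtree would contribute a leaf strictly between $u_i$ and $u_{i+1}$, and if $a_{l+1}$ is an off-path parent of $a_l$ then $u_i,u_{i+1}$ would both lie in the subtree of $a_l$ whose DFS block cannot contain $a_{l+1}$. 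Your induction is longer and more explicit bookkeeping but mirrors the recursive definition of the DFS sequence very directly, whereas the paper's argument is shorter but leans on the unstated (though obvious) observation that the walk from $u_i$ to $u_{i+1}$ in $A$ must cover the unique tree path. Both are sound; your $k'=k+1$ observation and the identification of $u_i,u_{i+1}$ as the extremal leaves of adjacent sibling subtrees (justified via Fact~\ref{fac:dfs_sequence} property~\ref{itm:dfs_fac3}) are the right ingredients and fill the gaps correctly.
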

\begin{proof}
Property~\ref{itm:dfs_path_pro1},~\ref{itm:dfs_path_pro3} follows by the definition of DFS sequence (See Definition~\ref{def:DFS_sequence}) and a simple induction.

Now consider property~\ref{itm:dfs_path_pro2}. Since $A$ is a DFS sequence, $\forall l\in\{j,j+1,\cdots,k-1\},$ either $\p(a_l)=a_{l+1}$ or $\p(a_{l+1})=a_l.$ Thus, the path between $u_{i}$ and $u_{i+1}$ is a subsequence of $(a_j,a_{j+1},\cdots,a_k).$ If $\p(a_{l+1})=a_l$ but $a_{l+1}$ is not on the path between $u_i$ and $u_{i+1},$ then there must be a leaf $x$ in the subtree of $a_{l+1}$ which implies $u_i<_{\p}x<_{\p}u_{i+1},$ and thus leads to a contradiction. If $\p(a_l)=a_{l+1}$ but $a_{l+1}$ is not on the path between $u_i$ and $u_{i+1},$ then both $u_i$ and $u_{i+1}$ should be in the subtree of $a_l,$ and both of $u_i$ and $u_{i+1}$ should be in the DFS sequence of the subtree of $a_l.$ But we know $a_{l+1}$ cannot be in the DFS sequence of the subtree of $a_l.$ This leads to a contradiction.
\end{proof}

\begin{lemma}\label{lem:remove_a_tree_subdfs}
Let $\p:V\rightarrow V$ be a set of parent pointers (See Definition~\ref{def:parent_pointers}) on a vertex set $V$, with a unique root. Let $v\in V$. Let $V'=V\setminus\{u\in V\mid v\text{~is~an~ancestor~(See~Definition~\ref{def:ancestor})~of~}u\}.$ Let $\p':V'\rightarrow V'$ satisfy $\forall u\in V',\p'(u)=\p(u).$ Then the DFS sequence (See Definition~\ref{def:DFS_sequence}) of $\p'$ is a subsequence of the DFS sequence of $\p.$
\end{lemma}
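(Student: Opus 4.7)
My plan is to pinpoint the subtree of $v$ as a contiguous block inside the DFS sequence of $\p$ and show that deleting that block together with exactly one adjacent copy of $w := \p(v)$ yields precisely the DFS sequence of $\p'$. Since deletion of a contiguous block always produces a subsequence, this will suffice.

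First I will dispose of the degenerate case in which $v$ is the unique root of $\p$: here $V' = \emptyset$, the DFS sequence of $\p'$ is empty by convention, and the claim is immediate. So I may assume $w := \p(v) \neq v$. As a preliminary check, $\p'$ is a well-defined set of parent pointers on $V'$, because if $u \in V'$ then $\p(u) \in V'$ (if $v$ were an ancestor of $\p(u)$, then $v$ would also be an ancestor of $u$, contradicting $u \in V'$), and the root of $\p$ remains the unique root of $\p'$. Furthermore, since the rank of a child among its siblings is determined purely by vertex identity (which is unchanged), the children of $w$ in $\p'$ are exactly $c_1,\ldots,c_{i-1},c_{i+1},\ldots,c_k$ in that relative order, where $c_1,\ldots,c_k$ are the children of $w$ in $\p$ in rank order and $v = c_i$.

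The heart of the argument is a layer-by-layer comparison of the two DFS sequences via Definition~\ref{def:DFS_sequence}. Within the subtree of $w$, $\p$'s DFS expands as $w,\mathrm{DFS}(c_1),w,\mathrm{DFS}(c_2),w,\ldots,w,\mathrm{DFS}(c_k),w$, while $\p'$'s DFS expands as the same list with $\mathrm{DFS}(c_i)$ and one bracketing $w$ removed. Crucially, for every $j \neq i$ the subtree rooted at $c_j$ is literally identical in $\p$ and $\p'$ (it is disjoint from the subtree of $v$), so the recursively generated $\mathrm{DFS}(c_j)$ blocks match exactly between the two. Combining this with Fact~\ref{fac:dfs_sequence} property~\ref{itm:dfs_fac3} to embed $w$'s subtree DFS as a contiguous block of $\p$'s global DFS, and then inducting up the ancestor chain from $w$ to the root (at each level the modification stays confined to $w$'s subtree block while all sibling subtrees remain unchanged), I obtain the full DFS of $\p'$ from that of $\p$ by deleting a single contiguous block, as required.

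The main obstacle will be purely notational: cleanly handling the three sub-cases for whether $v = c_i$ is the first ($i=1$), a middle ($1 < i < k$), or the last ($i=k$) child of $w$, which determines whether the removed occurrence of $w$ sits immediately before or immediately after $\mathrm{DFS}(c_i)$. All three yield the same resulting subsequence, and apart from this bookkeeping plus the up-the-ancestor-chain induction, the remainder is a direct unwinding of the recursive definition of DFS combined with the structural properties already recorded in Fact~\ref{fac:dfs_sequence}.
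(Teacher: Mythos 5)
Your proposal is correct and takes essentially the same approach as the paper: the paper's proof is the one-liner that the claim follows directly from property~\ref{itm:dfs_fac3} of Fact~\ref{fac:dfs_sequence} (the subtree of any vertex occupies a contiguous block of the DFS sequence), and your argument is a careful unpacking of exactly that idea, identifying the contiguous block to delete (the subtree block of $v$ together with one bracketing occurrence of $\p(v)$) and checking that what remains is the DFS sequence of $\p'$.
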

\begin{proof}
The proof follows by the property~\ref{itm:dfs_fac3} of Fact~\ref{fac:dfs_sequence} directly.
\end{proof}

\begin{corollary}\label{cor:subdfs_remove_multi_subtree}
Let $\p:V\rightarrow V$ be a set of parent pointers (See Definition~\ref{def:parent_pointers}) on a vertex set $V$, and $\p$ has a unique root. Let $v_1,v_2,\cdots,v_t$ be $t$ vertices in $V$. Let $V'=V\setminus\{u\in V\mid \exists v\in \{v_1,\cdots,v_t\},v\text{~is~an~ancestor~(See~Definition~\ref{def:ancestor})~of~}u\}.$ Let $\p':V'\rightarrow V'$ satisfy $\forall u\in V',\p'(u)=\p(u).$ Then the DFS sequence (See Definition~\ref{def:DFS_sequence}) of $\p'$ is a subsequence of the DFS sequence of $\p.$
\end{corollary}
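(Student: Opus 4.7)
My plan is to reduce the corollary to repeated application of Lemma~\ref{lem:remove_a_tree_subdfs} by a straightforward induction on $t$, the number of vertices whose subtrees we are deleting. The key observation making this work is that ``subsequence of'' is a transitive relation, so if at each stage we produce a new DFS sequence that is a subsequence of the previous one, the final DFS sequence is a subsequence of the original.

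Concretely, I would define a chain of intermediate forests $\p_0=\p$ on $V_0=V$, and for $i=1,\ldots,t$ set
\[
V_i=V_{i-1}\setminus\{u\in V_{i-1}\mid v_i \text{ is an ancestor of } u \text{ in } \p_{i-1}\},
\qquad \p_i=\p|_{V_i}.
\]
At each step, if $v_i\in V_{i-1}$, I invoke Lemma~\ref{lem:remove_a_tree_subdfs} (with the graph $(\p_{i-1},V_{i-1})$ and the distinguished vertex $v_i$) to conclude that the DFS sequence of $\p_i$ is a subsequence of the DFS sequence of $\p_{i-1}$; and if $v_i\notin V_{i-1}$, then $V_i=V_{i-1}$ and the statement is trivial. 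Composing these subsequence relations yields that the DFS sequence of $\p_t$ is a subsequence of that of $\p=\p_0$.

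The final bookkeeping step is to verify that $V_t=V'$, i.e., that the iteratively-removed vertex set equals the set removed in the corollary statement. For the forward inclusion, every vertex removed at stage $i$ had $v_i$ as an ancestor in $\p_{i-1}$, and since $\p_{i-1}$ is just the restriction of $\p$ to $V_{i-1}$, this remains an ancestor relation in $\p$. For the reverse inclusion, if $v_i$ is an ancestor of $u$ in $\p$ and $u$ survives all previous stages $j<i$, then $u\in V_{i-1}$; either $v_i\in V_{i-1}$ (in which case $u$ is removed at stage $i$), or $v_i$ was removed earlier by some $v_k$ with $k<i$ that is an ancestor of $v_i$ in $\p$, which forces $v_k$ to also be an ancestor of $u$ in $\p$, and a downward induction on the tree structure shows that $u$ must already have been removed. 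So $V_t=V'$ and $\p_t=\p'$, completing the proof.

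The main obstacle is not a deep one: it is the careful verification that at each stage the vertex $v_i$ (if still present) is truly an ancestor in the restricted tree $\p_{i-1}$ of exactly those descendants of $v_i$ in $\p$ that have not already been removed, and that the final set of removed vertices coincides with $V\setminus V'$ regardless of the order in which we process the $v_i$. This is essentially a set-theoretic check about subtrees, made clean by the fact that ancestry in a restriction $\p|_{V_{i-1}}$ agrees with ancestry in $\p$ whenever both endpoints lie in $V_{i-1}$.
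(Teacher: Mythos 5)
Your proposal is correct and follows essentially the same route as the paper: an induction on $t$ that peels off one subtree at a time, invokes Lemma~\ref{lem:remove_a_tree_subdfs} at each step, and concludes by transitivity of the subsequence relation. The paper's write-up states the inductive hypothesis directly for $t-1$ vertices and splits on whether some earlier $v_j$ is an ancestor of $v_t$, which is a lighter-weight version of the bookkeeping you carry out with the chain $V_0\supseteq V_1\supseteq\cdots\supseteq V_t$; both checks amount to the same observation that ancestry in the restriction agrees with ancestry in $\p$.
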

\begin{proof}
The proof is by induction on $t$. When $t=1,$ then the statement is true by Lemma~\ref{lem:remove_a_tree_subdfs}.
Suppose the statement is true for $t-1$.
Let $V''=V\setminus\{u\in V\mid \exists v\in \{v_1,\cdots,v_{t-1}\},v\text{~is~an~ancestor~of~}u\},$ and let $\p'':V''\rightarrow V''$ satisfy $\forall v\in V'',\p''(v)=\p(v).$
By induction hypothesis, the DFS sequence of $\p''$ is a subsequence of the DFS sequence of $\p$.
If one of the $v_1,\cdots,v_{t-1}$ is an ancestor of $v_t,$ then $\p'=\p'',$ thus, the DFS sequence of $\p'$ is a subsequence of the DFS sequence of $\p$.
Otherwise, we have $V'=V''\setminus \{u\in V''\mid v_t\text{~is~an~ancestor~of~}u\}.$
By Lemma~\ref{lem:remove_a_tree_subdfs}, the DFS sequence of $\p'$ is a subsequence of the DFS sequence of $\p''.$
Thus, the DFS sequence of $\p'$ is a subsequence of the DFS sequence of $\p$.
\end{proof}

\begin{lemma}[Removing several subtrees]\label{lem:remove_trees}
Let $\p:V\rightarrow V$ be a set of parent pointers (See Definition~\ref{def:parent_pointers}) on a vertex set $V$, and $\p$ has a unique root. Let $m>0,\delta\in(0,1)$ be parameters, and let $|V|\leq m^{1/\delta}.$ Let $(V',A) = \textsc{SubDFS}(\p,m,\delta)$ (Algorithm~\ref{alg:subdfs}). Then $\forall u\in V',$ we have $\p(u)\in V'.$ Furthermore, with probability at least $1-1/(100m^{5/\delta}),$ $\forall u\in V\setminus V',$ the number of leaves (See Definition~\ref{def:leaf}) in the subtree (See Definition~\ref{def:subtree}) of $u$ is at most $\lfloor |\leaves(\p)|/\lceil m^{1/3}\rceil\rfloor.$
\end{lemma}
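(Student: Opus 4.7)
The plan is to first establish a clean structural characterization of $V'$: it is exactly the set of vertices in $V$ that are ancestors (in the sense of Definition~\ref{def:ancestor}) of at least one sampled leaf $l_i$. Both properties will then drop out essentially for free.

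To prove this characterization, I would unpack what $V' = \bigcup_{i=1}^{2t} P_i$ actually contains, using Lemma~\ref{lem:multi_path}, which guarantees that each $P_j$ is exactly the vertex-to-ancestor path specified by the $j$-th pair in $Q'$ (line~\ref{sta:alg_subsequence_Qprime}). For inclusion in ancestors of sampled leaves: the odd-indexed paths $P_1, P_3,\ldots,P_{2t-1}$ start at $l_1,l_2,\ldots,l_t$ and climb upward, so they consist entirely of ancestors of the respective $l_i$; the even-indexed path $P_{2i}$ goes from $\p(l_i)$ up to $\mathrm{LCA}(l_i,l_{i+1})$ and so is also made up of ancestors of $l_i$. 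Conversely, $P_1$ hits \emph{every} ancestor of $l_1$ up to the root, and for $i\geq 2$ any ancestor of $l_i$ is either strictly below $\mathrm{LCA}(l_{i-1},l_i)$, hence captured by $P_{2i-1}$, or at/above $\mathrm{LCA}(l_{i-1},l_i)$, in which case it is also an ancestor of $l_{i-1}$ and covered by induction on $i$.

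Given this characterization, property~1 is immediate: if $u$ is an ancestor of some $l_i$ then so is $\p(u)$, unless $u$ is the unique root $v$, in which case $\p(u)=u\in V'$. For property~2, fix any $u\in V\setminus V'$. Then $u$ is not an ancestor of any sampled leaf, i.e.\ the subtree of $u$ (Definition~\ref{def:subtree}) contains no sampled leaf. By Fact~\ref{fac:dfs_sequence}(3), the DFS sequence of the subtree of $u$ is a contiguous block of the DFS sequence of $\p$, so the leaves of $\p$ lying in the subtree of $u$ form a consecutive run in the $<_{\p}$ order of Definition~\ref{def:order_leaves}. Let $v^\star$ be the largest such leaf; note $v^\star\neq a_1$ since $a_1$ is sampled. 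Applying Lemma~\ref{lem:properties_of_sampled_leaves}(2) (in the nontrivial regime $|V|>m$ and $|\leaves(\p)|>8\lceil m^{1/3}\rceil$; otherwise every leaf is sampled, forcing $V'=V$ and making property~2 vacuous) yields a sampled $w<_{\p} v^\star$ with at most $|\leaves(\p)|/\lceil m^{1/3}\rceil$ leaves strictly between $w$ and $v^\star$. Since no leaf in the subtree of $u$ is sampled, $w$ must precede the entire run, so every other leaf of the subtree lies strictly between $w$ and $v^\star$, which bounds the subtree's leaf count as required. The $1/(100 m^{5/\delta})$ failure probability is inherited directly from Lemma~\ref{lem:properties_of_sampled_leaves}.

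The main obstacle is the characterization step, specifically the inductive argument that every ancestor of every sampled leaf is in $\bigcup_j P_j$; once this is in hand, properties~1 and~2 are short corollaries, and no separate treatment is needed for the degenerate regimes (where $\textsc{LeafSampling}$ returns all leaves and $V'=V$).
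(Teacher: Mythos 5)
Your proof follows the same route as the paper's: establish $V' = \{u \in V : u \text{ is an ancestor of some sampled leaf } l_j\}$ by induction on the sampled leaves, deduce closure under $\p$ immediately from that characterization, and combine Lemma~\ref{lem:properties_of_sampled_leaves}(2) with the contiguity of subtree DFS blocks (Fact~\ref{fac:dfs_sequence}) to bound the leaf count of any subtree containing no sampled leaf. The induction is phrased slightly differently (the paper tracks the running union $P_1\cup\cdots\cup P_{2i-1}$ equalling the ancestors of $l_1,\ldots,l_i$, whereas you case-split on whether an ancestor of $l_i$ lies strictly below or at/above $\mathrm{LCA}(l_{i-1},l_i)$), but the content and the key lemmas invoked are the same.
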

\begin{proof}
By Lemma~\ref{lem:properties_of_sampled_leaves}, we know $L\subseteq \leaves(\p),$ and $l_1<_{\p}l_2<_{\p}\cdots<_{\p}l_t.$

We first prove $\forall u\in V',\p(u)\in V'.$
Our proof is by induction on the leaf $l_i.$
By Lemma~\ref{lem:lca}, we have that $\forall i\in[t-1],p_{l_i,l_{i+1}}$ is the LCA of $(l_i,l_{i+1}),$ $p_{i,l_{i+1}}$ is an ancestor of $l_{i+1},$ and $p_{i,l_{i+1}}\not=p_{l_i,l_{i+1}},\p(p_{i,l_{i+1}})=p_{l_i,l_{i+1}}.$
By Lemma~\ref{alg:multi_path}, $P_1$ contains all the vertices on the path from $l_1$ to the root $v$.
$P_1$ is the set of all the ancestors of $l_1.$
Thus, every ancestor $u$ of $l_1$ is in $P_1$ and satisfies $\p(u)\in V'.$
$P_2$ contains all the vertices on the path from $l_1$ to an ancestor of $l_1$.
Thus, $P_2\subseteq P_1.$
Suppose now $P_1\cup P_2\cup\cdots \cup P_{2i-2}=\{u\in V\mid\exists j\in[i-1],u\text{~is~an~ancestor~of~}l_j\}.$
Notice that $P_{2i-1}$ contains all the vertices on the path from $l_i$ to the ancestor $p_{i-1,l_i}.$
Since $\p(p_{i-1,l_i})=p_{l_{i-1},l_i}$ is also an ancestor of $l_{i-1},$ we have $P_1\cup P_2\cup\cdots \cup P_{2i-2}\cup P_{2i-1}=\{u\in V\mid\exists j\in[i],u\text{~is~an~ancestor~of~}l_j\}.$
Since $P_{2i}$ contains all the vertices on the path from $l_i$ to an ancestor of $l_i$, we have $P_{2i}\subseteq P_1\cup P_2\cup\cdots \cup P_{2i-2}\cup P_{2i-1}.$
To conclude, we have $V'=P_1\cup P_2\cup\cdots\cup P_{2t}=\{u\in V\mid\exists j\in[t],u\text{~is~an~ancestor~of~}l_j\}.$
Thus, $\forall u\in V',$ we have $\p(u)\in V'.$

By Lemma~\ref{lem:properties_of_sampled_leaves}, with probability at least $1-1/(100m^{5/\delta}),$ $\forall u\in \leaves(\p)\setminus L,$ there exists $w\in L,w<_{\p} u$ such that $|\{x\in\leaves(\p)\mid w<_{\p}x<_{\p}u\}|\leq \lfloor |\leaves(\p)|/\lceil m^{1/3}\rceil\rfloor.$
In the following, we condition on the above event happens.
Let $u\in V\setminus V'.$
Due to Fact~\ref{fac:dfs_sequence}, the DFS sequence of the subtree of $u$ in $\p$ must be a consecutive subsequence of the DFS sequence of $\p.$
Thus, $\exists x,y\in\leaves(\p),$ the leaves in the subtree of $u$ in $\p$ is the set $\{z\in\leaves(\p)\mid x<_{\p}z<_{\p}y\}\cup\{x\}\cup\{y\}.$
If the number of leaves in the subtree of $u$ is more than $\lfloor |\leaves(\p)|/\lceil m^{1/3}\rceil\rfloor,$ then $\exists l_i\in L,$ $u$ is an ancestor of leaf $l_i$.
But $l_i\in V'$ contradicts to $u\not\in V'.$
Thus, the number of leaves in the subtree of $u$ is at most $\lfloor |\leaves(\p)|/\lceil m^{1/3}\rceil\rfloor.$
\end{proof}

\begin{lemma}[$A$ is a subsequence]\label{lem:A_is_subsequence}
Let $\p:V\rightarrow V$ be a set of parent pointers (See Definition~\ref{def:parent_pointers}) on a vertex set $V$, and $\p$ has a unique root. Let $m>0,\delta\in(0,1)$ be parameters, and let $|V|\leq m^{1/\delta}.$ Let $(V',A) = \textsc{SubDFS}(\p,m,\delta)$ (Algorithm~\ref{alg:subdfs}). Then $A$ is a subsequence of the DFS sequence of $\p.$ Furthermore, $\forall u\in V',$ $u$ appears in $A$ exactly $|\child_{\p}(u)+1|$ times, and $\forall u\not\in V',$ $u$ does not appear in $A$.
\end{lemma}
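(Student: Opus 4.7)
The plan is to identify the sequence $A'$ constructed on line~\ref{sta:concatentation_get_Aprime} as being exactly the DFS sequence of $\p'$, and then to recognize that the duplication step on line~\ref{sta:duplicate_elements} transforms $A'$ into the subsequence of the DFS sequence of $\p$ obtained by keeping every occurrence of every vertex of $V'$ and deleting every occurrence of every vertex not in $V'$. Both claims of the lemma follow immediately from this identification.

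For the first step, I will invoke Lemma~\ref{lem:remove_trees} to conclude that $V'$ is precisely the set of ancestors (in $\p$) of the sampled leaves $l_1,\dots,l_t$, so $\p'$ is the restriction of $\p$ to $V'$, and its leaves are exactly $\{l_1,\dots,l_t\}$ in the same relative order under $<_{\p'}$ as under $<_{\p}$. Applying Fact~\ref{fac:dfs_generate_leaf_path} to $\p'$, the DFS sequence of $\p'$ decomposes as the downward path from the root $v$ to $l_1$, followed for each $i=1,\dots,t-1$ by the upward path from $\p(l_i)$ to $\mathrm{LCA}_{\p'}(l_i,l_{i+1})=p_{l_i,l_{i+1}}$ and then the downward path from $p_{i,l_{i+1}}$ to $l_{i+1}$, and finally by the upward path from $\p(l_t)$ back to $v$. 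Because each $A'_{2i-1}$ is sorted by increasing depth (yielding a downward path $p_{i-1,l_i},\dots,l_i$, and for $i=1$ the path $v,\dots,l_1$) and each $A'_{2i}$ by decreasing depth (yielding the upward path $\p(l_i),\dots,p_{l_i,l_{i+1}}$, and for $i=t$ the path $\p(l_t),\dots,v$), the concatenation $A'_1 A'_2\cdots A'_{2t}$ reproduces this decomposition with no vertex duplicated at a junction, so $A'$ is exactly the DFS sequence of $\p'$.

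For the second step, let $\mathcal{D}$ denote the DFS sequence of $\p$ and $\widetilde{\mathcal{D}}$ the subsequence of $\mathcal{D}$ obtained by keeping every occurrence of every vertex in $V'$. I will show that $A=\widetilde{\mathcal{D}}$. Fix a non-leaf vertex $w$ of $\p'$, set $c=|\child_\p(w)|$, and let $r_1<\cdots<r_k$ be the ranks in $\p$ of those children that lie in $V'$, so that $\child_{\p'}(w,j)=\child_\p(w,r_j)$ and $k=|\child_{\p'}(w)|$. A child $\child_\p(w,r)$ with $r\notin\{r_1,\dots,r_k\}$ is not an ancestor of any sampled leaf, hence its entire subtree contributes no vertex of $V'$ to $\widetilde{\mathcal{D}}$. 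Consequently the $c+1$ occurrences of $w$ in $\mathcal{D}$ coalesce in $\widetilde{\mathcal{D}}$ into $k+1$ maximal consecutive blocks of sizes $r_1$, $r_2-r_1$, $\dots$, $r_k-r_{k-1}$, $c+1-r_k$, separated only by the $V'$-restricted DFS of the $k$ surviving subtrees. Lines~\ref{sta:the_first_duplicate}--\ref{sta:mid_duplicate} assign exactly these multiplicities to the $k+1$ positions $\pos(w,1),\dots,\pos(w,k+1)$ of $w$ in $A'$, each write being performed by the unique child $\child_{\p'}(w,j)$ whose rank-in-$\p'$ determines the targeted position; line~\ref{sta:just_copy_once} gives multiplicity $1$ to every sampled leaf (which is also a leaf of $\p$, so contributes one occurrence to $\mathcal{D}$). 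Summing yields $r_1+\sum_{j=1}^{k-1}(r_{j+1}-r_j)+(c+1-r_k)=c+1=|\child_\p(w)|+1$ occurrences of $w$ in $A$, and a straightforward induction over the tree shows $A=\widetilde{\mathcal{D}}$, which is a subsequence of $\mathcal{D}$.

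The main obstacle is purely a bookkeeping verification that each position $\pos(w,i)$ in $A'$ is written by exactly one child of $w$ in $\p'$ with the correct count. One has to check a few overlapping cases: when $k=1$ the unique child is simultaneously ``first'' and ``last'' and both line~\ref{sta:the_first_duplicate} and line~\ref{sta:final_duplicate} fire (writing the two distinct positions $\pos(w,1)$ and $\pos(w,2)$ without collision), while when $k\geq 2$ the first child writes both $\pos(w,1)$ and $\pos(w,2)$ via lines~\ref{sta:the_first_duplicate} and~\ref{sta:mid_duplicate}, and the remaining middle children and the last child partition the positions $\pos(w,3),\dots,\pos(w,k+1)$ among themselves. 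After this case analysis, the multiplicities $r_1$, $r_{j+1}-r_j$, $c+1-r_k$ tile the $k+1$ occurrences of $w$ exactly, and the recursive structure of $\widetilde{\mathcal{D}}$ delivers the conclusion.
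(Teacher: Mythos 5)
Your proposal is correct and follows essentially the same route as the paper's proof: first establish that $A'$ is exactly the DFS sequence of $\p'$ via Fact~\ref{fac:dfs_generate_leaf_path} and the paths $P_1,\dots,P_{2t}$, then verify that the duplication multiplicities on lines~\ref{sta:just_copy_once}--\ref{sta:mid_duplicate} exactly reproduce, around each $w\in V'$, the block structure that occurrences of $w$ have in the DFS sequence of $\p$ after discarding non-$V'$ subtrees, and close with an induction over the tree. Your framing is marginally cleaner in that you name the target sequence $\widetilde{\mathcal{D}}$ (the $V'$-restriction of the full DFS sequence) and prove the stronger identity $A=\widetilde{\mathcal{D}}$, whereas the paper tracks the subsequence relation position-by-position via $\st$ and $\ed$ markers; both ultimately rest on the same rank arithmetic $r_1+\sum_{j<k}(r_{j+1}-r_j)+(c+1-r_k)=c+1$ and the same induction on depth.
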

\begin{proof}
We first show that $A'$ is the DFS sequence of $\p'.$
\begin{claim}\label{cla:Aprime_is_dfs_of_pprime}
$A'$ is the DFS sequence of $\p':V'\rightarrow V'.$
\end{claim}
\begin{proof}
By Lemma~\ref{lem:properties_of_sampled_leaves}, we know $\{l_1,l_2,\cdots,l_t\}=L\subseteq \leaves(\p),$ and $l_1<_{\p}l_2<_{\p}\cdots<_{\p}l_t.$
By Lemma~\ref{lem:lca}, we have that $\forall i\in[t-1],p_{l_i,l_{i+1}}$ is the LCA of $(l_i,l_{i+1}),$ $p_{i,l_{i+1}}$ is an ancestor of $l_{i+1},$ and $p_{i,l_{i+1}}\not=p_{l_i,l_{i+1}},\p(p_{i,l_{i+1}})=p_{l_i,l_{i+1}}.$
By Lemma~\ref{alg:multi_path}, $\forall i\in[t],$ $P_{2i-1}$ and $P_{2i}$ only contains some ancestors of $l_i.$
Thus, $\leaves(\p')=L.$

According to Lemma~\ref{lem:remove_trees} and Corollary~\ref{cor:subdfs_remove_multi_subtree}, the DFS sequence of $\p'$ is a subsequence of the DFS sequence of $\p.$
Thus, we still have $l_1<_{\p'}l_2<_{\p'}<_{\p'}\cdots<_{\p'} l_t.$
 Due to Lemma~\ref{alg:multi_path}, $P_1$ contains all the vertices on the path from $l_1$ to the root $v$, $P_{2t}$ contains all the vertices on the path from $l_t$ to the root $v$, $\forall i\in[t-1],P_{2i}$ contains all the vertices on the path from $\p'(l_i)$ to the LCA of $(l_i,l_{i+1}),$ and $P_{2i+1}$ contains all the vertices on the path from $l_{i+1}$ to $p_{i,l_{i+1}}.$
Thus, $A'_1$ is the path from the root $v$ to leaf $l_1$, $A'_{2t}$ is the path from $l_t$ to the root $v$, $\forall i\in[t-1],$ $A'_{2i}A'_{2i+1}$ is the path from $\p'(l_i)$ to $l_{i+1}.$
Due to Fact~\ref{fac:dfs_generate_leaf_path}, $A'=A'_1A'_2\cdots A'_{2t}$ is the DFS sequence of $\p'.$
\end{proof}

Let us define some notations. Let $\wt{A}=\{\wt{a}_1,\wt{a}_2,\cdots,\wt{a}_{\wt{s}}\}$ be the DFS sequence of $\p.$ $\forall u\in V,$ let $\st_{\wt{A}}(u)=j$ such that $\wt{a}_j$ is the first time that $u$ appears in $\wt{A}.$ We define $\ed_{\wt{A}}(u)$ be the position such that $\wt{a}_{\ed_{\wt{A}}(u)}$ is the last time that $u$ appears in $\wt{A}.$ Similarly, $\forall u\in V',$ we can define $\st_{A'}(u),\st_{A}(u),\ed_{A'}(u),\ed_{A}(u)$ to be the positions of the first time $u$ appears in $A',$ the first time $u$ appears in $A,$ the last time $u$ appears in $A',$ and the last time $u$ appears in $A$ respectively.

Since $v$ is the root (in both $\p$ and $\p'$), it suffices to prove that $(a_{\st_A(v)},a_{\st_A(v)+1},\cdots,a_{\ed_A(v)})$ is a subsequence of $(\wt{a}_{\st_{\wt{A}}(v)},\wt{a}_{\st_{\wt{A}}(v)+1},\cdots,\wt{a}_{\ed_{\wt{A}}(v)}).$
Our proof is by induction on $\dep_{\p}(u)$ for $u\in V'.$
If $\dep_{\p}(u)=\dep(\p),$ then $u$ must be a leaf in $\p'$ (or $\p$, since $\p'$ and $\p$ are the same on $V'$).
In this case, $\st_{A}(u)=\ed_{A}(u),\st_{\wt{A}}(u)=\ed_{\wt{A}}(u),$ and $(a_{\st_{A}(u)})=(\wt{a}_{\st_{\wt{A}}(u)})=(u).$
Suppose for all $u\in V'$ with $\dep_{\p}(u)>d,$ we have that $(a_{\st_A(u)},\cdots,a_{\ed_A(u)})$ is a subsequence of $(\wt{a}_{\st_{\wt{A}}(u)},\cdots,\wt{a}_{\ed_{\wt{A}}(u)}).$
Let $u$ be a vertex in $V'$ with $\dep_{\p}(u)=d.$
If $u$ is a leaf, then it is the same as the previous argument.
Now let us consider the case when $u$ is not a leaf.
According to Claim~\ref{cla:Aprime_is_dfs_of_pprime}, $A'$ is the DFS sequence of $\p'.$
Due to line~\ref{sta:duplicate_elements}, $A$ is obtained by duplicating each element of $A'$ several times.
Let $w_1,w_2,\cdots,w_k$ be the children of $u$ in $\p'$, and $\rank_{\p'}(w_1)=1,\rank_{\p'}(w_2)=2,\cdots,\rank_{\p'}(w_k)=|\child_{\p'}(u)|.$
Then, according to Fact~\ref{fac:dfs_sequence}, $(a_{\st_A(u)},\cdots,a_{\ed_A(u)})$ should look like:
\begin{align*}
(u,\cdots,u,a_{\st_A(w_1)},\cdots,a_{\ed_A(w_1)},u,\cdots,u,a_{\st_A(w_2)},\cdots,a_{\ed_A(w_2)},\cdots,a_{\st_A(w_k)},\cdots,a_{\ed_A(w_k)},u,\cdots,u)
\end{align*}
where the number of $u$ before $a_{\st_A(w_1)}$ is $\rank_{\p}(w_1)$ (see line~\ref{sta:the_first_duplicate}), the number of $u$ before $a_{\st_A(w_i)}$ for $i\in[k]\setminus \{1\}$ is $\rank_{\p}(w_i)-\rank_{\p}(w_{i-1})$ (see line~\ref{sta:mid_duplicate}), and the number of $u$ after $a_{\ed_A(w_k)}$ is $|\child_{\p}(u)|-\rank_{\p}(w_k)+1$ (see line~\ref{sta:final_duplicate}).
Since $\wt{A}$ is the DFS sequence of $\p,$ according to Fact~\ref{fac:dfs_sequence}, the number of $u$ in $\wt{A}$ before $\wt{a}_{\st_{\wt{A}}(w_1)}$ is $\rank_{\p}(w_1).$
By our induction hypothesis, $(a_{\st_A(w_1)},\cdots,a_{\ed_A(w_1)})$ is a subsequence of $(\wt{a}_{\st_{\wt{A}}(w_1)},\cdots,\wt{a}_{\ed_{\wt{A}}(w_1)}).$
Thus, $(a_{\st_A(u)},\cdots,a_{\ed_A(w_1)})$ is a subsequence of $(\wt{a}_{\st_{\wt{A}}(u)},\cdots,\wt{a}_{\ed_{\wt{A}(w_1)}}).$
According to Fact~\ref{fac:dfs_sequence}, $\forall i\in[k]\setminus\{1\},$ the number of $u$ in $\wt{A}$ between $\wt{a}_{\ed_{\wt{A}}(w_{i-1})}$ and $\wt{a}_{\st_{\wt{A}}(w_{i})}$ is $\rank_{\p}(w_i)-\rank_{\p}(w_{i-1}).$
By our induction hypothesis, for all $i\in[k]\setminus\{1\},$ $(a_{\st_A(w_i)},\cdots,a_{\ed_A(w_i)})$ is a subsequence of $(\wt{a}_{\st_{\wt{A}}(w_i)},\cdots,\wt{a}_{\ed_{\wt{A}}(w_i)}).$
Thus, $(a_{\st_A(u)},\cdots,a_{\ed_A(w_k)})$ is a subsequence of $(\wt{a}_{\st_{\wt{A}}(u)},\cdots,\wt{a}_{\ed_{\wt{A}(w_k)}}).$
According to Fact~\ref{fac:dfs_sequence}, the number of $u$ in $\wt{A}$ after $\wt{a}_{\ed_{\wt{A}}(w_{k})}$ is $|\child_{\p}(u)|-\rank_{\p}(w_k)+1.$
Thus, $(a_{\st_A(u)},\cdots,a_{\ed_A(u)})$ is a subsequence of $(\wt{a}_{\st_{\wt{A}}(u)},\cdots,\wt{a}_{\ed_{\wt{A}(u)}}).$
Furthermore, the number of $u$ appears in $A$ is $|\child_{\p}(u)|-\rank_{\p}(w_k)+1+\rank_{\p}(w_1)+\sum_{i=2}^k \rank_{\p}(w_i)-\rank_{\p}(w_{i-1})=|\child_{\p}(u)|+1.$

Since $A'$ is the DFS sequence of $\p',$ $\forall u\not\in  V',$ $u$ does not appear in $A'.$
Thus, $\forall u\not\in V',$ $u$ does not appear in $A$.

\end{proof}

\subsubsection{DFS Sequence}
In this section, we show how to use Algorithm~\ref{alg:subdfs} as a subroutine to output a DFS sequence. The high level idea is that we use Algorithm~\ref{alg:subdfs} to generate subsequences of the DFS sequence in each iteration, and we ensure that the miss part of the DFS sequence must be the DFS sequences of many subtrees. After the $i^{\text{th}}$ iteration, we should ensure that the number of leaves of each subtree which has missing DFS sequence is at most $n/m^i,$ where $m$ is some parameter depends on some computational resources (e.g. memory size of a machine). The description of the algorithm is shown in Algorithm~\ref{alg:dfs_sequence}. Figure~\ref{fig:dfs} shows one step in our algorithm.

\begin{figure}[t!]
  \centering
  \includegraphics[width=0.48\textwidth]{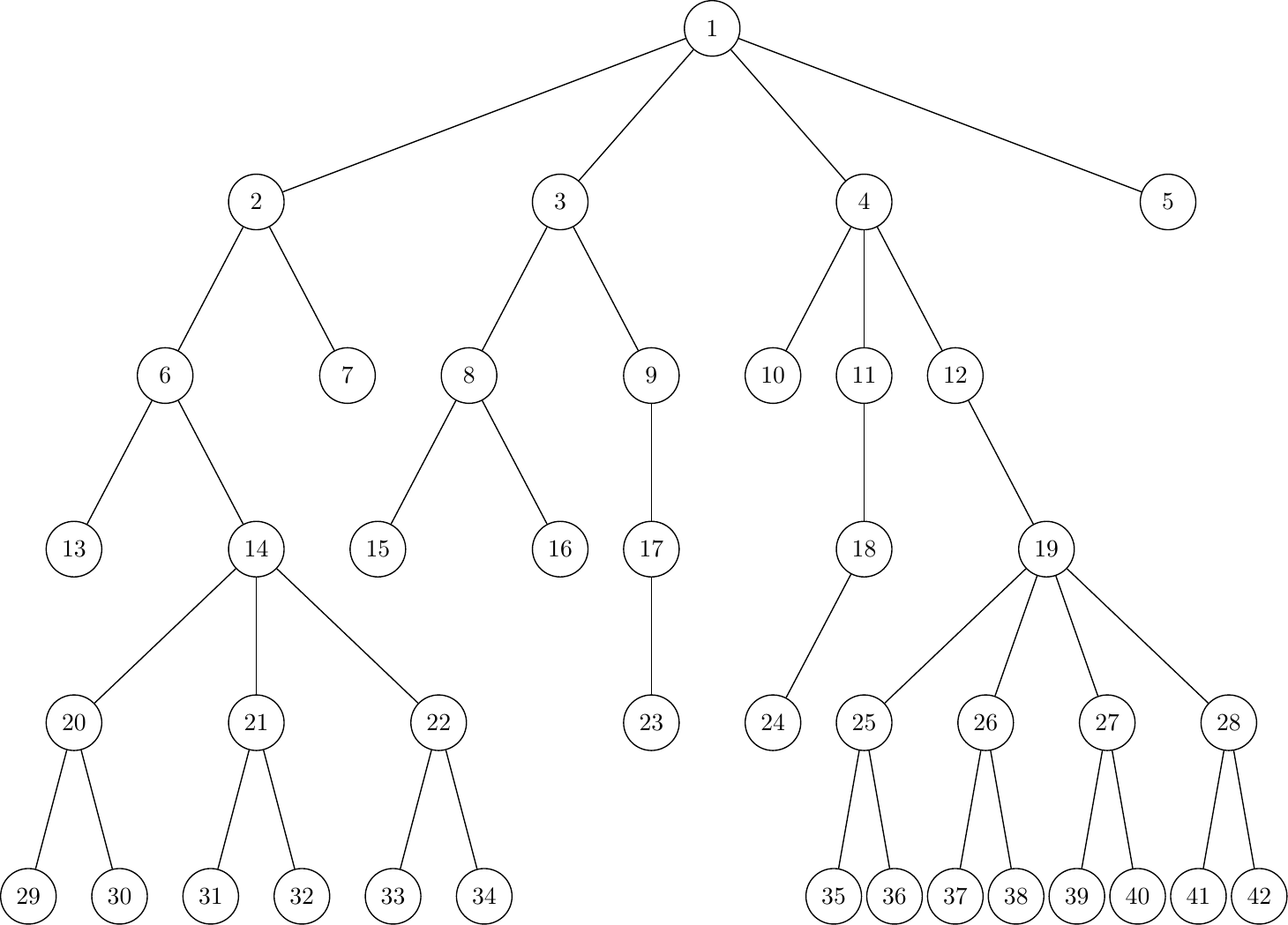}
  \includegraphics[width=0.48\textwidth]{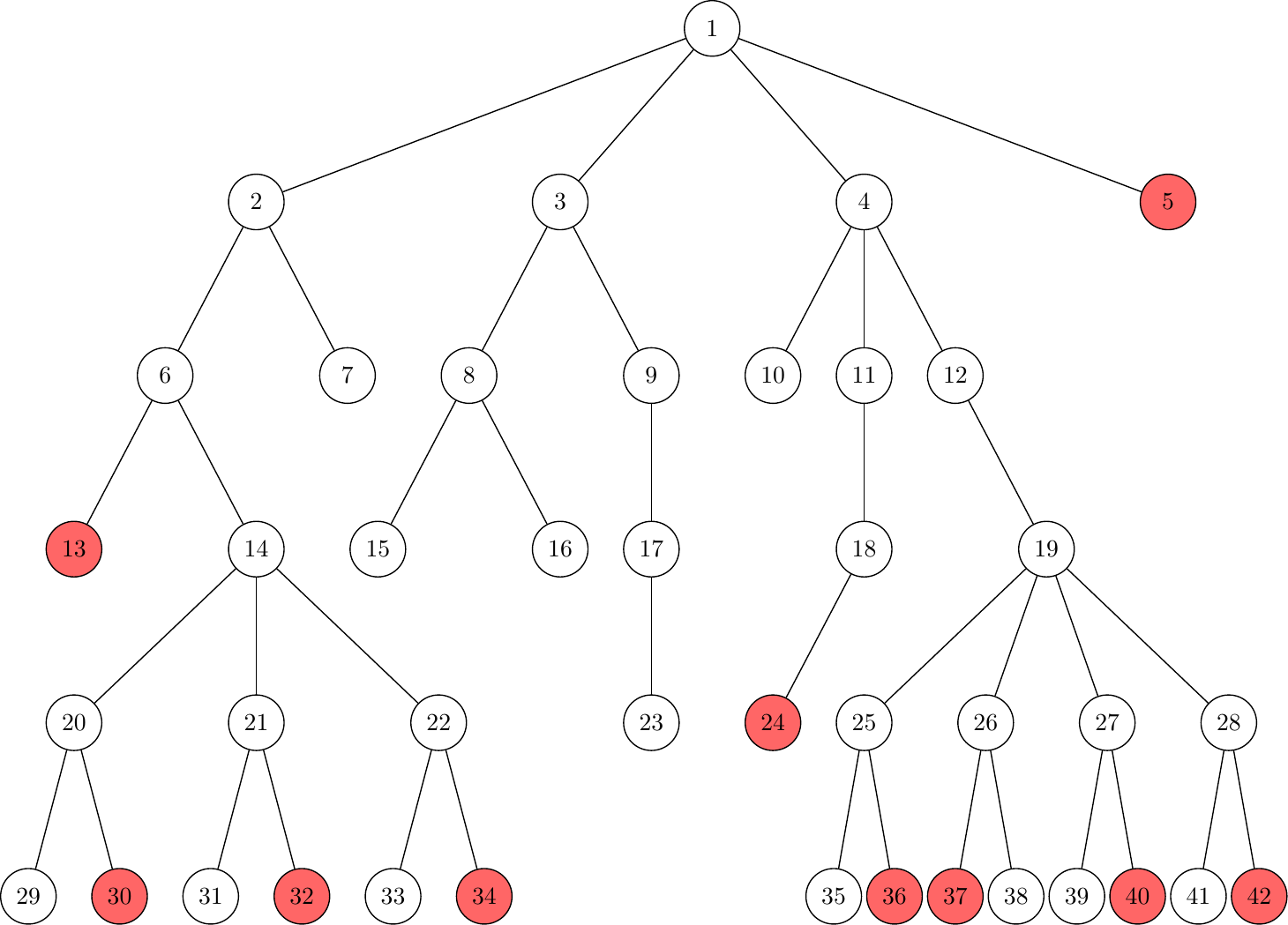}\\
  \includegraphics[width=0.48\textwidth]{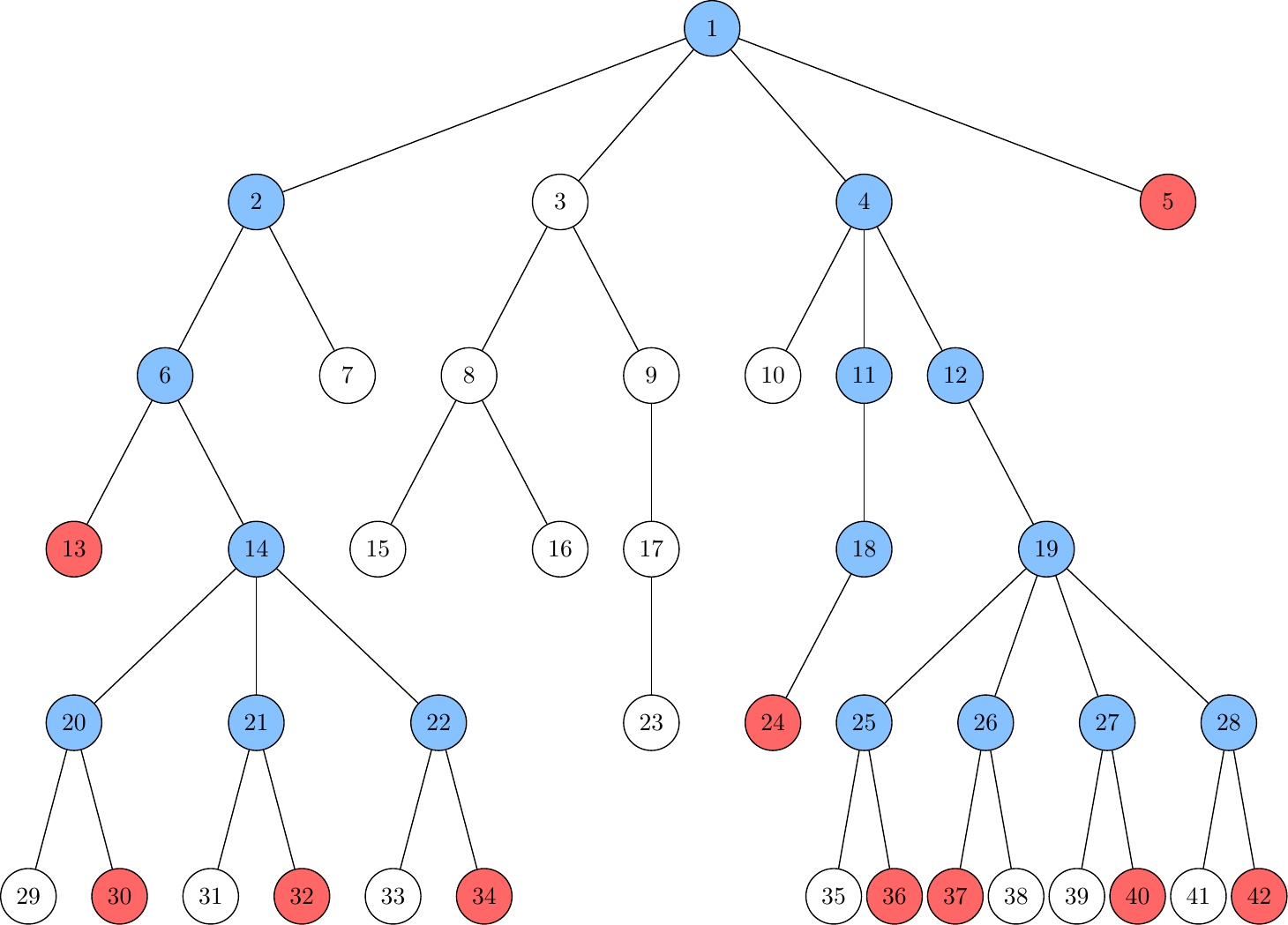}
  \includegraphics[width=0.48\textwidth]{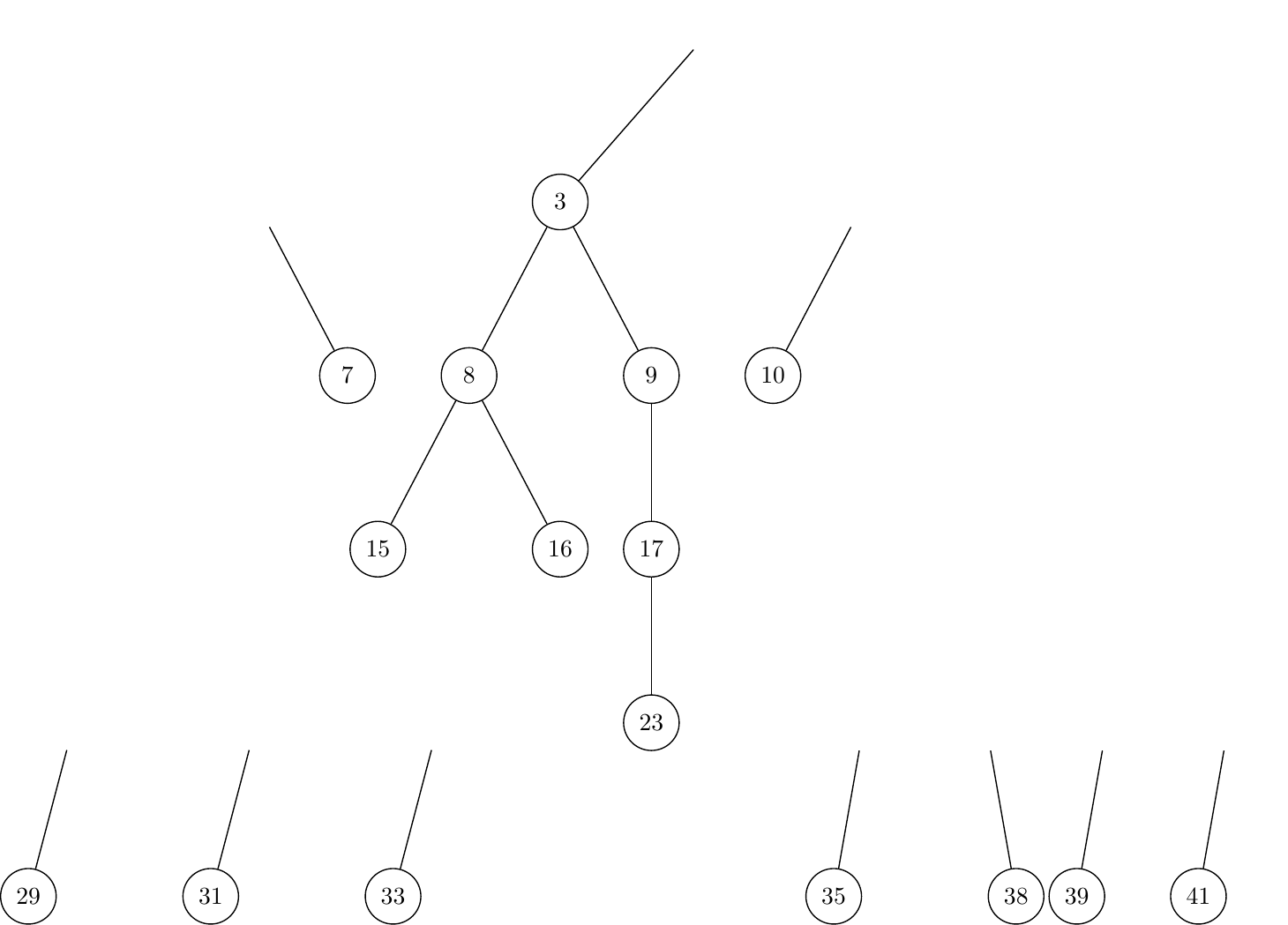}
  \caption{Given a tree that has $42$ vertices (top-left), we label all the vertices from $1$ to $42$. Firstly, we sample some leaves (red vertices, i.e. $\{ 5,13,24,30,32,34,36,37,40,42 \}$) in the tree (top-right tree). Then we find a DFS sequence of the tree (the tree formed by all the blue and red vertices in the bottom-left tree) which only contains all the sampled leaves and their ancestors. Finally, we recursively find the DFS sequences of remaining subtrees(bottom-right).}\label{fig:dfs}
\end{figure}

\begin{algorithm}[h]
\caption{DFS sequence}\label{alg:dfs_sequence}
\begin{algorithmic}[1]
\Procedure{DFS}{$\p:V\rightarrow V,m$} \Comment{Theorem~\ref{thm:dfs_sequence}, Theorem~\ref{thm:success_prob_dfs_sequence}}
\State Output: FAIL or $A=(a_1,a_2,\cdots,a_{2|V|-1}).$
\State $n=|V|,\delta=1/\log_m n.$
\State Let $\p_0=\p.$
\State $(V_0,A_0)=\textsc{SubDFS}(\p_0,m,\delta).$\label{sta:alg_dfs_init_sub_dfs}\Comment{Algorithm~\ref{alg:subdfs}}
\State Let $r=\lceil3/\delta\rceil+2.$
\For{$i=1\rightarrow r$} \Comment{$v\in V_i\Leftrightarrow v$ appears in $A_i$}

\Comment{If $v\in V_i,$ then $v$ appears $|\child_{\p}(v)|+1$ times in $A_i$}
\State Let $V'_i=V\setminus V_{i-1}.$\label{sta:alg_dfs_vprimei}
\State Initialize $\p_i:V'_i\rightarrow V'_i.$
\State For $v\in V'_i,$ if $\p(v)\in V_{i-1},$ let $\p_i(v)=v$; Otherwise, let $\p_i(v)=\p(v).$\label{sta:alg_dfs_pi}
\State $((V''_i,\emptyset),\p_i^{(\infty)})=\textsc{TreeContraction}((V'_i,\emptyset),\p_i).$\label{sta:alg_dfs_tree_contract} \Comment{Algorithm~\ref{alg:tree_contraction}}
\State $V_i\leftarrow V_{i-1}.$
\State $A_i\leftarrow A_{i-1}.$
\For{$v\in V'_i,\p_i(v)=v$}\label{sta:alg_dfs_the_loop}\Comment{The DFS sequence of the subtree of $v$ in $\p$ is missing.}
\State Let $V'_i(v)=\{u\in V'_i\mid \p_i^{(\infty)}(u)=v\}.$
\State Let $\p_{i,v}:V'_i(v)\rightarrow V'_i(v)$ satisfy $\forall u\in V'_i(v),\p_{i,v}(u)=\p_i(u).$
\State Let $(V_{i,v},A_{i,v}) = \textsc{SubDFS}(\p_{i,v},m,\delta).$\label{sta:alg_dfs_inloop_subdfs}\Comment{Algorithm~\ref{alg:subdfs}}
\State $V_i\leftarrow V_i\cup V_{i,v}.$
\State Insert $A_{i,v}$ after the $\rank_{\p}(v)^{\text{th}}$ time appearance of $v$ in $A_i.$\label{sta:alg_dfs_sequence_insertion}
\EndFor
\EndFor
\State If $V_r=V,$ return $A_r$ as $A$. Otherwise, return FAIL.
\EndProcedure
\end{algorithmic}
\end{algorithm}

\begin{theorem}[Correctness of DFS sequence]\label{thm:dfs_sequence}
Let $\p:V\rightarrow V$ be a set of parent pointers (See Definition~\ref{def:parent_pointers}) on a vertex set $V$, and $\p$ has a unique root. Let $n=|V|,m=n^{\delta}$ for some constant $\delta\in(0,1)$. If $A = \textsc{DFS}(\p,m)$ (Algorithm~\ref{alg:dfs_sequence}) does not output {\rm FAIL}, then $A$ is the DFS sequence of $\p.$
\end{theorem}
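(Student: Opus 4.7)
The plan is to establish, by induction on $i$, the invariant that after iteration $i$ of the main loop in Algorithm~\ref{alg:dfs_sequence}, (a) $V_i$ is ancestor-closed in $\p$, meaning $u \in V_i \Rightarrow \p(u) \in V_i$, and (b) $A_i$ equals $D|_{V_i}$, the order-preserving subsequence of the true DFS sequence $D$ of $\p$ obtained by deleting all positions whose entries lie outside $V_i$. Once this invariant is proved, the theorem follows: when the algorithm does not return FAIL we have $V_r = V$, hence $A_r = D|_V = D$.

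For the base case $i=0$, Lemma~\ref{lem:remove_trees} applied to $\textsc{SubDFS}(\p_0,m,\delta)$ at line~\ref{sta:alg_dfs_init_sub_dfs} gives ancestor-closedness of $V_0$, and Lemma~\ref{lem:A_is_subsequence} gives that $A_0$ is a subsequence of $D$ in which every $v \in V_0$ appears exactly $|\child_\p(v)|+1$ times, which by Fact~\ref{fac:dfs_sequence} matches the multiplicity of $v$ in $D$. Any order-preserving subsequence of $D$ using up every occurrence of each $V_0$-vertex must equal $D|_{V_0}$, giving (b).

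For the inductive step, assume the invariant at $i-1$. A vertex $v$ with $\p_i(v)=v$ at line~\ref{sta:alg_dfs_the_loop} is, by line~\ref{sta:alg_dfs_pi}, a vertex of $V'_i = V \setminus V_{i-1}$ whose $\p$-parent lies in $V_{i-1}$. By the ancestor-closedness of $V_{i-1}$, the entire subtree of $v$ in $\p$ is disjoint from $V_{i-1}$; therefore $V'_i(v) = \{u \in V'_i \mid \p_i^{(\infty)}(u)=v\}$ is exactly the vertex set of that subtree, and $\p_{i,v}$ agrees with $\p$ on $V'_i(v)$ except at the root $v$. Consequently the true DFS sequence of $\p_{i,v}$ coincides with $D_v$, the DFS sequence of $v$'s subtree in $\p$. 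Applying Lemmas~\ref{lem:remove_trees} and~\ref{lem:A_is_subsequence} to $\textsc{SubDFS}(\p_{i,v},m,\delta)$ yields $A_{i,v} = D_v|_{V_{i,v}}$ together with ancestor-closedness of $V_{i,v}$ inside $\p_{i,v}$; since $\p_i$ coincides with $\p$ at every non-root vertex of $V'_i$, this lifts to ancestor-closedness of $V_i = V_{i-1} \cup \bigcup_v V_{i,v}$ in $\p$, proving (a).

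The main obstacle is to verify (b), namely that the insertions at line~\ref{sta:alg_dfs_sequence_insertion} assemble $A_i$ into $D|_{V_i}$. By Fact~\ref{fac:dfs_sequence} the block of $D$ strictly between the $\rank_\p(v)$-th and $(\rank_\p(v)+1)$-th occurrence of $\p(v)$ is precisely $D_v$; because $v \notin V_{i-1}$ and $V_{i-1}$ is ancestor-closed, no descendant of $v$ survives in $A_{i-1} = D|_{V_{i-1}}$, so this gap in $A_{i-1}$ is empty and sits immediately after the $\rank_\p(v)$-th occurrence of $\p(v)$. Reading the insertion step as dropping $A_{i,v}$ into that gap (its only consistent placement), we obtain $D|_{V_{i-1} \cup V_{i,v}}$ locally. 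Distinct roots $v$ of $\p_i$ correspond to disjoint subtrees of $\p$ and hence to disjoint gaps, so performing all the insertions in line~\ref{sta:alg_dfs_the_loop} produces $A_i = D|_{V_i}$, closing the induction and proving the theorem.
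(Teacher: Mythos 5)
Your proof is correct and takes essentially the same route as the paper's: both proceed by induction on $i$, maintaining ancestor-closedness of $V_i$ together with the fact that $A_i$ is the (uniquely determined) subsequence of the true DFS sequence supported on $V_i$, and both invoke Lemma~\ref{lem:remove_trees}, Lemma~\ref{lem:A_is_subsequence}, and Fact~\ref{fac:dfs_sequence} at the same points. Your formulation $A_i = D|_{V_i}$ is a slightly cleaner packaging of the paper's Claim~\ref{cla:Ai_is_good} (which tracks exact multiplicities of each $v\in V_i$ in $A_i$), and you correctly read through the paper's apparent typo at line~\ref{sta:alg_dfs_sequence_insertion} — the insertion point is after the $\rank_\p(v)$-th occurrence of $\p(v)$, not of $v$ — but the underlying argument is the same.
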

\begin{proof}
It suffices to prove the following claim.
\begin{claim}\label{cla:Ai_is_good}
Let $i\in\{0\}\cup [r].$ $A_i$ is a subsequence of the DFS sequence of $\p.$ $\forall v\in V_i,$ $\p(v)\in V_i.$ Furthermore, $\forall v\in V_i,$ $v$ appears in $A_i$ exactly $|\child_{\p}(v)|+1$ times, and $\forall v\not\in V_i,$ $v$ does not appear in $A_i.$
\end{claim}
\begin{proof}
Our proof is by induction on $i.$
If $i=0,$ then by Lemma~\ref{lem:A_is_subsequence}, $A_0$ is a subsequence of the DFS sequence of $\p,$ $\forall v\in V_0,$ $v$ appears in $A_0$ exactly $|\child_{\p}(v)|+1$ times, and $\forall v\not\in V_0,$ $v$ does not appear in $A_0.$
By Lemma~\ref{lem:remove_trees}, we have $\forall v\in V_0,\p(v)\in V_0.$

Suppose the claim is true for $i-1.$
Let $u\in V_i.$

If $u\in V_{i-1},$ then since $V_{i-1}\subseteq V_i,$ $\p(u)\in V_i.$
Otherwise $u\in V_{i,v}$ for some $v$ with $\p_i(v)=v.$
If $u=v,$ then $\p(v)\in V_{i-1}\subseteq V_i.$
Otherwise, by Lemma~\ref{lem:remove_trees}, $\p(u)\in V_{i,v}\subseteq V_i.$

Now consider the property of $A_i.$
If $u\in V_{i-1},$ then since $A_{i-1}$ is a subsequence of $A_i,$ and by Lemma~\ref{lem:A_is_subsequence} $u$ cannot appear in any $A_{i,v},$ $u$ must appear in $A_i$ exactly $|\child_{\p}(u)|+1$ times.
Otherwise $u\in V_{i,v}$ for some $v$ with $\p_i(v)=v.$
By Lemma~\ref{lem:A_is_subsequence}, $u$ must appear in $A_{i,v}$ $|\child_{\p_{i,v}}(u)|+1=|\child_{\p}(u)|+1$ times.
Since $u$ cannot appear in $A_{i-1},$ $u$ must appear in $A_i$ exactly $|\child_{\p}(u)|+1$ times.
For $v\in V'_i$ with $\p_i(v)=v,$ according to Fact~\ref{fac:dfs_sequence} and $\forall w\in \{x\in V\mid v\text{~is~an~ancestor~of~}x\},$ $w$ cannot be in $V_{i-1},$ the $\rank_{\p}(v)^{\text{th}}$ time appearance of $v$ and the $(\rank_{\p}(v)+1)^{\text{th}}$ time appearance of $v$ should be adjacent in $A_{i-1}.$
Due to Lemma~\ref{lem:A_is_subsequence}, $A_{i,v}$ is a subsequence of the DFS sequence of the subtree of $v$ in $\p.$
Due to Fact~\ref{fac:dfs_sequence}, $A_i$ is still a subsequence of the DFS sequence of $\p$ after insertion of the sequence $A_{i,v}$.

For any $x\not\in V_i,$ by Lemma~\ref{lem:A_is_subsequence}, $x$ cannot be in any $A_{i,v}.$ By our induction hypothesis, $x$ cannot be in $A_{i-1}.$
Thus, $x$ cannot be in $A_i.$
\end{proof}
If the procedure does not output {\rm FAIL}, then according to the above Claim~\ref{cla:Ai_is_good}, $\forall v\in V_r=V,$ $v$ appears in $A_r=A$ exactly $|\child_{\p}(v)|+1$ times, and $A_r=A$ is a subsequence of the DFS sequence of $\p$.
Due to Fact~\ref{fac:dfs_sequence}, $A=A_r$ is the DFS sequence of $\p$.
\end{proof}

The following lemma claims the success probability of Algorithm~\ref{alg:dfs_sequence}.

\begin{theorem}[Success probability]\label{thm:success_prob_dfs_sequence}
Let $\p:V\rightarrow V$ be a set of parent pointers (See Definition~\ref{def:parent_pointers}) on a vertex set $V$, and $\p$ has a unique root. Let $n=|V|,m=n^{\delta}$ for some constant $\delta\in(0,1)$. With probability at least $1-1/(100n^4),$ $A = \textsc{DFS}(\p,m)$ (Algorithm~\ref{alg:dfs_sequence}) does not output {\rm FAIL}.
\end{theorem}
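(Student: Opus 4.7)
The plan is to establish two facts: (i) every invocation of $\textsc{SubDFS}$ inside $\textsc{DFS}$ satisfies the ``good'' event of Lemma~\ref{lem:remove_trees} with failure probability at most $1/(100\,m^{5/\delta}) = 1/(100\,n^5)$, since the internal choice $\delta = 1/\log_m n$ gives $m^{1/\delta} = n$; and (ii) conditional on every $\textsc{SubDFS}$ call being good, after $r = \lceil 3/\delta\rceil + 2$ iterations one has $V_r = V$, so the algorithm does not return FAIL. The total number of $\textsc{SubDFS}$ invocations across the entire run is at most $1 + \sum_{i=1}^{r}|\{v \in V'_i : \p_i(v) = v\}| \leq 1 + rn = O(n)$, because $r$ depends only on the constant $\delta$. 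A union bound then produces a total failure probability of $O(n) \cdot (1/n^5) \leq 1/(100\,n^4)$ (absorbing the $O(1/\delta)$ prefactor into the slack in Lemma~\ref{lem:remove_trees}).

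For (ii), I will maintain the invariant $\Lambda_i := \max\{\ell(\p_{i,v}) : v \in V'_i,\ \p_i(v) = v\} \leq n / \lceil m^{1/3}\rceil^{\,i}$, where $\ell(\cdot)$ denotes the leaf count. The base case $\Lambda_1 \leq n/\lceil m^{1/3}\rceil$ follows from Lemma~\ref{lem:remove_trees} applied to the initial $\textsc{SubDFS}$ call together with the injection described below (taking $v$ to be a root of $\p_1$, the $\p$-subtree of $v$ has at most $n/\lceil m^{1/3}\rceil$ leaves, and $\ell(\p_{1,v})$ is bounded by this quantity). For the inductive step, consider any root $v'$ of $\p_{i+1}$; it lies in some $V'_i(v) \setminus V_{i,v}$, and I claim that $\ell(\p_{i+1,v'}) \leq \ell(\text{subtree of } v' \text{ in } \p_{i,v})$. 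This uses the following injection: if two distinct leaves $u_1,u_2$ of $\p_{i+1,v'}$ were in $\p$-ancestor-descendant relation, then the $\p$-path from $u_2$ upward to $v'$ would lie entirely in $V'_{i+1}$ (by definition of $V'_{i+1}(v')$), placing a $\p$-child of $u_1$ inside $V'_{i+1}(v')$ and contradicting $u_1$ being a $\p_{i+1,v'}$-leaf; hence distinct leaves of $\p_{i+1,v'}$ have disjoint $\p_{i,v}$-subtrees, each containing at least one $\p_{i,v}$-leaf, yielding an injection. Combining with Lemma~\ref{lem:remove_trees} applied to the $\textsc{SubDFS}$ call on $\p_{i,v}$ gives $\Lambda_{i+1} \leq \Lambda_i / \lceil m^{1/3}\rceil$.

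To conclude, with $r = \lceil 3/\delta\rceil + 2$ we have $\lceil m^{1/3}\rceil^{\,r+1} \geq m^{(r+1)/3} \geq m^{1/\delta + 1} = n \cdot m$, so $\Lambda_r \leq n / \lceil m^{1/3}\rceil^{\,r} \leq 8\lceil m^{1/3}\rceil$. In iteration $r$, every $\textsc{SubDFS}$ call runs on some $\p_{r,v}$ with $\ell(\p_{r,v}) \leq 8\lceil m^{1/3}\rceil = 8t$. Inside $\textsc{LeafSampling}$ this triggers the case of line~\ref{sta:when_L_is_small}, so $S = L$ equals all leaves of $\p_{r,v}$; then in $\textsc{SubDFS}$ the returned vertex set is the union of the $\textsc{MultiPath}$ outputs $P_i$, which covers every $\p_{r,v}$-ancestor of every leaf, i.e., all of $V'_r(v)$. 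Therefore $V_{r,v} = V'_r(v)$ for every root $v$ of $\p_r$, giving $V_r = V$.

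The main technical obstacle is the injection argument underlying the decay of $\Lambda_i$: leaves of $\p_{i+1,v'}$ need not be leaves of the original tree $\p$, and one must use the ``$\p$-ancestor-closedness'' of each $V'_{i+1}(v')$ (inherited from the definition of $\p_{i+1}$) to rule out ancestor-descendant pairs among its leaves, and then relate the count to $\ell(\p_{i,v})$ rather than to $\ell(\p)$ so that the induction closes up cleanly against the bound from Lemma~\ref{lem:remove_trees}.
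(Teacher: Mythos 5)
Your proposal is essentially the same argument as the paper's: a union bound over all $\textsc{SubDFS}$ invocations (each good with probability $\geq 1-1/(100n^5)$), followed by an induction tracking the geometric decay of the maximum leaf count of the remaining subtrees, with the conclusion that after $r$ iterations no uncovered subtree can remain. The invariant you track, $\Lambda_i \leq n/\lceil m^{1/3}\rceil^{\,i}$, is just an index-shifted version of the paper's $|\leaves(\p_{i,v})| \leq n/n^{(i-1)\delta/3}$, and both force the recursion to terminate within $r=\lceil 3/\delta\rceil+2$ levels.

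The one place you are not fully rigorous is the union bound. You bound the number of $\textsc{SubDFS}$ calls by $1+\sum_{i=1}^r |\{v\in V'_i:\p_i(v)=v\}|\leq 1+rn$, which multiplied by $1/(100n^5)$ gives roughly $r/(100n^4)$, exceeding the target $1/(100n^4)$; you then hand-wave the extra factor of $r=O(1/\delta)$ into ``slack'' in Lemma~\ref{lem:remove_trees}. That slack does exist, but invoking it requires reopening the proof of Lemma~\ref{lem:properties_of_sampled_leaves}. The paper's counting is cleaner and avoids this entirely: each vertex $v$ is a root of $\p_i$ for at most one value of $i$, because Lemma~\ref{lem:remove_trees} guarantees that once $v$ is a root of $\p_i$ it is added to $V_{i,v}\subseteq V_i$ and therefore never appears in $V'_j$ for $j>i$. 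Hence there are at most $n+1$ distinct $\textsc{SubDFS}$ calls in total, not $1+rn$, and the union bound lands exactly on $1/(100n^4)$. I'd recommend replacing your count with this observation.

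Finally, your ``injection'' argument for the inductive step, while correct, is more machinery than needed. Since each $V_i$ is ancestor-closed in $\p$ (by the first conclusion of Lemma~\ref{lem:remove_trees}), $V'_{i+1}=V\setminus V_i$ is descendant-closed, so $V'_{i+1}(v')$ is exactly the full $\p$-subtree of $v'$ and $\p_{i+1,v'}$ agrees with $\p$ there. In particular the leaves of $\p_{i+1,v'}$ \emph{are} the $\p$-leaves (equivalently, the $\p_{i,v}$-leaves) lying under $v'$ --- not merely injected into them --- which is exactly the identification the paper uses without comment. Your extra step at the end arguing that $\Lambda_r\leq 8\lceil m^{1/3}\rceil$ suffices via the $|L|\leq 8t$ branch of $\textsc{LeafSampling}$ is also correct but unnecessary: with $r=\lceil 3/\delta\rceil+2$ your own computation gives $\Lambda_r<1$, forcing $V'_r=\emptyset$ directly, which is what the paper observes.
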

\begin{proof}
$\forall i\in[r],v\in V'_i$ with $\p_i(v)=v,$ let $\mathcal{E}_{i,v}$ be the event that $\forall u\in V'_i(v)\setminus V_{i,v},$ the number of leaves in the subtree of $u$ in $\p$ is at most $|\leaves(\p_{i,v})|/n^{\delta/3}.$
 Notice that due to Lemma~\ref{lem:remove_trees}, if $\p_i(v)=v,$ then $v$ will be in $V_i.$
 Thus, we use $\mathcal{E}_v$ to denote the event $\mathcal{E}_{i,v}.$
 By Lemma~\ref{lem:remove_trees}, $\mathcal{E}_{v}$ happens with probability at least $1-1/(100n^5).$
 By taking union bound over all $v,$ with probability at least $1-1/(100n^4),$ all the events $\mathcal{E}_v$ will happen.
\begin{claim}\label{cla:no_leaf}
Condition on all the events $\mathcal{E}_v$ happen. $\forall i\in[r],v\in V'_i$ with $\p_i(v)=v,$ we have $|\leaves(\p_{i,v})|\leq n/n^{(i-1)\delta/3}.$
\end{claim}
\begin{proof}
When $i=1,$ the claim is obviously true, since $|\leaves(\p_{i,v})|\leq n.$
Suppose the claim holds for $i-1.$
Let $v\in V'_i$ with $\p_i(v)=v.$
There must be $v'\in V'_{i-1}$ with $\p_{i-1}(v')=v',$ and $v\in V'_{i-1}(v')\setminus V_{i-1,v'}.$
Since $\mathcal{E}_{v'}$ happens, the number of leaves in the subtree of $v$ in $\p$ is at most $|\leaves(\p_{i-1,v'})|/n^{\delta/3}\leq n/n^{(i-1)\delta/3}.$
\end{proof}
If $V'_r\not=\emptyset,$ then $\exists v\in V'_r$ with $\p_i(v)=v$ and $|\leaves(\p_{i,v})|\geq 1.$
If all the events $\mathcal{E}_v$ happens, it will contradict to Claim~\ref{cla:no_leaf}.
Thus, if all the events $\mathcal{E}_v$ happens, $V'_r$ must be $\emptyset,$ and thus $V_r=V$ which implies that the procedure will not fail.
\end{proof}
    \subsection{Range Minimum Query}
Range Minimum Query (RMQ) problem is defined as following: given a sequence of $n$ numbers $a_1,a_2,\cdots,a_n,$ the goal is to preprocess the sequence $a$ to get a data structure such that for any query $(p,q),(p<q)$ we can efficiently find the element which is the minimum in $a_p,a_{p+1},\cdots,a_{q}.$ A classic method is to preprocess a sparse table $f$ in $\log(n)$ number of iterations such that $\forall i\in [n],j\in [\lceil \log n\rceil]\cup\{0\}, $ $f_{i,j}=\arg\min_{i\leq i'\leq \min(n,i+2^j-1)} a_i.$ To answer query for $(p,q),$ it just needs to return $\arg\min_{i\in\{f_{p,j^*},f_{q-2^{j^*}+1,j^*}\}}a_i$ for $j^*=\lfloor\log(q-p+1)\rfloor .$ In this section, we firstly show a modified data structure.
We will compute $\wh{f}_{i,j}=\arg\min_{i\leq i'\leq \min(n,i+\lceil n^{\delta}\rceil^j -1)} a_{i'}$
The Algorithm is shown in Algorithm~\ref{alg:st_rmq}. Then we show how to use $\wh{f}$ to compute $f$ in Algorithm~\ref{alg:st_rmq2}.

\begin{algorithm}[h]
\caption{A Sparser Table for RMQ}\label{alg:st_rmq}
\begin{algorithmic}[1]
\Procedure{SparseTable$^+$}{$a_1,a_2,\cdots,a_n,\delta$} \Comment{ Lemma~\ref{lem:sparser_table} }
\State \Comment{ Output: $\wh{f}_{i,j}$ for $i\in[n],j\in \{0\}\cup[\lceil 1/\delta \rceil]$}
\State Initially, for all $i\in[n]$ let $\wh{f}_{i,0}=i.$ $\forall i>n,j\in\mathbb{Z},$ let $\wh{f}_{i,j}=0,$ and let $a_0=\infty.$ Let $m=\lceil n^\delta\rceil.$
\State For $t\in [\lceil1/\delta\rceil],$ let $S_t=\{x\mid \exists y\in[m-1],x=y\cdot m^t\}.$
\For{$l=1\rightarrow \lceil 1/\delta \rceil$}
    \For{$j=0\rightarrow \lceil n/m \rceil$}
        \State $i\leftarrow j\cdot m+1.$
        \State $z^*_{j,l}\leftarrow \arg\min_{z:t\in[l-1],x\in S_t,z=\wh{f}_{j\cdot m+1+x,t}} a_z.$ \label{sta:rmq_zstar}
        \For{$i'=0\rightarrow \min(m-1,n-i)$}
            \State $T \leftarrow \{x\in\mathbb{Z}\mid i+i' \leq x\leq i+m-1\}\cup\{x\in\mathbb{Z}\mid i+m^l \leq x\leq i+m^l+i'-1\}\cup\{z^*_{j,l}\}$
            \State $\wh{f}_{i+i',l}=\arg\min_{z\in T} a_z.$  \label{sta:assign_value}
        \EndFor
    \EndFor
    \State $l\leftarrow l+1.$
\EndFor
\State \Return $\wh{f}_{i,j}$ for $i\in[n],j\in \{0\}\cup[\lceil 1/\delta \rceil].$
\EndProcedure
\end{algorithmic}
\end{algorithm}

\begin{lemma}\label{lem:sparser_table}
Let $a_1,a_2,\cdots,a_n$ be a sequence of numbers, and $\delta\in(0,1).$ Let $\{\wh{f}_{p,q}\}$ be the output of $\textsc{SparseTable}^+(a_1,a_2,\cdots,a_n,\delta)$ (Algorithm~\ref{alg:st_rmq}). Then $\forall p\in[n],q\in\{0\}\cup [\lceil1/\delta\rceil],$ $\wh{f}_{p,q}=\arg\min_{p\leq p'\leq \min(n,i+\lceil n^{\delta}\rceil^q -1)} a_{p'}.$
\end{lemma}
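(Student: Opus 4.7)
I will prove the claim by induction on $l \in \{0, 1, \ldots, \lceil 1/\delta\rceil\}$, with the inductive hypothesis that for every index $p \in [n]$ and every $l' < l$, the entry $\wh{f}_{p, l'}$ equals $\arg\min_{p \le p' \le \min(n, p + m^{l'} - 1)} a_{p'}$, where $m = \lceil n^{\delta}\rceil$. The base case $l = 0$ is immediate from the initialization $\wh{f}_{p, 0} = p$, which records the argmin of the singleton range $\{p\}$.

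For the inductive step, fix a block index $j$, set $i = j\cdot m + 1$, and consider an offset $0 \le i' \le \min(m-1, n-i)$ so that $p = i + i'$. I need to argue that the set $T$ constructed in line~\ref{sta:assign_value} covers exactly the target range $[p, \min(n, p + m^l - 1)]$. I would split this into three sub-ranges: the initial chunk $[i+i', i+m-1]$ that is included directly, the final chunk $[i+m^l, i+m^l+i'-1]$ that is also included directly, and the ``middle'' chunk $[i+m, i+m^l-1]$ that must be handled via $z^*_{j,l}$. The main computation is therefore verifying that $z^*_{j,l}$ is the argmin of $a$ over this middle chunk.

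The key observation for the middle chunk is that the sets $S_t = \{y\cdot m^t : y \in [m-1]\}$, together with the candidate entries $\wh{f}_{i + y\cdot m^t,\, t}$, give a clean telescoping tiling of $[i+m, i+m^l-1]$. By the inductive hypothesis, the candidate $\wh{f}_{i + y\cdot m^t,\, t}$ is the argmin over $[i + y m^t,\, i + y m^t + m^t - 1]$. Holding $t$ fixed and letting $y$ range over $[m-1]$ yields the tile $[i + m^t,\, i + m^{t+1} - 1]$; letting $t$ range over $[l-1]$ then telescopes these tiles into $[i+m,\, i+m^l - 1]$. Hence $z^*_{j,l}$, as the argmin over all these candidates, is the argmin of $a$ over the middle chunk. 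Combining the three chunks yields coverage of $[p, p+m^l-1]$, so the $\arg\min$ over $T$ computed in line~\ref{sta:assign_value} equals the desired quantity.

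The one subtlety I expect to need to handle carefully is the boundary near $n$: when $p + m^l - 1 > n$, both the explicit upper chunk $[i+m^l, i+m^l+i'-1]$ and some of the tiles used by $z^*_{j,l}$ may reference indices exceeding $n$. This is exactly why the algorithm adopts the conventions $a_0 = \infty$ and $\wh{f}_{i,j} = 0$ for $i > n$: any out-of-range index $x$ contributes $a_x$ or $a_{\wh{f}_{x,\cdot}} = a_0 = \infty$ to the minimum, which is harmless. A short case analysis confirms that the truncation to $\min(n, p + m^l - 1)$ in the claim is respected. This boundary bookkeeping is the only tedious part; the rest of the argument is the clean telescoping described above.
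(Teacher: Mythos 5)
Your proposal is correct and takes essentially the same approach as the paper's proof: induction on $l$, observing that $\wh{f}_{p,l}$ is assigned at $j=\lfloor(p-1)/m\rfloor$, $i'=(p-1)\bmod m$, that $z^*_{j,l}$ covers the middle block via the inductive hypothesis on the tiles $\wh{f}_{jm+1+ym^t,t}$, and that the explicit sets in $T$ supply the two boundary chunks. Your write-up is actually a bit more careful than the paper's (which has a minor off-by-one typo in stating the range covered by $z^*_{j,l}$), but the argument is the same.
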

\begin{proof}
The proof is by induction on $q$. When $q=0,$ the statement obviously holds for all $\wh{f}_{p,0}.$ Suppose all $p\in[n],\wh{f}_{p,0},\wh{f}_{p,1},\cdots,\wh{f}_{p,q-1}$ satisfy the property. The first observation is that the value of $\wh{f}_{p,q}$ will be assigned in the procedure when $l=q,j=\lfloor (p-1)/m\rfloor,i'=(p-1)\mod m.$ Then by line~\ref{sta:rmq_zstar}, $z^*_{j,l}$ will be the position of the minimum value in $a_{j\cdot m+m},a_{j\cdot m+m+2},\cdots,a_{j\cdot m+m^l-1}$ by our induction hypothesis. Then by line~\ref{sta:assign_value}, $\wh{f}_{i+i',l}$ will be the position of the minimum value in $a_{j\cdot m+i'+1},a_{j\cdot m+i'+2},\cdots,a_{j\cdot m+i'+m^l}.$ Thus, Since $j\cdot m+i'+1=p,$ $\wh{f}_{p,q}$ satisfies the property.
\end{proof}

\begin{algorithm}[!t]
\caption{A Sparse Table for RMQ}\label{alg:st_rmq2}
\begin{algorithmic}[1]
\Procedure{SparseTable}{$a_1,a_2,\cdots,a_n,\delta$} \Comment{Lemma~\ref{lem:st_rmq2}}
\State \Comment{ Output: $f_{i,j}$ for $i\in[n],j\in \{0\}\cup[\lceil \log n \rceil].$}
\State Initially, for all $i\in[n]$ let $f_{i,0}=i.$ $\forall i>n,j\in\mathbb{Z},$ let $f_{i,j}=0,$ and let $a_0=\infty.$ Let $m=\lceil n^\delta\rceil.$
\State Let $\{\wh{f}_{p,q}\mid p\in[n],q\in\{0\}\cup\lceil1/\delta\rceil\} = \textsc{SparseTable}^+(a_1,a_2,\cdots,a_n,\delta).$\label{sta:alg_st_rmq2_sttable} \Comment{Algorithm~\ref{alg:st_rmq}}
\State Let all undefined $\wh{f}_{p,q}$ be $0$.
\For{$t\in [\lceil\log n\rceil]$} 
    \If{ $2^t\leq m$}
        \State $k_t \leftarrow -1$
        \State $S_t\leftarrow \emptyset$ 
    \Else
        \State $k_t \leftarrow \lfloor\log_{m}(2^t-m)\rfloor$
        \State {\small $S_t \leftarrow \{x ~|~ x\in [2^t-m-m^{k_t}+1]\text{~s.t.~}x \equiv 1 { \pmod {m^{k_t}} } \text{~or~}(2^t-m-x)\equiv-1 \pmod {m^{k_t}}\}$ }
    \EndIf
\EndFor
\For{$j=0\rightarrow \lceil n/m\rceil$}\label{sta:alg_st_rmq2_loop}
    \For{$t=0\rightarrow \lceil\log n\rceil$}
        \State $i\leftarrow j\cdot m+1.$
        \State $z^*_{j,t}\leftarrow \arg\min_{z:x\in S_t,z=\wh{f}_{j\cdot m+m+x,k_t}} a_z.$
        \For{$i'=0\rightarrow \min(m-1,n-i)$}
            \State $T_1 \leftarrow \{x\in\mathbb{Z} ~|~ i+i' \leq x\leq \min(i+m-1,i+i'+2^t-1)\}$
            \State $T_2 \leftarrow \{x\in\mathbb{Z} ~|~ \max(i+2^t,i+i') \leq x\leq i+2^t+i'-1\}$ 
            \State $T \leftarrow T_1 \cup T_2 \cup \{ z^*_{j,t} \}$
            \State $f_{i+i',t}=\arg\min_{z \in T} a_z.$\label{sta:sparse_table_iiprime}
        \EndFor
    \EndFor
\EndFor
\State \Return $f_{i,j}$ for $i\in[n],j\in \{0\}\cup[\lceil \log n \rceil].$
\EndProcedure
\end{algorithmic}
\end{algorithm}

\begin{lemma}\label{lem:st_rmq2}
Let $a_1,a_2,\cdots,a_n$ be a sequence of numbers, and $\delta\in(0,1).$ Let $\{f_{p,q}\}$ be the output of $\textsc{SparseTable}(a_1,a_2,\cdots,a_n,\delta)$ (Algorithm~\ref{alg:st_rmq2}). Then $\forall p\in[n],q\in\{0\}\cup [\lceil\log n\rceil],$ $f_{p,q}=\arg\min_{p\leq p'\leq \min(n,i+2^q -1)} a_{p'}.$
\end{lemma}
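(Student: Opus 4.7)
The plan is to reduce to Lemma~\ref{lem:sparser_table} by exhibiting, for every pair $(p,q)$, a decomposition of the target range $[p,\min(n,p+2^q-1)]$ into three groups of sub-ranges that are exactly the ones whose argmins the algorithm takes a minimum over when it assigns $f_{p,q}$. Once such a decomposition is established, the correctness of $f_{p,q}$ is immediate: by Lemma~\ref{lem:sparser_table}, each $\wh{f}_{\cdot,k_t}$ contributes the argmin over a range of length $m^{k_t}$; the values in $T_1$ and $T_2$ are singletons; hence taking argmin of the collection produces the argmin over the union.

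First I would fix the correspondence with the notation in Algorithm~\ref{alg:st_rmq2}. Write $p = i+i'$ with $i = jm+1$ and $0\le i' \le m-1$, and let $t = q$. The output $f_{p,t}$ is set at line~\ref{sta:sparse_table_iiprime} as the argmin of $a$ restricted to $T_1 \cup T_2 \cup \{z^*_{j,t}\}$, where $z^*_{j,t}$ is itself the argmin of the values $\{\wh{f}_{jm+m+x,k_t} : x\in S_t\}$. Using Lemma~\ref{lem:sparser_table}, each term $\wh{f}_{jm+m+x,k_t}$ is the argmin over the range $[i+m-1+x,\; i+m-1+x+m^{k_t}-1]$. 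My key task is therefore to show that the union
\[
T_1 \;\cup\; T_2 \;\cup\; \bigcup_{x\in S_t} [\,i+m-1+x,\; i+m-1+x+m^{k_t}-1\,]
\]
equals $[p,\min(n,p+2^t-1)]$.

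The argument naturally splits into two cases. In the easy case $2^t\le m$, we have $k_t=-1$ and $S_t=\emptyset$, so only $T_1$ and $T_2$ contribute; a direct check shows that $T_1$ covers $[p,\min(i+m-1,p+2^t-1)]$ and $T_2$ covers $[\max(i+2^t,p),p+2^t-1]$, whose union is $[p,p+2^t-1]$ because $i+m \le i+2^t$ is false (rather $i+2^t \le i+m$), and a quick interval-pasting calculation closes the case. The interesting case is $2^t>m$. Here $T_1$ covers the prefix $[p,i+m-1]$ and $T_2$ covers the suffix $[i+2^t,p+2^t-1]$, and we must show that the middle block $[i+m,i+2^t-1]$ of length $2^t-m$ is covered by the $S_t$-indexed intervals. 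The forward offsets $x\equiv 1\pmod{m^{k_t}}$ in $S_t$ produce intervals $[i+m,\;i+m+m^{k_t}-1],[i+m+m^{k_t},\;i+m+2m^{k_t}-1],\dots$ which together cover $[i+m,\; i+m+L\cdot m^{k_t}-1]$ where $L=\lfloor (2^t-m)/m^{k_t}\rfloor\ge 1$ (using $m^{k_t}\le 2^t-m<m^{k_t+1}$). Symmetrically, the backward offsets $x\equiv 2^t-m+1\pmod{m^{k_t}}$ produce intervals whose union is $[i+2^t-L\cdot m^{k_t},\,i+2^t-1]$. Since $2L\cdot m^{k_t}\ge 2^t-m$, these two unions overlap (or abut) and therefore cover the entire middle block.

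The main obstacle I anticipate is precisely this indexing/modular arithmetic book-keeping for the $S_t$ offsets: making sure the off-by-one shifts introduced by the $jm+m+x$ indexing line up with the $[i+m,i+2^t-1]$ middle block, and verifying that the forward and backward chains truly meet. Once this is checked, the rest is routine: the boundary case where $p+2^t-1>n$ is handled because positions beyond $n$ hold the sentinel value $a_0=\infty$ set during initialization, so they do not corrupt the argmin. Combining the decomposition with Lemma~\ref{lem:sparser_table} yields $f_{p,q}=\arg\min_{p\le p'\le \min(n,p+2^q-1)} a_{p'}$, as required.
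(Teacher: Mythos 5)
Your proposal is correct and takes essentially the same route as the paper: reduce to Lemma~\ref{lem:sparser_table}, observe that $f_{p,q}$ is assigned at $t=q$, $j=\lfloor(p-1)/m\rfloor$, $i'=(p-1)\bmod m$, and decompose $[p,\min(n,p+2^q-1)]$ into $T_1$, $T_2$, and the middle block $[i+m,\,i+2^t-1]$ covered by $z^*_{j,t}$. The paper simply asserts the claim $z^*_{j,t}=\arg\min_{jm+m+1\le z\le jm+2^t}a_z$ ``by the definition of $S_t$'', whereas you spell out the verification that the forward chain $x\equiv 1\pmod{m^{k_t}}$ and the backward chain $x\equiv 2^t-m+1\pmod{m^{k_t}}$ meet (using $L=\lfloor(2^t-m)/m^{k_t}\rfloor\ge 1$, so $2Lm^{k_t}\ge 2^t-m$); this is a welcome elaboration of a step the paper elides.
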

\begin{proof}
Let $m=\lceil n^{\delta}\rceil.$ By Lemma~\ref{lem:sparser_table}, $\forall x\in[n],y\in\{0\}\cup [\lceil1/\delta\rceil],$ $\wh{f}_{x,y}=\arg\min_{x\leq x'\leq \min(n,i+m^y -1)} a_{x'}.$ Thus, by the definition of $S_t,$ we know $z^*_{j,t}=\arg\min_{j\cdot m+m+1\leq z\leq j\cdot m+2^t} a_z.$ An observation is that the value of $f_{p,q}$ will be assigned in the procedure when $t=q,j=\lfloor (p-1)/m\rfloor,i'=(p-1)\mod m.$ By line~\ref{sta:sparse_table_iiprime}, we know $$f_{p,q}=f_{i+i',t}=\arg\min_{z:i+i'+1\leq z\leq i+i'+2^t} a_z=\arg\min_{p\leq p'\leq \min(n,i+2^q -1)} a_{p'}.$$
\end{proof}

\subsection{Applications of DFS Sequence}\label{sec:dfs_app}

In this section, we briefly discuss some applications of the DFS sequence of a tree.

Since the DFS sequence of a subtree should be a continuous subsequence of the DFS sequence of the tree, one direct application of the DFS sequence is to compute the size of each subtree, i.e. for each subtree with root $v$, we can find the first place $v$ appeared and the last place $v$ appeared, and then calculate the vertices between those two appearances. 

Another application of the DFS sequence and the range minimum query is to output a data structure which can answer any LCA query in $O(1)$ time (for both sequential and parallel). This is better than the data structure provided by Section~\ref{sec:lca_multipath} which needs $O(\log D)$ time (for both sequential and parallel) to answer the query.

Since it is easy to output a data structure which can answer the depth of each vertex in $O(1)$ time (in both sequential and parallel), together with the lowest common ancestor data structure, we can answer the query of the tree distance between any two vertices in $O(1)$ time (for both sequential and parallel).                
\section{The $\MPC$ Model}
\label{sec:MPCmodel}
In this section, let us introduce the computational model studied in this paper.
Suppose we have $p$ machines indexed from $1$ to $p$ each with memory size $s$ words, where $n$ is the number of words of the input and $p\cdot s=O(n^{1+\gamma}),s=\Theta(n^{\delta}).$
Here $\delta\in(0,1)$ is a constant, $\gamma\in\mathbb{R}_{\geq 0},$ and a word has $\Theta(\log(s\cdot p))$ bits.
Thus, the total space in the system is only $O(n^{\gamma})$ factor more than the input size $n$, and each machine has local memory size sublinear in $n$.
When $0\leq \gamma\leq O(1/\log n),$ the total space is just linear in the input size.
 The computation proceeds in rounds.
 At the beginning of the computation, the input is distributed on the local memory of $\Theta(n/s)$ input machines.
 Input machines and other machines are identical except that input machine can hold a part of the input in its local memory at the beginning of the computation while each of other machines initially holds nothing.
 In each round, each machine performs computation on the data in its local memory, and sends messages to other machines (including the sender itself when it wants to keep the data) at the end of the round. Although any two machines can communicate directly in any round, the total size of messages (including the self-sent messages) sent or received of a machine in a round is bounded by $s,$ its local memory size. In the next round, each machine only holds the received messages in its local memory.
  At the end of the computation, the output is distributed on the output machines.
  Output machines and other machines are identical except that output machine can hold a part of the output in its local memory at the end of the computation while each of other machines holds nothing.
  We call the above model $(\gamma,\delta)-\MPC$ model.
  The model is exactly the same as the model $\MPC(\varepsilon)$ defined by~\cite{bks13} with $\eps=\gamma/(1+\gamma-\delta)$ and the number of machines $p=O(n^{1+\gamma-\delta})$.
  Since we care more about the total space used by the algorithm, we use $(\gamma,\delta)$ to characterize the model, while in~\cite{bks13} they use parameter $\varepsilon$ to characterize the repetition of the data.
  The main complexity measure is the number of rounds $R$ required to solve the problem.


\subsection{Basic $\MPC$ Algorithms}
\paragraph{Sorting} One of the most important algorithms in $\MPC$ model is sorting. The following theorem shows that there is an efficient sorting algorithm.
\begin{theorem}[\cite{gsz11,g99}] \label{thm:sorting}
 Sorting can be solved in $c/\delta$ rounds in $(0,\delta)-\MPC$ model for any constant $\delta\in(0,1)$, where $c\geq 0$ is a universal constant.  
 Precisely, there is an algorithm $\mathcal{A}$ in $(0,\delta)-\MPC$ model such that for any set $S$ of $n$ comparable items stored $O(n^{\delta})$ per machine on input machines, $\mathcal{A}$ can run in $c/\delta$ rounds and leave the $n$ items sorted on the output machines, i.e. the ouput machine with smaller index holds a smaller part of $O(n^{\delta})$ items.
\end{theorem}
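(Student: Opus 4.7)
The plan is to prove the theorem by designing a distribution-based (sample) sort that completes in a constant number of rounds depending only on $1/\delta$. Throughout, let $s=\Theta(n^\delta)$ be the memory per machine and $p=\Theta(n/s)=\Theta(n^{1-\delta})$ the number of machines (total space $O(n)$, so $\gamma=0$). The high-level idea is: (1) pick $p-1$ roughly equally-spaced \emph{splitters} from $S$ by sampling; (2) broadcast the splitters to every machine; (3) in one round, each input machine ships its items to the output machine whose bucket they belong to; (4) each output machine locally sorts its $O(n^\delta)$ items. The output machines are indexed in sorted order of the buckets, which yields the desired output layout.

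First I would implement two generic $\MPC$ primitives, each of which runs in $O(1/\delta)$ rounds by exploiting an $s$-ary tree of machines. \emph{Aggregation}: to gather $O(p)$ items at a single machine, machines are grouped into blocks of size $s$; in each round every block of size $s$ collapses to one machine, reducing the number of live senders by a factor of $s$. After $\log_s p = O(1/\delta)$ rounds everything sits on one machine. \emph{Broadcast}: the symmetric procedure in reverse, where one machine sends $s$ copies of its contents to $s$ machines, each of whom then forwards to $s$ more, and so on, again in $\log_s p=O(1/\delta)$ rounds. Both primitives respect the $s$-word per-machine communication cap.

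Next I would pick splitters. Each input machine independently samples each of its items with probability $q=\Theta((\log n)/n^\delta)$; by a Chernoff/Hoeffding argument, the total sample size is $\Theta(p\log n)$ with high probability and each machine holds $O(\log n)$ samples, so aggregation to machine $1$ succeeds within the memory budget. Machine $1$ sorts the sample locally and outputs the $p-1$ splitters at ranks $\lceil i|\text{sample}|/p\rceil$ for $i\in[p-1]$. Standard order-statistic concentration then gives that, with high probability, every bucket induced by the splitters contains at most $O(n^\delta)$ items of $S$ (say at most $c\cdot n^\delta$ for some constant $c$). I would broadcast these $p-1$ splitters to every machine using the broadcast primitive.

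After broadcast, each input machine partitions its $O(n^\delta)$ items into buckets by binary search over the splitters, tags each item with its bucket index, and in a single communication round routes it to the output machine of that index; because each bucket has size $O(n^\delta)$ whp, this round respects the memory bound. Each output machine then locally sorts. The main technical obstacle is controlling the bucket sizes when the item distribution is adversarial: since items are arbitrary comparables, this is handled by sampling items themselves (not coordinates), which gives the required balance independent of distribution via an exchangeability argument. A secondary obstacle is the failure probability of the sampling; I would drive it below $1/\poly(n)$ by boosting $q$ to $\Theta((\log n)/n^\delta)$ and, for a worst-case guarantee, detecting an overflowing bucket and recursing on it (each overflow recursion happens on a $(1-\Omega(1))$-fraction smaller instance and terminates in $O(1)$ levels whp, or deterministically in $O(1/\delta)$ additional levels by repartitioning with an exact quantile algorithm such as the one in~\cite{gsz11}). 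Summing the round counts of aggregation, broadcast, routing, and at most $O(1/\delta)$ recursive calls gives a total of $c/\delta$ rounds for a universal constant $c$, which is the claimed bound.
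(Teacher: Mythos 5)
The paper does not prove this theorem; it is imported verbatim from~\cite{gsz11,g99}, so there is no internal proof to compare against, only the strategy used in those references. Your single-level sample-sort sketch is the right \emph{intuition} for the theorem, and it is close to how the cited works proceed, but as written it has a gap that is fatal exactly in the regime the theorem is meant to cover.

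The problem is that when $\delta<1/2$ (which the theorem explicitly allows, since $\delta$ is any constant in $(0,1)$), the number of machines $p=\Theta(n^{1-\delta})$ exceeds the per-machine memory $s=\Theta(n^\delta)$. Your plan needs machine $1$ to hold a sample of size $\Theta(p\log n)=\Theta(n^{1-\delta}\log n)$, and then every machine to hold $p-1=\Theta(n^{1-\delta})$ splitters; neither fits in $O(n^\delta)$ words once $\delta<1/2$. Your aggregation primitive has the same flaw: after the first tree level each surviving machine already carries up to $s$ items, so the next level would require $s^2$ memory at the receiver, and more generally it is information-theoretically impossible to gather $\Theta(p)$ items on one machine when $p>s$. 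So the ``broadcast $p-1$ splitters and route in one round'' step simply cannot be implemented, and the $O(1/\delta)$ recursion you invoke only as a \emph{fallback} for overfull buckets is in fact the load-bearing idea. What the cited constructions actually do is partition the keyspace into only $O(s)$ groups at each level (so the $O(s)$ splitters \emph{do} fit in a machine and \emph{can} be broadcast), assign each group a contiguous block of machines, and recurse within each block; since each level shrinks the instance by a factor of $\Theta(s)=\Theta(n^\delta)$, the recursion has depth $\log_s n=\Theta(1/\delta)$, which is precisely where the $c/\delta$ in the statement comes from. If you restructure your argument around that $s$-way hierarchical split as the \emph{main} mechanism (with your sampling and concentration arguments applied at each level to balance the $O(s)$ buckets), the proof goes through.
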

Notice that for any $\delta'\geq \delta,$ $O(1)$ number of machines with $\Theta(n^{\delta'})$ memory can always simulate the computation of $O(n^{\delta'-\delta})$ number of machines with $\Theta(n^{\delta})$ memory.
Thus, if an algorithm $\mathcal{A}$ can solve a problem in $(\gamma,\delta)-\MPC$ model in $R(n)$ rounds, then $\mathcal{A}$ can be simulated in $(\gamma',\delta')-\MPC$ model still in $R(n)$ rounds with all $\gamma'\geq \gamma,\delta'\geq \delta.$

\paragraph{Indexing} In the indexing problem, a set $S=\{x_1,x_2,\cdots,x_n\}$ of $n$ items are stored $O(n^{\delta})$ per machine on input machines. The output is 
\begin{align*}
S'=\{(x,y) ~|~ x\in S,y-1\text{ is the number of items before }x\}
\end{align*} 
of $n$ pairs stored $O(n^{\delta})$ per machine on output machines. Here, ``an item $x'\in S$ is before $x\in S$'' means that $x'$ is held by a input machine with a smaller index, or $x',x$ are stored in the same input machine but $x'$ has a smaller local memory address.
\paragraph{Prefix sum} In the prefix sum problem, a set $S=\{(x_1,y_1),(x_2,y_2),\cdots,(x_n,y_n)\}$ of $n$ (item, number) pairs are stored $O(n^{\delta})$ per machine on input machines. The output is 
\begin{align*}
S'= \left\{(x,y') ~\bigg|~ (x,y)\in S,y'-y=\sum_{(\wt{x},\wt{y})\text{~is~before~}(x,y)}\wt{y} \right\}
\end{align*} of $n$ pairs stored $O(n^{\delta})$ per machine on output machines. Here, ``an pair $(\wt{x},\wt{y})\in S$ is before $(x,y)\in S$'' means that $(\wt{x},\wt{y})$ is held by a input machine with a smaller index, or $(\wt{x},\wt{y}),(x,y)$ are stored in the same input machine but $(\wt{x},\wt{y})$ has a smaller local memory address. Notice that indexing problem is a special case of prefix sum problem.
\begin{theorem}[\cite{gsz11}]\label{thm:indexing}
Indexing/prefix sum problem can be solved in $c/\delta$ rounds in $(0,\delta)-\MPC$ model for any constant $\delta\in(0,1)$, where $c\geq 0$ is a universal constant.
\end{theorem}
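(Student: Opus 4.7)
The plan is to implement a balanced aggregation tree of branching factor $n^\delta$ (matching each machine's memory capacity) and depth $O(1/\delta)$, essentially a two-pass parallel prefix-sum. Since indexing is the special case of prefix sum with all $y_i=1$, it suffices to describe the algorithm for prefix sum.

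First I would have each of the $M=O(n^{1-\delta})$ input machines locally compute the sum of the $y$-values it holds; call machine $i$'s partial sum $\sigma_i$. Assigning to machines a label in $[M]$ matching the ``before'' order of the items (we can sort by label in one round, but the natural machine indexing already works), I group the machines into consecutive blocks of size $n^\delta$. One designated representative per block receives the $n^\delta$ values $\sigma_i$ from its block (this fits in memory $\Theta(n^\delta)$) and computes both their sum and all $n^\delta$ prefix sums within the block; it stores the latter so it can later answer ``sum of $y$-values strictly before position $j$ inside this block'' for any $j\le n^\delta$. Recursing, after level $\ell$ the number of surviving aggregators has shrunk by a factor of $n^{\delta\ell}$, so after $\lceil (1-\delta)/\delta\rceil = O(1/\delta)$ rounds a single root aggregator holds the global total together with a complete hierarchical decomposition of prefix sums at every level.

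Next I would perform the downward sweep. The root sends to each of its children the cumulative sum of its left siblings (computed in the upward pass); each child adds this offset to the prefix sums it stored internally, then forwards the appropriate offset to each of \emph{its} children. After another $O(1/\delta)$ rounds, every input machine $i$ receives the exact value $\mathrm{pre}_i = \sum_{j<i}\sigma_j$, i.e. the total $y$-mass strictly preceding its local chunk. In one final local round, machine $i$ computes $y'_k = \mathrm{pre}_i + \sum_{(\tilde x,\tilde y)\le (x_k,y_k)\text{ in this machine}}\tilde y$ for each pair it holds, producing the required output on the output machines.

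The overall round count is $O(1/\delta)$, matching the claim. The main subtleties to check are (i) at every level of the tree the aggregator's fan-in is at most $n^\delta$, so the ``size $s=\Theta(n^\delta)$'' memory and communication cap of the $(0,\delta)$-$\MPC$ model is respected both in the upward and downward passes; (ii) the routing of messages between parent and child aggregators can be implemented because a fixed deterministic labeling of aggregators lets each machine know the identity of its parent/children without extra communication; and (iii) total space stays $O(n)$, keeping us in the $\gamma=0$ regime since each level has $O(n^{1-\delta})$ messages of constant size and each machine re-uses its $\Theta(n^\delta)$ memory across rounds. The only mildly nontrivial step is the local book-keeping in the upward pass that lets the downward pass propagate offsets in depth $O(1/\delta)$; this is handled cleanly by storing, at every aggregator, the prefix sums of the values it received from its children, which is exactly the base indexing/prefix-sum problem on $n^\delta$ inputs and is solved trivially in one local round.
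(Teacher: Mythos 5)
Your proposal is correct, and it is essentially the same $\Theta(n^\delta)$-ary aggregation tree with an upward and a downward sweep that underlies the cited result of Goodrich, Sitchinava, and Zhang; the paper itself does not prove Theorem~\ref{thm:indexing} but only cites it, and it sketches precisely this bottom-up/top-down tree for the closely related predecessor primitive in Section~\ref{sec:MPCmodel}. The bookkeeping you flag (fan-in $\le n^\delta$ at every level, deterministic parent/child addressing, total extra space $O(n^{1-\delta})$ so the $\gamma=0$ budget is respected) is exactly the right set of things to check and all checks go through.
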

Once each item has an index, it is able to reallocate them onto the machines.

\paragraph{Load balance}
Sometimes, local computations of a machine may generate new data.
When some machines are not able to keep the new data generated, we need to do loading balance.
Fortunately, this operation can be done in constant number of rounds of computations.

For arbitrary constant $\delta\in(0,1),$ we are able to spend constant number of rounds to reallocate the data in $(0,\delta)-\MPC$ model such that if a machine is not empty, the size of its local data is at least $n^{\delta}/k$ and is at most $2n^{\delta}/k$ where $k>1$ is an arbitrary constant. The method is very simple, we can use the algorithm mentioned in Theorem~\ref{thm:indexing} to index each data item, and then send them to the corresponding machine.

\paragraph{Predecessor}
In the predecessor problem, a set $S=\{(x_1,y_1),(x_2,y_2),\cdots,(x_n,y_n)\}$ of $n$ (item, $0/1$) pairs are stored $O(n^{\delta})$ per machine on input machines.
The output machines are all input machines.
If an input (also output) machine holds a tuple $(x_i,y_i)\in S$ at the beginning of the computation, then at the end of the computation, that machine should still hold the tuple $(x_i,y_i).$
In addition, if an input (also output) machine holds a tuple $(x,0)\in S$ at the beginning of the computation, then at the end of the computation, that machine should hold a tuple $(x,x')$ such that $(x',1)\in S,$ and $(x',1)$ is the last tuple occurred before $(x,0).$
Here, ``$(x',1)$ is before $(x,0)$'' means that $(x',1)$ is held by a input machine with a smaller index, or $(x',1),(x,0)$ are stored in the same input machine but $(x',1)$ has a smaller local memory address.
\begin{theorem}[\cite{gsz11}]\label{thm:predecessor}
 Predecessor problem can be solved in $c/\delta$ rounds in $(0,\delta)-\MPC$ model for any constant $\delta\in(0,1)$, where $c\geq 0$ is a universal constant.
\end{theorem}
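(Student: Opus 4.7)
The plan is to reduce the predecessor problem to prefix sum, which by Theorem~\ref{thm:indexing} already has an $O(1/\delta)$-round algorithm in the $(0,\delta)$-$\MPC$ model, and then to perform a constant-number-of-rounds ``lookup'' step to retrieve the actual predecessor value.

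First I would use the indexing primitive (Theorem~\ref{thm:indexing}) to assign each input tuple $(x_i,y_i)$ its global position $i\in[n]$; this takes $O(1/\delta)$ rounds. Next I would define an auxiliary sequence $a_1,a_2,\ldots,a_n$ with $a_i = i$ if $y_i=1$ and $a_i=0$ otherwise, and compute $b_i=\max_{j\le i} a_j$ for every $i$. This ``prefix-max'' is an associative scan, so the same divide-and-conquer that yields prefix sum in Theorem~\ref{thm:indexing} gives it in $O(1/\delta)$ rounds: partition the $n$ items into $n^{1-\delta}$ blocks of size $n^{\delta}$, compute the local max per block on one machine, recursively prefix-max the block-summaries (a problem of size $n^{1-\delta}$, so the recursion bottoms out after $O(1/\delta)$ levels), then fold the block prefixes back into each block by one more local pass. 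By construction, for every $i$ with $y_i=0$ the value $b_i$ is exactly the position of the last $1$-tuple appearing before $(x_i,0)$, and $b_i=0$ when no such tuple exists.

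The remaining task is, given for each position $i$ the index $b_i$, to recover the actual item $x_{b_i}$ at that position and place the tuple $(x_i,x_{b_i})$ on the machine that originally held $(x_i,0)$. I would implement this as a routing/lookup: every position $i$ with $y_i=0$ emits a request record $(i,b_i)$, every position $j$ with $y_j=1$ emits a supply record $(j,x_j)$, and the goal is to match each request $(i,b_i)$ with the supply record whose first coordinate equals $b_i$. This is a standard join, realizable by sorting all $O(n)$ records by the key (position on the supply side, $b_i$ on the request side) via Theorem~\ref{thm:sorting}, pairing adjacent supply/request records after the sort, and finally sorting the answers $(i,x_{b_i})$ back by $i$ so each request lands on the machine that issued it. Each of these sorts and local matchings is $O(1/\delta)$ rounds.

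Composing the three phases (index, prefix-max, lookup) gives a total of $O(1/\delta)$ rounds, which matches the claim $c/\delta$. The main obstacle in making this rigorous is not the scan itself but the bookkeeping in the lookup phase: one must verify that the number of outstanding records at every stage remains $O(n)$ so that total memory $O(n)$ suffices, and that the final re-routing actually deposits the answer $(x_i,x_{b_i})$ on the unique output machine that already holds $(x_i,y_i)$, as demanded by the predecessor specification. Both facts follow from the stability of the sorting primitive in Theorem~\ref{thm:sorting} together with the fact that positions are unique, so the argument is essentially mechanical once the scan is in place.
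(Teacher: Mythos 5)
Your core plan --- a divide-and-conquer prefix scan with branching factor $\Theta(n^\delta)$, giving $O(1/\delta)$ levels --- matches the paper's approach, which builds a $\Theta(n^\delta)$-ary tree over the machines, does a bottom-up pass (each subtree aggregates the last tuple with $y=1$) and then a top-down pass (pushing to each leaf machine the last $y=1$ tuple appearing before it). So the scan itself is the right idea and the round count is right.

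The gap is in your lookup phase, and it is not merely ``mechanical bookkeeping'' as you suggest at the end. After sorting by key, all request records $(i,b_i)$ that share the same value $b_i=j$ form one contiguous block headed by the single supply record $(j,x_j)$. This is a one-to-many join: if $\Omega(n^{2\delta})$ positions have the same predecessor $j$, that block spans $\Theta(n^\delta)$ machines, and only the first of those machines holds $(j,x_j)$. ``Pairing adjacent supply/request records'' only resolves the requests that happen to land on that first machine; the rest never see their supply. To repair it you would need yet another prefix scan over the sorted order (mark supplies as $1$, requests as $0$, and propagate the last supply record forward) --- which is legitimate and still $O(1/\delta)$ rounds, but it is precisely the predecessor computation again, so it has to be done as a scan rather than invoked as a black box, and it has to be stated.

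The cleaner fix, and what the paper actually does, is to drop the index/lookup detour entirely: run the scan over the tuples themselves, with the associative ``combine'' operation ``keep the later-appearing tuple whose flag is $1$.'' The bottom-up/top-down tree pass then directly delivers to each output machine the last flagged tuple preceding it, and a single local pass produces every $(x_i,x_{b_i})$ in place. This removes the join step, and with it the one-to-many issue you elided.
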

Roughly speaking the algorithm is as the following: firstly, build a $\Theta(n^{\delta})$ branching tree on the machines, then follows by bottom-up stages to collect the last $(x_l,1)$ tuple in each large interval and then follows by top-down stages to compute the predecessors of every prefix.
For completeness, we describe the algorithm for predecessor problem in the following:

\noindent\fbox{
\begin{minipage}{\linewidth}
\small
Predecessor Algorithm:
\begin{itemize}
\item \textbf{Setups:}
    \begin{itemize}

    \item  There are $2p=\Theta(n^{\delta})$ machines indexed from $1$ to $2p$ each with local memory size $s=\Theta(n^\delta)$. The machine with index from $p+1$ to $2p$ are input/output machines.

    \item $(x_1,y_1),\cdots (x_n,y_n)$ are stored on input/output machine $p+1$ to $2p$, where $\forall i\in[n],y_i\in\{0,1\}$.

    \item The goal: If an input machine holds a tuple $(x,y)$ with $y=0,$ then it will create a tuple $(x,x')$ at the end of the computation, where $(x',y')$ is the last tuple with $y'=1$ stored before $(x,y).$

    \end{itemize}

\item \textbf{Bottom-up stage ($O(1/\delta)$ constant rounds):}
    \begin{itemize}

    \item Let $d=s/10$ be the branching factor.

    \item In the $i^{\text{th}}$ round, each machine $j$ with $j$ in the range $\lfloor p/d^{i-1}\rfloor+1$ to $\lfloor (2p-1)/d^{i-1}\rfloor+1$ sends the last $(x_l,y_l)$ tuple with $y_l=1$ in its local memory to machine $\lfloor(j-1)/d\rfloor+1.$ If machine $j$ does not have any tuple with $y_l=1$, it just sends an arbitrary tuple to machine $\lfloor(j-1)/d\rfloor+1.$

    \item Until the end of the computation, machine $j$ sends itself messages to keep the data.
        The stage ends when machine $1$ receives messages.

    \end{itemize}

 \item \textbf{Top-down stage ($O(1/\delta)$ constant rounds):}
    \begin{itemize}

    \item Let $d=s/10$ be the branching factor.

    \item In the $i^{\text{th}}$ round, each machine $j$ with $j$ in the range $\lfloor d^{i-2}\rfloor+1$ to $\min(d^{i-1},p)$ sends to each machine $h$ in the range $(j-1)d+1$ to $\min(j\cdot d,2p)$ a tuple $(x_l,y_l)$ which is the last tuple with $y_l=1$ appeared before machine $h$.

    \item The stage ends when machine $2p$ receives messages.

    \end{itemize}

 \item \textbf{The last round:}
    \begin{itemize}
    \item Machine $i\in\{p+1,\cdots,2p\}$ scans its local memory, for each tuple $(x,y)$ with $y=0,$ create a tuple $(x,x')$ where $(x',y')$ is the last tuple stored before $(x,y)$ with $y'=1.$
    \end{itemize}

\end{itemize}
\end{minipage}

}

\subsection{Data Organization}\label{sec:data_organize}
In this section, we introduce the method to organize the data in the system.
\paragraph{Set} Let $S=\{x_1,x_2,\cdots,x_m\}$ be a set of $m$ items, and each item $x_i$ can be described by $O(1)$ number of words.
 If $x\in S$ is equivalent to that there is a unique machine which holds a pair $(``S",x)$ in its local memory, then we say that $S$ is stored in the system.
 Here $``S"$ is the name of the set $S$ and can be described by $O(1)$ number of words.

 Let $\mathcal{S}=\{S_1,S_2,\cdots,S_m\}$ be a set of $m$ sets, where $\forall i\in [m],$ $S_i$ is stored in the system, and the name $``S_i"$ of each set $S_i$ can be described by $O(1)$ number of words.
 If $S\in\mathcal{S}$ is equivalent to that there is a unique machine which holds a pair $(``\mathcal{S}",``S")$ in its local memory, then we say $\mathcal{S}$ is stored in the system.
 Here $``\mathcal{S}"$ is the name of $\mathcal{S}$ and can be described by $O(1)$ number of words.

 Let $S$ be a set stored in the system. If machine $i$ has a pair $(``S",x),$ then we say that the element $x$ of $S$ is held by the machine $i$. If every element of $S$ is held by a machine with index in $\{i,i+1,\cdots,j\},$ then we say $S$ is stored on the machine $i$ to the machine $j$.

 The total space needed to store $S$ is $\Theta(m).$

 \paragraph{Mapping} Let $f:U\rightarrow H$ be a mapping from a finite set $U$ to a set $H$.
 In the following, we show how to use a set to represent a mapping.
 \begin{definition}[Set representation of a mapping]\label{def:set_for_mapping}
Let $f:U\rightarrow H$ be a mapping from a finite set $U$ to a set $H$.
Let $S=\{(x,y)\mid x\in U,y=f(x)\}.$
then the set $S$ is a set representation of the mapping $f.$
\end{definition}
Let $U$ be a finite set where each element of $U$ can be described by $O(1)$ number of words. Let $S$ be a set representation of the mapping $f:U\rightarrow H.$ If $S$ is stored in the system, then we say $f$ is stored in the system. If $S$ is stored on the machine $i$ to the machine $j$, then $f$ is stored on the machine $i$ to the machine $j$. At any time of the system, there can be at most one set representation $S$ of $f$ stored in the system. Furthermore, the name of $S$ is $``f"$ which is the same as the name of mapping $f$, and can be described by $O(1)$ number of words.

The total space needed to store $f$ is the total space needed to store $S$, and thus is $\Theta(|U|).$

 \paragraph{Sequence} Let $A=(a_1,a_2,\cdots,a_m)$ be a sequence of $m$ elements.
In the following, we show how to use a set to represent a sequence.
\begin{definition}[Set representation of a sequence]\label{def:set_for_sequence}
Let $A=(a_1,a_2,\cdots,a_m)$ be a sequence of $n$ elements. If a set  $S=\{(x_1,y_1),(x_2,y_2),\cdots,(x_m,y_m)\}\subseteq \mathbb{R}\times \{a_1,a_2,\cdots,a_m\}$ satisfies $x_1<x_2<\cdots<x_m,y_1=a_1,y_2=a_2,\cdots,y_m=a_m,$ then the set $S$ is a set representation of the sequence $A.$ Furthermore, if $x_1=1,x_2=2,\cdots,x_m=m,$ then $S$ is a standard set representation of $A$.
\end{definition}
Let $A$ be a sequence of elements where each element can be described by $O(1)$ number of words. Let $S$ be a set representation of the sequence $A.$ If $S$ is stored in the system, then we say $A$ is stored in the system. If $S$ is stored on the machine $i$ to the machine $j$, then $A$ is stored on the machine $i$ to the machine $j$. At any time of the system, there can be at most one set representation $S$ of $A$ stored in the system. Furthermore, the name of $S$ is $``A"$ which is the same as the name of sequence $A$, and can be described by $O(1)$ number of words.

The total space needed to store $A$ is the total space needed to store $S$, and thus is $\Theta(m).$

\subsection{Set Operations}\label{sec:set_operations}
In this section, we introduce some $\MPC$ model operations for sets.

\paragraph{Duplicates removing}
There are $n$ tuples stored in the machines. But there are some duplicates of them. The goal is to remove all the duplicates. To achieve this, we can just sort all the tuples. After sorting, if a tuple is different from its previous tuple, then we keep it. Otherwise, we remove the tuple.

\paragraph{Sizes of sets}
Suppose we have $k$ sets $S_1,S_2,\cdots,S_k$ stored in the system.
Our goal is to get the sizes of all the sets.
We can firstly sort all the tuples such that the tuples from the same set are consecutive.
Then we can calculate the index of each tuple.
Every machine can scan all the tuples in its local memory, if $x$ is an element of set $S_i$ and has the smallest/largest index $y$, then create a pair $(\text{``boundary of }`S_i\text{' ''},y).$
Then we sort all the created pairs, then for each set $S_i,$ there are two pairs $(\text{``boundary of }`S_i\text{' ''},y_1),(\text{``boundary of }`S_i\text{' ''},y_2)$ stored on the same machine.
Each machine can store its local memory. For each pair of tuples $(\text{``boundary of }`S_i\text{' ''},y_1)$, $(\text{``boundary of }`S_i\text{' ''},y_2)$ with $y_1<y_2,$ the machine can generate a new tuple $(``f",(``S_i",y_2-y_1+1)).$
Finally, there will be a mapping $f$ stored in the system, where $f(S_i)=|S_i|.$
Thus, the total number of rounds is a constant.

\paragraph{Copies of sets}
Suppose we have $k$ sets $S_1,S_2,\cdots,S_k$ stored in the system.
Let $s_1,s_2,\cdots,s_k\in \mathbb{Z}_{\geq 1}.$
If a machine holds an element $x\in S_i,$ then the machine knows the value of $s_i$.
Our goal is to create sets $S_{1,1},S_{1,2},\cdots,S_{1,s_1},S_{2,1},S_{2,2},\cdots,S_{2,s_2},\cdots,S_{k,s_k}$ and make them stored in the system, where $S_{i,j}$ is a copy of $S_i.$

The idea is very simple: for an element $x\in S_i,$ we need to make $s_i$ copies $(``S_{i,1}",x),(``S_{i,2}",x),$ $\cdots,(``S_{i,s_i}",x)$ of tuple $(``S_i",x)$.
But the issue is that $s_i$ may be very large such that it is not able to generate all the copies of a tuple on a single machine.
For the above reason, we implement it in three steps: firstly we compute the new ``position'' of each original tuple among all the copies, then send the original tuples to their new ``positions'', and finally filling the gap by generating copies between any two adjacent original tuples.
Precisely, each machine can scan its local memory, and assign each tuple $(``S_i",x)$ a weight $s_i$.
Then we can use prefix sum algorithm (See Theorem~\ref{thm:indexing}) to compute the prefix sum of each tuple $(``S_i",x)$.
The prefix sum $\pos(``S_i",x)$ of a tuple $(``S_i",x)$ denotes the new ``position'' of the last copy of this tuple when all the copies are generated.
Let $n=\sum_{i=1}^k s_i\cdot |S_i|.$
Let machine $1$ to $t$ be $t$ empty machines each maintains $s/10$ ``positions'', i.e. machine $1$ has ``positions'' $1$ to $s/10,$ machine $2$ has ``positions'' $s/10+1$ to $2s/10$, and so on.
Let $t\cdot s/10=\Theta(n).$
The machine which holds tuple $(``S_i",x)$ sends the tuple $(``S_i",x)$ to the ``position'' $\pos(``S_i",x)-s_i+1,$ and sends the tuple $(``S_{i,s_i}",x)$ to the ``position'' $\pos(``S_i",x).$
Then each machine $i\in[t]$ scans its ``positions''.
If a ``position'' received a tuple, the machine marks that ``position'' as ``$1$''.
Otherwise, the machine marks that position as ``$0$''.
Now we can apply the predecessor algorithm (See Theorem~\ref{thm:predecessor})
such that each empty ``position'' learns its predecessor tuple.
If the predecessor tuple of an empty ``position'' $l$ is $(``S_i",x),$ and the predecessor tuple is at ``position'' $l',$ then create a tuple $(``S_{i,l-l'}",x)$ at this empty position.
 Thus, at the end of all the computations, $S_{1,1},S_{1,2},\cdots,S_{1,s_1},S_{2,1},S_{2,2},\cdots,S_{2,s_2},\cdots,S_{k,s_k}$ are stored on the system.

\paragraph{Indexing elements in sets} Suppose we have $k$ sets $S_1,S_2,\cdots,S_k$ stored in the system. The goal is to compute a mapping $f$ such that $\forall i\in[k],x\in S_i,$ $x$ is the $f(S_i,x)^{\text{th}}$ element of $S_i$.

To achieve this goal, we can sort (See Theorem~\ref{thm:sorting}) all the tuples such that the elements from the same set are stored consecutively on several machines.
Then we can run indexing algorithm (See Theorem~\ref{thm:indexing}) to compute the global index of each tuple.
In the next, each machine scans its local data.
If $(``S_i",x)$ is in the local memory, and $x$ is the first element of $S_i$, then the machine marks this tuple as ``$1$''.
For other tuples in the local memory, the machine marks them as ``$0$''.
Then we can invoke predecessor algorithm (See Theorem~\ref{thm:predecessor}) on all the tuples.
At the end of the computation, each machine scans its all tuples.
For a tuple $(``S_i",x)$ with global index $l,$ the machine determine the index of $x$ in $S_i$ based on the global index $l'$ of its predecessor $(``S_i",x)$.
Precisely, the machine creates a tuple $(``f",((``S_i",x),l-l'+1))$ stored in the memory.
Thus at the end of the computation, the desired mapping $f$ is stored in the system.

\paragraph{Set merging} Suppose we need to merge several sets $S_1,S_2,\cdots,S_k$ stored on the system, i.e. create a new set $S=\bigcup_{i=1}^k S_i.$
To implement this operation, each machine scans its local memory. If there is a tuple $(``S_i",x)$ in its memory, then it creates a tuple $(``S",x).$
Finally, we just need to remove all the duplicates.

\paragraph{Set membership} Suppose we have $k$ sets $S_1,S_2,\cdots,S_k$ stored in the system.
There is an another set $Q=\{(x_1,y_1),\cdots,(x_q,y_q)\}$ also stored in the system where $x_i$ is the name of a set $S$, and $y_i$ is an item.
 The goal is to answer whether $y_i$ is in $S$.

 To achieve this, we can firstly sort all the tuples.
 For tuple with form $(``S_i",x),$ the first key is $S_i$, the second key is $x$, and the third key is $-\infty$ which has the highest priority.
 For tuple with form $(``Q",(x,y)),$ the first key is $x$, the second key is $y$, and the third key is $\infty$ which has the lowest priority.
 The comparison in the sorting procedure firstly compare the first key, then the second key, and finally the third key.
 After sorting, for each tuple with form $(``S_i",x),$ we mark it as ``$1$''.
 For each tuple with form $(``Q",(x,y)),$ we mark it as ``$0$''.
 Now we can apply the predecessor algorithm (See Theorem~\ref{thm:predecessor}).
 For each tuple $(``Q",(x,y)),$ if its predecessor is $(``S",y)$ where $x$ is the name of $``S"$, then we create a tuple $(``f",((x,y),1))$; Otherwise, we create a tuple $(``f",((x,y),0)).$
 Thus, at the end of the computation, there is a mapping $f$ stored on the system such that for each $(x,y)\in Q,$ if $x$ is the name of some set $S_i,$ and $y\in S_i,$ then $f(x,y)=1$; Otherwise $f(x,y)=0.$

\subsection{Mapping Operations}\label{sec:mapping_operations}
In this section, we introduce some $\MPC$ model operations for mapping.
The most important operation is called Multiple queries.
\paragraph{Multiple queries}
We have $k$ sets $S_1,S_2,\cdots,S_k$ stored in the system.
Without loss of generality, $S_1,S_2,\cdots,S_t$ $(t\leq k)$ are sets representations of mappings (See Definition~\ref{def:set_for_mapping}) $f_1:U_1\rightarrow H_1,f_2:U_2\rightarrow H_2,\cdots,f_t:U_t\rightarrow H_t$ respectively.
When a machine does local computation, it may need to query some values which are in the form $f_i(u)$ for some $u\in U_i.$
The following lemma shows that we can answer all the such queries simultaneously in constant number of rounds in $(0,\delta)-\MPC$ model for all constant $\delta\in(0,1)$.
It means that we can use constant number of rounds to simulate concurrent read operations on a shared memory where $S_1,\cdots,S_k$ are stored in the shared memory.
\begin{lemma}[Multiple queries]\label{lem:multi_query}
Let $\delta\in(0,1)$ be an arbitrary constant. There is a constant number of rounds algorithm $\mathcal{A}$ in $(0,\delta)-\MPC$ model which satisfies the following properties.
The input of $\mathcal{A}$ contains two parts. The first part are $k$ sets $S_1,S_2,\cdots,S_k$ stored (See Section~\ref{sec:data_organize} for data organization of sets) on the input machines, where $S_1,S_2,\cdots,S_t$ $(t\leq k)$ are sets representations of mappings (See Definition~\ref{def:set_for_mapping}) $f_1:U_1\rightarrow H_1,f_2:U_2\rightarrow H_2,\cdots,f_t:U_t\rightarrow H_t$ respectively. The second part is a set $Q=\{(x_1,y_1,z_1),(x_2,y_2,z_2),\cdots,(x_q,y_q,z_q)\}$ stored on the input machines, where $\forall (x,y,z)\in Q,$ $x$ is the name $``f_i"$ of the mapping $f_i$ for some $i\in[t],$ $y$ is an element in $U_i,$ and $z$ is the index of the input machine which holds the element $(x,y,z)$ of $Q$.
The total input size $n=|Q|+\sum_{i=1}^k|S_i|.$
The output machines are all the input machines.
$\forall i\in[k],x\in S_i,$ if the element $x$ of $S_i$ is held by the input (also output) machine $j$, then at the end of the computation, the element $x$ of $S_i$ should still be held by the output (also input) machine $j$. Let $Q'$ be the set $\{(x,y,z,w)\mid \exists(x,y,z)\in Q,w=f_i(y),\text{where~}x\text{~is the name of~}f_i\}.$ At the end of the computation, $Q'$ is stored on the output (also input) machines such that $\forall (x,y,z,w)\in Q',$ the element $(x,y,z,w)$ of $Q'$ is held by the machine $z$.
\end{lemma}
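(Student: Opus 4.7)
The plan is to reduce the multiple-queries operation to a constant number of calls to the primitive $\MPC$ routines already established, namely sorting (Theorem~\ref{thm:sorting}) and predecessor (Theorem~\ref{thm:predecessor}), together with the set-membership-style trick described in Section~\ref{sec:set_operations}. Throughout, the total size of all auxiliary tuples we produce will be $O(n)$ where $n=|Q|+\sum_i |S_i|$, so load balancing keeps every machine within $O(n^\delta)$ words.

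First, I would produce, for every mapping entry $(``f_i",(y,f_i(y)))$ stored in the system, a tagged tuple $(``f_i",y,f_i(y),\textsc{data})$, and, for every query $(``f_i",y,z)\in Q$, a tagged tuple $(``f_i",y,z,\textsc{query})$. Sort all these tuples using Theorem~\ref{thm:sorting} with the lexicographic key $(``f_i",y,\text{tag})$ where the tag $\textsc{data}$ beats $\textsc{query}$. After sorting, for each pair $(``f_i",y)$ the unique data tuple precedes every query tuple asking for $f_i(y)$. Now mark data tuples with label~$1$ and query tuples with label~$0$, and invoke the predecessor primitive (Theorem~\ref{thm:predecessor}); each query tuple learns from its predecessor data tuple precisely the value $f_i(y)$, so on its local machine it can be promoted to $(x,y,z,w)$ with $w=f_i(y)$.

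Next I need to route each resolved tuple $(x,y,z,w)$ back to machine $z$, and I need the original input tuples of each $S_i$ and of $Q$ to end up at the same machine positions where they started. For the routing of resolved tuples, I would sort them by the third coordinate $z$; since at most $|Q|\le n$ tuples are routed and the destinations are machine indices, the sort deposits each tuple in a block destined for a single machine. For the promise that the originally held elements of every $S_i$ and $Q$ remain on their original machines, I would never touch the original tuples: all the sorting and predecessor work is performed on copies of the data tuples and query tuples, after which those copies are discarded. Because each step above is a constant-round primitive, the whole procedure runs in $O(1/\delta)$ rounds, which is $O(1)$ for constant $\delta$.

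The main obstacle I anticipate is the communication load when a single mapping value $f_i(y)$ is queried by many machines simultaneously. In a naive implementation, the unique data tuple $(``f_i",y,f_i(y))$ would need to ``fan out'' to every querying machine in one round, which could exceed its $s$-word budget. The predecessor primitive sidesteps this precisely because it is a sequential scan in sorted order: each query obtains the answer from its immediate predecessor in the sorted stream, and the predecessor routine is designed so that no machine holds more than $s$ tuples during the bottom-up/top-down propagation. With this observation the size bounds are preserved at every step and the lemma follows; the final output organization (each resolved $(x,y,z,w)$ delivered to machine $z$) is achieved by the final sort-and-place, completing the argument.
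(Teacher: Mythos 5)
Your proposal is correct and follows essentially the same route as the paper: copy the mapping/query tuples so the originals remain untouched, sort by $(\text{mapping name},\text{key},\text{tag})$ with data tuples preceding query tuples, run the predecessor primitive so each query learns the value from the unique data tuple preceding it, and route resolved answers back to machine $z$. The only cosmetic difference is in the final delivery step—the paper sends each resolved tuple directly to machine $z$ (which is within budget because machine $z$ receives at most as many tuples as it originally held), whereas you insert an extra sort on $z$ before the send; that sort is harmless but unnecessary, and note that sorting alone does not place a tuple on machine $z$—you would still need the direct send afterwards.
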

\begin{proof}
The idea is that we can firstly use sorting (See Theorem~\ref{thm:sorting}) to make queries and the corresponding values be stored consecutively in several machines.
The issue remaining is that there may be many queries queried the same position such that some queries may not be stored in the machine which holds the corresponding value.
In this case, we need to find the predecessor by invoking the algorithm shown in Theorem~\ref{thm:predecessor}.
\end{proof}
The Multiple queries algorithm is shown as the following:

\noindent\fbox{
\begin{minipage}{\linewidth}
\small
Multiple Queries Algorithm:
\begin{itemize}
\item \textbf{Setups:}
    \begin{itemize}

    \item There are $3p=\Theta(n^{\delta})$ machines indexed from $1$ to $3p$ each with local memory size $s=\Theta(n^\delta)$.

    \item The machine with index from $2p+1$ to $3p$ are input/output machines.

    \item Sets $S_1,S_2,\cdots,S_k,Q$ are stored on machine $2p+1$ to $3p$. \Comment{Corresponding to Lemma~\ref{lem:multi_query}}

    \end{itemize}

\item \textbf{The first round:}
    \begin{itemize}

     \item Machine $i\in\{2p+1,\cdots,3p\}$ scans its local memory, and send all the tuples with form $(``f_j",(x,y))$ or $(``Q",(x,y,z))$ to machine $i-p,$ where $``f_j"$ is the name of $f_j$ (also $S_j$) for $j\in[t].$ Until the end of the computation, machine $i$ sends itself messages to keep its local data.

    \end{itemize}

\item \textbf{Using constant number ($O(1/\delta)$) of rounds to sort:}
    \begin{itemize}
    \item Use machine $1$ to $2p$ to sort all the tuples stored on machine $p+1$ to $2p$, and thus at the end of this stage, machine $p+1$ to $2p$ holds sorted tuples.
 For tuple with the form $(``f_j",(x,y))$, the first key value is $``f_j",$ the second key value is $x$ and the third key value is $-\infty$ which is the highest priority. For tuple with form $(``Q",(x,y,z)),$ the first key value is $x$, the second key value is $y$, and the third key value is $\infty$ which is the lowest priority. The comparison in the sorting is: Firstly compare the first key. If they are the same, then compare the second key. If they are still the same, compare the third key.
    \end{itemize}

 \item \textbf{Using constant number ($O(1/\delta)$) of rounds to find predecessors:}
    \begin{itemize}

    \item Machine $p+1$ to $2p$ scans its local memory. For a tuple in the form $(``f_j",(x,y)),$ the machine marked it as ``$1$''. For a tuple in the form $(``Q",(x,y,z)),$  the machine marked it as ``$0$''.

    \item Machine $1$ to $2p$ together invoke the Predecessor algorithm (Theorem~\ref{thm:predecessor}), where the input is on machine $p+1$ to machine $2p$.

    \end{itemize}

 \item \textbf{The last round:}
    \begin{itemize}

    \item Machine $p+1$ to $2p$ scans its local memory. For each tuple with form $(``Q",(x,y,z)),$ it sends machine $z$ a tuple $(``Q'",(x,y,z,w)),$ where $x$ is the name of $f_j$, and $w=f_j(y).$

    \end{itemize}

\end{itemize}
\end{minipage}
}

\subsection{Sequence Operations}
In this section, we introduce some $\MPC$ model operations for sequence.

\paragraph{Sequence standardizing} Suppose there is a sequence $A$, and one of its set representation (see Definition~\ref{def:set_for_sequence}) $S$ is stored in the system. The goal is to modify the set $S$ such that $S$ is a standard set representation of $A$.

We can compute the index (see \textbf{Indexing elements in sets} in Section~\ref{sec:set_operations}) of elements in $S$.
Then for each element $(x,y)\in S,$ we can query (see \textbf{Multiple queries} in Section~\ref{sec:mapping_operations}) the index of $(x,y)$ in $S$.
Suppose the index is $i$, we modify the tuple $(``S",(x,y))$ to $(``S",(i,y)).$

\paragraph{Sequence duplicating} Suppose there is a sequence $A=(a_1,a_2,\cdots,a_s)$, and one of its set representation (see Definition~\ref{def:set_for_sequence}) $S$ is stored in the system. Furthermore, there is a mapping $f:[s]\rightarrow\mathbb{Z}_{\geq 0}$ which is also stored in the system. The goal is to get a set $S'$ stored in the system such that $S'$ is a set representation of the sequence:
\begin{align*}
(\underbrace{a_1,a_1,\cdots,a_1}_{f(1)\text{~times}},\underbrace{a_2,a_2,\cdots ,a_2}_{f(2)\text{~times}},\cdots,\underbrace{a_s,a_s,\cdots,a_s}_{f(s)\text{~times}}).
\end{align*}

Firstly, we can standardize (see the above paragraph \textbf{Sequence standardizing}) the set $S.$
Then for each tuple $(``S",(i,a_i)),$ we create a tuple $(``S_i",a_i),$ and we can query (see \textbf{Multiple queries} in Section~\ref{sec:mapping_operations}) the value of $f(i).$
Then we can copy (see \textbf{Copies of sets} in Section~\ref{sec:set_operations}) set $S_i$ $f(i)$ times.
For each tuple $(``S_{i,j}",a_i),$ we create a tuple $(``S'",((i,j),a_i)).$
Then we can compute the index (see \textbf{Indexing elements in sets} in Section~\ref{sec:set_operations}) of each element in $S'$.
For each tuple $(``S'",((i,j),a_i)),$ we can query (see \textbf{Multiple queries} in Section~\ref{sec:mapping_operations}) the index $i'$ of it, and then modify the tuple as $(``S'",(i',a_i)).$

\paragraph{Sequence insertion} Suppose there are $k+1$ sequences $A=(a_1,a_2,\cdots,a_s),A_1,\cdots,A_k$ which have sets representations (see Definition~\ref{def:set_for_sequence}) $S,S_1,\cdots,S_k$ respectively and stored on the system. There is also a mapping $f:[k]\rightarrow \{0\}\cup[s]$ stored on the system where $\forall i\not=j\in [k],f(i)\not=f(j).$ The goal is to insert each sequence $A_i$ into the sequence $A$, and $A_i$ should be between the element $a_{f(i)}$ and $a_{f(i)+1}.$

Firstly, we can standardize (see \textbf{Sequence standardizing} in Section~\ref{sec:set_operations}) $S$.
Then we can compute the total size (see \textbf{Sizes of sets} in Section~\ref{sec:set_operations}) $N=|S|+|S_1|+\cdots+|S_k|+1.$
For each tuple $(``S",(i,a_i)),$ we can modify it as $(``S",(i\cdot N,a_i)).$
For each tuple $(``S_i",(j,a_{ij})),$ we query (see \textbf{Multiple queries} in Section~\ref{sec:mapping_operations}) the value of $f(i),$ then create a tuple $(``S",(f(i)\cdot N+j,a_{ij})).$

\subsection{Multiple Tasks}\label{sec:multi_tasks}
In this section, we show that if the entire computational tasks consist of some independent small computational tasks, then we are able to schedule the machines such that the small computational tasks can be computed simultaneously.

\paragraph{Task and multiple tasks problem} A computational task here is running a specific algorithm on specific input data.

There are $k$ sets $S_1,S_2,\cdots,S_k$ stored in the system.
Let $n=\sum_{i=1}^k |S_i|$ be the total input size.
There are $h$ independent computational tasks $T_1,T_2,\cdots,T_h.$
Each task $T_i$ needs to take some sets $\mathcal{S}_i\subseteq \{S_1,S_2,\cdots,S_k\}$ as its input, and is running a $(\gamma_i,\delta_i)-\MPC$ algorithm in $r_i$ rounds where $\gamma_i\in\mathbb{R}_{\geq 0},$ constant $\delta_i\in(0,1)$.
$\forall i\in[h],$ let $n_i=\sum_{S\in \mathcal{S}_i}|S|$ be the input size of task $T_i.$
Without loss of generality, we can assume that the input of different tasks are disjoint.
Otherwise we can use sets copying technique (See Section~\ref{sec:set_operations}) to generate different copies of input sets for the tasks shared the same input set.
The goal here is to use the small number of rounds to finish all the tasks.
Since we can always use sorting and indexing to extract the desired input data.
The most naive way is to compute the tasks one-by-one.
This can be trivially done in $r=O(\sum_{i=1}^h r_i)$ rounds in $(\gamma,\delta)-\MPC$ model for $\gamma=\log_{n}(h)+\max_{i\in[h]} \gamma_i,\delta=\max_{i\in[h]} \delta_i.$
Here we show how to compute all the tasks simultaneously in $r=O(\max_{i\in[h]} r_i)$ rounds in $(\gamma,\delta)-\MPC$ model for $\gamma=\log_{n}(m)-1,\delta=\max_{i\in[h]} \delta_i,$ where $m=\Theta(n+\sum_{i=1}^h n_i^{1+\gamma_i}).$

Each machine scans its local memory.
If the machine holds a tuple $(``S_i",x)$, and $S_i$ is a part of input of task $T_j,$ then it creates a tuple $(``W_j", (``S_i",x)).$
Thus, at the end of this step, there are additional $h$ sets $W_1,W_2,\cdots,W_h$ stored in the system.
Here $W_i,i\in[h]$ contains all the information of input data of task $T_i.$
Then we can compute a mapping $f$ such that $\forall i\in[h],f(W_i)=|W_i|$ (see Section~\ref{sec:set_operations}).
Thus, we know the input size of each task.
Then each machine scans its local memory.
If the machine holds a tuple $(``f",(``W_i",|W_i|)),$ then it creates a tuple $(``H_i",|W_i|),$ i.e. a set $H_i=\{|W_i|\}.$
Then for each set $H_i=\{|W_i|\},i\in[h],$ we can copy (see Section~\ref{sec:set_operations}) it $s_i=c\cdot |W_i|^{1+\gamma_i}$ times for a sufficiently large $c$ to get sets $H_{i,1}=H_{i,2}=\cdots=H_{i,s_i}=|W_i|$. Each set $H_{i,j}$ is just a placeholder of one unit working space of the task $T_i.$
Thus, the number of copies of the set $H_i$ is the total space needed for the task $T_i.$
We can sort all the tuples $(``H_{i,j}",|W_i|)$ on machines with index in $I=\{2,5,8,11,\cdots,3p-1\},$ where local memory $s=\Theta(n^{\delta}),$ total required memory $m=\Theta(n+\sum_{i=1}^h n_i^{1+\gamma_i})$, and $p=\Theta(m/s)$
For each machine with index $q\in I,$ the tuples on that machine must be in the following form
\begin{align*}
&(``H_{i,j}",|W_i|),(``H_{i,j+1}",|W_i|),\cdots,(``H_{i,s_i}",|W_i|),(``H_{i+1,1}",|W_{i+1}|),\cdots,(``H_{i+1,s_{i+1}}",|W_{i+1}|),\\
&(``H_{i+2,1}",|W_{i+2}|),\cdots,(``H_{i+2,s_{i+2}}",|W_{i+2}|),\cdots, (``H_{i',1}",|W_{i'}|),\cdots(``H_{i',j'}",|W_{i'}|).
\end{align*}
Then machine $q$ just sends all the tuples
$
(``H_{i,j}",|W_i|),(``H_{i,j+1}",|W_i|),\cdots,(``H_{i,s_i}",|W_i|)
$
to machine $q-1$, and sends all the tuples
$
(``H_{i',1}",|W_{i'}|),(``H_{i',2}",|W_{i'}|),\cdots(``H_{i',j'}",|W_{i'}|)
$
to machine $q+1$.
Thus, $\forall i\in[h],$
\begin{enumerate}
\item either all the $H_{i,1},H_{i,2},\cdots,H_{i,s_i}$ are stored on consecutive machines, machine $q$ to machine $q'$, and any of machine $q$ to machine $q'$ does not hold other tuples,
\item or there is a unique machine $q$ which holds all the sets $H_{i,1},H_{i,2},\cdots,H_{i,s_i}.$
\end{enumerate}
For each machine $q\in[3p],$ if $H_{i,1}$ is held by machine $q$, then it creates a tuple $(``\st",(``T_i",q)).$
If $H_{i,s_i}$ is held by machine $q$, then it creates a tuple $(``\ed",(``T_i",q)).$
The mapping $\st,\ed$ then are stored in the system, where $\st(T_i)$ is the index of the first machine assigned to task $T_i,$ and $\ed(T_i)$ is the index of the last machine assigned to task $T_i.$
Recall that $W_i$ contains all the information of the input data to task $T_i.$
The remaining task is to move the input data of task $T_i$ to the machines with index from $\st(T_i)$ to $\ed(T_i).$
According to Section~\ref{sec:set_operations}, we can compute a mapping $f',$ such that $f'(W_i,x)$ records the index of $x\in W_i$ in set $W_i$.
Now, each machine scans its local memory.
For each tuple $(``W_j", (``S_i",x)),$ the machine needs to query the value of $f'(W_j,(``S_i",x)),$ the value of $\st(T_j)$ and the value of $\ed(T_j).$
By Lemma~\ref{lem:multi_query}, these queries can be handled simultaneously in constant number of rounds.
Then the machine can send the tuple $(``S_i",x)$ to the corresponding machine based on the value of $f'(W_j,(``S_i",x)),$ $\st(T_j),$ and $\ed(T_j).$
Finally, $\forall i\in[h],$ since $\delta\geq \delta_i$ and $(\ed(T_i)-\st(T_i)+1)\cdot s=\Theta(n_i^{1+\gamma_i}),$ the machines with index from $\st(T_i)$ to $\ed(T_i)$ can simulate task $T_i$ in $r_i$ number of rounds.

\section{Implementations in $\MPC$ Model}
\label{sec:implement}
In this section, we show how to implement all the previous batch algorithms in $\MPC$ model.

\subsection{Neighbor Increment Operation}

\begin{lemma}\label{lem:parallel_implement_of_neighbor_incr}
Let graph $G=(V,E),n=|V|,N=|V|+|E|$ and $m=\Theta(N^{\gamma})$ for some arbitrary $\gamma\in[0,2].$
$\textsc{NeighborIncrement}(m,G)$ (Algorithm~\ref{alg:neighbor_increment}) can be implemented in $(\gamma,\delta)-\MPC$ model for any constant $\delta\in(0,1).$ Furthermore, the parallel running time is $O(r),$ where $r$ is the number of iterations (see Definition~\ref{def:neighbor_incr_num_iter}) of $\textsc{NeighborIncrement}(m,G)$.
\end{lemma}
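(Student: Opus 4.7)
The plan is to show that each iteration of the main loop of Algorithm~\ref{alg:neighbor_increment} can be realized in a constant number of $\MPC$ rounds, so that the whole procedure runs in $O(r)$ parallel time. Throughout, I represent the family $\{S_v^{(i)}\}_{v\in V}$ as a collection of named sets in the sense of Section~\ref{sec:data_organize}, storing one tuple $(``S_v", u)$ for each $u\in S_v^{(i)}$. By property~\ref{itm:neighbor_S_pro5} of Lemma~\ref{lem:properties_of_S}, $|S_v^{(i)}|\le m/n$ at all times, so this collection plus the edge set $E$ occupies $O(N+m)=O(N^{1+\gamma})$ words, well within the total memory $p\cdot s=O(N^{1+\gamma})$ of the $(\gamma,\delta)$-$\MPC$ model.

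For initialization (lines~\ref{sta:init}--\ref{sta:init_S2}), I first sort the edge list (Theorem~\ref{thm:sorting}) so that neighbors of the same vertex are contiguous, then use indexing (Theorem~\ref{thm:indexing}) to retain only the first $\lceil(m/n)^{1/2}\rceil$ of them; this produces all the $S_v^{(0)}$ in $O(1/\delta)=O(1)$ rounds. Each iteration of the main loop is then implemented in three sub-steps of constant round-complexity. First, using the ``sizes of sets'' primitive (Section~\ref{sec:set_operations}), I compute the mapping $u\mapsto|S_u^{(r-1)}|$, and for every tuple $(``S_v^{(r-1)}", u)$ issue a multiple query (Lemma~\ref{lem:multi_query}) to learn $|S_u^{(r-1)}|$; aggregating these per vertex with another sort decides whether the ``large neighbor'' branch or the ``union'' branch applies to $v$, and picks the witness $u$ if so. Second, I execute the chosen branch: for the large-neighbor branch I copy $S_u^{(r-1)}$ into $S_v^{(r)}$ via multi-query lookups on its elements, dropping $u$ and inserting $v$; for the union branch at line~\ref{sta:stillbad} I emit, for every pair $(u,x)$ with $u\in S_v^{(r-1)}$ and $x\in S_u^{(r-1)}$, a fresh tuple $(``S_v^{(r)}", x)$, then deduplicate by sorting. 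Third, the termination predicate at line~\ref{sta:condition_break} is evaluated by comparing $|S_v^{(r)}|$ with both $|S_v^{(r-1)}|$ and $\lceil(m/n)^{1/2}\rceil$ for every $v$, and aggregating the booleans via a single sort and prefix-sum. When the loop exits, the set $E'$ of line~\ref{sta:add_new_edges} is produced by emitting, for every $(``S_v^{(r)}", u)$ with $u\ne v$, the pair $(v,u)$, unioning with $E$, and deduplicating; again $O(1)$ rounds.

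The main obstacle is controlling the transient workspace during the union sub-step, where for each $v$ I am effectively pulling data from every $u\in S_v^{(r-1)}$. The total number of emitted tuples before deduplication is $\sum_v\sum_{u\in S_v^{(r-1)}}|S_u^{(r-1)}|$; crucially, the union branch is entered for a given $v$ only when both $|S_v^{(r-1)}|<\lceil(m/n)^{1/2}\rceil$ and every relevant $|S_u^{(r-1)}|<\lceil(m/n)^{1/2}\rceil$, so the sum is bounded by $n\cdot(m/n)^{1/2}\cdot(m/n)^{1/2}=m$, keeping the intermediate footprint $O(N+m)=O(N^{1+\gamma})$. To spread this load across machines I would invoke the ``multiple tasks'' scheduler of Section~\ref{sec:multi_tasks}, treating each $v$'s update as an independent sub-task whose input and output sizes are known in advance via a prefix-sum over $|S_u^{(r-1)}|$, so that no single machine is overloaded. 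Since each of the $O(1)$ sub-steps takes $O(1/\delta)$ rounds and $\delta$ is constant, one iteration costs $O(1)$ rounds, giving the claimed $O(r)$ total parallel time.
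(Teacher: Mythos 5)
Your proposal is correct and follows essentially the same plan as the paper's proof: each iteration of the main loop is realized with a constant number of the standard $\MPC$ primitives (sorting, indexing, sizes of sets, multiple queries, set copying, deduplication), the transient workspace for the union branch is bounded by $n\cdot(m/n)^{1/2}\cdot(m/n)^{1/2}=m$ using property~\ref{itm:neighbor_S_pro5} of Lemma~\ref{lem:properties_of_S}, and the persistent footprint is $O(N+m)$, giving $O(r)$ rounds overall. The only cosmetic differences are that the paper drives the data movement directly through the \textbf{Copies of sets} primitive (building explicit $\mathrm{temp}$ and $\mathrm{target}$ auxiliary sets and a helper map $f^i$) rather than appealing to the \textbf{Multiple Tasks} scheduler, and it spells out the cleanup in the large-neighbor branch (the removal of $u$ when $|S_v^{(i)}|>|S_u^{(i-1)}|$) a bit more explicitly; neither affects correctness of your argument.
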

\begin{proof}
To implement line~\ref{sta:init_S2}, we can create a tuple $(``S_v^{(0)}",u)$ for each tuple $(``E",(v,u)).$
Then for each $(``S_v^{(0)}",u)$ we can compute the index (see \textbf{Indexing elements in sets} and \textbf{Multiple queries}) of $u$ in set $S_v^{(0)}.$
If the index of $u$ in set $S_v^{(0)}$ is at least $\lceil(m/n)^{1/2}\rceil,$ then delete $u$ from $S_v^{(0)},$ i.e. delete the tuple $(``S_v^{(0)}",u)$.

Now let us discuss how to implement line~\ref{sta:large_condition} and line~\ref{sta:alreadygood2} in the $i^{\text{th}}$ iteration.
Firstly, we can compute the size of every set stored in the system (see \textbf{Sizes of sets}).
Then for each tuple $(``S_v^{(i-1)}",u),$ the corresponding machine queries (see \textbf{Multiple queries}) the size of $S_u^{(i-1)}.$
If $|S_u^{(i-1)}|\geq \lceil(m/n)^{1/2}\rceil,$ then create a tuple $(``\mathrm{temp}_v^i",u).$
We can index (see \textbf{Indexing elements in sets}) all the elements in set $\mathrm{temp}_v^i,$ and only keep the element with index $1$.
Thus, $\mathrm{temp}_v^i$ has a only element $u$, and we need to create a set $S_v^{(i)}=S_u^{(i-1)}.$
Notice that there may be many $v\in V$ which needs need to implement $S_v^{(i)}=S_u^{(i-1)}.$
Thus, for each tuple $(``\mathrm{temp}_v^i",u),$ we create a tuple $(``\mathrm{target}_u^i",v).$
 $v\in\mathrm{target}_u^i$ means that $S_v^{(i)}$ needs a copy of $S_u^{(i-1)}.$
 Thus, $|\mathrm{target}_u^i|$ means that $S_u^{(i-1)}$ needs to copy $|\mathrm{target}_u^i|$ times.
 For each tuple $(``S_u^{(i-1)}",x),$ the machine queries (see \textbf{Multiple queries}) the size of $\mathrm{target}_u^i.$
 Then each set $S_u^{(i-1)}$ can be copied (see \textbf{Copies of sets}) $|\mathrm{target}_u^i|$ times.
 For each tuple $(``\mathrm{target}_u^i",v),$ we query (see \textbf{Multiple queries}) the index (see \textbf{Indexing elements in sets}) of $v$ in set $\mathrm{target}_u^i,$ and then create a tuple $(``f^i",((``\mathrm{target}_u^i",x),v)),$ where $x$ is the index of $v$ in $\mathrm{target}_u^i.$
 Thus $f^i$ is a mapping such that $f^i(\mathrm{target}_u^i,x)$ is the $x^{\text{th}}$ element in $\mathrm{target}_u^i$.
  For each tuple $(``S_{u,j}^{(i-1)}",x),$ we query (see \textbf{Multiple queries}) the value $v=f^i(\mathrm{target}_u^i,j),$ and then create a tuple $(``S_v^{(i)}",x),$ and a tuple $(``S_v^{(i)}",v).$
  We then remove the duplicates (see \textbf{Duplicates removing}) of elements of for every set $S_v^{(i)}.$
  For each tuple $(``\mathrm{temp}_v^i",u),$ query (see \textbf{Multiple queries}) the size (see \textbf{Sizes of sets}) of $S_v^{(i)}$ and $S_u^{(i-1)}$. If $|S_v^{(i)}|>|S_u^{(i-1)}|,$ then we create a tuple $(``g^i",(v,(u,``\text{delete}")))$; Otherwise, create a tuple $(``g^i",(v,(u,``\text{keep}"))).$
  Finally, for each tuple $(S_v^{(i)},x),$ we query (see \textbf{Multiple queries}) $(u,o)=g^i(v),$ if $u=x$ and $o=\text{``delete"},$ the machine deletes the tuple $(S_v^{(i)},x).$

  Next, let us discuss how to implement line~\ref{sta:stillbad}.
  Similar as before, we can compute the size of every set stored in the system (see \textbf{Sizes of sets}).
Then for each tuple $(``S_v^{(i-1)}",u),$ the corresponding machine queries (see \textbf{Multiple queries}) the size of $S_u^{(i-1)}.$
If $|S_u^{(i-1)}|\geq \lceil(m/n)^{1/2}\rceil,$ then create a tuple $(``\mathrm{temp}_v^i",u).$
For each tuple $(``V",v),$ we can create a tuple $(``\mathrm{temp}_v^i",\mathrm{null}).$
Then for each tuple $(``V",v)$ we can query (see \textbf{Multiple queries}) the size (see \textbf{Sizes of sets}) of $\mathrm{temp}_v^i.$
If $|\mathrm{temp}_v^i|=1,$ then we create a tuple $(``f'^i",1)$; Otherwise, we create a tuple $(``f'^i",0).$
Thus, mapping $f'^i$ is stored in the system, and $f'^i(v)=1$ if and only if $\forall u\in S_v^{(i-1)},|S_u^{(i-1)}|< \lceil(m/n)^{1/2}\rceil.$
For each tuple $(``S^{(i-1)}_v",u),$ we query (see \textbf{Multiple queries}) the value $f'^i(v).$
If $f'^i(v)=1,$ we create a tuple $(``\mathrm{target}_u^i",v).$
Thus, $v\in\mathrm{target}_u^i$ means that $S^{(i-1)}_u$ should be a part of $S^{(i)}_v.$
$|\mathrm{target}_u^i|$ means that $S_u^{(i-1)}$ needs to copy $|\mathrm{target}_u^i|$ times.
 For each tuple $(``S_u^{(i-1)}",v),$ we query (see \textbf{Multiple queries}) the size (see \textbf{Sizes of sets}) of $\mathrm{target}_u^i.$
 Then we can copy (see \textbf{Copies of sets}) each set $S_u^{(i-1)}$ $|\mathrm{target}_u^i|$ times.
 Then for each tuple $(``\mathrm{target}_u^i",v),$ we can query (see \textbf{Multiple queries}) the index $x$ (see \textbf{Indexing elements in sets}) of $v$ in set $\mathrm{target}_u^i$, and then create a tuple $(``f^i",((``\mathrm{target}_u^i",x),v))$ which means that the $x^{\text{th}}$ element of $\mathrm{target}_u^i$ is $f^i(\mathrm{target}_u^i,x)=v.$
 For each tuple $(``S_{u,j}^{(i-1)}",x),$ we query (see \textbf{Multiple queries}) the value $v=f^i(\mathrm{target}_u^i,j),$ and then create a tuple $(``S_v^{(i)}",x).$
  We then remove the duplicates (see \textbf{Duplicates removing}) of elements of for every set $S_v^{(i)}.$

  Finally, let us consider how to implement line~\ref{sta:condition_break}.
  It is very simple, we only need to query the sizes of sets.
  For each tuple $(``V",v),$ query (see \textbf{Multiple queries}) the size (see \textbf{Sizes of sets}) of $S_v^{(i)}$ and $S_v^{(i-1)},$ if $v$ satisfies the condition, create a tuple $(``Done",v).$
  Every machine queries (see \textbf{Multiple queries}) the size (see \textbf{Sizes of sets}) of $\mathrm{Done}$.
  If it is $|V|,$ then all the machines know that they finish the loop.
  In the end, for each tuple $(S_v^{(r)},u)$ we create tuples $(``E'",(u,v)),(``E'",(v,u)),$ and for each tuple $(``E",(u,v))$ we create tuple $(``E'",(u,v)).$
  Then we then remove the duplicates (see \textbf{Duplicates removing}) of elements of $E$.

  In the $i^{\text{th}}$ iteration, we only need to maintain sets $V,E,S_v^{(i-1)}.$
  Since all the copy operation will create at most $ n\cdot (m/n)^{1/2}\cdot (m/n)^{1/2}=m $ tuples, the total space needed is $\Theta(m)$ plus the space needed to maintain $V,E,S_v^{(i)}.$
  By Property~\ref{itm:neighbor_S_pro5} of Lemma~\ref{lem:properties_of_S}, $|S_v^{(i)}|\leq m/n.$
  Thus, the total space is $\Theta(m)+|V|+|E|+\sum_{v\in V}|S_v^{(i)}|=\Theta(m)+N=\Theta(m).$

  The above implementation shows that the parallel time is $O(r),$ where $r$ is the number of iterations (see Definition~\ref{def:neighbor_incr_num_iter}).
\end{proof}

\subsection{Tree Contraction Operation}
In this section, we show how to implement Algorithm~\ref{alg:tree_contraction} in $\MPC$ model.

\begin{lemma}\label{lem:parallel_implement_of_tree_contract}
Let graph $G=(V,E)$ and $\p:V\rightarrow V$ be a set of parent points (see Definition~\ref{def:parent_pointers}) on the vertex set $V$.
$\textsc{TreeContraction}(G,\p)$ (Algorithm~\ref{alg:tree_contraction}) can be implemented in $(0,\delta)-\MPC$ model for any constant $\delta\in(0,1).$ Furthermore, the parallel running time is $O(r),$ where $r$ is the number of iterations (see Definition~\ref{def:num_it_tree_contract}) of $\textsc{TreeContraction}(G,\p)$.
\end{lemma}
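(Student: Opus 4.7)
The plan is to realize each of the $r+O(1)$ ``logical rounds'' of Algorithm~\ref{alg:tree_contraction} by a constant number of $\MPC$ rounds, using only the generic primitives developed in Section~\ref{sec:MPCmodel} (sorting, indexing, prefix sums, predecessor search, and in particular the \emph{Multiple queries} primitive of Lemma~\ref{lem:multi_query}). Throughout, the state we maintain is the pair of sets $V, E$ (the input) plus one mapping $g^{(l)}:V\to V$ at a time, so the total space is always $O(|V|+|E|)=O(N)$, which fits in a $(0,\delta)$-$\MPC$ system for any constant $\delta\in(0,1)$.

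First I would initialize: for every tuple $(\text{``}\p\text{''},(v,\p(v)))$ emit a tuple $(\text{``}g^{(0)}\text{''},(v,\p(v)))$, so the mapping $g^{(0)}$ is stored as a set (in the sense of Definition~\ref{def:set_for_mapping}) in a constant number of rounds. For the main doubling loop, suppose $g^{(l-1)}$ is already stored. To produce $g^{(l)}$, I would, for every $v\in V$, issue two queries on $g^{(l-1)}$ using Lemma~\ref{lem:multi_query}: first obtain $u=g^{(l-1)}(v)$, then obtain $g^{(l-1)}(u)$, then emit the tuple for $g^{(l)}(v)$. This is a constant number of $\MPC$ rounds per iteration, and the query set has size $O(|V|)$, well within the total space budget.

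Next I would implement the loop termination test of line~\ref{sta:tree_contraction_condition}. For each $v\in V$, query $\p$ at position $g^{(l)}(v)$ using Lemma~\ref{lem:multi_query}; emit a ``bad'' token iff $\p(g^{(l)}(v))\neq g^{(l)}(v)$. Counting the number of bad tokens (via sizes of sets, Section~\ref{sec:set_operations}) and broadcasting the count gives every machine the boolean decision of whether to continue, again in $O(1)$ rounds per iteration. Hence the doubling loop costs $O(r)$ rounds in total.

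Finally, for the output of lines~\ref{sta:Vprime}--\ref{sta:Eprime}, I would build $V'$ by scanning all tuples of $\p$ and emitting $(\text{``}V'\text{''},v)$ whenever $\p(v)=v$, and build $E'$ as follows: for every $(u,v)\in E$, issue two queries on $g^{(r)}$ (Lemma~\ref{lem:multi_query}) to obtain $g^{(r)}(u)$ and $g^{(r)}(v)$, emit $(\text{``}E'\text{''},(g^{(r)}(u),g^{(r)}(v)))$ whenever the two images differ, and finish by the duplicate-removal routine of Section~\ref{sec:set_operations}. The query set here has size $O(|E|)$ and the resulting edge multiset has size at most $|E|$, so we remain within $O(N)$ space, and the whole output stage runs in $O(1)$ rounds. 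The only mildly delicate point is to keep the working set size under control across iterations: once we commit to releasing $g^{(l-1)}$ only after $g^{(l)}$ is fully materialized, the memory used by the loop body never exceeds $O(|V|)$ beyond the fixed input, so no load-balancing trouble arises, and the total parallel time is $O(r)$ as claimed.
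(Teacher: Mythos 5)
Your proposal is correct and follows essentially the same approach as the paper's own proof: store $g^{(l)}$ as a mapping, double it via two rounds of Lemma~\ref{lem:multi_query}, test the termination condition by counting ``done/bad'' tokens via the set-size primitive, and construct $V',E'$ by querying $\p$ and $g^{(r)}$. You are in fact slightly more explicit than the paper about filtering self-loops and removing duplicates when forming $E'$, and about freeing $g^{(l-1)}$ to keep the space at $O(N)$, but the decomposition into per-iteration constant-round steps is the same.
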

\begin{proof}
Let $N=|V|+|E|.$
Then the total space is $\Theta(N).$

Initially, each machine scans its local memory.
If there is a tuple $(``V",v),$ then it queries the value of $\p(v).$
It needs $O(1)$ parallel time to answer all the queries (see Multiple queries in Lemma~\ref{lem:multi_query}).
Then the machine creates a tuple $(``g^{(0)}",(v,\p(v))).$
Thus, in the initialization stage, mapping $g^{(0)},\p$, set $V,E$ are stored in the system.

In the $l^{\text{th}}$ iteration,
Each machine scans its local memory.
If there is a tuple $(``V",v),$ then it queries the value of $g^{(l-1)}(v).$
This can be done by Multiple queries.
Then it queries the value of $\p(g^{(l-1)}(v))$.
This can also be done by Multiple queries.
If $\p(g^{(l-1)}(v))=g^{(l-1)}(v),$ it creates a tuple $(``\mathrm{Done}",v).$
Then the machines can compute the sizes (see Section~\ref{sec:set_operations}) of $V$ and $\mathrm{Done}.$
Each machine queries the size of $V$ and $\mathrm{Done}.$
This can be done by Multiple queries.
Then if $|V|=|\mathrm{Done}|,$ every machine knows that the iterations are finished.
Otherwise, the machine which holds $(``V",v)$ queries the value of $g^{(l-1)}(g^{(l-1)})(v).$
This can be done by Multiple queries.
And then it creates a tuple $(``g^{(l)}",(v,g^{(l-1)}(g^{(l-1)})(v))).$

At the end, if a machine holds a tuple $(``V",v),$ then the queries $\p(v).$
If $v=\p(v),$ it creates a tuple $(``V' ",v).$
If a machine holds a tuple $(``E",(u,v)),$ then it queries $g^{(r)}(u),g^{(r)}(v),$ and creates a tuple $(``E' ",(g^{(r)}(u),g^{(r)}(v))).$

Since at the end of each iteration $l$, the system only stores mappings $\p:V\rightarrow V,g^{(r)}:V\rightarrow V,$ and sets $V,E,$ the total space used is at most $O(N).$
Thus, we can implement the algorithm in $(0,\delta)-\MPC$ model.

The total parallel time is $O(r).$
By Corollary~\ref{cor:tree_contract_conn}, $r=O(\dep(\p)).$
Thus, the total parallel time is $O(\dep(\p))$.
\end{proof}

\subsection{Graph Connectivity}

\begin{theorem}\label{thm:parallel_connectivity}
Let graph $G=(V,E),n=|V|,N=|V|+|E|$ and $m=\Theta(N^{\gamma})$ for some arbitrary $\gamma\in[0,2].$
Let $r>0$ be a round parameter.
$\textsc{Connectivity}(G,m,r)$ (Algorithm~\ref{alg:batch_algorithm2}) can be implemented in $(\gamma,\delta)-\MPC$ model for any constant $\delta\in(0,1).$ Furthermore, the parallel running time is $O(R),$ where $R$ is the total number of iterations (see Definition~\ref{def:total_iter_connectivity}) of $\textsc{Connectivity}(G,m,r).$
\end{theorem}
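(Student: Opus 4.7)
The plan is to implement each iteration of Algorithm~\ref{alg:batch_algorithm2} as a constant number of $\MPC$ rounds plus the inner invocations of $\textsc{NeighborIncrement}$ and $\textsc{TreeContraction}$, whose implementations are already provided by Lemma~\ref{lem:parallel_implement_of_neighbor_incr} and Lemma~\ref{lem:parallel_implement_of_tree_contract}. Throughout, the representation of every set, mapping, and graph follows Section~\ref{sec:data_organize}, and the manipulations rely on the sorting, indexing, prefix-sum, predecessor, set and mapping primitives from Sections~\ref{sec:set_operations} and \ref{sec:mapping_operations}.

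At a high level, I would proceed iteration-by-iteration. In iteration $i$, I first invoke $\textsc{NeighborIncrement}(m,G_{i-1})$ via Lemma~\ref{lem:parallel_implement_of_neighbor_incr} to obtain $G'_i=(V'_i,E'_i)$, taking $O(k_i)$ rounds and $\Theta(m)=\Theta(N^{1+\gamma})$ space. Line~\ref{sta:alg2_Vdoubleprime_create}: computing $V''_i$ is done by calling \textbf{Sizes of sets} to materialize the mapping $v\mapsto|\Gamma_{G'_i}(v)|$, issuing Multiple Queries (Lemma~\ref{lem:multi_query}) for each $v\in V'_i$, and emitting tuples $(``V''_i",v)$ whenever the degree meets the threshold; line~\ref{sta:alg2_Edoubleprime_create} filters $E_{i-1}$ by querying membership in $V''_i$ using the Set Membership primitive. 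Lines~\ref{sta:alg2_start_to_select_leader}--\ref{sta:alg2_sampling} use a shared random seed broadcast in one round: each machine holding $(``V''_i",v)$ independently evaluates the hash $l_i(v)$ locally. Line~\ref{sta:alg2_leader_set} produces $L_i$ by emitting, for each $(``V''_i",v)$ with $l_i(v)=1$, a tuple $(``L_i",v)$; for the second component, we emit, for every $(``E'_i",(v,u))$, a tuple $(``\mathrm{good}",v)$ whenever $l_i(u)=1$ (a Multiple Query), then use a Set Membership test to flag $v\in V''_i\setminus\mathrm{good}$ with $l_i(v)=0$ and add them to $L_i$. Lines~\ref{sta:alg2_assign_pointer1}--\ref{sta:alg2_assign_pointer2} are executed by sorting edges of $G'_i$ with keys $(v,\text{``}u\in L_i\text{''},u)$ and keeping the minimum-ranked neighbor leader for each $v$ via Predecessor (Theorem~\ref{thm:predecessor}); this produces the mapping $\p_i$. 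Line~\ref{sta:alg2_tree_contract} is then handled by Lemma~\ref{lem:parallel_implement_of_tree_contract}, giving $G_i$ and the root mapping in $O(r'_i)$ rounds on $O(N)$ space. The updates in lines~\ref{sta:VspminusVdp}--\ref{sta:hlasttime} are again constant-round tasks: line~\ref{sta:VspminusVdp} is a minimum over $\Gamma_{G'_i}(v)\cup\{v\}$ computed by sorting the relevant edge tuples; line~\ref{sta:VdpminusV} is a Multiple Query against the mapping $g_i^{(r'_i)}=\p_i^{(\infty)}$; line~\ref{sta:hlasttime} is a Multiple Query against the previously stored $h_{i-1}$, combined using a single sort-and-aggregate pass to maintain the global $h_i$. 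Finally, the call in line~\ref{sta:final_output_alg2} invokes Lemma~\ref{lem:parallel_implement_of_tree_contract} once more on $(G,h_r)$.

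The main potential obstacle is the space budget: we must guarantee that at every moment the total memory used lies within $O(N^{1+\gamma})$, even though intermediate sets such as $E'_i$, the candidate-leader edges, and the copied sets produced by Multiple Queries can temporarily balloon. The key observation is that by property~\ref{itm:neighbor_incr_pro4} of Lemma~\ref{lem:properties_of_neighbor_increment}, $|E'_i|\le|E_{i-1}|+m=O(m)$, and by property~\ref{itm:neighbor_S_pro5} of Lemma~\ref{lem:properties_of_S} every $S_v^{(\cdot)}$ has size at most $m/n_{i-1}$, so the edge-expansion, the per-vertex scans in the leader-selection step, and all the copy-and-query passes each produce at most $O(m)=O(N^{1+\gamma})$ tuples. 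Combined with the fact that the vertex count is nonincreasing ($|V_i|\le|V_{i-1}|\le n$), each stage fits within $O(N^{1+\gamma})$ total space, so the $(\gamma,\delta)$-$\MPC$ memory constraint is respected for every constant $\delta\in(0,1)$.

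For the round complexity, every step outside of the two inner subroutines is implemented by a constant number of sorts, predecessor passes and Multiple Queries, each costing $O(1/\delta)=O(1)$ rounds by Theorems~\ref{thm:sorting}, \ref{thm:indexing}, \ref{thm:predecessor} and Lemma~\ref{lem:multi_query}. Therefore iteration $i$ runs in $O(k_i+r'_i)$ parallel rounds, and summing over $i\in[r]$ plus the final $O(\lceil\log r\rceil)=O(1)$-dominated invocation of $\textsc{TreeContraction}$ on $h_r$ gives a total of $O\bigl(\sum_{i=1}^r(k_i+r'_i)\bigr)=O(R)$ parallel rounds, matching the definition of the total number of iterations in Definition~\ref{def:total_iter_connectivity}.
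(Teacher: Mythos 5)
Your proof follows essentially the same line as the paper: implement Algorithm~\ref{alg:batch_algorithm2} step by step with the set/mapping primitives of Section~\ref{sec:MPCmodel}, invoke Lemmas~\ref{lem:parallel_implement_of_neighbor_incr} and~\ref{lem:parallel_implement_of_tree_contract} for the two subroutines, and charge $O(k_i+r'_i)$ rounds plus $O(m)$ space per phase; the small deviations (sort-and-predecessor for line~\ref{sta:alg2_assign_pointer2}, shared-seed sampling for $l_i$) are interchangeable with the paper's indexing-plus-multiple-queries choices. One thing worth stating explicitly is why the final $\textsc{TreeContraction}(G,h_r)$, which costs $O(\log r)$ rounds (not $O(1)$), is absorbed into $O(R)$: since each $k_i\geq 1$, we have $\log r\leq r\leq\sum_{i=1}^r k_i\leq R$.
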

\begin{proof}
Initially, we store sets $V_0,E_0,V,E$ and mapping $h_0$ in the system.
Now consider the $i^{\text{th}}$ round.
Due to Lemma~\ref{lem:parallel_implement_of_neighbor_incr}, line~\ref{sta:alg2_neighbor_incr} can be implemented in total space $\Theta(m)$ and with $O(k_i)$ parallel time, where $k_i$ is the number of iterations (See Definition~\ref{def:neighbor_incr_num_iter}) of $\textsc{NeighborIncrement}(m,G_{i-1}).$
  To store $V_i'$ and $E_i',$ we need total space $\Theta(m).$
  Line~\ref{sta:alg2_Vdoubleprime_create} can be implemented by operations described in \textbf{Sizes of sets} and \textbf{Multiple queries} (see Section~\ref{sec:MPCmodel}).
  Line~\ref{sta:alg2_Edoubleprime_create} can be implemented by the operations described in \textbf{Set membership} and \textbf{Multiple queries}.
  To implement line~\ref{sta:alg2_sampling}, for each tuple $(``V''_i",v),$ we can create a tuple $(``l_i",(v,x))$ where $x=1$ with probability $p_i$, $x=0$ with probability $1-p_i$.
  To calculate $p_i,$ the machine only needs to know $n_{i-1}.$
  This can be done by the operations described in \textbf{Sizes of sets} and \textbf{Multiple queries}.
  Line~\ref{sta:alg2_leader_set} and line~\ref{sta:alg2_assign_pointer1} can be implemented by operations described in \textbf{Set membership} and \textbf{Multiple queries}.
  For line~\ref{sta:alg2_assign_pointer2}, set $L_i\cap (\Gamma_{G'_i}(v)\cup \{v\})$ can be computed by operations described in \textbf{Set membership} and \textbf{Multiple queries}.
  Then, by operations in \textbf{Indexing elements in sets} and \textbf{Multiple queries}, we can get $\min_{u\in L_i\cap (\Gamma_{G'_i}(v)\cup \{v\})} u$.
  Finally, by operation described in \textbf{Multiple queries}, $\forall v\in V''_i$ with $v\not\in L_i,$  the tuple $(``\p_i",(v,x))$ can be created, where $x=\min_{u\in L_i\cap (\Gamma_{G'_i}(v)\cup \{v\})} u.$
  Due to Lemma~\ref{lem:parallel_implement_of_tree_contract}, line~\ref{sta:alg2_tree_contract} can be implemented in total $\Theta(m)$ space and $O(r_i')$ parallel running time, where $r_i'$ is the number of iterations (see Definition~\ref{def:num_it_tree_contract}) of $\mathrm{TreeContraction}(G_i'',\p_i)$.
  Line~\ref{sta:VspminusVdp} can be implemented by operations in \textbf{Set membership}, \textbf{Indexing elements in sets} and \textbf{Multiple queries}.
  Line~\ref{sta:VdpminusV} can be implemented by operations in \textbf{Set membership} and \textbf{Multiple queries}.
  Line~\ref{sta:hlasttime} can be implemented by \textbf{Multiple queries}.
  For other $v\in V$ with $h_i(v)=\mathrm{null}$ assigned by line~\ref{sta:alg2_init_null}, we can use the operations in \textbf{Set membership} and \textbf{Multiple queries} to find those $v$, and create a tuple $(``h_i",v,\mathrm{null}).$

  Thus, in the $i^{\text{th}}$ round, the parallel time needed is $O(k_i+r'_i).$
  At the end of the $i^{\text{th}}$ round, we only need to keep sets $V_i,E_i,V,E$ and mapping $h_i$ in the system.
  It will take total space at most $O(m).$

  Due to Lemma~\ref{lem:parallel_implement_of_tree_contract}, line~\ref{sta:final_output_alg2} can be implemented in at most $O(m)$ total space and $O(\log r)$ parallel time.

  Thus, the total parallel time is $O(\log r+\sum_{i=1}^r (k_i+r'_i))=O(\sum_{i=1}^r (k_i+r'_i))$.
  By definition~\ref{def:total_iter_connectivity}, the total parallel time is $O(R),$ where $R$ is the total number of iterations of $\textsc{Connectivity}(G,m,r)$.
  The total space in the computation is always at most $\Theta(m)$.
\end{proof}

Here, we are able to conclude the following theorem for graph connectivity problem.
\begin{theorem}\label{thm:formal_statement_of_connectivity}
For any $\gamma\in[0,2]$ and any constant $\delta\in(0,1),$ there is a
randomized $(\gamma,\delta)-\MPC$ algorithm (see Algorithm~\ref{alg:batch_algorithm2}) which can output the
connected components for any graph $G=(V,E)$ in $O(\min(\log D\cdot
\log (1/\gamma'),\log n))$ parallel time, where $D$ is the
diameter of $G,$ $n=|V|,$ $N=|V|+|E|$ and $\gamma'=(1+\gamma)\log_n
\frac{2N}{n^{1/(1+\gamma)}}.$ The success probability is at least
$0.98.$ In addition, if the algorithm fails, then it will return
{\rm FAIL}.
\end{theorem}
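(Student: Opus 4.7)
The plan is to apply Algorithm~\ref{alg:batch_algorithm2}, $\textsc{Connectivity}(G,m,r)$, with carefully chosen parameters and then stitch together the three previously-established theorems (Theorem~\ref{thm:correct_rand_leader_alg} for correctness, Theorem~\ref{thm:alg2_time_prob} for the success probability and the iteration count, and Theorem~\ref{thm:parallel_connectivity} for the MPC implementation). Specifically, I would take the ``memory'' parameter to be $m = \Theta(N^{1+\gamma})$, which matches the total space budget of the $(\gamma,\delta)-\MPC$ model, and I would set the round parameter $r = \lceil c \log\log_{m/n} n \rceil$ for the universal constant $c$ provided by Theorem~\ref{thm:alg2_time_prob}. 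I would also fall back to $r = \lceil c\log n\rceil$ whenever $\log\log_{m/n}n \geq \log n$, or when $m < 4n$, which reduces to simulating a classical $O(\log n)$-round PRAM connectivity algorithm inside the MPC model; this recovers the $\log n$ branch of the minimum in the stated bound.

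Next I would verify the hypotheses of the cited theorems: since $\gamma \in [0,2]$ and $N \leq n^2$ one has $m \leq \poly(n)$, and since we are in the nontrivial regime $m \geq 4n$, Theorem~\ref{thm:alg2_time_prob} guarantees that with probability at least $0.98$ the algorithm does not return FAIL, and Theorem~\ref{thm:correct_rand_leader_alg} then guarantees that the returned map $\col$ is precisely the connected-components coloring. The total iteration count, combining the bounds on the number of inner iterations of $\textsc{NeighborIncrement}$ and $\textsc{TreeContraction}$, is at most $O(\log D\cdot \log\log_{m/n} n)$ (plus an additive $(\log\log n)^2$ when $m$ is small, which is absorbed into the same expression). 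Theorem~\ref{thm:parallel_connectivity} then translates this iteration count into an MPC running time of the same asymptotic order, using total space $\Theta(m)$, which is exactly the total space allowed by the $(\gamma,\delta)-\MPC$ model.

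The final and most delicate step is to rewrite $\log\log_{m/n} n$ in the form $\log(1/\gamma')$ promised by the theorem statement, where $\gamma' = (1+\gamma)\log_n\!\bigl(2N/n^{1/(1+\gamma)}\bigr)$. Plugging in $m = \Theta(N^{1+\gamma})$ gives $\log_n(m/n) = (1+\gamma)\log_n N - 1$, while expanding the definition of $\gamma'$ gives $\gamma' = (1+\gamma)\log_n N + (1+\gamma)\log_n 2 - 1$, so $\gamma' = \log_n(m/n) + O(1/\log n)$ uniformly over $\gamma\in[0,2]$. Since $\log_{m/n} n = 1/\log_n(m/n)$, this yields $\log\log_{m/n} n = \log(1/\gamma') + O(1)$, and the overall round bound is $O\!\bigl(\min(\log D\cdot \log(1/\gamma'),\log n)\bigr)$ as claimed, with all constants absorbed into the big-$O$.

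The main obstacle I anticipate is the careful handling of boundary regimes and the arithmetic identifying $\gamma'$ with $\log_n(m/n)$: one has to check that the near-equality holds throughout the full range $\gamma\in[0,2]$, including the corner where $\gamma$ is so small that $\gamma' = O(1/\log n)$ and the first branch of the $\min$ would exceed $\log n$; in that corner the fallback to the $O(\log n)$-round algorithm must be invoked explicitly so that the $\min$ is attained by its second argument. Apart from this case analysis, the proof is a routine composition of the three cited theorems with the above choice of parameters.
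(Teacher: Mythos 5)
Your proposal is correct and follows the same route as the paper, which proves Theorem~\ref{thm:formal_statement_of_connectivity} in three sentences by composing exactly the three results you cite: Theorem~\ref{thm:parallel_connectivity} for the MPC implementation, Theorem~\ref{thm:correct_rand_leader_alg} for correctness, and Theorem~\ref{thm:alg2_time_prob} for the iteration count and success probability. Your extra work filling in the parameter choice $m=\Theta(N^{1+\gamma})$, $r=\Theta(\log\log_{m/n}n)$ and the arithmetic identifying $\gamma'$ with $\log_n(m/n)+O(1/\log n)$ is exactly what the paper leaves implicit, and that arithmetic checks out over all $\gamma\in[0,2]$.

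One remark: the explicit fallback to an $O(\log n)$-round PRAM simulation you describe is not present in the paper and is not actually needed to recover the $\log n$ branch of the minimum. The bound supplied by Theorem~\ref{thm:alg2_time_prob} is $O(\log D\cdot \log\log_D n + (\log\log n)^2)$, and since $u\mapsto u\log(\log n/u)$ on $(0,\log n]$ is maximized at $u=\log n/e$, one always has $\log D\cdot\log\log_D n\le \log n /e$; together with $(\log\log n)^2=O(\log n)$, the iteration count is $O(\log n)$ unconditionally, so the $\min$ is achieved without invoking a separate algorithm. (Also, the case $m<4n$ can be handled simply by padding $m$ up to $4n$, which stays within $O(N^{1+\gamma})$ total space since $N\ge n$, rather than switching algorithms.) These are minor points of exposition; the substance of your proposal matches the paper's proof.
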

\begin{proof}
The implementation of Algorithm~\ref{alg:batch_algorithm2} in $\MPC$ model is shown by Theorem~\ref{thm:parallel_connectivity}.
The correctness of Algorithm~\ref{alg:batch_algorithm2} is proved by Theorem~\ref{thm:correct_rand_leader_alg}.
The total parallel time of Algorithm~\ref{alg:batch_algorithm2} is proved by Theorem~\ref{thm:alg2_time_prob}.
\end{proof}

\subsection{Algorithms for Local Shortest Path Trees}

In this section, we mainly explained how to implement local shortest path tree algorithms described in Section~\ref{sec:local_short_tree} and Section~\ref{sec:multi_local_short_tree}.

\begin{lemma}\label{lem:parallel_implement_of_tree_expand}
Let $G=(V,E)$ be an undirected graph, $s_1,s_2\in\mathbb{Z}_{\geq 0},$ and $v\in V.$ Let $\wt{T}=(V_{\wt{T}},\p_{\wt{T}})$ with root $v$ and radius $s_1$ be a local complete shortest path tree (see Definition~\ref{def:local_complete_tree}) in $G,$ and $\dep_{{\wt{T}}}:V_{\wt{T}}\rightarrow\mathbb{Z}_{\geq 0}$ be the depth of every vertex in $\wt{T}$. $\forall u\in V_{\wt{T}},$ let $T(u)$ with root $u$ and radius $s_2$ be a local complete shortest path tree in $G,$ and $\dep_{{T(u)}}:V_{T(u)}\rightarrow \mathbb{Z}_{\geq 0}$ be the depth of every vertex in $T(u)$. Then $\textsc{TreeExpansion}(\wt{T},\dep_{{\wt{T}}},\{T(u)\mid u\in V_{\wt{T}}\},\{\dep_{{T(u)}}\mid u\in V_{\wt{T}}\})$ (Algorithm~\ref{alg:merge_short_tree}) can be implemented in $(0,\delta)-\MPC$ model for any constant $\delta\in(0,1)$ in $O(1)$ parallel time.
\end{lemma}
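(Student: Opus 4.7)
My plan is to implement each line of Algorithm~\ref{alg:merge_short_tree} using the constant-round $\MPC$ primitives developed in Section~\ref{sec:MPCmodel}: sorting (Theorem~\ref{thm:sorting}), indexing/prefix sum (Theorem~\ref{thm:indexing}), predecessor (Theorem~\ref{thm:predecessor}), and the derived operations for set membership, set merging, duplicate removal, and multiple queries (Lemma~\ref{lem:multi_query}). Since each such primitive runs in $O(1/\delta)=O(1)$ rounds for constant $\delta$, and Algorithm~\ref{alg:merge_short_tree} consists of a constant number of non-iterative lines, the entire procedure will complete in $O(1)$ rounds. The input is stored as the sets $V_{\wt{T}}$, $\p_{\wt{T}}$, $\dep_{\wt{T}}$, and for each $u\in V_{\wt{T}}$ the sets $V_{T(u)}$, $\p_{T(u)}$, $\dep_{T(u)}$, all in the standard set/mapping representations from Section~\ref{sec:data_organize}, and the total input size is $\Theta(|V_{\wt{T}}|+\sum_{u\in V_{\wt{T}}}|V_{T(u)}|)$, which is what will be available as total space.

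To implement line~\ref{sta:lcspt_merge_vertex}, I would scan each tuple $(``V_{T(u)}",x)$ and emit $(``V_{\wh{T}}",x)$, then apply the duplicate-removal operation via sorting. For line~\ref{sta:lcspt_original_pointer} and line~\ref{sta:lcspt_original_depth}, I would scan each tuple $(``V_{\wt{T}}",x)$ and issue a multiple query for $\p_{\wt{T}}(x)$ and $\dep_{\wt{T}}(x)$, then emit the tuples $(``\p_{\wh{T}}",(x,\p_{\wt{T}}(x)))$ and $(``h",(x,\dep_{\wt{T}}(x)))$. A membership test (\textbf{Set membership} in Section~\ref{sec:set_operations}) on $V_{\wt{T}}$ allows me to restrict the remaining steps to $V_{\wh{T}}\setminus V_{\wt{T}}$.

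The main obstacle is lines~\ref{sta:lcspt_find_ux}–\ref{sta:lcspt_compute_depth}, where for each $x\in V_{\wh{T}}\setminus V_{\wt{T}}$ I must compute $u_x=\arg\min_{u\in V_{\wt{T}},\,x\in V_{T(u)}}\dep_{\wt{T}}(u)+\dep_{T(u)}(x)$, an aggregation over an a priori unbounded number of candidate $u$'s per $x$. The plan is to reduce this to a sort-and-scan: for every tuple $(``V_{T(u)}",x)$ I fire multiple queries for $\dep_{\wt{T}}(u)$ and $\dep_{T(u)}(x)$ (both stored as mappings) and emit a tuple $(``M",(x,u,\dep_{\wt{T}}(u)+\dep_{T(u)}(x)))$. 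I then sort $M$ lexicographically with primary key $x$ and secondary key the combined depth; after sorting, the first occurrence of each $x$ carries the minimizer $u_x$ and its value. Identifying this first occurrence is exactly the predecessor/indexing primitive: mark the head of each $x$-block and, using indexing, emit the single tuple $(``u^\star",(x,u_x,\dep_{\wt{T}}(u_x)+\dep_{T(u_x)}(x)))$ per $x$. Finally another multiple-query pass retrieves $\p_{T(u_x)}(x)$ to populate $\p_{\wh{T}}(x)$ and writes $h(x)$. All of this uses $O(1)$ rounds and total space proportional to $\sum_u|V_{T(u)}|$, which fits inside the $(0,\delta)$-$\MPC$ budget since that sum is the input size to the procedure.
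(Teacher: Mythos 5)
Your proposal is correct and takes essentially the same approach as the paper: the paper also handles the aggregation in lines~\ref{sta:lcspt_find_ux}--\ref{sta:lcspt_compute_depth} by emitting one candidate tuple per $(u,x)$ pair with the combined depth (the paper's set $\mathrm{temp}_x$), grouping candidates for the same $x$ via sorting, and extracting the minimizer with the indexing primitive before a final multiple-query pass — which is exactly your sort-and-scan with predecessor/indexing.
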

\begin{proof}
For line~\ref{sta:lcspt_merge_vertex}, we apply operation shown in \textbf{Copies of sets} to copy each $V_{T(u)},$ then we can merge (see \textbf{Set merging}) all the copies to get $V_{\wh{T}}.$
To implement line~\ref{sta:lcspt_original_pointer} and line~\ref{sta:lcspt_original_depth}, we only need to apply the operation shown in \textbf{Multiple queries}.
To implement line~\ref{sta:lcspt_find_ux}, for each tuple $(``V_{T(u)}",x),$ we can firstly check whether $x\in V_{\wh{T}}\setminus V_{\wt{T}}$ by operations described in \textbf{Set membership} and \textbf{Multiple queries}.
If $x\in V_{\wh{T}}\setminus V_{\wt{T}},$ then we can query the values of $\dep_{{\wt{T}}}(u)$ and $\dep_{{T(u)}}(x)$ by operations shown in \textbf{Multiple queries}.
Then we create a tuple $(``\mathrm{temp}_x",(\dep_{{\wt{T}}}(u)+\dep_{{T(u)}}(x),u)).$
By \textbf{Indexing elements in sets} and \textbf{Multiple queries}, we can find the element with the smallest index in set $\mathrm{temp}_x,$ and thus that element is $(\dep_{{\wt{T}}}(u_x)+\dep_{{T(u_x)}}(x),u_x).$
Finally, the remaining things in line~\ref{sta:lcspt_find_ux} and line~\ref{sta:lcspt_compute_depth} can be done by the operations described by \textbf{Multiple queries}.

For all the operations, the total space is always linear.
The parallel time needed for the above operations is also a constant.
\end{proof}

\begin{lemma}\label{lem:parallel_implement_of_multi_radius}
Let graph $G=(V,E),n=|V|,N=|V|+|E|$ and $m=\Theta(N^{\gamma})$ for some arbitrary $\gamma\in[0,2].$ $\textsc{MultiRadiusLCSPT}(G,m)$ (Algorithm~\ref{alg:doubling_expansion}) can be implemented in $(\gamma,\delta)-\MPC$ model for any constant $\delta\in(0,1).$ Furthermore, the parallel running time is $O(r),$ where $r$ is the number of iterations (see Definition~\ref{def:num_iter_multi_radius_lcspt}) of $\textsc{MultiRadiusLCSPT}(G,m).$
\end{lemma}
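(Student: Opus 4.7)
The plan is to implement Algorithm~\ref{alg:doubling_expansion} by translating each step into the $\MPC$ primitives from Section~\ref{sec:MPCmodel} (sorting, indexing, prefix sums, predecessors, multiple queries, set copying/merging, multi-task scheduling), while showing that the total space stays $\Theta(m)$ and the parallel time is $O(r)$.

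First I would handle the initialization (lines~\ref{sta:alg_exp_init1}--\ref{sta:alg_exp_init3}). I represent $G$ by the set $V$ and the edge set $E$. For every tuple $(``E",(v,u))$ I create the candidate tuple $(``T_0(v)",u)$ and $(``T_0(v)",v)$; using \textbf{Sizes of sets} together with \textbf{Multiple queries} each machine then learns $|\{v\}\cup\Gamma_G(v)|$ and either keeps $T_0(v)$ or replaces it with the null marker (one extra ``flag'' tuple per vertex). The depth map $\dep_{T_0(v)}$ and the parent map $\p_{T_0(v)}$ are built directly from these tuples. This all takes $O(1)$ parallel rounds in $\Theta(n+m)=\Theta(m)$ total space.

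The heart of the proof is implementing one pass of the main loop in $O(1)$ parallel rounds. For a fixed $r$, the algorithm must, for every $v\in V$ with $T_{r-1}(v)\neq\nul$ and with all $T_{r-1}(u)\neq\nul$ for $u\in V_{T_{r-1}(v)}$, invoke $\textsc{TreeExpansion}$ on inputs of total size $|V_{T_{r-1}(v)}|\cdot(m/n)^{1/4}\le \sqrt{m/n}$. I would (i) use \textbf{Multiple queries} to fetch, for every tuple $(``V_{T_{r-1}(v)}",u)$, a pointer and size indicator for $T_{r-1}(u)$; (ii) use \textbf{Copies of sets} to replicate each $T_{r-1}(u)$ once per occurrence of $u$ in some $V_{T_{r-1}(v)}$; (iii) package, for each $v$, a subtask whose input is $T_{r-1}(v)$ together with the replicated $\{T_{r-1}(u)\}$; (iv) invoke the \textbf{Multiple tasks} framework from Section~\ref{sec:multi_tasks} to execute $n$ independent calls to $\textsc{TreeExpansion}$, each of which by Lemma~\ref{lem:parallel_implement_of_tree_expand} runs in $O(1)$ rounds of $(0,\delta)-\MPC$; (v) after the batch returns, for each $v$ compare $|V_{T_r(v)}|$ to $\lceil(m/n)^{1/4}\rceil$ using \textbf{Sizes of sets} and nullify $T_r(v)$ if it exceeds the cap, implementing lines~\ref{sta:alg_exp_set_null1}, \ref{sta:alg_exp_set_null2}, and \ref{sta:alg_exp_set_null}. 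Finally, the termination condition in line~\ref{sta:alg_exp_condition} is a simple per-vertex predicate checked by \textbf{Multiple queries} followed by comparing $|\mathrm{Done}|$ with $|V|$, exactly as in the implementation of Algorithm~\ref{alg:tree_contraction}.

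The main obstacle is bounding the total space used by the replication in step~(ii), because a naive implementation would blow up by another factor of $(m/n)^{1/4}$. The key observation is property~\ref{itm:multi_pro2} of Lemma~\ref{lem:multi_radius_trees}: at the start of iteration $r$, every non-null $V_{T_{r-1}(v)}$ has size strictly less than $\lceil(m/n)^{1/4}\rceil$, and each $T_{r-1}(u)$ fed into the expansion also has size strictly less than $\lceil(m/n)^{1/4}\rceil$, so per vertex the working set is at most $(m/n)^{1/4}\cdot(m/n)^{1/4}=(m/n)^{1/2}$ words, summing to $n\cdot(m/n)^{1/2}=\sqrt{mn}=O(m)$ (using $m\ge n$). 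The \textbf{Multiple tasks} framework then allocates $O(m)$ machines to the subtasks and each subtask runs within its local budget, so the per-iteration space is $\Theta(m)$. I must also be careful that the cap is applied \emph{before} the next round so that the invariant for the next iteration's replication bound holds, which matches the order of operations in Algorithm~\ref{alg:doubling_expansion}. Combining $O(1)$ rounds per iteration over the $r$ iterations gives the claimed $O(r)$ parallel time in the $(\gamma,\delta)-\MPC$ model, completing the proof.
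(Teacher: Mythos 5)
Your proposal follows essentially the same route as the paper's proof: implement the initialization and the per-vertex null tests with \textbf{Sizes of sets} and \textbf{Multiple queries}, replicate the inputs to each $\textsc{TreeExpansion}$ call with \textbf{Copies of sets}, run all $n$ calls as independent subtasks via \textbf{Multiple Tasks} and Lemma~\ref{lem:parallel_implement_of_tree_expand} in $O(1)$ rounds, and bound the replication cost via property~\ref{itm:multi_pro2} of Lemma~\ref{lem:multi_radius_trees} (each non-null $T_{i-1}(\cdot)$ has size $<\lceil(m/n)^{1/4}\rceil$, so the packaged inputs total $n\cdot(m/n)^{1/2}=O(\sqrt{mn})=O(m)$). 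This is the same decomposition and the same key bound the paper uses.

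There is, however, a piece of the space accounting you leave out. You argue only that the \emph{per-iteration} working space is $\Theta(m)$, but the procedure returns $\{T_i(v)\mid i\in\{0\}\cup[r],\ v\in V\}$ and the associated depth maps, and Algorithm~\ref{alg:maximal_short_tree} subsequently reads $T_{r-i}(\cdot)$ for several $i$, so every round's output must be retained simultaneously; discarding $T_{i-1}(\cdot)$ after computing $T_i(\cdot)$ is not an option. The paper closes this by separately bounding the cumulative storage at $O(r\cdot n\cdot(m/n)^{1/4})$ and then invoking property~\ref{itm:multi_pro4} of Lemma~\ref{lem:multi_radius_trees} (namely $r\le O(\log(m/n))$) together with $m\ge 16n$ to conclude this is still $O(m)$. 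Without that step you have only shown that any single round fits in the $(\gamma,\delta)$-MPC budget, not that the entire execution does. The fix is short, but it is a distinct inequality and needs to be stated. (A minor exposition slip: lines~\ref{sta:alg_exp_set_null1} and \ref{sta:alg_exp_set_null2} are guards that must be evaluated \emph{before} deciding which vertices get a $\textsc{TreeExpansion}$ subtask, so they belong with your steps (i)--(iii), not the post-hoc nullification in step (v); your step (i) already fetches the null indicators, so the information is there, but the ordering in (v) as written does not match the algorithm.)
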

\begin{proof}
To implement line~\ref{sta:alg_exp_init1} to line~\ref{sta:alg_exp_init2point5}, we can scan all the tuples $(``E",(u,v)),$ then query the size of $\{v\}\cup\Gamma_G(v)$ and the size of $\{u\}\cup \Gamma_G(u),$ where these operations are described in \textbf{Sizes of sets} and \textbf{Multiple queries.}
Then based on the sizes, we decide whether we need to create the corresponding tuples for $V_{T_0(v)},V_{T_0(u)},\p_{T_0(u)},\p(T_0(v)).$

Now consider the main loop.
We focus on the $i^{\text{th}}$ round.
Line~\ref{sta:alg_exp_set_null1} can be implemented by the operation described in \textbf{Multiple queries}.
To implement line~\ref{sta:alg_exp_set_null2}, for each tuple $(``V_{T_{i-1}(v)}",u),$ we can query (see \textbf{Multiple queries}) whether $T_{i-1}(u)$ is $\mathrm{null}.$
If $T_{i-1}(u)$ is $\mathrm{null},$ then we create a tuple $(``\mathrm{temp}_v",u).$
Then for each tuple $(``V",v),$ we can query the size of $\mathrm{temp}_v$ by operations described in \textbf{Sizes of sets} and \textbf{Multiple queries}.
If the size is not $0$, then $T_i(v)$ must be $\mathrm{null}.$
Line~\ref{sta:alg_exp_double} can be implemented by coping input for different tasks and running tasks in parallel, where it only needs operations shown in \textbf{Copies of sets}, \textbf{Multiple queries} and \textbf{Multiple Tasks} (see Section~\ref{sec:multi_tasks}).
According to Lemma~\ref{lem:parallel_implement_of_tree_expand}, it only needs $O(1)$ parallel time.
Line~\ref{sta:alg_exp_set_null} and line~\ref{sta:alg_exp_condition} only need the operation shown in \textbf{Multiple queries}.

Thus, the total parallel time is $O(r)$ where $r$ is the number of iterations (see Definition~\ref{def:num_iter_multi_radius_lcspt}) of $\textsc{MultiRadiusLCSPT}(G,m).$
For the total space, we stored the sets $V_{T_i(v)}$ for all $i\in[r],v\in V$ and mappings $\p_{T_i(v)},\dep_{T_i(v)}$ for all $i\in[r],v\in V$.
By Lemma~\ref{lem:multi_radius_trees}, the total space to store all of them is at most $O(r\cdot n\cdot (m/n)^{1/4})=O(m).$
In the $i^{\text{th}}$ round of the main loop, line~\ref{sta:alg_exp_double} may make copies of the set.
By Lemma~\ref{lem:multi_radius_trees}, the input size of each task will be at most $O((m/n)^{1/4}\cdot (m/n)^{1/4}).$
Since the there are at most $n$ tasks, the total space needed is at most $O(m).$
\end{proof}

\begin{lemma}\label{lem:parallel_implement_of_multiple_large_tree}
Let graph $G=(V,E),n=|V|,N=|V|+|E|$ and $m=\Theta(N^{\gamma})$ for some arbitrary $\gamma\in[0,2].$ $\textsc{MultipleLargeTrees}(G,m)$ (Algorithm~\ref{alg:maximal_short_tree}) can be implemented in $(\gamma,\delta)-\MPC$ model for any constant $\delta\in(0,1).$ Furthermore, the parallel running time is $O(r),$ where $r$ is the number of iterations (see Definition~\ref{def:num_it_multiplelargetree}) of $\textsc{MultipleLargeTrees}(G,m).$
\end{lemma}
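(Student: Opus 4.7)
The plan is to follow the structure of Algorithm~\ref{alg:maximal_short_tree} line by line, reducing each step either to primitives described in Section~\ref{sec:set_operations}/Section~\ref{sec:mapping_operations}/Section~\ref{sec:multi_tasks} (which take $O(1)$ parallel rounds each), or to previously-implemented subroutines. In particular, line~\ref{sta:alg_maximal_invoke_multi} invokes $\textsc{MultiRadiusLCSPT}(G,m)$, which by Lemma~\ref{lem:parallel_implement_of_multi_radius} runs in $(\gamma,\delta)$-MPC in time proportional to its own iteration count; this call also gives us, for each $v\in V$ and each $i\in\{0,1,\dots,r'\}$, the sets $V_{T_i(v)}$ and the depth maps $\dep_{T_i(v)}$ (or the flag that $T_i(v)=\nul$), all stored in the system as sets and mappings. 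The initialization in lines~\ref{sta:alg_maximal_init0}--\ref{sta:alg_maximal_init2} is then a constant number of \textbf{Multiple queries} together with tuple creation: for every tuple $(``V",v)$ we look up whether $T_r(v)=\nul$, and depending on the answer either copy the tuple representation of $T_r(v)$ into a new name $\wt{T}(v)$, or create the singleton tree $(\{v\},v\mapsto v)$ with $\dep_{\wt{T}_0(v)}(v)=0$ and $s_0(v)=0$.

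For the main loop (lines~\ref{sta:alg_maximal_loop_start}--\ref{sta:alg_maximal_loop_end}), I will implement each iteration in $O(1)$ rounds. Checking the condition in line~\ref{sta:alg_maximal_not_null_condition} is done exactly as in Lemma~\ref{lem:parallel_implement_of_multi_radius}: for each tuple $(``V_{\wt{T}_{i-1}(v)}",u)$, query whether $T_{r-i}(u)=\nul$; create an auxiliary tuple $(``\mathrm{bad}_v",u)$ if so; then \textbf{Sizes of sets} plus \textbf{Multiple queries} lets every $v$ learn whether $\mathrm{bad}_v$ is empty. When the condition holds, line~\ref{sta:alg_maximal_tree_expansion1} is the call $\textsc{TreeExpansion}(\wt{T}_{i-1}(v),\dep_{\wt{T}_{i-1}(v)},\{T_{r-i}(u)\},\{\dep_{T_{r-i}(u)}\})$, and these calls over all $v$ are independent \emph{multiple tasks} in the sense of Section~\ref{sec:multi_tasks}. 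By Lemma~\ref{lem:parallel_implement_of_tree_expand}, each task runs in $O(1)$ $(0,\delta')$-MPC time for any constant $\delta'$; assembling the required input per task uses \textbf{Copies of sets} and \textbf{Multiple queries} in $O(1)$ rounds, after which the scheduler of Section~\ref{sec:multi_tasks} executes all tasks in parallel in constant time. The size comparison in line~\ref{sta:alg_maximal_update_s} then updates $s_i(v)$, $\wt{T}_i(v)$, and $\dep_{\wt{T}_i(v)}$ via \textbf{Sizes of sets} and a few more \textbf{Multiple queries}. This gives $O(1)$ rounds per iteration and $O(r)$ rounds for the whole loop.

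The post-loop steps are handled analogously. Lines~\ref{sta:alg_maximal_create_N1}--\ref{sta:alg_maximal_create_N2} construct $N(v)$: produce tuples $(``N(v)",u)$ for every $(u,v)\in E$ and $(``N(v)",v)$ for every $v\in V$, then use \textbf{Indexing elements in sets} together with \textbf{Multiple queries} to truncate each $N(v)$ to the first $\lceil(m/n)^{1/4}\rceil$ elements. For the final conditional expansion, determining whether $\forall u\in V_{\wt{T}_r(v)},T_0(u)\ne\nul$ is again one round of \textbf{Multiple queries} plus a size check; the positive branch is a single batch of $\textsc{TreeExpansion}$ tasks exactly as in the main loop, while the negative branch picks $u_v=\min\{u\in V_{\wt{T}_r(v)} : T_0(u)=\nul\}$ via \textbf{Indexing elements in sets}, then generates the new tuples for $V_{\wt{T}(v)}$, $\p_{\wt{T}(v)}$, and $\dep_{\wt{T}(v)}$ from $N(u_v)$ with a constant number of \textbf{Multiple queries}.

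The main thing to verify carefully is the \textbf{total-space} bound, so that everything really fits in the $(\gamma,\delta)$-MPC budget $\Theta(N^{1+\gamma})=\Theta(m)$. By property~\ref{itm:multi_pro2} of Lemma~\ref{lem:multi_radius_trees} each $T_i(v)$ that we keep has $|V_{T_i(v)}|<\lceil(m/n)^{1/4}\rceil$, so storing all nonempty $T_i(v)$ over the $O(r)\le O(\log(m/n))$ radii costs $O(r\cdot n\cdot(m/n)^{1/4})=O(m)$ (absorbing the $r$ into the $(m/n)^{1/2}$ slack handled already by Lemma~\ref{lem:parallel_implement_of_multi_radius}). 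By Lemma~\ref{lem:maximal_large_short_tree} each $\wt{T}_i(v)$ and $\wt{T}(v)$ has at most $\lfloor(m/n)^{1/2}\rfloor$ vertices, so the total storage for them is $O(n\cdot(m/n)^{1/2})=O(\sqrt{nm})\le O(m)$. The transient blowup inside one iteration of line~\ref{sta:alg_maximal_tree_expansion1} is, for each $v$, at most $|V_{\wt{T}_{i-1}(v)}|\cdot\max_{u}|V_{T_{r-i}(u)}|\le(m/n)^{1/2}$ tuples, and summing over $v$ still gives $O(m)$; the scheduler of Section~\ref{sec:multi_tasks} reserves exactly $\Theta(m)$ working space across the union of the tasks. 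Given this bookkeeping, the algorithm fits in $\Theta(m)$ total memory with $\Theta(N^\delta)$ per machine, and the parallel time is $O(1)$ per iteration plus the cost of one call to $\textsc{MultiRadiusLCSPT}$, i.e. $O(r)$ overall as claimed.
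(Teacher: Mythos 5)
Your proof is correct and follows essentially the same approach as the paper's: reduce each line of Algorithm~\ref{alg:maximal_short_tree} to the constant-round MPC primitives (\textbf{Multiple queries}, \textbf{Sizes of sets}, \textbf{Indexing elements in sets}, \textbf{Copies of sets}, \textbf{Multiple Tasks}), invoke Lemma~\ref{lem:parallel_implement_of_multi_radius} for line~\ref{sta:alg_maximal_invoke_multi} and Lemma~\ref{lem:parallel_implement_of_tree_expand} for the $\textsc{TreeExpansion}$ calls, and bound the total space by $O(m)$ using Lemmas~\ref{lem:multi_radius_trees} and~\ref{lem:maximal_large_short_tree}. The only difference is that you spell out the condition check and space bookkeeping a bit more explicitly than the paper does; the underlying argument is the same.
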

\begin{proof}
By Lemma~\ref{lem:parallel_implement_of_multi_radius}, line~\ref{sta:alg_maximal_invoke_multi} can be implemented in total space $m$ and $O(r)$ parallel time where $r$ is the number of iterations (see Definition~\ref{def:num_iter_multi_radius_lcspt}) of $\textsc{MultiRadiusLCSPT}(G,m).$
Line~\ref{sta:alg_maximal_init0} to line~\ref{sta:alg_maximal_init2} can be implemented by the operation described by \textbf{Multiple queries}.
The implementation of line~\ref{sta:alg_maximal_loop_start} to line~\ref{sta:alg_maximal_loop_end} is similar as the implementation of the main loop of Algorithm~\ref{alg:doubling_expansion} (See Lemma~\ref{lem:parallel_implement_of_multi_radius} for details of the implementation).
The implementation of line~\ref{sta:alg_maximal_create_N1} and line~\ref{sta:alg_maximal_create_N2} only needs the operation described in \textbf{Indexing elements in sets} and \textbf{Multiple queries}.
Line~\ref{sta:alg_maximal_tree_expansion2} can be implemented by copying input sets for different tasks and running multiple tasks in parallel, where the operations needed are described in \textbf{Copies of sets}, \textbf{Multiple queries} and \textbf{Multiple Tasks} (see Section~\ref{sec:multi_tasks}).
Line~\ref{sta:alg_maximal_wT_assignment} to line~\ref{sta:alg_wT_assignment2} can be implemented by the operations described in \textbf{Copies of sets}, \textbf{Set membership}, \textbf{Indexing elements in sets}, and \textbf{Multiple queries}.

The total parallel time of the first loop is $O(r)$ since it has $r$ rounds.
The second loop can be done in one round.
Thus the parallel time of the second loop is $O(1)$.
Then the total parallel time is $O(r).$
Due to Lemma~\ref{lem:parallel_implement_of_multi_radius} and Lemma~\ref{lem:num_it_multiplelargetree}, $r$ is the number of iterations (see Definition~\ref{def:num_it_multiplelargetree}) of $\textsc{MultipleLargeTrees}(G,m).$

We stored all the $V_{T_i(v)},V_{\wt{T}_i(v)},\p_{T_i(v)},\p_{\wt{T}_i(v)},\dep_{T_i(v)},\dep_{\wt{T}_i(v)}$ in the system.
By Lemma~\ref{lem:maximal_large_short_tree} and Lemma~\ref{lem:multi_radius_trees}, the total space needed to store them is at most $O(m).$
Furthermore, at any round, the size of all the input copies for multiple tasks is at most $n\cdot(m/n)^{1/4}\cdot (m/n)^{1/4}=O(m).$
Thus, the total space needed is $O(m).$
\end{proof}

\subsection{Path Generation and Root Changing}
\begin{lemma}\label{lem:implement_of_find_ancestor}
Let $\p:V\rightarrow V$ be a set of parent pointers (See Definition~\ref{def:parent_pointers}) on a vertex set $V.$ Let $n=|V|.$ $\textsc{FindAncestors}(\p)$ (Algorithm~\ref{alg:ancestors}) can be implemented in $(\gamma,\delta)-\MPC$ model for any $\gamma\geq \frac{\log n}{\log \log n}$ and any constant $\delta\in(0,1).$ The parallel running time is $O(r)$, where $r$ is the number of iterations (see Definition~\ref{def:num_iter_ancestors}) of $\textsc{FindAncestors}(\p).$
\end{lemma}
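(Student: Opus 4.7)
The plan is to implement $\textsc{FindAncestors}$ line-by-line using the primitives from Section~\ref{sec:MPCmodel}: Multiple queries (Lemma~\ref{lem:multi_query}), sizes of sets, and local arithmetic. The mappings $g_i:V\to V$ and $h_i:V\to\mathbb{Z}_{\geq 0}\cup\{\nul\}$ are each represented as a set of tuples in the standard way (Definition~\ref{def:set_for_mapping}), so each mapping consumes $\Theta(n)$ words of global storage.

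For the initialization, every machine scans the tuples of $\p$ in its local memory. For each $(``\p",(v,\p(v)))$, it emits $(``g_0",(v,\p(v)))$ and either $(``h_0",(v,0))$ (if $v=\p(v)$) or $(``h_0",(v,\nul))$. This is one round and uses $\Theta(n)$ space. For the $l$-th iteration I would then perform the two updates in parallel. First, the pointer-doubling step $g_l(v)=g_{l-1}(g_{l-1}(v))$ is implemented by, for every tuple $(``V",v)$, fetching $u=g_{l-1}(v)$ with one Multiple queries call, and then fetching $w=g_{l-1}(u)$ with a second Multiple queries call on the resulting tuples, before emitting $(``g_l",(v,w))$. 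Second, the depth update is implemented by querying both $h_{l-1}(v)$ and $h_{l-1}(g_{l-1}(v))$ for every $v$ in parallel (again via Multiple queries) and then applying the three-case rule of the algorithm locally on the machine that holds both answers, emitting $(``h_l",(v,h_l(v)))$. Each iteration uses a constant number of such primitive calls and thus runs in $O(1)$ parallel rounds.

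Loop termination is detected by filtering for tuples $(``h_l",(v,\nul))$, computing the cardinality of the resulting set via the Sizes-of-sets operation from Section~\ref{sec:set_operations}, and broadcasting this count to every machine through one more Multiple queries call so that each can independently decide to halt or continue. Once the count is zero, $h_r$ is relabelled as $\dep_\p$ (a purely local renaming of the tuples), and the algorithm returns together with all stored $g_i$'s.

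The total parallel time is therefore $O(r)$ as claimed, since each iteration is $O(1)$ rounds. The only delicate point—and the real content of the lemma—is the space accounting: to execute iteration $l$ we need $g_{l-1}$ and $h_{l-1}$ in memory, and at the end we must return every $g_i$, so none of the previously computed mappings can be discarded. The cumulative storage is thus $\Theta(rn)$, and by Lemma~\ref{lem:depthandancestor} we have $r=O(\log n)$, giving $\Theta(n\log n)$ total tuples. Consequently the MPC budget $N^{1+\gamma}$ must absorb an extra polylogarithmic factor in $n$, which is precisely what the hypothesis on $\gamma$ guarantees; under it, a standard load-balancing pass keeps each machine's local memory within its $\Theta(N^{\delta})$ budget and the implementation goes through.
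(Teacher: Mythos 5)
Your proposal is correct and follows essentially the same route as the paper's own proof: implement each step of Algorithm~\ref{alg:ancestors} via the Multiple-queries primitive (with termination detected via a set-size computation), observe that each iteration takes $O(1)$ rounds so the total is $O(r)$, and account the space as $\Theta(rn)=O(n\log n)$ because all $g_i$ must be retained and $r=O(\log n)$ by Lemma~\ref{lem:depthandancestor}, which is exactly what the $\gamma$-hypothesis (which should read $\gamma\geq\frac{\log\log n}{\log n}$, as in the analogous Lemmas~\ref{lem:parallel_implment_of_lca} and~\ref{lem:parallel_implement_of_sample_leaf}) is there to absorb. Your write-up is in fact more explicit than the paper's, which simply points to the implementation of $\textsc{TreeContraction}$ in Lemma~\ref{lem:parallel_implement_of_tree_contract} and states that all steps reduce to Multiple queries.
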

\begin{proof}
The structure of the whole algorithm is the same as the Algorithm~\ref{alg:tree_contraction} (see Lemma~\ref{lem:parallel_implement_of_tree_contract}).
All the steps can be done by operation described in \textbf{Multiple queries}.

Since the number of rounds needed is $r$, the parallel time is $O(r)$.
For the total space, we need to store all the mappings $g_1,\cdots,g_r.$
At the end of the $i^{\text{th}}$ round, we need to store mapping $h_i.$
According to Lemma~\ref{lem:depthandancestor}, $r=O(\log n)$
Thus, the total space is $O(rn)=O(n\log n).$
\end{proof}

\begin{lemma}\label{lem:parallel_implement_of_find_path}
Let $\p:V\rightarrow V$ be a set of parent pointers (See Definition~\ref{def:parent_pointers}) on a vertex set $V.$ Let $q$ be a vertex in $V$, and $n=|V|$. $\textsc{FindPath}(\p,q)$ (Algorithm~\ref{alg:path_find}) can be implemented in $(\gamma,\delta)-\MPC$ model for any $\gamma\geq \frac{\log n}{\log \log n}$ and any constant $\delta\in(0,1).$ The parallel running time is $O(r)$, where $r$ is the number of iterations (see Definition~\ref{def:num_iter_ancestors}) of $\textsc{FindAncestors}(\p)$ (Algorithm~\ref{alg:ancestors}).
\end{lemma}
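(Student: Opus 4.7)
The plan is to follow the block structure of Algorithm~\ref{alg:path_find} line by line, invoking the previously established MPC primitives. First I would implement the call to $\textsc{FindAncestors}(\p)$ on line~\ref{sta:alg_multipath_findancestor}-equivalent by directly appealing to Lemma~\ref{lem:implement_of_find_ancestor}, which gives us the depth mapping $\dep_{\p}:V\to\mathbb{Z}_{\geq 0}$ and the ancestor mappings $g_0,g_1,\ldots,g_r$ stored in the system in $O(r)$ parallel rounds using total space $O(n\log n)$ (well within the $(\gamma,\delta)$-MPC budget for $\gamma\geq \log n/\log\log n$). After this call, the initialization $S_0\leftarrow\{(q,g_r(q))\},k=\lceil\log(\dep_{\p}(q))\rceil$ is a constant number of Multiple-queries operations (Section~\ref{sec:mapping_operations}) applied to the stored $\dep_{\p}$ and $g_r$.

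Next I would implement the main divide-and-conquer loop (iterations $i=1,\ldots,k$). Observe that the set $S_{i-1}$ is stored as tuples $(``S_{i-1}",(x,y))$, and the update rule only requires, for each tuple, knowing $\dep_{\p}(x),\dep_{\p}(y)$ and $g_{k-i}(x)$. All of these are mappings already stored in the system, so each machine holding a tuple of $S_{i-1}$ issues three queries, which are resolved simultaneously by one invocation of Multiple queries (Lemma~\ref{lem:multi_query}). Each tuple then decides locally whether to emit one or two tuples into $S_i$ (lines~\ref{sta:alg_find_split} and~\ref{sta:alg_find_not_split}), which is a direct local computation followed by a load-balancing step from Section~\ref{sec:MPCmodel}. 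Thus every iteration costs $O(1)$ parallel rounds, and the whole loop costs $O(k)=O(\log\dep(\p))=O(r)$ rounds by Lemma~\ref{lem:depthandancestor}.

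Finally, the assembly of $P$ and $w$ is straightforward. To build $P$, each tuple $(``S_k",(x,y))$ emits $(``P",y)$, we add $(``P",q)$, and we remove duplicates via the set-merging primitive (Section~\ref{sec:set_operations}); this is $O(1)$ rounds. To locate $w$, each tuple $(``P",v)$ queries $\dep_{\p}(v)$ via Multiple queries, keeps itself only if the returned depth equals $1$, and the (at most one) surviving tuple is designated $w$; if no such vertex exists, $w$ is set to $\nul$. Combining with the initial call to $\textsc{FindAncestors}$, the overall parallel time is $O(r)$ as claimed.

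I do not foresee a serious obstacle; the only point that needs care is the space accounting. By Claim~\ref{cla:property_path} each $S_i$ has size at most $2^i\leq 2\dep_{\p}(q)=O(n)$, so the segment sets together with the stored $g_0,\ldots,g_r$ and $\dep_{\p}$ fit in $O(n\log n)$ words, which is absorbed by the $N^{1+\gamma}$ total space for $\gamma\geq \log n/\log\log n$. Since each MPC round consists only of Multiple queries, local constant-cost rewrites, and load balancing—all of which are fully scalable in $\delta$—the algorithm runs in $(\gamma,\delta)$-MPC for every constant $\delta\in(0,1)$, completing the implementation.
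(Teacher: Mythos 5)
Your proposal is correct and follows essentially the same route as the paper's proof: invoke Lemma~\ref{lem:implement_of_find_ancestor} for the $\textsc{FindAncestors}$ call, implement the remaining steps of Algorithm~\ref{alg:path_find} via the Multiple-queries primitive plus load balancing, and account the $O(n\log n)$ total space against the budget afforded by $\gamma\geq \log n/\log\log n$. Your version is somewhat more explicit than the paper's (which dispatches all post-$\textsc{FindAncestors}$ steps with a single sentence), but there is no substantive difference in the argument.
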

\begin{proof}
By Lemma~\ref{lem:implement_of_find_ancestor}, $\textsc{FindAncestors}(\p)$ can be implemented in $(\gamma,\delta)-\MPC$ model for $\gamma\geq \frac{\log n}{\log \log n}$ and any constant $\delta\in(0,1).$
All the other other steps in the algorithm can be done by operation described in \textbf{Multiple queries}.
Notice that, after each round, we need to do load balancing which can be done by operation described in \textbf{Load balance}.

The number of rounds must be smaller than $O(r),$ where $r$ should be the number of iterations of $\textsc{FindAncestors}(\p)$ according to Lemma~\ref{lem:implement_of_find_ancestor}.

We store all the mappings $g_i,\dep_{\p}$ in the system.
They need $O(n\log n)$ total space.
In the $i^{\text{th}}$ round, we only need to additionally store set $S_i$ which has size at most $O(n).$
Thus, the total space needed is at most $O(n\log n).$
\end{proof}

\begin{lemma}\label{lem:parallel_implement_of_root_change}
Let $\p:V\rightarrow V$ be a set of parent pointers (See Definition~\ref{def:parent_pointers}) on a vertex set $V.$ Let $q$ be a vertex in $V$. $\textsc{RootChange}(\p,q)$ (Algorithm~\ref{alg:root_change}) can be implemented in $(\gamma,\delta)-\MPC$ model for any $\gamma\geq \frac{\log n}{\log \log n}$ and any constant $\delta\in(0,1).$ The parallel running time is $O(r)$, where $r$ is the number of iterations (see Definition~\ref{def:num_iter_ancestors}) of $\textsc{FindAncestors}(\p)$ (Algorithm~\ref{alg:ancestors}).
\end{lemma}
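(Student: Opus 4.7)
The plan is to reduce the implementation of $\textsc{RootChange}(\p,q)$ to a single invocation of $\textsc{FindPath}(\p,q)$ followed by a constant number of basic $\MPC$ primitives from Sections~\ref{sec:set_operations}--\ref{sec:mapping_operations}. First I would invoke $\textsc{FindPath}(\p,q)$ to obtain $\dep_{\p}$, the vertex set $P$ of the path from $q$ to its root, and the predecessor $w$. By Lemma~\ref{lem:parallel_implement_of_find_path}, this step takes $O(r)$ parallel rounds in $(\gamma,\delta)$-$\MPC$ for any $\gamma\ge \tfrac{\log n}{\log\log n}$ and any constant $\delta\in(0,1)$, and uses total space $O(n\log n)\le O(n^{1+\gamma})$, so it fits the model.

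Given $P$, $\dep_{\p}$, and $\p$, the remaining work is to produce the tuple set representing $\wh{\p}$. I would first mark vertices in $P$ by a set-membership test ($P$ versus $V$), handled by operations in \textbf{Set membership} together with \textbf{Multiple queries}, so every machine holding a tuple $(``V",v)$ learns a bit indicating $v\in P$. For each $v\in V\setminus P$, the machine creates $(``\wh{\p}",(v,\p(v)))$ by a single query to $\p$. For $v=q$ the machine creates $(``\wh{\p}",(q,q))$. To realize the inverse-depth lookup $h:\{0,\dots,\dep_{\p}(q)\}\to P$, for each $x\in P$ I would query $\dep_{\p}(x)$ (via \textbf{Multiple queries}) and emit $(``h",(\dep_{\p}(x),x))$; this is a valid set representation of a mapping since $\dep_{\p}$ is injective on $P$ (each path vertex has a distinct depth). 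Finally, for each $v\in P\setminus\{q\}$, I would query $\dep_{\p}(v)$, then query $h(\dep_{\p}(v)+1)$ to obtain the new parent and emit $(``\wh{\p}",(v,h(\dep_{\p}(v)+1)))$. All of these are constant-round operations by Lemma~\ref{lem:multi_query}, interleaved with \textbf{Load balance} as needed.

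The only step where care is required is showing that the auxiliary sets fit within $O(n^{1+\gamma})$ total space and that everything stays load-balanced. Since $|P|\le n$ and the persistent structures are $\p$, $\dep_{\p}$, $P$, $h$, and $\wh{\p}$—each of size $O(n)$—and the transient data from $\textsc{FindAncestors}$ dominates at $O(n\log n)$, the overall space stays within $O(n\log n)=O(n^{1+\gamma})$ for the assumed range of $\gamma$, so $(\gamma,\delta)$-$\MPC$ accommodates the computation.

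The main (and only) obstacle is the cost of $\textsc{FindPath}$ itself, which is why the round bound is $O(r)$ with $r$ inherited from $\textsc{FindAncestors}$; the post-processing described above contributes only $O(1)$ additional rounds, so the total parallel time is $O(r)$ as claimed.
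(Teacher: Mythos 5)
Your proposal is correct and follows essentially the same route as the paper: invoke $\textsc{FindPath}(\p,q)$ (costing $O(r)$ rounds and $O(n\log n)$ total space by Lemma~\ref{lem:parallel_implement_of_find_path}), then realize the remaining steps of Algorithm~\ref{alg:root_change} with a constant number of \textbf{Multiple queries}/set operations, and conclude with the same space bound $O(n\log n) + O(n) = O(n\log n)$. You simply spell out the constant-round post-processing (building the inverse-depth map $h$, splitting into $v\notin P$, $v=q$, and $v\in P\setminus\{q\}$) in more detail than the paper does, which is a reasonable expansion rather than a different argument.
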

\begin{proof}
By Lemma~\ref{lem:parallel_implement_of_find_path}, $\textsc{FindPath}(\p,q)$ can be implemented in $(\gamma,\delta)-\MPC$ model.
The remaining steps in the procedure can be implemented by the operation described by \textbf{Multiple queries}, and has $O(1)$ parallel running time.

The total space needed is the total space needed for $\textsc{FindPath}(\p,q)$ plus the space needed to store mapping $h,\wh{\p}.$
Thus the total space needed is $O(n\log n)+O(n)=O(n\log n).$

The parallel running time is linear in the parallel running time of $\textsc{FindPath}(\p,q).$
Then, by Lemma~\ref{lem:parallel_implement_of_find_path}, the parallel running time is $O(r)$ where $r$ is the number of iterations (see Definition~\ref{def:num_iter_ancestors}) of $\textsc{FindAncestors}(\p).$
\end{proof}

\subsection{Spanning Forest Algorithm}

\begin{lemma}\label{lem:parallel_implement_span_exp}
Let $G_2=(V_2,E_2)$ be an undirected graph. Let $\wt{\p}:V_2\rightarrow V_2$ be a set of parent pointers (See Definition~\ref{def:parent_pointers}) which satisfies that $\forall v\in V_2$ with $\wt{\p}(v)\not=v$, $(v,\wt{\p}(v))$ must be in $E_2$. Let $G_1=(V_1,E_1)$ be an undirected graph satisfies $V_1=\{v\in V_2\mid \wt{\p}(v)=v\},E_1=\{(u,v)\in V_1\times V_1\mid u\not=v,\exists (x,y)\in E_2,\wt{\p}^{(\infty)}(x)=u,\wt{\p}^{(\infty)}(y)=v\}.$ Let $\p:V_1\rightarrow V_1$ be a rooted spanning forest (See Definition~\ref{def:spanning_forest}) of $G_1.$ Let $f:V_1\times V_1\rightarrow\{ \nul \}\cup \left(V_2\times V_2\right)$ satisfy the following property:
for $u\not=v\in V_1,$ if $\p(u)=v,$ then $f(u,v)\in\{(x,y)\in E_2\mid \wt{\p}^{(\infty)}(x)=u,\wt{\p}^{(\infty)}(y)=v\},$ and $f(v,u)\in \{(x,y)\in E_2\mid \wt{\p}^{(\infty)}(x)=v,\wt{\p}^{(\infty)}(y)=u\}.$
Let $n=|V_2|.$
Then $\textsc{ForestExpansion}(\p,\wt{\p},f)$ (Algorithm~\ref{alg:spanning_forest_expansion}) can be implemented in $(\gamma,\delta)-\MPC$ model for any $\gamma\geq \log n/\log \log n$ and any constant $\delta\in(0,1)$ in parallel running time $O(R),$ where $R=\log(\dep(\wt{\p})).$
\end{lemma}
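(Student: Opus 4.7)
The plan is to implement the procedure line-by-line using the primitives already developed, scheduling the per-root RootChange calls as independent simultaneous tasks. First I would handle line~\ref{sta:alg_span_exp_tree_contract}: invoke $\textsc{TreeContraction}((V_2,\emptyset),\wt{\p})$, which by Lemma~\ref{lem:parallel_implement_of_tree_contract} runs in $(0,\delta)-\MPC$ in parallel time $O(\log \dep(\wt{\p}))$ and linear total space. This produces the mapping $\wt{\p}^{(\infty)}:V_2\to V_2$ stored in the system.

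Next I would build, in $O(1)$ rounds, the tuples needed to set up the $|V_1|$ independent sub-instances. Using \textbf{Multiple queries} (Lemma~\ref{lem:multi_query}), for every tuple $(``V_2",u)$ I look up $v=\wt{\p}^{(\infty)}(u)$ and emit $(``V_2(v)",u)$; for every $v\in V_1$ I check $\p(v)$ and, if $\p(v)\neq v$, look up $(x_v,y_v)=f(v,\p(v))$, emitting a root tag $(``\mathrm{root}_v",x_v)$. The restricted parent maps $\wt{\p}_v:V_2(v)\to V_2(v)$ are obtained by querying $\wt{\p}$ on each $u\in V_2(v)$. Now each $v\in V_1$ has the inputs required to run $\textsc{RootChange}(\wt{\p}_v,x_v)$, and these calls are genuinely independent.

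I would then dispatch all the RootChange calls simultaneously using the \textbf{Multiple Tasks} framework of Section~\ref{sec:multi_tasks}. For task $v$ the input size is $n_v:=|V_2(v)|$, and by Lemma~\ref{lem:parallel_implement_of_root_change} each task needs $O(n_v\log n_v)$ total working space and parallel time $O(r_v)$, where $r_v$ is the number of iterations of the $\textsc{FindAncestors}$ call inside — bounded by $\lceil\log(\dep(\wt{\p}_v)+1)\rceil\le\lceil\log(\dep(\wt{\p})+1)\rceil= O(R)$. Since $\sum_v n_v\le n$, the cumulative space requirement across all tasks is $O(n\log n)$, which fits in the total memory once $\gamma\ge \log n/\log\log n$ (so that $N^{1+\gamma}=\Omega(n\log n)$); the scheduling overhead is $O(1)$ rounds. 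After the tasks finish I overwrite $\wh{\p}(x_v)\leftarrow y_v$ for each chosen root via one more \textbf{Multiple queries} round, and for the $v$ with $\p(v)=v$ I simply copy $\wt{\p}_v$ into $\wh{\p}$.

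The main obstacle is the space budgeting for the parallel RootChange tasks: a single RootChange on a tree of $n_v$ nodes costs $\Theta(n_v\log n_v)$ because the $\textsc{FindAncestors}$ sparse table stores $\Theta(\log n_v)$ ancestor layers per vertex, so naively summing gives $\Theta(n\log n)$ space and this is exactly why the hypothesis $\gamma\ge \log n/\log\log n$ is needed. Checking this carefully — namely that the multi-task scheduler (whose total-memory requirement is $\Theta(\sum_v n_v^{1+\gamma_v})$ with $\gamma_v$ chosen so each sub-instance satisfies the hypothesis of Lemma~\ref{lem:parallel_implement_of_root_change}) fits inside the global $N^{1+\gamma}$ budget — is the delicate bookkeeping step. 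Once that is confirmed, the round count is dominated by the initial TreeContraction ($O(\log\dep(\wt{\p}))$) plus the parallel RootChange block ($O(\log\dep(\wt{\p}))$), giving the claimed $O(R)=O(\log\dep(\wt{\p}))$ parallel time.
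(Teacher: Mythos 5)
Your proposal is correct and takes essentially the same route as the paper's own proof: invoke Lemma~\ref{lem:parallel_implement_of_tree_contract} for the initial tree contraction, then schedule the per-root \textsc{RootChange} calls as independent sub-tasks via the Multiple Tasks and Multiple queries machinery, with the $O(n\log n)$ cumulative space of the \textsc{FindAncestors} layers being the reason for the stated $\gamma$ hypothesis. The paper's proof is considerably more terse; you supply the missing setup (building $V_2(v)$, $\wt{\p}_v$, and the root tags in $O(1)$ rounds), the observation that $\dep(\wt{\p}_v)\le\dep(\wt{\p})$ so each task's round count is $O(R)$, and the explicit accounting $\sum_v n_v\log n_v = O(n\log n)$, all of which are correct and align with what the paper implicitly relies on.
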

\begin{proof}
Due to Lemma~\ref{lem:parallel_implement_of_tree_contract}, line~\ref{sta:alg_span_exp_tree_contract} can be done in $O(R)$ parallel time for $R=\log(\dep(\wt{\p})).$
Line~\ref{sta:alg_spanexp_change} corresponds to multiple tasks, we can implement them parallelly by operations described in \textbf{Multiple queries}, and \textbf{Multiple Tasks} (see Section~\ref{sec:multi_tasks}).
By Lemma~\ref{lem:parallel_implement_of_root_change}, the total space needed is at most $O(n\log n)$ and the parallel running time is at most $O(R)$ where $R=\log(\dep(\wt{\p})).$
\end{proof}

\begin{theorem}\label{thm:parallel_implement_spanning_forest}
Let graph $G=(V,E),n=|V|,N=|V|+|E|$ and $m=\Theta(N^{\gamma})$ for some arbitrary $\gamma\in[0,2].$
Let $r>0$ be a round parameter.
$\textsc{SpanningForest}(G,m,r)$ (Algorithm~\ref{alg:batch_algorithm3}) can be implemented in $(\gamma,\delta)-\MPC$ model for any constant $\delta\in(0,1).$ Furthermore, the parallel running time is $O(R),$ where $R$ is the total number of iterations (see Definition~\ref{def:num_total_it_spanning_tree}) of $\textsc{SpanningForest}(G,m,r).$
\end{theorem}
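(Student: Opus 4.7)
The plan is to walk through Algorithm~\ref{alg:batch_algorithm3} line by line and show that each operation either invokes a subroutine already proven implementable, or reduces to a bounded composition of the primitive $\MPC$ operations from Section~\ref{sec:MPCmodel} (sorting, indexing, prefix sum, predecessor, multiple queries, set membership, duplicate removal, set copies, load balancing, and multiple-tasks scheduling). I would maintain the invariant that at the start of iteration $i$ the only objects resident in the system are the sets $V_i, E_i, V, E, F_{i-1}=\bigcup_{j<i} D_j$ and the mappings $g_i, h_0, h_1, \ldots, h_{i-1}, \p_0, \ldots, \p_{i-1}$, whose total size is $O(m)=O(N^{1+\gamma})$.

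For the body of iteration $i$: line~\ref{sta:alg_span_multilargetree} is handled by Lemma~\ref{lem:parallel_implement_of_multiple_large_tree} in parallel time $O(k_i)$ and total space $O(m)$; lines~\ref{sta:alg_span_create_veprime}--\ref{sta:alg_span_in_vinotprime} are constant-round operations obtained by querying sizes of the $\wt T_i(v)$, doing set membership against $V'_i$, indexing the elements of $\wt T_i(u_v)$, and writing out the resulting tuples for $\p_i, h_i, D_i$ via Multiple queries. The Bernoulli sample $l_i$ in lines~\ref{sta:alg_span_samp_prob}--\ref{sta:alg_span_leaders} is drawn locally at the machines holding each tuple $(``V'_i",v)$ after one broadcast of $n_i$, and $L_i$ is assembled with a constant-round set-membership pass. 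Line~\ref{sta:alg_span_tree_contraction} is Lemma~\ref{lem:parallel_implement_of_tree_contract} in parallel time $O(k'_i)$, line~\ref{sta:alg_span_choose_one_edge} is a sort-plus-indexing to pick the lexicographically smallest representative per contracted edge, and line~\ref{sta:alg_span_tree_fail_on_total_n} is a single size-of-sets query followed by a broadcast of the failure bit.

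The only delicate step is lines~\ref{sta:alg_span_assign_to_leader}--\ref{sta:alg_span_find_path}: for every $v\in V'_i\setminus L_i$ we must locate $z_i(v)$ in $\wt T_i(v)$ and then invoke $\textsc{FindPath}(\p_{\wt T_i(v)},z_i(v))$, which internally runs $\textsc{FindAncestors}$ and therefore requires the super-linear memory hypothesis of Lemmas~\ref{lem:implement_of_find_ancestor}--\ref{lem:parallel_implement_of_find_path} on its input. The resolution is that by Lemma~\ref{lem:maximal_large_short_tree} each $\wt T_i(v)$ has $|V_{\wt T_i(v)}|\le\lfloor(m/n_i)^{1/2}\rfloor$, so an instance of size $n_v:=|V_{\wt T_i(v)}|$ fits inside space $n_v^{1+\gamma_v}$ with $\gamma_v=\Theta(1)$, comfortably satisfying the $\gamma\ge \log n_v/\log\log n_v$ requirement. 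I would form the union of the $\wt T_i(v)$ as disjoint independent tasks, duplicate shared descriptors via \textbf{Copies of sets}, and schedule them simultaneously using the \textbf{Multiple Tasks} primitive of Section~\ref{sec:multi_tasks}. The aggregate task size is $\sum_v n_v^{1+\gamma_v}=O\!\left(n_i\cdot(m/n_i)^{1/2}\cdot(m/n_i)^{\Theta(1)/2}\right)=O(m)$, so we still fit in $(\gamma,\delta)$-MPC, and the parallel time absorbed by each task is $O(\log\dep(\p_{\wt T_i(v)}))=O(k_i)$ by Lemma~\ref{lem:parallel_implement_of_find_path}.

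Summing across iterations, iteration $i$ costs $O(k_i+k'_i)$ rounds and preserves the $O(m)$ space envelope, so the total parallel time is $O\!\left(\sum_{i=0}^{r-1}(k_i+k'_i)\right)=O(R)$ by Definition~\ref{def:num_total_it_spanning_tree}. The main obstacle, and the only place a nontrivial idea is required beyond bookkeeping, is the multi-tasks scheduling of the $\textsc{FindPath}$ calls: one must verify both that the per-task memory blow-up (needed so each small instance sees a super-linear local space ratio) stays within the global $O(m)$ budget, and that the copying of the $\wt T_i(v)$ and of $\p_{\wt T_i(v)}$ can be done in $O(1)$ extra rounds using the copies/queries/indexing primitives without disturbing the invariant that $\sum_v n_v=O(n)$ and $\sum_v n_v^2=O(m)$, both of which follow from Lemma~\ref{lem:maximal_large_short_tree}.
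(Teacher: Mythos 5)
Your proposal follows essentially the same line-by-line argument as the paper's proof: you invoke Lemma~\ref{lem:parallel_implement_of_multiple_large_tree} for line~\ref{sta:alg_span_multilargetree}, reduce the intermediate bookkeeping lines to the constant-round primitives of Section~\ref{sec:MPCmodel}, invoke Lemma~\ref{lem:parallel_implement_of_tree_contract} for line~\ref{sta:alg_span_tree_contraction}, and — as the paper does — handle line~\ref{sta:alg_span_find_path} by packaging the $\textsc{FindPath}$ calls over the trees $\wt{T}_i(v)$ as disjoint tasks scheduled with the Multiple Tasks primitive, noting that each task can be given its own super-linear space ratio since $|V_{\wt{T}_i(v)}|\le\lfloor (m/n_i)^{1/2}\rfloor$.

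Two small slips in the final paragraph, neither fatal. First, you state that $\sum_v n_v=O(n)$ ``follows from Lemma~\ref{lem:maximal_large_short_tree}''; it does not, because the trees $\wt{T}_i(v)$ for distinct $v$ overlap heavily, and the honest bound is only $\sum_v n_v \le n_i\cdot (m/n_i)^{1/2}=(n_i m)^{1/2}$ --- but that is still $O(m)$, which is all your copying step actually needs. Second, to make $\sum_v n_v^{1+\gamma_v}=O(m)$ you implicitly need $\gamma_v\le 1$ (the claim fails for, say, $\gamma_v=2$); the paper sidesteps this by charging each task the tight $O(n_v\log n_v)$ space that Lemma~\ref{lem:parallel_implement_of_find_path} actually requires, which corresponds to $\gamma_v=\Theta(\log\log n_v/\log n_v)$, and this gives the $O(m)$ total more cleanly. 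With those two bounds corrected, your argument is the paper's.
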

\begin{proof}
At the beginning of the algorithm, we just store sets $V,E,V_0,E_0$ and mapping $g_0$ in the system.

Consider the $i^{\text{th}}$ round of the loop.
By Lemma~\ref{lem:parallel_implement_of_multiple_large_tree}, line~\ref{sta:alg_span_multilargetree} can be implemented in total space $\Theta(m)$ and in parallel running time $O(k_i)$ where $k_i$ is the number of iterations (see Definition~\ref{def:num_it_multiplelargetree}) of $\textsc{MultipleLargeTrees}(G_i,m).$
Line~\ref{sta:alg_span_create_veprime} can be implemented by operations described in \textbf{Sizes of sets}, \textbf{Set membership}, and \textbf{Multiple queries}.
Line~\ref{sta:alg_span_spanningtree} can be implemented by operations described in \textbf{Indexing elements in sets}, \textbf{Set membership}, and \textbf{Multiple queries}.
In line~\ref{sta:alg_span_samp_prob}, to calculate $\gamma_i,$ we need to query $n_i,$ this can be done by operations described in \textbf{Sizes of sets} and \textbf{Multiple queries}.
In line~\ref{sta:alg_span_leaders}, to compute $L_i,$ we only need operations described in \textbf{Set membership} and \textbf{Multiple queries}.
Line~\ref{sta:alg_span_assign_to_leader} can be implemented by operations shown in \textbf{Set membership}, \textbf{Indexing elements in sets} and \textbf{Multiple queries}.
By Lemma~\ref{lem:parallel_implement_of_find_path}, for line~\ref{sta:alg_span_find_path}, there are multiple tasks each can be implemented in $O(|V_{\wt{T}_i(v)}|\log |V_{\wt{T}_i(v)}|)$ total space, and $O(k_i)$ parallel time.
We can schedule these multiple tasks (see Section~\ref{sec:multi_tasks}) such that we can finish them in parallel in $O(k_i)$ parallel time.
According to Lemma~\ref{lem:parallel_implement_of_tree_contract}, for line~\ref{sta:alg_span_tree_contraction}, we can implement it in $O(n_i)=O(n)$ total space, and in $O(k'_i)$ parallel time, where $k'_i$ is the number of iterations (see Definition~\ref{def:num_it_tree_contract}) of $\textsc{TreeContraction}(G_i',\p_i).$
Line~\ref{sta:alg_span_hi_notnull} can be done by the operation described in \textbf{Multiple queries}.
Line~\ref{sta:alg_span_choose_one_edge} can be done by the operation described in \textbf{Indexing elements in sets} and \textbf{Multiple queries}.

Thus, the parallel time is $O(R),$ where $R=\sum_{i=0}^{r-1}(k_i+k'_i).$
By definition of the total number of iterations (see Definition~\ref{def:num_total_it_spanning_tree}) of $\textsc{SpanningForest}(G,m,r).$
$R$ is the total number of iterations of $\textsc{SpanningForest}(G,m,r).$

For the space, we store all the sets $V,E,V_i,D_i$ and mappings $\p_i,h_i$ in all the rounds.
Notice that $\sum_{i=0}^r |V_i|\leq 40|V|.$
Thus this part takes only $O(N)$ space.
In the $i^{\text{th}}$ round, we additionally store all the sets $V_{\wt{T}_i(v)},V_i',E_i',L_i$ and all the mappings $\p_{\wt{T}_i(v)},\dep_{\wt{T}_i(v)},l_i,z_i.$
The total space for this part is at most $O(m).$
For line~\ref{sta:alg_span_find_path}, it creates multiple tasks.
The input of each task is at most $|V_{\wt{T}_i(v)}|\leq (m/n_i)^{1/2}.$
There are at most $n_i$ tasks, and by Lemma~\ref{lem:parallel_implement_of_find_path}, each task will need space at most $O(|V_{\wt{T}_i(v)}|\log |V_{\wt{T}_i(v)}|).$
Thus, the space for this part is at most $O(m).$
To conclude, the total space needed is at most $O(m).$

\end{proof}

\begin{theorem}\label{thm:parallel_implement_rooted_forest}
Let graph $G=(V,E),n=|V|,N=|V|+|E|$ and $m=\Theta(N^{\gamma})$ for some arbitrary $\gamma\in[0,2].$
Let $r>0$ be a round parameter.
If $\textsc{SpanningForest}(G,m,r)$ (Algorithm~\ref{alg:batch_algorithm3}) does not return {\rm FAIL}, then let the output be the input of $\textsc{Orientate}(\cdot)$ (Algorithm~\ref{alg:batch_algorithm4}), and $\textsc{Orientate}(\cdot)$ can be implemented in $(\gamma,\delta)-\MPC$ model for any constant $\delta\in(0,1).$
Furthermore, the parallel running time is $O(R),$ where $R$ is the total number of iterations (see Definition~\ref{def:num_total_it_spanning_tree}) of $\textsc{SpanningForest}(G,m,r).$
\end{theorem}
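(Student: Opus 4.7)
I plan to implement $\textsc{Orientate}$ by handling its two main loops separately, reusing the $\MPC$ primitives of Section~\ref{sec:MPCmodel} together with Lemma~\ref{lem:parallel_implement_span_exp}. Since $\textsc{SpanningForest}$ succeeds, the data produced (the sets $V_i$, the mappings $\p_i$ and $h_i$, and the edge set $F$) are already laid out in the system with total space $O(n)$, because $\sum_i n_i \leq 40n$ by Lemma~\ref{lem:sum_of_ni_is_not_large} and each $|\p_i|,|h_i|,|F_i|$ is bounded by $n_i$ or $n$.

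For the first loop (lines~\ref{sta:batch_alg4_first_loop_st}--\ref{sta:batch_alg4_first_loop_ed}), I will process $F_i \to F_{i+1}$ and $f_{i+1}$ in one parallel round per $i$ as follows. For each tuple $(\text{``}F_i\text{''},(u,v))$, the holding machine will issue Multiple Queries (Lemma~\ref{lem:multi_query}) to obtain $h_i(u)$ and $h_i(v)$; if they differ, the machine emits tuples $(\text{``}F_{i+1}\text{''},(h_i(u),h_i(v)))$ and $(\text{``}f_{i+1}\text{''},((h_i(u),h_i(v)),(u,v)))$. Duplicate Removal and standard indexing restore the proper set/mapping form. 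This takes $O(1)$ parallel time per iteration and $O(r)$ time overall. Since $R = \sum_{i=0}^{r-1}(k_i+k'_i) \geq r$ (each $k_i \geq 1$), the first loop contributes $O(r) \leq O(R)$.

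For the second loop, at iteration $i$ I first assemble $\wt{V}_i$ and $\wt{\p}_i$ (lines~\ref{batch_alg4_second_loop1}--\ref{batch_alg4_second_loop2}): Set Membership plus Multiple Queries identify, for each $v \in V_{i-1}$, whether $v \in V_i$ or $v$ is a root with $h_{i-1}(v) = \nul$, in $O(1)$ parallel time; the assignment of $\wt{\p}_i$ then uses Multiple Queries against either $\wh{\p}_i$ or the identity. I then invoke Lemma~\ref{lem:parallel_implement_span_exp} on $\textsc{ForestExpansion}(\wt{\p}_i,\p_{i-1},f_i)$. Identifying parameters, $\p_{i-1}$ plays the role of the larger tree $\wt{\p}$ in the signature of Algorithm~\ref{alg:spanning_forest_expansion}, so the parallel time of the call is $O(\log \dep(\p_{i-1}))$. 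By Lemma~\ref{lem:contraction_properties}, $k'_{i-1} = \lceil \log \dep(\p_{i-1}) \rceil$, which is exactly the number of iterations of the TreeContraction invoked inside round $i-1$ of $\textsc{SpanningForest}$. Summing,
\begin{align*}
\sum_{i=1}^{r} O\!\left(\log\dep(\p_{i-1})\right)
 \;=\; O\!\left(\sum_{i=0}^{r-1} k'_i\right)
 \;\leq\; O(R),
\end{align*}
so the second loop fits inside $O(R)$ parallel time as well.

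The main obstacle is the space accounting across the sequence of ForestExpansion calls, since Lemma~\ref{lem:parallel_implement_span_exp} invokes RootChange on every subtree of $\wt{\p}$ and FindAncestors stores $\Theta(\log \dep(\wt{\p}))$ layers of pointers. The way I will handle this is to (i) run the $r$ ForestExpansion calls sequentially, discarding the intermediate $\wh{\p}_i$'s after $\wh{\p}_{i-1}$ is produced, so the transient cost is bounded by one ForestExpansion on $V_{i-1}$ at a time, and (ii) observe that the sum $\sum_i n_{i-1}\log n_{i-1}$ is $O(n\log n)$ via $\sum_i n_i \leq 40n$. Combined with the $O(n)$ cost of storing all $f_i, F_i, \p_i, h_i, V_i$, and with the input graph of size $N$, everything fits within $\Theta(m) = \Theta(N^{1+\gamma})$ total space. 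Correctness of the produced $\p = \wh{\p}_0$ is immediate from Theorem~\ref{thm:batch_algorithm4}; the only new content is the $\MPC$ realization, and the above gives an $O(R)$-round implementation in $(\gamma,\delta)$-$\MPC$ for any constant $\delta \in (0,1)$.
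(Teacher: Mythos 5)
Your decomposition into the two loops, the use of Multiple Queries / Set Membership for the first loop and lines~\ref{batch_alg4_second_loop1}--\ref{batch_alg4_second_loop2}, the invocation of Lemma~\ref{lem:parallel_implement_span_exp} for line~\ref{batch_alg4_second_loop3}, and the space accounting via $\sum_i n_i \leq 40n$ all mirror the paper's proof, and the overall $O(R)$ conclusion is right. There is, however, a concrete gap in the time accounting for the second loop.

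You claim that, by Lemma~\ref{lem:contraction_properties}, $k'_{i-1} = \lceil \log \dep(\p_{i-1}) \rceil$, and then sum $\log\dep(\p_{i-1})$ against $\sum_i k'_i$. This does not follow. First, Lemma~\ref{lem:contraction_properties} gives only the inequality $r \leq \lceil\log\dep(\p)\rceil$, not equality. More importantly, in Algorithm~\ref{alg:batch_algorithm3} the TreeContraction at line~\ref{sta:alg_span_tree_contraction} is invoked on the \emph{restriction} $\p_{i-1}|_{V'_{i-1}}$, so $k'_{i-1}$ tracks $\dep(\p_{i-1}|_{V'_{i-1}})$. But ForestExpansion at line~\ref{batch_alg4_second_loop3} contracts along the \emph{full} $\p_{i-1}$ on $V_{i-1}$, which also includes, for $v\in V_{i-1}\setminus V'_{i-1}$, the parent pointers coming from the local shortest path trees $\wt{T}_{i-1}(u_v)$ assigned at line~\ref{sta:alg_span_spanningtree}. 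These can have depth up to $\Theta\big((m/n_{i-1})^{1/4}\big)$, and if simultaneously the LSPTs used for leader selection happen to hit a leader at depth $1$, then $k'_{i-1}=0$ while $\log\dep(\p_{i-1})$ is $\Theta(\log(m/n_{i-1}))$. So the step ``$\sum_i O(\log\dep(\p_{i-1})) \leq O(\sum_i k'_i)$'' fails.

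The fix, which is what the paper does, is to charge the ForestExpansion round at index $i$ against $k_{i-1}$ (the iteration count of \textsc{MultipleLargeTrees}), not $k'_{i-1}$. In Algorithm~\ref{alg:maximal_short_tree}, the radius accumulated in $s_i(v)$ never exceeds $2^{k_{i-1}}$, so every $\wt{T}_{i-1}(v)$ has depth at most $2^{k_{i-1}}$; combined with Lemma~\ref{lem:dep_p_i} for the $V'_{i-1}$ part, this gives $\log\dep(\p_{i-1}) = O(k_{i-1})$, and hence $\sum_i \log\dep(\p_{i-1}) = O(\sum_i k_i) = O(R)$. With that substitution, your argument closes.
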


\begin{proof}
Line~\ref{sta:batch_alg4_first_loop_st} to line~\ref{sta:batch_alg4_first_loop_ed} can be implemented by operations described in \textbf{Multiple queries}.
Notice that there is a trick here, if $f_i(u,v)=\nul,$ we do not need to store the tuple $(``f_i",((u,v),\nul))$ in the system.
The total space needed to store all the mappings $f_i$ and all the sets $F_{i}$ for $i\in\{0\}\cup[r]$ is at most $\sum_{i=0}^r |V_i|=O(m).$

Line~\ref{batch_alg4_second_loop1} and line~\ref{batch_alg4_second_loop2} can be implemented by operations described in \textbf{Set membership} and \textbf{Multiple queries}.

We now look at the second loop, and focus on round $i$.
Line~\ref{batch_alg4_second_loop3} can be implemented by Lemma~\ref{lem:parallel_implement_span_exp}.
The total space needed is at most $O(|V_i|\cdot (m/|V_i|)^{1/2}\cdot \log (m/|V_i|))=O(m).$
The parallel running time needed is at most $O(k_i),$ where $k_i$ is the number of iterations (see Definition~\ref{def:num_it_multiplelargetree}) of $\textsc{MultipleLargeTrees}(G_i,m),$ $G_i$ is the intermediate graph in the procedure $\textsc{SpanningForest}(G,m,r).$

Thus, the parallel running time is $O(R),$ where $R$ is the total number of iterations (see Definition~\ref{def:num_total_it_spanning_tree}) of $\textsc{SpanningForest}(G,m,r).$
The total space needed is $O(m).$

\end{proof}

Now, we are able to conclude the following theorem for spanning forest problem.
\begin{theorem}\label{thm:formal_statement_of_spanning_tree}
For any $\gamma\in[0,2]$ and any constant $\delta\in(0,1),$ there is a randomized $(\gamma,\delta)-\MPC$ algorithm (see Algorithm~\ref{alg:batch_algorithm3} and Algorithm~\ref{alg:batch_algorithm4}) which can output the rooted spanning forest for any graph $G=(V,E)$ in $O(\min(\log D\cdot \log \frac{1}{\gamma'},\log n))$ parallel time, where $D$ is the diameter of $G,$ $n=|V|,$ $N=|V|+|E|$ and $\gamma'=(1+\gamma)\log_n \frac{2N}{n^{1/(1+\gamma)}}.$ The success probability is at least $0.98.$ In addition, if the algorithm fails, then it will return {\rm FAIL}.
\end{theorem}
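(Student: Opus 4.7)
The plan is to assemble the theorem by combining the batch spanning forest algorithm (Algorithm~\ref{alg:batch_algorithm3}) with the orientation post-processing (Algorithm~\ref{alg:batch_algorithm4}), then lifting the combined procedure into the $(\gamma,\delta)$-$\MPC$ model via the implementation lemmas of Section~\ref{sec:implement}. Concretely, I would set the memory parameter to $m=\Theta(N^{1+\gamma})$, which matches the total space available in the $(\gamma,\delta)$-$\MPC$ model, and run two copies of the procedure in parallel: one with rounds parameter $r_1=\lceil c\log\log_{m/n} n\rceil$ (the constant $c$ from Theorem~\ref{thm:success_prob_batch_alg3}) and one with $r_2=\lceil c\log n\rceil$. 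The output is taken from whichever invocation does not return $\text{FAIL}$; both are pipelined through $\textsc{Orientate}$.

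For correctness, if neither $\textsc{SpanningForest}(G,m,r_j)$ returns $\text{FAIL}$ then Corollary~\ref{cor:correctness_batch_alg3} certifies that the returned edge set $F$ is an (undirected) spanning forest of $G$; feeding the full tuple into $\textsc{Orientate}$ and invoking Theorem~\ref{thm:batch_algorithm4} yields a rooted spanning forest. The failure probability is controlled by Theorem~\ref{thm:success_prob_batch_alg3} (for the $r_1$ variant the success probability is at least $0.79$; for the $r_2$ variant, since $r_2 \geq c\log\log_{m/n} n$ trivially, it also succeeds with probability $\geq 0.79$). Taking the better of the two, and using the union bound on the constant number of random events inside the algorithm, the overall probability of success is at least $0.98$; otherwise the algorithm returns $\text{FAIL}$ as specified.

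For the round complexity, I would apply Theorem~\ref{thm:parallel_implement_spanning_forest} and Theorem~\ref{thm:parallel_implement_rooted_forest}, both of which give $\MPC$ running time $O(R)$ where $R=\sum_{i}(k_i+k_i')$ is the total iteration count of the batch algorithm, in total space $O(m)=O(N^{1+\gamma})$. From Theorem~\ref{thm:success_prob_batch_alg3} this quantity is $O(\log D\cdot \log\log_{\diam(G)} n)$ on the high-probability event for the $r_1$ run, and is at most $O(r_2\cdot \log D) = O(\log n \cdot \log D)$ in general but crucially $O(\log n)$ once one observes that the iterations $k_i+k_i'$ telescope to $O(\log n)$ when $m/n=O(1)$ (this is the bound already baked into Theorem~\ref{thm:success_prob_batch_alg3}'s ``$r\log(\diam(G))$'' alternative, together with the fact that taking $r_2 = O(\log n)$ drives $n_r$ to zero by the geometric shrinkage of Lemma~\ref{lem:sum_of_ni_is_not_large}, so after one more application of $\textsc{TreeContraction}$ we finish in $O(\log n)$ total rounds). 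Taking the minimum of the two bounds gives $O(\min(\log D\cdot \log\log_{m/n} n,\log n))$.

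The only nontrivial bookkeeping is the translation between the iteration bound from Theorem~\ref{thm:success_prob_batch_alg3} and the $\log(1/\gamma')$ appearing in the statement; this is a direct calculation. With $m=\Theta(N^{1+\gamma})$,
\[
\log\log_{m/n} n \;=\; \log\!\frac{\log n}{\log(m/n)} \;=\; \log\!\frac{\log n}{\log(N^{1+\gamma}/n)} \pm O(1),
\]
and substituting $\gamma'=(1+\gamma)\log_n\!\frac{2N}{n^{1/(1+\gamma)}}=\frac{\log(N^{1+\gamma}/n)}{\log n}+o(1)$ yields $\log\log_{m/n} n = \log(1/\gamma') + O(1)$, which is what the theorem claims. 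The main obstacle I expect is making this translation clean at the two regime boundaries (when $m/n = O(1)$ so that $\gamma' = o(1/\log n)$, and when $m = n^{\Omega(1)}\cdot N$ so that $\gamma'=\Theta(1)$); in the former case one falls back on the $\log n$ branch of the min, and in the latter case the $\log(1/\gamma')$ factor collapses to $O(1)$ and the bound degenerates to $O(\log D)$, both of which are consistent with the iteration bounds in Theorem~\ref{thm:success_prob_batch_alg3}.
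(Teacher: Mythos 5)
Your proposal follows the same chain of lemmas the paper uses (Corollary~\ref{cor:correctness_batch_alg3} and Theorem~\ref{thm:batch_algorithm4} for correctness, Theorems~\ref{thm:parallel_implement_spanning_forest} and~\ref{thm:parallel_implement_rooted_forest} for the $\MPC$ implementation, and Theorem~\ref{thm:success_prob_batch_alg3} for the iteration count), and your translation $\log\log_{m/n}n=\Theta(\log(1/\gamma'))$ with $m=\Theta(N^{1+\gamma})$ is the same calculation the paper implicitly performs. So the approach matches.

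However, two points in your round-complexity and probability arguments are off. First, the ``run two copies'' device is not what obtains the $\min(\cdot,\log n)$ bound, and your justification for the $\log n$ branch via ``telescoping'' and the ``geometric shrinkage of Lemma~\ref{lem:sum_of_ni_is_not_large}'' is incorrect: that lemma only bounds $\sum_i n_i\le 40n$ (a space bound, not a round bound). The $O(\log n)$ term is instead automatic from the first branch of the inner $\min$ in Theorem~\ref{thm:success_prob_batch_alg3}: one has $\log D\cdot\log\log_{D}n=O(\log n)$ for every $D\ge 2$ (maximize $x\log(\tfrac{\log n}{x})$ over $x=\log D$), and $(\log\log n)^2=O(\log n)$, so a single run with $r=\lceil c\log\log_{m/n}n\rceil$ already has total iteration count $O(\min(\log D\cdot\log\log_{m/n}n,\log n))$. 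Second, your 0.98 success claim does not follow from ``taking the better of the two'' runs: two independent runs each failing with probability $\le 0.21$ give failure probability $\approx 0.044$, i.e., success $\approx 0.956$, not $0.98$. The honest route---which the paper itself leaves implicit---is that the constants in Lemma~\ref{lem:random_leader_props}, Lemma~\ref{lem:random_leader_low_prob}, and Lemma~\ref{lem:sum_of_ni_is_not_large} (hence in Theorem~\ref{thm:success_prob_batch_alg3}) can be tightened so that the single-run failure probability is at most $0.02$; no repetition is needed.
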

\begin{proof}
Algorithm~\ref{alg:batch_algorithm3} outputs all the edges in the spanning forest and all the contraction information. Algorithm~\ref{alg:batch_algorithm4} takes the output of Algorithm~\ref{alg:batch_algorithm3} as its input, and outputs a rooted spanning forest.

The implementation of Algorithm~\ref{alg:batch_algorithm3} and Algorithm~\ref{alg:batch_algorithm4} in $\MPC$ model is shown by Theorem~\ref{thm:parallel_implement_spanning_forest} and Theorem~\ref{thm:parallel_implement_rooted_forest} respectively.
The correctness of Algorithm~\ref{alg:batch_algorithm3} and Algorithm~\ref{alg:batch_algorithm4} is proved by Corollary~\ref{cor:correctness_batch_alg3} and Theorem~\ref{thm:batch_algorithm4} respectively.
The parallel time of Algorithm~\ref{alg:batch_algorithm3} and Algorithm~\ref{alg:batch_algorithm4} is proved by Theorem~\ref{thm:success_prob_batch_alg3}.

\end{proof}

A byproduct of our spanning forest algorithm is an estimator of the diameter of the graph.
\begin{theorem}\label{thm:formal_statement_of_diameter}
For any $\gamma\in[0,2]$ and any constant $\delta\in(0,1),$ there is a
randomized $(\gamma,\delta)-\MPC$ algorithm which can output an
diameter estimator $D'$ for any graph $G=(V,E)$ in $O(\min(\log D\cdot
\log (1/\gamma'),\log n))$ parallel time such that $D\leq D'\leq
D^{O(\log ( 1/\gamma' ))},$ where $D$ is the diameter
of $G,$ $n=|V|,$ $N=|V|+|E|$ and $\gamma'=(1+\gamma)\log_n
\frac{2N}{n^{1/(1+\gamma)}}.$ The success probability is at least
$0.98.$ In addition, if the algorithm fails, then it will return
{\rm FAIL}.
\end{theorem}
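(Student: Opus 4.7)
The plan is to derive the diameter estimator as a simple post-processing of the rooted spanning forest produced by Algorithm~\ref{alg:batch_algorithm3}/Algorithm~\ref{alg:batch_algorithm4}. First I would run $\textsc{SpanningForest}(G,m,r)$ with $m=\Theta(N^{1+\gamma})$ and the round parameter $r=\Theta(\log\log_{m/n}n)$ guaranteed by Theorem~\ref{thm:success_prob_batch_alg3}, then feed its output into $\textsc{Orientate}$ to obtain a rooted spanning forest $\p:V\to V$ of $G$. By Theorem~\ref{thm:success_prob_batch_alg3} this succeeds with probability at least $0.98$, and in the FAIL case the whole algorithm returns FAIL. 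Then I would invoke $\textsc{FindAncestors}(\p)$ (Algorithm~\ref{alg:ancestors}, Lemma~\ref{lem:depthandancestor}) to produce $\dep_{\p}:V\to\mathbb{Z}_{\geq 0}$, and finally compute $T=\max_{v\in V}\dep_{\p}(v)$ using the basic $\MPC$ sort/prefix primitives (Theorems~\ref{thm:sorting},~\ref{thm:indexing}) in $O(1)$ rounds. The estimator returned is $D'=2T+1$.

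For the lower bound $D'\geq D$, I would argue as follows. Let $C$ be a connected component of $G$ with $\diam(C)=D$ and let $u,v\in C$ with $\dist_G(u,v)=D$. Because $\p$ is a rooted spanning forest of $G$ (Theorem~\ref{thm:batch_algorithm4}), both $u$ and $v$ lie in the same tree $T_C\subseteq \p$ with some root $\rho$, and the unique $u$--$v$ path in $T_C$ consists of graph edges, so its length is at least $\dist_G(u,v)=D$. Splitting this path at its highest vertex $w$ (the LCA of $u$ and $v$ in $T_C$) gives $d_{T_C}(u,w)+d_{T_C}(w,v)\geq D$, and since $d_{T_C}(u,w)\leq \dep_{\p}(u)$ and $d_{T_C}(w,v)\leq \dep_{\p}(v)$, we get $\max(\dep_{\p}(u),\dep_{\p}(v))\geq D/2$. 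Hence $T\geq D/2$ and $D'=2T+1\geq D$.

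For the upper bound, Theorem~\ref{thm:batch_algorithm4} bounds $\dep(\p)\leq O(\diam(G))^{r}$ where $r=O(\log\log_{m/n}n)=O(\log(1/\gamma'))$ by the definition $\gamma'=\log(N^{1+\gamma}/n)/\log n$. Therefore $D'=2T+1\leq 2\dep(\p)+1\leq D^{O(\log(1/\gamma'))}$, giving the claimed two-sided guarantee $D\leq D'\leq D^{O(\log(1/\gamma'))}$. For the runtime, the spanning-forest stage costs $O(\min(\log D\cdot \log(1/\gamma'),\log n))$ rounds by Theorem~\ref{thm:success_prob_batch_alg3}; Lemma~\ref{lem:depthandancestor} implies $\textsc{FindAncestors}(\p)$ runs in $O(\log(\dep(\p)+1))=O(r\log D)=O(\log D\cdot \log(1/\gamma'))$ rounds, which is absorbed into the same bound, and the final max takes $O(1)$ rounds.

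The main obstacle I anticipate is purely the space/round bookkeeping at the $\textsc{FindAncestors}$ stage: the depth of $\p$ may be as large as $D^{O(\log(1/\gamma'))}$, so the doubling needs $\Theta(\log D\cdot\log(1/\gamma'))$ ancestor mappings $g_i$ stored simultaneously, each of size $O(n)$. I would need to check that this fits within the $\Theta(N^{1+\gamma})$ total memory budget (which it does, since for $\gamma\geq\beta=\Theta(\log\log n/\log n)$ a $\polylog(n)$ overhead over $n$ is absorbed, matching the $\beta$ threshold appearing in the DFS-sequence theorem); for smaller $\gamma$ one can instead run $\textsc{FindAncestors}$ in a scheduled/streaming fashion over the doubling rounds, reusing space across iterations. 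The correctness arguments (tree-path $\geq$ graph-distance, and the depth bound inherited from Theorem~\ref{thm:batch_algorithm4}) are the easy parts.
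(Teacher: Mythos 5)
Your proposal follows essentially the same route as the paper: run the spanning‑forest pipeline (Algorithm~\ref{alg:batch_algorithm3} followed by Algorithm~\ref{alg:batch_algorithm4}) to get a rooted spanning forest $\p$, compute depths by doubling (Algorithm~\ref{alg:ancestors}), and report a quantity derived from $\dep(\p)$. Your space observation is also exactly the point the paper makes by saying to run the doubling ``without maintaining useless $g_l$'' --- your ``streaming/scheduled'' phrasing is the same fix, so there is no gap there.

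The one substantive refinement is your choice $D'=2T+1$ together with the LCA‑splitting argument for the lower bound $D\leq D'$. This is genuinely needed: the paper's one‑line proof says to ``output the depth of that spanning forest,'' but $\dep(\p)$ alone does \emph{not} upper‑bound $D$ --- on a path $1\!-\!2\!-\!3$ rooted at $2$, $\dep(\p)=1$ while $D=2$ --- since the depth is measured root‑to‑leaf while the diameter is leaf‑to‑leaf. Your observation that the tree $u$--$v$ path is a walk in $G$ of length at least $\dist_G(u,v)$, and that splitting at the LCA gives $\dep_\p(u)+\dep_\p(v)\geq D$, is the correct way to recover $D\leq 2\dep(\p)+1\leq D'$; the multiplicative factor of $2$ and the $+1$ are absorbed by the $D^{O(\log(1/\gamma'))}$ upper bound (for $D\geq 2$; the $D\in\{0,1\}$ corner cases are degenerate in both your and the paper's write‑up). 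So your proof is correct and, on the lower‑bound side, is actually more careful than the paper's own sketch.
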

\begin{proof}
By Theorem~\ref{thm:formal_statement_of_spanning_tree}, we can find a rooted spanning forest.
By Theorem~\ref{thm:batch_algorithm4}, the depth of that rooted spanning forest is at most $D^{O(\log ( 1/\gamma' ))}$. Then we can implement a doubling algorithm (e.g. Modified Lemma~\ref{lem:implement_of_find_ancestor}, Algorithm~\ref{alg:ancestors} without maintaining useless $g_l$) with log in depth parallel time to output the depth of that spanning forest.
\end{proof}

\subsection{Lowest Common Ancestor and Multi-Paths Generation}

\begin{lemma}\label{lem:parallel_implment_of_lca}
Let $\p:V\rightarrow V$ be a set of parent pointers (See Definition~\ref{def:parent_pointers}) on a vertex set $V$. Let $Q=\{(u_1,v_1),(u_2,v_2),\cdots,(u_q,v_q)\}$ be a set of $q$ pairs of vertices, and $\forall i\in [q],u_i\not=v_i$. Let $n=|V|,N=n+q.$ $\textsc{LCA}(\p,Q)$ (Algorithm~\ref{alg:lca}) can be implemented in $(\gamma,\delta)-\MPC$ model for any $\gamma\geq \log\log N/\log N$ and any constant $\delta\in(0,1)$ in $O(\log(\dep(\p)))$ parallel running time.
\end{lemma}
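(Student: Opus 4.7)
The plan is to follow the four-phase structure of Algorithm~\ref{alg:lca} and show that each phase is implementable using only the constant-round $\MPC$ primitives developed in Section~\ref{sec:MPCmodel}, with the overall round count dominated by the two doubling loops and the internal call to $\textsc{FindAncestors}$.

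First I would invoke Lemma~\ref{lem:implement_of_find_ancestor} on line~\ref{sta:alg_lca_find_ancestor}: this produces the depth mapping $\dep_{\p}$ together with the ancestor mappings $\{g_i\mid i\in\{0\}\cup[r]\}$ with $r=O(\log(\dep(\p)))$, in $O(\log(\dep(\p)))$ parallel rounds. The key budget check here is space: storing all of $g_0,g_1,\ldots,g_r$ costs $\Theta(nr)=\Theta(n\log(\dep(\p)))=O(n\log n)$ words, which fits inside $\Theta(N^{1+\gamma})$ precisely because $\gamma\geq \log\log N/\log N$ gives $N^{1+\gamma}=\Omega(N\log N)$. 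After this call, the trivial filtering (discarding $u=v$ pairs, discarding cross-tree pairs via $g_r(u)\neq g_r(v)$) and the orientation swap that ensures $\dep_{\p}(u)\geq\dep_{\p}(v)$ can all be done by one round of \textbf{Multiple queries} on the stored $g_r$ and $\dep_{\p}$ mappings, followed by standard tuple creation/deletion.

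Next I would handle the first doubling loop, which brings each $u$ down to a depth within one of $v$. In iteration $i$, for every active pair $(u,v)\in Q'$ the responsible machine must look up the tuple $h_{i+1}(u,v)$, the depths $\dep_{\p}(x), \dep_{\p}(y)$, and the ancestor $g_i(x)$; all three are stored as mappings, so one round of \textbf{Multiple queries} suffices per iteration. The loop runs $r=O(\log(\dep(\p)))$ rounds. Load balancing via the \textbf{Load balance} primitive after each round keeps the per-machine load at $O(N^{\delta})$. The second doubling loop, which moves $u$ and $v$ in tandem toward their LCA while maintaining $h'_i(u)\neq h'_i(v)$, is handled identically, adding another $O(\log(\dep(\p)))$ rounds with a single multi-query round per iteration. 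The final step that identifies $(p,p_u,p_v)=(\p(h'_0(u)),h'_0(u),h'_0(v))$ is again one round of \textbf{Multiple queries} against $\p$.

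Summing up, the overall parallel time is $O(\log(\dep(\p)))$ for $\textsc{FindAncestors}$ plus $O(\log(\dep(\p)))$ for each of the two doubling loops plus constantly many rounds of pre- and post-processing, giving the claimed $O(\log(\dep(\p)))$ bound. The total space used at any one time consists of: the input $Q$ and parent map $\p$ (size $\Theta(N)$); the ancestor/depth tables from $\textsc{FindAncestors}$ (size $O(n\log n)$); and the $q$ working tuples $h_i,h'_i$ each of constant size per pair (size $O(q)$). All of this fits in $\Theta(N^{1+\gamma})$ under the hypothesis $\gamma\geq \log\log N/\log N$.

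The main obstacle is simply the space accounting around the ancestor tables: the $\MPC$ sorting/predecessor primitives and multi-query machinery are constant-round, so the only nontrivial point is verifying that $r=O(\log(\dep(\p)))\leq O(\log n)$ copies of an $n$-sized mapping are affordable, which is exactly what forces the lower bound $\gamma\geq \log\log N/\log N$ in the statement. Everything else reduces to Lemma~\ref{lem:implement_of_find_ancestor} together with a straightforward expansion of each doubling iteration into a constant number of standard $\MPC$ operations described in Section~\ref{sec:MPCmodel}.
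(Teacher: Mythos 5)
Your proof is correct and follows the same strategy as the paper's: invoke Lemma~\ref{lem:implement_of_find_ancestor} for the $\textsc{FindAncestors}$ call (which dominates both the round count and the $O(n\log n)$ space budget that forces $\gamma\geq\log\log N/\log N$), then observe that every remaining step of Algorithm~\ref{alg:lca}, including each iteration of the two doubling loops, reduces to a constant number of \textbf{Multiple queries} rounds. The paper's own proof is just a terse two-line version of the same argument; your expansion of the per-iteration work and the space accounting is consistent with it.
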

\begin{proof}
By Lemma~\ref{lem:implement_of_find_ancestor}, line~\ref{sta:alg_lca_find_ancestor} can be implemented in space $O(N\log N)$ and $O(\log(\dep(\p)))$ parallel running time.
It is easy to see that all the other steps in the procedure can be done by the operations shown in \textbf{Multiple queries}.

Thus, the total space needed is $O(N\log N)$ and the parallel running time is $O(\log(\dep(\p))).$
\end{proof}

\begin{lemma}\label{lem:parallel_implement_of_multipath}
Let $\p:V\rightarrow V$ be a set of parent pointers (See Definition~\ref{def:parent_pointers}) on a vertex set $V.$ Let $Q=\{(u_1,v_1),(u_2,v_2),\cdots,(u_q,v_q)\}\subseteq V\times V$ satisfy $\forall j\in[q],$ $v_j$ is an ancestor (See Definition~\ref{def:ancestor}) of $u_j$ in $\p$.
 Let $n=|V|,N=n+q.$
 $\textsc{MultiPath}(\p,Q)$ (Algorithm~\ref{alg:multi_path}) can be implemented in $(\gamma,\delta)-\MPC$ model for any $\gamma$ with $N\log N+\sum_{i=1}^q(\dep_{\p}(u_i)-\dep_{\p}(v_i)+1)=O(N^\gamma)$ and any constant $\delta\in(0,1)$ in $O(\dep(\p))$ parallel running time.
\end{lemma}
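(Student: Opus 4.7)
The plan is to follow the same pattern used for the other doubling-based algorithms in this section (compare with Lemma~\ref{lem:parallel_implement_of_find_path} for \textsc{FindPath}), since $\textsc{MultiPath}$ is essentially a batched version of \textsc{FindPath}. First, I would invoke Lemma~\ref{lem:implement_of_find_ancestor} to implement line~\ref{sta:alg_multipath_findancestor}; this produces the mapping $\dep_{\p}$ and the doubling table $\{g_i\mid i\in\{0\}\cup[r]\}$ stored in the system, uses $O(n\log n)$ total space, and runs in $O(r)=O(\log(\dep(\p)+1))$ parallel time. Next, for line~\ref{sta:alg_multipath_init}, I would scan the tuples of $Q$ and, for each $(u_j,v_j)\in Q$, emit a tuple $(\text{``}S_j^{(0)}\text{''},(u_j,v_j))$; this is a single relabeling pass and takes $O(1)$ parallel time.

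For the main doubling loop, I would process each iteration $i=1,\ldots,r$ in $O(1)$ parallel time using the primitives of Sections~\ref{sec:set_operations} and~\ref{sec:mapping_operations}. Concretely, in round $i$ each tuple $(\text{``}S_j^{(i-1)}\text{''},(x,y))$ must (a)~look up $\dep_{\p}(x)$ and $\dep_{\p}(y)$, and (b)~if $\dep_{\p}(x)-\dep_{\p}(y)>2^{r-i}$, look up $g_{r-i}(x)$ and emit the two replacement tuples for the split specified on line~\ref{sta:alg_multipath_split}; otherwise it copies itself forward as on line~\ref{sta:alg_multipath_not_split}. All of these lookups can be resolved simultaneously by a single invocation of \emph{Multiple queries} (Lemma~\ref{lem:multi_query}), and the emission of new tuples plus load balancing fits in constantly many additional rounds. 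After the final iteration, assembling each $P_j$ on line above the \Return amounts to a projection (keep the second coordinate of every segment in $S_j^{(r)}$), then a \emph{Duplicates removing} and merge with $\{u_j\}$; both are constant-round operations.

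The only subtle point, and the main obstacle, is the space bookkeeping. Because the lemma's hypothesis bounds total space by $N\log N+\sum_{j=1}^q(\dep_{\p}(u_j)-\dep_{\p}(v_j)+1)=O(N^\gamma)$, I must argue that the algorithm never simultaneously stores more than this. The $N\log N$ term absorbs the \textsc{FindAncestors} output, and the second term must dominate the storage of the current $S_j^{(i)}$'s. The key observation is that by the segmentation invariants established in Claim~\ref{cla:property_multi_path} inside the proof of Lemma~\ref{lem:multi_path}, every segment in $S_j^{(i)}$ covers a disjoint sub-interval of the path from $u_j$ to $v_j$, so $|S_j^{(i)}|\le \dep_{\p}(u_j)-\dep_{\p}(v_j)+1$ uniformly in $i$. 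Hence, keeping only $S_j^{(i-1)}$ and $S_j^{(i)}$ (and discarding the former after the round) keeps the running space within the required budget. The condition $\gamma\geq$ what is required is exactly what guarantees the $\textsc{FindAncestors}$ tables and the segment collections jointly fit.

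Finally, summing the per-stage time yields $O(\log(\dep(\p)+1))+O(r)\cdot O(1)=O(\log(\dep(\p)))$, which is at most $O(\dep(\p))$ as claimed. The parallel-time accounting, correctness of each local tuple emission, and the standard load-balance sweep after each round are all routine; the only real content is the uniform segment-count bound used to control space.
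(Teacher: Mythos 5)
Your proposal follows essentially the same approach as the paper's proof: invoke Lemma~\ref{lem:implement_of_find_ancestor} for line~\ref{sta:alg_multipath_findancestor}, implement each doubling round via \textbf{Multiple queries} with \textbf{Load balance} after each round, and control space by keeping only $S_j^{(i-1)}$ and $S_j^{(i)}$ across rounds, bounding their total size by the path lengths. One small remark: you correctly observe that the time bound is really $O(\log(\dep(\p)))$, which is tighter than the $O(\dep(\p))$ stated in the lemma (the paper's proof text also derives $O(\log(\dep(\p)))$ for \textsc{FindAncestors} but the lemma statement is loose); this does not affect correctness since $O(\log(\dep(\p)))\subseteq O(\dep(\p))$.
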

\begin{proof}
By Lemma~\ref{lem:implement_of_find_ancestor}, line~\ref{sta:alg_multipath_findancestor} can be implemented in space $O(N\log N)$ and $O(\log(\dep(\p)))$ parallel running time.
It is easy to see that all the other steps in the procedure can be done by the operations shown in \textbf{Multiple queries}.
Notice that after each round, we need to do load balancing (see \textbf{Load balance}) to make each machine have large enough available local memory.
The total space needed is to store all the pathes and the output of line~\ref{sta:alg_multipath_findancestor}.
Notice that in round $i$, we do not need to keep $S_j^{(i')}$ for $i'<i-1,$ thus, the space to keep $S_j^{(i)}$ for all $j\in[q]$ only needs $O(\sum_{j=1}^q(\dep_{\p}(u_j)-\dep_{\p}(v_j)+1))$ space.

Thus, the total space needed is at most $O(N\log N+\sum_{i=1}^q(\dep_{\p}(u_i)-\dep_{\p}(v_i)+1))=O(N^\gamma).$
The parallel running time is then $O(\dep(\p)).$
\end{proof}

\subsection{Leaf Sampling}
\begin{lemma}\label{lem:parallel_implement_of_sample_leaf}
Let $\p:V\rightarrow V$ be a set of parent pointers (See Definition~\ref{def:parent_pointers}) on a vertex set $V$, and $\p$ has a unique root. Let $n=|V|.$ Let $\delta$ be an arbitrary constant in $(0,1),$ and let $m= \lceil n^\delta\rceil.$ Then $\textsc{LeafSampling}(\p,m,\delta)$ (Algorithm~\ref{alg:sample_leaves}) can be implemented in $(\gamma,\delta)-\MPC$ model for any $\gamma\geq \log\log n/\log n$. Furthermore, with probability at least $1-1/(100m^{5/\delta})$, the parallel running time is at most $O(\log\dep(\p))$.
\end{lemma}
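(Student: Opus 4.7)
The plan is to walk through Algorithm~\ref{alg:sample_leaves} line by line and reduce every step to previously established MPC primitives, then bound the parallel time and total space. First, I would show that all preprocessing steps before the sequential loop have parallel time at most $O(\log\dep(\p))$ and use total space $O(n\log n)$. Computing $L = \leaves(\p)$ (line~\ref{sta:alg_sample_leaf_compute_leaf}) reduces to: for each $v\in V$, check whether any $u\in V$ has $\p(u)=v$; this can be done by sorting tuples $(``V",v)$ and $(``\p",(u,v))$ by the second coordinate, then scanning for unmatched $v$'s — a constant number of rounds using \textbf{Multiple queries} (Lemma~\ref{lem:multi_query}) and indexing. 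Computing $\rank$ (line~\ref{sta:alg_sample_leaf_compute_rank}) similarly reduces to sorting by parent and then \textbf{Indexing elements in sets} within each parent group. Sampling (lines~\ref{sta:when_L_is_small}--\ref{sta:when_L_needs_sample}) is local: each machine flips coins for its local leaves. Constructing $\p'$ (line~\ref{sta:point_to_first_leaf}) requires, for each $v$, finding the smallest-indexed child, which is again a sort-plus-index; and the call $\textsc{FindAncestors}(\p')$ on line~\ref{sta:mid_find_the_first_leaf} takes space $O(n\log n)$ and parallel time $O(\log\dep(\p'))=O(\log\dep(\p))$ by Lemma~\ref{lem:implement_of_find_ancestor}.

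Second, for lines~\ref{sta:find_the_first_leaf}--\ref{sta:alg_smaple_leaf_find_ancestor} I would invoke Lemma~\ref{lem:parallel_implment_of_lca} on the query set $Q\subseteq S\times S$ and Lemma~\ref{lem:implement_of_find_ancestor} for $\textsc{FindAncestors}(\p)$; both take parallel time $O(\log\dep(\p))$ and space $O(N\log N)$ where $N = n + |Q|$. With probability at least $1-1/(100m^{5/\delta})$, property~\ref{itm:sampled_leaves_pro3} of Lemma~\ref{lem:properties_of_sampled_leaves} gives $|S|=s=O(\lceil m^{1/3}\rceil (1+\log m/\delta))$, hence $|Q|\le s^2 = O(m^{2/3}\log^2 n)$. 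Conditioning on this high-probability event, the LCA stage uses total space $O(n\log n + m^{2/3}\log^2 n) = O(n^{1+\gamma})$ provided $\gamma\ge \log\log n/\log n$.

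The main obstacle is the sequential loop of lines~\ref{sta:alg_sample_leaf_sequential_st}--\ref{sta:alg_sample_leaf_sequential_ed}, which iterates $s$ times; naively unrolled this would cost $\Theta(s)$ parallel rounds, far worse than the target. The key observation is that all information consumed by this loop — the mapping $\lca$ on $S\times S$ and the mapping $\rank$ restricted to the $O(s^2)$ candidate pairs — already has total size $O(s^2) = O(m^{2/3}\log^2 n) \leq m = \Theta(n^\delta)$ once we condition on the high-probability bound on $s$. Therefore I would gather all of $\lca$ and the relevant $\rank$ values onto a single machine (via sorting and \textbf{Multiple queries}) in $O(1)$ rounds, and execute the entire ordering loop locally on that one machine, which contributes only $O(1)$ parallel time. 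This collapses the sequential loop to constant rounds and is the one place where $\gamma\ge \log\log n/\log n$ is genuinely needed — we require enough slack in total space to host $O(n\log n)$ ancestor tables simultaneously while still gathering $s^2$ LCA answers onto one machine of memory $\Theta(n^\delta)$.

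Putting the three stages together, the total parallel time is $O(\log\dep(\p))$ with probability at least $1-1/(100m^{5/\delta})$, and the total space is $O(n\log n + m^{2/3}\log^2 n) = O(n^{1+\gamma})$, so the algorithm fits in the $(\gamma,\delta)$-MPC model for every constant $\delta\in(0,1)$ and every $\gamma\ge \log\log n/\log n$. If the low-probability bad event occurs (when $s$ is unexpectedly large), we can simply truncate and return FAIL, which does not affect the stated claim since the running time guarantee is itself conditional on high probability.
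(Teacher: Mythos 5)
Your proposal is correct and follows essentially the same route as the paper's own proof: reduce each preprocessing line to the standard MPC primitives (set membership, indexing, multiple queries) in $O(1)$ rounds, invoke Lemma~\ref{lem:implement_of_find_ancestor} and Lemma~\ref{lem:parallel_implment_of_lca} for the $O(\log\dep(\p))$-round and $O(n\log n)$-space stages, condition on property~\ref{itm:sampled_leaves_pro3} of Lemma~\ref{lem:properties_of_sampled_leaves} to get $s^2=O(m)$, and then collapse the sequential ordering loop (lines~\ref{sta:alg_sample_leaf_sequential_st}--\ref{sta:alg_sample_leaf_sequential_ed}) by gathering all $O(s^2)$ LCA and rank values onto a single machine of memory $\Theta(m)$. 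You even repair a small typo in the paper's citation (line~\ref{sta:mid_find_the_first_leaf} indeed needs Lemma~\ref{lem:implement_of_find_ancestor}, not Lemma~\ref{lem:parallel_implement_of_tree_contract}) and are slightly more explicit than the paper about both the exact data that must fit on one machine and about returning FAIL in the low-probability event.
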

\begin{proof}
To implement line~\ref{sta:alg_sample_leaf_compute_leaf}, for each $v\in V,$ we can add $\p(v)$ to a temporary set $X.$
Then each $v$ can check whether $v$ is a leaf by checking whether $v$ is in $X$, and this can be done by the operations shown in \textbf{Set membership} and \textbf{Multiple queries}.

To implement line~\ref{sta:alg_sample_leaf_compute_rank}, for each $v\in V,$ we can add $v$ to the set $\child_{\p}(\p(v)).$
Then $\rank$ can be computed by the operations shown in \textbf{Indexing elements in sets} and \textbf{Multiple queries}.
For line~\ref{sta:when_V_is_small}, we can implement it on a single machine, since a single machine has local memory $\Theta(m).$
For line~\ref{sta:when_L_is_small} to line~\ref{sta:when_L_needs_sample}, for each $x\in L,$ we add $x$ into $S$ with probability $p,$ where $p$ can be computed by querying the size of $L$ (see \textbf{Sizes of sets} and \textbf{Multiple queries}).
Line~\ref{sta:point_to_first_leaf} can be implemented by operation described in \textbf{Indexing elements in sets}, \textbf{Set membership}, and \textbf{Multiple queries}.
By Lemma~\ref{lem:parallel_implement_of_tree_contract}, line~\ref{sta:mid_find_the_first_leaf} can be implemented in total space $O(N\log N)$ and $O(\log\dep(\p))$ parallel time.
By Property~\ref{itm:sampled_leaves_pro3} of Lemma~\ref{lem:properties_of_sampled_leaves}, with probability at least $1-1/(100m^{5/\delta}),$ $|S|^2=O(m).$
Thus, $Q$ can be stored on a single machine.
By Lemma~\ref{lem:parallel_implment_of_lca}, line~\ref{sta:alg_sample_leaf_lca} can be implemented in total space $O(n\log n+|Q|)=O(n\log n)$ and in $O(\log\dep(\p))$ parallel time.
By Lemma~\ref{lem:implement_of_find_ancestor}, line~\ref{sta:alg_smaple_leaf_find_ancestor} can be implemented in total space $O(n\log n)$ and in $O(\log\dep(\p))$ parallel time.
Then line~\ref{sta:alg_sample_leaf_sequential_st} to line~\ref{sta:alg_sample_leaf_sequential_ed} can be implemented on a single machine.

Thus, the total space needed is at most $O(n\log n).$
The parallel time is at most $O(\log\dep(\p))$
\end{proof}

\subsection{DFS Sequence}

\begin{lemma}\label{lem:parallel_implement_sub_dfs}
Let $\p:V\rightarrow V$ be a set of parent pointers (See Definition~\ref{def:parent_pointers}) on a vertex set $V$, and $\p$ has a unique root. Let $n=|V|.$ Let $\delta$ be an arbitrary constant in $(0,1),$ and let $m= \lceil n^\delta\rceil.$ $\textsc{SubDFS}(\p,m,\delta)$ (Algorithm~\ref{alg:subdfs}) can be implemented in $(\gamma,\delta)-\MPC$ model for any $\gamma\geq \log\log n/\log n$. Furthermore, with probability at least $1-1/(100m^{5/\delta})$, the parallel running time is at most $O(\log\dep(\p))$.
\end{lemma}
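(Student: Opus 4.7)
The plan is to implement $\textsc{SubDFS}$ line by line using the $\MPC$ primitives from Section~\ref{sec:MPCmodel} together with the subroutines already shown implementable: $\textsc{LeafSampling}$ (Lemma~\ref{lem:parallel_implement_of_sample_leaf}), $\textsc{LCA}$ (Lemma~\ref{lem:parallel_implment_of_lca}), $\textsc{MultiPath}$ (Lemma~\ref{lem:parallel_implement_of_multipath}), and $\textsc{FindAncestors}$ (Lemma~\ref{lem:implement_of_find_ancestor}). First I would handle the base case $V=\{v\}$ on a single machine, and otherwise find the unique root by scanning $\p$ via \textbf{Multiple queries}. Calling $\textsc{LeafSampling}(\p,m,\delta)$ produces the ordered sequence $L=(l_1,\dots,l_t)$ in $O(\log\dep(\p))$ parallel time and $O(n\log n)$ total space with the required probability.

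Next I would form the query set $Q=\{(l_i,l_{i+1})\}_{i\in[t-1]}$ by a constant-round use of \textbf{Indexing elements in sets} and \textbf{Multiple queries} on $L$, and invoke $\textsc{LCA}(\p,Q)$, which by Lemma~\ref{lem:parallel_implment_of_lca} runs in $O(\log\dep(\p))$ parallel time. With the resulting triples $(p_{l_i,l_{i+1}},p_{i,l_i},p_{i,l_{i+1}})$ stored as a mapping, I would build $Q'$ as in line~\ref{sta:alg_subsequence_Qprime} by another round of \textbf{Multiple queries} to look up $\p(l_i)$ and the relevant ancestor values. Then $\textsc{MultiPath}(\p,Q')$ produces all $2t$ paths $P_i$; by Lemma~\ref{lem:remove_trees}, $V'=\bigcup P_i$ has size $O(n)$, so the total path length is $O(n)$ and the call fits the space condition of Lemma~\ref{lem:parallel_implement_of_multipath}, taking $O(\log\dep(\p))$ parallel time. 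The set $V'$ is then obtained by \textbf{Set merging} and \textbf{Duplicates removing}, and $\p'$ by restricting $\p$ via \textbf{Multiple queries}.

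To build $A'_i$ (the loops in lines~\ref{sta:alg_subsequence_loop1}--\ref{sta:alg_subsequence_loop2}) I would use the $\dep_{\p}$ map already returned by $\textsc{MultiPath}$: give every element $u\in P_i$ the sort key $(i,\dep_{\p}(u))$ if $i$ is odd and $(i,-\dep_{\p}(u))$ if $i$ is even, and apply one sorting call (Theorem~\ref{thm:sorting}) to produce the concatenation $A'=A'_1A'_2\cdots A'_{2t}$ directly as a sequence in the sense of Section~\ref{sec:data_organize}. A second round of indexing computes, for every element of $A'$, its position in $A'$; combined with \textbf{Indexing elements in sets} applied to the children sets $\child_{\p'}(\cdot)$, this yields the rank functions $\rank_{\p}$ and $\rank_{\p'}$ restricted to $V'$, as well as the $\pos(u,i)$ map required by line~\ref{sta:alg_subsequence_pos}. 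The case analysis of lines~\ref{sta:just_copy_once}--\ref{sta:mid_duplicate} is then a constant-round computation that writes a mapping $c:[b]\to\mathbb{Z}_{\geq 0}$ by \textbf{Multiple queries} on these ranks and positions. Finally, \textbf{Sequence duplicating} from Section~\ref{sec:mapping_operations} expands $A'$ according to $c$ to produce $A$.

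The hard part will be controlling total space and proving the $O(\log\dep(\p))$ bound with high probability. Every non-trivial subcall contributes either $O(\log\dep(\p))$ parallel time (LeafSampling, LCA, MultiPath, FindAncestors) or $O(1)$ rounds (sorting, indexing, queries, sequence duplicating), so the round bound follows by summation, conditioned on the $1-1/(100m^{5/\delta})$ success event of $\textsc{LeafSampling}$ that bounds $|S|$ (Lemma~\ref{lem:properties_of_sampled_leaves}). For space: the ancestor tables produced by $\textsc{FindAncestors}$ cost $O(n\log n)$, the multi-path output and $A'$ cost $O(|V'|)=O(n)$, and the duplicated sequence $A$ has length at most $2|V|-1=O(n)$ since each vertex appears at most $|\child_{\p}(v)|+1$ times. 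Hence the total space is $O(n\log n)$, which fits into $N^{1+\gamma}$ precisely when $\gamma\geq \log\log n/\log n$, matching the hypothesis of the lemma.
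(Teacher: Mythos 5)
Your proposal follows essentially the same line as the paper's proof: apply Lemmas~\ref{lem:parallel_implement_of_sample_leaf}, \ref{lem:parallel_implment_of_lca}, and \ref{lem:parallel_implement_of_multipath} to the recursive calls, then implement the remaining lines via the sorting, indexing, multiple-queries, and sequence-duplicating primitives, with each subcall costing $O(\log\dep(\p))$ rounds and $O(n\log n)$ space. One small imprecision worth fixing: to invoke Lemma~\ref{lem:parallel_implement_of_multipath} with the stated space you need $\sum_i|P_i|=O(n)$, which follows because the $2t$ queried paths are pairwise vertex-disjoint (except the first and last share the root) — not merely because $|V'|\le n$ or from Lemma~\ref{lem:remove_trees}.
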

\begin{proof}
By Lemma~\ref{lem:parallel_implement_of_sample_leaf}, line~\ref{sta:alg_subsequence_sample_leaf} can be implemented in total space $O(n\log n)$ and with probability at least $1-1/(100m^{5/\delta})$ has parallel running time $O(\log\dep(\p)).$
By Lemma~\ref{lem:parallel_implment_of_lca}, line~\ref{sta:alg_subsequence_lca} can be implemented in total space $O(n\log n)$ and in parallel running time $O(\log\dep(\p)).$
Line~\ref{sta:alg_subsequence_Qprime} can be implemented by operation shown in \textbf{Multiple queries}.
By Lemma~\ref{lem:parallel_implement_of_multipath}, since all the pathes are disjoint (except the first path and the last path intersecting on the root) and $V$ has $n$ vertices, line~\ref{sta:alg_subsequence_multi_path} can be implemented in $O(n\log n)$ total space and in $O(\log\dep(\p))$ parallel running time.
Loop in line~\ref{sta:alg_subsequence_loop1} and Loop in line~\ref{sta:alg_subsequence_loop2} can be implemented in parallel, and can be implemented by operations shown in \textbf{Indexing elements in sets} and \textbf{Multiple queries}.
Line~\ref{sta:alg_subsquence_compute_rank} can be implemented by operations shown in \textbf{Indexing elements in sets} and \textbf{Multiple queries}.
Now we describe the implementation of line~\ref{sta:alg_subsequence_pos}.
Firstly, we can standardize (see \textbf{Sequence standardizing}) the sequence $A'.$
For each tuple $(``A' ",(j,u)),$ create a tuple $(``temp_u",j).$
Thus, $``temp_u"$ is a set which contains all the positions that $u$ appeared.
For each tuple $(``temp_u",j),$ we query (see \textbf{Multiple queries}) the index $i$ (see \textbf{Indexing elements in sets}) of $j$ in set $(``temp_u",j),$ and create a tuple $(``\pos",((u,i),j)).$
Thus, the desired mapping $\pos$ is stored in the system.
The loop in line~\ref{sta:alg_subsequence_final_loop} is implemented in parallel.
Line~\ref{sta:just_copy_once} can be implemented by the operations shown in \textbf{Set membership} and \textbf{Multiple queries}.
Line~\ref{sta:the_first_duplicate} to line~\ref{sta:mid_duplicate} can be implemented by the operation shown in \textbf{Multiple queries}.
Finally, line~\ref{sta:duplicate_elements} can be implemented by \textbf{Multiple queries} and \textbf{Sequence duplicating}.

The total space used in the procedure is at most $O(n\log n).$ The parallel running time is $O(\log\dep(\p)).$
\end{proof}

\begin{theorem}\label{thm:parallel_implement_DFS_sequence}
Let $\p:V\rightarrow V$ be a set of parent pointers (See Definition~\ref{def:parent_pointers}) on a vertex set $V$, and $\p$ has a unique root. Let $n=|V|,m=n^{\delta}$ for some arbitrary constant $\delta\in(0,1)$.  $\textsc{DFS}(\p,m)$ (Algorithm~\ref{alg:dfs_sequence}) can be implemented in $(\gamma,\delta)-\MPC$ model for any $\gamma\geq \log \log n/\log n.$ With probability at least $0.99,$ the parallel running time is $O(\log(\dep(\p))).$
\end{theorem}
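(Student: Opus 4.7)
The plan is to show, round by round, that each step of Algorithm~\ref{alg:dfs_sequence} reduces to operations already analyzed (sorting, indexing, predecessor, \textbf{Multiple queries}, \textbf{Multiple Tasks}, together with the already-implemented subroutine $\textsc{SubDFS}$), and then to bound the total parallel time by summing over the $O(1/\delta)=O(1)$ outer iterations. The success probability will follow from a union bound combined with Theorem~\ref{thm:success_prob_dfs_sequence}.

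First I would handle the initialization (line~\ref{sta:alg_dfs_init_sub_dfs}). By Lemma~\ref{lem:parallel_implement_sub_dfs}, $\textsc{SubDFS}(\p_0,m,\delta)$ is implementable in $(\gamma,\delta)$-$\MPC$ for $\gamma\geq \log\log n/\log n$ and, with probability at least $1-1/(100m^{5/\delta})$, runs in $O(\log \dep(\p))$ parallel time using $O(n\log n)=O(n^{1+\gamma})$ total space. This also yields the set $V_0$ and the sequence $A_0$ stored in the system as described in Section~\ref{sec:data_organize}.

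Next I would handle a generic iteration $i\in[r]$ with $r=\lceil 3/\delta\rceil+2=O(1)$. Line~\ref{sta:alg_dfs_vprimei} (computing $V_i'=V\setminus V_{i-1}$) is a set-difference implemented via \textbf{Set membership} and \textbf{Multiple queries}. Line~\ref{sta:alg_dfs_pi} (defining $\p_i$) is one round of \textbf{Multiple queries}: each tuple $(``V_i'",v)$ queries whether $\p(v)\in V_{i-1}$ and then whether it lies in $V_i'$, producing either $(``\p_i",(v,v))$ or $(``\p_i",(v,\p(v)))$. Line~\ref{sta:alg_dfs_tree_contract} is a tree contraction whose MPC implementation is given by Lemma~\ref{lem:parallel_implement_of_tree_contract}; since $\p_i$ has depth at most $\dep(\p)$, this uses $O(\log\dep(\p))$ parallel time. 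The loop in line~\ref{sta:alg_dfs_the_loop} produces one independent $\textsc{SubDFS}$ task per root of $\p_i$; I would invoke \textbf{Multiple Tasks} (Section~\ref{sec:multi_tasks}) to run these in parallel, using Lemma~\ref{lem:parallel_implement_sub_dfs} on each one. The final insertion in line~\ref{sta:alg_dfs_sequence_insertion} is a \textbf{Sequence insertion} applied to the collection of output subsequences and the host sequence $A_{i-1}$, with insertion positions supplied by $\rank_\p$ on the roots.

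The main obstacle I anticipate is controlling the aggregate space of the parallel $\textsc{SubDFS}$ invocations inside a single iteration. For each root $v$ of $\p_i$, the input is $V_i'(v)$ and Lemma~\ref{lem:parallel_implement_sub_dfs} needs $O(|V_i'(v)|\log |V_i'(v)|)$ total space; since the sets $V_i'(v)$ partition a subset of $V$, the totals sum to $O(n\log n)=O(n^{1+\gamma})$, which fits the space budget, and by Theorem~\ref{thm:indexing} \textbf{Multiple Tasks} can schedule them simultaneously. For running time, every tree involved is a subtree of $\p$, so every invocation of $\textsc{SubDFS}$ costs $O(\log\dep(\p))$, and since there are only $r=O(1)$ outer iterations, the whole algorithm runs in $O(\log\dep(\p))$ parallel time. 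For the probability, each $\textsc{SubDFS}$ call fails with probability at most $1/(100m^{5/\delta})=1/(100n^5)$, and there are at most $n$ such calls across all iterations; the union bound together with Theorem~\ref{thm:success_prob_dfs_sequence} (which bounds the probability that $\textsc{DFS}$ returns \textrm{FAIL}) yields overall success probability at least $0.99$, completing the proof.
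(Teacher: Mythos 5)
Your proposal is correct and follows essentially the same line as the paper's own proof: initialize via Lemma~\ref{lem:parallel_implement_sub_dfs}, implement each line of a generic iteration with \textbf{Set membership}/\textbf{Multiple queries}, tree contraction via Lemma~\ref{lem:parallel_implement_of_tree_contract}, the inner loop via \textbf{Multiple Tasks} with Lemma~\ref{lem:parallel_implement_sub_dfs} on each piece, the insertion via \textbf{Sequence insertion}, and a union bound over the at most $n$ $\textsc{SubDFS}$ calls for the $0.99$ probability. One small nit: the parallel scheduling of the tasks comes from the \textbf{Multiple Tasks} machinery of Section~\ref{sec:multi_tasks} (built on indexing and set operations), not from Theorem~\ref{thm:indexing} per se, and Theorem~\ref{thm:success_prob_dfs_sequence} concerns the probability of \textrm{FAIL} rather than the running-time bound, so it is not actually needed for the claim being proved — the union bound over the $\textsc{SubDFS}$ calls already suffices, as in the paper.
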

\begin{proof}
By Lemma~\ref{lem:parallel_implement_sub_dfs}, line~\ref{sta:alg_dfs_init_sub_dfs} can be implemented in total space $O(n\log n).$ With probability at least $1-1/(100n^5),$ the parallel running time is $O(\log(\dep(\p))).$
Line~\ref{sta:alg_dfs_vprimei} to line~\ref{sta:alg_dfs_pi} can be implemented by operations shown in \textbf{Set membership} and \textbf{Multiple queries}.
By Lemma~\ref{lem:parallel_implement_of_tree_contract}, line~\ref{sta:alg_dfs_tree_contract} can be implemented in $O(n)$ total space, and $O(\log\dep(\p))$ parallel running time.
The loop in line~\ref{sta:alg_dfs_the_loop} contains multiple tasks (see Section~\ref{sec:multi_tasks} \textbf{Multiple Tasks}), thus we can implement those tasks in parallel.
By Lemma~\ref{lem:parallel_implement_sub_dfs}, line~\ref{sta:alg_dfs_inloop_subdfs} can be implemented in total space $O(|V_i'(v)|\log|V_i'(v)|).$
Furthermore, with probability at least $1-1/(100n^5),$ the parallel running time is $O(\log(\dep(\p))).$
Thus, the total space needed for those tasks is at most $O(n\log n).$
Line~\ref{sta:alg_dfs_sequence_insertion} can be implemented by operations shown in \textbf{Indexing elements in sets}, \textbf{Sequence insertion} and \textbf{Multiple queries}.

Thus, the total space needed is $O(n\log n).$
By taking union bound over all the task $\textsc{SubDFS},$ with probability at least $0.99,$ the parallel running time is $O(\log\dep(\p)).$
\end{proof}

Now we are able to conclude the following theorem.
\begin{theorem}\label{thm:formal_statement_of_DFS}
For any $\gamma\in[\beta,2]$ and any constant $\delta\in(0,1),$ there is a randomized $(\gamma,\delta)-\MPC$ algorithm (Algorithm~\ref{alg:dfs_sequence}) which can output a Depth-First-Search sequence for any tree graph $G=(V,E)$ in $O(\min(\log D\cdot \log (1/\gamma'),\log n))$ parallel time, where  $n=|V|,$ $\beta=\Theta(\log\log n/\log n),$ $D$ is the diameter of $G,$ and $\gamma'=\gamma+\Theta(1/\log n).$ The success probability is at least $0.98$. In addition, if the algorithm fails, then it will return {\rm FAIL}.
\end{theorem}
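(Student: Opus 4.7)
The plan is to compose two algorithms already developed in the paper. Since the input $G$ is a tree, I first run the rooted spanning forest algorithm (Algorithm~\ref{alg:batch_algorithm3} followed by Algorithm~\ref{alg:batch_algorithm4}) on $G$. Theorem~\ref{thm:formal_statement_of_spanning_tree} guarantees that in $O(\min(\log D\cdot \log(1/\gamma'),\log n))$ parallel rounds in the $(\gamma,\delta)$-MPC model, with high success probability, this returns a set of parent pointers $\p:V\to V$ that encodes $G$ as a rooted tree (since $G$ is a tree, its rooted spanning forest is the whole of $G$ rooted at a single vertex). I then feed $\p$ into Algorithm~\ref{alg:dfs_sequence} with parameter $m=n^\delta$; Theorem~\ref{thm:dfs_sequence} shows that the non-FAIL output is the DFS sequence of $\p$, which is by definition a DFS sequence of $G$, and Theorem~\ref{thm:parallel_implement_DFS_sequence} implements it in the $(\gamma,\delta)$-MPC model whenever $\gamma\geq \beta=\Theta(\log\log n/\log n)$, in parallel time $O(\log\dep(\p))$, with FAIL probability at most $\tfrac{1}{100}$.

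The only quantitative step is to control $\log\dep(\p)$. By Theorem~\ref{thm:batch_algorithm4}, the rooted spanning tree produced in the first phase has depth at most $O(D)^{r}$, where $r$ is the number of outer phases of the spanning forest algorithm; Theorem~\ref{thm:success_prob_batch_alg3} bounds $r$ by $O(\log(1/\gamma'))$. Taking logs gives
\[
\log\dep(\p)=O\bigl(\log D\cdot \log(1/\gamma')\bigr),
\]
and this quantity is trivially at most $O(\log n)$ since $\dep(\p)\leq n$. Running Algorithm~\ref{alg:dfs_sequence} therefore takes $O(\min(\log D\cdot \log(1/\gamma'),\log n))$ parallel time, which matches the first phase. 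Summing the two phases and taking a union bound over the two FAIL events (each tuned to occur with probability at most $\tfrac{1}{100}$ by adjusting constants in Theorems~\ref{thm:success_prob_batch_alg3} and~\ref{thm:success_prob_dfs_sequence}, using independent trials if necessary) yields overall success probability at least $0.98$ and the stated parallel time.

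The hard part will be verifying that the depth blow-up of the rooted spanning tree does not destroy the target runtime. Each of the $r$ leader-contraction phases of the spanning forest algorithm can stack an $O(D)$-deep local tree on top of an already contracted forest, so $\dep(\p)$ grows multiplicatively to $D^{\Theta(r)}$ rather than $D$. What saves the bound is that every doubling-based subroutine used inside Algorithm~\ref{alg:dfs_sequence} — ancestor finding (Algorithm~\ref{alg:ancestors}), path extraction (Algorithm~\ref{alg:path_find}), LCA (Algorithm~\ref{alg:lca}), and tree contraction (Algorithm~\ref{alg:tree_contraction}) — runs in rounds logarithmic in the depth of its input tree. Thus the exponential blow-up collapses to the multiplicative $\log(1/\gamma')$ factor in the final bound. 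All remaining technicalities, including the scheduling of recursive $\textsc{SubDFS}$ invocations as independent parallel tasks and the load-balancing between phases, follow immediately from the MPC implementation lemmas already established, so no further argument is needed.
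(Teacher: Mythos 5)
Your proof is correct and follows essentially the same route as the paper's: run the rooted spanning forest algorithm (Theorem~\ref{thm:formal_statement_of_spanning_tree}) to obtain parent pointers $\p$, then invoke Algorithm~\ref{alg:dfs_sequence} and cite Theorems~\ref{thm:dfs_sequence}, \ref{thm:success_prob_dfs_sequence}, and~\ref{thm:parallel_implement_DFS_sequence}. You actually spell out a step the paper's proof leaves implicit, namely that $\dep(\p)\le O(D)^{O(\log(1/\gamma'))}$ (Theorem~\ref{thm:batch_algorithm4} combined with $r=O(\log\log_{m/n}n)$) and trivially $\dep(\p)\le n$, so that the $O(\log\dep(\p))$ bound from Theorem~\ref{thm:parallel_implement_DFS_sequence} collapses to $O(\min(\log D\cdot\log(1/\gamma'),\log n))$; this is a welcome elaboration, not a divergence.
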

\begin{proof}
Firstly, by Theorem~\ref{thm:formal_statement_of_spanning_tree}, we can find a rooted tree.
Algorithm~\ref{alg:dfs_sequence} can output the DFS sequence for a rooted tree.

The implementation and parallel time of Algorithm~\ref{alg:dfs_sequence} is shown by Theorem~\ref{thm:parallel_implement_DFS_sequence}.
The correctness of Algorithm~\ref{alg:dfs_sequence} is proved by Theorem~\ref{thm:dfs_sequence}.
The success probability of Algorithm~\ref{alg:dfs_sequence} is proved by Theorem~\ref{thm:success_prob_dfs_sequence}.
\end{proof}

\subsection{Range Minimum Query}

\begin{lemma}\label{lem:parallel_implement_of_sparser_table}
Let $A=(a_1,a_2,\cdots,a_n)$ be a sequence of numbers. Let $\delta$ be an arbitrary constant in $(0,1).$ $\textsc{SparseTable}^+(a_1,a_2,\cdots,a_n,\delta)$ (Algorithm~\ref{alg:st_rmq}) can be implemented in $(0,\delta)-\MPC$ model with $O(1)$ parallel running time.
\end{lemma}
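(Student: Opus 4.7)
The plan is to implement the outer loop over $l=1,\ldots,\lceil 1/\delta\rceil$ in the $\MPC$ model, where each iteration of this loop is performed in $O(1)$ parallel rounds. Since $\delta$ is a constant, the outer loop runs $O(1)$ times, giving the desired $O(1)$ total parallel time.

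First I would set up the storage: the sequence $a_1,\ldots,a_n$ and the partial tables $\{\wh{f}_{i,j}\}$ for each completed value of $j$ are each of size $O(n)$ in total (there are $n(1+\lceil 1/\delta\rceil)=O(n)$ table entries), so they can be stored in the system using the set/mapping representation from Section~\ref{sec:data_organize}, consistent with the $(0,\delta)$-$\MPC$ model's linear total space. The initialization (setting $\wh{f}_{i,0}=i$ for all $i\in[n]$ and enumerating $S_t$ on each machine) requires only local computation on each machine or a single round of indexing.

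Next I would implement iteration $l$ of the outer loop in $O(1)$ rounds, as follows. For each block index $j\in\{0,\ldots,\lceil n/m\rceil\}$ and each element $x\in S_t$ for $t\in[l-1]$, generate a query tuple requesting the value $\wh{f}_{j\cdot m+1+x,t}$; the total number of such queries per iteration is at most $O(n/m)\cdot (l-1)(m-1)=O(n/\delta)=O(n)$, which fits in total space. Apply the Multiple queries primitive (Lemma~\ref{lem:multi_query}) to answer all of these simultaneously in $O(1)$ rounds, and route each answer to the machine responsible for block $j$. Each such machine then locally takes the argmin over its (at most $O(m)=O(n^\delta)$) received values to compute $z^*_{j,l}$, and subsequently, for each $i'\in\{0,\ldots,\min(m-1,n-i)\}$, assembles the set $T$ (consisting of at most $O(m)$ contiguous indices plus $z^*_{j,l}$) and computes $\wh{f}_{i+i',l}=\arg\min_{z\in T} a_z$ locally; the values $a_z$ needed here can likewise be fetched via one more invocation of Multiple queries. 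After this, a load-balancing step redistributes the newly computed $\{\wh{f}_{i,l}\}$ across the system so that subsequent iterations have room for their queries.

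The main subtlety to verify is that per iteration the communication load remains $O(n)$ and each machine's received data fits in its local memory $\Theta(n^\delta)$. Since $|S_t|\leq m-1=O(n^\delta)$ and $l\leq \lceil 1/\delta\rceil=O(1)$, each block-machine receives at most $O(m)=O(n^\delta)$ answers for $z^*_{j,l}$, plus $O(m)$ values of $a_z$ for the final argmin computations; load balancing handles the case where the machine holding the input tuples for block $j$ is not the same as the machine holding block $j$'s output cells. Summing over all $O(1/\delta)$ iterations, the total parallel running time is $O(1/\delta)=O(1)$, and the total space remains $O(n)$, yielding the claimed bound in the $(0,\delta)$-$\MPC$ model for any constant $\delta\in(0,1)$.
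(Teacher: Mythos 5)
Your proof is correct and follows essentially the same approach as the paper's: assign each block of $m=\Theta(n^\delta)$ consecutive indices to one machine, iterate the outer loop over $l$ for $O(1/\delta)=O(1)$ rounds, and in each iteration have each block-machine issue $O(m)$ queries (covering the $\sum_t |S_t|$ entries for $z^*_{j,l}$ and the $O(m)$ non-local $a_z$ values) answered via the Multiple queries primitive in $O(1)$ rounds, then compute the argmins locally. Your account is somewhat more detailed than the paper's (which simply bounds the per-machine query count by $\sum_{t=1}^{\lceil1/\delta\rceil}|S_t|+2m=O(m)$ and appeals to Multiple queries), but the underlying argument and resource accounting match.
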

\begin{proof}
Let $A$ be the sequence $(a_1,a_2,\cdots,a_n)$.
The algorithm takes $O(1/\delta)$ rounds.
$m$ is the local space of a machine.
There are $\Theta(n/m)$ machines each holds a consecutive $\Theta(m)$ elements of sequence $A.$
Now consider the round $l$.
Machine $j\in\{0\}\cup[\lceil n/m\rceil]$ needs to compute $\wh{f}_{j\cdot m+1,l},\wh{f}_{j\cdot m+2,l},\cdots,$ $\wh{f}_{j\cdot m+m-1,l}.$
The number of queries machine $j$ made in line~\ref{sta:rmq_zstar} and line~\ref{sta:assign_value} is at most $\sum_{t=1}^{\lceil1/\delta\rceil}|S_t|+2m\leq O(m/\delta)=O(m).$
Thus, there are total $O(n)$ queries.
These queries can be answered simultaneously by operation shown in \textbf{Multiple queries}.

Thus, the total space needed is $O(n),$ and the parallel running time is $O(1).$
\end{proof}

\begin{lemma}\label{lem:parallel_implement_st_rmq2}
Let $a_1,a_2,\cdots,a_n$ be a sequence of numbers. Let $\delta$ be an arbitrary constant in $(0,1).$  $\textsc{SparseTable}(a_1,a_2,\cdots,a_n,\delta)$ (Algorithm~\ref{alg:st_rmq2}) can be implemented in $(\gamma,\delta)-\MPC$ model for any $\gamma\geq \log\log n/\log n$ in $O(1)$ parallel time.
\end{lemma}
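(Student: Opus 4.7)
The plan is to mirror the implementation of $\textsc{SparseTable}^+$ established in Lemma~\ref{lem:parallel_implement_of_sparser_table}, and to show that the extra bookkeeping in Algorithm~\ref{alg:st_rmq2} still collapses into $O(1)$ parallel rounds once we have $\wh{f}$ in hand. The output table $\{f_{i,t} : i\in[n], t\in\{0\}\cup[\lceil\log n\rceil]\}$ has $\Theta(n\log n)$ entries, so the reason we need $\gamma \geq \log\log n/\log n$ is precisely that $n^{1+\gamma}\geq n\cdot n^{\log\log n/\log n}=n\log n$; this gives enough total space to store the table and to hold the intermediate queries simultaneously.

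First, I would invoke Lemma~\ref{lem:parallel_implement_of_sparser_table} on line~\ref{sta:alg_st_rmq2_sttable} to produce the set representation of $\wh{f}$ in $\Theta(n)$ total space and $O(1)$ parallel rounds; this is a subroutine that fits inside the $(\gamma,\delta)$ model since $\gamma\geq 0$. Next, the quantities $k_t$ and $S_t$ depend only on $t$ and $m=\lceil n^\delta\rceil$; each $k_t$ is a single integer and $|S_t|\leq 2m^{k_t+1}/m^{k_t}=O(m)$, so the collection $\{(t,k_t,S_t)\mid t\in[\lceil\log n\rceil]\}$ can be produced and stored in $O(m\log n)=O(n^{\delta}\log n)$ words using a single $\MPC$ round of local computation followed by a load-balance step.

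The bulk of the work is the main loop starting at line~\ref{sta:alg_st_rmq2_loop}. For a fixed machine index $j$ and fixed $t$, computing $z^*_{j,t}$ requires $|S_t|=O(m)$ lookups into $\wh{f}$, and then for each of the $O(m)$ offsets $i'$ the value $f_{i+i',t}$ needs $O(1)$ additional lookups into $\wh{f}$ plus $O(m)$ reads of the local array $a$. Summing over all $j\in\{0,\dots,\lceil n/m\rceil\}$ and $t\in\{0,\dots,\lceil\log n\rceil\}$, the total number of lookup requests is $O((n/m)\cdot \log n\cdot m) = O(n\log n)$. Each such batch of lookups is a standard multi-query instance and can be resolved in $O(1)$ rounds via the procedure in Lemma~\ref{lem:multi_query} applied to the set representation of $\wh{f}$. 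The output tuples $(``f\,",((i+i',t),f_{i+i',t}))$ produced in line~\ref{sta:sparse_table_iiprime} are generated locally once the queries return. Treating all pairs $(j,t)$ as independent computational subtasks and applying the \textbf{Multiple Tasks} scheduler from Section~\ref{sec:multi_tasks} packs the $O(n\log n)$ total work into the available $O(n^{1+\gamma})$ space, so the whole loop executes in $O(1)$ parallel rounds.

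The main obstacle is verifying that the working space never exceeds $n^{1+\gamma}$: the output itself already costs $\Theta(n\log n)$, and the intermediate query/response sets for the batched multi-queries add another $\Theta(n\log n)$ words in the worst case. Both of these are within the allowed budget exactly because of the hypothesis $\gamma\geq\log\log n/\log n$. After checking this space accounting, combining Lemma~\ref{lem:parallel_implement_of_sparser_table}, Lemma~\ref{lem:multi_query}, and the multiple-tasks scheduler yields an $O(1)$-round $(\gamma,\delta)-\MPC$ implementation, completing the proof.
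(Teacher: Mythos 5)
Your proposal is correct and tracks the paper's own proof quite closely: invoke Lemma~\ref{lem:parallel_implement_of_sparser_table} for line~\ref{sta:alg_st_rmq2_sttable}, observe that the main loop at line~\ref{sta:alg_st_rmq2_loop} issues $O(m\log n)$ lookups per machine into $\wh{f}$ (for a total of $O(n\log n)$), resolve these via the \textbf{Multiple queries} primitive of Lemma~\ref{lem:multi_query} in $O(1)$ rounds, and note the $\Theta(n\log n)$ space budget is covered by $\gamma\geq\log\log n/\log n$. Your explicit unpacking of why this hypothesis gives $n^{1+\gamma}\geq n\log n$, and the bound $|S_t|=O(m)$, are worthwhile details the paper leaves implicit. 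One small remark: invoking the \textbf{Multiple Tasks} scheduler for the $(j,t)$ pairs is unnecessary overhead --- since every pair $(j,t)$ is processed as purely local computation on machine $j$ once the $\wh{f}$ lookups return, the \textbf{Multiple queries} operation already suffices and no task scheduling is needed, which is what the paper does.
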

\begin{proof}
By Lemma~\ref{lem:parallel_implement_of_sparser_table}, line~\ref{sta:alg_st_rmq2_sttable} can be implemented in $O(n)$ total space and $O(1)$ parallel time.
The loop in line~\ref{sta:alg_st_rmq2_loop} is similar to Algorithm~\ref{alg:st_rmq}.
Each machine $j$ needs to compute $f_{j\cdot m+1,t},\cdots,f_{j\cdot m+m-1,t}$ for all $t\in[\lceil\log n\rceil]\cup\{0\}.$
The difference from Algorithm~\ref{alg:st_rmq} is that, it can compute for all $t$ at the same time since it only depends on the value of $\wh{f}.$
The number of queries made by each machine is $O(m\log n).$
Thus, the total number of queries is at most $O(n\log n).$
These queries can be answered simultaneously by operation shown in \textbf{Multiple queries}.

Thus, the total space needed is $O(n\log n),$ and the parallel running time is $O(1).$
\end{proof}
\section{Minimum Spanning Forest}\label{sec:mst_bst}
In this section, we discuss how to apply our connectivity/spanning forest algorithm to the Minimum Spanning Forest (MSF) and Bottleneck Spanning Forest (BSF) problem.

The input of MSF/BSF problem is an undirected graph $G=(V,E)$ together with a weight function $w:E\rightarrow\mathbb{Z},$ where $E$ contains $m$ edges $e_1,e_2,\cdots,e_m$ with $w(e_1)\leq w(e_2)\leq \cdots\leq w(e_m).$
The goal of MSF is to output a spanning forest such that the sum of weights of the edges in the forest is minimized.
The goal of BSF is to output a spanning forest such that the maximum weight of the edges in the forest is minimized.
$D$ is the diameter of the minimum spanning forest.
If there are multiple choices of the minimum spanning forest, then let $D$ be the minimum diameter among all the minimum spanning forests.

For simplicity, in all of our proofs, we only discuss the case when all the edges have different weights, i.e. $w(e_1)<w(e_2)<\cdots<w(e_m).$
In this case, the minimum spanning forest is unique.
It is easy to extend our algorithms to the case when there are edges with the same weight.
We omit the proof for this fact.

Firstly, we show that $D$ is an upper bound of the diameter of $G'$ where the vertex set of $G'$ is the vertex set of $G$, and the edge set of $G'$ is $\{e_1,e_2,\cdots,e_i\}$ for some arbitrary $i\in[m].$

\begin{lemma}\label{lem:mst_depth_dimater}
Given a graph $G=(V,E)$ for $E=\{e_1,e_2,\cdots,e_m\}$ together with a weight function $w$ which satisfies $w(e_1)< w(e_2)< \cdots< w(e_m),$ then the diameter of $G'=(V,E')$ is at most $D,$ where $D$ is the diameter of the minimum spanning forest of $G,$ and $E'$ only contains the first $i$ edges of $E$, i.e. $e_1,e_2,\cdots,e_i$ for some arbitrary $i\in[m].$
\end{lemma}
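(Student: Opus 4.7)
The plan is to prove this via the Kruskal interpretation of the MSF. Let $T$ be the (unique, since weights are distinct) minimum spanning forest of $G$, and for each $i\in[m]$ let $T_i = T \cap \{e_1,\ldots,e_i\}$, i.e.\ the subset of the first $i$ edges that belong to $T$. The classical correctness of Kruskal's algorithm gives us the key invariant: the connected components of $G'_i := (V,\{e_1,\ldots,e_i\})$ coincide exactly with the connected components of $(V,T_i)$. I would establish this by induction on $i$: adding $e_i$ either merges two components (in which case Kruskal adds it to $T$, so $T_i$ merges the same pair) or creates a cycle (in which case $e_i$'s endpoints were already connected in $T_{i-1}$, so the components don't change in either graph).

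Now fix any $u,v\in V$ with $\dist_{G'}(u,v)<\infty$, where $G' = G'_i$. By the invariant, $u$ and $v$ lie in the same component of $(V,T_i)$ as well. Since $T_i \subseteq T$ and $T$ is a forest, $T_i$ is also a forest, so there is a unique path $P$ from $u$ to $v$ in $T_i$. Because $T_i \subseteq T$ and $T$ itself is a forest, $P$ is also the unique path from $u$ to $v$ in $T$. Therefore
\[
\dist_{T_i}(u,v) \;=\; \dist_T(u,v) \;\leq\; \diam(T) \;=\; D.
\]

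Finally, since $T_i \subseteq E'$, we have $\dist_{G'}(u,v) \leq \dist_{T_i}(u,v) \leq D$. Taking the maximum over all $u,v$ in the same component of $G'$ yields $\diam(G') \leq D$, completing the proof. There is no real obstacle here: the only thing to be careful about is the Kruskal invariant (step 1), which requires a clean inductive statement that the components of the prefix graph $G'_i$ equal those of the partial MSF $T_i$; everything else follows because passing from $T_i$ to $T$ does not shorten any existing tree path (since $T$ remains a forest) and passing from $T_i$ to $G'$ only adds edges.
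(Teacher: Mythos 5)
Your proof is correct and takes the same route the paper does: the paper's proof consists of a single sentence ("The proof follows by Kruskal's algorithm directly"), and your argument is a careful unpacking of exactly that appeal to the Kruskal invariant (components of $G'_i$ equal those of $T_i$, the $T_i$-path is the $T$-path, hence of length at most $D$).
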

\begin{proof}
The proof follows by Kruskal's algorithm directly.
\end{proof}

Our algorithms is based on the following simple but useful Lemma.
\begin{lemma}\label{lem:mst_divide_and_conquer}
Given a graph $G=(V,E)$ for $E=\{e_1,e_2,\cdots,e_m\}$ together with a weight function $w$ which satisfies $w(e_1)\leq w(e_2)\leq \cdots\leq w(e_m),$ $\forall 1\leq i<j\leq m,$ an edge $e$ from $\{e_i,e_{i+1},\cdots,e_{j}\}$ is in the minimum spanning forest of $G$ if and only if $e'$ from $\{e'_i,e'_{i+1},\cdots,e'_j\}$ is in the minimum spanning forest of $G',$ where the vertices of $G'$ is obtained by contracting all the edges $e_1,e_2,\cdots,e_{i-1}$ of $G,$ and $e',e'_i,e'_{i+1},\cdots,e'_j$ are the edges (or vertices) in $G'$ which corresponds to the edges $e,e_i,e_{i+1},\cdots,e_j$ before contraction.
\end{lemma}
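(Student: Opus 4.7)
The plan is to prove this via Kruskal's algorithm, exploiting the fact that Kruskal's decision for each edge depends only on the connectivity pattern induced by the lighter edges, not on the identities of the selected edges among them. As in the surrounding section, I will argue only for the case of distinct weights $w(e_1)<\cdots<w(e_m)$ (so the MSF is unique); the general case follows by a standard tie-breaking perturbation.

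First I would set up notation. Let $\sim_{k}$ denote the equivalence relation on $V$ whose classes are the connected components of the subgraph $(V,\{e_1,\ldots,e_k\})$, with $\sim_0$ the identity relation. Running Kruskal on $G$, the classical invariant is that after processing $e_1,\ldots,e_k$, the components of the partial forest coincide with $\sim_k$; equivalently, $e_k$ is added to the MSF iff its two endpoints lie in distinct classes of $\sim_{k-1}$. Thus, for an edge $e=e_\ell$ with $i\le \ell\le j$, membership of $e$ in the MSF of $G$ is determined entirely by $\sim_{\ell-1}$, which in turn is built from $\sim_{i-1}$ by successively merging classes along $e_i,\ldots,e_{\ell-1}$.

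Next I would analyze $G'$. By definition of edge contraction, $V'=V/\!\sim_{i-1}$, and each edge $e_\ell$ with $\ell\ge i$ becomes an edge (or self-loop) $e'_\ell$ in $G'$ whose endpoints are the $\sim_{i-1}$-classes of the endpoints of $e_\ell$; the weight order is preserved, and self-loops $e'_\ell$ correspond exactly to those $e_\ell$ whose endpoints are already $\sim_{i-1}$-equivalent, so such $e_\ell$ cannot lie in any spanning forest of $G$ and such $e'_\ell$ cannot lie in any spanning forest of $G'$. Let $\sim'_k$ for $k\ge i$ be the analogous relation on $V'$ induced by $\{e'_i,\ldots,e'_k\}$. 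A direct induction on $k$ (using that quotient commutes with taking connected components) gives $\sim'_k \;=\; \sim_k /\!\sim_{i-1}$, i.e.\ two vertices of $V'$ are $\sim'_k$-equivalent iff the corresponding $\sim_{i-1}$-classes merge in $\sim_k$.

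Finally I would assemble the equivalence. Running Kruskal on $G'$, the edge $e'_\ell$ (for $i\le \ell\le j$) is included in the MSF of $G'$ iff its endpoints are in distinct $\sim'_{\ell-1}$-classes. By the quotient identity, this is iff the endpoints of $e_\ell$ in $V$ lie in distinct $\sim_{\ell-1}$-classes, which is exactly Kruskal's inclusion criterion for $e_\ell$ in $G$. Hence $e_\ell\in\mathrm{MSF}(G)$ iff $e'_\ell\in\mathrm{MSF}(G')$, which is the claimed statement restricted to indices in $\{i,\ldots,j\}$. The main obstacle is really just bookkeeping: making the correspondence between $V$-side and $V'$-side Kruskal states rigorous through the quotient identity $\sim'_k=\sim_k/\!\sim_{i-1}$, and handling self-loops and (if desired) equal-weight ties via perturbation so that "the MSF" is well-defined throughout the argument.
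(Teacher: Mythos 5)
Your proof is correct and takes the same route as the paper, which simply states ``The proof follows by Kruskal's algorithm directly'' without giving details. Your argument via the quotient identity $\sim'_k = \sim_k/\!\sim_{i-1}$ is a careful and accurate elaboration of exactly that observation.
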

\begin{proof}
The proof follows by Kruskal's algorithm directly.
\end{proof}

A natural way to apply Lemma~\ref{lem:mst_divide_and_conquer} to parallel minimum spanning forest algorithm is that we can divide the edges into several groups, and recursively solve the minimum spanning forest for each group of edges.
More precisely, suppose we have total space $\Theta(km),$ we can divide $E$ into $k$ groups $E_1,E_2,\cdots,E_k,$ where $E_i=\{e_{(i-1)\cdot m/k+1},e_{(i-1)\cdot m/k+2},\cdots,e_{i\cdot m/k}\}.$
We can compute graph $G_1,G_2,\cdots,G_k$ where the vertices of $G_i$ is obtained by contracting all the edges from $e_1$ to $e_{(i-1)\cdot m/k},$ the edges of $G_i$ are corresponding to the edges in $E_i.$
Then by Lemma~\ref{lem:mst_divide_and_conquer}, we can obtain the whole minimum spanning forest by solving these $k$ size $O(m/k)$ minimum spanning forest problems.
For each sub-problem, we can assign it $\Theta(m)$ working space, thus each sub-problem still has $\Theta(k)$ factor more total space.
Therefore, we can recursively apply the above argument.

\begin{theorem}\label{thm:fomal_statement_of_exact_MST}
For any $\gamma\in[0,2]$ and any constant $\delta\in(0,1),$ there is a randomized $(\gamma,\delta)-\MPC$ algorithm which can output a minimum spanning forest for any weighted graph $G=(V,E)$ with weights $w:E\rightarrow \mathbb{Z}$ in $O(\min(\log D\cdot \log (1/\gamma'),\log n)\cdot1/\gamma')$ parallel time, where $n=|V|,$ $\forall e\in E,|w(e)|\leq \poly(n),$ $D$ is the diameter of a minimum spanning forest of $G,$ and $\gamma'=\gamma/2+\Theta(1/\log n).$ The success probability is at least $0.98.$ In addition, if the algorithm fails, then it will return {\rm FAIL}.
\end{theorem}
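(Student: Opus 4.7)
The plan is to combine Lemma~\ref{lem:mst_divide_and_conquer} with the spanning forest algorithm from Theorem~\ref{thm:formal_statement_of_spanning_tree} in a multi-phase ``divide the sorted edges'' scheme, as sketched in the technique overview. First I would sort the $m$ edges of $E$ by weight in $O(1)$ rounds via Theorem~\ref{thm:sorting}, obtaining a global sequence $e_1,\ldots,e_m$, and maintain a running rooted forest $F$ of MSF edges discovered so far, represented by a parent-pointer map on $V$. One phase then proceeds as follows: let $G'=G/F$ be the current contracted graph (computable by one call to $\textsc{TreeContraction}$) and let $E'\subseteq\{e_1,\ldots,e_m\}$ be the edges not yet processed, kept in sorted order. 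Choose the number of groups $k=k_t$ so that $k(|V(G')|+|E'|/k)=\Theta(N)$, partition $E'$ into $k$ contiguous blocks $B_1,\ldots,B_k$ of equal size, and, using the multiple-tasks primitive of Section~\ref{sec:multi_tasks}, spawn $k$ independent sub-problems. Sub-problem $i$ first contracts the edges of $B_1\cup\cdots\cup B_{i-1}$ in $G'$ — these prefix graphs have diameter at most $D$ by Lemma~\ref{lem:mst_depth_dimater}, so Theorem~\ref{thm:formal_statement_of_connectivity} does the contraction quickly — and then calls Theorem~\ref{thm:formal_statement_of_spanning_tree} on the residual graph whose edge set is $B_i$. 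By Lemma~\ref{lem:mst_divide_and_conquer}, the spanning forest edges returned for block $B_i$ are exactly the MSF edges of $G$ lying in $B_i$; union the outputs over $i$ into $F$ and advance to the next phase.

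For the analysis, every phase deletes all processed non-MSF edges and contracts all newly found MSF edges, so $|V'|$ and $|E'|$ both shrink. A double-exponential argument in the style of Remark~\ref{rem:double_exponential_progress} shows that $O(1/\gamma')$ phases drive $|E'|$ to zero; the bottleneck exponent $\gamma'=\gamma/2+\Theta(1/\log n)$ arises because every sub-problem copy must simultaneously store a contraction state of size $\Theta(|V'|)$ \emph{and} its $\Theta(|E'|/k)$ block of edges, which splits the $\gamma\log n$ available ``slack bits'' of memory roughly in half. The cost per phase is dominated by the $k$ parallel spanning-forest calls, each on a graph of diameter at most $D$ (by Lemma~\ref{lem:mst_depth_dimater}) with space-to-size slack $1/\gamma'$, giving $O(\min(\log D\cdot\log(1/\gamma'),\log n))$ parallel time by Theorem~\ref{thm:formal_statement_of_spanning_tree}. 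Multiplying by $O(1/\gamma')$ phases yields the stated round complexity. Correctness follows by induction on the phase count using Lemma~\ref{lem:mst_divide_and_conquer}; boosting each spanning-forest subcall's confidence to $1-1/\operatorname{poly}(1/\gamma',\log n)$ and union-bounding over all invocations keeps the total failure probability under $0.02$.

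The main obstacle I anticipate is the space accounting that justifies the parameter $\gamma'=\gamma/2+\Theta(1/\log n)$: one must show that after every phase the invariant ``total space is at least a factor $k=N^{\Theta(\gamma')}$ larger than the current per-sub-problem input size'' is preserved, so that only $O(1/\gamma')$ phases are needed and each phase itself can afford to invoke Theorem~\ref{thm:formal_statement_of_spanning_tree} at that same slack. The delicate edge case is when $|V'|$ dominates $|E'|$: one has to balance the $k$ contraction copies against the $k$ edge-block copies, and it is precisely this simultaneous pair of memory demands that forces the halving of $\gamma$ into $\gamma'$ rather than allowing the full exponent $\gamma$ as in the connectivity algorithm of Theorem~\ref{thm:formal_statement_of_connectivity}.
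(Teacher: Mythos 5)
Your high-level decomposition—sort the edges, split into $k$ groups, contract prefixes, invoke Lemma~\ref{lem:mst_divide_and_conquer}, and set $k$ so that the connectivity slack works out to $\gamma'\approx\gamma/2$—is the same strategy the paper uses. But there is a genuine gap at the key step, and the flow of the argument is different from the paper's recursion in a way that would break correctness.

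The central bug is the sentence ``then calls Theorem~\ref{thm:formal_statement_of_spanning_tree} on the residual graph whose edge set is $B_i$. By Lemma~\ref{lem:mst_divide_and_conquer}, the spanning forest edges returned for block $B_i$ are exactly the MSF edges of $G$ lying in $B_i$.'' This is false. Lemma~\ref{lem:mst_divide_and_conquer} says that an edge of $B_i$ belongs to the MSF of $G$ if and only if its contracted image belongs to the \emph{minimum} spanning forest of $H_i$—it does not say ``belongs to some spanning forest of $H_i$.'' Theorem~\ref{thm:formal_statement_of_spanning_tree} has no notion of edge weights; it returns an arbitrary spanning forest, chosen via vertex-label tie-breaking in the truncated-broadcast routine, and there is no guarantee that this forest is the minimum-weight one for the (distinct) weights inside $B_i$. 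If you union those non-MSF edges into your running forest $F$ and contract by them, the invariant is destroyed and later phases compute nonsense. This substitution of ``spanning forest'' for ``minimum spanning forest'' is precisely what does work in the approximate-MSF theorem (Theorem~\ref{thm:formal_statement_of_approximate_MST}), because there every edge inside a group has the same rounded weight, so any spanning forest is minimum; for exact MSF that crutch is unavailable. The paper's proof instead treats the call on each $G_i$ as a \emph{recursive MSF} call: each level splits into $k$ more groups, and after $O(1/\gamma')$ levels each leaf group has essentially one edge, at which point MSF membership is just the self-loop check in the contracted graph (Kruskal's criterion). Your iterative ``one pass over all of $E'$, then repeat'' phrasing hides the fact that a single pass cannot decide MSF membership for a block that does not fit on one machine; the recursion has to bottom out, and that base case is exactly what is missing.

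Two smaller issues. First, your balancing condition $k(|V(G')|+|E'|/k)=\Theta(N)$ cannot be right: with $|V(G')|=\Theta(N)$ it forces $k=O(1)$. The paper simply takes $k=\Theta(m^{\gamma/2})$ so that each of the $k$ copies receives $\Theta(m^{1+\gamma/2})$ space for a graph of size $\Theta(m)$, giving the connectivity subroutine slack exponent $\gamma/2$; this is where $\gamma'=\gamma/2+\Theta(1/\log n)$ comes from (the $\Theta(1/\log n)$ makes the construction degenerate gracefully at $\gamma=0$, giving $O(\log n)$ levels). Second, the probability argument in the paper is a Las Vegas one, not a Monte Carlo one: each connectivity/SF subcall has worst-case round cost bounded by $\min(\log D\log(1/\gamma'),\log n)$ even when it fails, all subcalls at a recursion level can be fused into a single connectivity invocation on a disjoint-union graph, the expected number of retries per level is $O(1)$, and Markov's inequality gives the $0.98$ bound. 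Your plan to amplify each subcall to confidence $1-1/\operatorname{poly}$ and union-bound would work, but amplification by independent repetition is not free—it multiplies rounds or space—and you would need to account for that; the paper's framing avoids the issue entirely.
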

\begin{proof}
Let $n=|V|,m=|E|.$
Let $E=\{e_1,\cdots,e_m\}$ with $w(e_1)\leq w(e_2)\leq \cdots\leq w(e_m).$
The total space in the system is $\Theta(m^{1+\gamma}).$
Let $k=\Theta(m^{\gamma/2}).$
By our previous discussion, we can divide $E$ into $k$ groups $E_1,E_2,\cdots,E_k,$ where $E_i=\{e_{(i-1)\cdot m/k+1},e_{(i-1)\cdot m/k+2},\cdots,e_{i\cdot m/k}\}.$
By Lemma~\ref{lem:mst_depth_dimater} and Theorem~\ref{thm:formal_statement_of_connectivity}, we can use $O(\min(\log D\cdot\log (1/\gamma'),\log n))$ parallel time and $\Theta(km^{1+\gamma/2})$ total space to compute graph $G_1,G_2,\cdots,G_k$ where the vertices of $G_i$ is obtained by contracting all the edges from $e_1$ to $e_{(i-1)\cdot m/k},$ the edges of $G_i$ are corresponding to the edges in $E_i$ after contraction.

By Lemma~\ref{lem:mst_divide_and_conquer}, it suffices to recursively solve the minimum spanning forest problem for each group $G_i.$
Since each time, we split the edges into $k$ groups, the recursion will have at most $O(1/\gamma')$ levels.
At the end of the recursion, we are able to determine for every edge $e$ whether $e$ is in the minimum spanning forest.

Now let us consider the success probability.
Although Theorem~\ref{thm:formal_statement_of_connectivity} is a randomized algorithm, the parallel time is always bounded by $\min(\log D\cdot\log (1/\gamma'),\log n).$
If we repeat the algorithm until it succeeds, the expectation of number of trials is a constant.
Furthermore, for each level of the recursion, we can regard the graphs in all the tasks composed one large graph.
Thus, in real implementation, in each level of the recursion, we will only invoke one connectivity procedure.
Thus in expectation, the total parallel time is $O(\min(\log D\cdot\log (1/\gamma'),\log n)\cdot 1/\gamma').$
By applying Markov's inequality, we complete the proof.
\end{proof}

In the following theorem, we show that Lemma~\ref{lem:mst_divide_and_conquer} can also be applied in approximate minimum spanning forest problem.

\begin{theorem}\label{thm:formal_statement_of_approximate_MST}
For any $\gamma\in[\beta,2]$ and any constant $\delta\in(0,1),$ there is a randomized $(\gamma,\delta)-\MPC$ algorithm which can output a $(1+\varepsilon)$ approximate minimum spanning forest for any weighted graph $G=(V,E)$ with weights $w:E\rightarrow \mathbb{Z}_{\geq 0}$ in $O(\min(\log D\cdot \log (1/\gamma'),\log n))$ parallel time, where $n=|V|,$ $N=|V|+|E|,$ $\beta=\Theta(\log( \varepsilon^{-1} \log n )/\log n),$ $\forall e\in E,|w(e)|\leq \poly(n),$ $D$ is the diameter of a minimum spanning forest of $G,$ and $\gamma'=(1+\gamma-\beta)\log_n \frac{2N}{n^{1/(1+\gamma-\beta)}}.$ The success probability is at least $0.98.$ In addition, if the algorithm fails, then it will return {\rm FAIL}.
\end{theorem}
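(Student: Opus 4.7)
The plan is to reduce the approximate MSF problem to running multiple instances of our exact spanning forest algorithm (Theorem~\ref{thm:formal_statement_of_spanning_tree}) in parallel on contracted copies of the graph, in the same spirit as Theorem~\ref{thm:fomal_statement_of_exact_MST}, but exploiting the slack allowed by the $(1+\varepsilon)$ approximation to drastically cut the number of ``phases'' from $O(1/\gamma')$ down to $1$.

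First, I would bucket the edges by rounded weight. For each $e\in E$, round $w(e)$ up to the nearest power of $(1+\varepsilon)$; because $w(e)\le \poly(n)$, there are only $K=O(\varepsilon^{-1}\log n)$ distinct rounded values. Let $E_1,E_2,\ldots,E_K$ be the resulting groups, in increasing order of rounded weight. The key observation (an analogue of Lemma~\ref{lem:mst_divide_and_conquer} applied to rounded weights, together with a standard Kruskal-style exchange argument) is that if, for each $i$, we take any spanning forest of the subgraph $G_i=(V_i,E'_i)$ obtained by contracting all edges of $E_1\cup\cdots\cup E_{i-1}$ and keeping the images of the edges in $E_i$, then the union of all these forests (lifted back to the original graph) is a $(1+\varepsilon)$-approximation to an MSF of $G$. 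Indeed, its weight is bounded group-by-group by the rounded cost of the optimum, which is a $(1+\varepsilon)$-factor of the true optimum.

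Next, I would execute these $K$ independent spanning forest computations simultaneously using the multiple-tasks primitive from Section~\ref{sec:multi_tasks}. Because $\gamma\ge\beta=\Theta(\log(\varepsilon^{-1}\log n)/\log n)$, the total memory $N^{1+\gamma}$ is at least $\Omega(K\cdot N)$, so we can afford to store $K$ copies of (a contracted version of) the graph, one per task. Producing the contracted copies themselves is a single invocation of our connectivity/contraction machinery (Theorem~\ref{thm:formal_statement_of_connectivity} applied to the prefix $E_1\cup\cdots\cup E_{i-1}$ for each $i$), which can itself be batched into one task by sharing neighbor-increment rounds across the prefixes; by Lemma~\ref{lem:mst_depth_dimater} the diameter of every such prefix subgraph is at most $D$, so each contraction runs in the claimed $O(\min(\log D\cdot \log(1/\gamma'),\log n))$ time. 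After contraction, each sub-task has input size at most $|E_i|\le m$ and is allocated $\Omega(N)$ working memory, so by Theorem~\ref{thm:formal_statement_of_spanning_tree} it produces a spanning forest of $G_i$ in the same $O(\min(\log D\cdot \log(1/\gamma'),\log n))$ parallel time. Finally, each forest edge is mapped back to its pre-image in $E$ and the results are unioned; this is a constant number of sort/indexing rounds.

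The failure probability is handled by a union bound over the $K$ spanning-forest calls and boosting the per-call success probability to $1-1/(100K)$ at the cost of a constant factor in the running time; if any task reports FAIL the whole algorithm reports FAIL, so the overall success probability can be pushed to $0.98$. The main obstacle I expect is bookkeeping, not mathematics: I must carefully argue that the $K$-fold copy of the vertex set together with the $K$ contracted edge sets fits inside $N^{1+\gamma}$ words (this is exactly where the lower bound $\gamma\ge\beta$ enters, since we need $K\cdot n\le N^{1+\gamma}$, i.e.\ $\gamma\gtrsim \log_n(\varepsilon^{-1}\log n)$), and that the effective per-task space budget still gives the claimed parameter $\gamma'=(1+\gamma)\log_n\!\bigl(N^{1+\gamma}/(\varepsilon^{-1}n\log n)\bigr)/\ldots$ when plugged into Theorem~\ref{thm:formal_statement_of_spanning_tree}. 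All other steps are routine applications of results already proved.
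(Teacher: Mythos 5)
Your proposal mirrors the paper's construction almost exactly: round the weights up to powers of $(1+\varepsilon)$ so there are only $K = O(\varepsilon^{-1}\log n)$ distinct levels, build $K$ copies of the graph where the $i$-th copy contracts all lower-weight groups, run a spanning forest on each via the multiple-tasks machinery, and union. The condition $\gamma \ge \beta$ serves precisely the role you identified: it pays for the $K$ copies. Your correctness argument via a Kruskal-style exchange is equivalent to the paper's appeal to Lemma~\ref{lem:mst_divide_and_conquer} applied to the rounded weight function $w'$ (the paper observes that within each group all $w'$-weights coincide, so any spanning forest is $w'$-minimal, hence the union is a $w'$-MSF and thus a $(1+\varepsilon)$-approximate $w$-MSF).

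There is one genuine slip in your success-probability bookkeeping. You propose to boost each of the $K$ spanning-forest calls to success probability $1 - 1/(100K)$ ``at the cost of a constant factor in the running time.'' Boosting a constant failure probability down to $1/(100K)$ via independent repetition costs $\Theta(\log K) = \Theta(\log\log n + \log \varepsilon^{-1})$ repetitions, not a constant factor, and that would break the claimed round bound. The paper avoids this by treating the $K$ contracted graphs as a single disjoint union and invoking the connectivity/spanning-forest procedure \emph{once} on this combined graph, so the failure probability stays a single constant independent of $K$ (this is the same device used in the exact-MSF proof, where the graphs in all the tasks are ``composed [into] one large graph''). With that change in place, your argument matches the paper's.
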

\begin{proof}
For each edge $e\in E$, we can round $w(e)$ to $w'(e)$ such that $w'(e)=0$ when $w(e)=0,$ and $w'(e)=(1+\varepsilon)^i$ when $w(e)\not=0,$ and $i$ is the smallest integer such that $w(e)\leq (1+\varepsilon)^i.$

Since $|w(e)|\leq \poly(n)$ for all $e\in E,$ there are only $k=O(\log(n)/\varepsilon)$ different values of $w'(e).$
We can divide $E$ into $k$ groups, where the $i^{\text{th}}$ group $E_i$ contains all edges with the $i^{\text{th}}$ largest weight in $w'$.
By Lemma~\ref{lem:mst_depth_dimater} and Theorem~\ref{thm:formal_statement_of_connectivity}, we can use $O(\min(\log D\cdot\log (1/\gamma'),\log n))$ parallel time and $\Theta(kN^{1+\gamma-\beta})=\Theta(N^{1+\gamma})$ total space to compute graph $G_1,G_2,\cdots,G_k$ where the vertices of $G_i$ is obtained by contracting all the edges from $E_1$ to $E_{i-1},$ the edges of $G_i$ are corresponding to the edges in $E_i$ after contraction.

Then, for each $G_i$, since all the edges have the same $w'$ weight, any spanning forest of $G_i$ is a minimum spanning forest of $G_i.$
By Theorem~\ref{thm:formal_statement_of_spanning_tree}, we can use $O(\min(\log D\cdot\log (1/\gamma'),\log n))$ parallel time and $\Theta(kN^{1+\gamma-\beta})=\Theta(N^{1+\gamma})$ total space to compute the spanning forest for each graph $G_1,G_2,\cdot,G_k.$
By Lemma~\ref{lem:mst_divide_and_conquer}, the union of all the minimum spanning forest with respect to $w'$ must be the minimum spanning forest of $G$ with respect to $w'.$
Since all the weights $w$ are nonnegative integers, $w'$ is a $(1+\varepsilon)$ approximation to $w$.
Therefore, our output minimum spanning forest with respect to $w'$ is a $(1+\varepsilon)$ approximation to the minimum spanning forest with respect to $w$.

For the success probability, we can apply the similar argument made in the proof of Theorem~\ref{thm:fomal_statement_of_exact_MST} to prove that the success probability is at least $0.98.$
\end{proof}

In the following, we show that if we only need to find the largest edge in the minimum spanning tree, then we are able to get a better parallel time. It is an another application of our

\begin{theorem}\label{thm:formal_statement_of_bottleneck_spanning_tree}
For any $\gamma\in[0,2]$ and any constant $\delta\in(0,1),$ there is a randomized $(\gamma,\delta)-\MPC$ algorithm which can output a bottleneck spanning forest for any weighted graph $G=(V,E)$ with weights $w:E\rightarrow \mathbb{Z}$ in $O(\min(\log D\cdot \log (1 / \gamma'),\log n)\cdot \log (1/\gamma'))$ parallel time, where $n=|V|,$ $\forall e\in E,|w(e)|\leq \poly(n),$ $D$ is the diameter of a minimum spanning forest of $G,$ and $\gamma'=\gamma/2+\Theta(1/\log n).$ The success probability is at least $0.98.$ In addition, if the algorithm fails, then it will return {\rm FAIL}.
\end{theorem}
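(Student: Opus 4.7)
The plan is to sort the edges by weight, locate the bottleneck index $i^\ast$ by an iterative $k$-ary search that invokes Theorem~\ref{thm:formal_statement_of_connectivity} in parallel across $k$ shifted prefixes, and then run Theorem~\ref{thm:formal_statement_of_spanning_tree} once on $G_{i^\ast}=(V,\{e_1,\ldots,e_{i^\ast}\})$. Here $i^\ast$ is the smallest index such that $G_{i^\ast}$ has the same connected components as $G$; by the Kruskal argument underlying Lemma~\ref{lem:mst_divide_and_conquer}, any spanning forest of $G_{i^\ast}$ is a bottleneck spanning forest, and by Lemma~\ref{lem:mst_depth_dimater} every intermediate $G_i$ has diameter at most $D_{\mathrm{MSF}}$, so the subroutine costs will depend only on $D_{\mathrm{MSF}}$.

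First I would sort $E$ in $O(1)$ rounds by Theorem~\ref{thm:sorting} and compute the number $c$ of connected components of $G$ once, using Theorem~\ref{thm:formal_statement_of_connectivity}. Maintain an interval $(L_j,R_j]$ known to contain $i^\ast$, initialized as $(0,m]$. In phase $j$, let $n_j=R_j-L_j$ and choose $k_j$ so that $k_j\cdot n_j^{1+\gamma/2}=\Theta(N^{1+\gamma})$; partition $(L_j,R_j]$ into $k_j$ contiguous groups at boundaries $L_j=b_0<b_1<\cdots<b_{k_j}=R_j$. Form $k_j$ independent copies of the graph, the $i$-th copy being $(V,\{e_1,\ldots,e_{b_i}\})$ lifted over the contraction accumulated through previous phases. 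Using the Multiple Tasks scheduler of Section~\ref{sec:multi_tasks}, run Theorem~\ref{thm:formal_statement_of_connectivity} on each copy simultaneously; each subproblem gets its $\Theta(n_j^{1+\gamma/2})$ share of total space, so its round cost is $O(\min(\log D_{\mathrm{MSF}}\log(1/\gamma'),\log n))$ with $\gamma'$ as stated. Count components in each copy, let $i^\ast_j$ be the smallest $i$ whose count equals $c$, and set $L_{j+1}=b_{i^\ast_j-1}$, $R_{j+1}=b_{i^\ast_j}$.

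The recurrence $n_{j+1}=n_j/k_j=n_j^{2+\gamma/2}/N^{1+\gamma}$ is precisely the double-exponential speedup of Remark~\ref{rem:double_exponential_progress}, so $n_j\le 1$ after $J=O(\log(1/\gamma'))$ phases, at which point $i^\ast$ is read off directly on a single machine. A final invocation of Theorem~\ref{thm:formal_statement_of_spanning_tree} on $G_{i^\ast}$ yields a spanning forest whose bottleneck weight is $w(e_{i^\ast})$, i.e.\ a bottleneck spanning forest. Summing $J$ phases at $O(\min(\log D_{\mathrm{MSF}}\log(1/\gamma'),\log n))$ rounds each plus the final spanning-forest call gives the claimed bound, and a union bound over the $O(\log(1/\gamma'))$ randomized subroutines together with constant-factor repetition keeps the total success probability above $0.98$.

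The main obstacle will be bookkeeping the cumulative contractions across phases so that phase $j$ really operates on a graph with only $n_j=R_j-L_j$ candidate edges rather than the original $m$, and simultaneously honoring the $k_j\cdot n_j^{1+\gamma/2}=\Theta(N^{1+\gamma})$ space budget. I plan to handle this by storing, after each phase, a single compact representation of the contracted graph $G_{L_{j+1}}$ together with the $n_{j+1}$ candidate edges in $(L_{j+1},R_{j+1}]$, and regenerating the $k_{j+1}$ copies from that representation at the start of each phase via the Copies of Sets primitive from Section~\ref{sec:set_operations}. A secondary subtlety is that Theorem~\ref{thm:formal_statement_of_connectivity} may return FAIL with constant probability inside any phase; rerunning each failing phase $O(1)$ times absorbs this without changing the asymptotic round count.
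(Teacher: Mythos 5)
Your proposal takes essentially the same approach as the paper: sort edges by weight, run $\Theta(m^{\gamma/2})$-ary search over prefixes using parallel connectivity calls (Theorem~\ref{thm:formal_statement_of_connectivity}), invoke Remark~\ref{rem:double_exponential_progress} to bound the number of phases by $O(\log(1/\gamma'))$, and finish with one call to Theorem~\ref{thm:formal_statement_of_spanning_tree} on the bottleneck prefix. The only cosmetic differences are that you identify the bottleneck group by comparing component counts against a precomputed $c$ rather than by the paper's ``$G_{i+1}=G_{i+2}$'' criterion, and you make the per-phase contraction bookkeeping more explicit; both are faithful to the paper's argument.
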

\begin{proof}
Let $n=|V|,m=|E|.$
Let $E=\{e_1,\cdots,e_m\}$ with $w(e_1)\leq w(e_2)\leq \cdots\leq w(e_m).$
The total space in the system is $\Theta(m^{1+\gamma}).$
Let $k=\Theta(m^{\gamma/2}).$
By our previous discussion, we can divide $E$ into $k$ groups $E_1,E_2,\cdots,E_k,$ where $E_i=\{e_{(i-1)\cdot m/k+1},e_{(i-1)\cdot m/k+2},\cdots,e_{i\cdot m/k}\}.$
By Lemma~\ref{lem:mst_depth_dimater} and Theorem~\ref{thm:formal_statement_of_connectivity}, we can use $O(\min(\log D\cdot\log (1/\gamma'),\log n))$ parallel time and $\Theta(km^{1+\gamma/2})$ total space to compute graph $G_1,G_2,\cdots,G_k$ where the vertices of $G_i$ is obtained by contracting all the edges from $e_1$ to $e_{(i-1)\cdot m/k},$ the edges of $G_i$ are corresponding to the edges in $E_i$ after contraction.

By Lemma~\ref{lem:mst_divide_and_conquer}, the edge with largest weight must be in the group $E_i$ for some $i$ with $G_{i+1}=G_{i+2}.$
Thus, we reduce the problem size to $m/k.$
By Remark~\ref{rem:double_exponential_progress}, we can finish the recursion in $O(\log(1/\gamma'))$ phases.

Suppose the bottleneck is $e_i,$ then by Theorem~\ref{thm:formal_statement_of_spanning_tree}, we can find a spanning forest by only using edges from $\{e_1,\cdots,e_i\}$ in $O(\min(\log D\cdot\log (1/\gamma'),\log n))$ parallel time and in $\Theta(m^{1+\gamma/2})$ total space.
Thus, the resulting spanning forest is a bottleneck spanning forest.

For the success probability, we can apply the similar argument made in the proof of Theorem~\ref{thm:fomal_statement_of_exact_MST} to prove that the success probability is at least $0.98.$
\end{proof}
\section{Directed Reachability vs. Boolean Matrix Multiplication}

In this section, we discuss the directed graph reachability problem which is a directed graph problem highly related to the undirected graph connectivity.
In the all-pair directed graph reachability problem, we are given a directed graph $G=(V,E),$ the goal is to answer for every pair $(u,v)\in V\times V$ whether there is a directed path from $u$ to $v$.
There is a simple standard way to reduce Boolean Matrix Multiplication to all-pair directed graph reachability problem. 
In the Boolean Matrix Multiplication problem, we are given two boolean matrices $A,B\in\{0,1\}^{n\times n},$ the goal is to compute $C=A\cdot B,$ where $\forall i,j\in[n],$ $C_{i,j}=\bigvee_{k\in[n]} A_{i,k}\wedge B_{k,j}.$
The reduction is as the following. 
We create $3n$ vertices $u_1,u_2,\cdots,u_n,v_1,v_2,\cdots,v_n,w_1,w_2,\cdots,w_n.$
For every $i,j\in[n],$ if $A_{i,j}=1,$ then we add an edge from $u_i$ to $v_j,$ and if $B_{i,j}=1,$ then we add an edge from $v_i$ to $w_j.$
Thus, $C_{i,j}=1$ is equivalent to there is a path from $u_i$ to $w_j$.
Thus, if we can solve all-pair directed graph reachability problem in $O(T)$ sequential time, then we can solve Boolean Matrix Multiplication in $O(T)$ time.
For the current status of sequential running time of Boolean Matrix Multiplication problem, we refer readers to~\cite{l14} and the references therein.

Now, consider the multi-query directed graph reachability problem.
In this problem, we are given a directed graph $G=(V,E)$ together with $|V|+|E|$ queries where each query queries the reachability from vertex $u$ to vertex $v$. 
The goal is to answer all these queries.
A similar problem in the undirected graph is called multi-query undirected graph connectivity problem.
In this problem, we are given an undirected graph $G=(V,E)$ together with $|V|+|E|$ queries where each query queries the connectivity between vertex $u$ and vertex $v$.

According to Theorem~\ref{thm:formal_statement_of_connectivity} and Lemma~\ref{lem:multi_query}, there is a polynomial local running time fully scalable $\sim\log D$ parallel time $(0,\delta)-\MPC$ algorithm for multi-query undirected graph connectivity problem.
Here polynomial local running time means that there is a constant $c>0$ (independent from $\delta$) such that every machine in one round can only have $O((n^{\delta})^c)$ local computation.

For multi-query directed graph reachability problem, we show that if there is a polynomial local running time fully scalable $(\gamma,\delta)-\MPC$ algorithm which can solve multi-query reachability problem in $O(n^{\alpha})$ parallel time, then we can solve all-pair directed graph reachability problem in $O(n^2\cdot n^{2\gamma+\alpha+\varepsilon})$ sequential running time for any arbitrarily small constant $\varepsilon>0.$ 
Especially, if the algorithm is in $(0,\delta)-\MPC$ model, and the parallel time is $n^{o(1)},$ then we will have an $O(n^{2+\varepsilon+o(1)})$ sequential running time algorithm for Boolean Matrix Multiplication which implies a break through in this field.

Suppose we have a such $\MPC$ algorithm. 
Let the input size be $\Theta(m),$ i.e. the number of edges is $\Theta(m),$ and the number of queries is also $\Theta(m).$ 
Then the total space is $\Theta(m^{1+\gamma}).$
Let $\delta=\varepsilon/(c-2).$
Then the number of machines is $\Theta(m^{1+\gamma-\delta}).$
Now we just simulate this $(\gamma,\delta)-\MPC$ algorithm sequentially, the total running time is $O(m^{1+\gamma-\delta}\cdot m^{c\delta}\cdot n^{\alpha})=O(m\cdot n^{2\gamma+\varepsilon+\alpha}).$
To answer reachability for all pairs, we need total $O(n^2\cdot m\cdot n^{2\gamma+\varepsilon+\alpha}/m)=O(n^2\cdot n^{2\gamma+\alpha+\varepsilon})$ time.
Therefore, we can use this algorithm to solve Boolean Matrix Multiplication in $O(n^2\cdot n^{2\gamma+\alpha+\varepsilon})$ time.

\begin{theorem}\label{thm:formal_statement_of_digraph_reachability}
If there is a polynomial local running time fully scalable $(\gamma,\delta)-\MPC$ algorithm which can answer $|V|+|E|$ pairs of reachability queries simultaneously for any directed graph $G=(V,E)$ in $O(|V|^{\alpha})$ parallel time, then there is a sequential algorithm which can compute the multiplication of two $n\times n$ boolean matrices in $O(n^2\cdot n^{2\gamma+\alpha+\varepsilon})$ time, where $\varepsilon>0$ is a constant which can be arbitrarily small.
\end{theorem}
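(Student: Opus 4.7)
The plan is to execute the standard reduction from Boolean Matrix Multiplication to all-pair directed reachability, and then show that a sequential simulation of the hypothesized MPC algorithm answers all $n^2$ pairwise reachability queries within the claimed time budget. Given two matrices $A,B\in\{0,1\}^{n\times n}$, I would first construct a directed tripartite graph $G$ on $3n$ vertices $u_1,\ldots,u_n,v_1,\ldots,v_n,w_1,\ldots,w_n$ with an edge $u_i\to v_k$ whenever $A_{i,k}=1$ and $v_k\to w_j$ whenever $B_{k,j}=1$. Then $(AB)_{i,j}=1$ if and only if $w_j$ is reachable from $u_i$ in $G$, reducing the matrix product to $n^2$ reachability queries on a graph with $|V|=3n$ and $|E|=O(n^2)$.

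Next I would partition the $n^2$ reachability queries into batches of size $\Theta(|V|+|E|)=\Theta(m)$, where $m$ denotes the input size of each invocation of the hypothesized MPC algorithm, giving $O(n^2/m)$ batches in total. Each batch fits the input size constraint of the assumed multi-query algorithm, and since the queries are independent, solving all batches suffices to recover the full product. The key step is then to simulate one invocation of the $(\gamma,\delta)$-MPC algorithm on a single classical machine: by the MPC definition, the algorithm uses $p=\Theta(m^{1+\gamma-\delta})$ machines each with local memory $\Theta(m^{\delta})$, and by the polynomial-local-time hypothesis there is a constant $c$ (independent of $\delta$) such that each machine performs at most $O(m^{c\delta})$ work per round. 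The sequential work per round is therefore $O(p\cdot m^{c\delta})=O(m^{1+\gamma+(c-1)\delta})$, and over the $O(|V|^{\alpha})=O(n^{\alpha})$ rounds this becomes $O(m\cdot m^{\gamma+(c-1)\delta}\cdot n^{\alpha})$.

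The punchline is a careful choice of $\delta$. Given an arbitrarily small target slack $\varepsilon>0$, I would pick $\delta$ small enough that $m^{(c-1)\delta}\leq n^{\varepsilon}$; since $m\leq n^2$ and $c$ is a fixed constant, any $\delta$ of order $\varepsilon/c$ works. Substituting $m^{\gamma}=n^{2\gamma}$ (up to constants, using $m=\Theta(n^2)$), the per-batch sequential cost becomes $O(m\cdot n^{2\gamma+\alpha+\varepsilon})$, and multiplying by the $O(n^2/m)$ batches yields the desired total of $O(n^2\cdot n^{2\gamma+\alpha+\varepsilon})$ for Boolean Matrix Multiplication.

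The main obstacle I anticipate is purely in the parameter bookkeeping: one must verify that the constant $c$ from ``polynomial local running time'' genuinely does not depend on $\delta$ (otherwise the $m^{c\delta}$ factor cannot be absorbed into $n^{\varepsilon}$), and that the ``fully scalable'' hypothesis lets us actually instantiate the algorithm at the particular small $\delta=\Theta(\varepsilon/c)$ we chose. Both are guaranteed by the statement of the theorem, so no new ideas beyond the reduction and the round-by-round sequential simulation are needed; the rest is direct exponent arithmetic.
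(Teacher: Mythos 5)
Your proposal matches the paper's argument essentially step for step: the same tripartite $3n$-vertex reduction from Boolean Matrix Multiplication to all-pair reachability, the same round-by-round sequential simulation of the hypothesized $(\gamma,\delta)$-MPC algorithm using $p=\Theta(m^{1+\gamma-\delta})$ machines with $O(m^{c\delta})$ local work, and the same choice of $\delta=\Theta(\varepsilon/c)$ (the paper uses $\delta=\varepsilon/(c-2)$, a cosmetically different constant) to absorb the local-computation overhead into the $n^{\varepsilon}$ slack. Your explicit batching of the $n^2$ queries into $O(n^2/m)$ groups is a slightly cleaner way to state what the paper compresses into the expression $O(n^2\cdot m\cdot n^{2\gamma+\varepsilon+\alpha}/m)$, but it is the same accounting, so this is the same proof.
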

\begin{proof}
See above discussions.
\end{proof}               
\section{Discussion on a Previous Conjectured Fast Algorithm}\label{sec:rastogi}
In this section, we discuss the hard example for the algorithm described by~\cite{rmcs13}.
In~\cite{rmcs13}, they conjectured that their Hash-to-Min connectivity algorithm can finish in $O(\log D)$ rounds.
The description of their algorithm is as the following:
\begin{enumerate}
\item The input graph is $G=(V,E).$
\item For each vertex $v\in V,$ initialize a set $S_v^{(0)}=v.$
\item in round $i$:
    \begin{enumerate}
    \item Each vertex $v$ find $u\in S_v^{(i-1)}$ which has the minimum label, i.e. $u=\min_{x\in S_v^{(i-1)}}x.$
    \item $v$ sends $u$ the all the vertices in $S_v^{(i-1)}.$
    \item $v$ sends every $x\in S_v^{((i-1))}\setminus\{u\}$ the vertex $u$.
    \item Let $S_v^{(i)}$ be $\{v\}$ union the set of all the vertices received.
    \item If for all $v,$ $S_v^{(i)}$ is the same as $S_v^{(i-1)},$ then finish the procedure.
    \end{enumerate}
\end{enumerate}
The above procedure can be seen as the modification of the graph: in each round, all the vertices together create a new graph.
For each vertex $v$, let $u$ be the neighbor of $v$ with the minimum label, and if $x$ is a neighbor of $v$, then add an edge between $x$ and $u$ in the new graph.
So in each round, each vertex just communicates with its neighbors to update the new minimum neighbor it learned.
At the end of the algorithm, it is obvious that the minimum vertex in each component will have all the other vertices in that component, and for each non minimum vertex, it will have the minimum vertex in the same component.

\begin{figure}[t!]
  \centering
  \includegraphics[width=1.0\textwidth]{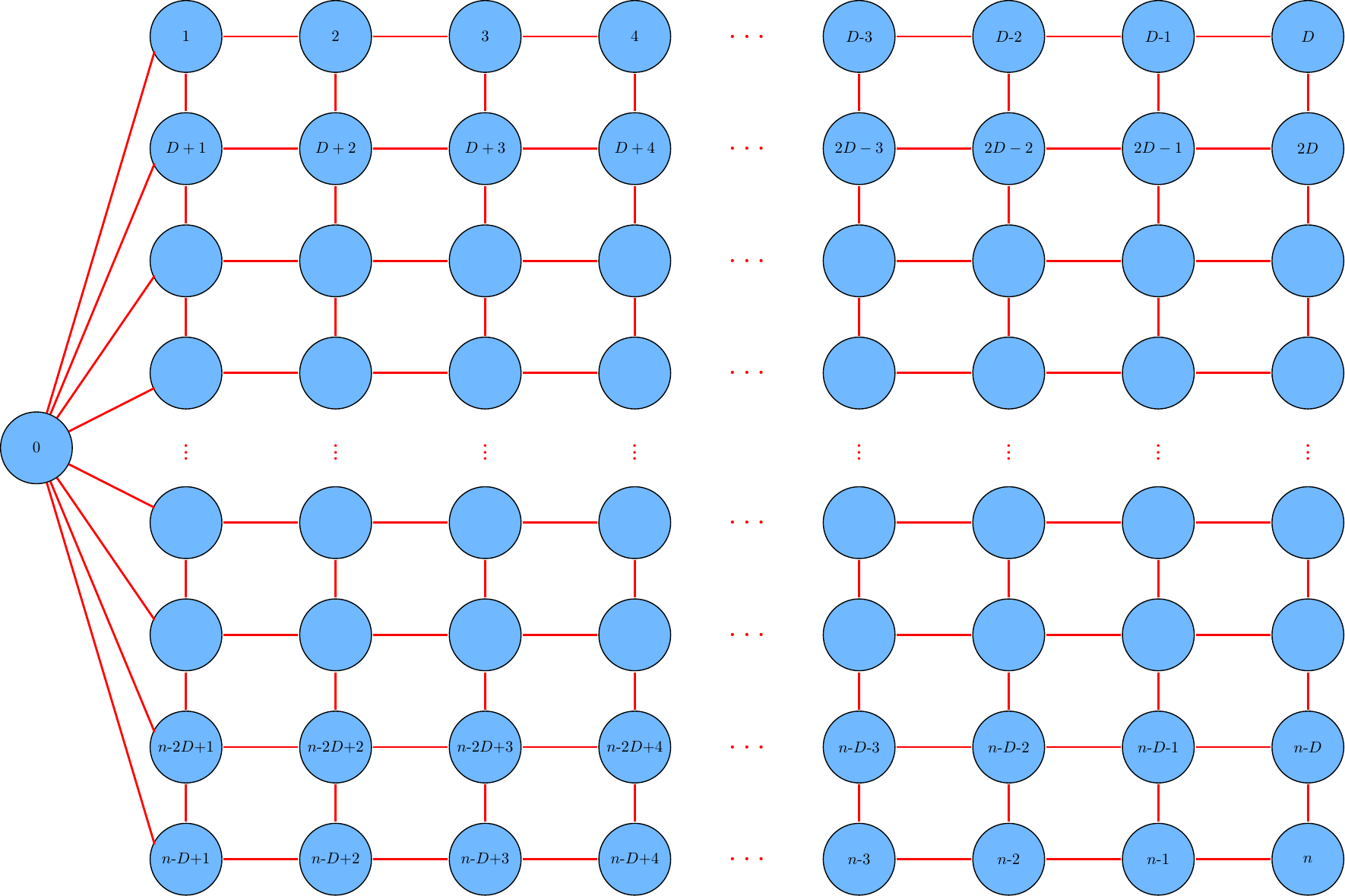}\\
  \caption{A hard example for~\cite{rmcs13}. For each $i \in \{2,3,\cdots, n/D-1\}$ and $j \in \{1,2,\cdots, D-1\}$, node $(i-1)\cdot D + j$ has degree $4$. For node $D$ and $n$, they have degree $2$. Node $0$ has degree $D$. All the other nodes have degree $3$.}\label{fig:hard_example}
\end{figure}

A hard example for this algorithm is shown by Figure~\ref{fig:hard_example}.
The example is a thin and tall grid graph with a vertex connected to all the vertices in the first column.
The total number of vertices is $n$.
The grid graph has $D=\frac{1}{2}\log n$ columns and $n/D$ rows.
We index each column from left to right by $1$ to $D$.
We index each row from top to down by $1$ to $n/D$.
The single large degree vertex has label $0$.
The $i^{\text{th}}$ row has the vertices with label $(i-1)\cdot D+1$ to $i\cdot D$ from the first column to the $D^{\text{th}}$ column.
We claim that if vertex $v$ is the $i^{\text{th}}$ row and $j^{\text{th}}$ column, then before round $k$ for $2^k<i,k<j,$ the neighbors of $v$ will only in column $j-1$, column $j$ and column $j+1$.
Furthermore, the minimum neighbor of $v$ in column $j-1$ will be $v-(2^{k-1}-1)\cdot D -1.$
The minimum neighbor of $v$ in column $j$ will be $v-2^{k-1}\cdot D.$
The minimum neighbor of $v$ in column $j+1$ will be $v-D\cdot(2^{k-1}-1)+1.$
This claim is true when $k=1.$
Then by induction, we can prove the claim.
Thus, it will take at least $\Theta(D)$ rounds to finish the procedure where $D=\Theta(\log n).$

If we randomly label the vertices at the beginning, then consider the case we copy that hard structure at least $n^{n+2}$ times, then with high probability, there is a component which has the labels with the order as the same as described above. In this case, the procedure needs $\Omega(\log\log N)$ rounds, where $N=n^{n+3}$ is the total number of the vertices.

Also notice that, even we give more total space to this algorithm, this algorithm will not preform better.
In our connectivity algorithm, if we have $\Omega(n^{1+\varepsilon})$ total space for some arbitrary constant $\varepsilon>0,$ then our parallel running time is $O(\log D).$

\section{Alternative Approach for Leader Selection}\label{sec:alternative_leader}
In this section, we show that there is a different way to select leaders (see Section~\ref{sec:leader}).
 The number of leaders selected by this approach will depend on the sum of inverse degrees of all the vertices.
Let us first introduce the concept of \textit{Min Parent Forest}.
\subsection{Min Parent Forest}
 Let $G=(V,E)$ be an undirected graph where $V$ denotes the vertex set of $G$, and $E$ denotes the edge set of $G$. Each vertex $v\in V$ has a weight $w(v)\in\mathbb{R},$ and it also has a unique label from $\mathbb{Z}.$ For convenience, for each vertex $v\in V,$ we also use $v$ to denote its label.  Let $\Gamma_G(v)$ denote the set of neighbors of $v$, i.e. $\Gamma_G(v)=\{u\in V\mid (u,v)\in E\}.$ If $G$ is clear in the context, we just use $\Gamma(v)$ to denote $\Gamma_G(v).$  The size of $\Gamma(v),$ $|\Gamma(v)|,$ is called the degree of $v$. Let $f_{G,w}:V\rightarrow V$ be the ``min-weight-parent'' function defined as the following:
 \begin{enumerate}
 \item If $w(v)=\min_{u\in \Gamma(v)\cup \{v\}} w(u),$ then $f_{G,w}(v)=v.$
 \item Otherwise, let $u^*\in \Gamma(v)$ be the vertex which has the smallest weight, i.e. $w(u^*)=\min_{u\in \Gamma(v)} w(u).$ If there is more than one choice of $u^*$, let $u^*$ be the one with the smallest label. And $f_{G,w}(v)$ is defined to be $u^*$.
 \end{enumerate}
 We call $(V,f_{G,w})$ the \textit{min-parent-forest} of graph $G$ with vertex weights $w$. We can then define $i$-step ``min-weight-parent'' function. For $v\in V,$ we define $f^{(0)}_{G,w}(v)=v.$ For $i\in \mathbb{Z}_{>0},$ we can define $f^{(i)}_{G,w}$ as the following:
 \begin{align*}
 \forall v\in V, f^{(i)}_{G,w}(v)=f_{G,w}(f^{(i-1)}_{G,w}(v)).
 \end{align*}

\begin{figure}[!t]
  \centering
    \includegraphics[width=1.0\textwidth]{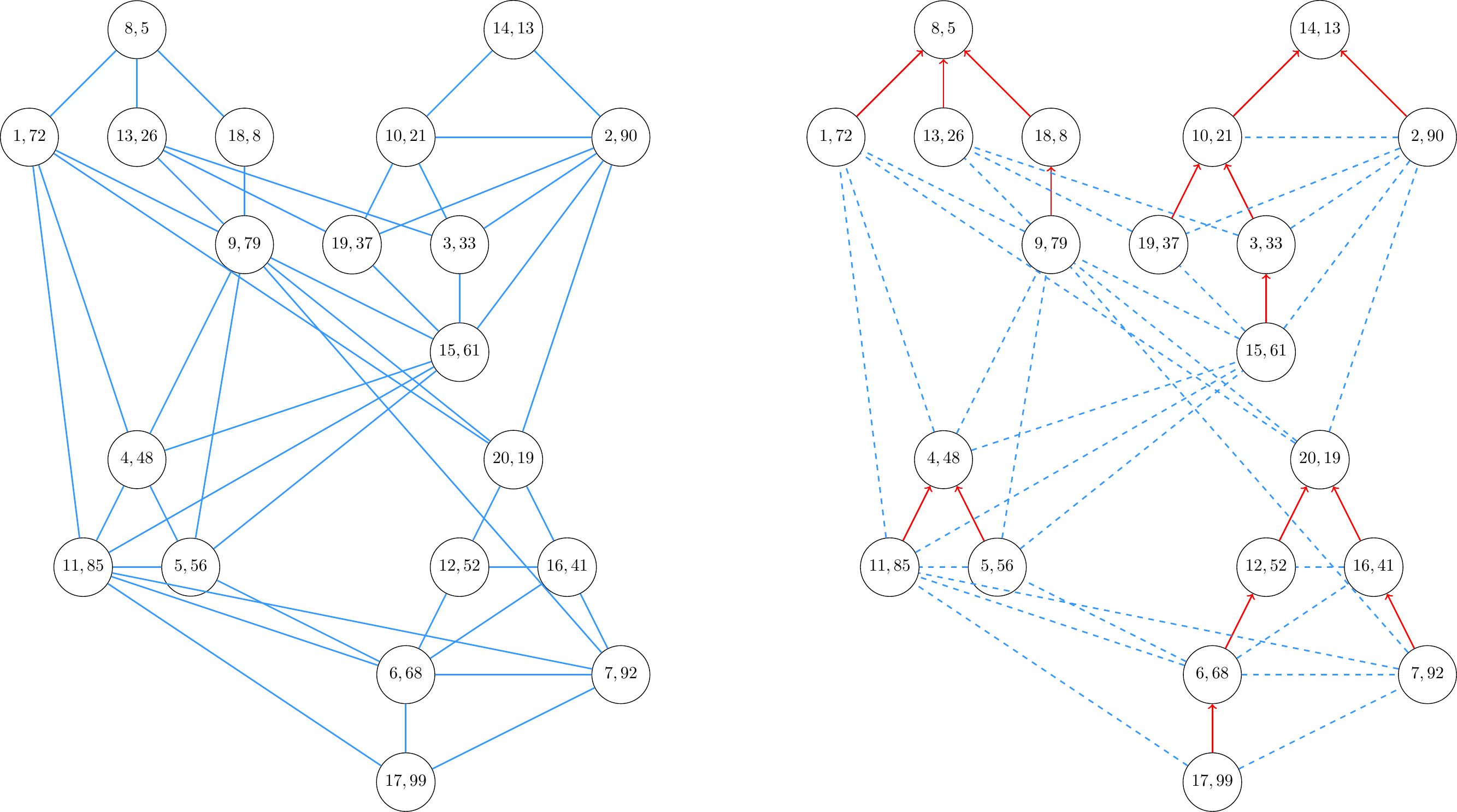}
    \caption{An example where $\#$roots $\approx$ $\sum_{i=1}^{20} 1 /(d(v_i) + 1) $. For each node, it has two numbers, the first number is the ID, and the second number is weight. $\sum_{i=1}^{20} 1/(d(v_i) + 1) = 1/4 + 1/3 + 1/3 + 1/6 + 1/5$ $+ 1/5 + 1/5 + 1/5 + 1/5 + 1/6$ $ + 1/4 + 1/6 + 1/8 + 1/7 + 1/6 $ $+1/9 + 1/8 + 1/7 + 1/6 + 1/4 \approx 3.89$ and $\#$roots$=4$. }
\end{figure}

In the following, we define the concept of roots in the \textit{min-parent-forest}.
\begin{definition}[Roots in the forest]
Let $v\in V,$ and let $(V,f_{G,w})$ be the \textit{min-parent-forest} of graph $G=(V,E)$ with vertex weights $w$. If $f_{G,w}(v)=v,$ then $v$ is a root in the forest $(V,f_{G,w}).$
\end{definition}

The depth of a vertex $v$ is defined as the distance on the tree between $v$ and the corresponding root in the forest.
 \begin{definition}[The depth of $v$]
 Let $v\in V,$ and let $(V,f_{G,w})$ be the \textit{min-parent-forest} of graph $G=(V,E)$ with vertex weights $w$. The depth of $v$ in the forest $(V,f_{G,w})$ is the smallest $i\in \mathbb{Z}_{\geq 0}$ such that $f^{(i)}_{G,w}(v)=f^{(i+1)}_{G,w}(v).$ We use $\dep_{G,w}(v)$ to denote the depth of $v$ in $(V,f_{G,w}).$ We call $f^{(\dep_{G,w}(v))}_{G,w}(v)$ the root of $v.$ For the simplicity of the notation, we also use $f^{(\infty)}_{G,w}(v)$ to denote the root of $v$.
 \end{definition}

 The above definition is well defined since if $f^{(i+1)}_{G,w}(v)\not=f^{(i)}_{G,w}(v)$ then $w(f^{(i+1)}_{G,w}(v))$ should be strictly smaller than $w(f^{(i)}_{G,w}(v))$ by the definition of $f_{G,w}$ and $f^{(j)}_{G,w}$ for all $j\in \mathbb{Z}_{\geq 0}.$ Therefore, there must exist $i$ such that $f^{(i)}_{G,w}(v)=f^{(i+1)}_{G,w}(v).$

The depth of the forest is the largest depth among all the vertices.

 \begin{definition}[The depth of the \textit{min-parent-forest}]
 The depth $\dep(G,w)$ of the forest $(V,f_{G,w})$ is defined as:
 \begin{align*}
 \dep(G,w)=\max_{v\in V} \dep_{G,w}(v).
 \end{align*}
 \end{definition}

 If the weights $w$ of vertices of $G$ are some i.i.d. random variables, then with high probability, the depth of $(V,f_{G,w})$ is only $O(\log |V|).$ Precisely, we have the following Lemma.

 \begin{lemma}[The depth of the random \textit{min-parent-forest}]\label{lem:depth_of_random_forest}
 Let $G=(V,E)$ be an undirected graph with $n$ vertices where $V=\{v_1,v_2,\cdots,v_n\},$ and the labels satisfies $v_1<v_2<\cdots<v_n.$ Let $w(v_1),w(v_2),\cdots,w(v_n)$ be $n$ i.i.d. random variables drawn uniformly from $[N].$ If $N>n^2/\delta$ for some $\delta\in(0,1),$ then for any $t\geq 60\log n,$
 \begin{align*}
 \Pr_{w \sim [N]^n} \left(\dep(G,w)\leq t\right)\geq 1-\delta-e^{-\frac{1}{2}t}.
 \end{align*}
 \end{lemma}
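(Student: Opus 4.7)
The bound splits naturally along its two error terms: the $\delta$ handles weight collisions, and the $e^{-t/2}$ handles the depth assuming distinct weights. For the first part, since $w(v_1), \ldots, w(v_n)$ are i.i.d.\ uniform on $[N]$, a union bound over the $\binom{n}{2}$ pairs bounds the probability of any collision by $n(n-1)/(2N) < n^2/(2N) < \delta/2 < \delta$. On the complementary event, the induced rank function on $V$ is a uniformly random permutation of $[n]$, so it suffices to prove $\Pr[\dep(G,w) \geq t \mid \text{no collisions}] \leq e^{-t/2}$.

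To bound the depth, I'd fix a vertex $v_0$, bound $\Pr[\dep_{G,w}(v_0) \geq t]$, and then union-bound over $v_0 \in V$. A descent chain $v_0 \to v_1 \to \cdots \to v_t$ consists of $t+1$ distinct vertices with $v_{i+1} \in \Gamma_G(v_i)$ and $v_{i+1} = \arg\min_{u \in S_i} w(u)$, where $S_i := \Gamma_G(v_i) \cup \{v_i\}$. The core claim is that for any fixed candidate chain $(v_0, \ldots, v_t)$ satisfying the structural constraints, the probability that it is actually the descent equals $\prod_{k=1}^{t} 1/|A_k|$, where $A_k := S_0 \cup \cdots \cup S_{k-1}$. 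The cleanest way to justify this is via a ``reveal in increasing rank order'' coupling: revealing vertices one at a time in increasing rank order and attaching each to its minimum-rank already-revealed neighbor reconstructs the min-parent-forest, and the descent from $v_0$ is exactly $v_0$'s chain of ancestors; the required probabilities then factor by symmetry, since conditional on the identities in $A_k$, the event that $v_k$ is the lowest-rank vertex of $A_k$ has probability $1/|A_k|$.

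Since $|A_k| \geq k+1$ (the set contains the $k+1$ distinct vertices $v_0, \ldots, v_k$), every candidate chain contributes at most $1/(t+1)!$. The remaining step is to bound the sum over candidate chains. For bounded-degree graphs, a crude count of at most $\Delta^t$ chains from $v_0$ gives $n\Delta^t/(t+1)!$, which by Stirling decays like $(e\Delta/t)^t$ and is far below $e^{-t/2}$ once $t \geq 60\log n$. For graphs with unbounded degree, one must use that the $|A_k|$ factors grow with the degrees $d(v_{k-1})$ along the chain, so chains visiting high-degree vertices, while more numerous, carry a correspondingly smaller per-chain weight; carrying this through yields the needed $\Pr[\dep(G,w) \geq t] \leq e^{-t/2}$.

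\paragraph{Main obstacle.} The delicate step is the chain-counting argument in general graphs: naive enumeration gives up to $n^t$ chains from $v_0$, and $n^{t+1}/(t+1)!$ does not decay at $t = \Theta(\log n)$. The essential refinement is to use the sharper per-chain bound $\prod_{k=1}^t 1/|A_k|$ (rather than just $1/(t+1)!$) and show by a careful double-counting that chains passing through high-degree vertices are suppressed by exactly the right amount to cancel their proliferation. The generous constant $60$ in the hypothesis $t \geq 60\log n$ is what makes this bookkeeping loose enough to go through cleanly.
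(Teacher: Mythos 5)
Your high-level split into a collision term ($\delta$) and a depth term (conditioned on distinct weights) matches the paper, and your per-chain probability formula is correct: conditioning in decreasing order of $k$, a fixed candidate chain $(v_0,\ldots,v_t)$ with $v_k \notin A_{k-1}$ has probability exactly $\prod_{k=1}^t 1/|A_k|$ of being the descent. The bound $|A_k| \geq k+1$ also holds, since $v_0,\ldots,v_k$ are all in $A_k$. But the proof is not complete: you explicitly flag the sum over chains as the ``main obstacle'' and offer only the unexecuted suggestion that a ``careful double-counting'' should cancel the proliferation of chains through high-degree vertices. That cancellation is the entire content of the lemma for general graphs, and as written nothing substitutes for it. The naive bound $n^{t}/(t+1)!$ diverges at $t = \Theta(\log n)$, as you note, and the refinement via $\prod_k 1/|A_k|$ does not telescope in any obvious way --- if you try to push the inner sum through, each step contributes a factor of the form $b_{k-1}/(a_{k-1}+b_{k-1})$ which can be arbitrarily close to $1$, so nothing decays without a genuinely new idea.

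The paper takes a different route that avoids chain enumeration entirely. For a fixed start vertex $s$, it reveals the weights in the \emph{exploration order} induced by the descent: first $w(s)$, then the unseen neighbors of $s$, then the unseen neighbors of $f(s)$, and so on, producing a sequence $z_1,\ldots,z_n$. The key structural fact is that each vertex $u_j = f^{(j)}(s)$ on the descent occupies a position $\pos(u_j)$ at which $z_{\pos(u_j)}$ is a left-to-right minimum of $z_1,\ldots,z_{\pos(u_j)}$. Hence $\dep_{G,w}(s)$ is at most the total number of left-to-right minima in $z$. The paper then shows (by an entropy argument) that $z_1,\ldots,z_n$ are i.i.d.\ uniform on $[N]$, so, conditioned on distinctness, the left-to-right-minimum indicators $y_i$ are independent Bernoulli with $\Pr(y_i=1)=1/i$, and Bernstein's inequality gives the $e^{-\Theta(t)}$ tail. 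The graph structure is absorbed into the reveal order, and the resulting random variable $\sum_i y_i$ is graph-independent --- this is precisely what makes the high-degree chains ``cancel'' without any explicit double-counting. Note also that this is a different reveal order from the one you propose (``increasing rank order''): yours is a fine way to justify the per-chain formula, but it does not produce the clean coupling to a universal random variable that the exploration-order reveal does. To close your gap, I would replace the chain sum entirely with the exploration-order coupling.
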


 \begin{proof}
Let $w(v_1),w(v_2),\cdots,w(v_n)$ be $n$ i.i.d. random variables drawn uniformly from $[N].$ Let $(V,f_{G,w})$ be the \textit{min-parent-forest} of $(G,w).$
 For a fixed $s\in V,$ we create a set of random variables $z_1,z_2,\cdots,z_n$ by the following deterministic procedure:
 \begin{enumerate}
 \item Let $z_1=w(s),k=0,S_k=\{s\},u_k=s,i=2,\pos(s)\leftarrow 1.$
 \item Let $S_{k+1}=S_k.$
 \item For $j=1\rightarrow n,$
 \subitem if $v_j\in \Gamma (u_k)$ and $v_j\not\in S_k$ then let $\pos(v_j)\leftarrow i,S_{k+1}\leftarrow S_{k+1}\cup\{v_j\},z_i=w(v_j),$ $i\leftarrow i+1.$
 \item If $f_{G,w}(u_k)\not=u_k,$ then let $u_{k+1}=f_{G,w}(u_{k}),$ $k\leftarrow k+1$ and go to step $2$.
 \item Otherwise, for $j=1\rightarrow n,$
 \subitem if $v_j\not\in S_{k+1}$ then let $\pos(v_j)\leftarrow i,z_i=w(v_j),i\leftarrow i+1.$
 \end{enumerate}

 It is easy to observe that $k$ is exactly $\dep_{G,w}(s)$ at the end of the above procedure. The reason is that $u_0=s=f^{(0)}_{G,w}(s),$  $\forall j\in[k],u_j=f_{G,w}(u_{j-1})=f^{(j)}_{G,w}(s)$ and $f_{G,w}(u_k)=u_k.$

\begin{fact}
$\forall v \in V$, $w(v) = z_{\mathrm{pos}(v)}$, where $\mathrm{pos} : [V] \rightarrow [n]$ and $\mathrm{pos}^{-1} : [n] \rightarrow [V]$.
\end{fact}

 \begin{claim}\label{cla:des_setS}
 $\forall j\in\{0,1,\cdots,k+1\},S_j=\{u_0\}\cup\bigcup_{p=0}^{j-1}\Gamma(u_p).$
 \end{claim}
 \begin{proof}
 We can prove this by induction. The statement is obviously true for $S_0$ since $S_0=\{u_0\}.$ Now suppose the claim is true for $S_{j-1}.$ Then according to the step 3 of the procedure $S_{j}=S_{j-1}\cup(\Gamma(u_{j-1})\setminus S_{j-1})=S_{j-1}\cup\Gamma(u_{j-1})=\{u_0\}\cup\bigcup_{p=0}^{j-1}\Gamma(u_p).$
 \end{proof}

 \begin{claim}\label{cla:minwuj}
 $\forall j\in\{0,1,\cdots,k\},$ $w(u_j)=\min_{v\in S_{j}} w(v).$
 \end{claim}
 \begin{proof}
 Since $\forall j\in[k],u_j=f_{G,w}(u_{j-1}),$ we have $w(u_j)=\min_{v\in\Gamma(u_{j-1})\cup \{u_{j-1}\}}w(v).$ Then we have $w(u_j)=\min_{v\in\{u_0\}\cup\bigcup_{p=0}^{j-1}\Gamma(u_p)}w(v)=\min_{v\in S_{j}} w(v),$ where the last equality follows by Claim~\ref{cla:des_setS}.
 \end{proof}

 We use $\pos^{-1}(i)$ to denote vertex $v$ which satisfies $\pos(v)=i.$ According to the step 3, it is easy to see $\forall j\in\{0,1,\cdots,k+1\},$ we have $\{\pos^{-1}(i)\mid i\in[|S_j|]\}=S_j.$

\begin{claim}\label{cla:smallest_z}
$\forall j\in\{0,1,\cdots,k\},z_{\pos(u_j)}=\min_{p\in[\pos(u_j)]} z_{p}.$
\end{claim}
\begin{proof}
$z_{\pos(u_j)}=w(u_j)=\min_{v\in S_{j}} w(v)=\min_{v\in S_{j}}z_{\pos(v)}=\min_{p\in[|S_j|]}z_{p}=\min_{p\in[\pos(u_j)]}z_{p},$ where the second equality follows by Claim~\ref{cla:minwuj}, and the last equality follows by $u_j\in S_j$, so $\pos(u_j)\leq |S_j|$.
\end{proof}

Now we define an another set of random variables $y_1,y_2,\cdots,y_n,$ where $\forall i\in[n],y_i\in\{0,1\}$ and $y_i=1$ if and only if $z_i=\min_{j\in[i]} z_j.$ According to Claim~\ref{cla:smallest_z}, we have that $\forall i\in\{0,1,\cdots,k\},y_{\pos(u_i)}=1.$ Thus, $\dep_{G,w}(s)=k\leq \sum_{i=1}^n y_i.$ To upper bound $\dep_{G,w}(s),$ it suffices to upper bound $\sum_{i=1}^n y_i.$

Before we look at $y_1,\cdots,y_n,$ we firstly focus on the properties of $z_1,\cdots,z_n:$
\begin{claim}\label{cla:independence_of_z}
$z_1,z_2,\cdots,z_n$ are $n$ i.i.d random variables drawn uniformly from $[N]$.
\end{claim}
\begin{proof}
 A key observation is that if $z_1,z_2,\cdots,z_n$ are given, then we can recover $w(v_1),w(v_2),\cdots,w(v_n)$ exactly by the following deterministic procedure:
 \begin{enumerate}
 \item Let $w(s)=z_1,k=0,S_k=\{s\},u_k=s,i=2.$
 \item Let $S_{k+1}=S_k.$
 \item For $j=1\rightarrow n,$
 \subitem if $v_j\in \Gamma (u_k)$ and $v_j\not\in S_k$ then let $S_{k+1}\leftarrow S_{k+1}\cup\{v_j\},w(v_j)=z_i,$ $i\leftarrow i+1.$
 \item If $f_{G,w}(u_k)\not=u_k,$ then let $u_{k+1}=f_{G,w}(u_{k}),$ $k\leftarrow k+1$ and go to step $2$.
 \item Otherwise, for $j=1\rightarrow n,$
 \subitem if $v_j\not\in S_{k+1}$ then let $\pos(v_j)\leftarrow i,w(v_j)=z_i,i\leftarrow i+1.$
 \end{enumerate}
 Notice that after step 3, $\forall v\in \Gamma(u_k)\cup \{u_k\},$ $w(v)$ is already recovered, thus we can implement step 4. Thus, the above procedure is a valid procedure. Since $z_1,\cdots,z_n$ are generated by $w(v_1),\cdots,w(v_n),$ we can also know $z_1,\cdots,z_n$ by given  $w(v_1),\cdots,w(v_n).$ This means that \begin{align*}
 H(z_1,z_2,\cdots,z_n\mid w(v_1),w(v_2),\cdots,w(v_n))=H( w(v_1),w(v_2),\cdots,w(v_n)\mid z_1,z_2,\cdots,z_n)=0,
 \end{align*}
 where $H(\cdot)$ is the information entropy.
 Notice that
 \begin{align*}
 &I(z_1,z_2,\cdots,z_n; w(v_1),w(v_2),\cdots,w(v_n))\\
 &=H(z_1,z_2,\cdots,z_n)-H(z_1,z_2,\cdots,z_n\mid w(v_1),w(v_2),\cdots,w(v_n))\\
 &=H(w(v_1),w(v_2),\cdots,w(v_n))-H( w(v_1),w(v_2),\cdots,w(v_n)\mid z_1,z_2,\cdots,z_n),
 \end{align*}
 where $I(\cdot)$ is the mutual information.
 Thus, $H(z_1,z_2,\cdots,z_n)=H(w(v_1),w(v_2),\cdots,w(v_n))=n\log N.$ For $i\in[n],$ since the size of the support of $z_i$ is at most $N,$ $H(z_i)\leq \log N$ where the equality holds if and only if $z_i$ is uniformly distributed on $[N].$ Also notice that $H(z_1,z_2,\cdots,z_n)\leq \sum_{i=1}^n H(z_i),$ where the equality holds if and only if $z_i$ are independent. Since $\sum_{i=1}^n H(z_i)\leq n\log N,$ we have $H(z_1,z_2,\cdots,z_n)=\sum_{i=1}^n H(z_i),$ and for each $i\in[n],$ $H(z_i)=\log N.$ Thus, $z_1,z_2,\cdots,z_n$ are i.i.d. random variables drawn uniformly from $[N].$
\end{proof}

\begin{claim}\label{cla:no_collision}
If $N>n^2/\delta$ for some $\delta\in(0,1),$ then with probability at least $1-\delta,$ $\forall i\not=j\in[n],$ we have $w(v_i)\not=w(v_j).$
\end{claim}
\begin{proof}
 Recall that $w(v_1),w(v_2),\cdots,w(v_n)$ are $n$ i.i.d. random variables drawn uniformly from $N$. For any $i\not=j\in [n],$ the $\Pr(w(v_i)\not=w(v_j))=1/N,$ thus $\E(|\{(i,j)\in[n]\times [n]\mid i\not=j,w(v_i)\not=w(v_j)\}|)\leq n^2/N.$ By Markov's inequality,
\begin{align*}
\Pr(|\{(i,j)\in[n]\times [n]\mid i\not=j,w(v_i)\not=w(v_j)\}|\geq 1)\leq n^2/N\leq \delta.
\end{align*}
Thus,
\begin{align*}
\Pr(\forall i\not=j\in[n],z_i\not=z_j)\geq 1-\delta.
\end{align*}
\end{proof}

\begin{claim}\label{cla:upper_bound_of_y}
Let $\mathcal{E}$ be the event that $\forall i\not=j\in[n],w(v_i)\not=w(v_j).$ Then, for any $t\geq 3\sum_{i=1}^n\frac{1}{i},$ we have
\begin{align*}
\Pr_{w \sim [N]^n} \left(\sum_{i=1}^n y_i\geq t+\sum_{i=1}^n\frac{1}{i} ~\bigg|~ \mathcal{E}\right)\leq e^{-\frac{3}{4}t}.
\end{align*}
\end{claim}
\begin{proof}
  Note that $\mathcal{E}$ happened if and only if we have $\forall i\not=j\in[n], z_i\not=z_j.$ Due to Claim~\ref{cla:independence_of_z}, $z_1,z_2,\cdots,z_n$ are i.i.d. random variables drawn uniformly from $[N],$ then conditioned on $\mathcal{E},$ $y_1,y_2, \cdots,$ $y_n$ are independent, and the probability that $y_i=1$ is $1/i.$ Thus, we have:
\begin{align*}
 & ~ \Pr\left(\sum_{i=1}^n y_i\geq\sum_{i=1}^n \frac{1}{i}+t ~\bigg|~ \mathcal{E}\right)\\
= & ~ \Pr \left(\sum_{i=1}^n (y_i-\E\left( y_i\mid \mathcal{E}\right))\geq t ~\bigg|~ \mathcal{E}\right)\\
\leq & ~ \exp \left( {-\frac{\frac{1}{2}t^2}{\sum_{i=1}^n \Var(y_i\mid\mathcal{E})+\frac{1}{3}t}} \right)\\
\leq & ~ \exp \left( {-\frac{\frac{1}{2}t^2}{\sum_{i=1}^n \frac{1}{i}+\frac{1}{3}t}} \right)\\
\leq & ~ \exp \left( {-\frac{\frac{1}{2}t^2}{\frac{2}{3}t}} \right)\\
= & ~ \exp \left( {-\frac{3}{4}t} \right),
\end{align*}
where the first equality follows by $\E(y_i | \mathcal{E})=1/i.$ The first inequality follows by Berinstein inequality. The second inequality follows by
\begin{align*}
\sum_{i=1}^n \Var(y_i\mid \mathcal{E})\leq \sum_{i=1}^n \E(y_i^2\mid \mathcal{E})=\sum_{i=1}^n \E(y_i\mid \mathcal{E})=\sum_{i=1}^n \frac{1}{i}.
\end{align*}
The third inequality follows by $\sum_{i=1}^n \frac{1}{i}\leq \frac{1}{3}t.$
\end{proof}

For a fixed vertex $s\in V,$ due to Claim~\ref{cla:upper_bound_of_y}, for any $t\geq 3\sum_{i=1}^n 1/i,$ we have
\begin{align}\label{eq:upper_bound_of_dep}
\Pr\left(\dep_{G,w}(s)\geq \sum_{i=1}^n 1/i+t ~\bigg|~ \mathcal{E}\right)\leq e^{-\frac{3}{4}t}.
\end{align}
Thus, for any $t\geq 60\log n,$
\begin{align*}
& ~\Pr_{w \sim [N]^n } \left(\exists s\in V \text{~s.t.~} \dep_{G,w}(s)\geq t\right)\\
\leq & ~ \Pr\left(\exists s\in V \text{~s.t.~} \dep_{G,w}(s)\geq 5t/6 + \sum_{i=1}^n 1/i \right)\\
= & ~\Pr\left(\exists s\in V \text{~s.t.~} \dep_{G,w}(s)\geq 5t/6 + \sum_{i=1}^n 1/i  ~\bigg|~ \mathcal{E}\right)\Pr(\mathcal{E}) \\
& ~ +\Pr\left(\exists s\in V \text{~s.t.~} \dep_{G,w}(s)\geq 5t/6 + \sum_{i=1}^n 1/i  ~\bigg|~ \neg\mathcal{E}\right)\Pr(\neg\mathcal{E})\\
\leq & ~ \Pr\left(\exists s\in V \text{~s.t.~} \dep_{G,w}(s)\geq 5t/6 + \sum_{i=1}^n 1/i ~\bigg|~ \mathcal{E}\right)+\Pr(\neg\mathcal{E})\\
\leq & ~ \Pr\left(\exists s\in V \text{~s.t.~} \dep_{G,w}(s)\geq 5t/6 + \sum_{i=1}^n 1/i  ~\bigg|~ \mathcal{E}\right)+\delta\\
\leq & ~ \sum_{s\in V} \Pr\left(\dep_{G,w}(s)\geq 5t/6 + \sum_{i=1}^n 1/i ~\bigg|~ \mathcal{E}\right)+\delta\\
\leq & ~ ne^{-\frac{5}{8}t}+\delta\\
\leq & ~ e^{-\frac{1}{2}t}+\delta
\end{align*}
where the first inequality follows by $\frac{1}{6}t\geq 10\log n\geq \sum_{i=1}^n 1/i.$ The third inequality follows by Claim~\ref{cla:no_collision}. The forth inequality follows by union bound. The fifth inequality follows by Equation~\eqref{eq:upper_bound_of_dep}. The sixth inequality follows by $e^{-\frac{1}{8}t}\leq \frac{1}{n}.$

Thus, we can conclude that for any $t\geq 60\log n,$ we have $\Pr(\dep(G,w)\leq t)\geq 1-\delta-e^{-\frac{1}{2}t}.$
 \end{proof}

\begin{lemma}[The number of roots of the random \textit{min-parent-forest}]\label{lem:number_of_roots}
 Let $G=(V,E)$ be an undirected graph with $n$ vertices where $V=\{v_1,v_2,\cdots,v_n\},$ and the labels satisfies $v_1<v_2<\cdots<v_n.$ Let $w(v_1),w(v_2),\cdots,w(v_n)$ be $n$ i.i.d. random variables drawn uniformly from $[N].$ Let $\delta\in(0,1).$ If $N>n^3,$ then
 \begin{align*}
 \Pr_{ w \sim [N]^n }\left(|\{v\in V ~|~ f_{G,w}(v)=v\}|\geq \frac{2}{\delta}\sum_{v\in V}\frac{1}{|\Gamma(v)|+1}\right)\leq \delta.
 \end{align*}
 \end{lemma}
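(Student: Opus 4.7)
The plan is to combine a direct first-moment calculation on the closed-neighborhood minima with Markov's inequality, after handling the (tiny) probability of ties. First I would reformulate ``$v$ is a root'' in a symmetric form: by the definition of $f_{G,w}$, the vertex $v$ is a root if and only if $w(v)<w(u)$ for every $u\in\Gamma(v)$, i.e., $w(v)$ is the strict minimum of the closed neighborhood $\Gamma(v)\cup\{v\}$. Let $X=|\{v\in V\mid f_{G,w}(v)=v\}|$ and $S=\sum_{v\in V}\tfrac{1}{|\Gamma(v)|+1}$, so our goal is $\Pr[X\ge \tfrac{2}{\delta}S]\le \delta$.

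Next, let $\mathcal{E}$ be the event that the $n$ weights $w(v_1),\dots,w(v_n)$ are pairwise distinct. As in Claim~\ref{cla:no_collision}, a union bound over the $\binom{n}{2}$ pairs gives $\Pr[\neg\mathcal{E}]\le \binom{n}{2}/N<1/(2n)$ using the hypothesis $N>n^3$. Conditional on $\mathcal{E}$, the weights restricted to $\Gamma(v)\cup\{v\}$ are exchangeable, so each of the $|\Gamma(v)|+1$ vertices is equally likely to attain the minimum. Hence $\Pr[v\text{ is a root}\mid \mathcal{E}]=\tfrac{1}{|\Gamma(v)|+1}$, and by linearity of expectation $\mathbb{E}[X\mid \mathcal{E}]=S$.

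Now I apply Markov's inequality conditional on $\mathcal{E}$:
\begin{align*}
\Pr\!\left[X\ge \tfrac{2}{\delta}S ~\bigg|~ \mathcal{E}\right]\le \frac{\delta}{2}.
\end{align*}
Splitting on $\mathcal{E}$ yields
\begin{align*}
\Pr\!\left[X\ge \tfrac{2}{\delta}S\right]\le \Pr\!\left[X\ge \tfrac{2}{\delta}S~\bigg|~ \mathcal{E}\right]+\Pr[\neg\mathcal{E}]\le \frac{\delta}{2}+\frac{1}{2n}.
\end{align*}
This is at most $\delta$ whenever $\delta\ge 1/n$. For the tail regime $\delta<1/n$, I would observe that each term in $S$ is at least $1/n$, so $S\ge 1$, hence $\tfrac{2}{\delta}S>2n\ge n\ge X$ deterministically and the conclusion holds for free. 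Combining these two regimes gives the desired bound.

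The ``main obstacle'' is very mild: the conditional symmetry argument only works cleanly when all weights are distinct, so I have to carry the event $\mathcal{E}$ through the calculation and verify that $\Pr[\neg\mathcal{E}]$ is small enough to be absorbed into the Markov slack (the reason for the hypothesis $N>n^3$ rather than $N>n^2$). Everything else is routine linearity-of-expectation plus Markov, so no concentration machinery is required.
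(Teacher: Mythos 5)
Your proof is correct and follows the same core strategy as the paper's: use the no-tie event $\mathcal{E}$, the exchangeability of weights on a closed neighborhood to get $\Pr[v \text{ is a root} \mid \mathcal{E}] = 1/(|\Gamma(v)|+1)$, and Markov's inequality. The only difference is bookkeeping: the paper absorbs $\Pr[\neg\mathcal{E}]$ into a per-vertex bound $\Pr[v \text{ is a root}] \le \tfrac{1}{|\Gamma(v)|+1}+\tfrac{1}{n} \le \tfrac{2}{|\Gamma(v)|+1}$ and then applies Markov unconditionally, which produces the stated bound in one stroke for every $\delta$; you instead apply Markov conditionally on $\mathcal{E}$ and add $\Pr[\neg\mathcal{E}]$ at the end, which forces the case split on whether $\delta \ge 1/n$. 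Both are fine; the paper's per-vertex folding is a bit more economical because it sidesteps the case analysis, while yours is arguably more transparent about where the slack from $\mathcal{E}$ goes. One small nit: you say a root is characterized by the strict inequality $w(v) < w(u)$, whereas the paper's definition uses $w(v) = \min$, which allows ties to create a root as well; this doesn't affect your argument since under $\mathcal{E}$ the two coincide and in the $\delta < 1/n$ branch you only use the trivial $X \le n$, but it is worth stating the reformulation carefully.
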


 \begin{proof}
 Let $w(v_1),w(v_2),\cdots,w(v_n)$ be $n$ i.i.d. random variables drawn uniformly from $[N].$ Let $\mathcal{E}$ be the event that $\forall i\not=j\in [n],w(v_i)\not=w(v_j).$ Notice that for $i\not=j,$ the probability that $w(v_i)=w(v_j)$ is $1/N.$ Thus, $\E(|\{(i,j)\in[n]\times [n] ~|~ i\not=j,w(v_i)=w(v_j)\}|)\leq n^2/N.$ Thus, if $N>n^2,$ then $\Pr(\neg\mathcal{E})=\Pr\left(|\{(i,j)\in[n]\times [n] ~|~ i\not=j,w(v_i)=w(v_j)\}|\geq 1\right)\leq n^2/N\leq \frac{1}{n}.$ Now, we fix a vertex $v\in V,$
 \begin{align*}
 & ~ \Pr\left(f_{G,w}(v)=v\right)\\
 = & ~ \Pr\left(f_{G,w}(v)=v\mid\mathcal{E}\right)\Pr(\mathcal{E})+\Pr\left(f_{G,w}(v)=v\mid\neg\mathcal{E}\right)\Pr(\neg\mathcal{E})\\
 \leq & ~ \Pr\left(f_{G,w}(v)=v\mid\mathcal{E}\right)+\Pr(\neg\mathcal{E})\\
 \leq & ~ \Pr\left(w(v)=\min_{u\in\{v\}\cup\Gamma(v)}w(u)\mid\mathcal{E}\right)+\frac{1}{n}\\
 \leq & ~ \frac{1}{|\Gamma(v)|+1}+\frac{1}{n}\\
 \leq & ~ \frac{2}{|\Gamma(v)|+1}
 \end{align*}
 where the third inequality follows by the symmetry of all the variables $w(u)$ for $u\in\{v\}\cup\Gamma(v)$ so condition on all the $w$ are different, with probability $\frac{1}{1+|\Gamma(v)|},$ $w(v)$ is the smallest one. The last inequality follows by $|\Gamma(v)|+1\leq |V|=n.$

 Thus, $\E(|\{v\in V\mid f_{G,w}(v)=v\}|)\leq \sum_{v\in V}\frac{2}{|\Gamma(v)|+1}.$ Let $\delta\in(0,1),$ then by Markov's inequality,
  \begin{align*}
 \Pr\left(|\{v\in V\mid f_{G,w}(v)=v\}|\geq \frac{2}{\delta}\sum_{v\in V}\frac{1}{|\Gamma(v)|+1}\right)\leq \delta.
 \end{align*}
 \end{proof}

 \subsection{Leader Selection via Min Parent Forest}
 Given a graph, we can randomly assign each vertex a weight, thus we have a \textit{min-parent-forest}, then we select those roots in the \textit{min-parent-forest} as leaders, and try to contract all the vertices to the leaders.
 If we replace line~\ref{sta:alg2_start_to_select_leader} to line~\ref{sta:alg2_assign_pointer2} of Algorithm~\ref{alg:batch_algorithm2} by Algorithm~\ref{alg:alternative_leader}. We can get a new algorithm
 with the following guarantees.

 \begin{algorithm}[h]
 \caption{Leader Selection via Min Parent Forest}\label{alg:alternative_leader}
 \begin{algorithmic}[1]
 \State Let $N = 100rn^{10}.$
 \State $\forall v\in V'_i,$ let $w_i(v)$ be i.i.d. random variables drawn uniformly from $[N].$
\State $\forall v\in V''_i,$ let $\p_i(v)=f_{G'_i,w_i}(v).$ \Comment{$(V'_i,f_{G'_i,w_i})$ is a \textit{min-parent-forest} of $G'_i$ with $w_i$. }\label{sta:ftop}
\end{algorithmic}
\end{algorithm}

\begin{theorem}
Suppose we replace line~\ref{sta:alg2_start_to_select_leader} to line~\ref{sta:alg2_assign_pointer2} of Algorithm~\ref{alg:batch_algorithm2} by Algorithm~\ref{alg:alternative_leader}.

Let $G=(V,E)$ be an undirected graph, $m=\Omega(n),$ and $r\leq n$ be the rounds parameter where $n$ is the number of vertices in $G$. Let $c>0$ be a sufficiently large constant. If $r\geq c\log \log_{m/n} (n)$, then with probability at least $2/3$, the modified $\textsc{Connectivity}(G,m,r)$ (Algorithm~\ref{alg:batch_algorithm2}) will not return FAIL, and the total number of iterations (see Definition~\ref{def:total_iter_connectivity}) of the modified $\textsc{Connectivity}(G,m,r)$ is at most $O(r\cdot(\log D+\log\log n)),$ where $D=\diam(G).$
\end{theorem}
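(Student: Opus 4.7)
My plan is to follow the same scaffolding as Theorem~\ref{thm:correct_rand_leader_alg} and Theorem~\ref{thm:alg2_time_prob}, changing only the parts affected by replacing Bernoulli-sampled leaders with roots of a random min-parent-forest. First I would observe that correctness is essentially free: the new $\p_i = f_{G'_i,w_i}$ is still a valid parent-pointer system on $V''_i$, because for every $v\in V''_i$ the image $\p_i(v)$ is either $v$ itself or a $G'_i$-neighbor of $v$, and by property~\ref{itm:neighbor_incr_pro3} of Lemma~\ref{lem:properties_of_neighbor_increment} every such neighbor is also in $V''_i$. Compatibility with $G''_i$ (in the sense of Definition~\ref{def:compatible_parent_pointers}) holds since adjacency in $G'_i$ implies being in the same $G_{i-1}$-component (property~\ref{itm:neighbor_incr_pro2}) and $V''_i$ is a union of whole $G_{i-1}$-components. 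Therefore Claims~\ref{cla:alg2_tree_valid}--\ref{cla:root_equiv} in the proof of Theorem~\ref{thm:correct_rand_leader_alg} go through verbatim, so if the modified procedure does not return FAIL it outputs a correct connectivity labeling.

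Next I would handle progress, which is the part that controls whether FAIL occurs. After line~\ref{sta:alg2_neighbor_incr} each $v \in V''_i$ has $|\Gamma_{G'_i}(v)|\ge \gamma_i-1=\lceil(m/n_{i-1})^{1/2}\rceil-1$, so $\sum_{v\in V''_i}1/(|\Gamma_{G'_i}(v)|+1)\le n_{i-1}/\gamma_i$. Applying Lemma~\ref{lem:number_of_roots} in round $i$ with $\delta_i=1/(20r)$ gives, with probability at least $1-1/(20r)$,
\[
n_i \;\le\; \frac{2}{\delta_i}\cdot\frac{n_{i-1}}{\gamma_i} \;=\; O\!\left( r\,\frac{n_{i-1}^{3/2}}{m^{1/2}}\right).
\]
A union bound over the $r$ rounds preserves this recursion with probability $\ge 19/20$. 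Setting $y_i=\log(m/n_i)$, the recursion becomes $y_i\ge 1.5\,y_{i-1}-O(\log r)$, which is exactly the double-exponential progression of Remark~\ref{rem:double_exponential_progress} up to an additive $O(\log r)$ slack, and as in the proof of Theorem~\ref{thm:alg2_time_prob} it forces $n_r=0$ whenever $r\ge c\log\log_{m/n} n$ for a sufficiently large constant $c$ (treating the small-$\gamma_i$ regime separately by noting that once $\gamma_i\ge 2$ one already has $\mathbb{E}[n_i]\le n_{i-1}/2$, yielding a short geometric-decrease prelude whose length is absorbed into $c$).

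For the per-round iteration count I would appeal to the depth bound. Lemma~\ref{lem:depth_of_random_forest} with $N=100rn^{10}$, $\delta= n^{-8}/r$, and $t=60\log n$ gives $\dep(G'_i,w_i)\le 60\log n$ except with probability $n^{-8}/r+e^{-30\log n}=o(1/r)$; another union bound over $r$ rounds says every forest has depth $O(\log n)$ with probability $1-o(1)$. Then Lemma~\ref{lem:contraction_properties} shows that each \textsc{TreeContraction} at line~\ref{sta:alg2_tree_contract} runs for $r'_i\le\lceil\log\dep(\p_i)\rceil=O(\log\log n)$ iterations. Meanwhile $k_i=O(\log D)$ by property~\ref{itm:neighbor_incr_pro1} of Lemma~\ref{lem:properties_of_neighbor_increment}, since $\diam(G_{i-1})\le\diam(G)=D$. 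Summing yields the total iteration count $\sum_{i=1}^r(k_i+r'_i)=O(r(\log D+\log\log n))$.

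Combining the two high-probability events gives success probability $\ge 19/20 - o(1)\ge 2/3$, which finishes the theorem. The main obstacle I expect is the paragraph-two calibration of the Markov parameter $\delta_i$: taking $\delta_i$ too small makes the $2/\delta_i$ blow-up in the number of roots spoil the double-exponential progression of $m/n_i$, while taking it too large wrecks the union bound. The choice $\delta_i=\Theta(1/r)$ works, because then the multiplicative slack in the recursion is only $\poly(r)=\polylog\log n$, which is swallowed by the factor-$1.5$ growth of $y_i$; but care is needed in the small-$m$ regime (where $\gamma_i$ is a small constant) to ensure the geometric prelude really does drive $n_i$ into the regime where the $1.5$-power recursion dominates, exactly mirroring the two-regime bookkeeping already carried out in the proof of Theorem~\ref{thm:alg2_time_prob}.
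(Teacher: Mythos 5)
Your treatment of correctness and of the per-round iteration counts mirrors the paper's: correctness is inherited verbatim from Theorem~\ref{thm:correct_rand_leader_alg} once you check that $\p_i = f_{G'_i,w_i}$ is a compatible parent-pointer system on $V''_i$, $k_i = O(\log D)$ follows from property~\ref{itm:neighbor_incr_pro1} of Lemma~\ref{lem:properties_of_neighbor_increment}, and the bound $r'_i = O(\log\log n)$ follows from Lemma~\ref{lem:depth_of_random_forest} plus Lemma~\ref{lem:contraction_properties} and a union bound over the $r$ rounds.

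The progress/failure analysis is where you diverge, and your route has a gap that the paper's choice of parameters avoids. The paper invokes Lemma~\ref{lem:number_of_roots} with a \emph{constant} Markov parameter $\delta = 1/8$ in every round. This means each round independently has probability $\geq 7/8$ of satisfying $n_i \leq 16 n_{i-1}^{3/2}/m^{1/2}$, which, after absorbing the constant $16$ using (the implicit assumption) $m/n_{i-1} \geq 1024$, becomes the clean recursion $n_i \leq n_{i-1}^{1.1}/m^{0.1}$. The paper then lets $y_i$ indicate failure of round $i$, observes $\E[\sum_i y_i] \leq r/8$ by linearity (no independence needed, since Lemma~\ref{lem:number_of_roots} gives a per-round bound conditional on any history), and applies Markov to conclude that with probability $\geq 3/4$ at least $r/2$ rounds succeed; those $r/2$ successes drive $m/n_i$ through the double-exponential progression. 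You instead set $\delta_i = \Theta(1/r)$ and union-bound over all rounds. This gives $n_i \leq O(r)\cdot n_{i-1}^{3/2}/m^{1/2}$ in every round, so your $y_i = \log(m/n_i)$ recursion is $y_i \geq 1.5\,y_{i-1} - \log(O(r))$, and this is strictly increasing only once $y_{i-1} \geq 2\log(O(r))$, i.e., $m/n_{i-1} \geq \poly(r)$. Since $r = \Theta(\log\log_{m/n} n)$ grows with $n$ while $m/n$ can be a fixed constant, the recursion simply does not make progress at round~$0$ — this is the obstacle you correctly flag.

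Your proposed fix — a geometric prelude using $\E[n_i \mid n_{i-1}] \leq n_{i-1}/2$ to drive $n_{i^*}$ below $n/(40r)^2$ in $i^* = O(\log r)$ rounds by a one-shot Markov bound, mirroring the $p_0 = 1/2$ case of Theorem~\ref{thm:alg2_time_prob} — can in principle be made rigorous, and $i^* = O(\log\log\log n) = o(r)$ so the round budget is not harmed. But as written the proposal leaves this as a parenthetical and does not carry out the arithmetic: you do not fix the prelude length, do not verify that $\gamma_i \geq 4$ (not just $\geq 2$) is needed for $\E[n_i \mid n_{i-1}] \leq n_{i-1}/2$ via Lemma~\ref{lem:number_of_roots}, and do not complete the hand-off into the union-bound regime. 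The paper's constant-$\delta$ counting argument avoids all of this, which is precisely why it chose that route; your union-bound route is workable but costs you a two-phase proof and a careful prelude that you should spell out before the claim is sound.
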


 \begin{proof}
 Let $k_i$ denote the number of iterations (see Definition~\ref{def:neighbor_incr_num_iter}) of $\textsc{NeighborIncrement}(m,G_{i-1})$.
 By Lemma~\ref{lem:properties_of_neighbor_increment}, we have $k_i\leq O(\log D).$
 Thus, $\sum_{i=1}^r k_i=O(r\cdot \log D).$

  According to Lemma~\ref{lem:depth_of_random_forest}, with probability at least $1-\frac{2}{100r},$ $\dep(G''_i,w_i)\leq O(\log n).$ 
  By Lemma~\ref{lem:contraction_properties}, with probability at least $1-\frac{2}{100r},$ the number of iteration of $\textsc{TreeContraction}(G_i'',\p_i)$ (see Definition~\ref{def:num_it_tree_contract}) $r'_i\leq O(\log \log n)$ 
   By taking union bound over all $i\in[r],$ then with probability at least $1-\frac{1}{50},$
   $\sum_{i=1}^r r'_i\leq O(r\cdot \log \log n).$


Due to the Property~\ref{itm:neighbor_incr_pro3} of Lemma~\ref{lem:properties_of_neighbor_increment}, $\forall i\in[r],$ $\forall v\in V''_i,u\in \Gamma_{G'_i}(v),$ we have $u\in V''_i$ which means that $u\in \Gamma_{G''_i}(v).$ Thus, $|\Gamma_{G''_i}(v)|\geq \lceil(m/n_{i-1})^{1/2}\rceil-1.$ Then due to Lemma~\ref{lem:number_of_roots}, we have that with probability at most $\frac{1}{8},$ $n_i\geq 16 n_{i-1}^{3/2}/m^{1/2}.$ Since $m/n\geq m/n_i \geq 1024,$ we have that with probability at most $\frac{1}{8},$ $n_i\geq n_{i-1}^{11/10}/m^{1/10}.$ Let $y_1,y_2,\cdots,y_r$ be $r$ random variables. If $n_i\geq n_{i-1}^{11/10}/m^{1/10},$ then $y_i=1,$ otherwise $y_i=0.$ We have $\E(\sum_{i=1}^r y_i)\leq \frac{r}{8}.$ By Markov's inequality, we have $\Pr(\sum_{i=1}^r y_i\geq \frac{r}{2})\leq \frac{1}{4}.$ Thus, with probability at least $\frac{3}{4},$ $\sum_{i=1}^r y_i\leq \frac{r}{2}.$ Notice that when $y_i=0,$ then $n_i\leq n_{i-1}^{11/10}/m^{1/10},$ and when $y_i=1,$ we have $n_i\leq n_{i-1}.$ So if there are at least $\frac{r}{2}$ number of $y_i$s which are $0$, then
\begin{align*}
n_r&\leq \frac{\left(\frac{\left(\frac{n^{1.1}}{m^{0.1}}\right)^{1.1}}{m^{0.1}}\right)^{\cdots}}{\cdots}&~\text{(Apply $r/2$ times)}\\
&= \frac{n^{1.1^{r/2}}}{m^{1.1^{r/2-1}}}\\
&=n/(m/n)^{1.1^{r/2-1}}\\
&\leq n/(m/n)^{1.1^{r/4}}\\
&\leq \frac{1}{2}
\end{align*}
where 
the last inequality follows by $r\geq \frac{4}{\log 1.1}(\log\log_{m/n}(2n)).$ Since $n_r$ is an integer, when $n_r\leq \frac{1}{2},$ $n_r=0.$ Thus, we can conclude that if $r\geq c\cdot \log\log_{m/n} n$ for a sufficiently large constant $c>0,$ then with probability at least $\frac{3}{4}-\frac{1}{50}\geq \frac{2}{3},$ the modified $\textsc{Connectivity}(G,m,r)$ will not output FAIL.
 \end{proof}

 Notice that though the theoretical guarantees of the \textit{min-parent-forest} leader selection method is worse than the random leader sampling, the merit of \textit{min-parent-forest} leader selection method is that it can have an ``early start''.

 Consider the case when the total space size $m$ is $\Theta(n).$
 In this case, random leader sampling will always sample a half of the vertices as the leaders until the total space $m$ is $\poly(\log n)$ larger than the number of vertices.
 However, \textit{min-parent-forest} leader selection method can make a large progress at the beginning, it will choose the number of leaders to be about the sum of inverse degrees.
 Furthermore, the depth of the \textit{min-parent-forest} may not always have $\log n$ depth.
 Thus, it is an interesting question which leader selection approach has better performance in practice.

\section{Acknowledgments}

We thank Paul Beame, Lijie Chen, Xi Chen, Mika G\"{o}\"{o}s, Rasmus Kyng, Zhengyang Liu, Jelani Nelson, Eric Price, Aviad Rubinstein, Timothy Sun, Omri Weinstein, David P. Woodruff, and Huacheng Yu for helpful discussions and comments. 

\newpage

\addcontentsline{toc}{section}{References}
\bibliographystyle{alpha}
\bibliography{ref,andoni,geom}




\end{document}